\makeatletter\@addtoreset{chapter}{part}\makeatother%
\theoremstyle{plain}
\newtheorem{teo}{Theorem}[chapter]
\newtheorem*{teo*}{Theorem}
\newtheorem{conj}{\textit{Conjecture}}[chapter]
\newtheorem{prop}[teo]{Proposition}
\newtheorem{claim}[teo]{Claim}
\newtheorem{cor}[teo]{Corollary}
\newtheorem{lema}[teo]{Lemma}
\newtheorem{defn}[teo]{Definition}
\newcounter{cases}
\newcounter{subcases}[cases]
\newcounter{subsubcases}[subcases]
\newenvironment{mycases}
  {%
    \setcounter{cases}{0}%
    \setcounter{subcases}{0}%
    \setcounter{subsubcases}{0}%
    \def\case
      {%
        \par\noindent
        \refstepcounter{cases}%
        \textit{\textbf{Case (\thecases})}
      }%
    \def\subcase
      {%
        \par\noindent
        \refstepcounter{subcases}%
        \textbf{\textit{Case (\thecases.\thesubcases) }}
      }%
     \def\subsubcase
      {%
        \par\noindent
        \refstepcounter{subsubcases}%
        \textbf{\textit{Case (\thecases.\thesubcases.\thesubsubcases) }}
      }%
  }
  {%
    \par
  }
\renewcommand*\thecases{\arabic{cases}}
\renewcommand*\thesubcases{\arabic{subcases}}
\renewcommand*\thesubsubcases{\arabic{subsubcases}}
\theoremstyle{remark}
\newtheorem{remark}[teo]{\textbf{\textit{Remark}}}
\newcommand*{\QED}{\hfill\ensuremath{\square}}%
\newcommand*{\tagg}{\mathrm{tag}}%
\newenvironment{abstract}%
{\clearpage\null \vfill\begin{center}%
\bfseries \abstractname \end{center}}%
{\vfill\null}
\author{Lic. Nina Pardal}
\definecolor{dark-blue}{RGB}{60,80,170}
\definecolor{dark-red}{RGB}{240,50,60}
\definecolor{dark-orange}{RGB}{255,147,23}
\definecolor{dark-green}{RGB}{65, 173, 57}
\begin{document}

\thispagestyle{empty}

\begin {center}

\medskip
\textit{\textbf{UNIVERSITÉ SORBONNE PARIS NORD} } \\

\textit{ Laboratoire d'Informatique de Paris-Nord}

\vspace{2cm}

\underline{\textit{T H E S E}} \\

\vspace{.7cm}

pour obtenir le grade de \\

\vspace{.5cm}

\textbf{DOCTEUR DE L'UNIVERSITE SORBONNE PARIS NORD }\\

\vspace{1cm}

Discipline: Informatique \\

\vspace{.5cm}
présentée et soutenue publiquement par \\

\vspace{.5cm}
\textbf{Nina PARDAL} \\

\vspace{1cm}

\textsc{\textbf{\Large Caractérisation structurelle de quelques problèmes dans les graphes de cordes et d'intervalles}} 

\vspace{1.5cm}

\underline{\textbf{MEMBRES DU JURY:}}
\vspace{.5cm}

\end{center}

\small{
\hskip-1cm \begin{tabular}{lll}
Min Chih LIN & Universidad de Buenos Aires & Examinateur \\
Valeria LEONI & Universidad de Rosario & Rapporteur \\
Christophe PICOULEAU & Conservatoire National des Arts et Métiers, Paris & Rapporteur \\ 
Frédérique BASSINO & Université Sorbonne Paris Nord & Examinatrice \\
Daniel PERRUCCI & Universidad de Buenos Aires & Examinateur \\
\\
\\
\\
Mario VALENCIA-PABON & Université Sorbonne Paris Nord & Directeur \\ 
Guillermo DURÁN & Universidad de Buenos Aires & Co-directeur
\end{tabular}
}

\newpage

\thispagestyle{empty}

\begin {center}

\includegraphics[scale=.3]{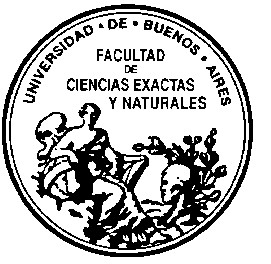}

\medskip
UNIVERSIDAD DE BUENOS AIRES

Facultad de Ciencias Exactas y Naturales

Departamento de Matem\'atica

\vspace{3cm}

\textbf{\large Caracterización estructural de algunos problemas en grafos circle y de intervalos}


\vspace{2cm}

Tesis presentada para optar al t\'\i tulo de Doctor de la Universidad de Buenos Aires en el \'area Ciencias Matem\'aticas 

\vspace{2cm}

\textbf{Nina Pardal}

\end {center}

\vspace{1.5cm}

\noindent Director de tesis: Guillermo A. Dur\'an 

\noindent Director Asistente: Mario Valencia-Pabon 

\noindent Consejero de estudios: Mariano Su\'arez \'Alvarez

\vspace{1cm}

\noindent Lugar de trabajo: Instituto de C\'alculo, FCEyN 


\vspace{1cm}

\noindent Buenos Aires, 2020 

\newpage

	


\selectlanguage{spanish}%
\begin{abstract}

\begin{center}
\textbf{Caracterización estructural de algunos problemas en grafos círculo y de intervalos}
\end{center}

Dada una familia de conjuntos no vacíos $\mathcal{S} = \{S_i\}$, se define el \emph{grafo de intersección} de la familia $\mathcal{S}$ como el grafo obtenido al representar con un vértice a cada conjunto $S_i$ de forma tal que dos vértices son adyacentes sí y sólo si los conjuntos correspondientes tienen intersección no vacía. Un grafo se dice \emph{círculo} si existe una familia de cuerdas $L= \{ C_v \}_{v\in G}$ en un círculo de modo que dos vértices son adyacentes si las cuerdas correspondientes se intersecan. Es decir, es el grafo de intersección de la familia de cuerdas $L$.
Existen diversas caracterizaciones de los mismos mediante operaciones como complementación local o descomposición split. Sin embargo, no se conocen aún caracterizaciones estructurales de los grafos círculo por subgrafos inducidos minimales prohibidos. En esta tesis, damos una caracterización de los grafos círculo por subgrafos inducidos prohibidos, restringido a que el grafo original sea split. 
Una matriz de $0$'s y $1$'s tiene la \emph{propiedad de los unos consecutivos (C$1$P) para sus filas} si existe una permutación de sus columnas de forma tal que para cada fila, todos sus $1$'s se ubiquen de manera consecutiva.
En esta tesis desarrollamos caracterizaciones por submatrices prohibidas de matrices de $0$'s y $1$'s con la C$1$P para sus filas que además son $2$-coloreables bajo una cierta relación de adyacencia, y caracterizamos estructuralmente algunas subclases de grafos círculo auxiliares que se desprenden de estas matrices.

Dada una clase de grafos $\Pi$, una \emph{$\Pi$-completación }de un grafo $G = (V,E)$ es un grafo $H = (V, E \cup F)$ tal que $H$ pertenezca a $\Pi$. 
Una $\Pi$-completación $H$ de $G$ es minimal si $H'= (V, E \cup F')$ no pertenece a $\Pi$ para todo $F'$ subconjunto propio de $F$. 
Una $\Pi$-completación $H$ de $G$ es mínima si para toda $\Pi$-completación $H' = (V, E \cup F')$  de $G$, se tiene que el tamaño de $F$ es inferior o igual al tamaño de $F'$. 
En esta tesis estudiamos el problema de completar de forma minimal a un grafo de intervalos propios, cuando el grafo de input es de intervalos. Hallamos condiciones necesarias que caracterizan una completación minimal en este caso, y dejamos algunas conjeturas para considerar en el futuro. 

\vspace{2mm}
\textbf{Palabras clave:} grafos, círculo, subgrafos prohibidos, completación, minimal.

\end{abstract}


\selectlanguage{english}%
\begin{abstract}

\begin{center}
\textbf{Structural characterization of some problems on circle and interval graphs}
\end{center}

Given a family of nonempty sets $\mathcal{S} = \{S_i\}$, the \emph{intersection graph of the family $\mathcal{S}$} is the graph with one vertex for each set $S_i$, such that two vertices are adjacent if and only if the corresponding sets have nonempty intersection. A graph is \emph{circle} if there is a family of chords in a circle $L= \{ C_v \}_{v\in G}$ such that two vertices are adjacent if the corresponding chords cross each other. In other words, it is the intersection graph of the chord family $L$.
There are diverse characterizations of circle graphs, many of them using the notions of local complementation or split decomposition. However, there are no known structural characterization by minimal forbidden induced subgraphs for circle graphs. In this thesis, we give a characterization by forbidden induced subgraphs of those split graphs that are also circle graphs.
A $(0,1)$-matrix has the \emph{consecutive-ones property (C1P) for the rows }if there is a permutation of its columns such that the $1$'s in each row appear consecutively. In this thesis, we develop characterizations by forbidden subconfigurations of $(0,1)$-matrices with the C1P for which the rows are $2$-colorable under a certain adjacency relationship, and we characterize structurally some auxiliary circle graph subclasses that arise from these special matrices. 

Given a graph class $\Pi$, a \emph{$\Pi$-completion} of a graph $G = (V,E)$ is a graph $H = (V, E \cup F)$ such that $H$ belongs to $\Pi$. 
A $\Pi$-completion $H$ of $G$ is minimal if $H'= (V, E \cup F')$ does not belong to $\Pi$ for every proper subset $F'$ of $F$. 
A $\Pi$-completion $H$ of $G$ is minimum if for every $\Pi$-completion $H' = (V, E \cup F')$ of $G$, the cardinal of $F$ is less than or equal to the cardinal of $F'$.
In this thesis, we study the problem of completing minimally to obtain a proper interval graph when the input is an interval graph. We find necessary conditions that characterize a minimal completion in this particular case, and we leave some conjectures for the future.

\vspace{2mm}
\textbf{Keywords:} graphs, circle, forbidden induced subgraphs, completion, minimal.
\end{abstract}


\selectlanguage{french}%
\begin{abstract}

\begin{center}
\textbf{Caractérisation structurelle de quelques problèmes dans les graphes de cordes et d’intervalles}
\end{center}

Étant donnée une famille d'ensembles non vides $\mathcal{S} = \{S_i\}$, le \emph{graphe d’intersection de la famille $\mathcal{S}$} est celui pour lequel chaque sommet représent un ensemble $S_i$ de tel façon que deux sommets sont adjacents si et seulement si leurs ensembles correspondants ont une intersection non vide. Un graphe est dit \emph{graphe de cordes} s'il existe une famille de cordes $L= \{ C_v \}_{v\in G}$ dans un cercle tel que deux sommets sont adjacents si les cordes correspondantes se croisent. Autrement dit c'est le graphe d'intersection de la famille de cordes $L$. Ils existent différentes caractérisations des graphes de cordes qui utilisent certaines opérations dont notamment la complémentation locale ou encore la dé\-com\-po\-si\-tion split. Cependant on ne connaît pas encore aucune caractérisation structurelle des graphes de cordes par sous-graphes induits interdits minimales. Dans cette thèse nous donnons une caractérisation des graphes de cordes par sous-graphes induits interdits dans le cas où le graphe original est un graphe split. Une matrice binaire possède la \emph{propriété des unités consécutives en lignes (C1P)} s'il existe une permutation de ses colonnes de sorte que les $1$’s dans chaque ligne apparaissent consécutivement. Dans cette thèse nous développons des caractérisations par sous-matrices interdites de matrices binaires avec C1P pour lesquelles les lignes sont $2$-coloriables sous une certaine condition d’ad\-ja\-cence et nous caractérisons structurelle\-ment quelques sous-classes auxiliaires de graphes de cordes qui découlent de ces matrices. 

Étant donnée une classe de graphes $\Pi$, une \emph{$\Pi$-complétion} d’un graphe $G = (V,E)$ est un graphe $H = (V,E \cup F)$ tel que $H$ appartient à $\Pi$. Une $\Pi$-complétion $H$ de $G$ est minimale si $H' = (V,E \cup F')$ n’appartient pas à $\Pi$ pour tout $F'$ sous-ensemble propre de $F$. Une $\Pi$-complétion $H$ de $G$ est minimum si pour toute $\Pi$-complétion $H' = (V, E \cup F')$ de $G$ la cardinalité de $F$ est plus petite ou égale à la cardinalité de $F'$. Dans cette thèse nous étudions le problème de complétion minimale d'un graphe d'intervalles propre quand le graphe d'entrée est un graphe d’intervalles quelconque. Nous trouvons des conditions nécessaires qui caractérisent une complétion minimale dans ce cas particulier, puis nous laissons quelques conjectures à considérer dans un futur. 

\vspace{2mm}
\textbf{Mots clés:} graphes, cordes, intervalles, sous-graphes interdits, complétion minimale. 
\end{abstract}



\selectlanguage{spanish}%

\chapter*{Agradecimientos}

Nunca fui buena con estas cosas, probablemente olvide a alguien o descuide las formas. Sepan que en mi mente el orden es casi arbitrario y carece de importancia real.

A mi amor y compañero Guido. A mi madre y a mi padre. A Pablo, el hermano que mis padres no me dieron. 
A mis directores Willy y Mario, por la confianza, la libertad y el aguante. A Martín Safe, un enorme maestro en tiempos de gran confusión. A Flavia, por su generosidad y calidez.
A la música, mi válvula de escape y generadora infinita de felicidad. A mis hermanos de la música, Nacho, Emma y Sami, y a todas las hermosas personas que conocí gracias a la música en estos años. A los maestros, Flor, Pablo, Edgardo.
A los afectos nuevos, a quienes han retornado a mi vida y a aquellos valientes que nunca se han ido: Emi, Pri, Germán, Alessandra, Melina, Leti, Lucas, Juan. A mis compañeros de oficina y doctorado por la interminable catarsis. A mis gatitas, por la tierna compañía al escribir y pensar. A la tía Cora, si bien postiza, anque la más tía. A mis jurados, por la paciencia para leer esta tesis kilométrica. A Deby y Pablo, por la buena predisposición y ayuda para domar a esta cotutela.

A todos ustedes, simplemente ''gracias'' por lo que cada uno ha hecho por mí en estos largos cinco años. Por la infinita tolerancia y paciencia, por los cafés y cervezas, por las charlas, consejos y enseñanzas. Por evitar las crisis consumadas en mi mente, por los abrazos y la contención. Por ayudarme a crecer cada día un poquito --y una vez más, por qué no-- por creer en mí cuando yo no tenía la capacidad de hacerlo.


\selectlanguage{english}%
\tableofcontents


\selectlanguage{spanish}%
\chapter*{Introducción general}

La teoría estructural de grafos estudia caracterizaciones y descomposiciones de clases de grafos particulares, y utiliza estos resultados para demostrar propiedades teóricas para esas clases de grafos, así como para obtener distintos resultados sobre la complejidad algorítmica de ciertos problemas.
Son tópicos comunes en este área el estudio de los menores y threewidth, descomposición modular y clique-width, caracterización de familias de grafos por configuraciones prohibidas, entre otros.

Esta tesis consiste de dos partes, en las cuales nos concentramos en el estudio de dos tópicos distintos de la teoría estructural de grafos: caracterización por subgrafos inducidos prohibidos y caracterización de completaciones mínimas y minimales.

\vspace{3mm}
\textbf{Parte I: Caracterización por subgrafos inducidos prohibidos}
\vspace{3mm}

Dada una familia no vacía de conjuntos $\mathcal{S} = \{S_i\}$, el \emph{grafo de intersección de la familia $\mathcal{S}$} es el grafo que tiene un vértice por cada conjunto $S_i$, de moto tal que dos vértices son adyacentes si y sólo si los conjuntos correspondientes tienen intersección no vacía. Un grafo es \emph{círculo (o circle)} si existe una familia de cuerdas en un círculo $L= \{ C_v \}_{v\in G}$ tal que dos vértices resultan adyacentes si las cuerdas correspondientes se cruzan dentro del círculo. En otras palabras, es el grafo de intersección de la familia de cuerdas $L$.
Existen diversas caracterizaciones de los grafos círculo, muchas de ellas utilizan las nociones de complementación local o descomposición split. A pesar de contar con muchas caracterizaciones diversas, no existe una caracterización completa de los grafos círculo por subgrafos inducidos minimales prohibidos.
La investigación actual en esta dirección se concentra en hallar caracterizaciones parciales de esta clase de grafos. Es decir, se conocen algunas caracterizaciones para los grafos círculo por subgrafos in\-du\-ci\-dos minimales prohibidos cuando el grafo considerado en principio además pertenece a otra subclase particular, como por ejemplo, grafos $P_4$-tidy, grafos linear-domino, grafos sin diamantes, para dar algunos ejemplos. 
En esta tesis, damos una caracterización por subgrafos inducidos prohibidos para aquellos grafos split que además son circle. 
La motivación para estudiar esta clase en particular viene de los grafos cordales, los cuales son aquellos grafos que no contienen ciclos inducidos de longitud mayor a $3$. Los grafos cordales han sido ampliamente estudiados y son una clase de grafos muy interesante, que además son un subconjunto de los grafos perfectos. Pueden ser reconocidos en tiempo polinomial, y muchos problemas que son difíciles de resolver en otras clases de grafos, como por ejemplo problemas de coloreo, pueden ser resueltos en en tiempo polinomial cuando el grafo de input es cordal. Es por ello que surge naturalmente la pregunta de hallar una lista de subgrafos inducidos prohibidos para la clase de grafos círculo cuando el grafo además es cordal. 
A su vez, los grafos split son aquellos grafos cuyo conjunto de vértices se puede particionar en un subconjunto completo y un subconjunto independiente, y además son una subclase de los grafos cordales. Más aún, los grafos split son aquellos grafos cordales cuyo complemento es también es cordal. Es por ello que estudiar cómo caracterizar los grafos círculo por subgrafos inducidos prohibidos cuando el grafo es split es un buen comienzo para obtener resultados en la dirección de la caracterización de los cordales que además son grafos círculo.

Una matriz binaria tiene la \emph{propiedad de los unos consecutivos (C$1$P) para las filas }si existe una permutación de sus columnas de modo tal que los $1$'s en cada fila aparezcan de forma consecutiva. Para caracterizar los grafos split que son círculo, desarrollamos caracterizaciones por subconfiguraciones prohibidas para las matrices binarias que tienen la C$1$P y para las cuales las filas además admiten una asignación de color (de entre dos colores distintos) que depende de una cierta relación de adyacencia. Esto deviene en una caracterización estructural para algunas subclases auxiliares de los grafos círculo que surgen a partir de estas matrices especiales. 

\vspace{6mm}
\textbf{Parte II: El problema de la $\Pi$-completación}
\vspace{3mm}

Dada una propiedad $\Pi$, el $\Pi$-problema de modificación de grafos se define de la siguiente manera. 
Dado un grafo $G$ y una propiedad de grafos $\Pi$, necesitamos borrar (o agregar o editar) un subconjunto de vértices (o aristas) de forma tal que el grafo resultante tenga la propiedad $\Pi$.

Dado que los grafos pueden ser utilizados para representar tanto problemas del mundo real como de estructuras teóricas, no es difícil ver que un problema de mo\-di\-fi\-ca\-ción puede ser utilizado para modelar un gran número de aplicaciones prácticas en distintos campos. En particular, muchos problemas fundamentales de la teoría de grafos pueden expresarse como problemas de mo\-di\-fi\-ca\-ción de grafos. Por ejemplo, el problema de Conectividad es el problema de hallar un número mínimo de vértices o aristas que al ser removidos desconecten al grafo, o el problema de Matching Máximo Inducido puede ser formulado como el problema de remover la menor cantidad de vértices del grafo para obtener una colección disjunta de aristas. 

Un problema de modificación de grafos particular es la $\Pi$-completación. Dada una clase de grafos $\Pi$, una $\Pi$-completación de un grafo $G = (V,E)$ es un grafo $H = (V, E \cup F)$ tal que $H$ pertenece a la clase $\Pi$. 
Una $\Pi$-completación $H$ se dice $G$ minimal si $H'= (V, E \cup F')$ no pertenece a la clase $\Pi$ para todo subconjunto propio $F'$ de $F$. 
Una $\Pi$-completación $H$ de $G$ es mínima si para toda $\Pi$-completación $H' = (V, E \cup F')$ de $G$, el cardinal de $F$ es menor o igual que el cardinal de $F'$.

El problema de calcular una completación mínima en un grafo arbitrario a una clase específica de grafos ha sido ampliamente estudiado. Desafortunadamente, se ha de\-mos\-tra\-do que las completaciones mínimas de grafos arbitrarios a clases específicas de grafos como los cografos, grafos bipartitos, grafos cordales, etc. son NP-hard de computar \cite{NSS01,BBD06,Y81}. 
Por esta razón, la investigación actual en este tópico se concentra en hallar com\-ple\-ta\-cio\-nes minimales de grafos arbitrarios a clases de grafos específicas de la forma más eficiente posible, desde el punto de vista computacional. Más aún, aunque el pro\-ble\-ma de completar minimalmente es y ha sido muy estudiado, se desconocen caracterizaciones estructurales para la mayoría de los problemas para los cuales se ha dado un algoritmo polinomial para hallar tal completación. Estudiar la estructura de las completaciones minimales puede permitir hallar algoritmos de reconocimiento eficientes. 
En particular, las completaciones minimales de un grafo arbitratio a grafos de intervalos y grafos de intervalos propios han sido estudiadas en \cite{CT13,RST06}.
En esta tesis estudiamos el problema de completar de forma minimal un grafo de intervalos para obtener un grafo de intervalos propios. Hallamos condiciones necesarias que caracterizan una completación minimal en este caso particular, y dejamos algunas conjeturas abiertas para trabajo futuro.


\selectlanguage{english}%
\chapter*{A general introduction}
\addcontentsline{toc}{chapter}{A general introduction}

Structural graph theory studies characterizations and decompositions of particular graph classes, and uses these results to prove theoretical properties from such graph classes as well as to derive various algorithmic consequences.
Typical topics in this area are graph minors and treewidth, modular decomposition and clique-width, characterization of graph families by forbidden configurations, among others. 

This thesis consists on two parts, in each of which we focus on the study of two distinct topics in structural graph theory: characterization by forbidden induced subgraphs and characterization of minimal and minimum completions.

\vspace{3mm}
\textbf{Part I: Characterization by forbidden induced subgraphs}
\vspace{3mm}

Given a family of nonempty sets $\mathcal{S} = \{S_i\}$, the \emph{intersection graph of the family $\mathcal{S}$} is the graph with one vertex for each set $S_i$, such that two vertices are adjacent if and only if the corresponding sets have nonempty intersection. A graph is \emph{circle} if there is a family of chords in a circle $L= \{ C_v \}_{v\in G}$ such that two vertices are adjacent if the corresponding chords cross each other. In other words, it is the intersection graph of the chord family $L$.
There are diverse characterizations of circle graphs, many of them using the notions of local complementation or split decomposition. In spite of having many diverse characterizations, there is no known complete characterization of circle graphs by minimal forbidden induced subgraphs.
Current research on this direction focuses on finding partial characterizations of this graph class. In other words, some characterizations by minimal forbidden induced subgraphs for circle graphs are known when the graph we consider in the first place also belongs to another certain subclass, such as $P_4$-tidy graphs, linear-domino graphs, diamond-free graphs, to give some examples.
In this thesis, we give a characterization by forbidden induced subgraphs of those split graphs that are also circle graphs.
The motivation to study this particular graph class comes from chordal graphs, which are those graphs that contain no induced cycle of length greater than $3$. Chordal graphs are a widely studied and interesting graph class, which is also a subset of perfect graphs. They may be recognized in polynomial time, and several problems that are hard on other classes of graphs such as graph coloring may be solved in polynomial time when the input is chordal. This is why the question of finding a list of forbidden induced subgraphs for the class of circle graphs when the graph is also chordal arises naturally. In turn, split graphs are those graphs whose vertex set can be split into a complete set and an independent set, and they are a subclass of chordal graphs. Moreover, split graphs are those chordal graphs whose complement is also a chordal graph. Thus, studying how to characterize circle graphs by forbidden induced subgraphs when the graph is split seemed a good place to start in order to find such a characterization for chordal circle graphs.

A $(0,1)$-matrix has the \emph{consecutive-ones property (C$1$P) for the rows }if there is a permutation of its columns such that the $1$'s in each row appear consecutively. In order to characterize those split graphs that are circle, we develop characterizations by forbidden subconfigurations of $(0,1)$-matrices with the C$1$P for which the rows admit a color assignment of two distinct colors under a certain adjacency relationship. This leads to structurally characterize some auxiliar circle graph subclasses that arise from these special matrices. 

\vspace{6mm}
\textbf{Part II: The $\Pi$-completion problem}
\vspace{3mm}

For a graph property $\Pi$, the $\Pi$-graph modification problem is defined as follows. 
Given a graph $G$ and a graph property $\Pi$, we need to delete (or add or edit) a subset of vertices (or edges) so that the resulting graph has the property $\Pi$.
As graphs can be used to represent diverse real world and theoretical structures, it is not difficult to see that a modification problem can be used to model a large number of practical applications in several different fields. In particular, many fundamental problems in graph theory can be expressed as graph modification problems. For instance, the Connectivity problem is the problem of finding the minimum number of vertices or edges that disconnect the graph when removed from it, or the Maximum Induced Matching problem can be seen as the problem of removing the smallest set of vertices from the graph to obtain a collection of disjoint edges.

A particular graph modification problem is the $\Pi$-completion. Given a graph class $\Pi$, a $\Pi$-completion of a graph $G = (V,E)$ is a graph $H = (V, E \cup F)$ such that $H$ belongs to $\Pi$. 
A $\Pi$-completion $H$ of $G$ is minimal if $H'= (V, E \cup F')$ does not belong to $\Pi$ for every proper subset $F'$ of $F$. 
A $\Pi$-completion $H$ of $G$ is minimum if for every $\Pi$-completion $H' = (V, E \cup F')$ of $G$, the cardinal of $F$ is less than or equal to the cardinal of $F'$.

The problem of calculating a minimum completion in an arbitrary graph to a specific graph class has been rather studied. Unfortunately, minimum completions of arbitrary graphs to specific graph classes, such as cographs, bipartite graphs, chordal graphs, etc., have been showed to be NP-hard to compute \cite{NSS01,BBD06,Y81}. 
For this reason, current research on this topic is focused on finding minimal completions of arbitrary graphs to specific graph classes in the most efficient way possible from the computational point of view. And even though the minimal completion problem is and has been rather studied, structural characterizations are still unknown for most of the problems for which a polynomial algorithm to find such a completion has been given. Studying the structure of minimal completions may allow to find efficent recognition algorithms.
In particular, minimal completions from an arbitrary graph to interval graphs and proper interval graphs have been studied in \cite{CT13,RST06}.
In this thesis, we study the problem of completing minimally to obtain a proper interval graph when the input is an interval graph. We find necessary conditions that characterize a minimal completion in this particular case, and we leave some conjectures for the future.

\part{Split circle graphs}


\selectlanguage{spanish}%
\chapter*{Introducción}

Los grafos círculo \cite{EI71} son los grafos de intersección de cuerdas en un círculo. En otras palabras, un grafo es círculo si existe una familia de cuerdas $L= \{ C_v \}_{v\in G}$ en un círculo de modo tal que dos vértices en $G$ son adyacentes si y sólo si las cuerdas correspondientes se intersecan en el interior del círculo. 
Estos grafos fueron definidos por Even e Itai \cite{EI71} para resolver un problema presentado por Knuth, el cual consiste en resolver un problema de ordenamiento con el mínimo número de pilas paralelas sin la restricción de cargar antes de que se complete la descarga. Se demostró que este problema puede ser traducido al problema de hallar el número cromático de un grafo círculo. 
Por su parte, en 1985, Naji \cite{N85} caracterizó los grafos círculo en términos de resolubilidad de un sistema lineal de ecuaciones, dando un algoritmo de reconocimiento de orden $\mathcal{O}(n^7)$.

El \emph{complemento local }de un grafo $G$ con respecto a un vértice $u \in V(G)$ es el grafo $G*u$ que surge de $G$ al reemplazar el subgrafo inducido $G\left[N(u)\right]$ por su complemento. Dos grafos $G$ y$H$ son \emph{localmente equivalentes} si y sólo si $G$ se obtiene de $H$ a través de la aplicación de una secuencia finita de complementaciones locales. Los grafos círculo fueron caracterizados por Bouchet \cite{B94} en 1994 por subgrafos inducidos prohibidos bajo complementación local. Inspirados por este resultado, Geelen y Oum \cite{GO09} dieron una nueva caracterización de los grafos círculo en términos de \emph{pivoteo}. El resultado de pivotear un grafo $G$ con respecto a una arista $uv$ es el grafo $G \times uv = G * u * v * u$. 

Los grafos círculo son una superclase de los grafos de \emph{permutación}. De hecho, los grafos de permutación pueden ser definidos como aquellos grafos círculo que admiten un modelo para el cual se puede adicionar una cuerda de modo tal que esta cuerda cruce a todas las cuerdas del modelo. Por otro lado, los grafos de permutación son aquellos grafos de comparabilidad cuyo complemento también es un grafo de comparabilidad \cite{EPL72}. Dado que los grafos de comparabilidad han sido caracterizados por subgrafos inducidos prohibidos \cite{G67}, aquella caracterización implica una caracterización por subgrafos inducidos prohibidos para los grafos de permutación.

A pesar de todos estos resultados, no se conoce aún una caracterización completa por subgrafos inducidos prohibidos para la clase de grafos círculo en su totalidad. Sí se conocen algunas caracterizaciones parciales, es decir, existen algunas caractericiones de los grafos círculo por subgrafos inducidos prohibidos cuando el grafo además pertenece a una cierta subclase. Ejemplos de estas subclases son los grafos $P_4$-tidy, grafos Helly circle, grafos linear-domino, entre otros. 

Para extender estos resultados, consideramos el problema de caracterizar por sub\-grafos inducidos prohibidos aquellos grafos circle que además son split.
La motivación para estudiar esta clase en particular viene de los grafos cordales, los cuales son aquellos grafos que no contienen ciclos inducidos de longitud mayor a $3$. Los grafos cordales --que a su vez son una importante subclase de los grafos perfectos --han sido ampliamente estudiados y son una clase de grafos muy interesante. Pueden ser reconocidos en tiempo polinomial, y muchos problemas que son difíciles de resolver en otras clases de grafos, como por ejemplo problemas de coloreo, pueden ser resueltos en en tiempo polinomial cuando el grafo de input es cordal. Otra propiedad interesante de los grafos cordales, es que el treewidth de un grafo arbitrario puede ser caracterizado por el tamaño de las cliques de los grafos cordales que lo contienen. Los grafos bloque son una subclase particular de los grafos cordales, y además son grafos círculo. Sin embargo, no todo grafo cordal es círculo. Todas estas razones nos han llevado a considerar a los grafos cordales como una restricción natural a la hora de estudiar caracterizaciones parciales de los grafos círculo por subgrafos inducidos prohibidos. Algo similar sucede con los grafos split, la cual es una subclase muy interesante de los grafos cordales. 
Los grafos split son aquellos grafos cuyo conjunto de vértices se puede particionar en un subconjunto completo y un subconjunto independiente. Más aún, los grafos split son aquellos grafos cordales cuyo complemento es también es cordal. El Capítulo $2$ damos un ejemplo de un grafo cordal que no es split ni círculo. Es por ello que estudiar cómo caracterizar los grafos círculo por subgrafos inducidos prohibidos cuando el grafo es split es un buen comienzo para obtener resultados en la dirección de la caracterización de los cordales que además son grafos círculo.

Comenzamos considerando un grafo split $H$ minimalmente no círculo. Dado que los grafos de permutación son una subclase de los grafos círculo, en particular $H$ no es un grafo de permutación. Observemos que los grafos de permutación son aquellos grafos de comparabilidad cuyo com\-ple\-men\-to también es un grafo de comparabilidad. Es fácil ver que los grafos de permutación son precisamente aquellos grafos círculo que admiten un modelo con un ecuador. 
Usando la lista de subgrafos inducidos minimales prohibidos para los grafos de comparabilidad (ver Figuras \ref{fig:forb_comparability1} y \ref{fig:forb_comparability2}) y el hecho de que $H$ es un grafo split, concluímos que $H$ debe contener un subgrafo inducido isomorfo al tent, al $4$-tent, al co-$4$-tent o al net (ver Figura \ref{fig:forb_permsplit_base}).
En el Capítulo $2$, dado un grafo split $G=(K,S)$ y un subgrafo inducido $H$ que puede ser un tent, un $4$-tent o un co-$4$-tent, definimos particiones de los conjuntos $K$ y $S$ de acuerdo a la relación de adyacencia con $H$ y demostramos que estas particiones están bien definidas. 

Una matriz binaria tiene la \emph{propiedad de los unos consecutivos (C$1$P) para las filas} si existe una permutación de sus columnas de forma tal que los $1$'s en cada fila aparezcan consecutivamente. 
Para caracterizar aquellos grafos círculo que contienen un tent, un $4$-tent, un co-$4$-tent o un net como subgrafo inducido, primero estudiamos el problema de caracterizar aquellas matrices que admiten un ordenamiento C$1$P para sus filas y para las cuales existe una asignación de colores particular, teniendo exactamente $2$ colores para ello. Esta asignación de colores se define en el Capítulo \ref{chapter:2nested_matrices}, considerando que deben cumplirse ciertas propiedades especial que están basadas estrictamente en la relación de orden parcial dada por los vecindarios de los vértices independientes de un grafo split. Estas propiedades se encuentran plasmadas en la definición de admisibilidad. 

En el Capítulo \ref{chapter:2nested_matrices}, definimos y caracterizamos las matrices $2$-nested por sus sub\-matrices minimales prohibidas. Esta caracterización lleva a una caracterización por sub\-grafos inducidos prohibidos para la clase de grafos asociada, la cual es una subclase tanto de los grafos split como de los grafos círculo. Para llevar a cabo esta caracterización, definimos el concepto de matriz enriquecida, que son aquellas matrices binarias para las cuales algunas filas pueden tener una etiqueta con la letra L (que representa \textit{izquierda}) o R (que representa \textit{derecha}) o LR (que representa \textit{izquierda-derecha}), y algunas de estas filas etiquetadas además pueden estar coloreadas con rojo o azul, cada una. 
En las primeras secciones del Capítulo \ref{chapter:2nested_matrices}, definimos y caracterizamos las nociones de admisibilidad, LR-ordenable y parcialmente $2$-nested. Estas nociones nos permiten definir qué es un '`pre-coloreo válido'' y caracterizar aquellas matrices enriquecidas con pre-coloreos válidos que admiten un LR-ordenamiento, que es la propiedad de admitir un ordenamiento lineal de las columnas $\Pi$ tal que, cuando se ordenan acorde a $\Pi$, las filas no-LR y los complementos de las LR cumplen la C$1$P, las filas etiquetadas con L empiezan en la primera columna y las filas etiquetadas con R terminan en la última columna. 
Esto deviene en una caracterización de las matrices $2$-nested  por submatrices prohibidas, ya que las matrices $2$-nested resultan ser exactamente aquellas matrices parcialmente $2$-nested para las cuales el $2$-coloreo previo de las filas puede ser extendida a un $2$-coloreo total de todas las filas de la matriz, manteniendo ciertas propiedades.
El Capítulo \ref{chapter:2nested_matrices} es crucial para determinar cuáles son los subgrafos inducidos prohibidos para aquellos grafos círculo que también son split. 

En el Capítulo \ref{chapter:split_circle_graphs}, nos referimos al problema de caracterizar por subgrafos inducidos prohibidos de un grafo círculo que contiene un tent, un $4$-tent, un co-$4$-tent o un net como subgrafo inducido. En cada sección vemos un caso del teorema, llegando así a obtener un teorema de caracterización y terminando en cada caso dando las claves para dibujar un modelo circular.


\selectlanguage{english}%
\chapter{Introduction}

Circle graphs \cite{EI71} are intersection graphs of chords in
a circle. In other words, a graph is circle if there is a family of chords $L= \{ C_v \}_{v\in G}$ in a circle such that two vertices in $G$ are adjacent if and only if the corresponding chords cross each other. 
These graphs were defined by Even and Itai \cite{EI71} to solve a problem stated by Knuth, which consists in solving an ordering problem with the minimum number of parallel stacks without the restriction of loading before unloading is completed. It was proven that this problem can be translated into the problem of finding the chromatic number of a circle graph. 
For its part, in 1985, Naji \cite{N85} characterized circle graphs in terms of the solvability of a system of linear equations, yielding a $\mathcal{O}(n^7)$-time recognition algorithm for this class.

The \emph{local complement }of a graph $G$ with respect to a vertex $u \in V(G)$ is the graph $G*u$ that arises from $G$ by replacing the induced subgraph $G\left[N(u)\right]$ by its complement. Two graphs $G$ and $H$ are \emph{locally equivalent} if and only if $G$ arises from $H$ by a finite sequence of local complementations. Circle graphs were characterized by Bouchet \cite{B94} in 1994 by forbidden induced subgraphs under local complementation. Inspired by this result, Geelen and Oum \cite{GO09} gave a new characterization of circle graphs in terms of \emph{pivoting}. The result of pivoting a graph $G$ with respect to an edge $uv$ is the graph $G \times uv = G * u * v * u$. 

Circle graphs are a superclass of \emph{permutation} graphs. Indeed, permutation graphs can be defined as those circle graphs having a circle model such that a chord can be added in such a way that this chord meets every chord belonging to the circle model. On the other hand, permutation graphs are those comparability graphs whose complement graph is also a comparability graph \cite{EPL72}. Since comparability graphs have been characterized by forbidden induced subgraphs \cite{G67}, such a characterization implies a forbidden induced subgraphs characterization for the class of permutation graphs.

In spite of all these results, there are no known characterizations for the entire class of circle graphs by forbidden induced subgraphs. Some partial characterizations of circle graphs have been given. In other words, there are some characterizations of circle graphs by forbidden minimal induced subgraphs when these graphs also belong to a certain subclass, such as $P_4$-tidy graphs, Helly circle graphs, linear-domino graphs, among others. In Chapter 4 we give a brief introduction to these known results. 

In order to extend these results, we considered the problem of characterizing by minimal forbidden induced subgraphs those circle graphs that are also split graphs. 
The motivation to study circle graphs restricted to this particular graph class came from chordal graphs, which are defined as those graphs that contain no induced cycles of length greater than $3$. 
Chordal graphs --which is a subset of perfect graphs-- is a very widely studied graph class, for which there are several interesting characterizations. They may be recognized in polynomial time, and several problems that are hard on other classes of graphs --such as graph coloring-- may be solved in polynomial time when the input is chordal. Another interesting property of chordal graphs, is that the treewidth of an arbitrary graph may be characterized by the size of the cliques in the chordal graphs that contain it. 
Block graphs are a particular subclass of chordal graphs, and are also circle. However, not every chordal graph is a circle graph. All these reasons lead to consider chordal graphs as a natural restriction to study a partial characterization of circle graphs by forbidden induced subgraphs. Something similar happens with split graphs, which is an interesting subclass of chordal graphs. More precisely, split graphs are those chordal graphs for which its complement is also a chordal graph. In Chapter $2$ we give an example of a chordal graph that is neither circle nor split. Hence, studying those split graphs that are also circle is a good first step towards a characterization of those chordal graphs that are also circle. 

We started by considering a split graph $H$ such that $H$ is minimally non-circle. Since permutation graphs are a subclass of circle graphs, in particular $H$ is not a permutation graph. Notice that permutation graphs are those comparability graphs for which their complement is also a comparability graph. It is easy to prove that permutation graphs are precisely those circle graphs having a circle model with an equator.
Using the list of minimal forbidden subgraphs of comparability graphs (see Figures \ref{fig:forb_comparability1} and \ref{fig:forb_comparability2}) and the fact that $H$ is also a split graph, we conclude that $H$ contains either a tent, a $4$-tent, a co-$4$-tent or a net as an induced subgraph (See Figure \ref{fig:forb_permsplit_base}).
In Chapter \ref{chapter:partitions}, given a split graph $G=(K,S)$ and an induced subgraph $H$ that can be either a tent, a $4$-tent or a co-$4$-tent, we define partitions of $K$ and $S$ according to the adjacencies and prove that these partitions are well defined.

A $(0,1)$-matrix has the \emph{consecutive-ones roperty (C$1$P) for the rows} if there is a permutation of its columns such that the ones in each row appear consecutively. 
In order to characterize those circle graphs that contain a tent, a $4$-tent, a co-$4$-tent or a net as an induced subgraph, we first address the problem of characterizing those matrices that can be ordered with the C$1$P for the rows and for which there is a particular color assignment for every row, having exactly $2$ colors to do so. Such a color assignment is defined in Chapter \ref{chapter:2nested_matrices}, considering the fullfillment of some special properties which are purely based on the partial ordering relationship that must hold between the neighbourhoods of the vertices in the independent partition of a split graph. These properties are contemplated in the definition of admissibillity.

In Chapter \ref{chapter:2nested_matrices}, we define and characterize $2$-nested matrices by minimal forbidden submatrices. This characterization leads to a minimal forbidden induced subgraph characterization for the associated graph class, which is a subclass of split and circle graphs. 
In order to do this, we define the concept of enriched matrix, which are those $(0,1)$-matrices for which some rows are labeled with a letter L (standing for \textit{left}) or R (standing for \textit{right}) or LR (standing for \textit{left-right}), and some of these labeled rows may also be colored with either red or blue each. 
In the first sections of Chapter \ref{chapter:2nested_matrices}, we define and characterize the notions of admissibility, LR-orderable and partially $2$-nested. This notions allowed to define what is a '`valid pre-coloring'' and characterize those enriched matrices with valid pre-colorings that admit an LR-ordering, which is the property of having a lineal ordering $\Pi$ of the columns such that, when ordered according to $\Pi$, the non-LR-rows and the complements of the LR-rows have the C$1$P, those rows labeled with L start in the first column and those rows labeled with R end in the last column. 
This leads to a characterization of $2$-nested matrices by forbidden induced submatrices. $2$-nested matrices are those partially $2$-nested matrices for which the given $2$-coloring of the rows can be extended to a total proper $2$-coloring of all the matrix, while maintaining certain properties.
Chapter \ref{chapter:2nested_matrices} is crucial in order to determine which are the forbidden induced subgraphs for those circle graphs that are also split.

In chapter \ref{chapter:split_circle_graphs}, we address the problem of characterizing the forbidden induced subgraphs of a circle graph that contains either a tent, $4$-tent, co-$4$-tent or a net as an induced subgraph. In each section we see a case of the theorem, proving a characterization theorem and finishing with the guidelines to draw a circle model for each case.

\section{Known characterizations of circle graphs} \label{sec:circle1}

Recall that a graph is circle if it is the intersection graph of a family of chords in a circle. 
The characterization of the entire class of circle graphs by forbidden minimal induced subgraphs is still an open problem. However, some partial characterizations are known. In this section, we state some of the known characterizations for circle graphs, including those that are partial, and give the necessary definitions to understand these results.

A \emph{double occurrence word }is a finite string of symbols in which each symbol appears precisely twice. Let $\Gamma = (\pi_1 , \pi_2 , \ldots , \pi_{2n})$ be a double occurrence word. The \emph{shift operation on $\Gamma$} transforms $\Gamma$ into $(\pi_{2n},\pi_1,\pi_2,\ldots,\pi_{2n-1})$. The \emph{reverse operation }transforms $\Gamma$ into $\overline{\Gamma} = (\pi_{2n}, \pi_{2n-1}, \ldots , \pi_2, \pi_1)$. With each double occurrence word $\Gamma$ we associate a graph $G\left[\Gamma\right]$ whose vertices are the symbols in $\Gamma$ and in which two vertices are adjacent if and only if the corresponding symbols appear precisely once between the two occurrences of the other. Clearly, a graph is circle if and only if it is isomorphic to $G\left[\Gamma\right]$ for some double occurrence word. Those graphs that are isomorphic to $G\left[\Gamma\right]$ for some double occurrence $\Gamma$ are also called alternance graphs. A graph $G$ is overlap interval if there exists a bijective function $f : V \rightarrow I (f(v) = I_v)$ where $I = \{I_v\}_\{I \in V(G)\}$ is a family of intervals on the real line, such that $uv \in E$ if and only if $I_u$ and $I_v$ overlap; i.e., $I_u \cap I_v \neq \emptyset$, $I_u \nsubseteq I_v$ and $I_v \nsubseteq I_u$. It is well known that circle graphs and overlap interval graphs are the same class (see \cite{G04}).
Moreover, circle graphs are also equivalent to alternance graphs. 

Given a double alternance word $\Gamma$, we denote by $\overline{\Gamma}$ the word that arises by traversing $\Gamma$ from right to left, for instance, if $\Gamma = abcadcd$, then $\overline{\Gamma} = dcdacba$.
Given a graph $G$ and a vertex $v$ of $G$. The local complement of $G$ at $v$, denoted by $G*v$, is the graph that arises from $G$ by replacing $N(v)$ by its complementary graph. Two graphs $G$ and $H$ are \emph{locally equivalent} if and only if $G$ arises from $H$ by a finite sequence of local complementations.
This operation is strongly linked with circle graphs; namely, \emph{if $G$ is a circle graph, then $G*v$ is a circle graph}. This is because, if $a$ represents the vertex $v$ in $\Gamma$ and $\Gamma = AaBaC$ where $A$, $B$ and $C$ are subwords of $\Gamma$, then $G\left[AaBaC \right]$ is a double alternance model for $G*v$. Bouchet proved the following theorem.

\begin{teo} \cite{B94} \label{teo:bouchet}
 Let $G$ be a graph. $G$ is a circle graph if and only if any graph locally equivalent to $G$ has no induced subgraph isomorphic to $W_5$, $W_7$, or $BW_3$ (see Figure \ref{fig:bouchet_lc}).
\end{teo}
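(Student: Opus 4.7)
The plan is to treat the two directions separately, as they have very different flavors: the forward implication reduces to a finite verification, while the converse is the substantive content requiring Bouchet's framework of isotropic systems.

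For the forward direction, I would invoke the observation recalled in the paragraph immediately preceding the statement, namely that local complementation preserves the class of circle graphs (via the explicit transformation of double occurrence words). Combining this with the fact that circle graphs are closed under taking induced subgraphs reduces the problem to checking that $W_5$, $W_7$, and $BW_3$ are not themselves circle graphs. For each of these three small graphs this is a finite verification: one can either exhaustively enumerate possible double occurrence words on the vertex set, or equivalently check that Naji's system of linear equations admits no solution on the corresponding adjacency matrix.

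For the converse, the natural target is the stronger statement that every non-circle graph has one of $W_5$, $W_7$, $BW_3$ as a \emph{vertex-minor}; the conclusion in the form stated then follows by rearranging the sequence of vertex-minor operations so that all local complementations are performed first and all deletions last, exposing one of the three graphs as an induced subgraph in a graph locally equivalent to $G$. To obtain the vertex-minor classification, I would pass to isotropic systems: each simple graph $G$ determines an isotropic system whose graphic representations constitute exactly the local equivalence class of $G$, the operation of taking a vertex-minor corresponds to the operation of taking an isotropic minor, and $G$ is circle precisely when its system admits a compatible Eulerian representation.

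The main obstacle lies in classifying the minimal non-Eulerian prime isotropic systems, and this is where essentially all of the depth of Bouchet's argument resides. I would proceed by induction on the number of elements: a connectivity/split argument handles decomposable systems, reducing any non-prime obstruction to a smaller one, so that a minimal obstruction may be assumed prime. For the prime case, the strategy is to apply carefully chosen local complementations and vertex deletions to a hypothetical minimal counterexample $G$ not isomorphic to any of $W_5$, $W_7$, $BW_3$, producing a strictly smaller non-circle vertex-minor and contradicting minimality. Complementing this reduction, one must also verify the genuine minimality of the three listed graphs, that is, that every proper vertex-minor of each of $W_5$, $W_7$, and $BW_3$ is a circle graph; this is a finite but delicate check that closes the classification.
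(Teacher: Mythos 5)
The paper does not prove this theorem; it is quoted from Bouchet's 1994 paper \emph{Circle graph obstructions} and used as a black box (e.g.\ to show that $A''_3$ is not circle via a sequence of local complementations). So there is no in-paper argument to compare yours against, and the only fair question is whether your proposal would stand on its own as a proof.

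Your forward direction is fine and complete in outline: local complementation preserves circle graphs (the double-occurrence-word manipulation recalled just before the statement), circle graphs are closed under induced subgraphs, and the non-circularity of $W_5$, $W_7$, $BW_3$ is a finite check. Your reordering claim for the converse --- that local complementations can be pushed before deletions, so a vertex-minor characterization yields the induced-subgraph-up-to-local-equivalence form --- is also correct, since $(G-v)*u = (G*u)-v$ for $u\neq v$. The genuine gap is that the entire substance of the theorem sits inside the step you describe as ``apply carefully chosen local complementations and vertex deletions to a hypothetical minimal counterexample, producing a strictly smaller non-circle vertex-minor.'' That sentence is a restatement of what must be proved, not an argument: nothing in your proposal identifies \emph{which} complementations and deletions to apply, why a prime minimal obstruction that is not one of the three listed graphs must admit such a reduction, or how the Eulerian/non-Eulerian dichotomy for isotropic systems interacts with primality. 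Likewise the minimality check (every proper vertex-minor of $W_5$, $W_7$, $BW_3$ is circle) is asserted to be ``finite but delicate'' without being done. As written, the proposal is an accurate map of where Bouchet's proof lives, but it does not contain the proof; if you intend to supply one, the classification of minimal non-Eulerian prime isotropic systems is the part that must actually be written out.
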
 

\begin{figure}[h]
\centering
\includegraphics[scale=.65]{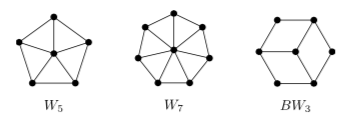}
\caption{The graphs $W_5$, $W_7$ and $BW_3$.} \label{fig:bouchet_lc}
\end{figure} 

Bouchet also proved the following property of circle graphs. Let $G = (V, E)$ and let $A = \{A_{vw} : v, w \in V \}$ be an antisymmetric integer matrix \cite{B87}. For $W \subseteq V$, we denote $A\left[W\right] = \{A_{vw} : v, w \in W \}$. The matrix $A$ satisfies the property $\alpha$ if the following property (related to unimodularity) holds: $\det(A\left[W\right]) \in \{-1, 0, 1\}$ for all $W\subseteq V$. Graph $G$ is unimodular if there is an orientation of $G$ such that the resulting digraph satisfies property $\alpha$. Bouchet proved that every circle graph admits such an orientation \cite{B87}.
Moreover, it was also Bouchet who proved that, if $G$ is a bipartite graph such that its complement is circle, then $G$ is a circle graph \cite{B99}. In \cite{ES19}, the authors give a new and shorter prove for this result.
       
Inspired by Theorem \ref{teo:bouchet}, Geelen and Oum gave a new characterization of circle graphs in terms of \emph{pivoting} \cite{GO09}. The result of pivoting a graph $G$ with respect to an edge $uv$ is the graph $G \times uv = G * u * v * u$, where $*$ stands for local complementation. 
A graph $G'$ is \emph{pivot equivalent }to $G$ if $G'$ arises from $G$ by a sequence of pivoting operations. They proved, with the aid of a computer, that $G$ is a circle graph if and only if each graph that is pivot equivalent to $G$ contains none of 15 prescribed induced subgraphs.

Let $G_1$ and $G_2$ be two graphs such that $|V(G_i)| \geq 3$, for each $i = 1,2$, and assume that $V(G_1) \cap V(G_2) = \emptyset$. Let $v_i$ be a distinguished vertex of $G_i$, for each $i = 1, 2$. The \emph{split composition }of $G_1$ and $G_2$ with respect to $v_1$ and $v_2$ is the graph $G_1 \circ G_2$ whose vertex set is $V(G_1 \circ G_2) = (V(G_1)\cup V(G_2)) \setminus \{v_1,v_2\}$ and whose edge set is $E(G_1 \circ G_2) = E(G_1 \setminus \{v_1\}) \cup E(G_2 \setminus \{v_2\}) \cup \{uv : u \in N_{G_1}(v_1) \mbox{ and } v \in N_{G_2}(v_2)\}$. The vertices $v_1$ and $v_2$ are called the \emph{marker vertices}. We say that $G$ has a \emph{split decomposition} if there exist two graphs $G_1$ and $G_2$ with $|V(G_i)| \geq 3$, $i = 1,2$, such that $G = G_1 \circ G_2$ with respect to some pair of marker vertices. If so, $G_1$ and $G_2$ are called the factors of the split decomposition. Those graphs that do not have a split decomposition are called \emph{prime graphs}. The concept of split decomposition is due to Cunningham \cite{C82}. 

Circle graphs turned out to be closed under this decomposition \cite{B87} and in 1994 Spinrad presented a quadratic-time recognition algorithm for circle graphs that exploits this peculiarity \cite{S94}. Also based on split decomposition, Paul \cite{P06} presented an $\mathcal{O}((n + m)\alpha(n + m))$-time algorithm for recognizing circle graphs, where $\alpha$ is the inverse of the Ackermann function.

In \cite{F84} De Fraysseix presented a characterization of circle graphs, which leads to a novel in\-ter\-pre\-ta\-tion of circle graphs as the intersection graphs of induced paths of a given graph. A \emph{cocycle }of a graph $G$ with vertex set $V$ is the set of edges joining a vertex of $V_1$ to a vertex of $V_2$ for some bipartition $(V_1,V_2)$ of $V$. A \emph{cocyclic-path }is an induced path whose set of edges constitutes a cocycle. A \emph{cocyclic-path intersection graph } is a simple graph with vertex set being a family of cocyclic-paths of a given graph, two vertices being adjacent if and only if the corresponding cocyclic-paths have an edge in common. Notice that the definition is restricted to those graphs covered by cocyclic-paths any two of which have at most a common edge. Fraysseix proved the following characterization of circle graphs as cocyclic-path intersection graphs.

\begin{teo}\cite{F84} 
Let $G$ be a graph. $G$ is a circle graph if and only if $G$ is a cocyclic-path intersection graph.
\end{teo}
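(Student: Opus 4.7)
The plan is to prove the equivalence by constructing, in each direction, a model from the other.

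For the forward direction ($\Rightarrow$), starting with a circle model $L = \{C_v\}_{v \in V(G)}$ of $G$, I would build a host graph $H$ from the planar arrangement of the chords: take as vertices the chord endpoints on the circle together with the chord crossings, and as edges the chord segments between consecutive such points. Each chord $C_v$ then naturally gives a path $P_v$ in $H$, namely the sequence of segments making up $C_v$. This path is induced, because a chord is a straight line and no two non-consecutive segments of $C_v$ share a vertex of $H$; and its edge set is a cocycle, because the chord $C_v$ separates the disk into two open regions, producing a bipartition of $V(H)$ whose only crossing edges are exactly the segments of $C_v$. Under this construction two chords $C_v, C_w$ cross iff $P_v$ and $P_w$ share a crossing vertex; to upgrade shared vertices to a single shared edge (as required by the definition of cocyclic-path intersection graph), I would modify $H$ locally at each crossing, replacing the crossing vertex by a short common edge belonging to both $P_v$ and $P_w$ and updating the two bipartitions accordingly.

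For the backward direction ($\Leftarrow$), starting with cocyclic-paths $\{P_v\}_{v \in V(G)}$ in a host graph $H$, each $P_v$ is associated with a bipartition $(A_v, B_v)$ of $V(H)$; as one traverses $P_v$ its vertices alternate between $A_v$ and $B_v$, and the crossing edges of the bipartition are exactly $E(P_v)$. When two paths $P_v,P_w$ share a (unique) edge, the alternating structures of the two bipartitions force an interlacement pattern around that edge. My plan is to encode all this data as a double occurrence word $\Gamma$ on the alphabet $V(G)$, obtained by traversing a suitable spanning substructure of $H$ (for instance a DFS tree together with the remaining edges) and recording, in order, each entry and each exit of every $P_v$. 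The interlacement graph $G[\Gamma]$ should then coincide with the intersection graph of the $P_v$, and by the equivalence between circle graphs and alternance graphs noted earlier in the section this yields that $G$ is a circle graph.

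The main obstacle will be the backward direction. Building the local interlacement word near a single shared edge is routine, but the hard part is proving that the various local interlacements glue together consistently into one global double occurrence word, so that each symbol appears exactly twice and the resulting word faithfully records the edge-sharing pattern. This is essentially a planarity-style compatibility argument, and will most likely require an inductive decomposition of $H$ (splitting at bridges or cut-vertices and recombining the associated words), together with a parity/Euler-trail argument that exploits the fact that each $E(P_v)$ is a cocycle, so that each $P_v$ crosses any closed walk an even number of times.
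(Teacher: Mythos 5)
The thesis states this result as a citation of De Fraysseix \cite{F84} and gives no proof of it, so there is no argument of the paper's to compare yours against; your proposal has to stand on its own, and in its present form it does not. In the forward direction the construction is reasonable, but your justification of the cocycle property is wrong as written: the vertices of $P_v$ lie \emph{on} the chord $C_v$, hence in neither of the two open regions, and the segments of $C_v$ do not cross from one region to the other, so the bipartition ``the two sides of the chord'' does not have $E(P_v)$ as its set of crossing edges. To make $E(P_v)$ a genuine edge cut you must assign the vertices of $P_v$ alternately to the two sides; this forces, at each crossing of $C_v$ with $C_w$, a specific choice of which diagonal pair of quadrants receives the two split vertices, and that choice must simultaneously satisfy the alternation constraint coming from $P_w$ and keep every segment of every other chord entirely on one side. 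That bookkeeping is exactly where the content of this direction lies, and you have not carried it out.

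The backward direction is the genuine gap: what you give is a plan to find a plan. The whole content of that implication is that the local interlacement data around shared edges assembles into a single double occurrence word whose alternance graph equals the intersection graph of the paths, and you explicitly defer this to an unspecified ``planarity-style compatibility argument.'' The proposed construction of $\Gamma$ is also underdetermined: a DFS traversal of $H$ meets a cocyclic path $P_v$ at many vertices and edges, so it is not clear which two events you record as the two occurrences of the symbol $v$, in what cyclic order the events are read off, or why $v$ and $w$ would alternate in $\Gamma$ exactly when $P_v$ and $P_w$ share an edge. The parity observation that a cocycle meets every closed walk an even number of times is a good ingredient, but by itself it does not produce the word or the equivalence. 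Until $\Gamma$ is defined concretely and the equality $G=G[\Gamma]$ is proved, this direction remains unproved.
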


A \emph{diamond} is the complete graph with 4 vertices minus one edge. 
A claw is the graph with 4 vertices that has 1 vertex with degree 3 and a 
maximum independent set of size 3.
\emph{Prisms} are the graphs that arise from the cycle $C_6$ by subdividing the edges that link the triangles.

A graph is \emph{Helly circle }if it has a circle model whose chords
are all different and every subset of pairwise intersecting chords has a point in common. 
A characterization by minimal forbidden induced subgraphs for Helly circle graphs, inside circle graphs,
was conjectured in \cite{D03} and was proved some years later in \cite{DGR10}. Notice that this characterization does not solve the general characterization of Helly circle graphs by forbidden subgraphs.


\begin{teo} \cite{DGR10} 
Let $G$ be a circle graph. $G$ is Helly circle if and only if $G$ is diamond-free.
\end{teo}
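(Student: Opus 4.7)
\bigskip

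\noindent\textbf{Proof proposal.} I would prove the two implications separately. The forward direction is the routine one; the reverse direction is where the main work lies.

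For the forward direction, assume $G$ is Helly circle and suppose, towards a contradiction, that $G$ contains an induced diamond on vertices $a,b,c,d$ with $ab,ac,ad,bc,bd \in E(G)$ and $cd \notin E(G)$. Fix a Helly circle model of $G$ with chords $C_a,C_b,C_c,C_d$. Since $\{a,b,c\}$ induces a triangle, the three chords $C_a,C_b,C_c$ pairwise cross, so by the Helly property (applied to this triple) they share a common point $p$. Analogously, $C_a,C_b,C_d$ share a common point $q$. But two distinct chords of a circle meet in at most one point, so $\{p\}=C_a\cap C_b=\{q\}$, forcing $p=q$. Then both $C_c$ and $C_d$ pass through this point, so $C_c\cap C_d\ne\emptyset$, contradicting $cd\notin E(G)$.

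For the reverse direction, assume $G$ is a circle graph and is diamond-free. The plan is to start from an arbitrary circle model of $G$ and modify it (without changing the intersection graph) until it becomes a Helly model. The key structural observation is that a graph is diamond-free if and only if, for every edge $uv$, the common neighborhood $N(u)\cap N(v)$ is a clique; equivalently, two triangles sharing an edge must lie in a single clique of $G$. Hence in $G$ the triangles are organized into maximal cliques in a very rigid way, and two maximal cliques share at most one vertex.

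Using this, I would proceed as follows. Call a triple of chords \emph{bad} if the corresponding vertices form a triangle of $G$ but the three chords fail to share a common point (so they bound a non-degenerate triangular region $T$ inside the disk). For a chosen circle model with the minimum number of bad triples, I would pick one bad triple $C_a,C_b,C_c$ and attempt to slide the three endpoints along the circle so that the three chords become concurrent at an interior point of $T$, keeping all other endpoints fixed. This perturbation does not create or destroy intersections with any chord that does not already cross $T$; the only possible obstruction is a fourth chord $C_d$ whose intersection pattern with $\{C_a,C_b,C_c\}$ would change during the sliding. A careful case analysis on which two of $C_a,C_b,C_c$ the chord $C_d$ crosses shows that each obstruction forces $d$ to be adjacent to two of $a,b,c$ but not the third, which together with a vertex of the clique containing the triangle $\{a,b,c\}$ produces an induced diamond, contradicting our assumption on $G$. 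Therefore the local modification is always possible and strictly decreases the number of bad triples, so iterating yields a Helly circle model.

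The main obstacle will be the case analysis that legitimizes the local perturbation: one must rule out every geometric configuration of a fourth chord that blocks concurrency, and translate each such block into an induced diamond in $G$. An alternative approach, which could bypass this case work, is to use the split decomposition: since circle graphs are closed under split decomposition and the diamond itself is not prime, one could reduce to the case where $G$ is prime, where the structural rigidity of diamond-free prime circle graphs should force the existence of a Helly model directly. Either route requires careful bookkeeping, but the diamond-free hypothesis is used in exactly the same way — to forbid a fourth vertex adjacent to precisely two of three triangle-vertices.
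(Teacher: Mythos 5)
First, a point of reference: the paper does not prove this theorem; it is quoted from \cite{DGR10} as a known result, so there is no in-paper proof to compare against. Your forward direction is correct and complete: the Helly property applied to the two triangles $\{a,b,c\}$ and $\{a,b,d\}$, together with the fact that two distinct chords meet in at most one point, forces $C_c\cap C_d\neq\emptyset$. (One small simplification you missed in the converse setup: if $d$ is adjacent to exactly two of $\{a,b,c\}$, then $\{a,b,c,d\}$ is \emph{already} an induced diamond; no auxiliary clique vertex is needed.)

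The reverse direction, however, has a genuine gap, and it is exactly where the difficulty of the theorem lives. Your extremal argument on the number of ``bad triples'' is not justified: when you slide the endpoints of $C_a,C_b,C_c$ to make them concurrent, you may destroy the concurrency of other triangles that share one or two chords with $\{a,b,c\}$, so the measure is not shown to decrease. Worse, diamond-freeness implies (as you correctly observe) that all triangles through a fixed edge $ab$ lie in a single maximal clique; hence in any Helly model \emph{all} chords of that maximal clique must pass through one common point simultaneously. The correct unit to process is therefore a whole maximal clique, not a single triangle, and the geometric feasibility of forcing an arbitrary number of pairwise-crossing chords through one point --- while preserving every intersection and non-intersection with the rest of the model --- is a much stronger statement than the three-chord perturbation you describe. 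Your claim that ``the only possible obstruction is a fourth chord $C_d$'' crossing two of the three is asserted, not proved; several chords can interact with the triangular region $T$ at once, and ruling out all such configurations (or reducing to prime graphs via split decomposition and then analysing their unique models) is the substantial content of the actual proof in \cite{DGR10}. As written, your argument is a plausible plan rather than a proof.
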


A graph $G$ is \emph{domino }if each of its vertices belongs to at most
two cliques. In addition, if each of its edges belongs to at most one
clique, $G$ is \emph{linear-domino}. Linear-domino graphs coincide with
$\{$claw,diamond$\}$-free graphs.

There are no known characterizations for the class of circle graphs by minimal forbidden induced subgraphs. In order to obtain some results in this direction, this problem was addressed by attempting to characterize circle graphs by minimal forbidden induced subgraphs when given a graph that belongs to a certain graph class. This is known as a partial structural characterization. 
Some results in this direction are the following.

\begin{teo}\cite{BDGS10}
Let $G$ be a linear domino graph. Then, $G$ is a circle graph if and only if $G$ contains no induced prisms.
\end{teo}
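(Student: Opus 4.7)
The plan is to handle the two directions separately, exploiting the fact that linear-domino graphs, being $\{$claw, diamond$\}$-free, are essentially line graphs of triangle-free simple graphs. The forward direction will use Theorem \ref{teo:bouchet} to derive non-circle obstructions inside the local equivalence class of every prism, while the backward direction will construct a chord model directly from the structure of the underlying graph.

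For the necessity, I would show that no prism is a circle graph and then use the fact that induced subgraphs of circle graphs are circle. For the smallest prism $P_3 = K_3 \square K_2$, a direct computation suffices: the local complementation $P_3 \ast a_1$ at a triangle vertex turns $\{a_1, a_2, b_2, b_3, a_3\}$ into an induced $C_5$ while $b_1$ remains adjacent to all five of them, producing an induced $W_5$, so Theorem \ref{teo:bouchet} rules out $P_3$ as circle. For a prism $P$ with at least one subdivided branch, I would induct on the total number of interior vertices: picking any interior vertex $c$, its two neighbors are non-adjacent in $P$, so $P \ast c$ adds exactly the edge joining those two neighbors, and taking the induced subgraph of $P \ast c$ on $V(P) \setminus \{c\}$ yields a prism with one fewer interior vertex. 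Since local equivalence preserves circle-ness and induced subgraphs of circle graphs are circle, the inductive hypothesis rules out $P$ as circle as well.

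For the sufficiency, assume $G$ is linear-domino and contains no induced prism. Since every vertex of $G$ lies in at most two maximal cliques, every edge in exactly one, and any two maximal cliques intersect in at most one vertex, I would define a triangle-free simple graph $H$ whose vertices are the maximal cliques of $G$ together with one pendant vertex for every $G$-vertex lying in a single maximal clique, and whose edges are the vertices of $G$, each $G$-vertex $v$ joining in $H$ the (one or two) maximal cliques that contain it, its associated pendant replacing a missing clique when needed. A routine verification gives $G = L(H)$, and the diamond-freeness of $G$ forces the triangle-freeness of $H$. I then claim that the absence of an induced prism in $G$ is equivalent to $H$ having no theta-subgraph, i.e., no two vertices joined by three internally disjoint paths: a theta in $H$ produces, by expanding its two branch vertices into triangles, a prism in $L(H) = G$, and conversely an induced prism in $L(H)$ localizes its two triangles at two vertices of $H$ from which three internally disjoint paths emerge.

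A triangle-free, theta-free graph is a cactus whose non-trivial blocks are cycles of length at least four, so $G$ is the line graph of such a cactus. Block by block one has that the line graph of a cycle is itself a cycle (hence circle), and the line graph of a tree of edges is a block graph (hence circle, as recalled in the excerpt). I would then assemble a global chord model of $G$ by applying the closure of circle graphs under split composition \cite{B87} at the chords that represent the cliques shared by neighboring blocks of $H$. The main obstacle is this last gluing step: one must ensure that the chord representing a clique shared by two adjacent blocks appears in both partial models in a way compatible with the split composition, and the case analysis needed to verify this compatibility, together with the handling of $G$-vertices lying in a single maximal clique, is where I expect most of the technical work to be concentrated.
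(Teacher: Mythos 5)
Your necessity argument is correct: the local complementation of $\overline{C_6}$ at a triangle vertex does produce an induced $W_5$ (I checked the adjacencies), and the degree-two reduction --- locally complement at an interior path vertex, whose two neighbours are nonadjacent, then delete it --- is a valid induction showing every subdivided prism is locally equivalent to a graph containing a smaller prism as an induced subgraph, so Theorem \ref{teo:bouchet} applies. Note that the present paper does not actually prove this theorem: it only cites \cite{BDGS10} and remarks that the proof there rests on closure of circle graphs under split decomposition, so there is no in-paper argument to compare yours against in detail.

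The sufficiency direction, however, has a genuine gap at the claimed equivalence between induced prisms in $G=L(H)$ and thetas in $H$. If one of the three internally disjoint $u$--$v$ paths is the single edge $uv$, then ``expanding the two branch vertices into triangles'' produces two triangles of $L(H)$ that \emph{share} the vertex corresponding to the edge $uv$; this is not a prism. Concretely, let $H$ be a $C_6$ with one long chord, i.e.\ two vertices joined by internally disjoint paths of lengths $1$, $3$ and $3$: this $H$ is triangle-free and contains a theta in your sense, yet $L(H)$ has exactly two triangles and they meet in a vertex, so $L(H)$ is prism-free. Hence ``$G$ prism-free $\Rightarrow$ $H$ theta-free'' fails, and with it the conclusion that $H$ is a cactus; since the theorem asserts that this $L(H)$ \emph{is} a circle graph, your block-by-block analysis simply never reaches it. The correct translation is that $L(H)$ contains an induced prism if and only if $H$ contains two vertices joined by three internally disjoint paths \emph{each of length at least two} (a $K_{2,3}$-subdivision); excluding these forces every block of $H$ to be outerplanar --- in the triangle-free case, a cycle with non-crossing long chords --- rather than a chordless cycle. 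So the class of blocks you must handle is strictly richer than cycles and pendant edges, and both the per-block verification and the split-composition gluing (which you already flag as delicate, and which is further complicated by the fact that a chord of a block is itself a vertex of $L(H)$ lying in two triangles) remain to be supplied before the sufficiency direction is complete.
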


The proof given in \cite{BDGS10} is based on the fact that circle graphs are closed under split decomposition \cite{B87}. As a corollary of
the above theorem, the following partial characterization of Helly circle graphs is obtained.

\begin{cor} \cite{BDGS10}
Let $G$ be a claw-free graph. Then, $G$ is a Helly circle graph if and only if $G$ contains no induced prism and no induced diamond.
\end{cor}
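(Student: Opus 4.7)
The plan is to derive this corollary by combining the immediately preceding theorem of \cite{BDGS10} with the Helly characterization from \cite{DGR10} stated earlier, using the identity linear-domino $=$ $\{$claw, diamond$\}$-free. So the argument should essentially consist of bookkeeping two biconditionals under the assumption that $G$ is claw-free.

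For the forward implication, I would assume $G$ is a claw-free Helly circle graph. By definition of Helly circle, $G$ is a circle graph, and because a diamond does not admit a Helly circle model (this is the content of the $\{$circle$\}\Rightarrow\{$Helly circle iff diamond-free$\}$ theorem of \cite{DGR10}), $G$ must be diamond-free. Consequently $G$ is $\{$claw, diamond$\}$-free, which is exactly the defining property of linear-domino graphs. Applying the preceding theorem of \cite{BDGS10} to the linear-domino circle graph $G$, we conclude that $G$ contains no induced prism. This yields both conclusions of the forward direction at once.

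For the backward implication, I would assume $G$ is claw-free, diamond-free, and prism-free. Being claw-free and diamond-free, $G$ is linear-domino. Now the preceding theorem of \cite{BDGS10} says that on linear-domino graphs, being a circle graph is equivalent to containing no induced prism; since $G$ contains none, $G$ is a circle graph. Finally, because $G$ is a diamond-free circle graph, the characterization from \cite{DGR10} gives that $G$ is Helly circle, which concludes the proof.

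There is essentially no obstacle here: the corollary is a direct assembly of two previously quoted results through the trivial observation that claw-free plus diamond-free is linear-domino. The only point that needs a brief justification is the implication ``Helly circle $\Rightarrow$ diamond-free,'' which is already built into the cited Helly characterization of \cite{DGR10} (the biconditional there is stated for circle graphs, and Helly circle graphs are circle by definition). Consequently the proof should be just a few lines long, explicitly written as two chains of implications, one for each direction.
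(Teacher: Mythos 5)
Your derivation is correct and is exactly the intended one: the paper presents this statement as an immediate corollary of the preceding theorem of \cite{BDGS10} combined with the Helly characterization of \cite{DGR10}, via the identity linear-domino $=$ $\{$claw, diamond$\}$-free, and gives no further argument. Your two chains of implications are a faithful and complete spelling-out of that same route.
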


A graph is \emph{cograph} if it is $P_4$-free. 
A graph is \emph{tree-cograph} if it can be constructed from trees by disjoint union and complement operations.
Let $A$ be a $P_4$ in some graph $G$. A \emph{partner of $A$ in $G$ }is a vertex $v$ in $G\setminus A$ such that $A+v$ induces at least two $P_4$'s. A graph $G$ is $P_4$-tidy if any $P_4$ has at most one partner.

\begin{teo} \cite{BDGS10} 
Let $G$ be a $P_4$-tidy graph. Then, $G$ is a circle graph if and only if $G$ contains no $W_5$, net$+K_1$, tent$+K_1$, or tent-with-center as induced subgraph.
\end{teo}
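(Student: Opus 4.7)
For necessity, I would check that none of the four listed graphs is circle. The graph $W_5$ appears in Bouchet's minimal list of forbidden subgraphs under local complementation (Theorem \ref{teo:bouchet}), so it is not circle. For the other three, either a direct pairing argument on any candidate double occurrence word, or a short sequence of local complementations producing an induced $W_5$, $W_7$ or $BW_3$, shows that each fails to be circle. Heredity of the circle property under taking induced subgraphs then yields the ``only if'' direction.

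For sufficiency, the main engine is the structure theorem of Giakoumakis, Roussel and Thuillier for $P_4$-tidy graphs: any $P_4$-tidy graph on at least two vertices is disconnected, has disconnected complement (a join), or is a ``prime'' graph isomorphic to $P_5$, $\overline{P_5}$, $C_5$, a spider $(S,K,R)$, or a quasi-spider. Combined with the standard fact that circle graphs are closed under substitution of a vertex by a circle module --- shrink a window on the circle at one endpoint of the substituted chord and insert inside it a scaled copy of the module's chord model, with each new chord crossing exactly the same outer chords --- this reduces the problem to checking, for each prime piece in the modular decomposition, that avoiding the four listed induced subgraphs already forces that piece to be circle; an induction on the decomposition tree then finishes.

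For the prime pieces: $K_1$, $P_5$, $\overline{P_5}$ and $C_5$ each admit an obvious chord model. A thin spider is circle via the explicit construction in which $K$ is drawn as mutually crossing chords and each $s_i$ as a short chord crossing only $k_i$, with the head $R$ absorbed by substitution. For a thick spider $(S,K,R)$, the six vertices $\{s_1,s_2,s_3,k_1,k_2,k_3\}$ form a tent, and any additional leg $s_4$ (when $|S|\ge 4$) or any $r\in R$, being universal to $\{k_1,k_2,k_3\}$, produces an induced tent-with-center; hence in the hypothesis class the thick spider reduces to the tent itself, which is already circle. The quasi-spider case is handled by the same dichotomy with only cosmetic changes.

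The main obstacle is this prime-piece case analysis: one must carefully track, for each of the four forbidden subgraphs, which configurations of the spider and of the head $R$ are ruled out, and then provide an explicit chord model for each surviving configuration. Roughly, tent$+K_1$ and net$+K_1$ exclude certain disconnected combinations that would otherwise slip through the modular decomposition tree; tent-with-center controls the blow-up in the thick-spider case; and $W_5$ is what rules out nontrivial substitutions into the legs of a thin spider, since substituting a two-vertex antichain into a leg of a long enough thin spider produces an induced $W_5$. Once every prime piece is shown to be circle, the substitution-closure induction on the modular decomposition tree closes the argument.
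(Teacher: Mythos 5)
This theorem is only quoted in the thesis from \cite{BDGS10} (no proof is reproduced; the text notes that the arguments in that reference rest on closure of circle graphs under \emph{split} decomposition), so your proposal has to be judged on its own merits. The necessity half is fine. The sufficiency half, however, rests on a claim that is false: circle graphs are \emph{not} closed under substitution of a vertex by a circle module. The canonical counterexample is one of the very graphs in the statement: $W_5$ is obtained by substituting the circle graph $C_5$ into one vertex of $K_2$, yet $W_5$ is not circle. Your geometric justification also does not work as described --- if you shrink the whole chord model of the module $H$ into a window at \emph{one} endpoint of the substituted chord, the new chords cross none of the outer chords; to make each of them cross exactly $N(v)$ you must route every chord of $H$ between two windows (one at each former endpoint of $v$'s chord), and then the internal crossing pattern of $H$ is forced to be that of a permutation diagram. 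The correct closure fact is for the split composition: substituting $H$ for a vertex $v$ with $N(v)\neq\emptyset$ preserves the circle property precisely when $H\vee K_1$ is circle, i.e.\ when $H$ is a permutation graph (as the thesis itself observes, permutation graphs are exactly the circle graphs admitting a model with an equator).

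Consequently the induction on the modular decomposition tree cannot proceed as you wrote it: at every join node and at every nontrivial module (the head $R$ of a spider, a leg replaced by $K_2$ or $\overline{K_2}$ in a quasi-spider, the components hanging off a disconnected complement) you must additionally prove that the forbidden subgraphs $W_5$, net${}\vee K_1$, tent${}\vee K_1$ and tent-with-center force that module to be a \emph{permutation} graph, not merely a circle graph. This is in fact the whole point of the forbidden list: $C_5$, net and tent are exactly the small circle graphs that fail to be permutation graphs, and joining each with $K_1$ produces the obstructions. Your remarks at the end ("$W_5$ rules out nontrivial substitutions into the legs") gesture at this, but they are presented as supplementary bookkeeping rather than as the replacement for a closure property that does not hold, so as written the argument has a genuine gap at every composition step.
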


\begin{teo} \cite{BDGS10} 
Let $G$ be a tree-cograph. Then, $G$ is a circle graph if and only if $G$ contains no induced (bipartite-claw)$+K_1$ and no induced co-(bipartite-claw).
\end{teo}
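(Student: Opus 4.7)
For the necessity direction, the plan is to verify directly that (bipartite-claw)$+K_1$ and co-(bipartite-claw) are not circle graphs. A clean way is to apply Bouchet's Theorem~\ref{teo:bouchet}: for each of the two graphs, I would exhibit a short sequence of local complementations producing an induced copy of $W_5$, $W_7$, or $BW_3$. Alternatively, one can attempt to place chords in a circle and derive a contradiction by a direct parity/crossing argument on the seven vertices of each forbidden subgraph.

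For the sufficiency direction, I would proceed by induction on $|V(G)|$, exploiting the recursive definition of tree-cographs. Every tree-cograph $G$ with at least two vertices falls into exactly one of four cases: (i) $G$ is a tree; (ii) $G$ is disconnected, $G = G_1 \cup G_2$, with both $G_i$ smaller tree-cographs; (iii) $\overline{G}$ is disconnected, so $G = G_1 * G_2$ is a join of two smaller tree-cographs; or (iv) neither $G$ nor $\overline{G}$ is disconnected, and then the recursive structure forces $G = \overline{T}$ for some tree $T$. In case (i), $G$ is circle because every tree admits a chord model. In case (ii), the forbidden subgraph hypothesis is inherited by $G_1$ and $G_2$, the inductive hypothesis gives circle models for both, and we paste them on disjoint arcs of the circle. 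In case (iv), the absence of an induced co-(bipartite-claw) in $G$ translates to the absence of an induced bipartite-claw in $T$, which is a classical characterization of caterpillars; it remains to exhibit an explicit chord model of $\overline{T}$ for $T$ a caterpillar, which can be done by routing one chord per vertex along a spine matching the caterpillar's main path.

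The main obstacle is case (iii), the join $G = G_1 * G_2$. Here the plan is to show, using the two forbidden subgraphs, that each $G_i$ is not merely circle but in fact a permutation graph: equivalently, each admits a circle model in which a single additional chord can be drawn meeting all chords of the model (an \emph{equator}). Once that is established, the join can be realised by superimposing a model of $G_1$ on one side of the equator and a model of $G_2$ on the other, with every $G_1$-chord crossing every $G_2$-chord. The role of the forbidden subgraphs is precisely to rule out the obstructions to the permutation property: a (bipartite-claw)$+K_1$ inside $G_1 * G_2$ typically arises when one of the factors contains a substructure incompatible with a one-sided model, and co-(bipartite-claw) plays the dual role with respect to the other side of the equator. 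Combined with the inductive hypothesis applied to $G_1$ and $G_2$ as tree-cographs, and with the known forbidden-subgraph characterisation of permutation graphs inherited from that of comparability graphs cited in the introduction, this should close the argument. The hard part will be verifying that the forbidden-subgraph hypothesis on $G$ really does force both $G_1$ and $G_2$ into the permutation subclass, rather than merely circle; I expect this to require a careful case analysis on the tree-cograph structure of each $G_i$, treating separately the sub-cases where a factor is itself a tree, a co-tree (a caterpillar's complement), a disjoint union, or a further join.
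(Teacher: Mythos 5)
You should first note that the paper does not prove this statement: it is quoted from \cite{BDGS10} as a known partial characterization, so there is no in-paper proof to compare against, and I can only assess your proposal on its own terms. Your overall architecture is the natural one and is essentially sound: necessity reduces to checking that two small graphs are non-circle (Bouchet's Theorem \ref{teo:bouchet} works, or simply observe that $H+K_1$ is circle if and only if $H$ is a permutation graph, and the bipartite-claw is not one); and sufficiency by structural induction over the four shapes tree / co-tree / disjoint union / join is correct. Cases (i), (ii), (iv) go through, though in (iv) you do not need an explicit chord model: a bipartite-claw-free tree is a caterpillar, caterpillars are permutation graphs (bipartite, hence comparability; interval, hence co-comparability), permutation graphs are closed under complementation, and permutation graphs are circle.

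The genuine gap is in case (iii), and it is a gap of induction strength rather than of intuition. Your induction hypothesis asserts only that smaller tree-cographs satisfying the forbidden-subgraph condition are \emph{circle}, but to handle $G=G_1 * G_2$ you need each $G_i$ to be a \emph{permutation} graph, which is strictly stronger; your four-arc superposition then only certifies that the join is circle, so if $G_1$ is itself a join you cannot feed it back into the argument. The clean repair is a separate lemma proved by the same structural induction: \emph{every tree-cograph with no induced bipartite-claw and no induced co-(bipartite-claw) is a permutation graph}. The base cases are the caterpillar facts above and their complements, and the inductive cases use that permutation graphs are closed under disjoint union and join (orient all join edges from $V(G_1)$ to $V(G_2)$ and check transitivity; closure under union follows by complementation), so the join is again a permutation graph, not merely circle. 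Two further points of your sketch need tightening: the hypotheses pass to the factors because each $G_i$ inherits co-(bipartite-claw)-freeness directly as an induced subgraph of $G$, while a bipartite-claw in $G_i$ together with \emph{any} vertex of the other (nonempty) factor yields a (bipartite-claw)$+K_1$ in $G$ --- both forbidden subgraphs constrain both factors, not one each as your ``dual role'' remark suggests; and invoking Gallai's forbidden-subgraph list for permutation graphs, as you propose, would be a far heavier case analysis than the closure argument and is not needed.
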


\section{Basic definitions and notation} \label{section:basic_defs}

Let $A=(a_{ij})$ be a $n\times m$ $(0,1)$-matrix.
We denote by $a_{i.}$ and $a_{.j}$ the $i$th row and the $j$th column of matrix $A$. From now on, we associate each row $a_{i.}$ with the set of columns in which $a_{i.}$ has a $1$. For example, the \emph{intersection} of two rows $a_{i.}$ and $a_{j.}$ is the subset of columns in which both rows have a $1$.
Two rows $a_{i.}$ and $a_{k.}$ are \emph{disjoint} if there is no $j$ such that $a_{ij} = a_{kj} = 1$.
We say that $a_{i.}$ is \emph{contained} in $a_{k.}$ if for each $j$ such that $a_{ij} = 1$ also $a_{kj} = 1$. We say that $a_{i.}$ and $a_{k.}$ are \emph{nested} if $a_{i.}$ is contained in $a_{k.}$ or $a_{k.}$ is contained in $a_{i.}$.
We say that a row $a_{i.}$ is \emph{empty} if every entry of $a_{i.}$ is $0$, and we say that $a_{i.}$ is \emph{nonempty} if there is at least one entry of $a_{i.}$ equal to $1$.
We say that two nonempty rows \emph{overlap} if they are non-disjoint and non-nested.
For every nonempty row $a_{i.}$, let $l_i = \min\{ j \colon\,a_{ij} = 1 \}$ and $r_i = \max\{ j \colon\,a_{ij} = 1 \}$ for each $i\in\{1,\ldots,n\}$.
Finally, we say that $a_{i.}$ and $a_{k.}$ \emph{start} (resp.\ \emph{end}) \emph{in the same column} if $l_i = l_k$ (resp.\ $r_i = r_k$), and we say $a_{i.}$ and $a_{k.}$ \emph{start (end) in different columns}, otherwise.

We say a $(0,1)$-matrix $A$ has the \emph{consecutive-ones property for the rows }(for short, C$1$P) if there is permutation of the columns of $A$ such that the 1's in each row appear con\-sec\-u\-tive\-ly. Any such permutation of the columns of $A$ is called a \emph{consecutive-ones ordering }for $A$.
In \cite{T72}, Tucker characterized all the minimal forbidden submatrices for the C$1$P, later known as \emph{Tucker matrices}. For the complete list of Tucker matrices, see Figure \ref{fig:tucker_matrices}.
 
\begin{figure}[h!] 
	\centering
	\begin{align*}
			M_I(k)&= \begin{pmatrix}
				110...00\\
				011...00\\
				.   .   .   .   . \\
				.   .   .   .   . \\
				.   .   .   .   . \\
				000...11\\
				100...01\\
			\end{pmatrix}
			&
			M_{II}(k)&= \begin{pmatrix}
				011...111\\
				110...000\\
				011...000\\
				.   .   .   .   . \\
				.   .   .   .   . \\
				000...110\\
				111...101\\
			\end{pmatrix}
			&
			M_{III}(k)&= \begin{pmatrix}
				110...000\\
				011...000\\
				.   .   .   .   . \\
				.   .   .   .   . \\
				000...110\\
				011...101\\
			\end{pmatrix}
			\\
			\end{align*}
			\begin{align*}
			M_{IV}&= \begin{pmatrix}
				110000\\
				001100\\
				000011\\
				010101\\
			\end{pmatrix}
			&
			M_{V}&= \begin{pmatrix}
				11000\\
				00110\\
				11110\\
				10011\\
			\end{pmatrix}
	\end{align*}
	\caption{Tucker matrices $M_{I}(k) \in \{0,1\}^{k \times k}$, $M_{III}(k) \in \{0,1\}^{k \times (k+1)}$ with $k \geq 3$, and $M_{II}(k) \in \{0,1\}^{k \times k}$ with $k \geq 4$} \label{fig:tucker_matrices}
\end{figure}


Let $A$ and $B$ be $(0,1)$-matrices. We say that $B$ is a \emph{subconfiguration} of $A$ if there is a permutation of the rows and the columns of $B$ such that $B$ with this permutation results equal to a submatrix of $A$. 
Given a subset of rows $R$ of $A$, we say that $R$ \emph{induces a matrix} $B$ if $B$ is a subconfiguration of the submatrix of $A$ given by selecting only those rows in $R$.

All graphs in this work are simple, undirected, with no loops or multiple edges. The pair $(K,S)$ is a \emph{split partition} of a graph $G$ if $\{K,S\}$ is a partition of the vertex set of $G$ and the vertices of $K$ (resp.\ $S$) are pairwise adjacent (resp.\ nonadjacent), and we denote it $G=(K,S)$. A graph $G$ is a \emph{split graph} if it admits some split partition. Let $G$ be a split graph with split partition $(K,S)$, $n=\vert S\vert$, and $m=\vert K\vert$.
Let $s_1, \ldots, s_n$ and $v_1, \ldots, v_m$ be linear orderings of $S$ and $K$, respectively. Let $A= A(S,K)$ be the $n\times m$ matrix defined by $A(i,j)=1$ if $s_i$ is adjacent to $v_j$ and $A(i,j)=0$, otherwise.
From now on, we associate the row (resp.\ column) of the matrix $A(S,K)$ with the corresponding vertex in the independent set (resp.\ vertex in the complete set) of the partition.

Given a graph $H$, the graph $G$ is $H$-free if it does not contain $H$ as induced subgraph. For a family of graphs $\mathcal{H}$, a graph
$G$ is $\mathcal{H}$-free if $G$ is $H$-free for every $H \in \mathcal{H}$.


\selectlanguage{spanish}%
\chapter*{Preliminares}

Consideremos un grafo split $G=(K,S)$ y supongamos que $G$ es minimalmente no circle. Equivalentemente, todo subgrafo inducido $H$ es circle. Si $G$ no es circle, entonces en particular $G$ no es un grafo de permutación. 
Los grafos de permutación son exactamente aquellos grafos de comparabilidad cuyo complemento también es un grafo de comparabilidad \cite{EPL72}. Los grafos de comparabilidad han sido caracterizados por subgrafos inducidos prohibidos en \cite{G67}.

\begin{teo}[\cite{G67}]
Un grafo es de comparabilidad si y sólo si no contiene como subgrafo inducido a ninguno de los subgrafos de la Figura \ref{fig:forb_comparability1_} y su complemento no contiene como subgrafo inducido a ninguno de los subgrafos en la Figura \ref{fig:forb_comparability2_}.
\end{teo}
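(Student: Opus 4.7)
El enfoque que tomaría sigue la prueba clásica de Gallai mediante la \emph{Γ-relación} sobre las aristas dirigidas. Primero, dado un grafo $G = (V,E)$, considero el conjunto de aristas dirigidas $\vec{E} = \{(u,v), (v,u) : uv \in E\}$ y defino la relación $\Gamma$: dos aristas dirigidas $(a,b)$ y $(a,c)$ con la misma cola están en relación $\Gamma$ si $bc \notin E$; análogamente, $(b,a)\Gamma(c,a)$ si $bc\notin E$. Tomando la clausura transitiva y simétrica, se obtienen las \emph{clases de implicación} o \emph{Γ-clases} de $\vec{E}$.

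El lema central a demostrar es que $G$ es un grafo de comparabilidad si y sólo si ninguna Γ-clase contiene simultáneamente una arista dirigida $(u,v)$ y su inversa $(v,u)$. La dirección ``sólo si'' es directa, pues en una orientación transitiva toda Γ-relación fuerza que las dos aristas involucradas reciban orientaciones compatibles; la dirección ``si'' se obtiene orientando cada Γ-clase coherentemente (usando la disyunción exclusiva entre una clase y su inversa) y verificando que la orientación resultante es transitiva. De aquí se sigue que un grafo es minimalmente no de comparabilidad si y sólo si es un grafo minimal en el cual existe una Γ-clase que se cierra sobre sí misma, es decir, contiene a $(u,v)$ y a $(v,u)$ para alguna arista $uv$.

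El siguiente paso, que constituye el corazón combinatorio del argumento, es enumerar los grafos minimales donde ocurre tal cierre. Para ello haría una inducción sobre la longitud mínima de una \emph{cadena de forzamiento} $(u_0,u_1)\Gamma(u_1,u_2)\Gamma\cdots\Gamma(u_k,u_0)$ que produce una arista y su inversa en la misma clase, analizando caso por caso las configuraciones de adyacencias y no adyacencias a lo largo de la cadena. Cada configuración minimal corresponde a uno de los grafos listados en la Figura \ref{fig:forb_comparability1_}, y la minimalidad se obtiene verificando que la supresión de cualquier vértice rompe la cadena de forzamiento. La aparición de la condición sobre el complemento (Figura \ref{fig:forb_comparability2_}) proviene de que la caracterización original de Gallai se enuncia para grafos de comparabilidad \emph{y sus complementos}, ya que la propiedad no es autodual; los obstrucciones en el complemento son exactamente aquellos grafos $\overline{H}$ donde la Γ-clase que se cierra vive en $\overline{G}$ en lugar de $G$.

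El obstáculo principal será precisamente el análisis combinatorio exhaustivo de las cadenas de forzamiento minimales, que exige controlar familias infinitas (cadenas de longitud arbitraria dan lugar a ciclos impares y otras configuraciones extendidas) y descartar las configuraciones redundantes. Dado que este teorema es un resultado clásico bien establecido en \cite{G67}, en el contexto de esta tesis probablemente citaría directamente la prueba original y usaría el enunciado como herramienta, ya que nuestro uso principal es extraer de las Figuras \ref{fig:forb_comparability1_} y \ref{fig:forb_comparability2_} los subgrafos compatibles con la estructura split, lo cual permite concluir en el próximo paso que un grafo split minimalmente no circle debe contener un tent, un $4$-tent, un co-$4$-tent o un net como subgrafo inducido.
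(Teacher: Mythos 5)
Tu propuesta es correcta y coincide en lo esencial con el tratamiento de la tesis: el teorema se enuncia como resultado citado de \cite{G67} y no se demuestra en el texto, que es exactamente lo que concluyes al final de tu propuesta. El esbozo que das de la demostración original de Gallai mediante las clases de implicación (la $\Gamma$-relación, el lema de que $G$ es de comparabilidad si y sólo si ninguna $\Gamma$-clase contiene una arista dirigida junto con su inversa, y la enumeración de las configuraciones minimales) es la ruta clásica correcta, de modo que no hay discrepancia con el enfoque del texto.
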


\begin{figure}[h]
\centering
\includegraphics[scale=.75]{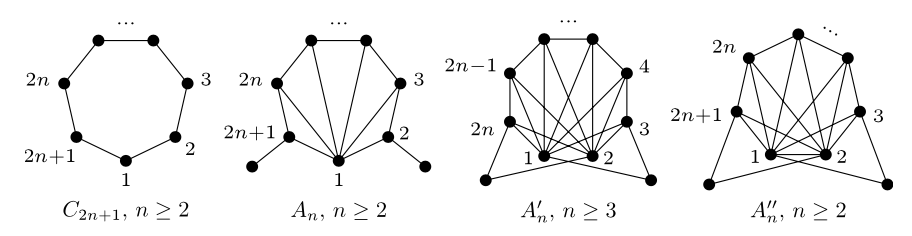}
\caption{Subgrafos inducidos prohibidos para los grafos de comparabilidad.} \label{fig:forb_comparability1_}
\end{figure} 

\begin{figure}[h]
\centering
\includegraphics[scale=.7]{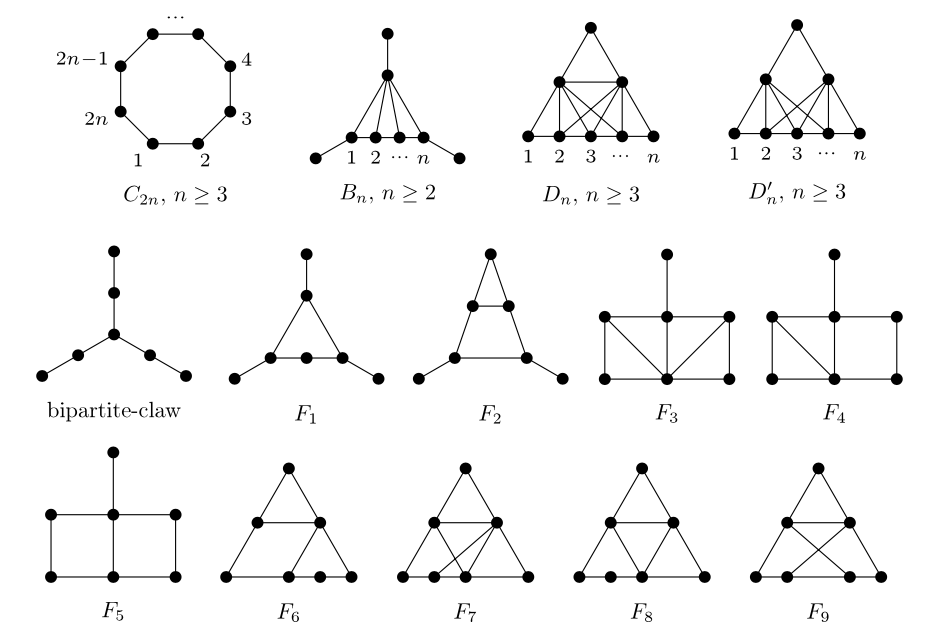}
\caption{Subgrafos inducidos prohibidos para los grafos de comparabilidad.} \label{fig:forb_comparability2_}
\end{figure} 

Esta caracterización de los grafos de comparabilidad induce una caracterización por subgrafos inducidos prohibidos para la clase de los grafos de permutación.
Por lo tanto, como los grafos de permutación son una subclase de los grafos circle, en particular $G$ no es un grafo de permutación. Usando la lista de subgrafos inducidos minimales prohibidos para los grafos de comparabilidad dada en las Figuras \ref{fig:forb_comparability1_} y \ref{fig:forb_comparability2_}  y el hecho de que $G$ también es un grafo split, concluímos que $G$ debe contener como subgrafo inducido un tent, un $4$-tent, un co-$4$-tent o un net (Ver Figura \ref{fig:forb_permsplit_base_}).

\begin{figure}[h]
\centering
\includegraphics[scale=.5]{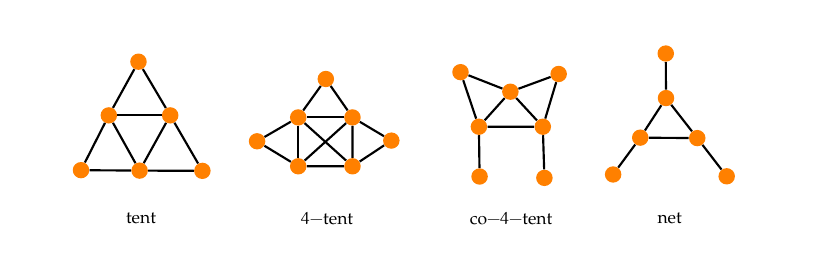}
\caption{Subgrafos inducidos prohibidos para permutación $\cap$ grafos split.
} \label{fig:forb_permsplit_base_}
\end{figure} 

En este capítulo, se define una partición de los conjuntos $K$ y $S$ basada en el hecho de que $G$ contiene un tent, $4$-tent o co-$4$-tent $H$ como subgrafo inducido. En cada caso, la partición de $K$ depende de las adyacencias a los vértices de $V(T) \cap S$, y la partición de $S$ a su vez depende de las adyacencias de cada vértice independiente a las distintas particiones de $K$.
Se prueba que estos subconjuntos efectivamente inducen una partición de $K$ y $S$, respectivamente. 
Estos resultados serán de gran utilidad en el Capítulo 3 para dar motivación a la teoría de matrices desarrollada a lo largo del mismo, y luego en el Capítulo 4, donde se enuncia y se demuestra la caracterización pos subgrafos inducidos prohibidos para grafos split circle.


\selectlanguage{english}%
\chapter{Preliminaries} \label{chapter:partitions}

Let us consider a split graph $G=(K,S)$ and suppose that $G$ is minimally non-circle. Equivalently, any proper induced subgraph of $H$ is circle. If $G$ is not circle, then in particular $G$ is not a permutation graph. 
Permutation graphs are exactly those comparability graphs whose complement graph is also a comparability graph \cite{EPL72}. Comparability graphs have been characterized by forbidden induced subgraphs in \cite{G67}.

\begin{teo}[\cite{G67}]
A graph is a comparability graph if and only if it does not contain as an induced subgraph any graph in Figure \ref{fig:forb_comparability1} and its complement does not contain as an induced subgraph any graph in Figure \ref{fig:forb_comparability2}.
\end{teo}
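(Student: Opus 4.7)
The plan is to approach the characterization via the theory of transitive orientations. A graph $G$ is a comparability graph if and only if its edges admit an orientation such that whenever $(a,b)$ and $(b,c)$ are arcs, $(a,c)$ is also an arc. I would first establish the classical tool of \emph{$\Gamma$-relations}: on the set of arcs (ordered adjacent pairs), declare $(a,b) \mathrel{\Gamma} (a,c)$ whenever $ab, ac \in E(G)$ but $bc \notin E(G)$, and similarly $(b,a) \mathrel{\Gamma} (c,a)$. The reflexive-symmetric-transitive closure $\Gamma^{*}$ partitions the arcs into \emph{implication classes}, and the key lemma (to be proved first) asserts that $G$ is a comparability graph if and only if no implication class contains both $(u,v)$ and $(v,u)$ for some edge $uv$. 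I would also record that comparability graphs are closed under induced subgraphs (simply restrict the transitive orientation) and observe that $G$ contains $F$ as an induced subgraph if and only if $\overline G$ contains $\overline F$ as an induced subgraph; this symmetry is what justifies the two-figure formulation.

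For the forward direction, the task reduces to a finite-by-family verification. For each forbidden graph $F$ in Figure~\ref{fig:forb_comparability1}, I would exhibit an explicit $\Gamma^{*}$-chain $(u,v) = (a_0,b_0), (a_1,b_1), \ldots, (a_m,b_m) = (v,u)$ witnessing that $F$ is not a comparability graph, and for each $F$ in Figure~\ref{fig:forb_comparability2}, do the analogous construction for $\overline F$. The closure properties mentioned above then preclude such subgraphs from appearing in any comparability graph (respectively, in the complement of one). This step is essentially routine, although it must be executed for the entire infinite families represented in the figures.

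The main obstacle is the backward direction. Assuming $G$ is not a comparability graph, one selects an implication class containing both orientations of some edge and takes a minimum-length $\Gamma$-chain $(u,v) = (a_0,b_0) \mathrel{\Gamma} \cdots \mathrel{\Gamma} (a_m,b_m) = (v,u)$ realizing this collision. My plan is to study the induced subgraph on the vertices appearing in this chain: minimality of $m$ enforces strong constraints on which further edges and non-edges may appear between non-consecutive steps, because any ``shortcut'' in the chain would contradict the choice of $m$. A careful case analysis, distinguishing whether consecutive $\Gamma$-steps share a head or a tail and whether the chain is short (producing one of the sporadic small obstructions) or long (producing an infinite family of odd-cycle-type configurations), must then be matched to each entry of Figures~\ref{fig:forb_comparability1} and~\ref{fig:forb_comparability2} (up to passing to $\overline G$). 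Isolating the infinite families from the sporadic exceptions and ensuring that the obstruction produced by the minimal chain coincides \emph{exactly} with one of the listed configurations is the genuinely hard part of the argument, and is the reason Gallai's original 1967 proof is long and intricate.
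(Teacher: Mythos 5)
First, a contextual remark: the thesis does not prove this theorem at all --- it is quoted from Gallai \cite{G67} and used as a black box to conclude that a split graph which is not a permutation graph contains a tent, $4$-tent, co-$4$-tent or net. So there is no in-paper argument to measure your proposal against; the relevant benchmark is Gallai's original proof (or Golumbic's account of it). Your outline does follow that classical route: the $\Gamma$-relation on arcs, implication classes, the TRO theorem ($G$ is a comparability graph if and only if no implication class contains both orientations of an edge), heredity under induced subgraphs, and the complementation symmetry that converts ``$\overline G$ contains $F$'' into ``$G$ contains $\overline F$'' and thereby explains the two-figure formulation. All of these ingredients are correctly identified and correctly stated.

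The genuine gap is that essentially the entire content of the theorem is concentrated in the two steps you defer. In the forward direction, ``exhibit an explicit $\Gamma^{*}$-chain for each $F$'' cannot be carried out graph by graph, because Figures \ref{fig:forb_comparability1} and \ref{fig:forb_comparability2} contain infinite parametrized families ($C_{2n+1}$-type configurations, the $A_n$, $A''_n$, etc.); you need a uniform chain construction for each family, with an induction on the parameter, not a finite check. In the backward direction, the claim that a minimum-length colliding $\Gamma$-chain ``must then be matched to each entry'' of the two figures is exactly the several-dozen-page core of Gallai's argument: minimality of the chain does constrain the edges among its vertices, but showing that the resulting induced subgraph is always one of the \emph{listed} configurations (and never some unlisted graph) requires the full structural case analysis, which in Gallai's treatment is organized through the modular (substitution) decomposition and a separate classification of the prime cases. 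As written, your proposal is a correct road map of the standard proof, but the decisive combinatorial work --- the part that determines which graphs actually appear in the two figures --- is not performed, so the statement is not yet proved.
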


\begin{figure}[h]
\centering
\includegraphics[scale=.75]{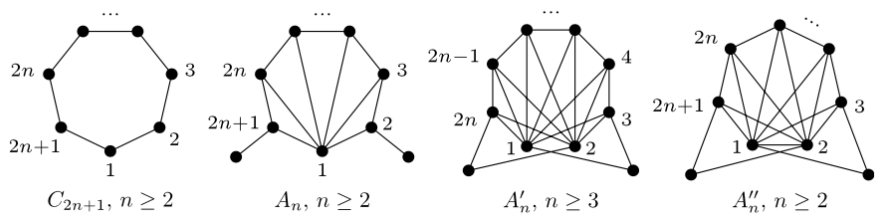}
\caption{Forbidden induced subgraphs for comparability graphs.} \label{fig:forb_comparability1}
\end{figure} 

\begin{figure}[h]
\centering
\includegraphics[scale=.7]{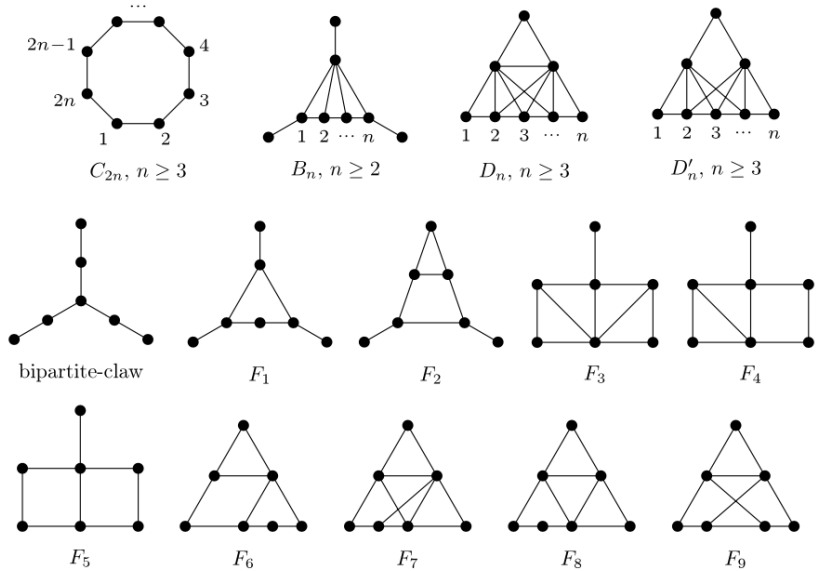}
\caption{Forbidden induced subgraphs for comparability graphs.} \label{fig:forb_comparability2}
\end{figure}

This characterization of comparability graphs leads to a forbidden induced subgraph characterization for the class of permutation graphs.
Hence, since permutation graphs is a subclass of circle graphs, in particular $G$ is not a permutation graph. Using the list of minimal forbidden subgraphs for comparability graphs given in Figures \ref{fig:forb_comparability1} and \ref{fig:forb_comparability2} and the fact that $G$ is also a split graph, we conclude that $G$ contains either a tent, a $4$-tent, a co-$4$-tent or a net as an induced subgraph (See Figure \ref{fig:forb_permsplit_base}). 

\begin{figure}[h]
\centering
\includegraphics[scale=.5]{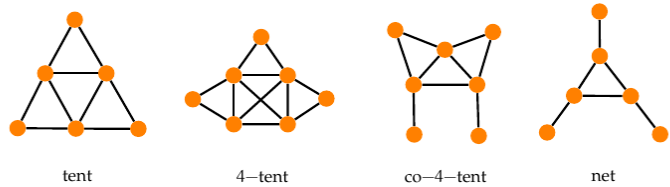}
\caption{Forbidden subgraphs for permutation $\cap$ split graphs.
} \label{fig:forb_permsplit_base}
\end{figure} 

\begin{figure}[h]
\centering
\begin{subfigure}{.328\textwidth}
\centering
\includegraphics[scale=.25]{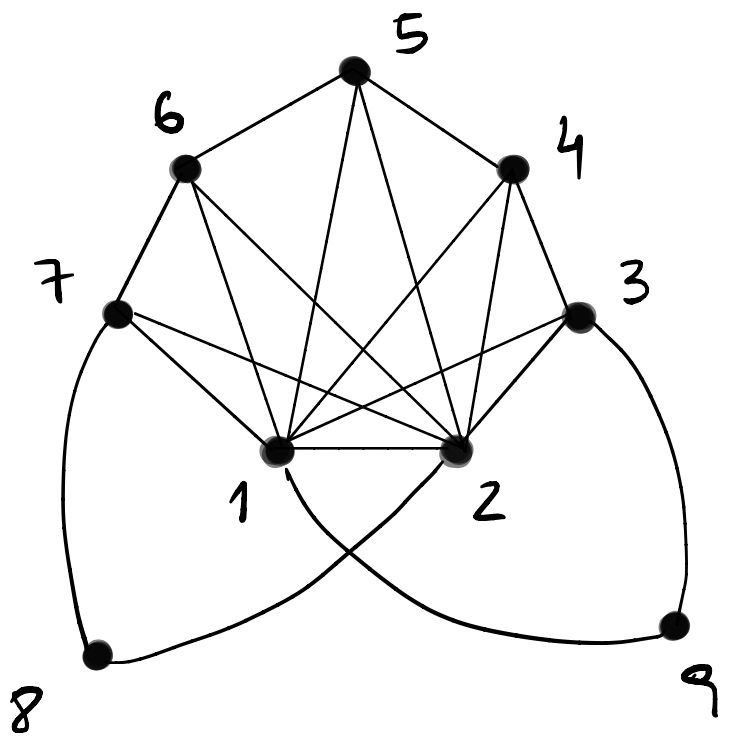}
\caption{The graph $A''_3$}
\end{subfigure}
\begin{subfigure}{.328\textwidth}
\centering
\includegraphics[scale=.25]{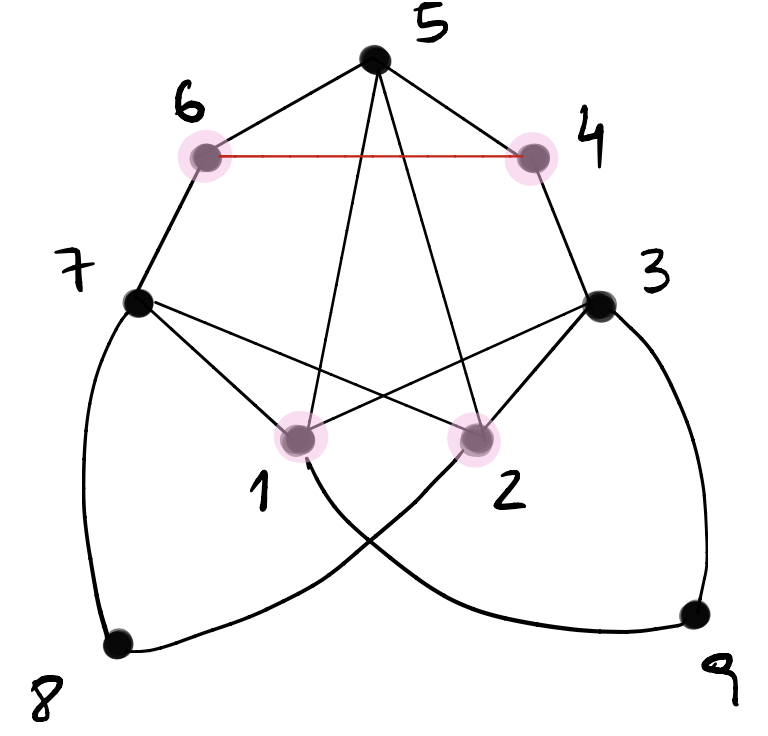}
\caption{Local complementation by $5$}
\end{subfigure}
\begin{subfigure}{.328\textwidth}
\centering
\includegraphics[scale=.25]{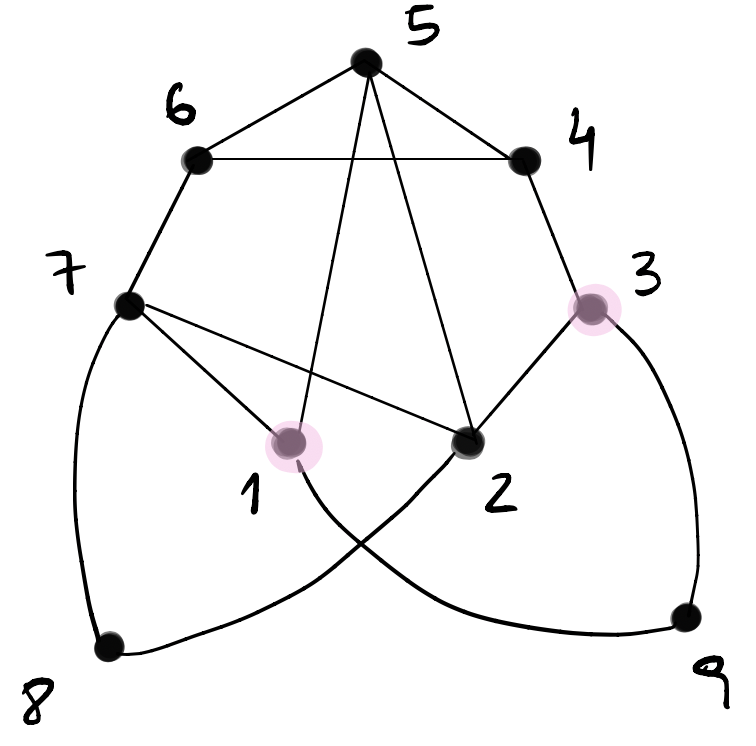}
\caption{Local complementation by $9$}
\end{subfigure} \\
\begin{subfigure}{.327\textwidth}
\centering
\includegraphics[scale=.25]{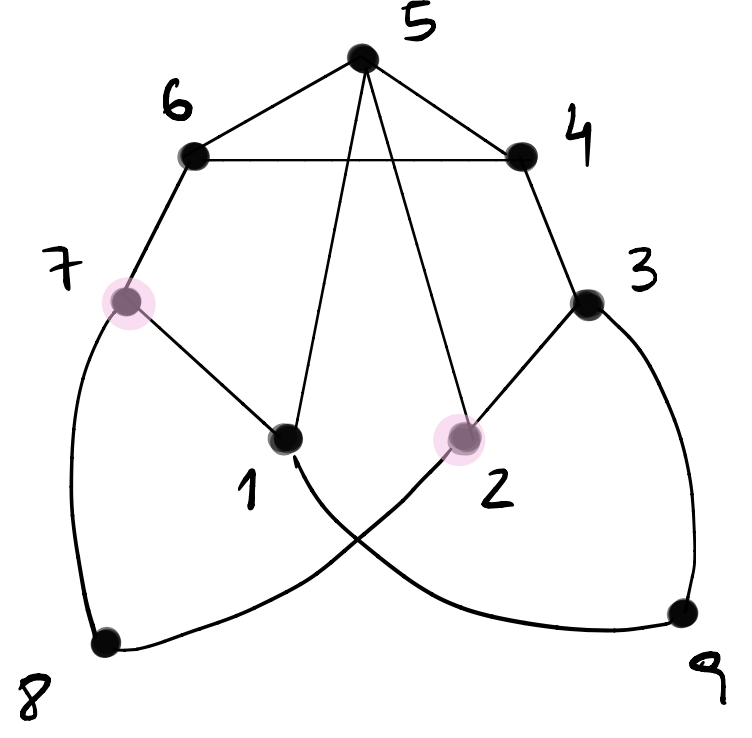}
\caption{Local complementation by $8$}
\end{subfigure} 
\begin{subfigure}{.327\textwidth}
\centering
\includegraphics[scale=.25]{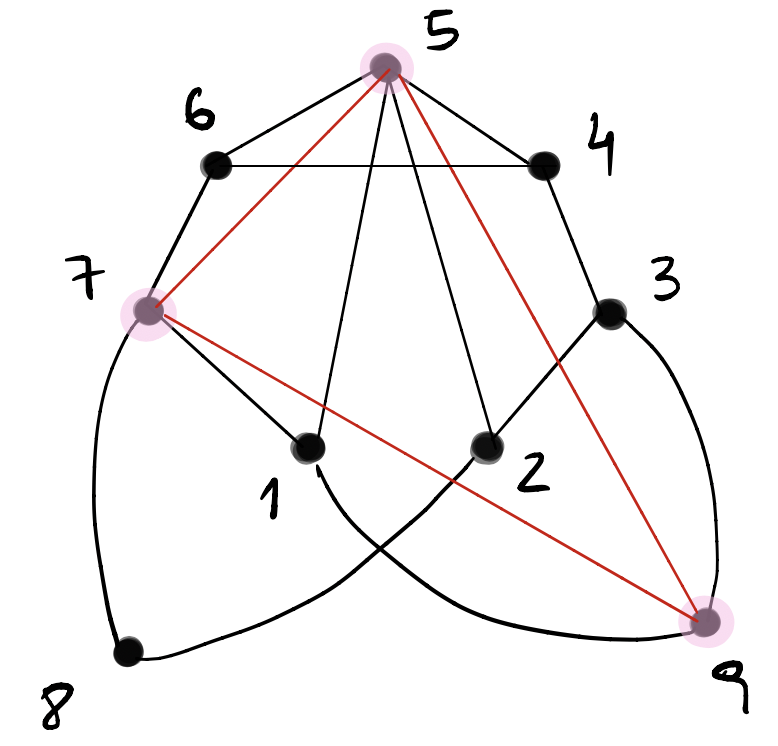}
\caption{Local complementation by $1$}
\end{subfigure} 
\begin{subfigure}{.327\textwidth}
\centering
\includegraphics[scale=.25]{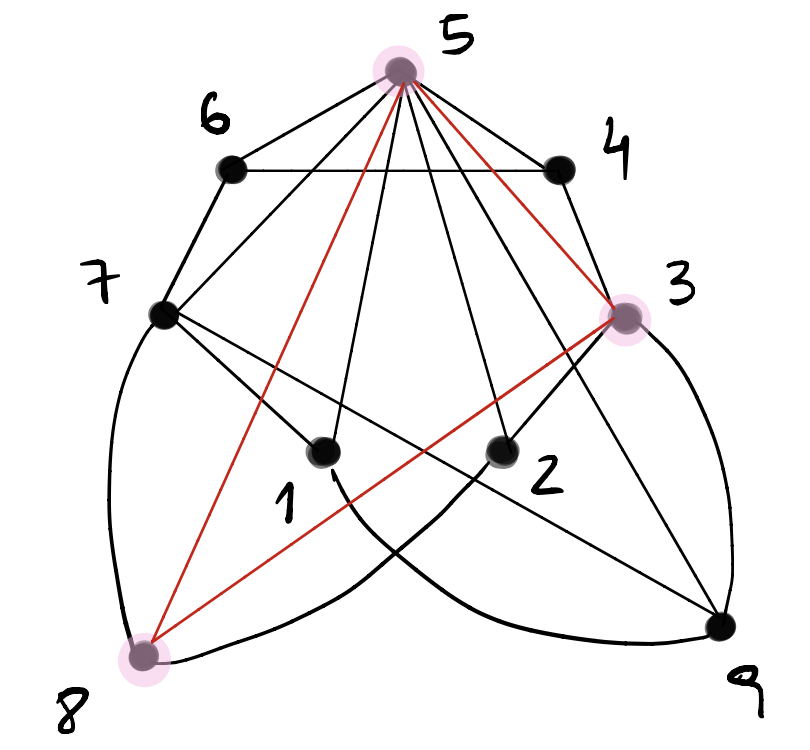}
\caption{Local complementation by $2$}
\end{subfigure}
\caption{Sequence of local complementations applied to $A''_3$.} \label{fig:local_complement_A''3}
\end{figure} 

As previously mentioned, the motivation to study circle graphs restricted to split graphs came from chordal graphs. Remember that split graphs are those chordal graphs for which its complement is also a chordal graph.
Let us consider the graph $A''_n$ for $n = 3$ depicted in Figure \ref{fig:forb_comparability1}. 

This is a chordal graph since $A''_3$ contains no cycles of length greater than $3$. Moreover, it is easy to see that $A''_3$ is not a split graph. This follows from the fact that the maximum clique has size $4$, and the removal of any such clique leaves out a non-independent set of vertices. The same holds for any clique of size smaller than $4$. 
Furthermore, if we apply local complement of the graph sequentially on the vertices $5$, $9$, $8$, $1$ and $2$, then we find $W_5$ induced by the subset $\{ 5$, $3$, $4$, $6$, $7$, $8 \}$. For more detail on this, see Figure \ref{fig:local_complement_A''3}. It follows from the characterization given by Bouchet in \ref{teo:bouchet} that $A''_3$ is not a circle graph.

This shows an example of a graph that is neither circle nor split, but is chordal. In particular, it follows from this example (which is minimally non-circle) that whatever list of forbidden subgraphs found for split circle graphs is not enough to characterize those chordal graphs that are also circle. Therefore, studying split circle graphs is a good first step towards characterizing those chordal graphs that are also circle.

Throughout the following sections, we will define some subsets in both $K$ and $S$ depending on whether $G$ contains an induced tent, $4$-tent or co-$4$-tent $H$ as an induced subgraph. We will prove that these subsets induce a partition of both $K$ and $S$. In each case, the vertices in the complete partition $K$ are split into subsets according to the adjacencies with the independent vertices of $H$, and the vertices in the independent partition $S$ are split into subsets according to the adjacencies with each partition of $K$.
These partitions will be useful in Chapter \ref{chapter:2nested_matrices}, in order to give motivation for the matrix theory developed in that chapter, and in Chapter \ref{chapter:split_circle_graphs}, when we give the proof of the characterization by forbidden induced subgraphs for split circle graphs.
Notice that we do not consider the case in which $G$ contains an induced net in order to define the partitions of $K$ and $S$, for it will be explained in detail in Section \ref{sec:circle5} that this case can be reduced using the cases in which $G$ contains a tent, a $4$-tent and a co-$4$-tent.

In Figures \ref{fig:forb_T_graphs} and \ref{fig:forb_F_graphs}, we define two graph families that will be central throughout the sequel. These graphs are necessary to state the main result of this part, which is the following characterization by forbidden induced subgraphs for those split graphs that are also circle.

\begin{teo}
Let $G=(K,S)$ be a split graph. Then, $G$ is a circle graph if and only if $G$ is $\{ \mathcal{T}, \mathcal{F}\}$-free. 
\end{teo}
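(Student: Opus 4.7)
The plan is to prove both directions separately. For the forward (necessity) direction, I would verify directly that every graph in the explicit lists $\mathcal{T}$ and $\mathcal{F}$ fails to be a circle graph. Since these lists are finite (once organized by the containment type of tent, $4$-tent, co-$4$-tent or net), this can in principle be done by case inspection using any of the known circle-graph obstructions (Bouchet's $W_5, W_7, BW_3$ up to local equivalence, for instance, or directly by the nonexistence of a double occurrence word). This part is essentially bookkeeping and is the easier direction.

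For the converse, I would argue by contradiction: assume $G=(K,S)$ is split, $\{\mathcal{T},\mathcal{F}\}$-free, and not circle; pick such a $G$ minimally. Since permutation graphs are circle and $G$ is not circle, $G$ is not a permutation graph, so by the forbidden-subgraph characterization of comparability (and hence permutation) graphs combined with splitness, $G$ must contain an induced tent, $4$-tent, co-$4$-tent, or net. Each containment is treated as a separate case, in the same spirit as Chapter \ref{chapter:split_circle_graphs} is organized; in each case I would use the subgraph $H$ to define the canonical partitions of $K$ and $S$ introduced in Chapter \ref{chapter:partitions}, and then reduce the existence of a circle model of $G$ to a combinatorial property of the associated bipartite-adjacency matrix $A(S,K)$.

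The core step is the translation to matrices. The partition of $K$ induced by $H$ naturally splits the columns of $A(S,K)$ into blocks, while each independent vertex of $S$ yields a row that must be drawable as a chord compatible with a linear arrangement of the chords representing $K$. Exploiting the structure of $H$, having a circle model for $G$ becomes equivalent to asking that the relevant submatrices of $A(S,K)$ be $2$-nested in the sense of Chapter \ref{chapter:2nested_matrices}, with the two colors corresponding to the two sides of the equatorial chord that $H$ forces into the model. Applying the characterization of $2$-nested matrices by forbidden subconfigurations then yields a concrete obstruction inside $A(S,K)$ whenever $G$ has no circle model; reading this obstruction back as an induced subgraph of $G$ and checking which member of $\mathcal{T}$ or $\mathcal{F}$ it realizes closes the case and contradicts $\{\mathcal{T},\mathcal{F}\}$-freeness. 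The net case is handled by the reduction indicated in Section \ref{sec:circle5}, so it does not need a separate matrix analysis.

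The main obstacle I expect is the matrix-to-graph translation in the forbidden-subconfigurations step: each Tucker-type minimal obstruction for being $2$-nested must be identified with a specific graph in $\mathcal{T}\cup\mathcal{F}$, and one must verify that no obstruction has been missed and that none produces a graph not already listed (otherwise the list would be incomplete). A secondary difficulty is ensuring the partitions of $K$ and $S$ are truly canonical and that the choice of the induced $H$ does not affect the resulting matrix analysis, so that the case split is exhaustive and non-redundant. Once the matrix framework of Chapter \ref{chapter:2nested_matrices} is in place, the remainder is a careful but systematic translation, carried out case by case for tent, $4$-tent and co-$4$-tent.
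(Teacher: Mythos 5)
Your proposal follows essentially the same route as the paper: necessity by showing every member of $\mathcal{T}\cup\mathcal{F}$ is locally equivalent to one of Bouchet's obstructions, and sufficiency by splitting on the induced tent, $4$-tent, co-$4$-tent or net, forming the partitions of $K$ and $S$ from Chapter~\ref{chapter:partitions}, reducing the existence of a circle model to the $2$-nestedness of the associated enriched matrices, and reading the forbidden subconfigurations of Theorem~\ref{teo:2-nested_caract_bymatrices} back as members of $\mathcal{T}\cup\mathcal{F}$, with the net case reduced to the others. One small correction: the families $\mathcal{T}$ and $\mathcal{F}$ are \emph{infinite} (odd suns with center, $F_1(k)$, $F_2(k)$ and the Tucker-matrix graphs for all admissible $k$), so the necessity direction is not a finite case inspection but requires a uniform local-complementation argument for each parametric family, as carried out in the appendix.
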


\begin{figure}[h]
\centering
\includegraphics[scale=.33]{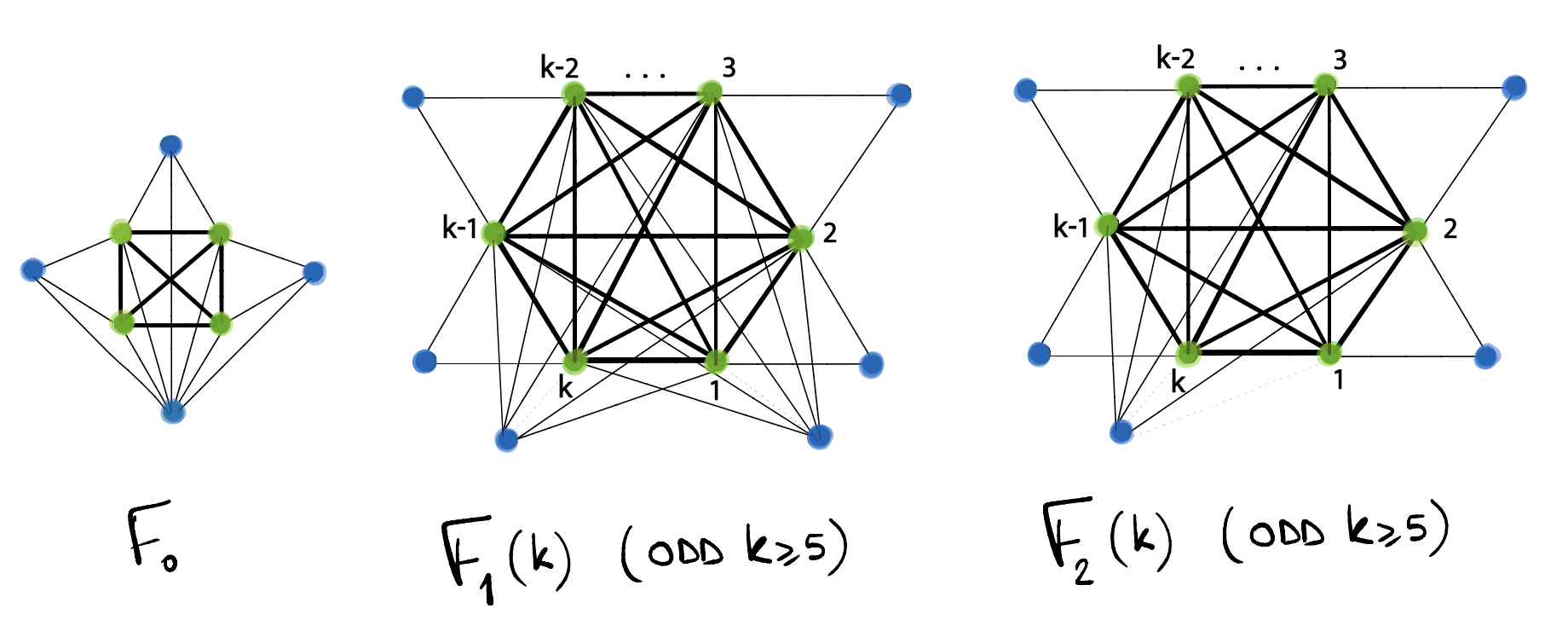} 
\caption{The graphs in the family $\mathcal{F}$.} \label{fig:forb_F_graphs}
\end{figure}

\begin{figure}[h]
\centering
\includegraphics[scale=.35]{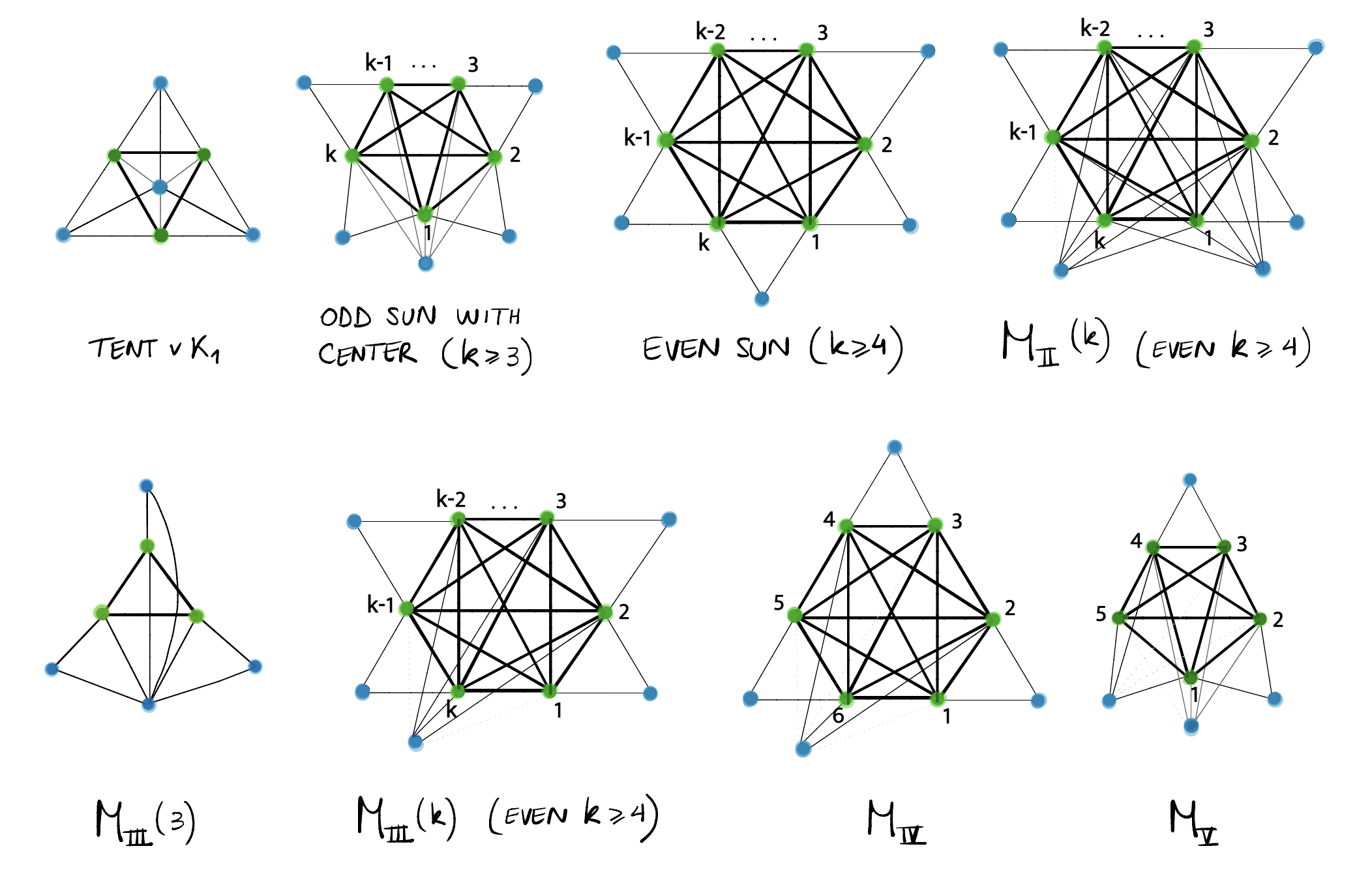} 
\caption{The graphs in the family $\mathcal{T}$.} \label{fig:forb_T_graphs}
\end{figure}

\section{Partitions of S and K for a graph containing an induced tent} \label{sec:tent_partition}

Let $G=(K,S)$ be a split graph where $K$ is a clique and $S$ is an independent set. Let $H$ be an induced subgraph of $G$ isomorphic to a tent. Let $V(T)=\{k_1,$ $k_3,$ $k_5,$ $s_{13},$ $s_{35},$ $s_{51}\}$ where $k_1,$ $k_3,$ $k_5\in K$, $s_{13},$ $s_{35},$ $s_{51}\in S$, and the neighbors of $s_{ij}$ in $H$ are precisely $k_i$ and $k_j$.

We introduce sets $K_1,K_2,\ldots,K_6$ as follows.
\begin{itemize}
 \item For each $i\in\{1,3,5\}$, let $K_i$ be the set of vertices of $K$ whose neighbors in $V(T)\cap S$ are precisely $s_{(i-2)i}$ and $s_{i(i+2)}$ (where subindexes are modulo~$6$).
 \item For each $i\in\{2,4,6\}$, let $K_i$ be the set of vertices of $K$ whose only neighbor in $V(T)\cap S$ is $s_{(i-1)(i+1)}$ (where subindexes are modulo~$6$).
\end{itemize}
See Figure~\ref{fig:tent_ext} for a graphic idea of this. Notice that $K_1$, $K_3$ and $K_5$ are always nonempty sets.
We say a vertex $v$ is \emph{complete to} the set of vertices $X$ if $v$ is adjacent to every vertex in $X$, and we say $v$ is \emph{anticomplete to} $X$ if $v$ has no neighbor in $X$.
We say that $v$ is \emph{adjacent to} $X$ if $v$ has at least one neighbor in $X$. Notice that complete to $X$ implies adjacent to $X$ if and only if $X$ is nonempty.
For $v$ in $S$, let $N_i(v) = N(v) \cap K_i$.
Given two vertices $v_1$ and $v_2$ in $S$, we say that $v_1$ and $v_2$ are \emph{nested} if either $N(v_1) \subseteq N(v_2)$ or $N(v_2) \subseteq N(v_1)$.
In particular, given $i \in \{1, \ldots, 6\}$, if either $N_{i}(v_1) \subseteq N_i(v_2)$ or $N_{i}(v_2) \subseteq N_i(v_1)$, then we say that
$v_1$ and $v_2$ are \emph{nested in} $K_i$.
Additionally, if $N(v_1) \subseteq N(v_2)$, then we say that \emph{$v_1$ is contained in $v_2$}.

\begin{figure}[h!]
	\begin{center}
		\includegraphics[scale=1]{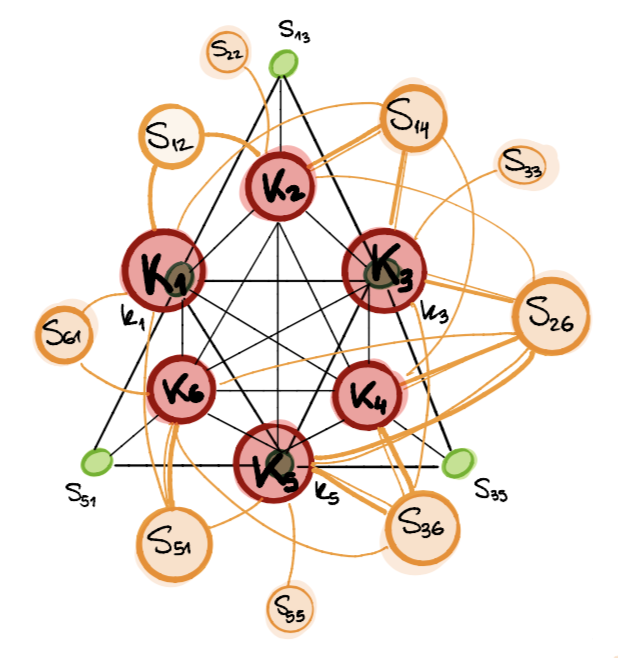}
	\end{center}
	\caption{Tent $H$ and the split graph $G$ according to the given extensions}
	\label{fig:tent_ext}
\end{figure}

\begin{lema} \label{lema:tent_1}
	If $G$ is $\{ \mathcal{T}, \mathcal{F} \}$-free, then $\{K_1,K_2,\ldots,K_6\}$ is a partition of $K$.
\end{lema}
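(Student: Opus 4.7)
The six sets $K_1,\ldots,K_6$ are pairwise disjoint by construction, since each is defined by a distinct pattern of adjacencies to the three independent vertices $s_{13}, s_{35}, s_{51}$ of $H$. The entire argument therefore reduces to showing that every vertex $v\in K$ realizes one of the six allowed patterns, equivalently, that the two remaining patterns in $2^{\{s_{13},s_{35},s_{51}\}}$ cannot occur, namely (a) $v$ is anticomplete to $\{s_{13},s_{35},s_{51}\}$, and (b) $v$ is complete to $\{s_{13},s_{35},s_{51}\}$.

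The plan is to assume, towards a contradiction, that some $v\in K\setminus(K_1\cup\cdots\cup K_6)$ exists and analyze the subgraph induced by $\{v, k_1, k_3, k_5, s_{13}, s_{35}, s_{51}\}$. Since $K$ is a clique, $v$ is adjacent to each of $k_1, k_3, k_5$; combining this with the fixed tent structure on the six base vertices, the isomorphism type of the induced subgraph is determined entirely by the set $N(v)\cap\{s_{13},s_{35},s_{51}\}$. In case (a), the $7$-vertex graph obtained is the tent together with a vertex adjacent exactly to the three complete-side vertices of the tent. In case (b), the $7$-vertex graph is the tent with a ``universal'' extra clique vertex. I would match each of these two configurations with a member of the forbidden family $\mathcal{T}\cup\mathcal{F}$ (as displayed in Figures~\ref{fig:forb_T_graphs} and~\ref{fig:forb_F_graphs}), thereby contradicting the hypothesis that $G$ is $\{\mathcal{T},\mathcal{F}\}$-free.

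The first step is the disjointness observation, which is immediate. The second step is to record that every $v\in K$ satisfies $vk_1, vk_3, vk_5\in E(G)$, which pins down all edges inside the seven chosen vertices except those incident to $v$ and a vertex in $S\cap V(H)$. The third and main step is the case analysis on $N(v)\cap\{s_{13},s_{35},s_{51}\}$: for each of the eight possibilities, either the pattern matches one of $K_1,\ldots,K_6$ (six cases) or it produces an induced subgraph isomorphic to a prescribed forbidden subgraph (two cases).

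The main obstacle is the bookkeeping step of identifying precisely which graphs in $\mathcal{T}$ and $\mathcal{F}$ correspond to cases (a) and (b); once these identifications are made, the contradictions are immediate and the partition follows. The disjointness and case enumeration are otherwise straightforward, so the proof is essentially a verification that the two ``missing'' adjacency patterns around an induced tent are exactly those ruled out by membership of certain small seven-vertex graphs in $\mathcal{T}\cup\mathcal{F}$.
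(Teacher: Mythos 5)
Your proposal is correct and follows essentially the same route as the paper: the paper's proof is the one-line observation that a vertex of $K$ adjacent to none or to all three of $s_{13},s_{35},s_{51}$ yields, together with $\{k_1,k_3,k_5,s_{13},s_{35},s_{51}\}$, an induced $3$-sun with center or a tent${}\vee{}K_1$ respectively, both of which lie in $\mathcal{T}$. Your cases (a) and (b) are exactly these two configurations, so the only thing left to fill in is naming them.
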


\begin{proof}
    Every vertex of $K$ is adjacent to precisely one or two vertices of $V(T) \cap S$, for if not we find either a tent${}\vee{}K_1$ or a $3$-sun with center as induced subgraph of $G$, a contradiction.
\end{proof}

Let $i,j\in\{1,\ldots,6\}$ and let $S_{ij}$ be the set of vertices of $S$ that are adjacent to some vertex in $K_i$ and some vertex in $K_j$, are complete to $K_{i+1}$,$K_{i+2},\ldots,K_{j-1}$, and are anticomplete to $K_{j+1}$,$K_{j+2},\ldots,K_{i-1}$ (where subindexes are modulo~$6$).
The following claims are necessary to prove Lemma~\ref{lema:tent_2}, that states, on the one hand, which sets of $\{S_{ij}\}_{i,j \in \{1, \ldots, 6\}}$ may be nonempty, and, on the other hand, that the sets $\{S_{ij}\}_{i,j \in \{1, \ldots, 6\}}$ indeed induce a partition of $S$.
This shows that the adjacencies of a vertex of $S$ have a circular structure with respect to the defined partition of $K$.

	\begin{claim} \label{claim:tent_1}
		If $G$ is $\{ \mathcal{T}, \mathcal{F} \}$-free, then there is no vertex $v$ in $S$ such that $v$ is simultaneously adjacent to $K_1$, $K_3$ and $K_5$.
        Moreover, there is no vertex $v$ in $S$ adjacent to $K_2$, $K_4$ and $K_6$ such that $v$ is anticomplete to any two of $K_j$, for $j \in \{1, 3, 5 \}$.
	\end{claim}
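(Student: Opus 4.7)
The plan is to prove each statement by contradiction, in each case exhibiting an induced subgraph of $G$ that lies in $\mathcal{T}\cup\mathcal{F}$.

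For the first statement, I would suppose some $v\in S$ has a neighbor $w_i\in K_i$ for each $i\in\{1,3,5\}$. Since $w_1,w_3,w_5\in K$, they form a triangle, and from the definition of $K_i$ for odd $i$, each $w_i$ is adjacent in $V(T)\cap S$ precisely to $s_{(i-2)i}$ and $s_{i(i+2)}$ (indices modulo~$6$), exactly as $k_i$ is. Hence $\{w_1,w_3,w_5,s_{13},s_{35},s_{51}\}$ induces a tent. Adjoining $v$, which is anticomplete to $\{s_{13},s_{35},s_{51}\}$ because $v\in S$, and complete to $\{w_1,w_3,w_5\}$ by assumption, gives an induced $3$-sun-with-center, a graph in $\mathcal{T}$ (as already noted in the proof of Lemma~\ref{lema:tent_1}). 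This contradicts the $\{\mathcal{T},\mathcal{F}\}$-freeness of $G$.

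For the second statement, I would proceed again by contradiction. Assume $v\in S$ is adjacent to each of $K_2,K_4,K_6$ and anticomplete to two of $K_1,K_3,K_5$; by the cyclic symmetry of $H$ (rotating the triples $(k_1,k_3,k_5)$ and $(K_2,K_4,K_6)$ consistently) we may assume that $v$ is anticomplete to $K_1\cup K_3$. Fix $w_i\in N(v)\cap K_i$ for $i\in\{2,4,6\}$: then $\{v,w_2,w_4,w_6\}$ induces a $K_4$, and by the definition of $K_i$ for even $i$, each $s_{ij}$ has a unique neighbor in $\{w_2,w_4,w_6\}$, namely $w_2,w_4,w_6$ for $s_{13},s_{35},s_{51}$ respectively.

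The hard part will be the identification of the forbidden induced subgraph in this second case. The plan is to split into subcases according to whether $v$ is also anticomplete to $K_5$ or has a neighbor $w_5\in K_5$, and in each subcase enlarge $\{v,w_2,w_4,w_6,s_{13},s_{35},s_{51}\}$ by one or two vertices among $k_1,k_3,k_5$ (with $k_1,k_3$ non-adjacent to $v$ by hypothesis) so as to match an explicit graph in $\mathcal{T}\cup\mathcal{F}$. The main obstacle is the bookkeeping: the precise forbidden graph depends on the subcase, and the isomorphism check requires careful tracking of the adjacencies among the added $k_i$'s, the $w_i$'s and the $s_{ij}$'s, as prescribed by the tent structure of $H$.
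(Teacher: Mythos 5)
Your proof of the first assertion is correct and is exactly the paper's argument: the chosen neighbours $w_1,w_3,w_5$ reproduce the tent, and $v$ becomes the center of a $3$-sun-with-center, a graph in $\mathcal{T}$. Your symmetry reduction for the second assertion is also legitimate, since the cyclic relabelling $k_1\to k_3\to k_5\to k_1$, $s_{13}\to s_{35}\to s_{51}\to s_{13}$ permutes the sets $K_1,\dots,K_6$ compatibly and permutes the three pairs $\{1,3\},\{3,5\},\{5,1\}$ transitively.

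The gap is in the second assertion, precisely at the point you defer as ``bookkeeping'': the forbidden subgraph is never identified, and the construction you propose does not land on one. If you keep all of $w_2,w_4,w_6$ and adjoin $k_1,k_3$, the vertex $s_{13}$ acquires \emph{three} neighbours in the resulting clique ($w_2$, $k_1$ and $k_3$), so the induced graph is not a sun and is not isomorphic to any member of $\mathcal{T}\cup\mathcal{F}$ (it only properly contains one). The missing idea is to \emph{discard} the $w_i$ whose private neighbour among $\{s_{13},s_{35},s_{51}\}$ is already covered by both added clique vertices: with $v$ anticomplete to $K_1\cup K_3$ one drops $w_2$ and checks that $\{k_1,k_3,w_4,w_6\}$ is a $4$-clique whose consecutive edges (in the cyclic order $k_1,k_3,w_4,w_6$) are covered exactly by the petals $s_{13},s_{35},v,s_{51}$ respectively, i.e.\ a $4$-sun, whose $A(S,K)$ matrix is the Tucker matrix $M_I(4)$ and which therefore lies in $\mathcal{T}$. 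This is the paper's argument (stated there for all three pairs without the symmetry reduction). Note also that your proposed subcase split on whether $v$ meets $K_5$ is unnecessary: the $4$-sun above does not involve $K_5$ at all, so the adjacency of $v$ to $K_5$ is irrelevant.
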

	
	 Let $v$ in $S$ and let $w_i$ in $K_i$ for each $i\in\{1,3,5\}$, such that $v$ is adjacent to each $w_i$. Hence, $\{w_1,$ $w_3,$ $w_5,$ $s_{13},$ $s_{35},$ $s_{51},$ $v\}$ induce in $G$ a $3$-sun with center, a contradiction.

    To prove the second statement, let $w_i$ in $K_i$ such that $v$ is adjacent to $w_i$ for every $i\in\{2,4,6\}$. Suppose that $v$ is anticomplete to $K_3$ and $K_5$. Thus, we find a $4$-sun induced by the set $\{ w_2$, $k_3$, $k_5$, $w_6$, $s_{13}$, $s_{35}$, $s_{51}$, $v \}$.
    If instead $v$ is anticomplete to $K_1$ and $K_3$, then we find a $4$-sun induced by $\{ k_1$, $k_3$, $w_4$, $w_6$, $s_{13}$, $s_{35}$, $s_{51}$, $v \}$, and if $v$ is anticomplete to $K_1$ and $K_5$, then a $4$-sun is induced by $\{ k_1$, $w_2$, $w_4$, $k_5$, $s_{13}$, $s_{35}$, $s_{51}$, $v \}$. \QED
    
	\begin{claim} \label{claim:tent_2}
		If $G$ is $\{ \mathcal{T}, \mathcal{F} \}$-free and $v$ in $S$ is adjacent to $K_i$ and $K_{i+3}$, then $v$ is complete to $K_j$, either for $j \in \{ i+1, i+2 \}$ or for $j \in \{ i-1, i-2\}$.
    \end{claim}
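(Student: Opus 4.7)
The plan is to proceed by contradiction, exploiting the threefold rotational symmetry of the tent $H$ to reduce to a canonical choice of index $i$, and then splitting the failure of the conclusion into four subcases, each of which produces a forbidden configuration from $\mathcal{T}\cup\mathcal{F}$.

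First I would use the symmetry of $H$ (the cyclic rotation $k_j\mapsto k_{j+2}$, $s_{ij}\mapsto s_{(i+2)(j+2)}$, with indices modulo~$6$, which permutes the sets $K_1,\ldots,K_6$ cyclically by two positions) to reduce to the case $i=1$. So assume $v\in S$ is adjacent to some $w_1\in K_1$ and some $w_4\in K_4$. Suppose for contradiction that the conclusion fails: that is, $v$ is not complete to some $K_a$ with $a\in\{2,3\}$ and, simultaneously, not complete to some $K_b$ with $b\in\{5,6\}$. Pick witnesses $u_a\in K_a$, $u_b\in K_b$ with $vu_a,vu_b\notin E(G)$.

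Next, in each of the four subcases $(a,b)\in\{2,3\}\times\{5,6\}$ I would examine the subgraph of $G$ induced by the set
\[
X(a,b)=\{v,\ s_{13},\ s_{35},\ s_{51},\ w_1,\ w_4,\ u_a,\ u_b\},
\]
completed, when it is necessary to restore the original tent, by the vertices $k_3$ and/or $k_5$ of $H$ (distinct from $u_a,u_b$ whenever $a\ne 3$ or $b\ne 5$ respectively). Because $K$ is a clique, all $K$-adjacencies inside the induced subgraph are forced; the $S$-to-$K$ adjacencies are read directly from the definitions of $K_1,\ldots,K_6$ and from the chosen non-adjacencies of $v$. In each subcase this determines the induced subgraph completely, and a direct comparison against the labelled members of $\mathcal{T}\cup\mathcal{F}$ (together with the $3$-sun with center and the $4$-sun already exploited in the proof of Lemma~\ref{lema:tent_1} and Claim~\ref{claim:tent_1}) exhibits the required forbidden induced subgraph, giving the contradiction.

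The main obstacle I anticipate is the subcase $(a,b)=(3,5)$, where $u_a\in K_3$ and $u_b\in K_5$ play roles structurally similar to $k_3$ and $k_5$ of the original tent $H$. Here one must be careful about whether $u_a=k_3$ or $u_b=k_5$, and in particular to choose the extension of $X(a,b)$ so that the resulting configuration is a genuine member of $\mathcal{T}\cup\mathcal{F}$ rather than a proper induced subgraph of one; the other three subcases $(2,5),(2,6),(3,6)$ are, by contrast, direct matches to extended sun configurations arising naturally from the interplay of $v$ with the tent. Once this bookkeeping is settled, the four contradictions together establish the claim.
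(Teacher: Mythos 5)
Your strategy is sound and genuinely different from the paper's. The paper reduces to $i$ even, first shows (via an induced $M_{III}(3)$) that $v$ must be complete to $K_{i-1}$ or $K_{i+1}$, and then, in the remaining case, invokes Claim~\ref{claim:tent_1} to deduce that $v$ is anticomplete to $K_{i-1}$ before exhibiting a second $M_{III}(3)$ built from the tent vertices $k_{i-1},k_{i+1}$. Your proof instead negates the conclusion directly and runs a four-way case analysis on the witnesses $u_a,u_b$; this avoids Claim~\ref{claim:tent_1} entirely, and in fact the four cases collapse into one uniform configuration: for $i=1$, the set $\{v,s_{13},s_{51},w_1,w_4,u_a,u_b\}$ always induces a net${}\vee{}K_1$ (equivalently $M_{III}(3)$), with $w_1$ as the universal vertex and $w_4$, $u_a$, $u_b$ as the private neighbours of $v$, $s_{13}$, $s_{51}$ respectively --- this works because $u_a\in K_2\cup K_3\subseteq N(s_{13})\setminus N(s_{51})$, $u_b\in K_5\cup K_6\subseteq N(s_{51})\setminus N(s_{13})$, and $w_4\in K_4$ misses both $s_{13}$ and $s_{51}$. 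Two corrections to your execution, though. First, your assertion that the induced subgraph is ``completely determined'' fails as soon as you adjoin $k_3$ or $k_5$: the adjacency of $v$ to those tent vertices is not fixed by your hypotheses, so the extension you propose for the subcase $(3,5)$ would leave you with an undetermined graph. The fix is to never adjoin $k_3,k_5$ and instead discard $s_{35}$ from $X(a,b)$; the remaining seven vertices suffice in every subcase, and the anticipated difficulty at $(a,b)=(3,5)$ evaporates (it is immaterial whether $u_3=k_3$ or $u_5=k_5$). Second, note that the bare $3$-sun ($M_I(3)$, i.e.\ the tent itself) that appears inside some of your $X(a,b)$ is \emph{not} forbidden --- only odd suns \emph{with center} are in $\mathcal{T}$ --- so the ``direct comparison'' must target the net${}\vee{}K_1$ pattern above rather than any Tucker submatrix that happens to appear.
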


    Let $w_i$ in $K_i$, $w_{i+3}$ in $K_{i+3}$ such that $v$ is adjacent to $w_i$ and $w_{i+3}$. Notice that the statement is exactly the same for $i = j$ and for $i = j+3$, so let us assume that $i$ is even.

    If $w_j$ in $K_j$ is a non-neighbor of $v$ for each $j \in \{ i-1, i+1 \}$, then we find an induced $M_{III}(3)$. Hence, $v$ is complete to $K_j$ for at least one of $j \in \{i-1, i+1 \}$.
    Suppose that $v$ is complete to $K_{i+1}$. If $K_{i+2} = \emptyset$, then the claim holds. Hence, suppose that $K_{i+2} \neq \emptyset$ and suppose $w_{i+2}$ in $K_{i+2}$ is a non-neighbor of $v$. In particular, since $v$ is adjacent to $w_{i+3}$ and $k_{i+1}$, then $v$ is anticomplete to $K_{i-1}$ by Claim~\ref{claim:tent_1}. However, in this case we find $M_{III}(3)$ induced by $\{s_{(i-1)(i+3)}$, $s_{(i+3)(i-1)}$, $v$, $k_{i-1}$, $k_{i+1}$, $w_{i+2}$, $w_{i+3} \}$.
    It follows analogously if instead $v$ is complete to $K_{i-1}$ and is not complete to $K_{i-2}$, for we find the same induced subgraphs. Notice that the proof is independent on whether $K_j = \emptyset$ or not, for every even $j$. \QED
	
	\begin{claim} \label{claim:tent_3}
		If $G$ is $\{ \mathcal{T}, \mathcal{F} \}$-free and $v$ in $S$ is adjacent to $K_i$ and $K_{i+2}$, then either $v$ is complete to $K_{i+1}$, or $v$ is complete to $K_j$ for $j \in \{ i-1, i-2, i-3 \}$.
	\end{claim}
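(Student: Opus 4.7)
My plan is to argue by contradiction. Assume $v\in S$ is adjacent to $K_i$ and $K_{i+2}$ with witnesses $w_i\in K_i$ and $w_{i+2}\in K_{i+2}$, but that $v$ is not complete to $K_{i+1}$ (choose a non-neighbour $w_{i+1}\in K_{i+1}$) and $v$ is not complete to some $K_j$ with $j\in\{i-1,i-2,i-3\}$ (choose a non-neighbour $w_j\in K_j$). The goal is to exhibit a forbidden induced subgraph from $\mathcal{T}\cup\mathcal{F}$.

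The first reduction is a parity split. When $i$ is odd, $i-2\equiv i+4\pmod 6$, so $K_{i-2}$ contains the tent vertex $k_{i+4}$, and Claim~\ref{claim:tent_1} applied to $v\sim K_i,K_{i+2}$ forces $v\not\sim k_{i+4}$. Thus the second alternative of the conclusion is automatically unavailable, and it suffices to show that $v$ must be complete to $K_{i+1}$. In this case I would look at
\[
U=\{v,\,s_{i(i+2)},\,s_{(i+2)(i+4)},\,s_{(i+4)i},\,w_i,\,w_{i+1},\,w_{i+2},\,k_{i+4}\}:
\]
the six vertices $U\setminus\{v,w_{i+1}\}$ already induce the tent $H$, and $v$ together with $w_{i+1}$ attach to it through the prescribed non-adjacencies; a direct verification would then identify the resulting split graph as a member of $\mathcal{T}$.

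When $i$ is even, $K_{i-2}$ contains no tent vertex, and a three-way subcase on $j$ is required. For $j=i-1$ I would work with $\{v,w_{i-1},w_i,w_{i+1},w_{i+2},s_{(i-1)(i+1)},s_{(i+1)(i+3)},k_{i+3}\}$ and identify, on its biadjacency matrix, a Tucker subconfiguration of type $M_{III}(3)$; together with the tent edges on the clique side this promotes to a graph of $\mathcal{F}$. The case $j=i-3$ is symmetric under the reflection of the tent that swaps $K_{i-1}\leftrightarrow K_{i+3}$ and $s_{(i-1)(i+1)}\leftrightarrow s_{(i+1)(i+3)}$. The case $j=i-2$ reduces to one of the previous two: combined with $v\not\sim w_{i+1}$, the hypothesis $v\not\sim w_{i-2}$ together with Claim~\ref{claim:tent_1} forces a non-neighbour of $v$ inside $K_{i-1}$ or $K_{i-3}$.

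The main obstacle is the explicit identification of the forbidden subgraph in each subcase: a bare $3$-sun or an isolated $M_I(3)$ pattern is itself a circle graph, so the verification has to carry enough of the tent vertices together with $w_{i+1}$ and $w_j$ to single out a specific member of $\mathcal{T}\cup\mathcal{F}$. Index bookkeeping modulo $6$ and repeated appeals to Claims~\ref{claim:tent_1} and~\ref{claim:tent_2} to eliminate degenerate placements then complete the proof, along the same lines as the two previous claims.
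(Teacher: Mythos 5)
Your odd-$i$ case is essentially the paper's argument and is fine: Claim~\ref{claim:tent_1} makes $v$ anticomplete to $K_{i+4}$, and a non-neighbour $w_{i+1}$ then produces the $3$-sun with center on the seven vertices $\{s_{(i+2)(i+4)},\,s_{(i+4)i},\,v,\,w_i,\,w_{i+2},\,k_{i+4},\,w_{i+1}\}$. Your extra vertex $s_{i(i+2)}$ should be dropped, since the eight-vertex graph is not itself a member of $\mathcal{T}$; only that seven-vertex subset is.

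The even case has a genuine gap. The second disjunct of the claim is existential in $j$ (see the remark that follows it: a vertex adjacent to $K_i$ and $K_{i+2}$ for even $i$ is ``complete to either $K_{i-1}$ or $K_{i+3}$''), so its negation hands you a non-neighbour of $v$ in \emph{each} of $K_{i-1}$, $K_{i-2}$, $K_{i-3}$ simultaneously; your setup ``not complete to \emph{some} $K_j$'' followed by three independent subcases, each using a single $w_j$, does not reflect this, and the single-$w_j$ subcases cannot be closed as you describe. Concretely, for $j=i-1$ the hypotheses $v\not\sim w_{i+1}$, $v\not\sim w_{i-1}$ place no forbidden subgraph on the set you name: the rows $v$, $s_{(i-1)(i+1)}$, $s_{(i+1)(i+3)}$ restricted to $\{w_{i-1},w_i,w_{i+1},w_{i+2},k_{i+3}\}$ contain no $M_{III}(3)$ (the unique $3\times 4$ submatrix with row sums $(2,2,2)$ has the consecutive-ones property), and the paper's $M_{III}(3)$ in this claim in fact arises under the \emph{opposite} assumption that $v$ is complete to $K_{i-1}$. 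That subcase can only be closed indirectly: either $v$ also misses a vertex of $K_{i+3}$, in which case the four non-tent-edges combine with $w_i,w_{i+2}$ and the three tent vertices $s_{(i-1)(i+1)},s_{(i+1)(i+3)},s_{(i+3)(i-1)}$ into a $4$-sun, or $v$ is complete to $K_{i+3}$ and Claim~\ref{claim:tent_2} applied to the $K_i$--$K_{i+3}$ adjacency forces $v$ complete to $K_{i-1}$, contradicting $v\not\sim w_{i-1}$. Likewise your reduction of $j=i-2$ via Claim~\ref{claim:tent_1} does not follow: failing to be complete to $K_{i-2}$ gives neither adjacency to $K_{i-2}$ nor anticompleteness to two of the odd-indexed sets, which is what that claim requires. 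The clean route, which is the paper's, is to use the non-neighbours in $K_{i-1}$ and $K_{i+3}$ \emph{together} to build the $4$-sun; that single configuration already proves the claim for even $i$, and no case analysis on $j$ is needed.
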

	
	Once more, we assume without loss of generality that $K_j$ is nonempty, for all $j \in \{1, \ldots, 6\}$. 
	Given the simmetry of the odd-indexed and even-indexed sets $K_j$, we may also separate in two cases without losing generality: if $v$ is adjacent to $K_1$ and $K_3$ and if $v$ is adjacent to $K_2$ and $K_4$.
	
	Suppose first that $v$ is adjacent to $K_1$ and $K_3$. By Claim \ref{claim:tent_1}, $v$ is anticomplete to $K_5$. 
	If $v$ is nonadjacent to some vertex $w_2$ in $K_2$, then the set $\{ s_{35}$, $v$,  $s_{51}$, $w_1$, $w_{3}$, $w_{5}$, $w_{2} \}$ induces a tent with center. Hence, $v$ is complete to $K_2$.
	
	Suppose now that $v$ is adjacent to $K_2$ and $K_4$.
	First, notice that $v$ is complete to either $K_1$ or $K_5$, for if not we find a $4$-sun induced by $\{ s_{13}$, $s_{51}$, $s_{35}$, $v$, $w_2$, $w_1$, $w_5$, $w_4 \}$.
	Suppose that $v$ is complete to $K_1$. If $v$ is not complete to $K_3$, then $v$ is complete to $K_5$ and $K_6$, for if not there is $M_{III}(3)$ induced by $\{ s_{13}$, $s_{51}$, $v$, $w_1$, $w_3$, $w_4$, $w_j \}$ for $j=5,6$. \QED
	
	\begin{remark}
		As a consequence of the previous claims we also proved that, If $G$ is $\{ \mathcal{T}, \mathcal{F} \}$-free, then:
		\begin{itemize}
			\item For each $i \in \{1, 2, \ldots, 6 \}$, the sets $S_{i,i-2}$ are empty, for if not, there is a vertex $v$ in $S$ such that $v$ is adjacent to $K_1$, $K_3$ and $K_5$ (Claim \ref{claim:tent_1}). Moreover, the same holds for $S_{i(i-2)}$, for each $i \in \{1, 3, 5\}$.
			\item For each $i \in \{2, 4, 6 \}$, the sets $S_{i(i+2)}$ are empty since every vertex $v$ in $S$ such that $v$ is adjacent to $K_i$ and $K_{i+2}$ is necessarily complete to either $K_{i-1}$ or $K_{i+3}$ (Claim \ref{claim:tent_3}).
		\end{itemize}
	\end{remark}			
		
	\begin{claim} \label{claim:tent_4}
		If $G$ is $\{ \mathcal{T}, \mathcal{F} \}$-free, then for each $i \in \{1, 3, 5\}$, every vertex in $S_{i(i+3)} \cup S_{(i+3)i}$ is complete to $K_i$.
	\end{claim}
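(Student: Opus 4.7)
The plan is to proceed by contradiction, exploiting the symmetries of the tent. The cyclic rotation of the indices in $\{1,3,5\}$ carries each $K_j$ and $S_{jk}$ to the analogous set in the cyclic structure of the partition, and the involution that swaps $S_{i(i+3)}$ with $S_{(i+3)i}$ is an automorphism of both the hypothesis and the conclusion. Hence it suffices to prove the claim for $i=1$ and $v\in S_{1,4}$. So assume this, and suppose for contradiction that $v$ is not complete to $K_1$: choose $w_1\in K_1$ with $v\sim w_1$, $w_1'\in K_1$ with $v\nsim w_1'$, and $w_4\in K_4$ with $v\sim w_4$. From the definition of $S_{1,4}$, $v$ is complete to $K_2\cup K_3$ and anticomplete to $K_5\cup K_6$; in particular $v\sim k_3$ and $v\nsim k_5$.

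A key observation is that $H'=\{w_1',k_3,k_5,s_{13},s_{35},s_{51}\}$ is again an induced tent of $G$, because every vertex of $K_1$ has exactly the same adjacencies with $V(T)\cap S$ as $k_1$. Relative to $H'$, the vertex $v$ is non-adjacent to the vertex playing the role of $k_1$ (namely $w_1'$), adjacent to $k_3$, non-adjacent to $k_5$, and still adjacent to some vertex of $K_4$. The plan is then to exhibit an explicit induced subgraph of $G$ on a subset of $\{w_1,w_1',k_3,k_5,w_4,s_{13},s_{35},s_{51},v\}$ that is isomorphic to a member of the forbidden families $\mathcal{T}\cup\mathcal{F}$ depicted in Figures~\ref{fig:forb_T_graphs} and~\ref{fig:forb_F_graphs}, producing the sought contradiction.

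The main obstacle is pinpointing the precise forbidden subgraph. A purely Tucker-matrix attack on the $S\times K$-incidence is not on its own enough, since the rows $\{s_{13},s_{35},s_{51}\}$ with the columns $\{k_1,k_3,k_5\}$ already induce $M_I(3)$ just because $H$ is a tent, and this in itself is not a contradiction. The obstruction we need must involve both $v$ and the non-neighbor $w_1'$ in an essential way: the ``duplication'' of $K_1$ into $\{w_1,w_1'\}$, combined with the adjacency pattern of $v$ (complete to $K_2\cup K_3$, partial to $K_1$ and $K_4$, anticomplete to $K_5\cup K_6$), is expected to match a specific $7$- to $9$-vertex split graph listed in $\mathcal{T}\cup\mathcal{F}$; the delicate step is to verify this isomorphism explicitly, which is the crux of the argument.
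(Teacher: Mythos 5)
Your setup is correct and matches the paper's: reduce to $i=1$, $v\in S_{14}$ by symmetry, pick $w_1,w_1'\in K_1$ with $v\sim w_1$, $v\nsim w_1'$, and $w_4\in K_4$ with $v\sim w_4$, and look for a forbidden configuration inside $\{w_1,w_1',k_3,k_5,w_4,s_{13},s_{35},s_{51},v\}$. But you stop exactly where the proof actually happens: you say the crux is ``to verify this isomorphism explicitly'' and never do it. As written, the proposal is an outline with the decisive step missing, so there is a genuine gap.

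The missing identification is short. Order the clique vertices as $w_1',\,w_1,\,k_3,\,w_4,\,k_5$ and take the three independent vertices $s_{13},\,v,\,s_{35}$. Their neighborhoods in this $5$-clique are, respectively, $\{w_1',w_1,k_3\}$, $\{w_1,k_3,w_4\}$ and $\{k_3,w_4,k_5\}$ (using that $v$ is complete to $K_3$ and anticomplete to $K_5$ by the definition of $S_{14}$, and that every vertex of $K_4$ is nonadjacent to $s_{13}$ and adjacent to $s_{35}$). These are three consecutive windows of length three in a clique of size five, i.e.\ the adjacency matrix is exactly $F_0$ (Figure \ref{fig:forb_F}), whose associated split graph lies in $\mathcal{F}$ (Figure \ref{fig:forb_F_graphs}); this is the contradiction. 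Note also that $s_{51}$ plays no role in the $S_{14}$ case (it is the witness used instead of $s_{13}$ in the symmetric $S_{41}$ case), and your remark about $H'=\{w_1',k_3,k_5,\ldots\}$ being another induced tent, while true, is not needed: the obstruction uses both $w_1$ and $w_1'$ simultaneously as two clique columns, not a re-based tent.
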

	
	   We will prove this claim without loss of generality for $i = 1$.

    Let $v$ in $S_{14}$. By definition, $v$ is adjacent to $k_3$ and nonadjacent to $k_5$. Towards a contradiction, let $w_{11}$ and $w_{12}$ in $K_1$ such that $v$ is nonadjacent to $w_{11}$ and $v$ is adjacent to $w_{12}$, and let $w_4$ in $K_4$ such that $v$ is adjacent to $w_4$. In this case, we find $F_0$ induced by the set $\{ s_{13}$, $s_{35}$, $v$, $w_{11}$, $w_{12}$, $k_3$, $w_4$, $k_5 \}$.

    Analogously, if $v$ is in $S_{41}$, then $F_0$ is induced by $\{ s_{35}$, $s_{51}$, $v$, $w_{11}$, $w_{12}$, $k_3$, $w_4$, $k_5 \}$. \QED

The following Lemma is a straightforward consequence of Claims~\ref{claim:tent_1} to~\ref{claim:tent_4}.

\begin{lema} \label{lema:tent_2} 
Let $G=(K,S)$ be a split graph that contains an induced tent. If $G$ is $\{ \mathcal{T}, \mathcal{F} \}$-free, then all the following assertions hold:
 \begin{itemize}
  \item $\{S_{ij}\}_{i,j\in\{1,2,\ldots,6\}}$ is a partition of $S$.
  \item For each $i\in\{1,3,5\}$, $S_{i(i-1)}$ and $S_{i(i-2)}$ are empty.
  \item For each $i\in\{2,4,6\}$, $S_{i(i-1)}$ and $S_{i(i+2)}$ are empty.
  \item For each $i\in\{1,3,5\}$, $S_{i(i+3)}$ and $S_{(i+3)i}$ are complete to $K_i$.
 \end{itemize}
 
\begin{figure}[h]
\begin{center}
	\begin{tabular}{ c | c c c c c c } 
		 \hline
		 $i\setminus j$ & 1 & 2 & 3 & 4 & 5 & 6 \\ 
		  \hline
		 1 & \checkmark & \checkmark & \checkmark & \checkmark & $\emptyset$ & $\emptyset$ \\ 
		 2 & $\emptyset$ & \checkmark & \checkmark & $\emptyset$ & \checkmark & \checkmark \\
 		 3 & $\emptyset$ & $\emptyset$ & \checkmark & \checkmark & \checkmark & \checkmark \\
		 4 & \checkmark & \checkmark & $\emptyset$ & \checkmark & \checkmark & $\emptyset$ \\
		 5 & \checkmark & \checkmark & $\emptyset$ & $\emptyset$ & \checkmark & \checkmark \\
		 6 & \checkmark & $\emptyset$ & \checkmark & \checkmark & $\emptyset$ & \checkmark \\
	\end{tabular}
\caption{The (possibly) nonempty parts of $S$ in the tent case. The orange checkmarks denote those $S_{ij}$ for which every vertex is complete to $K_i$ or $K_j$.}
\end{center}
\end{figure}

\end{lema}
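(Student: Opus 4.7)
The plan is to derive each of the four assertions directly from Claims~\ref{claim:tent_1} through~\ref{claim:tent_4} by unfolding the definition of $S_{ij}$. Disjointness of the family $\{S_{ij}\}$ is immediate, since specifying the pair $(i,j)$ uniquely determines the adjacency pattern of a vertex in $S_{ij}$ to $K_1,\ldots,K_6$. For the covering part I would take an arbitrary $v\in S$ and argue that the set $A(v)=\{\ell : v\text{ is adjacent to a vertex of }K_\ell\}$ forms a contiguous arc in the cyclic order $(1,2,\ldots,6)$, with $v$ complete to every $K_\ell$ strictly interior to the arc and anticomplete to every $K_\ell$ exterior to it; this identifies a unique pair $(i,j)$ with $v\in S_{ij}$. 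The case $A(v)=\emptyset$ is excluded because the resulting induced tent${}+{}K_1$ belongs to $\mathcal{F}\cup\mathcal{T}$, and the non-contiguous or ``gappy'' patterns are ruled out by Claims~\ref{claim:tent_1}, \ref{claim:tent_2} and~\ref{claim:tent_3}.

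For the two emptiness statements the strategy is simply to unfold the defining adjacencies of $S_{ij}$ and reach a contradiction with Claim~\ref{claim:tent_1}. When $i$ is odd, membership in $S_{i(i-1)}$ or $S_{i(i-2)}$ forces $v$ to be adjacent to all three odd-indexed sets $K_1,K_3,K_5$, contradicting the first half of Claim~\ref{claim:tent_1}. When $i$ is even, membership in $S_{i(i-1)}$ requires $v$ to be complete to a run of four consecutive $K_\ell$'s, which again contains all three odd-indexed sets, so the same half of Claim~\ref{claim:tent_1} applies. For $S_{i(i+2)}$ with $i$ even, the defining anticompleteness of $v$ to $K_{i-1}$ and $K_{i+3}$ contradicts the observation established in the proof of Claim~\ref{claim:tent_3} (and recorded in the preceding Remark) that any $v$ adjacent to $K_i$ and $K_{i+2}$ must be complete to at least one of $K_{i-1}$ and $K_{i+3}$.

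The final assertion, that for each odd $i$ every vertex in $S_{i(i+3)}\cup S_{(i+3)i}$ is complete to $K_i$, is exactly the content of Claim~\ref{claim:tent_4} and needs no further work.

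The main obstacle I anticipate is the covering part of the partition: although each of the four claims independently handles a specific configuration, assembling them into a cyclic case analysis that produces a unique $(i,j)$ for every $v\in S$ must be carried out symmetrically over all six rotations, and it will need the full strength of the proof of Claim~\ref{claim:tent_3}, not only its written statement, to rule out patterns with an interior gap. Once the contiguity of $A(v)$ is settled, the remaining assertions reduce to routine bookkeeping.
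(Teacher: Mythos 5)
Your proposal is correct and takes essentially the same route as the paper, which dispatches this lemma in a single sentence as ``a straightforward consequence of Claims~\ref{claim:tent_1} to~\ref{claim:tent_4}''; your elaboration (disjointness from the forced adjacency pattern, covering via contiguity of $A(v)$ using Claims~\ref{claim:tent_1}--\ref{claim:tent_3}, the emptiness statements from Claim~\ref{claim:tent_1} and the remark after Claim~\ref{claim:tent_3}, and the last bullet from Claim~\ref{claim:tent_4}) is exactly the intended argument. One small slip worth noting: a vertex of $S$ with no neighbour in $K$ would give the \emph{disjoint union} of a tent and $K_1$, which is a circle graph and is not the join tent${}\vee{}K_1$ appearing in $\mathcal{T}$, so the case $A(v)=\emptyset$ is not excluded by the forbidden subgraphs but rather by the implicit assumption that $G$ has no isolated vertices (as holds for the connected, prime graphs the paper ultimately considers).
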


\section{Partitions of S and K for a graph containing an induced $4$-tent} \label{sec:4tent_partition}

Let $G=(K,S)$ be a split graph where $K$ is a clique and $S$ is an independent set. Let $H$ be a $4$-tent induced subgraph of $G$. Let $V(T)=\{k_1,k_2,k_4,k_5,s_{12},s_{24},s_{45}\}$ where $k_1,k_2,k_4,k_5\in K$, $s_{12},s_{24},s_{45}\in S$, and the neighbors of $s_{ij}$ in $H$ are precisely $k_i$ and $k_j$.

We introduce sets $K_1,K_2,\ldots,K_6$ as follows.
\begin{itemize}
 \item Let $K_1$ be the set of vertices of $K$ whose only neighbor in $V(T)\cap S$ is $s_{12}$.
 Analogously, let $K_3$ be the set of vertices of $K$ whose only neighbor in $V(T)\cap S$ is $s_{24}$, and let $K_5$ be the set of vertices of $K$ whose only neighbor in $V(T)\cap S$ is $s_{45}$.
 \item For each $i\in \{2, 4\}$, let $K_i$ be the set of vertices of $K$ whose neighbors in $V(T)\cap S$ are precisely $s_{ji}$ and $s_{ik}$, for $i=2$, $j=1$ and $k=2$ or $i=4$, $j=2$ and $k=5$.
 
 \item Let $K_6$ be the set of vertices of $K$ that are anticomplete to $V(T)\cap S$.
\end{itemize}

\begin{figure}[h!]
\centering
    \includegraphics[scale=1]{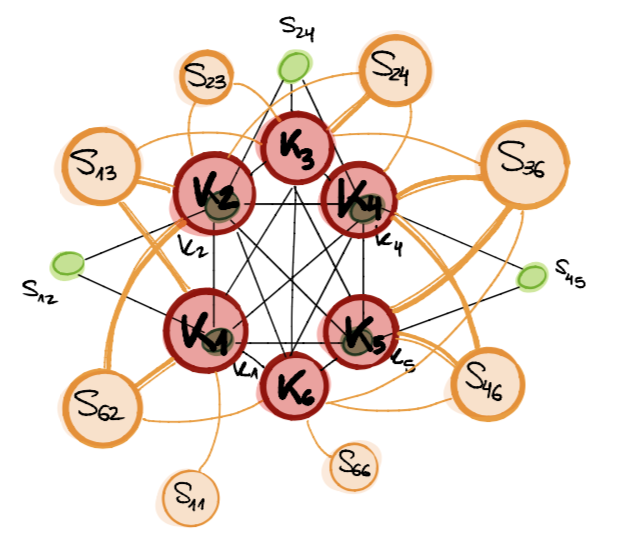}
  \caption{Some of the possible extensions of the $4$-tent graph.}		
  \label{fig:4tent_ext}    
\end{figure}

The following Lemma is straightforward.

\begin{lema}\label{lema:4tent_0}
	If $G$ is $\{ \mathcal{T}, \mathcal{F} \}$-free, then $\{K_1, K_2, \ldots, K_6\}$ is a partition of $K$.
\end{lema}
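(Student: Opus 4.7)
The plan has two parts: show that the six sets $K_1,\ldots,K_6$ are pairwise disjoint, and show that every vertex of $K$ lies in one of them. Disjointness is immediate from the definitions, since each of the sets prescribes a distinct subset of $V(T)\cap S=\{s_{12},s_{24},s_{45}\}$ as the neighborhood-trace of its members: $\{s_{12}\},\{s_{12},s_{24}\},\{s_{24}\},\{s_{24},s_{45}\},\{s_{45}\},\emptyset$ for $K_1,K_2,K_3,K_4,K_5,K_6$ respectively.

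For coverage, fix $v\in K$ and look at $N(v)\cap\{s_{12},s_{24},s_{45}\}$. Of the eight possible subsets, exactly six correspond to the defined classes, so only two patterns need to be excluded: $N(v)\cap V(T)=\{s_{12},s_{45}\}$ and $N(v)\cap V(T)=\{s_{12},s_{24},s_{45}\}$. In each case I plan to use the $\{\mathcal{T},\mathcal{F}\}$-freeness hypothesis to produce a contradiction by exhibiting a small induced forbidden subgraph.

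For the first bad pattern I would examine $\{v,k_2,k_4,s_{12},s_{24},s_{45}\}$: since $K$ is a clique, $\{v,k_2,k_4\}$ is a triangle, and in this induced subgraph $s_{12}$ is adjacent exactly to $\{v,k_2\}$, $s_{24}$ exactly to $\{k_2,k_4\}$, and $s_{45}$ exactly to $\{v,k_4\}$, so the three independent vertices realize the three "horns" of an induced $3$-sun. For the second bad pattern I would instead look at $\{v,k_1,k_2,k_4,k_5,s_{12},s_{24},s_{45}\}$: here $v$ is complete to the $4$-tent induced on the remaining seven vertices, so the subgraph is the join of a $4$-tent with $K_1$. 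Both configurations are expected to appear in $\mathcal{T}\cup\mathcal{F}$ (by the analogue of the tent argument used in Lemma \ref{lema:tent_1}, where the $3$-sun with center and tent$\vee K_1$ play the same role), and this yields the desired contradiction.

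The main obstacle is really only bookkeeping, namely cross-referencing the two bad configurations with named members of the families $\mathcal{T}$ and $\mathcal{F}$ from Figures \ref{fig:forb_T_graphs} and \ref{fig:forb_F_graphs}; the graph-theoretic content of the argument is minimal and mirrors the very short proof of Lemma \ref{lema:tent_1}.
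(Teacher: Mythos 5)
Your decomposition of the problem is right: disjointness is trivial, and coverage reduces to ruling out the two neighborhood traces $\{s_{12},s_{45}\}$ and $\{s_{12},s_{24},s_{45}\}$. The second exclusion is correct and matches the paper: a vertex adjacent to all of $s_{12},s_{24},s_{45}$ yields a $4$-tent${}\vee{}K_1$, whose matrix $A(S,K)$ is the configuration $F_0$, so it lies in $\mathcal{F}$. (The paper's own one-line proof only mentions this case, so you are right that the other pattern also needs to be addressed.)

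The gap is in your treatment of the pattern $N(v)\cap V(T)=\{s_{12},s_{45}\}$. The configuration you exhibit on $\{v,k_2,k_4,s_{12},s_{24},s_{45}\}$ is indeed a $3$-sun, but a $3$-sun \emph{without} center is exactly the tent, and the tent is \emph{not} a member of $\mathcal{T}\cup\mathcal{F}$: the family $\mathcal{T}$ contains the odd suns \emph{with center} and explicitly excludes the graphs whose matrix is $M_I(k)$ (the tent's matrix is $M_I(3)$), and the tent is in fact a circle graph, so it cannot appear on any forbidden list for split circle graphs. Your appeal to the "analogue of Lemma \ref{lema:tent_1}" conflates the $3$-sun with center (forbidden) with the plain $3$-sun (not forbidden). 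The correct way to kill this pattern is not $\{\mathcal{T},\mathcal{F}\}$-freeness but the standing hypothesis of the $4$-tent section that $G$ contains no induced tent -- the tent-containing case having been disposed of in Section \ref{sec:tent_partition} -- which is precisely how the paper handles the analogous situations in Claim \ref{claim:4tent_0} and in the hypotheses of Lemma \ref{lema:4tent_1}. With that substitution your argument goes through.
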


\begin{proof}
Every vertex in $K$ is adjacent to precisely one, two or no vertices of $V(T)\cap S$, for if not we find a $4$-tent${}\vee{}K_1$. 
\end{proof}

Let $i,j \in \{1, \ldots, 6\}$ and let $S_{ij}$ defined as in the previous section. We denote $S_{[ij}$ (resp.\ $S_{ij]}$) to the set of vertices in $S$ that are adjacent to $K_j$ and complete to $K_i$, $K_{i+1}, \ldots, K_{j-1}$ (resp.\ adjacent to $K_i$ and complete to $K_{i+1}, \ldots, K_{j-1}, K_j$). 
We denote $S_{[ij]}$ to the set of vertices in $S$ that are complete to $K_i, \ldots, K_j$.

 In particular, we consider separately those vertices adjacent to $K_6$ and complete to $K_1, K_2, \ldots, K_5$: we denote $S_{[16}$ to the set that contains these vertices, and $S_{16}$ to the subset of vertices of $S$ that are adjacent but not complete to $K_1$. Furthermore, we consider the set $S_{65}$ as those vertices in $S$ that are adjacent but not complete to $K_5$.  

\begin{claim} \label{claim:4tent_0}
	If $v$ in $S$ fullfils one of the following conditions:
	\begin{itemize}
		\item $v$ is adjacent to $K_i$ and $K_{i+2}$ and is anticomplete to $K_{i+1}$, for $i=1,3$
		\item $v$ is adjacent to $K_1$ and $K_4$ and is anticomplete to $K_2$
		\item $v$ is adjacent to $K_2$ and $K_5$ and is anticomplete to $K_4$  
	\end{itemize}
	Then, there is an induced tent in $G$.
\end{claim}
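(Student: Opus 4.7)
The plan is a straightforward case analysis on the four scenarios listed. In each case, I will exhibit an explicit induced tent by choosing one neighbor of $v$ in each of the two $K_i$'s to which $v$ is adjacent, together with an appropriate original vertex of $V(T)\cap K$ lying in the ``middle'' set, to form the triangle; and use $v$ together with two of $s_{12}, s_{24}, s_{45}$ as the three pairwise non-adjacent vertices.

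Concretely, suppose $i=1$ in the first bullet, so $v$ is adjacent to some $w_1\in K_1$ and some $w_3\in K_3$ and anticomplete to $K_2$. I would take the triangle $\{w_1, k_2, w_3\}\subseteq K$ and the independent triple $\{s_{12}, s_{24}, v\}\subseteq S$. By the defining adjacencies of the partition $K_1,\ldots,K_6$ (vertices of $K_1$ have $s_{12}$ as their unique neighbor in $V(T)\cap S$, and vertices of $K_3$ have $s_{24}$ as their unique neighbor in $V(T)\cap S$), $s_{12}$ is adjacent to $w_1,k_2$ but not $w_3$; $s_{24}$ is adjacent to $k_2,w_3$ but not $w_1$; and $v$ is adjacent to $w_1,w_3$ but not $k_2$ (since $v$ is anticomplete to $K_2$). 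This is precisely a tent. The case $i=3$ is symmetric, using the triangle $\{w_3,k_4,w_5\}$ and the triple $\{s_{24},s_{45},v\}$.

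For the second bullet ($v$ adjacent to $K_1$ and $K_4$, anticomplete to $K_2$), pick $w_1\in K_1$ and $w_4\in K_4$ adjacent to $v$, and take the triangle $\{w_1, k_2, w_4\}$ with triple $\{s_{12}, s_{24}, v\}$: here $s_{12}$ is adjacent to $w_1,k_2$ and not $w_4$ (vertices of $K_4$ meet $V(T)\cap S$ only at $s_{24},s_{45}$), $s_{24}$ is adjacent to $k_2,w_4$ and not $w_1$, and $v\not\sim k_2$ by hypothesis. The third bullet is handled symmetrically with the triangle $\{w_2,k_4,w_5\}$ and triple $\{s_{24},s_{45},v\}$.

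The argument is essentially a bookkeeping exercise, and the only thing to keep track of is that each chosen pair of $w\in K_i$ and $s\in V(T)\cap S$ is non-adjacent unless the definition of $K_i$ forces it; the anticompleteness hypothesis on the middle $K$-part is exactly what excludes the lone bad edge ($v k_{i+1}$) that would otherwise destroy the induced tent. There is no real obstacle—the claim is a direct verification once the right triangle/independent triple is produced in each of the four cases.
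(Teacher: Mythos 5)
Your proof is correct and follows essentially the same route as the paper: in each of the four cases the paper exhibits the identical tent (triangle consisting of a neighbor of $v$ in each of the two outer parts plus the original $4$-tent vertex $k_2$ or $k_4$ in the middle part, with independent triple $v$ together with two of $s_{12},s_{24},s_{45}$). Your version is in fact slightly more careful in explicitly choosing the neighbors $w_i$ of $v$ and verifying all adjacencies.
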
 

If $v$ is adjacent to $K_1$ and $K_3$ and is anticomplete to $K_2$, then we find a tent induced by $\{s_{12}$, $s_{24}$, $v$, $k_1$, $k_2$, $k_3\}$. If instead $v$ is adjacent to $K_3$ and $K_5$ and is anticomplete to $K_4$, then the tent is induced by $\{s_{45}$, $s_{24}$, $v$, $k_3$, $k_4$, $k_5\}$.

If $v$ is adjacent to $K_1$ and $K_4$ and is anticomplete to $K_2$, then we find a tent induced by $\{s_{12}$, $s_{24}$, $v$, $k_1$, $k_2$, $k_4\}$.

Finally, if $v$ is adjacent to $K_2$ and $K_5$ and is anticomplete to $K_4$, then the tent is induced by the set $\{s_{45}$, $s_{24}$, $v$, $k_2$, $k_4$, $k_5\}$. \QED

\vspace{1mm}
As a direct consequence of the previous claim, we will assume without loss of gen\-er\-al\-i\-ty that the subsets $S_{31}$, $S_{41}$, $S_{52}$ and $S_{53}$ of $S$ are empty.

\begin{claim} \label{claim:4tent_1}
	If $G$ is $\{ \mathcal{T}, \mathcal{F} \}$-free, then $S_{51}$ is empty. Moreover, if $K_3 \neq \emptyset$, then $S_{42}$ is empty.
\end{claim}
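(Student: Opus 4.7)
My plan is to derive, in each of the two assertions, a forbidden induced subgraph from $\mathcal{T}\cup\mathcal{F}$ under the assumption that the set in question contains a vertex $v$. In both cases the forbidden subgraph is built from a carefully chosen subset of the parts $K_1,\ldots,K_6$ together with the three independent vertices $s_{12},s_{24},s_{45}$ of the $4$-tent $H$ and the purported vertex $v$.

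For the emptiness of $S_{51}$, I would fix $v\in S_{51}$ and pick neighbors $w_1\in K_1$ and $w_5\in K_5$ of $v$, which exist by the definition of $S_{51}$. By this same definition, $v$ is anticomplete to $K_2\cup K_3\cup K_4$, so in particular $v$ is nonadjacent to both $k_2$ and $k_4$. The four vertices $w_1,k_2,k_4,w_5$ all lie in $K$ and thus form a $K_4$, and using the definitions of the parts $K_i$ together with those of $s_{12},s_{24},s_{45}$ one checks that, in the cyclic order $w_1,k_2,k_4,w_5$, the independent vertices $s_{12},s_{24},s_{45},v$ are adjacent to exactly the pairs $\{w_1,k_2\}$, $\{k_2,k_4\}$, $\{k_4,w_5\}$, $\{w_5,w_1\}$, respectively. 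Hence $\{w_1,k_2,k_4,w_5,s_{12},s_{24},s_{45},v\}$ induces a $4$-sun, which belongs to the forbidden list and yields the desired contradiction.

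For the second claim, I would fix $v\in S_{42}$ and $k_3\in K_3$, and pick neighbors $w_2\in K_2$ and $w_4\in K_4$ of $v$. Since $v\in S_{42}$, the vertex $v$ is complete to $K_5\cup K_6\cup K_1$ (so $v$ is adjacent to $k_1$ and $k_5$) and anticomplete to $K_3$ (so $v$ is nonadjacent to $k_3$). I would then examine the subgraph induced by $\{v,s_{12},s_{24},s_{45},k_1,w_2,k_3,w_4,k_5\}$, in which the five vertices $k_1,w_2,k_3,w_4,k_5$ form a clique, and from the adjacency pattern of the four independent vertices to this $5$-clique I would extract an induced member of $\mathcal{T}\cup\mathcal{F}$. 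The hypothesis $K_3\neq\emptyset$ is essential because it supplies a vertex of the clique ($k_3$) to which $v$ is not adjacent; removing $k_3$ would leave $v$ universal to the remaining clique vertices and the subgraph would no longer be forbidden.

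The principal obstacle is the second part: the chosen $w_2,w_4$ need not coincide with $k_2,k_4$, so one must identify a forbidden subgraph in Figures~\ref{fig:forb_T_graphs}--\ref{fig:forb_F_graphs} whose shape does not depend on this choice. A careful bookkeeping of the $4\times 5$ adjacency matrix of $(s_{12},s_{24},s_{45},v)$ against $(k_1,w_2,k_3,w_4,k_5)$ should single out the correct forbidden configuration.
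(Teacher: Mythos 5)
Your first argument coincides exactly with the paper's: it exhibits the same $4$-sun on $\{w_1,k_2,k_4,w_5,s_{12},s_{24},s_{45},v\}$, and it is complete. For the second assertion you have assembled precisely the nine vertices the paper uses, but you stop at ``a careful bookkeeping \ldots should single out the correct forbidden configuration''; that identification is the entire content of the step, so as written the second half is unfinished. Two observations close the gap. First, your worry that $w_2,w_4$ need not coincide with $k_2,k_4$ is harmless: by the definition of the partition, every vertex of $K_2$ has neighbourhood exactly $\{s_{12},s_{24}\}$ in $V(T)\cap S$ and every vertex of $K_4$ has neighbourhood exactly $\{s_{24},s_{45}\}$, so the $4\times 5$ adjacency matrix of $(s_{12},s_{24},s_{45},v)$ against $(k_1,w_2,k_3,w_4,k_5)$ is independent of which neighbours $w_2\in K_2$, $w_4\in K_4$ of $v$ you choose. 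Second, that matrix is fully determined: the rows, viewed as subsets of the clique, are $\{k_1,w_2\}$, $\{w_2,k_3,w_4\}$, $\{w_4,k_5\}$ and $\{k_1,w_2,w_4,k_5\}$, and taking the rows in the order $s_{12},s_{45},v,s_{24}$ and the columns in the order $w_2,k_1,k_5,w_4,k_3$ yields exactly the Tucker matrix $M_V$ of Figure~\ref{fig:tucker_matrices}. Hence the nine vertices induce the member of $\mathcal{T}$ whose $A(S,K)$ matrix represents $M_V$, which is precisely what the paper's proof records; with this identification supplied your argument becomes the paper's.
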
 

Suppose there is a vertex $v$ in $S_{51}$, let $k_1$ in $K_1$ and $k_5$ in $K_5$ be vertices adjacent to $v$. Thus, we find a $4$-sun induced by the set $\{ s_{12}$, $s_{24}$, $s_{45}$, $k_1$, $k_2$, $k_4$, $k_5$, $v \}$.

If $K_3 \neq \emptyset$, suppose $v$ in $S_{42}$, and let $k_2$ in $K_2$, $k_4$ in $K_4$ be vertices adjacent to $v$. Notice that, by definition, $v$ is complete to $K_5$ and $K_1$, and anticomplete to $K_3$. Then, we find $M_V$ induced by the set $\{ s_{12}$, $s_{24}$, $s_{45}$, $k_1$, $k_2$, $k_4$, $k_5$, $k_3$, $v \}$. \QED

We want to prove that $\{S_{ij}\}$ is indeed a partition of $S$, analogously as in the tent case. Towards this purpose, we state and prove the following claims.

\begin{claim} \label{claim:4tent_2}
	If $G$ is $\{ \mathcal{T}, \mathcal{F} \}$-free and $v$ in $S$ is adjacent to $K_i$ and $K_{i+2}$ and anticomplete to $K_j$ for $j < i$ and $j>i+2$, then:
	\begin{itemize}
		\item If $i \equiv 0 \pmod{3}$, then $v$ is complete to $K_{i+1}$ and $K_{i+2}$.
		\item If $i \equiv 1 \pmod 3$, then $v$ is complete to $K_i$ and $K_{i+1}$.
		\item If $i \equiv 2 \pmod{3}$, then $v$ lies in $S_{24}$.
	\end{itemize}
\end{claim}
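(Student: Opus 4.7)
The plan is an exhaustive case analysis on $i\in\{1,2,3,4\}$, these being the only values of $i$ for which the hypothesis ($v$ adjacent to $K_i$ and $K_{i+2}$ with $i+2\le 6$) is valid. These cases fall into three residue classes modulo $3$: $i\in\{1,4\}$ give $i\equiv 1$, $i=2$ gives $i\equiv 2$, and $i=3$ gives $i\equiv 0$. In each subcase I will assume, toward a contradiction, that the asserted completeness condition fails for $v$ in a specific way, and then exhibit either an induced tent (incompatible with the working assumption of this section, namely that $G$ contains an induced $4$-tent but no induced tent, the tent case having been handled in Section~\ref{sec:tent_partition}) or a specific forbidden graph from $\mathcal{T}\cup\mathcal{F}$.

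For the representative case $i=1$, where $v$ is adjacent to $K_1$ and $K_3$ and anticomplete to $K_4\cup K_5\cup K_6$, I proceed in two steps. First, suppose $v$ is not complete to $K_2$ and pick a non-neighbor $w_2\in K_2\setminus N(v)$ together with neighbors $u_1\in K_1\cap N(v)$ and $w_3\in K_3\cap N(v)$. By the partition rules $K_1\subseteq N(s_{12})$, $K_2\subseteq N(s_{12})\cap N(s_{24})$, and $K_3\subseteq N(s_{24})$, the set $\{v,s_{12},s_{24},u_1,w_2,w_3\}$ induces a $3$-sun with triangle $u_1 w_2 w_3$ and pendants $v,s_{12},s_{24}$ — a contradiction. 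Hence $v$ is complete to $K_2$. Suppose now that $v$ nevertheless misses some $w_1\in K_1$ (the argument ruling out a missed vertex of $K_3$ is analogous); adjoining $s_{45}$ and $k_5$ to $\{v,s_{12},u_1,w_1,k_2,w_3\}$ yields an eight-vertex induced subgraph whose adjacency pattern will be matched against the list $\mathcal{F}$ to locate the obstruction. The case $i=4$ is handled by the mirror involution swapping $(K_1,K_2,K_3)$ with $(K_5,K_4,K_6)$ and $s_{12}$ with $s_{45}$.

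The case $i=3$ is handled dually, employing $s_{24}$ and $s_{45}$ instead of $s_{12}$ and $s_{24}$: failure to be complete to $K_4$ produces an induced tent on the analogous six-set, and failure to be complete to $K_5$ produces a member of $\mathcal{F}$ on an analogous eight-set. For $i=2$, the hypothesis already supplies every defining property of $S_{24}$ except completeness to $K_3$, so it suffices to prove that no $w_3\in K_3$ is a non-neighbor of $v$; assuming such a $w_3$ exists and choosing $w_2\in K_2\cap N(v)$ and $w_4\in K_4\cap N(v)$, the set $\{v,s_{12},s_{24},s_{45},k_1,w_2,w_3,w_4,k_5\}$ again provides the material from which either an induced tent or a specific member of $\mathcal{T}\cup\mathcal{F}$ can be extracted.

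The principal difficulty I foresee is not the tent subcases — those are clean and follow directly from the partition's defining adjacencies — but the subcases in which $v$ is complete to the middle part and only partially complete at one endpoint; here I must identify which specific graph of $\mathcal{T}\cup\mathcal{F}$ matches the eight- or nine-vertex induced subgraph produced. Once the correct target graph is located, the verification reduces to routine checking of adjacencies between the named vertices using the definitions of $K_1,\ldots,K_6$ and the fact that $v$ is anticomplete to $K_j$ for the indices specified in the hypothesis.
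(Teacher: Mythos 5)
Your overall architecture (split by residue class, use the no-induced-tent assumption for the ``middle'' part, and a member of $\mathcal{F}$ for the endpoint parts) matches the paper's, and your tent arguments for $K_2$ (case $i=1$), $K_4$ (case $i=3$) and your $9$-vertex set for $i=2$ are correct. But there are three genuine gaps. First, the case analysis is incomplete: the subindices here wrap modulo $6$, so the hypothesis is also satisfiable for $i=5$ ($v$ adjacent to $K_5$ and $K_1$) and $i=6$ ($v$ adjacent to $K_6$ and $K_2$), and neither is vacuous. The real content of the $i\equiv 2$ bullet is that the $i=5$ configuration cannot occur at all (the paper rules it out by producing a $4$-sun as in the proof of Claim~\ref{claim:4tent_1}), and $i=6$ is a bona fide instance of the $i\equiv 0$ bullet. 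Restricting to $i\le 4$ silently drops both. Second, your reduction of $i=4$ to $i=1$ via the map sending $(K_1,K_2,K_3)$ to $(K_5,K_4,K_6)$ is not legitimate: $K_3$ is defined by adjacency to $s_{24}$ while $K_6$ is anticomplete to $V(T)\cap S$, so this is not an automorphism of the structure. The correct mirror pairs are $\{1,3\}$ and $\{4,6\}$; the case $i=4$ needs its own argument (the paper gets a net${}\vee{}K_1$, not a tent, to force completeness to $K_5$, precisely because $K_6$ has no neighbour among $s_{12},s_{24},s_{45}$).

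Third, and most seriously, the step you flag as the ``principal difficulty'' is exactly the one that fails as written. For $i=1$ with $v$ complete to $K_2$ but missing $w_1\in K_1$, your proposed witness $\{v,s_{12},s_{45},w_1,u_1,k_2,w_3,k_5\}$ contains no forbidden subgraph: $s_{45}$ is adjacent to only one of the five clique vertices, so with columns ordered $w_1,u_1,k_2,w_3,k_5$ the matrix $A(S,K)$ of this subgraph has rows $11100$, $01110$, $00001$, which is not $F_0$ (nor any other member of $\mathcal{T}\cup\mathcal{F}$); the subgraph is itself a circle graph. The obstruction lies on the other side of $v$'s neighbourhood: replace $s_{45}$ and $k_5$ by $s_{24}$ and a vertex $k_4\in K_4$. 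Then the rows $s_{12}$, $v$, $s_{24}$ against the columns $w_1,u_1,k_2,w_3,k_4$ read $11100$, $01110$, $00111$, i.e.\ the set $\{s_{12},v,s_{24},w_1,u_1,k_2,w_3,k_4\}$ induces $F_0$. The same correction is needed in your ``dual'' treatment of $i=3$, and a separate $F_0$ embedding (again using $s_{24}$ and $K_4$-completeness) is what handles the endpoint part in the $i=4$ and $i=6$ cases.
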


Let $v$ in $S$ adjacent to some vertices $k_1$ in $K_1$ and $k_3$ in $K_3$, such that $v$ is anticomplete to $K_4$, $K_5$ and $K_6$. By the previous Claim, we know that $v$ is complete to $K_2$ for if not there is an induced tent.
Moreover, suppose that $v$ is not complete to $K_1$. Let $k_2$ in $K_2$, $k_4$ in $K_4$ and let $k'_1$ in $K_1$ be a vertex nonadjacent to $v$. Then, we find $F_0$ induced by $\{ s_{12}$, $s_{24}$, $v$, $k_1$, $k'_1$, $k_2$, $k_3$, $k_4 \}$.
The proof is analogous for $v$ adjacent to $K_3$ and $K_5$, and anticomplete to $K_1$, $K_2$ and $K_6$.

\vspace{1mm}
Let $v$ in $S$ be a vertex adjacent to $k_4$ in $K_4$ and $k_6$ in $K_6$, such that $v$ is anticomplete to $K_1$, $K_2$ and $K_3$ (it is indistinct if $K_3 = \emptyset$). Suppose there is a vertex $k_5$ in $K_5$ nonadjacent to $v$. In this case, we find a net${}\vee{}K_1$ induced by $\{s_{24}$, $s_{45}$, $v$, $k_2$, $k_4$, $k_5 \}$.
Moreover, suppose that $v$ is not complete to $K_4$. Let $k'_4$ in $K_4$ nonadjacent to $v$. Thus, we find $F_0$ induced by $\{ s_{24}$, $s_{45}$, $v$, $k_2$, $k'_4$, $k_4$, $k_5$, $k_6 \}$.
The proof is analogous for $v$ adjacent to $K_6$ and $K_2$, and anticomplete to $K_3$, $K_4$ and $K_5$.

\vspace{1mm}
Finally, we know that in the third statement either $i=2$ or $i=5$.
If $i=5$, then $v$ is a vertex adjacent to $K_5$ and $K_1$ such that $v$ is anticomplete to $K_2$, $K_3$ (if nonempty) and $K_4$. Hence, as a direct consequence of the proof of Claim \ref{claim:4tent_1}, we find a $4$-sun. It follows that there is no such vertex $v$ adjacent to $K_5$ and $K_1$ and thus necessarily $i=2$. 
Let $v$ in $S$ adjacent to $k_2$ in $K_2$ and $k_4$ in $K_4$ such that $v$ is anticomplete to $K_5$, $K_6$ and $K_1$ (it is indistinct if $K_6 = \emptyset$).
If $K_3 \neq \emptyset$, let $k_3$ in $K_3$ and suppose that $v$ is nonadjacent to $K_3$. Then, we find $M_{III}(4)$ induced by $\{s_{12}$, $s_{24}$, $s_{45}$, $v$, $k_1$, $k_2$, $k_4$, $k_5$, $k_3\}$. \QED

\begin{claim} \label{claim:4tent_3}
	If $G$ is $\{ \mathcal{T}, \mathcal{F} \}$-free and $v$ in $S$ is adjacent to $K_i$ and $K_{i+3}$ and $v$ is anticomplete to $K_j$ for $j < i$ and $j>i+3$, then:
	\begin{itemize}
		\item If $i \equiv 0 \pmod{3}$, then $v$ is complete to $K_{i+1}$ and $K_{i+2}$.
		\item If $i \equiv 1 \pmod 3$, then $v$ lies in $S_{14]}$.
		\item If $i \equiv 2 \pmod{3}$, then $v$ lies in $S_{25] }$.
	\end{itemize}
\end{claim}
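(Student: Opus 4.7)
The plan is to mirror the structure of the proof of Claim~\ref{claim:4tent_2}, adapting it to the situation where $v$ is adjacent to two classes $K_i$ and $K_{i+3}$ that are three apart rather than two. First I would observe that for $K_{i+3}$ to exist among $\{K_1,\ldots,K_6\}$, we must have $i\in\{1,2,3\}$, which supplies exactly one instance per residue class modulo~$3$: the three bullet points correspond, respectively, to $i=1$, $i=2$ and $i=3$.

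For each of these three values of $i$, I would fix witnesses $w_i\in K_i\cap N(v)$ and $w_{i+3}\in K_{i+3}\cap N(v)$, together with the anticompleteness hypotheses on the remaining classes, and then establish the required completeness assertions by contradiction. In each subcase the strategy is: if $v$ fails to be adjacent to some required vertex $w\in K_j$, then the set consisting of $w$, suitable base $4$-tent vertices from $\{k_1,k_2,k_4,k_5,s_{12},s_{24},s_{45}\}$, the witnesses $w_i, w_{i+3}$, and $v$ itself induces a forbidden subgraph from $\mathcal{T}\cup\mathcal{F}$. This is precisely the approach used in Claim~\ref{claim:4tent_2}, where the absence of a single adjacency produced a tent, a $4$-sun, an $M_{III}(k)$, or an $F_0$.

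Concretely, for $i=1$ the vertex $v$ is adjacent to $K_1$ and $K_4$ and anticomplete to $K_5, K_6$. By Claim~\ref{claim:4tent_0}, $v$ cannot be anticomplete to $K_2$, so $v$ has a neighbor $w_2\in K_2$; I would then upgrade this to completeness of $v$ to each of $K_2$, $K_3$ and $K_4$ one class at a time, exhibiting in each failure a forbidden configuration (typically a member of $\mathcal{F}$ of type $F_0$, built from the base $4$-tent plus the non-neighbor), thereby landing $v$ in $S_{14]}$. The case $i=2$ I would handle by the automorphism of the $4$-tent that swaps $\{k_1,s_{12}\}$ with $\{k_5,s_{45}\}$ and $K_2$ with $K_4$: under this swap, a vertex adjacent to $K_2,K_5$ and anticomplete to $K_1,K_6$ becomes one adjacent to $K_4,K_1$ and anticomplete to $K_5,K_6$, and the $i=1$ conclusion then yields $v\in S_{25]}$ after mapping back. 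The case $i=3$ concerns $v$ adjacent to $K_3$ and to the isolated class $K_6$ (anticomplete to $V(T)\cap S$), and anticomplete to $K_1,K_2$; here the required completeness to $K_4$ and $K_5$ follows by the same pattern, extracting an induced $F_0$ or $M_{III}(k)$ whenever an adjacency is missing.

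The main obstacle is bookkeeping: in each subcase one must identify the specific member of $\mathcal{T}\cup\mathcal{F}$ that arises, and verify that the set ``non-neighbor $+$ selected base $4$-tent vertices $+$ witnesses $+\, v$'' truly induces that graph rather than a proper subgraph or something outside the forbidden list. Given the pictures in Figures~\ref{fig:forb_F_graphs} and~\ref{fig:forb_T_graphs}, this reduces to a careful but finite enumeration, with no additional conceptual difficulty beyond that already encountered in Claim~\ref{claim:4tent_2}.
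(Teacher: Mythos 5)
There is a genuine gap at the very first step: your restriction to $i\in\{1,2,3\}$ is wrong. Throughout this chapter the subindices of the classes $K_j$ and of the sets $S_{ij}$ are taken modulo $6$ (this is explicit in the definition of $S_{ij}$ in Section~\ref{sec:tent_partition}, which Section~\ref{sec:4tent_partition} inherits, and it is how the paper's own proof of this claim proceeds), so the hypothesis also covers $i=4$ (adjacent to $K_4$ and $K_1$), $i=5$ (adjacent to $K_5$ and $K_2$) and $i=6$ (adjacent to $K_6$ and $K_3$), with ``anticomplete to $K_j$ for $j<i$ and $j>i+3$'' read cyclically. The cases $i=4$ and $i=5$ are indeed vacuous --- Claim~\ref{claim:4tent_0} produces an induced tent from ``adjacent to $K_1$ and $K_4$, anticomplete to $K_2$'' and from ``adjacent to $K_2$ and $K_5$, anticomplete to $K_4$'' --- but this must be said. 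The case $i=6$ is not vacuous: it is exactly the set $S_{63}$, which Lemma~\ref{lema:4tent_1} records as possibly nonempty, and for it the first bullet asserts completeness to $K_7=K_1$ and $K_8=K_2$, a different conclusion from the $i=3$ instance you treat. So the residue class $0\pmod 3$ contributes two live instances, not one; the paper settles $i=6$ via the same $4$-tent automorphism ($k_1\leftrightarrow k_5$, $k_2\leftrightarrow k_4$, fixing $K_3$ and $K_6$) that you invoke for $i=2$, but your proof as written never sees this case.

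A secondary issue: for $i=2$ your symmetry argument does not deliver $S_{25]}$. The automorphism carries $S_{14]}$ (adjacent to $K_1$, complete to $K_2,K_3,K_4$) to $S_{[25}$ (adjacent to $K_5$, complete to $K_2,K_3,K_4$), not to $S_{25]}$ (adjacent to $K_2$, complete to $K_3,K_4,K_5$); note that Lemma~\ref{lema:4tent_1} indeed lists $S_{25}=S_{[25}$. Whatever the resolution of that discrepancy with the claim's wording, you should not assert that ``mapping back'' yields $S_{25]}$ without checking which one-sided completeness the symmetry actually produces. Apart from these indexing problems, your overall strategy --- fix witnesses in $K_i$ and $K_{i+3}$ and extract a member of $\mathcal{T}\cup\mathcal{F}$ from each missing adjacency --- is the same as the paper's.
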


\begin{proof}
Suppose first  that $i \equiv 0 \pmod{3}$. Let $v$ in $S$ such that $v$ is adjacent to some vertices $k_3$ in $K_3$ and $k_6$ in $K_6$ and $v$ has at least one non-neighbor in $K_1$ and $K_2$ each. Let $k_1$ in $K_1$ and $k_2$ in $K_2$ be two such vertices.
	If there are vertices $k_4$ in $K_4$ and $k_5$ in $K_5$ such that $k_4$ and $k_5$ are both nonadajcent to $v$, then we find $M_{IV}$ induced by the set $\{ s_{12}$, $s_{24}$, $v$, $s_{45}$, $k_1$, $k_2$, $k_6$, $k_3$, $k_5$, $k_4 \}$.
	If instead $v$ is adjacent to a vertex $k_5$ in $K_5$ and $v$ is nonadjacent to a vertex $k_4$ in $K_4$, then we find a tent with center induced by $\{s_{24}$, $v$, $s_{45}$, $k_1$, $k_3$, $k_4$, $k_5\}$. Conversely, if $v$ is adjacent to $k_4$ in $K_4$ and is nonadjacent to some $k_5$ in $K_5$, then we find a net${}\vee{}K_1$ induced by the set $\{ s_{12}$, $s_{24}$, $v$, $s_{45}$, $k_2$, $k_6$, $k_5$, $k_4 \}$. The proof is analogous by symmetry for $v$ in $S_{63}$.
	
	Let us see now the case $i \equiv 1 \pmod 3$, thus either $i = 1$ or $i=4$. If $i=4$, then $v$ is adjacent to $K_4$ and $K_1$ and $v$ is anticomplete to $K_2$ and $K_3$ (if nonempty). Thus, by Claim \ref{claim:4tent_0} we may discard this case.
	Let $v$ in $S$ such that $v$ is adjacent to some vertices $k_1$ in $K_1$, $k_4$ in $K_4$ and $v$ is nonadjacent to a vertex $k_5$ in $K_5$. Suppose that $v$ is not complete to $K_2$ and $K_3$. 
	Whether $K_3 = \emptyset$ or not, if there is a vertex $k_2$ in $K_2$ that is nonadjacent to $v$, then we find a net${}\vee{}K_1$ induced by $\{ s_{24}$, $s_{45}$, $v$, $k_1$, $k_5$, $k_2$, $k_4 \}$.
	If $K_3 \neq \emptyset$, we find a net${}\vee{}K_1$ by replacing the vertex $k_2$ in the previous set for any vertex in $K_3$ that is nonadjacent to $v$. 
	Let us see that $v$ is also complete to $K_4$. If this is not true, then there is a vertex $k'_4$ in $K_4$ nonadjacent to $v$. However, we find $F_0$ induced by $\{ s_{24}$, $s_{45}$, $v$, $k_1$, $k_2$, $k_4$, $k'_4$, $k_5 \}$.
	
	The proof for the third statement is analogous by symmetry.
\end{proof}

\begin{claim} \label{claim:4tent_4}
	If $G$ is $\{ \mathcal{T}, \mathcal{F} \}$-free, $v$ in $S$ is adjacent to $K_i$ and $K_{i+4}$ and $v$ is anticomplete to $K_{i-1}$, then:
	\begin{itemize}
		\item If $i \equiv 0 \pmod{3}$, then $v$ lies in $S_{64]}$.
		\item If $i \equiv 1 \pmod 3$ and $K_3 \neq \emptyset$, then $v$ lies in $S_{15}$. 
		\item If $i \equiv 2 \pmod{3}$, then $v$ lies in $S_{[26}$.
	\end{itemize}
\end{claim}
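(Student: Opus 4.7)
The strategy mirrors the proofs of Claims~\ref{claim:4tent_2} and~\ref{claim:4tent_3}: first cut down the range of $i$ using earlier results, and then for each surviving case assume completeness fails in one specific $K_j$ and extract a configuration from $\mathcal{T}\cup\mathcal{F}$. All indices are taken modulo $6$.

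The plan is first to dispose of the residues that cannot occur. If $i=3$, then $v$ is adjacent to $K_3$ and $K_1$ but anticomplete to $K_2$, and Claim~\ref{claim:4tent_0} produces an induced tent, contradicting that $G$ is $\{\mathcal{T},\mathcal{F}\}$-free. Likewise $i=5$ forces $v$ adjacent to $K_5$ and $K_3$ and anticomplete to $K_4$, which is another instance of Claim~\ref{claim:4tent_0}. Hence only $i\in\{6,1,4,2\}$ survive. In the subcase $i=4$ the target conclusion $v\in S_{15}$ would demand $v$ complete to $K_3$, whereas the hypothesis yields $v$ anticomplete to $K_3$; under $K_3\neq\emptyset$ this subcase is vacuous and will be ruled out by mimicking the proof of Claim~\ref{claim:4tent_1}, extracting an induced $M_V$ or $M_{III}(4)$ from the configuration on $\{s_{12},s_{24},s_{45},v,k_1,k_2,k_3,k_4\}$.

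For the remaining cases I would proceed as follows. For $i=6$, with $v$ adjacent to some $k_6\in K_6$ and $k_4\in K_4$ and anticomplete to $K_5$, I must show $v$ is complete to $K_1\cup K_2\cup K_3\cup K_4$. Suppose $v$ has a non-neighbor $k_j\in K_j$ for one such $j$. If $j=4$, pairing $k_j$ with a second neighbor $k'_4\in K_4$ of $v$ and considering $\{s_{24},s_{45},v,k'_4,k_j,k_5,k_6\}$ yields an induced $F_0$; if $j=3$, the set $\{s_{12},s_{24},s_{45},v,k_1,k_3,k_4,k_5,k_6\}$ induces $M_{III}(4)$; the cases $j\in\{1,2\}$ are handled by analogous forbidden configurations involving $s_{12}$ together with vertices of $K_1$ and $K_2$. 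The case $i=1$ is the mirror image across the $4$-tent: $v$ is adjacent to $K_1$ and $K_5$, anticomplete to $K_6$, and one shows $v$ is complete to $K_2,K_3,K_4$. The assumption $K_3\neq\emptyset$ is exactly what allows the production of $M_{III}(4)$ when $v$ fails to be complete to $K_3$; missing neighbors in $K_2$ or $K_4$ give a tent-with-center or an $F_0$ subgraph, in the spirit of Claim~\ref{claim:4tent_3}.

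For $i=2$, with $v$ adjacent to $k_2\in K_2$ and $k_6\in K_6$ and anticomplete to $K_1$, the target is completeness to $K_3,K_4,K_5$: a non-neighbor of $v$ in $K_5$ induces a net${}\vee{}K_1$ inside $\{s_{24},s_{45},v,k_2,k_4,k_5,k_6\}$; a non-neighbor in $K_4$ (together with a second vertex $k'_4\in K_4$ adjacent to $v$) yields an $F_0$ or a tent with center; a non-neighbor in $K_3$ gives an $M_{III}(4)$ upon adjoining $s_{12}$ and an appropriate vertex of $K_1$. The main obstacle is not depth but bookkeeping: each genuine case branches into several sub-cases according to which $K_j$ fails completeness and to whether intermediate sets (notably $K_3$ and $K_6$) are empty, and the correct member of $\mathcal{T}\cup\mathcal{F}$ changes from branch to branch. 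The proof will therefore be a systematic case split, essentially of the same flavor as Claims~\ref{claim:4tent_2} and~\ref{claim:4tent_3} but with one more sector of the partition to straddle.
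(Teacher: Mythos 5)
Your overall architecture is the same as the paper's: discard $i=3$ and $i=5$ via Claim~\ref{claim:4tent_0}, kill the $i=4$ subcase under $K_3\neq\emptyset$ via the $M_V$/$M_{III}(4)$ argument of Claim~\ref{claim:4tent_1}, and for $i=6,1,2$ force completeness to the intermediate $K_j$'s one at a time by exhibiting forbidden subgraphs. The problem is that several of the concrete extractions you name do not work, and in this kind of claim the extractions are the whole proof. The clearest failure is the $i=6$, $j=4$ step: you claim $\{s_{24},s_{45},v,k'_4,k_j,k_5,k_6\}$ induces $F_0$, but $F_0$ is a $3\times 5$ configuration, i.e.\ it needs five clique vertices, and your set has only four. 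Worse, the induced subgraph you actually get (with $v\sim k'_4,k_6$; $s_{24}\sim k'_4,k_j$; $s_{45}\sim k'_4,k_j,k_5$) has the consecutive-ones property with $s_{24}$ nested in $s_{45}$ and $v$ merely overlapping both, so it is a circle graph and yields no contradiction at all. The correct $F_0$ here is $\{k_1,k_2,k_4,k'_4,k_5,s_{24},s_{45},v\}$, and it only works because one has \emph{already} shown $v$ complete to $K_1$ and $K_2$; the completeness arguments must be run in order ($K_2$, then $K_1$ using $K_2$, then $K_3$, then $K_4$ using $K_1$ and $K_2$), not independently per $j$ as you propose.

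Two further slips of the same kind: for $i=6$ and a non-neighbor in $K_3$ your candidate set contains the row $s_{12}$ adjacent to only one of the five chosen clique vertices, so it cannot be $M_{III}(4)$ (every row of $M_{III}(4)$ has at least two $1$'s); the right certificate is a net${}\vee{}K_1$ on $\{k_3,k_5,k_6,v,s_{24},s_{45},k_4\}$ with $k_4$ universal. And your assignments for $i=2$ are scrambled relative to the mirror symmetry $K_1\leftrightarrow K_5$, $K_2\leftrightarrow K_4$, $s_{12}\leftrightarrow s_{45}$ that makes $i=2$ literally analogous to $i=6$: a non-neighbor in $K_5$ should give $M_{III}(4)$ (mirror of the $K_1$ step), a non-neighbor in $K_4$ or $K_3$ should give a net${}\vee{}K_1$, and a non-neighbor in $K_2$ should give $F_0$. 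Finally, in the $i=1$ case the hypothesis $K_3\neq\emptyset$ is not what produces an $M_{III}(4)$ against non-completeness to $K_3$ — its role is solely to exclude the $i=4$ subcase; the $i=1$ completeness to $K_2$ and $K_4$ comes from a $4$-sun and from tents on $\{k_2,k_4,k_5,v,s_{45},s_{24}\}$ and its mirror.
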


\begin{proof}
	Suppose first that $i \equiv 0 \pmod{3}$. In this case, either $i=3$ or $i=6$. If $i=3$, then $v$ is adjacent to $K_3$ and $K_1$ and is anticomplete to $K_2$. By Claim \ref{claim:4tent_0}, this case is discarded for there is an induced tent.
	Hence, necessarily $i=6$. Equivalently, $v$ is adjacent to $K_6$ and $K_4$ and $v$ is anticomplete to $K_5$.
	Suppose there is a vertex $k_2$ in $K_2$ nonadjacent to $v$. Then, we find a net${}\vee{}K_1$ induced by $\{ k_2$, $k_5$, $k_6$, $v$, $s_{45}$, $s_{24}$, $k_4 \}$, and thus $v$ must be complete to $K_2$.
	Furthermore, suppose $v$ is not complete to $K_1$, thus there is a vertex $k_1$ in $K_1$ nonadjacent to $v$. Since $v$ is complete to $K_2$, we find $M_{III}(4)$ induced by the set $\{ k_1$, $k_2$, $k_4$, $k_5$, $k_6$, $s_{12}$, $s_{24}$, $s_{45}$, $v \}$.
	If $K_3 \neq \emptyset$, then $v$ is complete to $K_3$, for if not we find a net ${}\vee{}K_1$ induced by $\{ k_3$, $k_5$, $k_6$, $v$, $s_{24}$, $s_{45}$, $k_4 \}$.
	Finally, if $v$ is not complete to $K_4$, then there is a vertex $k'_4$ in $K_4$ nonadjacent to $v$. In this case, we find $F_0$ induced by $\{ k_1$, $k_2$, $k_4$, $k'_4$, $k_5$, $s_{24}$, $s_{45}$, $v \}$ and this finishes the proof of the first statement.
	
	Suppose now that $i \equiv 1 \pmod 3$. Thus, either $i=1$ or $i=4$.
	By hypothesis, $K_3 \neq \emptyset$. Suppose that $i=4$, thus $v$ is adjacent to $K_4$ and $K_2$ and $v$ is anticomplete to $K_3$. By Claim \ref{claim:4tent_1}, if $K_3 \neq \emptyset$, then $v$ is anticomplete to $K_1$ and $K_5$, for if not we find an induced $M_V$. Hence, if $K_3 \neq \emptyset$, then we find $M_{III}(4)$ induced by $\{ k_1$, $k_2$, $k_4$, $k_5$, $s_{12}$, $s_{24}$, $s_{45}$, $v \}$. Thus, if $i=4$, then necessarily $K_3 = \emptyset$.
	Let $i = 1$. Suppose that $v$ is adjacent to $K_1$ and $K_5$ and that $v$ is anticomplete to $K_6$. If $v$ is nonadjacent to some vertices $k_2$ in $K_2$ and $k_4$ in $K_4$, then we find a $4$-sun induced by $\{ k_1$, $k_2$, $k_4$, $k_5$, $s_{12}$, $s_{24}$, $s_{45}$, $v \}$. If $v$ is not complete to $k_2$ in $K_2$, then we find a tent induced by $\{k_2$, $k_4$, $k_5$, $v$, $s_{45}$, $s_{24}\}$. The same holds for $K_4$ by replacing the vertex $k_5$ for some vertex $k_1$ in $K_1$ adjacent to $v$ and $s_{45}$ by $s_{12}$. Notice that it was not necessary for the argument that $K_6 \neq \emptyset$.
	
	Finally, suppose that $i \equiv 2 \pmod{3}$. By Claim \ref{claim:4tent_0} we can discard the case where $i=5$, thus we may assume that $i=2$. However, the proof for $i=2$ is analogous to the proof of the first statement.

\end{proof}

\begin{claim} \label{claim:4tent_5}
	Let $v$ in $S$ such that $v$ is adjacent to at least one vertex in each nonempty $K_i$, for all $i\in \{1, \ldots, 6\}$.
	
	If $G$ is $\{ \mathcal{T}, \mathcal{F} \}$-free, then the following statements hold:
	\begin{itemize}
		\item The vertex $v$ is complete to $K_2$ and $K_4$, regardless of whether $K_3$ and $K_6$ are empty or not.
		\item If $K_3 \neq \emptyset$, then $v$ is complete to $K_3$. 
		\item If $K_6 \neq \emptyset$, then either $v$ is complete to $K_1$ or $v$ is complete to $K_5$. 	
	\end{itemize}
\end{claim}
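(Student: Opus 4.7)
The plan is to prove the three assertions separately, in each case arguing by contradiction: assume that completeness fails and produce an induced subgraph from $\mathcal{T}\cup\mathcal{F}$. Throughout, let $k_i\in K_i\cap N(v)$ denote a chosen neighbour of $v$ in each nonempty $K_i$, whose existence follows from the hypothesis that $v$ meets every nonempty $K_i$. I will repeatedly exploit the following facts about $V(H)\cap S$: $s_{12}$ is complete to $K_1\cup K_2$, anticomplete to $K_3\cup K_4\cup K_5\cup K_6$; $s_{24}$ is complete to $K_2\cup K_3\cup K_4$, anticomplete to $K_1\cup K_5\cup K_6$; $s_{45}$ is complete to $K_4\cup K_5$, anticomplete to $K_1\cup K_2\cup K_3\cup K_6$.

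For the first assertion, by the symmetry $1{\leftrightarrow}5$, $2{\leftrightarrow}4$ of the configuration, it suffices to show $v$ is complete to $K_2$. Suppose not and pick $w_2\in K_2\setminus N(v)$. The vertex $w_2$ is adjacent to $s_{12},s_{24}$ but not to $s_{45}$, while $v$ is adjacent to $k_1,k_4,k_5$ but not to $s_{12},s_{24},s_{45}$. On the complete vertices $\{k_1,k_4,k_5,w_2\}$ together with the three ``private'' independent vertices $s_{12},s_{24},s_{45}$ and $v$, the two rows corresponding to $v$ and $w_2$ both meet $k_1,k_4,k_5$ but differ at $w_2$ vs.\ the $s_{ij}$'s, which I expect to yield an induced copy of $F_0$ (or of the $M_{III}(k)$-type subconfiguration $M_{III}(3)$ when suitably restricted). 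The argument will branch according to whether $K_3=\emptyset$ or not, but in the latter case a neighbour $k_3\in K_3\cap N(v)$ may be used to complete the forbidden configuration. The same template with indices reflected deals with $K_4$.

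For the second assertion, assume $K_3\neq\emptyset$ and suppose $w_3\in K_3\setminus N(v)$ exists. At this point we already know by the first assertion that $v$ is complete to $K_2\cup K_4$, so $v$ is adjacent to $k_2,w_3$ candidates carefully: $w_3$ is adjacent only to $s_{24}$ inside $V(H)\cap S$, while $v$ meets $k_1,k_5$ and is complete to $K_2,K_4$. I plan to form the induced subgraph $\{s_{12},s_{24},s_{45},v,w_3,k_1,k_2,k_4,k_5\}$ and recognise it as an induced $M_{III}(4)$ or $M_V$-like configuration arising from the column $w_3$ creating an isolated ``gap'' in $v$'s adjacency row between two runs of ones.

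For the third assertion, assume $K_6\neq\emptyset$ and pick $k_6\in K_6\cap N(v)$; suppose for contradiction that there exist $w_1\in K_1\setminus N(v)$ and $w_5\in K_5\setminus N(v)$. The crucial feature of $k_6$ is that it is anticomplete to $V(H)\cap S$, yet $v$ is adjacent to it. Then on $\{s_{12},s_{24},s_{45},v,w_1,k_2,w_5,k_4,k_6\}$ one obtains a symmetric configuration in which $v$'s row forms a ``cross-like'' pattern disagreeing with each of the three $s_{ij}$ rows at two places, which I expect to match $M_{IV}$ or $F_0$. The hard part will be bookkeeping the many small cases (which $K_i$ is empty, which neighbour of $v$ to choose) and identifying in each case precisely which member of $\mathcal{T}\cup\mathcal{F}$ is realised; the main technical obstacle is the third statement, where the interaction of the non-neighbours in two different $K_i$ with a neighbour in $K_6$ forces the most delicate enumeration.
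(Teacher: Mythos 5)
Your overall strategy coincides with the paper's: argue by contradiction and exhibit a forbidden configuration built from $s_{12},s_{24},s_{45},v$ and suitably chosen vertices of the $K_i$'s, and the vertex sets you select are essentially the ones the paper uses. The gap is in the identification of the obstructions, which in a forbidden-subgraph proof is the entire content of the step, and you get it wrong or leave it open in two of the three bullets. For the first bullet the correct certificate is much smaller than what you propose: with $w_2\in K_2\setminus N(v)$ the set $\{s_{12},s_{24},v,k_1,w_2,k_4\}$ already induces a tent (the matrix on rows $s_{12},s_{24},v$ and columns $k_1,w_2,k_4$ is $M_I(3)$), which contradicts the standing assumption of this case that $G$ contains no induced tent; your candidate $F_0$ cannot occur on your vertex set, since $F_0$ requires five clique vertices with each independent row meeting three consecutive ones, whereas you have only four clique vertices and $k_5$, $s_{45}$ play no role. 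For the third bullet the configuration on exactly the vertex set you name, with rows $s_{12},s_{24},s_{45},v$ and columns $w_1,k_2,k_4,w_5,k_6$, is
\[
\begin{pmatrix}1&1&0&0&0\\0&1&1&0&0\\0&0&1&1&0\\0&1&1&0&1\end{pmatrix},
\]
i.e.\ $M_{III}(4)$; it is not $M_{IV}$ (the three $s$-rows are not pairwise disjoint, as $s_{12},s_{24}$ share $k_2$ and $s_{24},s_{45}$ share $k_4$) and not $F_0$ (wrong number of rows). Only in the second bullet is one of your two hedged guesses correct: the paper exhibits $M_V$ on $\{s_{12},s_{45},v,s_{24}\}$ versus $\{k_2,k_1,k_5,k_4,k_3\}$, using the already-established completeness of $v$ to $K_2\cup K_4$.

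To repair the proposal you need only replace the guessed obstructions by the correct ones (tent, $M_V$, $M_{III}(4)$ respectively) and verify the adjacency matrices explicitly; no new case analysis beyond what you sketch is required, and in particular the "delicate enumeration" you anticipate for the third statement does not materialize — a single $M_{III}(4)$ works uniformly, whether or not $K_3$ is empty.
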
 

\begin{proof}
		Let $k_i$ in $K_i$ be a vertex adjacent to $v$, for each $i=1, 2, 4, 5$, which are always nonempty sets. If $v$ is not complete to $K_2$, then there is a vertex $k'_2$ in $K_2$ nonadjacent to $v$. Thus, we find a tent induced by $\{ s_{12}$, $v$, $s_{24}$, $k'_2$, $k_4$, $ k_1\}$. The same holds if $v$ is not complete to $K_4$, and thus the first statement holds. Notice that, in fact, this holds regardless of $K_3$ or $K_6$ being empty or not.
		
		Suppose now that $K_3 \neq \emptyset$, and that there is a vertex $k_3$ in $K_3$ such that $v$ is nonadjacent to $k_3$. Then, we find $M_V$ induced by $\{ s_{12}$, $s_{45}$, $v$, $s_{24}$, $k_2$, $k_1$, $k_5$, $k_1$, $k_3 \}$.
		
		Finally, let us suppose that $K_6 \neq \emptyset$ and toward a contradiction, let $k'_1$ in $K_1$ and $k'_5$ in $K_5$ be two non-neighbors of $v$, and $k_6$ in $K_6$ adjacent to $v$.	 Then, we find $M_{III}(4)$ induced by $\{ s_{12}$, $s_{24}$, $s_{45}$, $v$, $k'_1$, $k_2$, $k_4$, $k'_5$, $k_6 \}$. Notice that this holds even if $K_3 = \emptyset$. 
\end{proof}

As a consequence of Claims \ref{claim:4tent_0} to \ref{claim:4tent_5}, we have the following Lemma.

\begin{lema} \label{lema:4tent_1} 
Let $G=(K,S)$ be a split graph that contains an induced $4$-tent and contains no induced tent. If $G$ is $\{ \mathcal{T}, \mathcal{F} \}$-free, then all the following assertions hold:
 \begin{itemize}
  \item $\{S_{ij}\}_{i,j\in\{1,2,\ldots,6\}}$ is a partition of $S$.
  \item For each $i\in\{2,3,4,5\}$, $S_{i1}$ is empty.
  \item For each $i\in\{3,4,5\}$, $S_{i2}$ is empty.
  \item The subsets $S_{43}$, $S_{53}$ and $S_{54}$ are empty. 
  \item The following subsets coincide: $S_{13}= S_{[13}$, $S_{14}=S_{14]}$, $S_{25}=S_{[25}$, $S_{26}=S_{[26}$, $S_{35}=S_{35]}$, $S_{46} = S_{[46}$, $S_{62} = S_{62]}$ and $S_{64} = S_{64]}$.
 \end{itemize}
 
\begin{figure}[h!]
	\begin{center}
		\begin{tabular}{ c | c c c c c c } 
			 \hline
			 $i\setminus j$ & 1 & 2 & 3 & 4 & 5 & 6 \\ 
			  \hline
			 1 & \checkmark & \checkmark & \textcolor{dark-orange}{\checkmark} & \textcolor{dark-orange}{\checkmark} & \checkmark & \checkmark \\ 
			 2 & $\emptyset$ & \checkmark & \checkmark & \checkmark & \textcolor{dark-orange}{\checkmark} & \textcolor{dark-orange}{\checkmark} \\
	 		 3 & $\emptyset$ & $\emptyset$ & \checkmark & \checkmark & \textcolor{dark-orange}{\checkmark} & \checkmark \\
			 4 & $\emptyset$ & $\emptyset$ & $\emptyset$ & \checkmark & \checkmark & \textcolor{dark-orange}{\checkmark} \\
			 5 & $\emptyset$ & $\emptyset$ & $\emptyset$ & $\emptyset$ & \checkmark & \checkmark \\
			 6 & \checkmark & \textcolor{dark-orange}{\checkmark} & \checkmark & \textcolor{dark-orange}{\checkmark} & \checkmark & \checkmark \\
		\end{tabular}
	\end{center}
	\caption{The (possibly) nonempty parts of $S$ in the $4$-tent case. The orange checkmarks denote those sets $S_{ij}$ complete to either $K_i$ or $K_j$.} 
\end{figure}

\end{lema}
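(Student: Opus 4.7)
The plan is to derive the four assertions of the lemma as a direct corollary of the accumulated Claims \ref{claim:4tent_0}--\ref{claim:4tent_5}, rather than introducing new structural arguments. The strategy is to organize the outputs of those claims according to the cyclic structure of $\{K_1,\ldots,K_6\}$ around the induced $4$-tent.

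First I would establish the partition assertion. For $v\in S$, the adjacency pattern of $v$ to $\{K_1,\ldots,K_6\}$ determines a subset $I(v)\subseteq\{1,\ldots,6\}$ of indices to which $v$ has a neighbour. If $I(v)$ comprises every nonempty $K_i$, then Claim \ref{claim:4tent_5} places $v$ into one of the ``complete'' strata $S_{[16}$, $S_{[25}$, $S_{[26}$, $S_{[46}$, $S_{64]}$ or $S_{62]}$, depending on which of $K_3,K_6$ is empty and on whether $v$ is complete to $K_1$ or to $K_5$. Otherwise $v$ fails to meet some $K_i$, and I would identify the unique pair of cyclically extreme indices $(i,j)$ such that $v$ is adjacent to $K_i$ and $K_j$ and anticomplete to the opposite arc. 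According to whether the arc from $i$ to $j$ has cyclic length $2$, $3$ or $4$, I would invoke Claim \ref{claim:4tent_2}, \ref{claim:4tent_3} or \ref{claim:4tent_4} respectively, obtaining both that $v$ is complete to the interior of the arc (so $v\in S_{ij}$) and a specific bracketed refinement, which simultaneously yields the coincidences stated in the fifth bullet.

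Second, for the emptiness assertions: the remark following Claim \ref{claim:4tent_0} immediately gives $S_{31}=S_{41}=S_{52}=S_{53}=\emptyset$, and Claim \ref{claim:4tent_1} gives $S_{51}=\emptyset$ and $S_{42}=\emptyset$ when $K_3\neq\emptyset$. The remaining emptiness, namely $S_{21}$, $S_{32}$, $S_{43}$, $S_{54}$, and $S_{42}$ when $K_3=\emptyset$, I would obtain by observing that a putative vertex in any such set would be adjacent to every nonempty $K_i$; Claim \ref{claim:4tent_5} then forces completeness to $K_2,K_4$ (and to $K_3$ if $K_3\neq\emptyset$, and to $K_1$ or $K_5$ if $K_6\neq\emptyset$), after which the required adjacency/anticomplete conditions of the corresponding $S_{ij}$ cannot all be satisfied with $i,j$ taken in the prescribed positions --- the vertex must instead lie in one of the bracketed sets of the fifth bullet.

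The bracketed coincidences ($S_{13}=S_{[13}$, $S_{14}=S_{14]}$, $S_{25}=S_{[25}$, $S_{26}=S_{[26}$, $S_{35}=S_{35]}$, $S_{46}=S_{[46}$, $S_{62}=S_{62]}$, $S_{64}=S_{64]}$) are then simply a reformulation of the conclusions of Claims \ref{claim:4tent_2}--\ref{claim:4tent_4}: in each case the claim produces completeness to one endpoint $K_i$ or $K_j$ in addition to completeness to the interior of the arc, and I would traverse the list pairing each coincidence with its source case. The main obstacle is less conceptual than organizational: correctly handling the degenerate cases $K_3=\emptyset$ and $K_6=\emptyset$ so that no vertex in $S$ is counted in two different $S_{ij}$, and so that the bookkeeping of the cyclic indices modulo $6$ is consistent across the sub-cases of each claim.
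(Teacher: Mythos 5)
Your proposal matches the paper's approach: the paper gives no separate argument for this lemma, stating only that it follows as a consequence of Claims \ref{claim:4tent_0}--\ref{claim:4tent_5}, and your write-up is precisely the bookkeeping that assembles those claims into the five bullets (emptiness from Claim \ref{claim:4tent_0}'s remark and Claim \ref{claim:4tent_1}, the bracketed coincidences from Claims \ref{claim:4tent_2}--\ref{claim:4tent_4}, and the all-adjacent case from Claim \ref{claim:4tent_5}). The derivation is correct and no new ideas beyond the cited claims are needed.
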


\section{Partitions of S and K for a graph containing an induced co-$4$-tent} \label{sec:co4tent_partition}

Let $G=(K,S)$ be a split graph where $K$ is a clique and $S$ is an independent set, and suppose that $G$ contains no induced tent or $4$-tent. 
Let $H$ be a co-$4$-tent induced subgraph of $G$. Let $V(T)=\{k_1$,$k_3$,$k_5$,$s_{13}$,$s_{35}$,$s_1$,$s_5\}$ where $k_1,k_3,k_5\in K$, $s_{13}$,$s_{35}$,$s_1$, $s_5$ in $S$ such that the neighbors of $s_{ij}$ in $H$ are precisely $k_i$ and $k_j$ and the neighbor of $s_i$ in $H$ is precisely $k_i$.

We introduce sets $K_1,K_2,\ldots,K_{15}$ as follows.
\begin{itemize}
 \item Let $K_1$ be the set of vertices of $K$ whose only neighbors in $V(T)\cap S$ are $s_1$ and $s_{13}$. Analogously, let $K_5$ be the set of vertices of $K$ whose only neighbors in $V(T)\cap S$ are $s_5$ and $s_{35}$, and let $K_3$ be the set of vertices of $K$ whose only neighbors in $V(T)\cap S$ are $s_{13}$ and $s_{35}$. Let $K_{13}$ be the set of vertices of $K$ whose only neighbors in $V(T)\cap S$ are $s_1$ and $s_5$, $K_{14}$ be the set of vertices of $K$ whose only neighbors in $V(T)\cap S$ are $s_{13}$ and $s_5$ and $K_{15}$ be the set of vertices of $K$ whose only neighbors in $V(T)\cap S$ are $s_1$ and $s_{35}$.
 \item Let $K_2$ be the set of vertices of $K$ whose neighbors in $V(T)\cap S$ are precisely $s_1$, $s_{13}$ and $s_{35}$, and let $K_4$ be the set of vertices of $K$ whose neighbors in $V(T)\cap S$ are precisely $s_5$, $s_{13}$ and $s_{35}$. Let $K_9$ be the set of vertices of $K$ whose neighbors in $V(T)\cap S$ are precisely $s_1$, $s_{13}$ and $s_5$, and let $K_{10}$ be the set of vertices of $K$ whose neighbors in $V(T)\cap S$ are precisely $s_1$, $s_{35}$ and $s_5$.
 \item Let $K_6$ be the set of vertices of $K$ whose only neighbor in $V(T)\cap S$ is precisely $s_{35}$, and let $K_8$ be the set of vertices of $K$ whose only neighbor in $V(T)\cap S$ is precisely $s_{13}$. Let $K_{11}$ be the set of vertices of $K$ whose only neighbor in $V(T)\cap S$ is precisely $s_1$, and let $K_{12}$ be the set of vertices of $K$ whose only neighbor in $V(T)\cap S$ is precisely $s_5$.
 \item Let $K_7$ be the set of vertices of $K$ that are anticomplete to $V(T) \cap S$.
 
\end{itemize}

\begin{remark} \label{obs:co4tent_0}
	If $K_{13} \neq \emptyset$, then there is an induced $4$-sun in $G$.
	If $K_{14} \neq \emptyset$, $K_{15} \neq \emptyset$, $K_{9} \neq \emptyset$ or $K_{10} \neq \emptyset$, then we find an induced tent.
	Moreover, if $K_{11} \neq \emptyset$ or $K_{12} \neq \emptyset$, then we find an induced $4$-tent in $G$.
	Hence, in virtue of the previous chapters, we assume that $K_9, \ldots, K_{15}$ are empty sets.
\end{remark} 

The following Lemma is straightforward.

\begin{lema}\label{lema:co4tent_0}
	Let $G=(K,S))$ be a split graph that contains no induced tent or $4$-tent. If $G$ is $\{ \mathcal{T}, \mathcal{F} \}$-free, then $\{K_1, K_2, \ldots, K_8\}$ is a partition of $K$.
\end{lema}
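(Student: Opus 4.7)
The plan is to perform a direct case analysis on the neighborhood $N(v)\cap(V(H)\cap S)$ of an arbitrary $v\in K$, exactly in the style of the one-line proof of Lemma~\ref{lema:4tent_0}. Since $V(H)\cap S=\{s_1,s_5,s_{13},s_{35}\}$ has four elements, there are $2^4=16$ possible adjacency patterns for $v$, and the definitions of $K_1,\ldots,K_{15}$ assign a distinct set to each of fifteen of these patterns (no two $K_i$ can share a vertex because their defining patterns differ). The partition claim therefore reduces to showing two things: (i) the seven sets $K_9,K_{10},\ldots,K_{15}$ are empty under our hypotheses, and (ii) no $v\in K$ realises the sixteenth adjacency pattern, namely being adjacent to every vertex of $V(H)\cap S$.

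For (i), I would simply appeal to Remark~\ref{obs:co4tent_0}: since $G$ contains no induced tent and no induced $4$-tent, and since $\{\mathcal{T},\mathcal{F}\}$-freeness in particular rules out an induced $4$-sun, the remark immediately forces $K_9=K_{10}=\cdots=K_{15}=\emptyset$, reducing $\{K_1,\ldots,K_{15}\}$ to $\{K_1,\ldots,K_8\}$. For (ii), I would argue in one line, analogously to Lemma~\ref{lema:4tent_0}: if some $v\in K$ were adjacent to all four of $s_1,s_5,s_{13},s_{35}$, then since $v$ is automatically adjacent to $k_1,k_3,k_5$ as well (they lie in the clique $K$), the eight vertices $\{v\}\cup V(H)$ would induce the join $\text{co-}4\text{-tent}\vee K_1$. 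This graph should be listed, or contained as an induced subgraph of some listed graph, in $\mathcal{T}\cup\mathcal{F}$ by inspection of Figures~\ref{fig:forb_T_graphs} and~\ref{fig:forb_F_graphs}, contradicting $\{\mathcal{T},\mathcal{F}\}$-freeness. Combining (i) and (ii) with the pairwise disjointness already observed yields $K=K_1\cup\cdots\cup K_8$ as a disjoint union.

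The main obstacle I anticipate is step (ii): one has to locate $\text{co-}4\text{-tent}\vee K_1$, or some smaller forbidden configuration it induces, inside the explicit lists $\mathcal{T}$ and $\mathcal{F}$ (just as the author relied on $4\text{-tent}\vee K_1$ being forbidden in Lemma~\ref{lema:4tent_0}). If the full eight-vertex join does not appear verbatim, the fallback is to extract a smaller forbidden subgraph from $\{v\}\cup V(H)$: since $s_1$ and $s_5$ behave as pendants off $k_1$ and $k_5$ respectively while being jointly attached to $v$, a carefully chosen six- or seven-vertex subset such as $\{v,k_1,k_5,s_1,s_5,s_{13},s_{35}\}$ should already contain a $4$-sun, a tent-with-center, or one of the Tucker-type forbidden configurations, which is enough to rule out the sixteenth adjacency pattern and complete the partition argument.
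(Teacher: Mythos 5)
Your proposal is correct and follows essentially the same route as the paper: the paper's proof likewise observes that a vertex of $K$ adjacent to all four of $s_1,s_5,s_{13},s_{35}$ would yield a co-$4$-tent${}\vee{}K_1$, and then invokes Remark~\ref{obs:co4tent_0} to discard $K_9,\ldots,K_{15}$. The obstacle you anticipate in step (ii) does not arise, since the paper explicitly notes that the co-$4$-tent${}\vee{}K_1$ is the split graph whose $A(S,K)$ matrix represents the Tucker matrix $M_{II}(4)$, hence it belongs to the family $\mathcal{T}$.
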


\begin{proof}
Every vertex in $K$ is adjacent to precisely one, two, three or no vertices of $V(T)\cap S$, for if it is adjacent to every vertex in $V(T) \cap S$, then we find a co-$4$-tent${}\vee{}K_1$. Moreover, by the previous remark, the only possibilities are the sets $K_1, \ldots, K_8$.
\end{proof}

Let $i,j \in \{1, \ldots, 8\}$ and let $S_{ij}$ defined as in the previous sections.

\begin{remark} \label{obs:co4tent_1}
If $K_4 = \emptyset$, then there is a split decomposition of $G$. Let us consider the subset $K_5$ on the one hand, and on the other hand a vertex $u \not\in G$ such that $u$ is complete to $K_5$ and is anticomplete to $V(G) \setminus K_5$. Let $G_1$ and $G_2$ be the subgraphs induced by the vertex subsets $V_1 = V(G) \setminus S_{55}$ and $V_2 = \{u\} \cup K_5 \cup S_{55}$, respectively. Hence, $G$ is the result of the split composition of $G_1$ and $G_2$ with respect to $K_5$ and $u$. The same holds if $K_2 = \emptyset$ considering the subgraphs induced by the vertex subsets $V_1 = V(G) \setminus S_{11}$ and $V_2 = \{u\} \cup K_1 \cup S_{11}$, where in this case $u$ is complete to $K_1$ and is anticomplete to $V(G) \setminus K_1$.
	
	If we consider $H$ a minimally non-circle graph, then $H$ is a prime graph, for if not one of the factors should be non-circle and thus $H$ would not be minimally non-circle \cite{B94}. Hence, in order characterize those circle graphs that contain an induced co-$4$-tent, we will assume without loss of generality that $G$ is a prime graph, and therefore $K_2 \neq \emptyset$ and $K_4 \neq \emptyset$. 
\end{remark}

\begin{figure}[h]
	\centering
	\includegraphics[scale=1]{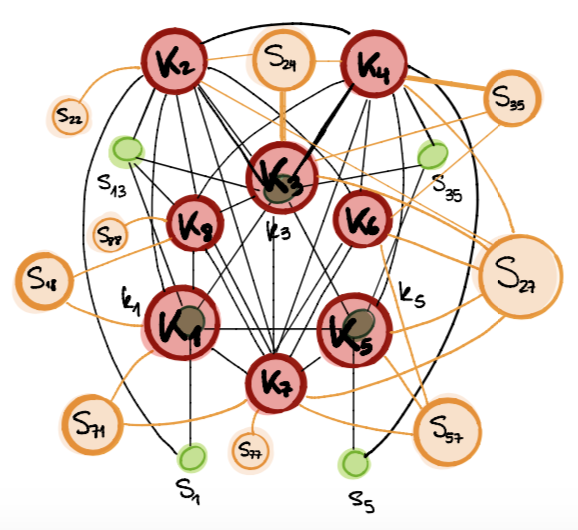}  
	\caption{The partition of $K$ and some of the subsets of $S$ according to the adjancencies with $H$.} 	\label{fig:co4tent_ext} 
\end{figure}

\begin{claim} \label{claim:co4tent_0}
	Let $G=(K,S)$ be a split graph that contains an induced co-$4$-tent. If $v$ in $S$ fullfils one of the following conditions:
	\begin{itemize}
	\item $v$ is adjacent to $K_1$ and $K_5$ and is not complete to $K_3$ (resp.\ $K_2$ or $K_4$)
	\item $v$ is adjacent to $K_1$ and $K_4$ and is not complete to $K_2$
	\item $v$ is adjacent to $K_2$ and $K_5$ and is not complete to $K_4$ 
	\item $v$ is adjacent to $K_i$ and $K_{i+2}$ and is not complete to $K_{i+1}$, for $i=1,3$ 
	\item $v$ is adjacent to $K_2$ and $K_4$ and is not complete to $K_1$ and $K_5$
	\item $v$ is adjacent to $K_1$, $K_3$ and $K_5$ and is not complete to $K_2$ and $K_4$
	\end{itemize}
	Then, $G$ contains either an induced tent or $4$-tent.
\end{claim}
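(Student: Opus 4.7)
The plan is a straightforward case analysis: for each of the six bulleted situations, I would exhibit an explicit set of six vertices of $G$ that induces a tent or (in the single remaining case) a set of seven vertices that induces a $4$-tent. In every construction the triangle (respectively $K_4$) lies inside $K$ and is built from witness neighbours of $v$ in the classes $K_i$ named by the hypothesis, together with the witness non-neighbour supplied by the ``not complete'' clause; the independent triple is then formed by $v$ together with two vertices taken from $V(H) \cap S = \{s_1, s_5, s_{13}, s_{35}\}$.

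Before starting, I would record, as a lookup table, the adjacencies between $\{s_1, s_5, s_{13}, s_{35}\}$ and the partition classes, which follow directly from the definitions of $K_1, \ldots, K_8$: the vertex $s_1$ is adjacent precisely to $K_1 \cup K_2$, $s_5$ precisely to $K_4 \cup K_5$, $s_{13}$ precisely to $K_1 \cup K_2 \cup K_3 \cup K_4 \cup K_8$, and $s_{35}$ precisely to $K_2 \cup K_3 \cup K_4 \cup K_5 \cup K_6$. Since $K$ is a clique, once the witnesses in $K$ are fixed the only nontrivial edges to verify are those between the chosen elements of $S$ and the chosen elements of $K$, and these are entirely determined by this table.

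Concretely, the plan is as follows. In Case~1, picking $w_1 \in K_1 \cap N(v)$, $w_5 \in K_5 \cap N(v)$ and $w_j \in K_j \setminus N(v)$ for the appropriate $j \in \{2,3,4\}$, the set $\{w_1, w_j, w_5, s_{13}, s_{35}, v\}$ induces a tent: the three chord vertices are $s_{13}$, $s_{35}$, $v$, and the table above confirms the required adjacencies in each sub-case. In Cases~2 and~3, which are mirror images under the symmetry $1 \leftrightarrow 5$, $2 \leftrightarrow 4$, $s_1 \leftrightarrow s_5$, I would take the induced subgraphs on $\{w_1, w_2, w_4, s_1, s_{35}, v\}$ and $\{w_2, w_4, w_5, s_{13}, s_5, v\}$ respectively; both are tents. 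In Case~4, use $\{w_1, w_2, w_3, s_1, s_{35}, v\}$ when $i=1$ and $\{w_3, w_4, w_5, s_{13}, s_5, v\}$ when $i=3$, each inducing a tent. Case~5 is the only one that yields a $4$-tent: choosing $w_2 \in K_2 \cap N(v)$, $w_4 \in K_4 \cap N(v)$, $w_1 \in K_1 \setminus N(v)$, $w_5 \in K_5 \setminus N(v)$, the clique is $\{w_1, w_2, w_4, w_5\}$ and the independent triple $s_1, v, s_5$ provides, respectively, the chord adjacent to $\{w_1, w_2\}$, $\{w_2, w_4\}$, $\{w_4, w_5\}$. Finally Case~6 reduces to Case~4 with $i=1$, since the hypothesis ``adjacent to $K_1$ and $K_3$ and not complete to $K_2$'' is already present there, so the same induced tent $\{w_1, w_2, w_3, s_1, s_{35}, v\}$ works.

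There is no real obstacle to this proof: the argument is a bookkeeping exercise in which, given the lookup table for the adjacencies of $s_1, s_5, s_{13}, s_{35}$ to the classes $K_i$, every case is verified simply by reading off which entries of the table are $1$ and which are $0$. The mild subtlety is only to pick, in each case, a triple from $\{s_1, s_5, s_{13}, s_{35}\} \cup \{v\}$ whose restricted adjacency pattern to the chosen $K$-witnesses matches the chord pattern of a tent or of a $4$-tent; the table makes this essentially mechanical.
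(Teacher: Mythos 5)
Your proof is correct and follows essentially the same strategy as the paper's: for each bullet you exhibit the same (or an interchangeable) explicit set of witnesses in $K$ together with two of $s_1,s_5,s_{13},s_{35}$ and $v$ inducing a tent, and the same $4$-tent in the fifth case; your only deviations are cosmetic (using $s_{13},s_{35}$ uniformly in Case~1 where the paper switches to $s_1$ or $s_5$ for the $K_2$/$K_4$ sub-cases, and reducing Case~6 to Case~4 where the paper writes out the two tents explicitly). All adjacency checks against your lookup table are accurate.
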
 

\begin{proof}
If $v$ is adjacent to $K_1$ and $K_5$ and is not complete to $K_3$, then we find a tent induced by $\{s_{13}$, $s_{35}$, $v$, $k_1$, $k_3$, $k_5\}$. If instead it is not complete to $K_2$, then we find a tent induced by $\{s_{1}$, $s_{35}$, $v$, $k_1$, $k_2$, $k_5\}$. It is analogous by symmetry if $v$ is not complete to $K_4$.
If $v$ is adjacent to $K_1$ and $K_4$ and is not complete to $K_2$, then the tent is induced by $\{s_{1}$, $s_{35}$, $v$, $k_1$, $k_2$, $k_4\}$.
It is analogous by symmetry if $v$ is adjacent to $K_2$ and $K_5$ and is not complete to $K_4$.
If $v$ is adjacent to $K_1$ and $K_3$ and is not complete to $K_2$, then we find a tent induced by $\{s_{1}$, $s_{35}$, $v$, $k_1$, $k_2$, $k_3\}$. It is analogous by symmetry if $i=3$.
If $v$ is adjacent to $K_2$ and $K_4$ and is not complete to $K_1$ and $K_5$, then we find a 4-tent induced by the set $\{s_{1}$, $s_{5}$, $v$, $k_1$, $k_2$, $k_4$, $k_5\}$.
Finally, if $v$ is adjacent to $K_1$, $K_3$ and $K_5$ and is not complete to $K_2$ and $K_4$, then there are tents induced by the sets $\{s_{13}$, $s_{5}$, $v$, $k_3$, $k_4$, $k_5\}$ and $\{s_{35}$, $s_{1}$, $v$, $k_1$, $k_2$, $k_3 \}$. 
\end{proof}

It follows from the previous claim that the following subsets are empty:
$S_{51}$, $S_{52}$, $S_{53}$, $S_{41}$, $S_{31}$, $S_{24}$.

Moreover, the following subsets coincide:
$S_{54} = S_{54]}$, $S_{42} = S_{42]}$, $S_{43} = S_{[43]}$, $S_{32} = S_{32]}$.

\begin{claim} \label{claim:co4tent_1} 
	If there is a vertex $v$ in $S$ such that $v$ belongs to either $S_{61}$, $S_{71}$, $S_{81}$, $S_{56}$, $S_{57}$, $S_{58}$, $S_{67}$, $S_{68}$ or $S_{78}$, then there is an induced tent or a $4$-tent in $G$.
\end{claim}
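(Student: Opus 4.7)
The strategy is to handle the nine subsets one by one by producing an explicit induced tent or $4$-tent. For a vertex $v\in S_{ij}$ I always have witnesses $u_i\in K_i$ and $u_j\in K_j$ adjacent to $v$ by definition of $S_{ij}$. These are combined with the vertices $k_1,k_3,k_5,s_1,s_5,s_{13},s_{35}$ of the fixed induced co-$4$-tent $H$, and with $k_2\in K_2$, $k_4\in K_4$, whose existence is granted by Remark~\ref{obs:co4tent_1}.

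First, I would exploit the reflection symmetry of $H$ that swaps $k_1\leftrightarrow k_5$, $s_1\leftrightarrow s_5$, $s_{13}\leftrightarrow s_{35}$ and fixes $k_3$. This induces the permutation $\sigma=(1\,5)(2\,4)(6\,8)$ on the indices of the partition of $K$ and reverses the cyclic order, so it interchanges $S_{ij}$ with $S_{\sigma(j)\,\sigma(i)}$. This pairs the cases into $\{S_{61},S_{58}\}$, $\{S_{71},S_{57}\}$, $\{S_{81},S_{56}\}$, $\{S_{67},S_{78}\}$ and leaves $S_{68}$ self-symmetric, reducing the task to five explicit constructions.

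For $v\in S_{61}$ the set $\{u_1,k_3,u_6,v,s_{13},s_{35}\}$ induces a tent: the triangle is $\{u_1,k_3,u_6\}$, and each of $v,s_{13},s_{35}$ has precisely two neighbors in it (using that $K_1$ members are adjacent only to $s_1,s_{13}$ in $V(H)\cap S$ and $K_6$ members only to $s_{35}$). For $v\in S_{68}$ the analogous set $\{u_6,k_3,u_8,v,s_{35},s_{13}\}$ induces a tent. For $v\in S_{71}$, the set $\{u_7,u_1,k_3,k_5,v,s_{13},s_{35}\}$ induces a $4$-tent on the clique $\{u_7,u_1,k_3,k_5\}$ with $v$ playing $s_{u_7 u_1}$, $s_{13}$ playing $s_{u_1 k_3}$ and $s_{35}$ playing $s_{k_3 k_5}$ (the required nonadjacencies follow from $K_7$ being anticomplete to $V(H)\cap S$ and $v$ being anticomplete to $K_3\cup K_5$). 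For $v\in S_{67}$ the set $\{u_7,u_6,k_3,k_1,v,s_{35},s_{13}\}$ induces a $4$-tent by the same pattern. Finally, for $v\in S_{81}$ the set $\{u_1,u_8,k_2,k_3,v,s_1,s_{35}\}$ induces a $4$-tent, with clique ordered $u_8-u_1-k_2-k_3$, using that $s_1$ is adjacent only to $K_1\cup K_2$ members and $s_{35}$ only to $K_2\cup K_3\cup K_4\cup K_5\cup K_6$ members.

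The main obstacle is the case $S_{81}$ (and by symmetry $S_{56}$). The naive attempt of building a tent on the triangle $\{u_1,k_3,u_8\}$ fails because $s_{13}$ is adjacent to all three of its vertices, so it cannot play an $s$-role; and no other vertex of $V(H)\cap S$ has exactly two neighbors in that triangle. The resolution is to invoke the prime-graph hypothesis $K_2\neq\emptyset$ from Remark~\ref{obs:co4tent_1}: picking $k_2\in K_2$, which is adjacent to $s_1,s_{13},s_{35}$, shifts the conflict and allows the replacement $k_3\leadsto k_2$, $s_{13}\leadsto s_1$, yielding the required $4$-tent. All other cases go through without ever needing $K_2$ or $K_4$, using only the co-$4$-tent vertices and the chosen neighbors $u_i,u_j$ of $v$.
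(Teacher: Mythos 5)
Your proposal is correct and follows essentially the same strategy as the paper: for each of the nine subsets one exhibits an explicit induced tent or $4$-tent built from the co-$4$-tent vertices together with the witnesses $u_i\in K_i$, $u_j\in K_j$, and I checked that each of your five configurations is indeed induced (in particular the $4$-tent $\{u_8,u_1,k_2,k_3,v,s_1,s_{35}\}$ for $S_{81}$, which correctly isolates the one place where the primality hypothesis $K_2\neq\emptyset$ is needed). The only differences from the paper's proof are organizational: you halve the casework via the reflection symmetry $\sigma=(1\,5)(2\,4)(6\,8)$, and some of your witness sets are leaner (e.g.\ for $S_{61}$ and $S_{68}$ you obtain a tent on $\{u_i,k_3,u_j\}$ without touching $K_2$ or $K_4$, whereas the paper uses different sets and, for $S_{61}$, a vertex of $K_2$).
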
 
\begin{proof}
	If $v$ in $S_{61}$, then we find a tent induced by the set $\{s_{1}$, $s_{35}$, $v$, $k_1$, $k_2$, $k_6\}$. If $v$ in $S_{71}$, then we find a $4$-tent induced by the set $\{s_{1}$, $s_{35}$, $v$, $k_7$, $k_1$, $k_2$, $k_3\}$. If $v$ in $S_{81}$, then we find a $4$-tent induced by the set $\{s_{1}$, $s_{35}$, $v$, $k_8$, $k_1$, $k_2$, $k_4\}$.  
If $v$ in $S_{56}$, then we find a $4$-tent induced by the set $\{s_{13}$, $s_{5}$, $v$, $k_3$, $k_4$, $k_5$, $k_6\}$. It is analogous for $v$ in $S_{57}$, swaping $k_6$ for $k_7$. If $v$ in $S_{58}$, then we find a $4$-tent induced by the set $\{s_{35}$, $s_{1}$, $v$, $k_1$, $k_2$, $k_5$, $k_8\}$.
If $v$ in $S_{67}$, then we find a $4$-tent induced by the set $\{s_{13}$, $s_{35}$, $v$, $k_1$, $k_2$, $k_6$, $k_7\}$. If $v$ in $S_{68}$, then we find a 4-tent induced by the set $\{s_{13}$, $s_{35}$, $v$, $k_1$, $k_5$, $k_6$, $k_7\}$. Finally, If $v$ in $S_{78}$, then we find a $4$-tent induced by the set $\{s_{13}$, $s_{5}$, $v$, $k_4$, $k_5$, $k_7$, $k_8\}$. 
\end{proof}

\vspace{1mm}
As a direct consequence of the previous claims, we will assume without loss of generality that the following subsets are empty:
$S_{5i}$ for $i=1, 2, 3, 6, 7, 8$, $S_{4i}$ for $i=1, 2$, $S_{6i}$ for $i=1, 7, 8$, $S_{7i}$ for $i=1, 8$, $S_{81}$, $S_{31}$ and $S_{24}$.

\begin{claim} \label{claim:co4tent_2} 
	If $G$ is $\{ \mathcal{T}, \mathcal{F} \}$-free and contains no induced tent or $4$-tent, then $S_{64}= \emptyset$, $S_{54} = S_{54]}$ and $S_{65}=S_{65]}$. Moreover, if $K_6 \neq \emptyset$, then $S_{54} = S_{[54]}$.
\end{claim}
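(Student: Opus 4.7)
The plan is to treat each of the four statements in turn, assuming in each case that the thesis fails and exhibiting a vertex subset that induces either a tent, a $4$-tent, or a graph from $\mathcal{T}\cup\mathcal{F}$. The raw material is always the same: the distinguished co-$4$-tent vertices $k_1,k_3,k_5,s_1,s_5,s_{13},s_{35}$, a hypothetical offending vertex $v\in S$, and witness vertices from $K_4,K_5,K_6$ (recalling that $K_4$ is adjacent to $\{s_5,s_{13},s_{35}\}$, $K_5$ to $\{s_5,s_{35}\}$, and $K_6$ only to $\{s_{35}\}$ inside $V(T)\cap S$).

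For the first assertion, $S_{64}=\emptyset$, I would take $v\in S_{64}$ with a neighbor $k_6\in K_6$ and a neighbor $k_4\in K_4$. By the definition of $S_{64}$, $v$ is complete to $K_5$ and anticomplete to $K_7,K_8,K_1,K_2,K_3$, so in particular $v$ is nonadjacent to $k_1,k_3$ and to some $k_5\in K_5$. Then the candidate set $\{s_5,s_{13},s_{35},v,k_3,k_4,k_5,k_6\}$ (or a close variant augmented with $k_1$) has the prescribed cross-adjacencies to match one of $F_0$, $M_{III}(4)$ or $M_V$ from $\mathcal{F}$; since $G$ has no induced tent or $4$-tent, the only possibility left is a forbidden subgraph, contradicting $\{\mathcal{T},\mathcal{F}\}$-freeness.

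For $S_{54}=S_{54]}$ and $S_{65}=S_{65]}$ I would argue by the same template: let $v\in S_{54}$ have a nonneighbor $k'_4\in K_4$ and a neighbor $k_4\in K_4$, plus a neighbor $k_5\in K_5$. The pair $\{k_4,k'_4\}$ is the standard obstruction to completeness, and combining it with $s_5,s_{13},s_{35}$ and representatives of $K_1,K_3$ (both nonempty) should yield an $F_0$-style induced subgraph on eight vertices, analogous to the one used to prove Claim~\ref{claim:4tent_2}. The case of $S_{65}$ follows by the symmetry of the co-$4$-tent, interchanging the roles $(K_1,K_2,K_6)\leftrightarrow(K_5,K_4,K_8)$, so no new argument is needed.

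Finally, the conditional $K_6\neq\emptyset\Rightarrow S_{54}=S_{[54]}$ requires showing the additional completeness to $K_5$. With completeness to $K_4$ already established by the previous step, I would assume there is $v\in S_{54}$ together with $k'_5\in K_5$ a nonneighbor and $k_5\in K_5$ a neighbor of $v$, and pick $k_6\in K_6$ (possible by hypothesis). Since $k_6$ is adjacent to $s_{35}$ only, the set $\{s_5,s_{13},s_{35},v,k_4,k_5,k'_5,k_6\}$, possibly extended by $k_1$ or $k_3$, provides the adjacency pattern of a graph in $\mathcal{F}$, finishing the proof. The main obstacle in all four cases is bookkeeping: one must carefully verify that every extracted vertex set has exactly the adjacencies of its target forbidden subgraph and rule out that the obstruction is instead a tent or $4$-tent already excluded by hypothesis, rather than a genuine member of $\mathcal{T}\cup\mathcal{F}$.
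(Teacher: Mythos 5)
Your overall strategy (exhibit a forbidden induced configuration for each offending vertex) is the paper's strategy, but the execution has a concrete error at the very first and most important step. You have the definition of $S_{64}$ inverted: a vertex $v\in S_{64}$ is adjacent to some vertex of $K_6$ and some vertex of $K_4$, \emph{complete} to the arc $K_7,K_8,K_1,K_2,K_3$ between them, and \emph{anticomplete} to $K_5$ --- not the other way around, as you state (your sentence is also internally inconsistent, since you declare $v$ complete to $K_5$ and then nonadjacent to some $k_5\in K_5$). The whole point of the paper's argument is that $v$ is adjacent to $k_1$, $k_4$, $k_6$ but not to $k_5$, so that $\{k_1,k_4,k_5,k_6,v,s_5,s_{13},s_{35}\}$ induces exactly $M_{II}(4)$ (the co-$4$-tent${}\vee{}K_1$), a member of $\mathcal{T}$. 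With your adjacencies the candidate set you write down does not produce any of the configurations you name, and you never actually identify which forbidden graph arises; ``one of $F_0$, $M_{III}(4)$ or $M_V$'' is a guess, and none of them is the right answer.

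Two further gaps. First, $S_{54}=S_{54]}$ and $S_{65}=S_{65]}$ are \emph{not} symmetric to one another: the former asserts completeness to $K_4$ and the latter completeness to $K_5$, and the paper proves them by different means --- a plain $6$-vertex tent $\{k_1,k_4,k_5,v,s_{13},s_{35}\}$ for the former (your $8$-vertex ``$F_0$-style'' set is both unverified and unnecessary), versus a reuse of the same $M_{II}(4)$ configuration from the $S_{64}$ case for the latter (a vertex of $S_{65}$ not complete to $K_5$ has exactly the adjacency pattern of a vertex of $S_{64}$ on $\{k_1,k_4,k_5,k_6\}$). The reversal symmetry $(K_1,K_2,K_6)\leftrightarrow(K_5,K_4,K_8)$ does not carry one statement to the other. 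Second, the ``Moreover'' part ($K_6\neq\emptyset\Rightarrow S_{54}=S_{[54]}$) is again the same $M_{II}(4)$ observation --- a vertex of $S_{54}$ is complete to $K_6$ and $K_1$, complete to $K_4$ by the previous step, so a non-neighbour $k_5$ would recreate $\{k_1,k_4,k_5,k_6,v,s_5,s_{13},s_{35}\}$; your proposed set containing both a neighbour and a non-neighbour in $K_5$ is not the configuration needed. In short: right template, but the load-bearing identification of $M_{II}(4)$ is missing and the adjacency bookkeeping, which you yourself flag as the main obstacle, is wrong where it matters most.
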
 
\begin{proof}
	Let $v$ in $S_{64}$, $k_i$ in $K_i$ for $i=1, 4, 5, 6$ such that $v$ is adjacent to $k_1$, $k_4$ and $k_6$ and is nonadjacent to $k_5$. Hence, we find $M_{II}(4)$ induced by the vertex set $\{ k_1$, $k_4$, $k_5$, $k_6$, $v$, $s_5$, $s_{13}$, $s_{35} \}$. Hence, $S_{64} = \emptyset$. 
Notice that this also implies that, if $K_6 \neq \emptyset$, then every vertex in $S_{54}$ or $S_{65}$ is complete to $K_5$. Suppose now that $v$ lies in $S_{54}$ and is not complete to $K_4$. Thus, there is a vertex $k_4$ in $K_4$ such that $v$ is nonadjacent to $k_4$. Let $k_1$ in $K_1$ and $k_5$ in $K_5$ such that $k_1$ and $k_5$ are adjacent to $v$. Hence, we find a tent induced by $\{ k_1$, $k_4$, $k_5$, $v$, $s_{13}$, $s_{35} \}$. 
\end{proof}

\vspace{1mm}
As a consequence of the previous claim, we will assume througout the proof that $S_{54} = \emptyset$. This follows from the fact that the vertices in $S_{54}$ are exactly those vertices in $S_{76}$ that are complete to $K_6$ and $K_7$, since the endpoints of both subsets coincide. The same holds for the vertices in $S_{65}$, which are those vertices in $S_{76}$ that are complete to $K_7$.

\vspace{1mm}
We want to prove that $\{S_{ij}\}$ is indeed a partition of $S$. Towards this purpose, we need the following claims.

\begin{claim} \label{claim:co4tent_3} 
	If $G$ is $\{ \mathcal{T}, \mathcal{F} \}$-free and $v$ in $S$ is adjacent to $K_i$ and $K_{i+2}$ and anticomplete to $K_j$ for $j < i$ and $j>i+2$, then $v$ is complete to $K_{i+1}$.
\end{claim}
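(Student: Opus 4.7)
The plan is to reduce this claim directly to Claim \ref{claim:co4tent_0}, whose fourth bullet already handles exactly the configurations in question for the relevant indices. In the partition $\{K_1,\ldots,K_8\}$ of the co-$4$-tent case, the only pairs $(K_i,K_{i+2})$ that can both have $v\in S$ as a common neighbor while $v$ remains anticomplete to every other $K_j$ are those with $i\in\{1,3\}$: the case $i=2$ forces $v$ to be anticomplete to both $K_1$ and $K_5$, which together with adjacency to $K_2$ and $K_4$ falls under the fifth bullet of Claim \ref{claim:co4tent_0} and produces an induced $4$-tent, so that case is vacuous; and for larger $i$ the sets $K_6,K_7,K_8$ have no consistent common-neighbor pattern with $K_{i+2}$ in light of the adjacencies specified in the partition.

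Assuming $i\in\{1,3\}$, I would argue by contradiction: suppose $v\in S$ is adjacent to vertices $k_i\in K_i$ and $k_{i+2}\in K_{i+2}$, anticomplete to $K_j$ for $j<i$ and $j>i+2$, yet not complete to $K_{i+1}$. Pick a non-neighbor $k_{i+1}\in K_{i+1}$ of $v$ (the set $K_{i+1}\in\{K_2,K_4\}$ is nonempty by Remark~\ref{obs:co4tent_1}). For $i=1$ the six vertices $\{s_1,s_{35},v,k_1,k_2,k_3\}$ induce a tent: the triangle $k_1k_2k_3$ sits in the clique $K$, the vertices $s_1,s_{35},v$ are pairwise nonadjacent in $S$, and the pendant adjacencies $s_1\sim k_1,k_2$; $s_{35}\sim k_2,k_3$; $v\sim k_1,k_3$ follow from the definition of $K_1,K_2,K_3$. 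The case $i=3$ is symmetric, producing a tent on $\{s_{13},s_5,v,k_3,k_4,k_5\}$. Either way this contradicts the working hypothesis that $G$ contains no induced tent, so $v$ must be complete to $K_{i+1}$.

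The hard part, such as it is, lies only in confirming that no other value of $i$ gives a nonvacuous instance of the claim, which amounts to consulting the list of subsets shown empty in Claims \ref{claim:co4tent_0}, \ref{claim:co4tent_1} and \ref{claim:co4tent_2}; once that is done, the verification of the induced tent in each of the two remaining cases is a direct check against the neighborhood definitions of $K_1,\ldots,K_5$ and requires no further combinatorial input.
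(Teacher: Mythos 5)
There is a genuine gap, and it sits exactly where you claim the problem disappears. Your assertion that the only nonvacuous instances of the claim are $i\in\{1,3\}$ is wrong. The cases $i=1$ and $i=3$ are precisely the ones already settled by the fourth bullet of Claim~\ref{claim:co4tent_0} (which exhibits the same tents you describe), so your argument re-proves what is already known. The substantive content of Claim~\ref{claim:co4tent_3} lies in the cases $i=4$ and $i=8$: a vertex $v$ adjacent to $K_4$ and $K_6$ and anticomplete to $K_1,K_2,K_3,K_7,K_8$ is a perfectly consistent configuration (it is the subset $S_{46}$, which Figure~\ref{fig:tabla_co4tent_1} records as possibly nonempty), and likewise $v$ adjacent to $K_8$ and $K_2$ corresponds to the possibly nonempty subset $S_{82}$. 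Your blanket dismissal that ``for larger $i$ the sets $K_6,K_7,K_8$ have no consistent common-neighbor pattern'' is unjustified: while $S_{57}$, $S_{68}$ and $S_{71}$ are indeed empty by Claims~\ref{claim:co4tent_1} and~\ref{claim:co4tent_2}, nothing rules out $S_{46}$ or $S_{82}$, and for these the claim genuinely asserts something new, namely completeness to $K_5$ and $K_1$ respectively.

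Moreover, those two cases cannot be closed with a tent or $4$-tent: the paper's proof produces a net${}\vee{}K_1$ in each of them (for $i=4$, the set $\{k_6,k_4,k_5,k_3,v,s_5,s_{13}\}$ with $k_5$ a non-neighbor of $v$; for $i=8$, the set $\{k_8,k_1,k_5,k_2,v,s_1,s_{35}\}$ with $k_1$ a non-neighbor of $v$), which is a forbidden graph from the family $\mathcal{T}$ rather than a consequence of the standing ``no induced tent or $4$-tent'' hypothesis. So the missing cases require both a different witness structure and an appeal to the $\{\mathcal{T},\mathcal{F}\}$-freeness hypothesis that your proof never invokes. To repair the argument you need to (a) correctly enumerate which $i$ survive the emptiness claims, keeping $i=4$ and $i=8$, and (b) supply the net${}\vee{}K_1$ (or an equivalent forbidden subgraph) for those two cases.
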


\begin{proof}
	Once discarded the subsets of $S$ that induce a tent or $4$-tent and those that are empty, the remaining cases are $i=4, 8$. 

Let $v$ in $S$ adjacent to $k_4$ in $K_4$ and $k_6$ in $K_6$, and suppose there is a vertex $k_5$ in $K_5$ nonadjacent to $v$. Then, we find a net${}\vee K_1$ induced by $\{k_6$, $k_4$, $k_5$, $k_3$, $v$, $s_5$, $s_{13}\}$. 

If instead $v$ in $S$ is adjacent to $k_8$ in $K_8$ and $k_2$ in $K_2$ and is nonadjacent to some $k_1$ in $K_1$, then we find a net${}\vee K_1$ induced by $\{k_8$, $k_1$, $k_5$, $k_2$, $v$, $s_1$, $s_{35}\}$.
\end{proof}

\begin{claim} \label{claim:co4tent_4} 
	If $G$ is $\{ \mathcal{T}, \mathcal{F} \}$-free and $v$ in $S$ is adjacent to $K_i$ and $K_{i+3}$ and anticomplete to $K_j$ for $j < i$ and $j>i+3$, then:
	\begin{itemize}
		\item If $i \equiv 0 \pmod{3}$, then $v$ lies in $S_{36}$.
		\item If $i \equiv 1 \pmod 3$, then $v$ lies in $S_{[14}$.
		\item If $i \equiv 2 \pmod{3}$, then $v$ lies in $S_{25]}$ or $S_{83}$.
	\end{itemize}
\end{claim}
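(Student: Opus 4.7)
The plan is to proceed by case analysis on $i \pmod 3$, combined with the restriction $1 \le i$ and $i+3 \le 8$. In every subcase the strategy is the same as in the preceding claims: assume a required completeness condition fails, pick suitable witnesses from the other sets, and exhibit an induced member of $\mathcal{T}\cup\mathcal{F}$ (or an induced tent or $4$-tent, excluded by hypothesis in Section \ref{sec:co4tent_partition}).

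For the case $i \equiv 0 \pmod 3$, the only admissible index is $i=3$, so $v$ is adjacent to some $k_3 \in K_3$ and $k_6 \in K_6$ and is anticomplete to $K_1 \cup K_2 \cup K_7 \cup K_8$. To conclude $v \in S_{36}$, I would show $v$ is complete to both $K_4$ and $K_5$. Assuming a non-neighbour $k_5 \in K_5$, I would build a net${}\vee K_1$ or an $M_{III}(3)$ using $s_{13}, s_{35}, s_5$ together with $k_3, k_5, k_6$ and one of $k_1, k_4$; a symmetric construction handles a missing neighbour in $K_4$. In each configuration, $\{\mathcal{T},\mathcal{F}\}$-freeness is violated.

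For $i \equiv 1 \pmod 3$, the candidates are $i=1$ and $i=4$. For $i=1$, $v$ is adjacent to $K_1$ and $K_4$ and anticomplete to $K_5 \cup \cdots \cup K_8$, and we must place $v$ in $S_{[14}$, i.e.\ show $v$ is complete to $K_1, K_2, K_3$. A missing vertex in $K_2$ or $K_3$ should yield an induced tent or $4$-tent using $s_1, s_{13}, s_{35}$, and a missing vertex in $K_1$ (together with a present one) should force an $F_0$ as in Claim \ref{claim:co4tent_2}. For $i=4$, $v$ is adjacent to $K_4$ and $K_7$ and anticomplete to the rest, and the target set $S_{[14}$ cannot accommodate it; I expect to derive a direct contradiction, most likely a net${}\vee K_1$ or a tent-with-centre based on $s_{13}, s_{35}, k_4, k_7$, showing this subcase is vacuous.

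The case $i \equiv 2 \pmod 3$ splits into $i=2$ and $i=5$. For $i=2$, $v$ is adjacent to $K_2$ and $K_5$ and anticomplete to $K_1 \cup K_6 \cup K_7 \cup K_8$; I would prove completeness to $K_3, K_4, K_5$ (placing $v$ in $S_{25]}$) by a symmetric argument to the $i=1$ case, now centred on $s_{35}$ instead of $s_{1}$. The subcase $i=5$, with $v$ adjacent to $K_5$ and $K_8$ and anticomplete to $K_1 \cup \cdots \cup K_4$, is the awkward one: I would establish the completeness relations that place $v$ in $S_{83}$, relying on $K_2,K_4 \neq \emptyset$ from Remark \ref{obs:co4tent_1} and the previously-proved emptiness of subsets such as $S_{5i}$ and $S_{78}$ to rule out the alternatives. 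The main obstacle is bookkeeping: the co-$4$-tent has eight partition classes and a correspondingly rich zoo of forbidden patterns, so the hard part is systematically choosing, for each missing adjacency, the right witnesses that complete to a graph in $\mathcal{T}\cup\mathcal{F}$ (together with avoiding double-counting across cases that have already been eliminated in Claims \ref{claim:co4tent_0}--\ref{claim:co4tent_3}).
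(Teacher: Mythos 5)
Your overall strategy (assume a completeness condition fails, exhibit a forbidden induced subgraph) is the right one, and your treatments of $i=3$, $i=1$, $i=2$ and the vacuity of $i=4$ line up with what the paper does. But there is a genuine gap in your case enumeration: the indices of the sets $S_{ij}$ and $K_j$ are cyclic modulo $8$ throughout this chapter, so the restriction ``$1\le i$ and $i+3\le 8$'' is wrong. The admissible values are $i\in\{3,6\}$, $i\in\{1,4,7\}$ and $i\in\{2,5,8\}$ in the three residue classes, where for instance $i=8$ means $v$ is adjacent to $K_8$ and $K_{11}=K_3$. This is not a bookkeeping technicality: the conclusion ``$v$ lies in $S_{83}$'' can \emph{only} arise from the case $i=8$, which you never treat. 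The paper's argument there shows that $v$ must be complete to $K_1$ or $K_2$ (else a $4$-tent on $\{k_8,k_1,k_2,k_3,v,s_1,s_{35}\}$), then that a non-neighbour in $K_1$ gives a net${}\vee K_1$ and a non-neighbour in $K_2$ gives a tent, forcing $v\in S_{83}$.

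Your plan to extract $S_{83}$ from the subcase $i=5$ cannot be repaired. First, a vertex adjacent to $K_5$ and $K_8$ cannot lie in $S_{83}$ by the very definition of $S_{ij}$ ($S_{83}$ requires anticompleteness to $K_4,\dots,K_7$). Second, the paper shows $i=5$ is simply vacuous: any such $v$ together with $k_5$, $k_8$, a non-neighbour $k_4\in K_4$, $s_5$ and $s_{13}$ induces a tent, so the case is discarded outright rather than contributing a conclusion. You also omit $i=6$ (adjacency to $K_6$ and $K_1$, killed by a tent using the anticompleteness to $K_2$) and $i=7$ (adjacency to $K_7$ and $K_2$, reduced to the emptiness of $S_{72}$ via Claim~\ref{claim:co4tent_2}); both are discarded in the paper, but they must appear in the analysis for the claim's trichotomy to be exhaustive.
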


\begin{proof}
	Suppose first that $i \equiv 0 \pmod{3}$. Equivalently, either $i=3$ or $i= 6$. 
	Let $v$ in $S$ such that $v$ is adjacent to some vertices $k_3$ in $K_3$ and $k_6$ in $K_6$. If there is a vertex $k_4$ in $K_4$ nonadjacent to $v$ and a vertex $k_5$ in $K_5$ adjacent to $v$, then we find a tent induced by $\{ k_3$, $k_4$, $k_5$, $v$, $s_{13}$, $s_5 \}$. If instead $k_5$ is nonadjacent to $v$, then we find a $4$-tent induced by $\{k_3$, $k_6$, $k_4$, $k_5$, $v$, $s_5$, $s_{13}\}$.
	If instead $v$ is nonadjacent to $k_4$ and is adjacent to $k_5$, then we find a $4$-tent induced by $\{k_1$, $k_4$, $k_5$, $k_6$, $v$, $s_5$, $s_{13}\}$. Hence, $v$ is complete to $K_4$ and $K_5$.
	 If $i=6$, since $v$ is anticomplete to $K_2$, then we find a tent induced by $\{k_6$, $k_1$, $k_1$, $v$, $s_1$, $s_{35}\}$.

	Let us prove the second statement. If $i \equiv 1 \pmod 3$, then either $i = 1$, $i=4$ or $i=7$.
	First, we need to see that $v$ is complete to $K_{i+1}$ and $K_{i+2}$. If $i=4,7$, then $K_7 \neq \emptyset$. If $i=4$, then there are vertices $k_4 \in K_4$ and $k_7 \in K_7$ adjacent to $v$. Suppose that $v$ is nonadjacent to some vertex in $K_5$. Then, we find a net${}\vee K_1$ induced by $\{k_3$, $k_4$, $k_5$, $k_7$, $v$, $s_5$, $s_{13}\}$. If instead there is a vertex $k_6 \in K_6$ nonadjacent to $v$, then there is a net${}\vee K_1$ induced by $\{k_1$, $k_4$, $k_6$, $k_7$, $v$, $s_{35}$, $s_{13}\}$. It is analogous by symmetry if $i=7$.
	However, by Claim \ref{claim:co4tent_2}, $S_{47}$ and $S_{72}$ are empty sets.
	Suppose now that $i=1$, let $k_1$ in $K_1$ and $k_4$ in $K_4$ be vertices adjacent to $v$ and $k_3$ in $K_3$ nonadjacent to $v$. Then, we find $M_{II}(4)$ induced by $\{k_1$, $k_4$, $k_5$, $k_3$, $v$, $s_{35}$, $s_{13}$, $s_5 \}$. It is analogous if $v$ is nonadjacent to some vertex in $K_2$. Notice that, if $v$ is not complete to $K_1$, we find a $4$-tent induced by $\{k_1$, $k_1'$, $k_4$, $k_5$, $v$, $s_5$, $s_1 \}$.

	 Finally, suppose that $i \equiv 2 \pmod{3}$. Hence, either $i=2, 5, 8$.
	 Suppose $i=2$. Let $k_2$ in $K_2$ and $k_5$ in $K_5$ be vertices adjacent to $v$, and let $k_3$ in $K_3$ and $k_4$ in $K_4$. If $k_4$ is nonadjacent to $v$, then we find a $4$-tent induced by $\{k_1$, $k_2$, $k_4$, $k_5$, $v$, $s_5$, $s_1 \}$. Hence, $v$ is complete to $K_4$. If instead $v$ is nonadjacent to $k_3$, then we find $M_{II}(2)$ induced by $\{k_1$, $k_2$, $k_3$, $k_5$, $v$, $s_1$, $s_{35}$, $s_{13}\}$ and therefore $v$ lies in $S_{25}$.
	 
	 Suppose now that $i=5$. Notice that, in this case, there is no vertex $v$ adjacent to $K_5$ and $K_8$ such that $v$ is anticomplete to $K_1, \ldots, K_4$, since in that case we find a tent induced by $\{k_5$, $k_8$, $k_4$, $v$, $s_5$, $s_{13}\}$. Hence, we discard this case.
	 
	 Suppose that $i=8$. Let $k_3$ in $K_3$ and $k_8$ in $K_8$ adjacent to $v$, and let $k_1$ in $K_1$ and $k_2$ in $K_2$. If both $k_1$ and $k_2$ are nonadjacent to $v$, then we find a $4$-tent induced by $\{k_8$, $k_1$, $k_2$, $k_3$, $v$, $s_1$, $s_{35} \}$. Hence, either $v$ is complete to $K_1$ or $K_2$. If $k_1$ is nonadjacent to $v$, then we find a net${}\vee K_1$ induced by $\{k_8$, $k_1$, $k_2$, $k_5$, $v$, $s_1$, $s_{35} \}$. If instead $k_2$ is nonadjacent to $v$, then we find a tent induced by $\{k_1$, $k_2$, $k_5$, $v$, $s_1$, $s_{35} \}$, and therefore $v$ lies in $S_{83}$.
\end{proof}

\begin{claim} \label{claim:co4tent_5} 
	If $G$ is $\{ \mathcal{T}, \mathcal{F} \}$-free and $v$ in $S$ is adjacent to $K_i$ and $K_{i+4}$ and anticomplete to $K_j$ for $j<i$ and $j>i+4$, then either $v$ lies in $S_{15}$ (or $S_{51}$ if $K_6, K_7, K_8 = \emptyset$), or $S_{26}$ or $S_{84}$.
\end{claim}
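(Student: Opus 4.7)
The plan is to carry out a case analysis on $i \in \{1, \ldots, 8\}$, exploiting the reflective automorphism of the co-$4$-tent that swaps $k_1 \leftrightarrow k_5$ and $s_1 \leftrightarrow s_5$, inducing $K_2 \leftrightarrow K_4$ and $K_6 \leftrightarrow K_8$ while fixing $K_3$ and $K_7$. Up to this symmetry there are four essential sub-cases, which I would dispatch in turn.

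For $i=1$, the vertex $v$ is adjacent to $K_1$ and $K_5$ and anticomplete to $K_6$, $K_7$, $K_8$, so the first bullet of Claim \ref{claim:co4tent_0} forces completeness to each of $K_2$, $K_3$, $K_4$, placing $v$ in $S_{15}$. The cyclic counterpart $i=5$ (with $v$ adjacent to $K_5$ and $K_1$ and anticomplete to $K_2$, $K_3$, $K_4$) can give $v\in S_{51}$ only when $K_6=K_7=K_8=\emptyset$, since otherwise the hypothesis combined with $K_2,K_4\neq\emptyset$ from Remark \ref{obs:co4tent_1} forces, by the same bullet of Claim \ref{claim:co4tent_0}, a forbidden tent. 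For $i=2$, I would prove by contradiction that $v$ is complete to each of $K_3$, $K_4$, $K_5$: given a non-neighbor $k_j\in K_j$ with $j\in\{3,4,5\}$, the set $\{v,\,k_2,\,k_6,\,k_j\}\cup\{s,s'\}$ for suitably chosen $s,s'\in\{s_1,s_{13},s_{35},s_5\}$ should induce a tent, a $4$-tent, or a Tucker configuration such as $M_{II}(4)$ or $M_{III}(3)$; this yields $v\in S_{26}$, and the reflective symmetry analogously gives $v\in S_{84}$ from the cyclic case $(K_8,K_4)$.

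The remaining pairs $(K_3, K_7)$ and $(K_4, K_8)$ (linearly), together with their cyclic counterparts $(K_7, K_3)$ and $(K_6, K_2)$, are handled by showing no such $v$ exists: in each case I would extract a forbidden induced subgraph from $\{v,\,k_i,\,k_{i+4},\,k_2,\,k_4\}$, using $K_2,K_4\neq\emptyset$ from Remark \ref{obs:co4tent_1}, together with appropriate vertices of $V(H)\cap S$ selected according to which $K_j$'s are empty or nonempty.

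The main obstacle is the bookkeeping across the many sub-cases and the careful selection of the small witness sets producing the forbidden configurations. The subtlest sub-cases involve $K_7$, whose vertices are anticomplete to $V(H)\cap S$, so that the forbidden-subgraph witnesses cannot rely on direct adjacencies within $H$ and must be built from the partition structure, typically using the bridging vertices in $K_2$ and $K_4$ to create the three independent vertices required for a tent or a $4$-tent.
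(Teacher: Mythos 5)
Your overall architecture — a case analysis on $i$ modulo $8$, discarding the cases $i=3,4,5,6,7$ by exhibiting a tent, a $4$-tent or a member of $\mathcal{T}\cup\mathcal{F}$, and keeping $i=1,2,8$ — is exactly the paper's, and your observation that the reflection of the co-$4$-tent swapping $k_1\leftrightarrow k_5$ (hence $K_2\leftrightarrow K_4$, $K_6\leftrightarrow K_8$, fixing $K_3$ and $K_7$) identifies the case pairs $\{2,8\}$, $\{3,7\}$, $\{4,6\}$ is a legitimate shortcut that the paper does not exploit. Note, however, that $i=1$ and $i=5$ are each fixed by this reflection, so you really have five orbits, not four; and your attribution of the $S_{51}$ alternative is off: since $K_2,K_3,K_4$ are always nonempty (Remark \ref{obs:co4tent_1} plus $k_3\in K_3$), Claim \ref{claim:co4tent_0} kills the case $i=5$ unconditionally — the parenthetical ``$S_{51}$ if $K_6,K_7,K_8=\emptyset$'' is only a relabelling of the $i=1$ outcome when the outer arc is empty, not a surviving $i=5$ case.

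The substantive gap is in your witness sets for the discarded cases. You propose extracting the forbidden configuration from $\{v,k_i,k_{i+4},k_2,k_4\}$ together with vertices of $V(H)\cap S$, but this pool cannot work: no vertex of $V(H)\cap S$ is adjacent to $K_7$, and $s_{13}$ and $s_{35}$ are each adjacent to \emph{all} of $K_2,K_3,K_4$, so on the clique $\{k_2,k_3,k_4,k_7\}$ the only usable ``petal'' is $v$ itself and neither a tent nor a $4$-tent (nor any larger member of $\mathcal{T}\cup\mathcal{F}$) can be assembled. The actual witnesses must bring in the original clique vertices $k_1$, $k_3$, $k_5$ of $H$: for $i=3$ the $4$-tent sits on $\{k_7,k_1,k_2,k_3\}$ with petals $s_1$, $s_{35}$, $v$ (using that $v$ is anticomplete to $K_1\cup K_2$ by hypothesis), and for $i=7$ on $\{k_7,k_3,k_4,k_5\}$ with petals $v$, $s_{13}$, $s_5$. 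Likewise for $i=2$ your six-vertex set $\{v,k_2,k_6,k_j,s,s'\}$ cannot carry an $M_{II}(4)$ (eight vertices), and two of the three completeness steps genuinely need $k_1$: the sub-case ``$v$ complete to $K_4$ but not $K_5$'' uses a net${}\vee{}K_1$ on $\{k_1,k_6,k_4,k_5,v,s_5,s_{13}\}$, and $K_3$-completeness uses $M_{II}(4)$ on $\{k_1,k_3,k_2,k_6,v,s_1,s_{35},s_{13}\}$. The strategy is sound and all the missing vertices exist, but as written several of the named extraction steps would fail.
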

\begin{proof}
	Notice that, if $v$ is adjacent to $k_3$ in $K_3$ and $k_7$ in $K_7$ and nonadjacent to $k_2$ in $K_2$, then $v$ is complete to $K_1$ for if not we find $4$-tent induced by $\{k_7$, $k_1$, $k_2$, $k_3$, $v$, $s_1$, $s_{35} \}$. However, if $k_1$ in $K_1$ is adjacent to $v$, then we find a tent induced by $\{ k_1$, $k_2$, $k_3$, $v$, $s_1$, $s_{35} \}$. Hence, we discard this case.
	Suppose $i=4$. If $v$ is adjacent to $k_4$ in $K_4$ and $k_8$ in $K_8$ and is nonadjacent to $k_3$ in $K_3$ and $k_5$ in $K_5$, then we find $M_{II}(4)$ induced by $\{k_8$, $k_3$, $k_4$, $k_5$, $v$, $s_5$, $s_{13}$, $s_{35} \}$. However, if $k_5$ is adjacent to $v$, then we find a tent induced by $\{ k_3$, $k_5$, $k_8$, $v$, $s_{13}$, $s_{35} \}$. 
	Suppose $i=5$. Let $k_5$ in $K_5$ and $k_1$ in $K_1$ are adjacent to $v$ and let $k_4$ in $K_4$ nonadjacent to $v$. Thus, we find a tent induced by $\{ k_1$, $k_4$, $k_5$, $v$, $s_5$, $s_{13} \}$. 
	Suppose $i=6$. If $k_6$ in $K_6$ and $k_2$ in $K_2$ are adjacent to $v$, and $k_4$ in $K_4$ and $k_5$ in $K_5$ are nonadjacent to $v$, then we find a $4$-tent induced by $\{ k_4$, $k_5$, $k_6$, $k_2$, $v$, $s_5$, $s_{13} \}$. 
	Suppose $i=7$. Let $k_7$ in $K_7$ and $k_3$ in $K_3$ adjacent to $v$, and $k_4$ in $K_4$ and $k_5$ in $K_5$ nonadjacent to $v$. Thus, we find a $4$-tent induced by $\{ k_7$, $k_3$, $k_4$, $k_5$, $v$, $s_5$, $s_{13} \}$. 
	Suppose $i=8$. Let $k_8$ in $K_8$ and $k_4$ in $K_4$ adjacent to $v$ and let $k_j$ in $K_j$ for $j=1,2,3$. If $k_j$ is nonadjacent to $v$, then we find $M_{II}(4)$ induced by $\{ k_5$, $k_4$, $k_8$, $k_j$,  $v$, $s_5$, $s_{13}$, $s_{35} \}$, for each $j=1, 2, 3$. Hence, $v$ lies in $S_{84}$.
	Suppose $i=1$. Let $k_1$ in $K_1$ and $k_5$ in $K_5$ be adjacent to $v$, and $k_j$ in $K_j$ for $i=2,3,4$. If $v$ is nonadjacent to $k_j$, then we find a tent induced by $\{ k_1$, $k_3$, $k_5$, $v$, $s_{35}$, $s_{13} \}$. Hence, if $K_6, K_7, K_8 = \emptyset$, then $v$ lies in $S_{15}$ or $S_{51}$, and if $K_j \neq \emptyset$ for any $j=6,7,8$, then $v$ lies in $S_{15}$.
	Finally, suppose $i=2$. Let $k_2$ in $K_2$ and $k_6$ in $K_6$ adjacent to $v$, and let $k_j$ in $K_j$ for $j=3,4,5$. If $v$ is nonadjacent to both $k_4$ and $k_5$, then we find a $4$-tent induced by $\{ k_2$, $k_4$, $k_5$, $k_6$, $v$, $s_5$, $s_{13} \}$. Thus, either $v$ is complete to $K_4$ or $K_5$. If $v$ is complete to $K_5$ and not complete to $K_4$, then we find a tent induced by $\{ k_2$, $k_4$, $k_5$, $v$, $s_5$, $s_{13} \}$. If instead $v$ is complete to $K_4$ and not complete to $K_5$, then we find a net${}\vee K_1$ induced by $\{ k_1$, $k_6$, $k_4$, $k_5$, $v$, $s_5$, $s_{13} \}$.
	If $k_3$ is nonadjacent to $v$, then we find $M_{II}(k)$ induced by $\{ k_1$, $k_3$, $k_2$, $k_6$, $v$, $s_1$, $s_{35}$, $s_{13} \}$. Hence, $v$ lies in $S_{26}$.
\end{proof}

\begin{claim} \label{claim:co4tent_6} 
	If $G$ is $\{ \mathcal{T}, \mathcal{F} \}$-free, then the sets $S_{i1}$ for $i=2, 3, 4$, $S_{ij}$ for $j=2, 3, 4, 5$ and $i=j+1, \ldots, 7$, $S_{i7}$ for $i= 3,4$ and $S_{i8}$ for $i= 2,3,4$ are empty, unless $v$ in $S_{[32]}$ or $S_{21}=S_{[21]}$. 
\end{claim}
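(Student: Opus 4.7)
The plan is to proceed by systematic case analysis on each subset $S_{ij}$ named in the statement, showing in almost every case that the existence of a vertex $v \in S_{ij}$ together with the vertices $k_1, k_3, k_5, s_1, s_5, s_{13}, s_{35}$ of the original co-$4$-tent $H$ (and suitably chosen representatives from other $K_\ell$) produces an induced subgraph belonging to $\mathcal{T} \cup \mathcal{F}$, contradicting the hypothesis. For the two exceptional cases $S_{32}$ and $S_{21}$, the argument will show that any vertex in these subsets must in fact be complete to $K_3$ (respectively to $K_1$), yielding $v \in S_{[32]}$ (respectively $v \in S_{[21]}$).

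First I would dispatch those subsets already known to be empty by earlier claims: Claim~\ref{claim:co4tent_0} handles $S_{31}$ and $S_{41}$, and Claim~\ref{claim:co4tent_1} handles many of the remaining wrap-around subsets such as $S_{i1}$ ($i=6,7,8$), $S_{i7}$, and $S_{i8}$ partially. For the newly considered subsets, the template is as follows: pick $v \in S_{ij}$, choose $k_i \in K_i$, $k_j \in K_j$ adjacent to $v$, and then for the unique set $K_m$ that $v$ is \emph{anticomplete} to (the ``gap'' in the wrap-around), pick $k_m \in K_m$ (which is guaranteed to exist since $K_2, K_4 \neq \emptyset$ by the prime-graph assumption in Remark~\ref{obs:co4tent_1}, and $K_1, K_3, K_5$ are nonempty because they contain the co-$4$-tent vertices). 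Together with $v$ and the $s$-vertices of $H$, these typically form a $4$-sun, a Tucker graph $M_{II}$ or $M_{III}$, or a tent/net with an extra apex, all of which lie in $\mathcal{T} \cup \mathcal{F}$. For instance, for $v \in S_{42}$ (the wrap-around across $K_5,K_6,K_7,K_8,K_1$), $v$ would be complete to $k_1 \in K_1$ and $k_5 \in K_5$ and anticomplete to any $k_3 \in K_3$, so the set $\{k_1,k_3,k_4,k_5,v,s_{13},s_{35},s_1,s_5\}$ would induce an $M_{II}(4)$-like forbidden configuration.

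For the exceptional subsets $S_{32}$ and $S_{21}$, I would argue that the analogous ``gap'' argument fails because the would-be witness $k_m$ lies in an endpoint set ($K_3$ or $K_1$) which $v$ might also be adjacent to. So I split on whether $v$ is complete to that endpoint: if yes, $v \in S_{[32]}$ (resp. $S_{[21]}$); if no, a non-neighbor in $K_3$ (resp. $K_1$) plays the role of $k_m$ in an induced forbidden subgraph, for example a tent or an $F_0$-type graph formed with one neighbor and one non-neighbor in the endpoint $K$-class.

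The main obstacle will be the sheer bookkeeping: there are on the order of twenty subsets to treat, and in each case one must choose auxiliary vertices carefully so that the induced subgraph is exactly one of the prescribed forbidden graphs, while handling sub-subcases depending on whether the auxiliary $K_6, K_7, K_8$ are empty. I expect the proof to be routine once the template is established, but tedious; a helpful organizing principle will be to group the subsets by the ``length of the wrap-around arc'' (i.e., $j - i \pmod 8$), since symmetric cases admit symmetric arguments by swapping the roles of $(k_1,s_1,s_{13})$ with $(k_5,s_5,s_{35})$.
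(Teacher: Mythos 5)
Your plan is essentially the paper's proof: an exhaustive case analysis in which each non-exceptional subset $S_{ij}$ is killed by exhibiting a small forbidden induced subgraph built from $v$, the $s$-vertices of the co-$4$-tent, and representatives of the relevant $K_\ell$, with the exceptional sets $S_{32}$ and $S_{21}$ handled by the dichotomy ``either $v$ is complete to $K$, or a neighbour/non-neighbour pair in the endpoint class yields a forbidden subgraph.'' Two execution points deserve attention, though. First, the witnesses actually produced in almost all cases are plain tents and $4$-tents (plus one $F_0$ for $S_{i7}$ and one $M_{II}(4)$ for $S_{28}$ when $v$ is complete to $K_2$); a bare tent is \emph{not} a member of $\mathcal{T}\cup\mathcal{F}$, and the contradiction it gives comes from the standing hypothesis of the co-$4$-tent case that $G$ contains no induced tent or $4$-tent. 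If you insist on landing directly in $\mathcal{T}\cup\mathcal{F}$, each tent must be upgraded (e.g.\ to tent${}\vee{}K_1$ by adjoining a common $K$-class neighbour of its three independent vertices), which is usually possible but adds bookkeeping you have not done. Second, your one worked example is off: for $v\in S_{42}$ the nine-vertex set $\{k_1,k_3,k_4,k_5,v,s_{13},s_{35},s_1,s_5\}$ has five independent vertices over a $4$-clique and does not realize $M_{II}(4)$; the correct witness is simply the tent on $\{v,s_{13},s_{35},k_1,k_3,k_5\}$, since $v$ is complete to $K_1$ and $K_5$ and anticomplete to $K_3$. Finally, note that Claims~\ref{claim:co4tent_0} and~\ref{claim:co4tent_1} do dispatch $S_{31}$, $S_{41}$, $S_{52}$, $S_{53}$, but none of the sets $S_{37}$, $S_{47}$, $S_{28}$, $S_{38}$, $S_{48}$ required here, so those still need explicit witnesses.
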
 

\begin{proof}
	Let $v$ in $S_{i1}$ for $i=2,3,4$. If $i=2$ and $v$ is not complete to every vertex in $K$, then there is either a vertex $k_1$ in $K_1$ or a vertex in $k_2$ in $K_2$ that are nonadjacent to $v$.
	  Suppose there is such a vertex $k_2$, and let $k_1'$ in $K_1$ adjacent to $v$. Thus, we find a tent induced by $\{ v$, $s_1$, $s_{35}$, $k_1'$, $k_2$, $k_3 \}$. Similarly, we find a tent if there is a vertex in $K_1$ nonadjacent to $v$.
	  If $i=3$, then we find a tent induced by $\{ v$, $s_{13}$, $s_{35}$, $k_3'$, $k_5$, $k_3 \}$ where $k_3$, $k'_3 \in K_3$, $k_3$ is adjacent to $v$ and $k'_3$ is nonadjacent to $v$. Similarly, we find a tent if $i=4$ considering two analogous vertices $k_4$ and $k'_4$ in $K_4$.

	Let $v$ in $S_{i2}$ for $i=3, 4, 5, 6, 7$ and let us assume in the case where $i=3$ that $v$ is not complete to every vertex in $K$. Thus, there is a vertex $k_3$(or maybe a vertex $k_2$ in $K_2$ if $i=3$, which is indistinct to this proof) in $K_3$ that is nonadjacent to $v$. If $i=3,4,5,6$, then there are vertices $k_1$ in $K_1$ and $k_5$ in $K_5$ (resp.\ $k_6$ in $K_6$ if $i=6$) such that $k_1$ and $k_5$ (resp.\ $k_6$) are adjacent to $v$. Hence, we find a tent induced by $\{ v$, $s_{13}$, $s_{35}$, $k_1$, $k_3$, $k_5 (k_6) \}$.
	If instead $i=7$, then we find a $4$-tent induced by $\{ v$, $s_{13}$, $s_{35}$, $k_1$, $k_3$, $k_5$, $k_7 \}$, where $k_l$ in $K_l$ is adjacent to $v$ for $l=1,7$ and $k_n$ in $K_n$ is nonadjacent to $v$ for $n=3,5$.
	
	Let $v$ in $S_{i7}$ for $i=3,4$. In either case, there are vertices $k_1$ in $K_1$ and $k_2$ in $K_2$ nonadjacent to $v$, and vertices $k_l$ in $K_l$ for $l=4,5,7$ adjacent to $v$.
	Thus, we find $F_0$ induced by $\{ v$, $s_{13}$, $s_{35}$, $k_1$, $k_2$, $k_4$, $k_5, k_7 \}$.
	
	Finally, let $v$ in $S_{i8}$ for $i=2, 3, 4$. Suppose first that $i=3,4$ or that $v$ is not complete to $K_2$, thus there is a vertex $k_2$ in $K_2$ nonadjacent to $v$. In that case, we find a tent induced by $\{ v$, $s_{13}$, $s_{35}$, $k_2$, $k_5$, $k_8 \}$. If instead $v$ in $S_{28}$ and is complete to $K_2$, then we find $M_{II}(4)$ induced by $\{ v$, $s_1$, $s_{13}$, $s_{35}$, $k_1$, $k_2$, $k_5$, $k_8 \}$.
\end{proof}

\begin{remark} \label{obs:coinciden_los_completos}
It follows from the previous proof that $S_{32} = S_{[32]}$ and $S_{21} = S_{[21]}$. 
\end{remark}


\begin{claim} \label{claim:co4tent_7} 
	If $G$ is $\{ \mathcal{T}, \mathcal{F} \}$-free, $K_6 \neq \emptyset$ and $K_8 \neq \emptyset$, then $S_{15} = \emptyset$.
	Moreover, if $K_8 = \emptyset$, then $S_{15} = S_{[15}$, and if $K_6 = \emptyset$, then $S_{15}=S_{15]}$.
\end{claim}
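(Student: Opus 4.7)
The plan is to handle the three assertions in turn, exhibiting in each case an induced $4$-tent (which is forbidden by the standing hypothesis that $G$ contains no induced tent or $4$-tent), falling back to a member of $\mathcal{T}\cup\mathcal{F}$ only when the $4$-tent construction is unavailable.

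For the first assertion, I assume for contradiction that $v\in S_{15}$, $k_6\in K_6$ and $k_8\in K_8$ all exist. By definition of $S_{15}$, pick $k_1\in K_1$ and $k_5\in K_5$ both adjacent to $v$, and recall that $v$ is anticomplete to $K_6\cup K_7\cup K_8$. I would then verify that $\{v,s_{13},s_{35},k_1,k_5,k_6,k_8\}$ induces a $4$-tent: the four vertices $k_1,k_5,k_6,k_8$ lie in $K$ and therefore form a $K_4$, the three vertices $v,s_{13},s_{35}$ lie in $S$ and are pairwise nonadjacent, and the cross-adjacencies forced by the partition definitions of Section~\ref{sec:co4tent_partition} give precisely $v\sim\{k_1,k_5\}$, $s_{13}\sim\{k_1,k_8\}$, $s_{35}\sim\{k_5,k_6\}$, which is the adjacency pattern of a $4$-tent.

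For the second assertion, assume $K_8=\emptyset$. The containment $S_{[15}\subseteq S_{15}$ is immediate from the definitions, so it suffices to show the reverse inclusion, namely that every $v\in S_{15}$ is complete to $K_1$. Suppose toward a contradiction that some $v\in S_{15}$ has a non-neighbor $k_1'\in K_1$, and pick $k_1\in K_1$ and $k_5\in K_5$ adjacent to $v$. If $K_6\neq\emptyset$, choose $k_6\in K_6$ and observe that $\{v,s_{13},s_{35},k_1,k_1',k_5,k_6\}$ again induces a $4$-tent, now with $k_1'$ taking the place of $k_8$: every vertex of $K_1$ is adjacent to $s_{13}$ and nonadjacent to $s_{35}$, just like every vertex of $K_8$, so the cross-adjacencies coincide with those of the first assertion. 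If instead $K_6=\emptyset$ as well, then I appeal to the nonempty $K_4$ guaranteed by Remark~\ref{obs:co4tent_1} and aim to locate a minimal forbidden configuration from $\mathcal{F}$ inside an appropriate subset of $\{v,s_1,s_5,s_{13},s_{35},k_1,k_1',k_4,k_5\}$, exploiting that $v$ is forced to be adjacent to every vertex of $K_4$. The third assertion is then obtained from the second by the evident symmetry of the co-$4$-tent that simultaneously swaps $(s_1,s_5)$, $(s_{13},s_{35})$, $(K_1,K_5)$, $(K_2,K_4)$ and $(K_6,K_8)$.

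The main obstacle I expect is the sub-case of the second (and, symmetrically, the third) assertion where $K_6=K_8=\emptyset$: once both of these parts are empty, no clique vertex outside $K_1\cup K_5$ with a \emph{single} neighbour in $V(H)\cap S$ is available to play the role of $k_6$ or $k_8$ in the $4$-tent construction, and the only remaining clique vertices adjacent to $s_{35}$ (resp.\ $s_{13}$) also see $v$. Building the forbidden configuration then requires using $k_4\in K_4$, whose universal adjacency to $v$ forces the configuration to be larger than a $4$-tent and identifies it as a member of $\mathcal{F}$; pinpointing exactly which member of $\mathcal{F}$ arises is the technical heart of the argument.
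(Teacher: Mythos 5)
Your construction for the first assertion is, at bottom, the paper's: the authors exhibit $F_0\in\mathcal{F}$ on $\{v,s_{13},s_{35},k_8,k_1,k_2,k_5,k_6\}$ (rows $s_{13},v,s_{35}$ against the columns in that order give the staircase $11100$, $01110$, $00111$), and your $4$-tent is exactly this configuration with the middle column $k_2$ deleted. The one substantive difference is which hypothesis you contradict: the $4$-tent is not a member of $\mathcal{T}\cup\mathcal{F}$, so your argument leans on the section's standing assumption that $G$ contains no induced tent or $4$-tent, whereas the paper's $F_0$ contradicts the literal hypothesis of the claim ($\{\mathcal{T},\mathcal{F}\}$-freeness). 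Since $v$ is complete to $K_2$ and $K_2\neq\emptyset$ by the primality reduction of Remark~\ref{obs:co4tent_1}, inserting any $k_2\in K_2$ upgrades your $4$-tent to $F_0$ at no cost; I would do this so the claim stands on its stated hypothesis alone. Your handling of the second assertion when $K_6\neq\emptyset$ (a non-neighbour $k_1'\in K_1$ standing in for $k_8$) and the symmetry reduction for the third are exactly what the paper does.

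The sub-case you flag as the technical heart, $K_6=K_8=\emptyset$, is not resolved by the paper either: its ``analogous'' substitution for $K_8=\emptyset$ still consumes a vertex of $K_6$, and symmetrically. In fact, when both parts are empty the paper later treats $S_{[15}$ and $S_{15]}$ as two coexisting, possibly nonempty subsets of $S_{15}$ (see case (5) of the set-up in Section~\ref{subsec:co4tent1}), so the two ``moreover'' clauses are meant to be read with the other of $K_6$, $K_8$ nonempty; a vertex of $S_{15}$ complete to $K_5$ but not to $K_1$ genuinely can occur when $K_6=K_8=\emptyset$ (note $s_1$ and $s_5$ have disjoint neighbourhoods, so no tent, $4$-tent or $F_0$ can be assembled from $\{v,s_1,s_5,s_{13},s_{35},k_1,k_1',k_4,k_5\}$). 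So you should not try to manufacture a forbidden configuration in that sub-case --- there is none to find --- and with the ``moreover'' statements given their intended reading your proof is complete.
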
 

\begin{proof}
Let $v$ in $S_{15}$, and $k_6$ in $K_6$ and $k_8$ in $K_8$ be vertices nonadjacent to $v$.
Since there are vertices $k_1$ in $K_1$, $k_2$ in $K_2$ and $k_5$ in $K_5$ adjacent to $v$, then we find $F_0$ induced by $\{ v$, $s_{13}$, $s_{35}$, $k_8$, $k_1$, $k_2$, $k_5$, $k_6 \}$.

The proof is analogous if $K_8 = \emptyset$ (resp.\ if $K_6 = \emptyset$) considering two vertices $k_{11}$ and $k_{12}$ in $K_1$ ($k_{51}$, $k_{52}$ in $K_5$) such that $v$ is adjacent to $k_{11}$ (resp.\ $k_{51}$) and is nonadjacent to $k_{22}$ (resp.\ $k_{52}$).
\end{proof}


\begin{claim} \label{claim:co4tent_8} 
	Let $v$ in $S_{ij}$ such that $v$ is adjacent to at least one vertex in each nonempty $K_l$, for every $l\in \{1, \ldots, 8\}$.
	If $G$ is $\{ \mathcal{T}, \mathcal{F} \}$-free, then the following statements hold:
	\begin{itemize}
		\item The vertex $v$ is complete to $K_2$, $K_3$ and $K_4$.
		\item If $K_j \neq \emptyset$ for some $j=6,8$, then $v$ is complete to $K_5$. Moreover, $v$ is either complete to $K_i$ or $K_j$. 
	\end{itemize}
\end{claim}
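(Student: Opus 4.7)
The plan is to prove each of the two statements by contradiction, exactly as in the preceding claims of this section: assume $v$ fails to be complete to the indicated subset of $K$, and exhibit a small induced subgraph of $G$ isomorphic either to a tent, to a $4$-tent, or to a member of $\mathcal{T}\cup\mathcal{F}$. Throughout, I would use that $v$ has at least one neighbor in every nonempty $K_l$, together with the explicit descriptions of each $K_l$ in terms of its neighbors in $V(T)\cap S$.

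For the first statement, pick representative neighbors $k_1 \in K_1$ and $k_5 \in K_5$ of $v$ (which exist because $K_1$ and $K_5$ are always nonempty). For each $l\in\{2,3,4\}$ such that $v$ is not complete to $K_l$, pick $k_l'\in K_l$ non-adjacent to $v$. Then $\{k_1, k_l', k_5\}$ is a triangle in $K$, which I would extend to an induced tent by adjoining $v$ (pendant to the edge $k_1 k_5$) together with two vertices from $V(T)\cap S$. Concretely, for $l=2$ I would use the set $\{v, s_1, s_{35}, k_1, k_2', k_5\}$, in which $s_1$ is pendant to $k_1 k_2'$ and $s_{35}$ is pendant to $k_2' k_5$; for $l=3$ I would use $\{v, s_{13}, s_{35}, k_1, k_3', k_5\}$; for $l=4$ I would use $\{v, s_{13}, s_5, k_1, k_4', k_5\}$. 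In each case the definitions of $K_1$, $K_5$ and of the respective $K_l$ reduce the verification that the induced subgraph is a tent to a short adjacency check, contradicting the hypothesis that $G$ contains no induced tent.

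For the second statement, I would first treat the case $K_6\neq\emptyset$; the case $K_8\neq\emptyset$ follows symmetrically via the involution of the co-$4$-tent that swaps $k_1\leftrightarrow k_5$, $s_1\leftrightarrow s_5$ and $s_{13}\leftrightarrow s_{35}$, and therefore exchanges $K_1\leftrightarrow K_5$, $K_2\leftrightarrow K_4$, $K_6\leftrightarrow K_8$, fixing $K_3$ and $K_7$. Assuming toward a contradiction that $v$ is not complete to $K_5$, I would pick $k_5'\in K_5$ non-adjacent to $v$ and neighbors $k_6\in K_6$, $k_1\in K_1$, $k_2\in K_2$ of $v$; the clique $\{k_1, k_2, k_6, k_5'\}$ together with $v$ and a suitable pair among $\{s_1,s_{13},s_{35},s_5\}$ should produce an induced subgraph belonging to $\mathcal{F}$, in the same spirit as the $F_0$ produced in Claim \ref{claim:co4tent_7}. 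Once $v$ is known to be complete to $K_5$, the final assertion that $v$ is complete to $K_i$ or to $K_j$ follows by the same scheme: if $v$ were complete to neither endpoint, non-neighbors $k_i'\in K_i$ and $k_j'\in K_j$ would, together with $v$ and appropriate vertices of $V(T)\cap S$, yield either an induced tent (when $\{i,j\}\cap\{1,5\}\neq\emptyset$) or a member of $\mathcal{F}$.

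The main obstacle is the case analysis in the second statement: depending on which of $K_3, K_6, K_7, K_8$ are empty and on the precise indices $i, j$ of the set $S_{ij}$ containing $v$, the resulting forbidden subgraph may differ. Enumerating these subcases by relying on Remark \ref{obs:co4tent_1} (which guarantees $K_2, K_4 \neq \emptyset$) and on the list of empty $S_{ij}$'s established in Claims \ref{claim:co4tent_0} through \ref{claim:co4tent_7} should reduce each subcase to an adjacency verification analogous to the ones in the first statement.
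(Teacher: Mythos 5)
Your treatment of the first statement is correct and essentially identical to the paper's: the three tents you write down are exactly the ones from Claim~\ref{claim:co4tent_0} (the paper simply cites that claim rather than re-deriving them), so there is nothing to object to there.

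The second statement is where the proposal breaks down. For the ``complete to $K_5$'' part you propose the clique $\{k_1,k_2,k_6,k_5'\}$ together with $v$ and a pair of vertices from $\{s_1,s_{13},s_{35},s_5\}$, and assert this should land in $\mathcal{F}$. It cannot: that vertex set has only four clique vertices and at most three independent ones, while $F_0$ requires five clique vertices, and $F_1(k)$, $F_2(k)$ require $k\geq 5$. Worse, the configuration is genuinely not forbidden: over the columns $k_5',k_6,k_2,k_1$ (in that order) the rows of $v$, $s_1$, $s_{13}$, $s_{35}$, $s_5$ all have consecutive ones, so no Tucker matrix and no member of $\mathcal{T}\cup\mathcal{F}$ arises from your choice. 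The vertex $k_2$ is the wrong witness because no vertex of $V(T)\cap S$ separates $K_5$ from $K_2$ on the correct side; what is needed is $k_4$ (nonempty by Remark~\ref{obs:co4tent_1}), since $s_5$ is adjacent to $K_4$ and $K_5$ but not to $K_1$ or $K_6$. The paper's witness is $\{v, s_{13}, s_{35}, s_5, k_1, k_4, k_5, k_6\}$, whose adjacency matrix is exactly $M_{II}(4)$ --- a member of $\mathcal{T}$, not of $\mathcal{F}$. Similarly, for the final assertion (``complete to $K_i$ or $K_j$'') your sketch with only the two non-neighbors $k_i'$, $k_j'$ is too thin: the paper's $F_0$ witness $\{v, s_{13}, s_{35}, k_{11}, k_{12}, k_3, k_{61}, k_{62}\}$ needs a neighbor \emph{and} a non-neighbor of $v$ in each of $K_1$ and $K_6$, plus a vertex of $K_3$, to build the five-column staircase. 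So the strategy is the right one, but the concrete configurations you name do not produce forbidden subgraphs, and the missing ingredient in both cases is the use of $K_4$ (resp.\ $K_3$) and of $s_5$ (resp.\ $s_{13}$, $s_{35}$) to separate the two sides of the clique.
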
 

\begin{proof}
 The first statement follows as a direct consequence of Claim \ref{claim:co4tent_0}: if $v$ is adjacent to $K_1$, $K_3$ and $K_5$, then $v$ is complete to $K_2$ and $K_4$. Moreover, $v$ is complete to $K_3$. 
 

To prove the second statement, suppose first  that $K_6 \neq \emptyset$ and $K_7, K_8 = \emptyset$. Let us see that $v$ is complete to $K_5$. Suppose there is a vertex $k_{5}$ in $K_5$ such that $v$ is nonadjacent to $k_{5}$, and let $k_i$ in $K_i$ adjacent to $v$ for each $i=1,4,6$. We find $M_{II}(4)$ induced by $\{ v$, $s_{13}$, $s_{35}$, $s_5$, $k_1$, $k_4$, $k_5$, $k_6 \}$.  
The proof is analogous if $K_8 \neq \emptyset$ and $K_7, K_6 = \emptyset$.

Let us suppose now that $v$ is not complete to $K_1$ and $K_6$. We find $F_0$ induced by $\{ v$, $s_{13}$, $s_{35}$, $k_{11}$, $k_{12}$, $k_3$, $k_{61}$, $k_{62} \}$, where $k_{1j}$ in $K_1$, $k_{6j}$ in $K_6$ for each $j=1,2$ and $v$ is adjacent to $k_{i1}$ and is nonadjacent to $k_{i2}$ for each $i=1,6$. The proof is analogous if $K_8 \neq \emptyset$ and $K_7, K_6 = \emptyset$ and if $K_6, K_8 \neq \emptyset$, independently on whether $K_7=\emptyset$ or not. 


\end{proof}

By simplicity, we will also consider that every vertex in $S_{[32]}$ and $S_{[21]}$ lies in $S_{76]}$, and that in particular, if $K_7 \neq \emptyset$, then such vertices are complete to $K_7$. This follows from Claim \ref{claim:co4tent_8} and Remark \ref{obs:coinciden_los_completos}.
As a consequence of Claims \ref{claim:co4tent_0} to \ref{claim:co4tent_8}, we have the following Lemma.

\begin{lema} \label{lema:co4tent_1} 
Let $G=(K,S)$ be a split graph that contains an induced co-$4$-tent and contains no induced tent or $4$-tent. If $G$ is $\{ \mathcal{T}, \mathcal{F} \}$-free, then all the following assertions hold:
 \begin{itemize}
  \item $\{S_{ij}\}_{i,j\in\{1,2,\ldots,8\}}$ is a partition of $S$.
  \item For each $i\in\{2,3,4,5,6,7,8 \}$, $S_{i1}$ is empty.
  \item For each $i\in\{3,4,5,6,7 \}$, $S_{i2}$ is empty.
  \item For each $i\in\{4,5,6,7 \}$, $S_{i3}$ is empty, and $S_{56}$ is also empty.
  \item For each $i\in\{3,4,5,6 \}$, $S_{i7}$ is empty.
  \item For each $i\in\{2,3,4,5,6,7 \}$, $S_{i8}$ is empty.
  \item The subsets $S_{64}$, $S_{54}$ and $S_{56}$ are empty.
  \item The following subsets coincide: $S_{1i}= S_{[1i}$ for $i=3,4,8$;
  $S_{16}=S_{16]}$, $S_{25}=S_{25]}$, $S_{27}=S_{[27}$, $S_{35}=S_{35]}$, $S_{46} = S_{[46}$, $S_{82} = S_{82]}$ and $S_{85} = S_{[85}$ (as the case may be, according to whether $K_i \neq \emptyset$ or not, for $i=6,7,8$).
 \end{itemize}
 
\end{lema}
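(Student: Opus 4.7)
The plan is to obtain this lemma as a direct aggregation of Claims \ref{claim:co4tent_0} through \ref{claim:co4tent_8}, together with Remarks \ref{obs:co4tent_0}, \ref{obs:co4tent_1}, and \ref{obs:coinciden_los_completos}. Each of these auxiliary results already isolates one specific structural property of vertices in $S$ relative to the partition $\{K_1,\ldots,K_8\}$, so the proof reduces to verifying that together they cover every possibility and jointly imply the bullet points of the statement.

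First, I would establish that $\{S_{ij}\}_{i,j \in \{1,\ldots,8\}}$ is indeed a partition of $S$. By Lemma \ref{lema:co4tent_0}, $\{K_1,\ldots,K_8\}$ partitions $K$, so for each $v \in S$ the set of indices $i$ with $v$ adjacent to some vertex of $K_i$ is well defined. Claim \ref{claim:co4tent_3}, together with the exhaustive case-split in Claim \ref{claim:co4tent_0}, shows that the adjacency pattern of $v$ cannot \emph{skip} any intermediate $K_l$; that is, it is cyclically consecutive in the ordering $K_1,K_2,\ldots,K_8$. Then Claims \ref{claim:co4tent_4}, \ref{claim:co4tent_5} and \ref{claim:co4tent_8} handle progressively longer arcs (namely, $v$ adjacent to $K_i$ and $K_{i+3}$, then $K_i$ and $K_{i+4}$, and finally to every nonempty $K_l$), so that every $v$ falls into a uniquely determined $S_{ij}$.

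Second, I would compile the list of empty subsets. Claim \ref{claim:co4tent_0} forces $S_{51}, S_{52}, S_{53}, S_{41}, S_{31}, S_{24}$ to be empty; Claim \ref{claim:co4tent_1} eliminates $S_{61}, S_{71}, S_{81}, S_{56}, S_{57}, S_{58}, S_{67}, S_{68}, S_{78}$; Claim \ref{claim:co4tent_2} rules out $S_{64}$ and shows that $S_{54}$ is absorbed into $S_{76}$ under completeness conditions on $K_6$ and $K_7$; and Claim \ref{claim:co4tent_6} kills $S_{i2}$ for $i\in\{3,\ldots,7\}$, $S_{i3}$ for $i\in\{4,5,6,7\}$, $S_{i7}$ for $i\in\{3,4\}$ and $S_{i8}$ for $i\in\{2,3,4\}$. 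Gathering these yields precisely the empty-set assertions of the lemma.

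Finally, I would verify the coincidence statements asserting that every vertex in the listed $S_{ij}$ is complete to the appropriate $K_i$ or $K_j$. These follow from Claim \ref{claim:co4tent_2} (for $S_{54}$ and $S_{65}$), Claim \ref{claim:co4tent_4} (for $S_{14}$, $S_{25}$, $S_{83}$), Claim \ref{claim:co4tent_5} (for $S_{15}$, $S_{26}$, $S_{84}$), Claim \ref{claim:co4tent_7} (for $S_{15}$ with the case split on whether $K_6$ or $K_8$ is empty), Remark \ref{obs:coinciden_los_completos} (for $S_{32}$ and $S_{21}$), and Claim \ref{claim:co4tent_8} (for the longer arcs, where $v$ meets almost every $K_l$). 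The main obstacle is purely bookkeeping: reconciling the notation $S_{[ij}$, $S_{ij]}$ and $S_{[ij]}$ with each claim's conclusion, and ensuring no subcase is missed depending on whether $K_l$ for $l \in \{6,7,8\}$ is empty, using the fact that $K_2$ and $K_4$ are always nonempty by Remark \ref{obs:co4tent_1}. Once this indexing is reconciled, the lemma follows directly.
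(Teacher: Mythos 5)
Your proposal takes essentially the same route as the paper, which derives this lemma directly as a consequence of Claims \ref{claim:co4tent_0} through \ref{claim:co4tent_8} together with Remarks \ref{obs:co4tent_0}, \ref{obs:co4tent_1} and \ref{obs:coinciden_los_completos}, and your mapping of individual bullet points to the specific claims (e.g.\ Claim \ref{claim:co4tent_1} for $S_{57},S_{67},S_{78}$ and Claim \ref{claim:co4tent_6} for the remaining $S_{i2},S_{i3},S_{i7},S_{i8}$) is accurate. The aggregation and bookkeeping you describe is precisely the content of the paper's one-line justification.
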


Since $S_{18} = S_{[18}$, we will consider these vertices as those in $S_{87}$ that are complete to $K_7$ and $S_{18} = \emptyset$. Moreover, those vertices that are complete to $K_1, \ldots, K_6, K_8$ and are adjacent to $K_7$ will be considered as in $S_{76]}$, thus $S_{87}$ is the set of independent vertices that are complete to $K_1, \ldots, K_7$ and are adjacent but not complete to $K_8$. These results are summarized in Figure~\ref{fig:tabla_co4tent_1}.
 
\begin{figure}[h!]	 
\begin{center}
	\begin{tabular}{ c | c c c c c c c c} 
		 \hline
		 $i\setminus j$ & 1 & 2 & 3 & 4 & 5 & 6 & 7 & 8 \\ 
		  \hline
		 1 & \checkmark & \checkmark & \textcolor{orange}{\checkmark} & \textcolor{orange}{\checkmark} & $\emptyset$ & \textcolor{orange}{\checkmark} & \checkmark & $\emptyset$ \\ 
		 2 & $\emptyset$ & \checkmark & \checkmark & $\emptyset$ & \textcolor{orange}{\checkmark} & \checkmark & \textcolor{orange}{\checkmark} & $\emptyset$ \\
 		 3 & $\emptyset$ & $\emptyset$ & \checkmark & \checkmark & \textcolor{orange}{\checkmark} & \checkmark & $\emptyset$ & $\emptyset$ \\
		 4 & $\emptyset$ & $\emptyset$ & $\emptyset$ & \checkmark & \checkmark & \textcolor{orange}{\checkmark} & $\emptyset$ & $\emptyset$ \\
		 5 & $\emptyset$ & $\emptyset$ & $\emptyset$ & $\emptyset$ & \checkmark & $\emptyset$ & $\emptyset$ & $\emptyset$ \\
		 6 & $\emptyset$  & $\emptyset$  & $\emptyset$  & $\emptyset$  & $\emptyset$ & \checkmark & $\emptyset$  &  $\emptyset$  \\
		7 & $\emptyset$ & $\emptyset$ & $\emptyset$ & \textcolor{orange}{\checkmark} & \checkmark & \checkmark & \checkmark & $\emptyset$ \\
		8 & $\emptyset$ & \textcolor{orange}{\checkmark} & \checkmark & \checkmark & \textcolor{orange}{\checkmark} & \checkmark & \checkmark & \checkmark \\
	\end{tabular}
\end{center} 
\caption{The (possibly) nonempty parts of $S$ in the co-$4$-tent case. The orange checkmarks denote those subsets $S_{ij}$ that are either complete to $K_i$ or $K_j$.} \label{fig:tabla_co4tent_1}
\end{figure}


\selectlanguage{spanish}%
\chapter*{Matrices 2-nested}

En este capítulo se definen y caracterizan las matrices $2$-nested, las cuales son fun\-da\-men\-ta\-les para describir cada porción del modelo circle para aquellos grafos split que también son circle. 
Los resultados de este capítulo son cruciales para la prueba del resultado más importante del próximo capítulo, el cual da una caracterización completa de aquellos grafos split circle por subgrafos inducidos prohibidos.
La definición de matriz $2$-nested es muy técnica, por lo cual omitiré los detalles en este resumen. Sin embargo, daré una versión reducida de la definición y un ejemplo para entender un poco qué motivó el estudio de estas matrices.

Las matrices enriquecidas se definen como aquellas matrices de $0$'s y $1$'s para las cuales algunas de sus filas cuentan con una etiqueta L, R o LR y algunas de ellas a su vez tienen asignado el color rojo o azul.
A su vez, dada una matriz enriquecida $A$, decimos que $A$ es una matriz LR-ordenable si admite un ordenamiento $\Pi$ que cumpla con: (1) la propiedad de los unos consecutivos para sus filas, (2) la tira de $1$'s de aquellas filas etiquetadas con L (resp.\ con R) comienzan en la primera columna (resp.\ terminan en la última columna) y (3) aquellas filas etiquetadas con LR, o bien quedan como ordenadas como una única tira de $1$'s que comienza en la primera columna o termina en la última columna, o bien como dos tiras de $1$'s separadas, una que comienza en la primera columna y otra que termina en la última columna.  
Esto nos lleva a considerar los bloques de una matriz enriquecida, una vez ordenada según $\Pi$. Para cada fila de $A$ etiquetada con L o LR y que tiene un $1$ en la primera columna del orden $\Pi$, llamamos L-bloque al subconjunto maximal de columnas consecutivas empezando desde la primera en la cual la fila tiene un $1$. Los R-bloques se definen de manera análoga. Para toda fila sin etiqueta, su U-bloque es el conjunto maximal de columnas consecutivas en las que hay un $1$ en esa fila.

Las matrices $2$-nested son aquellas matrices LR-ordenables que admiten un bi-coloreo total de sus bloques que además cumple con una lista de 9 propiedades.
Estas propiedades se derivan de las condiciones necesarias que se requieren para determinar la ubicación de las cuerdas en una cierta porción del modelo círculo para un grafo split circle.
Para entender un poco más sobre estas condiciones, consideremos el grafo split $G=(K,S)$ de la Figura \ref{fig:example_graph1_}. 

\begin{figure}[h!] 	\centering
	\includegraphics[scale=.9]{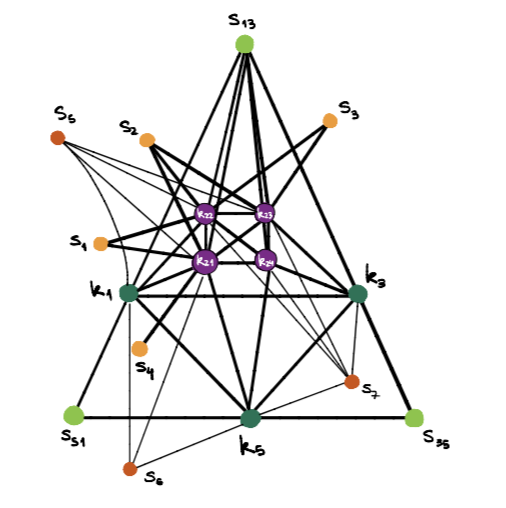}
	\caption{Ejemplo $2$: el grafo split circle $G$.}
	\label{fig:example_graph1_}
\end{figure}

Este grafo es split y también es circle. Para comenzar a dar un modelo circle del mismo, consideramos primero poner las cuerdas del subgrafo isomorfo al tent, el cual está dado por aquellos vértices de color verde claro y oscuro. Consideramos también los demás vértices de $G$ de acuerdo a las particiones $K_1, K_2 \ldots, K_6$ y $\{ S_{ij} \}_{1 \leq i,j \leq 6}$ descriptas en el capítulo anterior. En nuestro ejemplo, tenemos que los vértices violetas $k_{21}$, $k_{22}$, $k_{23}$ y $k_{24}$ pertenecen al subconjunto $K_2$, y los vértices naranjas $s_1, \ldots, s_7$ son los únicos vértices independientes en $G\setminus T$,  y todos son adyacentes a vértices en $K_2$. Los vértices $s_1$, $s_2$, $s_3$ y $s_4$ sólo son adyacentes a $K_2$ y por lo tanto pertenecen a $S_{22}$. Por su parte, los vértices $s_5$, $s_6$ y $s_7$ también son adyacentes a $K_1$, $K_1$ y $K_5$ y $K_3$ y $K_5$, respectivamente. 
 
Para colocar las cuerdas correspondientes a todo vértice de $S_{22}$ es necesario primero dibujar aquellas cuerdas que representan vértices en $K_2$. Esto se deduce del hecho de que las cuerdas correspondientes a $K_2$ tienen un extremo en el arco $k_1 k_3$ y el otro extremo en el arco $s_{51} s_{35}$, y las cuerdas de vértices en $S_{22}$ tienen ambos extremos dentro del arco $k_1 k_3$ o ambos extremos dentro del arco $s_{51} s_{35}$, y siempre intersecando cuerdas que representan vértices de $K_2$.
Luego, para colocar las cuerdas que corresponden a $K_2$ debemos establecer un ordenamiento de los vértices de este conjunto que respete el orden parcial dado por la inclusión de los vecindarios de aquellos vértices en $S_{22}$. 
Por ejemplo, $N(s_1) \subseteq N_(s_2)$, luego sabemos que un ordenamiento para las cuerdas de $K_2$ que nos permita dar un modelo circle debe contener alguna de las siguientes subsecuencias: $(k_{21}$, $k_{22}$, $k_{23})$, o $(k_{22}$, $k_{21}$, $k_{23})$, o $(k_{23}$, $k_{21}$, $k_{22})$, o $(k_{23}$, $k_{22}$, $k_{21})$. Además, como $N(s_2) \cap N_(s_3) \neq \emptyset$ y $N(s_2)$ y $N_(s_3)$ no están anidados, las cuerdas correspondientes a $s_2$ y $s_3$ deben ser dibujadas en distintas porciones del modelo circle, ya que representan vértices independientes y por lo tanto sus cuerdas no pueden intersecarse. 
Por su parte, $s_4$ es adyacente sólo a $k_{21}$, luego $N(s_4)$ está contenido en $N(s_1)$ y $N(s_2)$ y es disjunto a $N(s_3)$. Luego, la cuerda que represente $s_4$ puede ser colocada indistintamente en cualquiera de las dos porciones del círculo que corresponden a la partición $S_{22}$.

Por lo tanto, cuando consideramos cómo colocar las cuerdas, nos encontramos con dos de\-ci\-sio\-nes importantes: (1) en qué orden debemos poner las cuerdas que representan vértices en $K_2$ para que esto nos permita dibujar aquellos vértices independientes adyacentes a $K_2$, y (2) en qué porción del modelo circle debemos poner ambos extremos de las cuerdas que representan vértices en $S_{22}$. 

Al contrario que las cuerdas correspondientes a los vértices $s_1$, $s_2$, $s_3$ y $s_4$, las cuerdas que representan a $s_5$, $s_6$ y $s_7$ tienen sólo uno de sus extremos en algún arco designado para $K_2$, más precisamente, en los arcos $k_1 k_3$ y $s_{51} s_{35}$. Más aún, cada uno de estos vértices tiene una única ubicación posible para cada extremo de su cuerda. 
Si queremos colocar las cuerdas co\-rres\-pon\-dien\-tes a los vértices $s_5$, $s_6$ y $s_7$, debemos notar primero que la condición ``anidados o disjuntos'' también debe valer para los vecindarios de los vértices $s_1, \ldots, s_7$, tanto si nos restringimos a $K_2$ como en las demás particiones de $K$. Más precisamente, como $s_5$ es adyacente a $k_{24}$, $k_{23}$ y $k_1$, y $s_1$ no es adyacente a $k_1$, entonces necesariamente $s_1$ debe estar contenido en $s_5$. Algo similar ocurre con $s_7$ y $s_3$, mientras que $s_6$ y $s_3$ son disjuntos. 

Si consideramos la matriz enriquecida $A(S,K_2)$, las primeras cuatro filas corresponden a $s_1, \ldots, s_4$ y son filas sin etiqueta y sin color. Asimismo, los vértices $s_5$ y $s_6$ corresponderían a filas etiquetadas con L y coloreadas de rojo y azul, respectivamente, $s_7$ correspondería a una fila etiquetada con R y coloreada de azul. 

\vspace{-5mm}
\[
A(S,K_2) = 
\bordermatrix{ &  \cr
	& 1 1 0 0 \cr
	& 1 1 1 0 \cr
	& 0 1 1 0 \cr
	& 1 0 0 0 \cr
	\textbf{L} & 1 1 1 0 \cr
	\textbf{L} & 1 0 0 0 \cr
	\textbf{R} & 0 1 1 1 }\,
	\begin{matrix} 
   \cr  \cr \cr  \cr \textcolor{red}{\bullet} \cr \textcolor{blue}{\bullet} \cr \textcolor{blue}{\bullet} 
\end{matrix}
\]

Observemos que, como $s_6$ es adyacente a $k_{21}$, $k_1$ y $k_5$, la cuerda que corresponde a $k_{21}$ es forzada a estar primera entre todas las cuerdas correspondientes a vértices de $K_2$. Esto se deduce de que $s_6$ tiene una única ubicación posible, en la que uno de sus extremos va dentro del arco $s_{51} s_{35}$, por lo que necesitamos que $k_{21}$ sea la primera cuerda de $K_2$ que viene justo después de $s_{51}$. Más aún, esto se refuerza por el hecho de que $s_5$ es adyacente a $k_1$ y $k_{21}$, por lo que la cuerda que representa a $k_{21}$ debe ser dibujada primero cuando consideramos el orden dado por los vecindarios de aquellos vértices independientes que tienen al menos un extremo en el arco $k_1 k_3$. Luego, que $k_{21}$ sea el primero de los vértices en cualquier ordenamiento válido para $K_2$ es una condición necesaria al buscar un ordenamiento que cumpla con la propiedad de los unos consecutivos para la matriz $A(S,K_2)$. Esta es una de las propiedades que se consideran en la definición de $2$-nested y LR-ordenamiento, que está asociada a las etiquetas de las filas. A continuación damos un modelo circle para el grafo $G$ en la Figura \ref{fig:example_model1_}.

\begin{figure}[h!] 	\centering
	\includegraphics[scale=1]{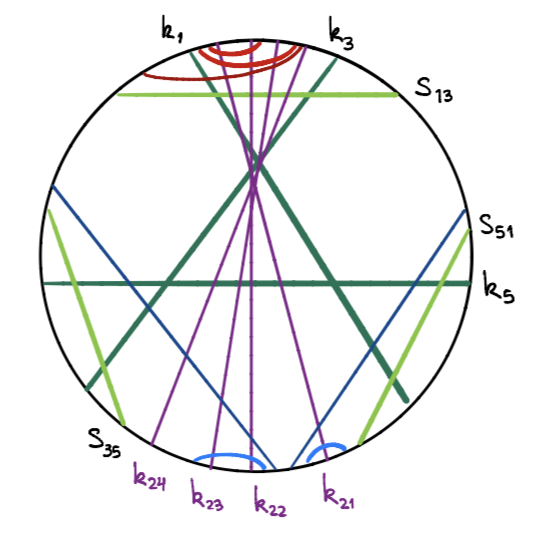}
	\caption{Un modelo circle para el grafo split $G$.}
	\label{fig:example_model1_}
\end{figure}


\selectlanguage{english}%
\chapter{$2$-nested matrices} \label{chapter:2nested_matrices}


In this chapter, we will define and characterize nested and $2$-nested matrices, which are of fundamental importance to describe each portion of a circle model for those split graphs that are also circle. The results in this chapter are crucial for the proof of the main result in the next chapter, which gives a complete characterization of split circle graphs by minimal forbidden induced subgraphs. 

In order to give some motivation for the definitions on this chapter, let us consider the split graph $G=(K,S)$ represented in Figure \ref{fig:example_graph0}. Since $G$ contains an induced tent $H$, we can consider the partitions $K_1, K_2 \ldots, K_6$ and $\{ S_{ij} \}_{1 \leq i,j \leq 6}$ as defined in Section \ref{sec:tent_partition}.

\begin{figure}[h!] 
	\centering
	\includegraphics[scale=.6]{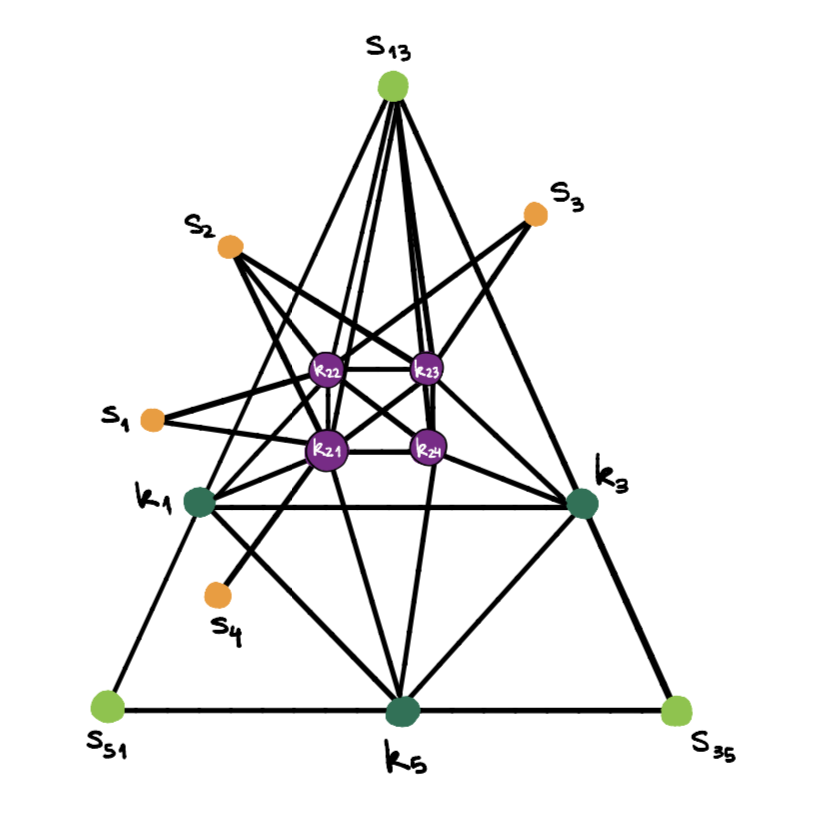}
	\caption{Example $1$: a split circle graph $G$.}
		\label{fig:example_graph0}
\end{figure}

Notice that every vertex in the complete partition of $G \setminus T$ lies in $K_2$, for the only adjacency of these vertices with regard to $S$ is the vertex $s_{13}$. 
Thus, $K_2 = \{ k_{21}$, $k_{22}$, $k_{23}$, $k_{24}\}$. 
Moreover, the orange vertices are the only independent vertices in $G\setminus T$ and these vertices are adjacent only to vertices in $K_2$, thus they all lie in $S_{22}$.
Furthermore, the graph $G$ is also a circle graph. Indeed, we would like to give a circle model for $G$. 
The tent is a prime graph, and as such, $H$ admits a unique circle model. Hence, let us begin by considering a circle model as the one presented in Figure \ref{fig:example_tentmodel}, having only the chords that represent the subgraph $H$. 
We will consider the arcs and chords of a model described clockwise. For example, in Figure \ref{fig:example_tentmodel} the arc $k_1 k_3$ is the portion of the circle that lies between $k_1$ and $k_3$ when traversing the circumference clockwise.

\begin{figure}[h!] 
	\centering
	\includegraphics[scale=.5]{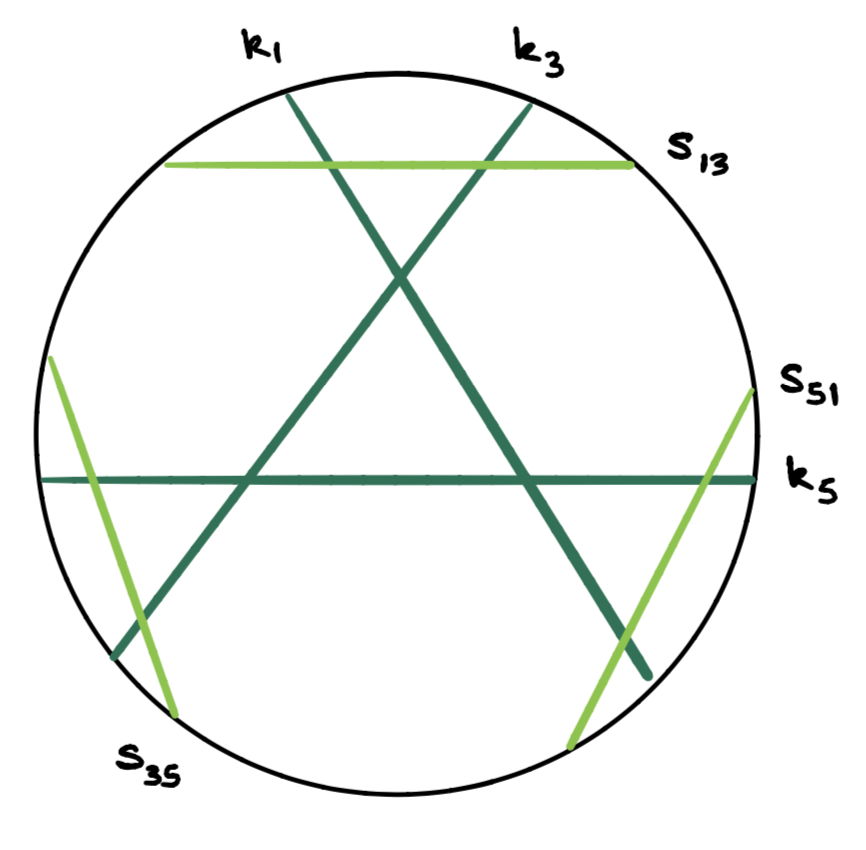}
	\caption{A circle model for the tent graph $H$.}
		\label{fig:example_tentmodel}
\end{figure}


To place the chords corresponding to each vertex in $S_{22}$, first we need to place the chords that represent every vertex in $K_2$. This follows from the fact that a chord re\-pre\-sen\-ting a vertex in $K_2$ has one endpoint between the arc $k_1 k_3$ and the other endpoint between the arc $s_{51} s_{35}$, and a chord representing a vertex in $S_{22}$ has either both endpoints inside the arc $k_1 k_3$ or both endpoints inside the arc $s_{51} s_{35}$, always intersecting chords representing vertices in $K_2$.
Thus, in order to place the chords corresponding to each vertex of $K_2$, we need to establish an ordering of the vertices in $K_2$ that respects the partial ordering relationship given by the neighbourhoods of the vertices in $S_{22}$. 
For example, since $N(s_1) \subseteq N(s_2)$, it follows that an ordering of the chords in $K_2$ that allows us to give a circle model must contain one of the following subsequences: $(k_{21}$, $k_{22}$, $k_{23})$ or $(k_{22}$, $k_{21}$, $k_{23})$ or $(k_{23}$, $k_{21}$, $k_{22})$ or $(k_{23}$, $k_{22}$, $k_{21})$. Moreover, since $N(s_2) \cap N(s_3) \neq \emptyset$ and $N(s_2)$ and $N(s_3)$ are not nested, then the chords corresponding to $s_2$ and $s_3$ must be drawn in distinct portions of the circle model, for they represent independent vertices and thus the chords cannot intersect. 
The vertex $s_4$ is adjacent only to $k_{21}$, thus $N(s_4)$ is contained in both $N(s_1)$ and $N(s_2)$ and is disjoint with $N(s_3)$. Hence, the chord that represents $s_4$ may be placed indistinctly in any of the two portions of the circle corresponding to the partition $S_{22}$.

Therefore, when considering the placement of the chords, we find ourselves in front of two important decisions: (1) in which order should we place the chords corresponding to the vertices in $K_2$ so that we can draw the chords of those independent vertices adjacent to $K_2$, and (2) in which portion of the circle model should we place both endpoints of the chords corresponding to vertices in $S_{22}$. We give a circle model for $G$ in Figure \ref{fig:example_model0} 

\begin{figure}[h!] 
	\centering
	\includegraphics[scale=1]{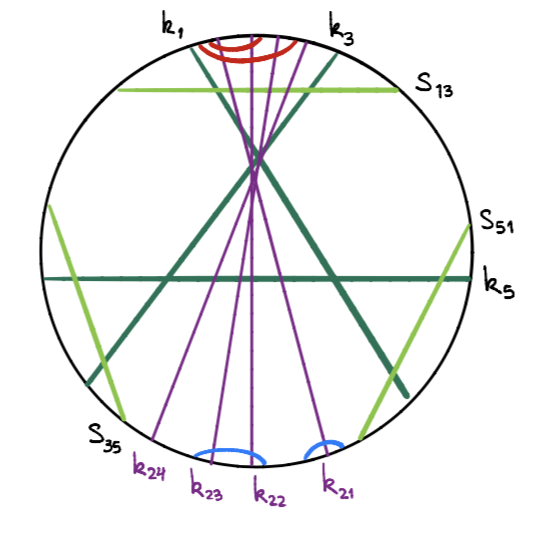}
	\caption{A circle model for the split graph $G$.}
		\label{fig:example_model0}
\end{figure}

Yet in this small example of a split graph that is circle, it becomes evident that there is a property that must hold for every pair of independent vertices that have both of its endpoints placed within the same arc of the circumference. This led to the definition of nested matrices, which was the first step in order to translate some of these problems to having certain properties in the adjacency matrix $A(S,K)$ (See Section \ref{section:basic_defs} for more details on the definition of $A(S,K)$). 

\begin{defn} \label{def:nested_m}
	Let $A$ be a $(0,1)$-matrix. We say $A$ is \emph{nested} if there is a consecutive-ones ordering for the rows and every two rows are disjoint or nested.
\end{defn} 

\begin{defn} \label{def:nested_g}
A split graph $G = (K,S)$ is \emph{nested} if and only if $A(S,K)$ is a nested matrix.
\end{defn}

\begin{teo} \label{teo:nested_caract}
A $(0,1)$-matrix is nested if and only if it contains no $0$-gem as a
submatrix (See Figure \ref{fig:forb_nested}).
\end{teo}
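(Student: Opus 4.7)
The plan is to prove both implications separately. For the forward direction, I would verify directly from the form of the $0$-gem in Figure~\ref{fig:forb_nested} that its two distinguished rows overlap, i.e., they are neither disjoint nor one contained in the other. Hence, if a nested matrix $A$ contained a $0$-gem as a submatrix, the two corresponding rows of $A$ would also overlap, contradicting the second condition of Definition~\ref{def:nested_m}.

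For the converse, assume $A$ contains no $0$-gem. The first step is to show that every two nonempty rows of $A$ are disjoint or nested. If two rows $a_{i.}$ and $a_{k.}$ overlapped, then there would exist columns $j_1, j_2, j_3$ with $a_{i j_1}=a_{i j_2}=1=a_{k j_2}=a_{k j_3}$ and $a_{i j_3}=a_{k j_1}=0$ (possibly after swapping $i$ and $k$). The $2\times 3$ submatrix induced by rows $\{i,k\}$ and columns $\{j_1,j_2,j_3\}$ would be a $0$-gem, contradicting the hypothesis. Consequently, the family of (supports of the) nonempty rows of $A$ is laminar on the set of columns.

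It remains to produce a consecutive-ones ordering $\Pi$ of the columns. The plan is to encode the laminar family as a rooted forest $F$ whose nodes are the nonempty rows, where the parent of $a_{i.}$ is the inclusion-minimal row strictly containing $a_{i.}$, when such a row exists. Then $\Pi$ is obtained by a depth-first traversal of $F$: at each node $a_{i.}$, place consecutively the columns in which $a_{i.}$ has a $1$ but no descendant of $a_{i.}$ does (its \emph{private} columns), immediately followed by the column-blocks produced recursively by each child, in an arbitrary but fixed order; finally, append the columns that contain only $0$'s at the end. An induction on the height of $F$ then shows that, with respect to $\Pi$, the $1$'s of every row form a contiguous run, which yields the C$1$P.

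The main obstacle, though essentially bookkeeping, is the inductive step in the construction of $\Pi$: one has to verify that the $1$'s of each row $a_{i.}$ remain consecutive. This works precisely because laminarity guarantees that every $1$-column of $a_{i.}$ is either private to $a_{i.}$ or lies in the subtree rooted at exactly one child of $a_{i.}$; grouping private columns and children's blocks consecutively at each node therefore keeps the $1$'s of $a_{i.}$ together, completing the proof.
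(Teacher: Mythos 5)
Your proof is correct, and the converse direction takes a genuinely different route from the paper's. The forward direction is the same in both: the two rows of a $0$-gem overlap, so a nested matrix cannot contain one. For the converse, the paper leans on Tucker's theorem: it observes that every Tucker matrix contains a $0$-gem as a submatrix, so a $0$-gem-free matrix contains no Tucker matrix and therefore has the C$1$P, while $0$-gem-freeness separately gives the disjoint-or-nested condition. You instead bypass Tucker's theorem entirely: you note that $0$-gem-freeness makes the row supports a laminar family and then build the consecutive-ones ordering explicitly via a depth-first traversal of the associated forest. Your argument is more self-contained and constructive (it produces the ordering rather than inferring its existence), at the cost of the forest bookkeeping; the paper's is a two-line deduction but implicitly requires checking that each of the five Tucker families in Figure~\ref{fig:tucker_matrices} contains a $0$-gem, which it does not spell out. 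One small point to patch in your construction: if two distinct rows have equal nonempty support, neither strictly contains the other, so in your forest they become siblings and their common columns would be ``private'' to both, i.e., placed twice. This is easily repaired by building the forest on the distinct supports (identifying rows with equal support) or by making one such row a child of the other; with that convention each column is assigned to a unique node and your induction on the height of the forest goes through.
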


\begin{proof}
Since no Tucker matrix has the C$1$P and the rows of the $0$-gem are neither disjoint nor nested, no nested matrix contains a Tucker matrix or a $0$-gem as submatrices. Con\-verse\-ly, as each Tucker matrix contains a $0$-gem as a submatrix, every matrix containing no $0$-gem as a submatrix is a nested matrix.
\end{proof}

\begin{figure}[h!] 
	\centering  
	\includegraphics[scale=.6]{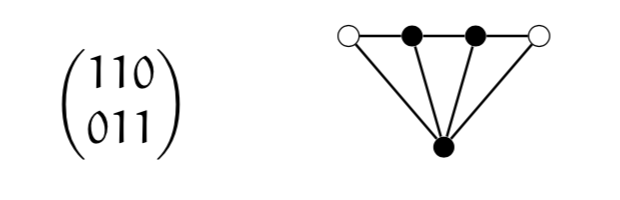}
	\caption{\mbox{The} $0$-gem \mbox{matrix and the associated gem graph.}} \label{fig:forb_nested}
\end{figure}

Let us consider the matrix $A(S,K_2)$ that corresponds to the example given in Figure \ref{fig:example_graph0}, where the rows are given by $s_1$, $s_2$, $s_3$ and $s_4$, and the columns are $k_{21}$, $k_{22}$, $k_{23}$ and $k_{24}$.
\vspace{-.5mm}
\[
A(S,K_2) = \begin{pmatrix}
	1 1 0 0 \\
	1 1 1 0 \\
	0 1 1 0 \\
	1 0 0 0 
\end{pmatrix}
\]

Notice that the existance of a C$1$P for the columns of the matrix $A(S,K_2)$ is a necessary condition to find an ordering of the vertices in $K_2$ that is compatible with the partial ordering given by containment for the vertices in $S_{22}$. 
Moreover, if the matrix $A(S,K_2)$ is nested, then any two independent vertices are either nested or disjoint. In other words, if $A(S,K_2)$ is nested, then we can draw every chord corresponding to an independent vertex in $G \setminus T$ in the same arc of the circumference. However, this is not the case in the previous example, for the vertices $s_1$ and $s_3$ are neither disjoint nor nested, and thus they cannot be drawn in the same portion of the circle model. Hence, $A(S,K)$ is not a nested matrix, and thus the notion of nested matrix is not enough to determine whether there is a circle model for a given split graph or not.

Let us see one more example. Consider $H$ to be the split graph presented in Figure \ref{fig:example_graph1}. 
Notice that this graph is equal to $G$ plus three new independent vertices.

\begin{figure}[h!] 	\centering
	\includegraphics[scale=.9]{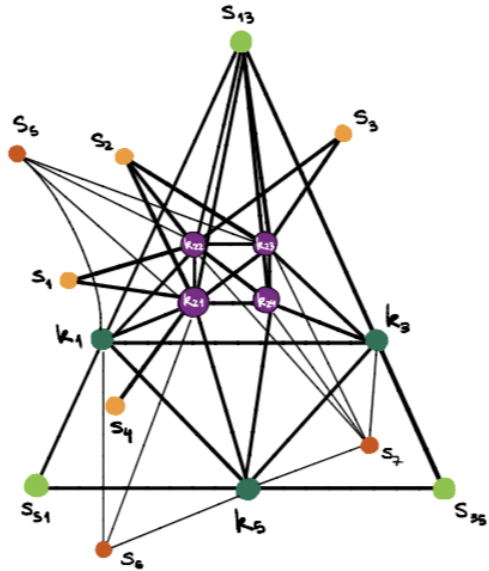}
	\caption{Example $2$: the split circle graph $H$.}
	\label{fig:example_graph1}
\end{figure}


Moreover, unlike $s_1$, $s_2$, $s_3$ and $s_4$, the chords that represent these new independent vertices $s_5$, $s_6$ and $s_7$ have only one of its endpoints in the arcs corresponding to the area of the circle designated for $K_2$, this is, in the arcs $k_1 k_3$ and $s_{51} s_{35}$. Furthermore, each of these new vertices has a unique possible placement for each endpoint of their corresponding chord.
If we consider the rows given by the vertices $s_1, \ldots, s_7$ and the columns given by $k_{21}, \ldots, k_{24}$, then the adjacency matrix $A(S,K_2)$ in this example is as follows:
\vspace{-.5mm}
\[
A(S,K_2) = \begin{pmatrix}
	1 1 0 0 \\
	1 1 1 0 \\
	0 1 1 0 \\
	1 0 0 0 \\
	1 1 1 0 \\
	1 0 0 0 \\
	0 1 1 1 
\end{pmatrix}
\]

As in the previous example, $A(S,K_2)$ is not a nested matrix. Furthermore, notice that in this case not every adjacency of each independent vertex $s_1, \ldots, s_7$ is depicted in this matrix, since $s_5$, $s_6$ and $s_7$ all are adjacent to at least one vertex in $K \setminus K_2$. 

Let us concentrate in the placement of the endpoints of the chords representing $s_5$, $s_6$ and $s_7$ that lie between the arcs $k_1 k_3$ and $s_{51} s_{35}$. Notice that the ``nested or disjoint'' property must still hold, and not only for those vertices in $K_2$. More precisely, since $s_5$ is adjacent to $k_{24}$, $k_{23}$ and $k_1$ and $s_1$ is nonadjacent to $k_1$ and adjacent to $k_{23}$ and $k_{24}$, then necessarily $s_1$ must be contained in $s_5$. Something similar occurs with $s_7$ and $s_3$, whereas $s_6$ and $s_3$ are disjoint. 

There is one situation in this example that did not occur in the previous one. Since $s_6$ is adjacent to $k_{21}$, $k_1$ and $k_5$, then the chord corresponding to the vertex $k_{21}$ is forced to be placed first within every chord corresponding to $K_2$. This follows from the fact that a chord that represents $s_6$ has a unique possible placement inside the arc $s_{51} s_{35}$, for we need $k_{21}$ to be the first chord of $K_2$ that comes right after $s_{51}$. Moreover, this is confirmed by the fact that $s_5$ is adjacent to $k_1$ and $k_{21}$, thus the chord corresponding to the vertex $k_{21}$ must be drawn first when considering the ordering given by the neighbourhoods of those independent vertices that have at least one endpoint lying in $k_1 k_3$. It follows from the previous that $k_{21}$ being the first vertex in the ordering is a necessary condition when searching for a consecutive-ones ordering for the matrix $A(S,K_2)$. See Figure \ref{fig:example_model1}, where we give a circle model for the graph $H$.

\begin{figure}[h!] 	\centering
	\includegraphics[scale=1]{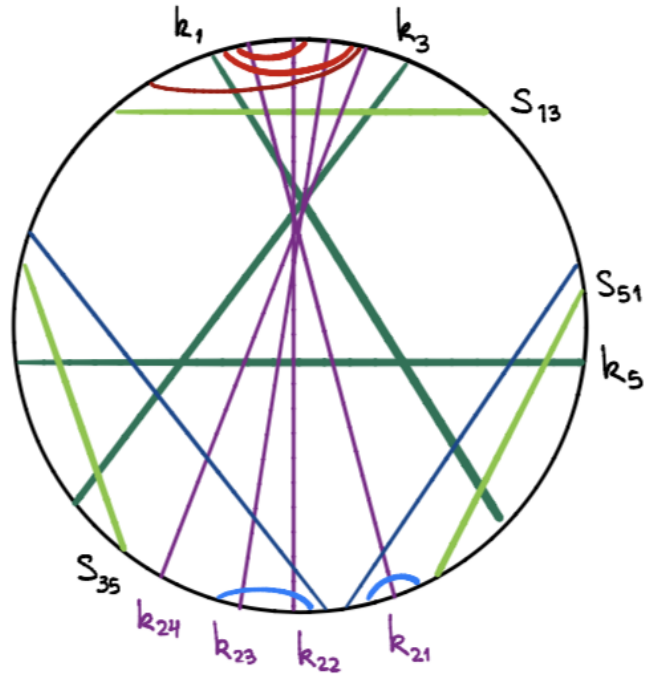}
	\caption{A circle model for the split graph $H$.}
	\label{fig:example_model1}
\end{figure}

The previously described situations must also hold for each partition $K_i$ of $K$. We translated the problem of giving a circle model to the fullfilment of some properties for each of the matrices $A(S,K_i)$, where $K= \bigcup_{i} K_i$ and these partitions depend on whether $G$ contains an induced tent, $4$-tent or co-$4$-tent.
This led to the definition of enriched matrices, which allowed us to model some of the above mentioned properties, and also others that came up when considering split graphs containing a $4$-tent and a co-$4$-tent. 

\begin{defn} \label{def:enriched_matrix}
	Let $A$ be a $(0,1)$-matrix. We say $A$ is an \emph{enriched matrix} if all of the following conditions hold:
	\begin{enumerate}
		\item Each row of $A$ is either unlabeled or labeled with one of the following labels: L or R or LR. We say that a row is an \emph{LR-row  (resp.\ L-row, R-row)} if it is labeled with LR (resp.\ L, R).
		\item Each row of $A$ is either uncolored or colored with either blue or red.
		\item The only colored rows may be those labeled with L or R, and those LR-rows having a $0$ in every column.
		\item The LR-rows having a $0$ in every column are all colored with the same color. 
	\end{enumerate}
	
	The \emph{underlying matrix of $A$} is the $(0,1)$-matrix that coincides with $A$ that has neither labels nor colored rows.
\end{defn}

We will denote the color assignment for a row with a colored bullet at the right side of the matrix.

The color assignment for some of the rows represents in which arc of the circle corresponding to $K_i$ we must draw one or both endpoints when considering the placement of the chords. Some of the independent vertices have a unique possible placement, and some of them can be --a priori-- drawn in either two of the arcs corresponding to $K_i$.
Moreover, the labeling of the rows explains `'from which direction does the chord come from'' if we are standing in a particular portion of the circle. 
For example, the following is the matrix $A(S,K_2)$ for the graph represented in Figure \ref{fig:example_model1} considered as an enriched matrix --taking into account all the information on the placement of the chords: 

\vspace{-5mm}
\[
A(S,K_2) = 
\bordermatrix{ &  \cr
	& 1 1 0 0 \cr
	& 1 1 1 0 \cr
	& 0 1 1 0 \cr
	& 1 0 0 0 \cr
	\textbf{L} & 1 1 1 0 \cr
	\textbf{L} & 1 0 0 0 \cr
	\textbf{R} & 0 1 1 1 }\,
	\begin{matrix} 
   \cr  \cr \cr  \cr \textcolor{red}{\bullet} \cr \textcolor{blue}{\bullet} \cr \textcolor{blue}{\bullet} 
\end{matrix}
\]
\vspace{.5mm}
\begin{defn} \label{def:LR-orderable}
	Let $A$ be an enriched matrix. We say $A$ is \emph{LR-orderable} if there is a linear ordering $\Pi$ for the columns of $A$ such that each of the following assertions holds:
	\begin{itemize}
	    \item $\Pi$ is a consecutive-ones ordering for every non-LR row of $A$.
	    
		\item The ordering $\Pi$ is such that the ones in every nonempty row labeled with L (resp.\ R) start in the first column (resp.\ end in the last column). 
		
		\item $\Pi$ is a consecutive-ones ordering for the complements of every LR-row of $A$.
     \end{itemize} 
Such an ordering is called an \emph{LR-ordering}.
For each row of $A$ labeled with L or LR and having a $1$ in the first column of $\Pi$, we define its \emph{L-block (with respect to $\Pi$)} as the maximal set of consecutive columns of $\Pi$ starting from the first one on which the row has a 1. \emph{R-blocks} are defined on an entirely analogous way.
For each unlabeled row of $A$, we say its \emph{U-block (with respect to $\Pi$)} is the set of columns having a $1$ in the row.
The blocks of $A$ with respect to $\Pi$ are its L-blocks, its R-blocks and its U-blocks. 
\end{defn} 

\begin{defn} \label{def:block-bicoloring}
	Let $A$ be an enriched matrix. We say an \emph{L-block (resp.\ R-block, U-block) is colored} if there is a 1-color assignment for every entry of the block. 
	
	A \emph{block bi-coloring for the blocks of $A$} is a color assignment with either red or blue for some L-blocks, U-blocks and R-blocks of $A$.
		A block bi-coloring is \emph{total }if every L-block, R-block and U-block of $A$ is colored, and is \emph{partial }if otherwise.
\end{defn}


Notice that for every enriched matrix, the only colored rows are those labeled with L or R and those empty LR-rows. Moreover, for every LR-orderable matrix, there is an ordering of the columns such that every row labeled with L (resp.\ R) starts in the first column (resp.\ ends in the last column), and thus all its $1$'s appear consecutively.  
Thus, if an enriched matrix is also LR-orderable, then the given coloring induces a partial block bi-coloring (see Figure \ref{fig:example_LR-ord}), in which every empty LR-row remains the same, whereas for every nonempty colored labeled row, we color all its $1$'s with the color given in the definition of the matrix.

\begin{figure}[h!]
\begin{align*}
	A = \bordermatrix{ &  \cr
	\textbf{LR} & 1  0  0  0  1 \cr
	\textbf{LR} & 1  1  0  0  1 \cr
					& 0  1 1  0  0 \cr
	\textbf L & \textcolor{red}{1}  \textcolor{red}{1}  \textcolor{red}{1}  0  0 \cr
	\textbf{LR} & 0  0  0   0  0 \cr
	\textbf R & 0  0  \textcolor{blue}{1} \textcolor{blue}{1} \textcolor{blue}{1} }\,
	\begin{matrix} 
   \cr  \cr \cr  \cr \textcolor{red}{\bullet} \cr \textcolor{blue}{\bullet} \cr \textcolor{blue}{\bullet} 
\end{matrix}
	&&
	B = \bordermatrix{ & \cr
	\textbf{LR} & 1  0  1  0 1 \cr
	\textbf L & \textcolor{red}{1}  \textcolor{red}{1}  0  0  0 \cr
	\textbf R & 0  0  0  \textcolor{blue}{1}  \textcolor{blue}{1} \cr
				& 0  0  1  1  0 }\,
	\begin{matrix} 
   \cr  \textcolor{red}{\bullet} \cr \textcolor{blue}{\bullet} \cr \cr
\end{matrix}
\end{align*}
\caption{Example: An enriched LR-orderable matrix $A$, where the column ordering given from left to right is a consecutive-ones ordering. 
$B$ is an enriched non-LR-orderable matrix.} \label{fig:example_LR-ord} 
\end{figure}

We now define $2$-nested matrices, which will allow us to address and solve both the problem of ordering the columns in each adjacency matrix $A(S,K_i)$ of a split graph for each partition $K_i \subset K$, and the problem of deciding if there is a feasible distribution of the independent vertices adjacent to $K_i$ between the two portions of the circle corresponding to $K_i$. This allows to give a circle model for the given graph. We give a complete characterization of these matrices by forbidden subconfigurations at the end of this chapter.

\begin{defn} \label{def:2-nested}
	Let $A$ be an enriched matrix. We say $A$ is \emph{$2$-nested} if there exists an LR-ordering $\Pi$ of the columns and an assignment of colors red or blue to the blocks of $A$ such that all of the following conditions hold:
\begin{enumerate}
	\item If an LR-row has an L-block and an R-block, then they are colored with distinct colors. \label{item:2nested1}
	\item For each colored row $r$ in $A$, any of its blocks is colored with the same color as $r$ in $A$. \label{item:2nested2}
	\item If an L-block of an LR-row is properly contained in the L-block of an L-row, then both blocks are colored with different colors. \label{item:2nested3}
	\item Every L-block of an LR-row and any R-block are disjoint. The same holds for an R-block of an LR-row and any L-block. \label{item:2nested4} 
	\item If an L-block and an R-block are not disjoint, then they are colored with distinct colors.	\label{item:2nested5} 
	\item Each two U-blocks colored with the same color are either disjoint or nested. \label{item:2nested6}
	\item If an L-block and a U-block are colored with the same color, then either they are disjoint or the U-block is contained in the L-block. The same holds replacing L-block for R-block. \label{item:2nested7}
	\item If two distinct L-blocks of non-LR-rows are colored with distinct colors, then every LR-row has an L-block. The same holds replacing L-block for R-block.  \label{item:2nested8} 
	\item If two LR-rows overlap, then the L-block of one and the R-block of the other are colored with the same color. \label{item:2nested9}
	
\end{enumerate}	

An assignment of colors red and blue to the blocks of $A$ that satisfies all these properties is called a \emph{(total) block bi-coloring}.
\end{defn}

\begin{figure}
\begin{align*}
	A = \bordermatrix{  &  \cr
	\textbf{LR} &  \textcolor{blue}{1}  0  0  0  \textcolor{red}{1} \cr
	\textbf{LR} &  \textcolor{blue}{1}   \textcolor{blue}{1}  0  0   \textcolor{red}{1} \cr
					 & 0  \textcolor{red}{1}  \textcolor{red}{1}  0  0 \cr
	\textbf L &  \textcolor{red}{1}   \textcolor{red}{1}   \textcolor{red}{1}  0  0 \cr
	\textbf{LR} &  0   0   0   0   0 \cr
	\textbf R & 0  0  \textcolor{blue}{1}  \textcolor{blue}{1}  \textcolor{blue}{1} } \,
	\begin{matrix}
	\\ \\  \\  \\  \textcolor{red}{\bullet} \\ \textcolor{blue}{\bullet} \\ \textcolor{blue}{\bullet} \\
	\end{matrix}
\end{align*}
\caption{Example of a total block bi-coloring of the blocks of the matrix in Figure \ref{fig:example_LR-ord}, considering the columns ordered from left to right. Moreover, $A$ is $2$-nested considering this LR-ordering and total block bi-coloring.}
\end{figure}

\begin{remark}
We will give some insight on which properties we are modeling with Definition \ref{def:2-nested}, which are necessary conditions that each matrix $A(S,K_i)$ must fullfil in order to give a circle model for any split graph containing a tent, $4$-tent or co-$4$-tent.

The LR-rows represent independent vertices that have both endpoints in the arcs corresponding to $K_i$. The difference between these rows and those that are unlabeled, is that one endpoint of the chords must be placed in one of the arcs corresponding to $K_i$ and the other endpoint must be placed in the other arc corresponding to $K_i$. Hence, the first property ensures that, when deciding where to place the chord corresponding to an LR-row, if the ordering indicates that the chord intersects some of its adjacent vertices in one arc and the other in the other arc, then the distinct blocks corresponding to the row must be colored with distinct colors.

With the second property, we ensure that the colors that are pre-assigned are respected, since they correspond to independent vertices with a unique possible placement.

The third property refers to the ordering given by containement for the vertices. We will further on see that every LR-row represents vertices that are adjacent to almost every vertex in the complete partition $K$ of $G$. Hence, when dividing the LR-rows into blocks, we need to ensure that each of its block is not properly contained in the neighbourhoods of vertices that are nonadjacent to at least one partition of $K$. Something similar must hold for L-rows (resp.\ R-rows) and U-rows, and L-rows (resp.\ R-rows) and LR-rows. This is modeled by properties \ref{item:2nested7} and \ref{item:2nested8}.

The properties \ref{item:2nested4}, \ref{item:2nested5}, \ref{item:2nested6} and \ref{item:2nested9} refer to the previously discussed `'nested or disjoint'' property that we need to ensure in order to give a circle model for $G$.
\end{remark}

This chapter is organized as follows. In Section \ref{section:defs_para2nested} we give some more definitions which are necessary to state a characterization of $2$-nested matrices. In Section \ref{section:admissibility} we define and characterize admissible matrices, which give necessary conditions for a matrix to admit a total block bi-coloring. In Section \ref{section:part2nested} we define and characterize LR-orderable and partially $2$-nested matrices, and then we prove some properties of LR-orderings in admissible matrices. Finally, in Section \ref{section:2nestedmatrices} we prove Theorem \ref{teo:2-nested_caract_bymatrices}, which characterizes $2$-nested matrices by forbidden subconfigurations. 

\section{A characterization for $2$-nested matrices} \label{section:defs_para2nested}

In this section, we begin by giving some definitions and examples that are necessary to state Theorem \ref{teo:2-nested_caract_bymatrices}, which is presented at the end of this section and is the main result of this chapter. The proof of Theorem \ref{teo:2-nested_caract_bymatrices} will be given in Section \ref{section:2nestedmatrices}.

\begin{defn}
Let $A$ be an enriched matrix. The dual matrix of $A$ is defined as the enriched matrix $\tilde{A}$ that coincides with the underlying matrix of $A$ and for which every row of $A$ that is labeled with L (resp.\ R) is now labeled with R (resp.\ L) and every other row remains the same. Also, the color assigned to each row remains as in $A$.
\end{defn}

\begin{figure}[h]
\begin{align*}
	A = \bordermatrix{ &        \cr
	\textbf{LR} & 1   0   0   0   1 \cr
	\textbf{LR} & 1   1   0   0   1 \cr
					& 0   1   1   0   0 \cr
	\textbf L & \textcolor{red}{1}   \textcolor{red}{1}   \textcolor{red}{1}   0   0 \cr
	\textbf{LR} & 0   0   0    0   0 \cr
	\textbf R & 0   0   \textcolor{blue}{1}   \textcolor{blue}{1}   \textcolor{blue}{1} }\,
	\begin{matrix} 
   \\  \\ \\  \\ \textcolor{red}{\bullet} \\ \textcolor{blue}{\bullet} \\ \textcolor{blue}{\bullet} 
\end{matrix}
	&&
	\tilde{A} = \bordermatrix{ &         \cr
	\textbf{LR} & 1   0   0   0   1 \cr
	\textbf{LR} & 1   1   0   0   1 \cr
					& 0   1   1   0   0 \cr
	\textbf R & \textcolor{red}{1}   \textcolor{red}{1}   \textcolor{red}{1}   0   0 \cr
	\textbf{LR} & 0   0   0    0   0 \cr
	\textbf L & 0   0   \textcolor{blue}{1}   \textcolor{blue}{1}   \textcolor{blue}{1} }\,
	\begin{matrix} 
   \\  \\ \\  \\ \textcolor{red}{\bullet} \\ \textcolor{blue}{\bullet} \\ \textcolor{blue}{\bullet} 
\end{matrix}
\end{align*}
\caption{Example: $A$ and its dual matrix.} \label{fig:example_dualmatrix}
\end{figure}

\vspace{2mm}
The $0$-gem, $1$-gem and $2$-gem are the following enriched matrices:
\[ 		\begin{pmatrix}
		 1 1 0 \cr
		0 1 1
		\end{pmatrix}, \qquad
		\bordermatrix{ & \cr
		 & 1 0 \cr
		 & 1 1 }\ , \qquad
		\bordermatrix{ & \cr
		\textbf{LR} & 1 1 0 \cr
		\textbf{LR} & 1 0 1  }\  \]
respectively.

\begin{defn}  \label{def:gems}
	Let $A$ be an enriched matrix. We say that $A$ \emph{contains a gem} (resp.\ \emph{doubly-weak gem}) if it contains a $0$-gem (resp.\ a $2$-gem) as a subconfiguration.
	We say that $A$ \emph{contains a weak gem} if it contains a $1$-gem such that, either the first is an L-row (resp.\ R-row) and the second is a U-row, or the first is an LR-row and the second is a non-LR-row.
	We say that a $2$-gem is \emph{badly-colored }if the entries in the column in which both rows have a $1$ are in blocks colored with the same color. 
\end{defn}

Let $r$ be an LR row of $A$. We denote with $\overline{r}$ to \emph{the complement of $r$}, this is, the row that has a $1$ in each coordinate of $r$ that has a 0, and has a $0$ in each coordinate of $r$ that has a 1.

\begin{defn} \label{def:A*}
	Let $A$ be an enriched matrix and let $\Pi$ be a LR-ordering. 
	We define $A^*$ as the enriched matrix that arises from $A$ by:
\begin{itemize}
	\item Replacing each LR-row by its complement.
	\item Adding two distinguished rows: both rows have a $1$ in every column, one is labeled with L and the other is labeled with R. 
\end{itemize}	 
\end{defn}

In Figures \ref{fig:forb_D}, \ref{fig:forb_F}, \ref{fig:forb_Si}, \ref{fig:forb_P} and \ref{fig:forb_LR-orderable} we define some matrices, for they play an important role in the sequel. 
We will use green and orange to represent red and blue or blue and red, respectively. For every enriched matrix represented in the figures of this chapter, if a row labeled with L or R appears in black, then it may be colored with either red or blue indistinctly.
Moreover, whenever a row is labeled with \textbf{L (LR)} (resp.\ \textbf{R (LR)}), then such a row may be either a row labeled with L or LR (resp.\ R or LR) indistinctly.

\begin{figure}[h]
	\centering
	\includegraphics[scale=.5]{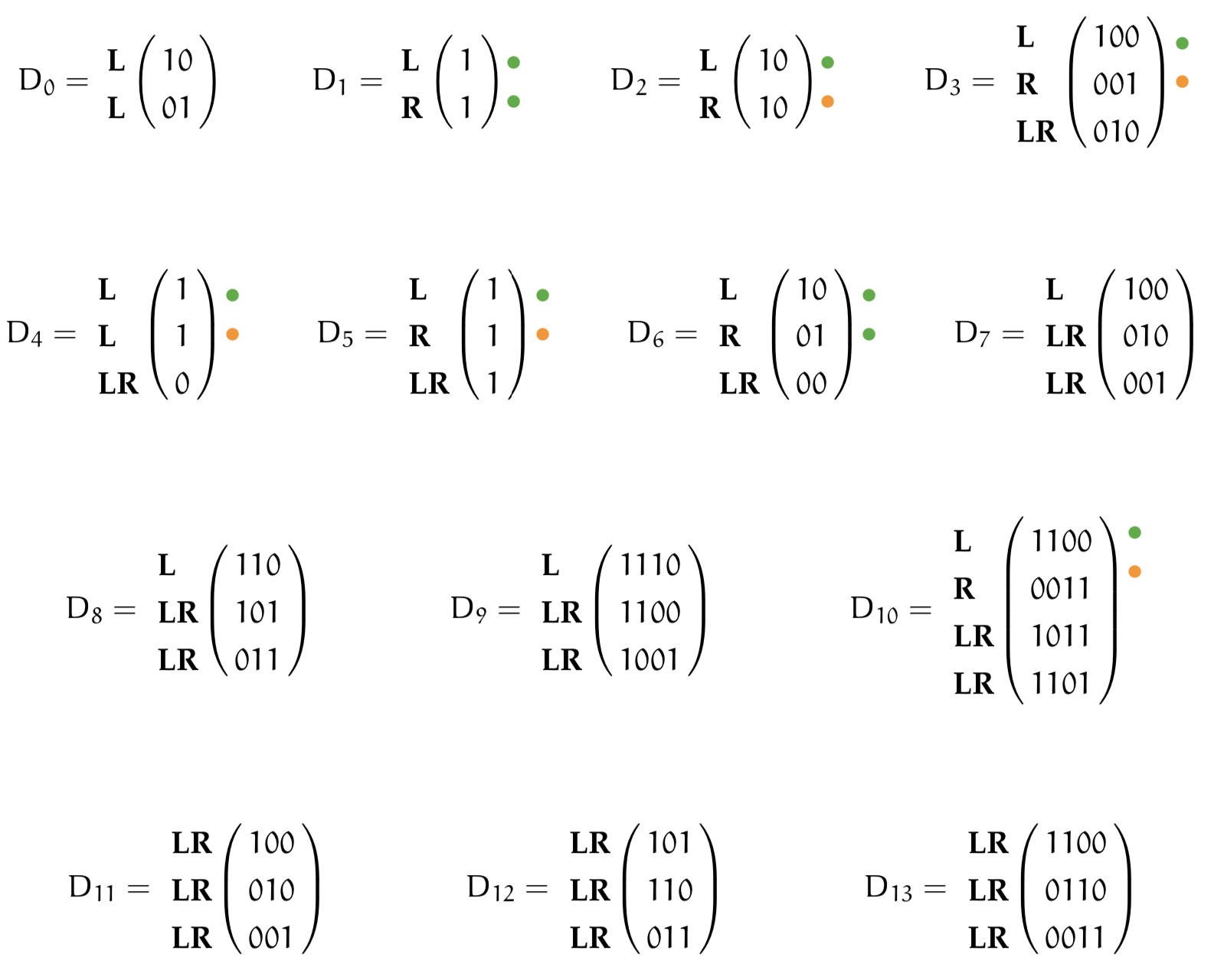}
	\caption{The family of enriched matrices $\mathcal{D}$.} 
	\label{fig:forb_D}
\end{figure}

\begin{figure}[H] 
\small{
	\centering
	\begin{align*}
			F_0= \begin{pmatrix}
				11100\\
				01110\\
				00111\\
			\end{pmatrix}
			&&
			F_1(k)= \begin{pmatrix}
				011...111\\
				111...110\\
				000...011\\
				000...110\\
				.   .   .   .   . \\
				.   .   .   .   . \\
				.   .   .   .   . \\
				110...000\\
			\end{pmatrix}
			&&
			F_2(k)= \begin{pmatrix}
				0111...10\\
				1100...00\\
				0110...00\\
				.   .   .   .   . \\
				.   .   .   .   . \\
				.   .   .   .   . \\
				0000...11\\
			\end{pmatrix} 
			\end{align*}
			\begin{align*}
			F'_0= \bordermatrix{ & \cr
				\textbf{L (LR)} & 1 1 0 0 \cr
				 &1 1 1 0 \cr
				&0 1 1 1 }\			 
			&&
			F''_0= \bordermatrix{ & \cr
				\textbf{L} & 1 1 0 \cr
				 &1 1 1 \cr
				\textbf{R} &0 1 1 }\	
			&&
			F'_1(k)= \bordermatrix{ & \cr
				&11\ldots1111\cr
				\textbf{L (LR)}&11\ldots1110\cr
				&00\ldots0011\cr
				&00\ldots0110\cr
				& \iddots \cr
				\textbf{L (LR)}&10 \ldots 0000 }\
	\end{align*}
	\begin{align*}
			F'_2(k)= \bordermatrix{ & \cr
				&111\ldots10\cr
				\textbf{L (LR)}&100\ldots00\cr
				&110\ldots00\cr
				&  \ddots \cr
				&000\ldots11 }\
	\end{align*}
	}
	\caption{The enriched matrices of the family $\mathcal{F}$.}  \label{fig:forb_F}
\end{figure}

The matrices $\mathcal{F}$ represented in Figure \ref{fig:forb_F} are defined as follows:
$F_1(k) \in \{0,1\}^{k \times (k-1)}$, $F_2(k) \in \{0,1\}^{k \times k}$, $F'_1(k) \in \{0,1\}^{k \times (k-2)}$ and $F'_2(k) \in \{0,1\}^{k \times (k-1)}$, for every odd $k \geq 5$. In the case of $F'_0$, $F'_1(k)$ and $F'_2(k)$, the labeled rows may be either L or LR indistinctly, and in the case of their dual matrices, the labeled rows may be either R or LR indistinctly.

The matrices $\mathcal{S}$ in Figure \ref{fig:forb_Si} are defined as follows. 
If $k$ is odd, then $S_1(k) \in \{0,1\}^{(k+1) \times k}$ for $k \geq 3$, and if $k$ is even, then $S_1(k) \in \{0,1\}^{k \times (k-2)}$ for $k\geq 4$.
The remaining matrices have the same size whether $k$ is even or odd: $S_2(k) \in \{0,1\}^{k \times (k-1)}$ for $k \geq 3$, 
$S_3(k) \in \{0,1\}^{k \times (k-1)}$ for $k \geq 3$, 
$S_5(k) \in \{0,1\}^{k \times (k-2)}$ for $k \geq 4$, 
$S_4(k) \in \{0,1\}^{k \times (k-1)}$, $S_6(k) \in \{0,1\}^{k \times k}$ for $k\geq 4$,
$S_7(k) \in \{0,1\}^{k \times (k+1)}$ for every $k \geq 3$
and $S_8(2j) \in \{0,1\}^{2j \times (2j)}$ for $j \geq 2$. 
With regard to the coloring of the labeled rows, if $k$ is even, then the first and last row of $S_2(k)$ and $S_3(k)$ are colored with the same color, and in $S_4(k)$ and $S_5(k)$ are colored with distinct colors.

\begin{figure}[H]
	\centering
	\small{
			\begin{align*}
			S_1(2j) &= \bordermatrix{ & \cr
				\textbf{L}& 1 0 \ldots 00  \cr
							  & 1 1 \ldots 00 \cr
						       &   \ddots  \cr
						       & 0 0 \ldots 11 \cr
				\textbf{LR} &0 0  \ldots 01 \cr
				\textbf{L} & 1  1  \ldots  11 }\
			&
			S_1(2j+1) &=  \bordermatrix{ & \cr
				\textbf{L} & 1 0 \ldots 0 0 \cr
							& 1 1 \ldots 0 0 \cr
							& \ddots \cr
							& 0 0 \ldots 1 1 \cr
				\textbf{LR} &0 0 \ldots 0 1 }\
			&
		S_2(k) &= \bordermatrix{ & \cr
				\textbf{L}& 1 0 \ldots 0 0 \cr
							& 1 1  \ldots 0 0 \cr
							&  \ddots \cr
					 	    &0 0 \ldots 1 1 \cr
				\textbf{L} & 1 1 \ldots 1 0 }\
				\begin{matrix}
			 \textcolor{dark-orange}{\bullet} \\ \\ \\ \\ \\ \textcolor{dark-green}{\bullet} \\
				\end{matrix}
				\end{align*}
				\begin{align*}
			S_3(k) &=  \bordermatrix{ & \cr
				\textbf{L} & 1 0 \ldots 0 0 \cr
							& 1 1  \ldots 0 0 \cr
							&  \ddots \cr
							&0 0 \ldots 1 1 \cr
				\textbf{R} & 0 0 \ldots 0 1 }\
				\begin{matrix}
			\textcolor{dark-orange}{\bullet} \\ \\ \\ \\ \\ \textcolor{dark-green}{\bullet} \\
				\end{matrix}
			&
		S_4(k) &= \bordermatrix{ & \cr
		\textbf{LR}& 1 1 \ldots 1 1 \cr
		\textbf{L} & 1 0 \ldots 0 0 \cr
						& 1 1  \ldots  0 0 \cr
						&  \ddots \cr
						& 0 0  \ldots 1 1 \cr
		  \textbf{R}& 0 0 \ldots 0 1  }\
				\begin{matrix}
			\\ \textcolor{dark-orange}{\bullet} \\ \\ \\ \\ \\ \textcolor{dark-green}{\bullet} \\
				\end{matrix}
		&		
		S_5(k) &= \bordermatrix{ & \cr
		\textbf{L}& 1 0 \ldots 0 0 \cr
					  & 1 1 \ldots 0 0 \cr
					  &  \ddots \cr
					  & 0 0  \ldots 1 1 \cr
		\textbf{LR}& 1 1 \ldots 1 0 \cr
		\textbf{L}& 1 1 \ldots 1 1  }\
				\begin{matrix}
			 \textcolor{dark-orange}{\bullet} \\ \\ \\ \\ \\ \\  \textcolor{dark-green}{\bullet} \\
				\end{matrix}
		\end{align*}
		\begin{align*}
		S_6(3) &= \bordermatrix{ & \cr
		\textbf{LR} & 1 1 0 \cr
		  \textbf{R} & 0 1 1  \cr
						  & 1 1 0 }\
		&
		S_6'(3) &= \bordermatrix{ & \cr
		\textbf{LR} & 1 1 0 \cr
		  \textbf{R} & 0 1 1  \cr
						  & 1 1 1 }\
		&
		S_6(k) &= \bordermatrix{ & \cr
		\textbf{LR}& 1 1 1 \ldots 1 1 0 \cr
		  \textbf{R}& 0 1 1 \ldots 1 1 1 \cr
						& 1 1 0  \ldots 0 0 0  \cr
						&  \ddots \cr
						& 0 0 0 \ldots 0 1 1 }\
				\begin{matrix}
			\\ \\ \textcolor{dark-orange}{\bullet} \\ \\ \\ \\ \\ \\
				\end{matrix}
	\end{align*}
		\begin{align*}
		S_7(3) &= \bordermatrix{ & \cr
		\textbf{LR} & 1 1 0 0 1 \cr
		\textbf{LR} & 1 0 0 1 1  \cr
						 & 1 1 1 0 0 }\
		&
		S_7(2j) &= \bordermatrix{ & \cr
		\textbf{LR}& 1 1 0 0 \ldots 0 0 0 \cr
		\textbf{LR}& 1 0 0 0  \ldots 0 0 1 \cr
						& 0 1 1 0 \ldots 0 0 0  \cr
						&  \ddots \cr
						& 0 0 0 0 \ldots 0 1 1 }\
		&
		S_8(2j) &= \bordermatrix{ & \cr
		\textbf{LR}& 1 0 0 \ldots 0 0 1 \cr
						& 1 1 0  \ldots 0 0 0 \cr
						&  \ddots \cr
						& 0 0 0 \ldots 0 1 1 }\
	\end{align*}
		}
\caption{ The family of matrices $\mathcal{S}$ for every $j\geq2$ and every odd $k \geq 5$} \label{fig:forb_Si}
\end{figure}

\begin{figure}[H]
	\centering
	\footnotesize{
		\begin{align*}	
		P_0(k,0) =  \bordermatrix{ & \cr
		\textbf{L} & 1 1 0 0 0  \ldots 0 0 0   \cr
		\textbf{LR} & 1 0 0 1 1 \ldots 1 1 1 \cr
					     & 0 0  1  1  0  \ldots 0 0 0 \cr
						 &  \ddots  \cr
						& 0 0 0 0 0 \ldots  0 1 1  \cr
		\textbf{R} & 0 0 0 0 0 \ldots 0 0 1  }\
	 \begin{matrix}
	  \textcolor{dark-green}{\bullet} \\ \\ \\ \\ \\ \\ \textcolor{dark-green}{\bullet}
	  \end{matrix}
		&&
		P_0(k,l) =  \bordermatrix{ & \cr
		\textbf{L}& 1 0 0 \ldots 0 0 0 0 \ldots 0 \cr
					  & 1 1 0 \ldots 0 0 0 0  \ldots 0 \cr
					  &  \ddots \cr
					 & 0 0 0 \ldots 1 1 0 0 \ldots 0 \cr
	\textbf{LR} & 1 1 1 \ldots 1 0 0 1 \ldots 1 \cr
					 & 0 0 0 \ldots 0 0 1 1  \ldots 0 \cr
					 & \ddots \cr
					& 0 0 0 \ldots 0 0 \ldots 0 1 1 \cr
		\textbf{R} & 0 0 0 \ldots 0 0 \ldots 0 0 1 }\
	\begin{matrix}
	  \textcolor{dark-green}{\bullet} \\ \\ \\ \\ \\ \\ \\ \\ \\ \\ \textcolor{dark-green}{\bullet}
	  \end{matrix}
	\end{align*}
	\begin{align*}
		P_1(k,0) = \bordermatrix{ & \cr
		\textbf{L} & 1 1 0 0  \ldots 0 0 0   \cr
		\textbf{LR} & 1 0 1 1 \ldots 1 1 1 \cr
		\textbf{LR} & 1 1 0 1 \ldots 1 1 1 \cr
					     & 0 0  1  1  0  \ldots 0 0 0 \cr
						 &  \ddots  \cr
						& 0 0 0 0 0 \ldots  0 1 1  \cr
		\textbf{R} & 0 0 0 0 \ldots 0 0 1  }\
	 \begin{matrix}
	  \textcolor{dark-green}{\bullet} \\ \\ \\ \\ \\ \\ \\ \textcolor{dark-green}{\bullet}
	  \end{matrix}
		&&
			P_1(k,l) =  \bordermatrix{ & \cr
		\textbf{L}& 1 0 0 \ldots 0 0 0 0 \ldots 0 \cr
					  & 1 1 0 \ldots 0 0 0 0  \ldots 0 \cr
					  &  \ddots \cr
					 & 0 0 0 \ldots 1 1 0 0 \ldots 0 \cr
	\textbf{LR} & 1 1 1 \ldots 1 0 1 1 \ldots 1 \cr
	\textbf{LR} & 1 1 1 \ldots 1 1 0 1 \ldots 1 \cr
					 & 0 0 0 \ldots 0 0 1 1  \ldots 0 \cr
					 & \ddots \cr
					& 0 0 0 \ldots 0 0 \ldots 0 1 1 \cr
		\textbf{R} & 0 0 0 \ldots 0 0 \ldots 0 0 1 }\
	\begin{matrix}
	  \textcolor{dark-green}{\bullet} \\ \\ \\ \\ \\ \\ \\ \\ \\ \\ \\ \textcolor{dark-green}{\bullet}
	  \end{matrix}
	\end{align*}
	\begin{align*}
		P_2(k,0) =  \bordermatrix{ & \cr
		\textbf{L} & 1 1 0 0 0 0 \ldots 0 0 0 \cr
		\textbf{LR} & 1 0 1 1 1 1 \ldots 1 1 1 \cr
		\textbf{LR} & 1 1 1 0 1 1 \ldots 1 1 1 \cr
		\textbf{LR} & 1 1 0 1 1 1 \ldots 1 1 1 \cr
		\textbf{LR} & 1 1 1 0 0 1 \ldots 1 1 1 \cr
					     & 0 0  0 0 1  1 \ldots 0 0 0 \cr
						 &  \ddots  \cr
						& 0 0 0 0 0 \ldots  0 1 1  \cr
		\textbf{R} & 0 0 0 0 0 \ldots 0 0 1  }\
	 \begin{matrix}
	  \textcolor{dark-green}{\bullet} \\ \\ \\ \\ \\ \\ \\ \\ \\ \\ \textcolor{dark-green}{\bullet}
	  \end{matrix}
	&&
			P_2(k,l) = \bordermatrix{ & \cr
		\textbf{L}& 1 0 0 \ldots 0 0 0 0 0 \ldots 0 \cr
					  & 1 1 0 \ldots 0 0 0 0 0  \ldots 0 \cr
					  &  \ddots \cr
					 & 0 0 0 \ldots 1 1 0 0 0  \ldots 0 \cr
	\textbf{LR} & 1 1 1 \ldots 1 0 0 1 1 \ldots 1 \cr
	\textbf{LR} & 1 1 1 \ldots 1 1 1 0 1 \ldots 1 \cr
	\textbf{LR} & 1 1 1 \ldots 1 1 0 1 1 \ldots 1 \cr
	\textbf{LR} & 1 1 1 \ldots 1 1 0 0 1 \ldots 1 \cr
					 & 0 0 0 \ldots 0 0 0 1 1 \ldots 0 \cr
					 & \ddots \cr
					& 0 0 0 \ldots 0 0 0 \ldots 0 1 1 \cr
		\textbf{R} & 0 0 0 \ldots 0 0 0  \ldots 0 0 1 }\
	\begin{matrix}
	  \textcolor{dark-green}{\bullet} \\ \\ \\ \\ \\ \\ \\ \\\ \\ \\ \\ \\ \\ \\ \textcolor{dark-green}{\bullet}
	  \end{matrix}
	\end{align*}

	}
	\caption{The family of enriched matrices $\mathcal{P}$ for every odd $k$.} 
	\label{fig:forb_P}
	\end{figure}

In the matrices $\mathcal{P}$, the integer $l$ represents the number of unlabeled rows between the first row and the first LR-row. The matrices $\mathcal{P}$ described in Figure \ref{fig:forb_P} are defined as follow: 
$P_0(k,0) \in \{0,1\}^{k \times k}$ for every $k \geq 4$, $P_0(k,l) \in \{0,1\}^{k \times (k-1)}$ for every $k \geq 5$ and $l >0$; 
$P_1(k,0) \in \{0,1\}^{k \times (k-1)}$ for every $k \geq 5$, $P_1(k,l) \in \{0,1\}^{k \times (k-2)}$ for every $k \geq 6$, $l > 0$;
$P_2(k,0) \in \{0,1\}^{k \times (k-1)}$ for every $k \geq 7$, $P_2(k,l) \in \{0,1\}^{k \times (k-2)}$ for every $k \geq 8$ and $l > 0$. 
If $k$ is even, then the first and last row of every matrix in $\mathcal{P}$ are colored with distinct colors.

\begin{figure}[h]
	\centering
	\includegraphics[scale=.495]{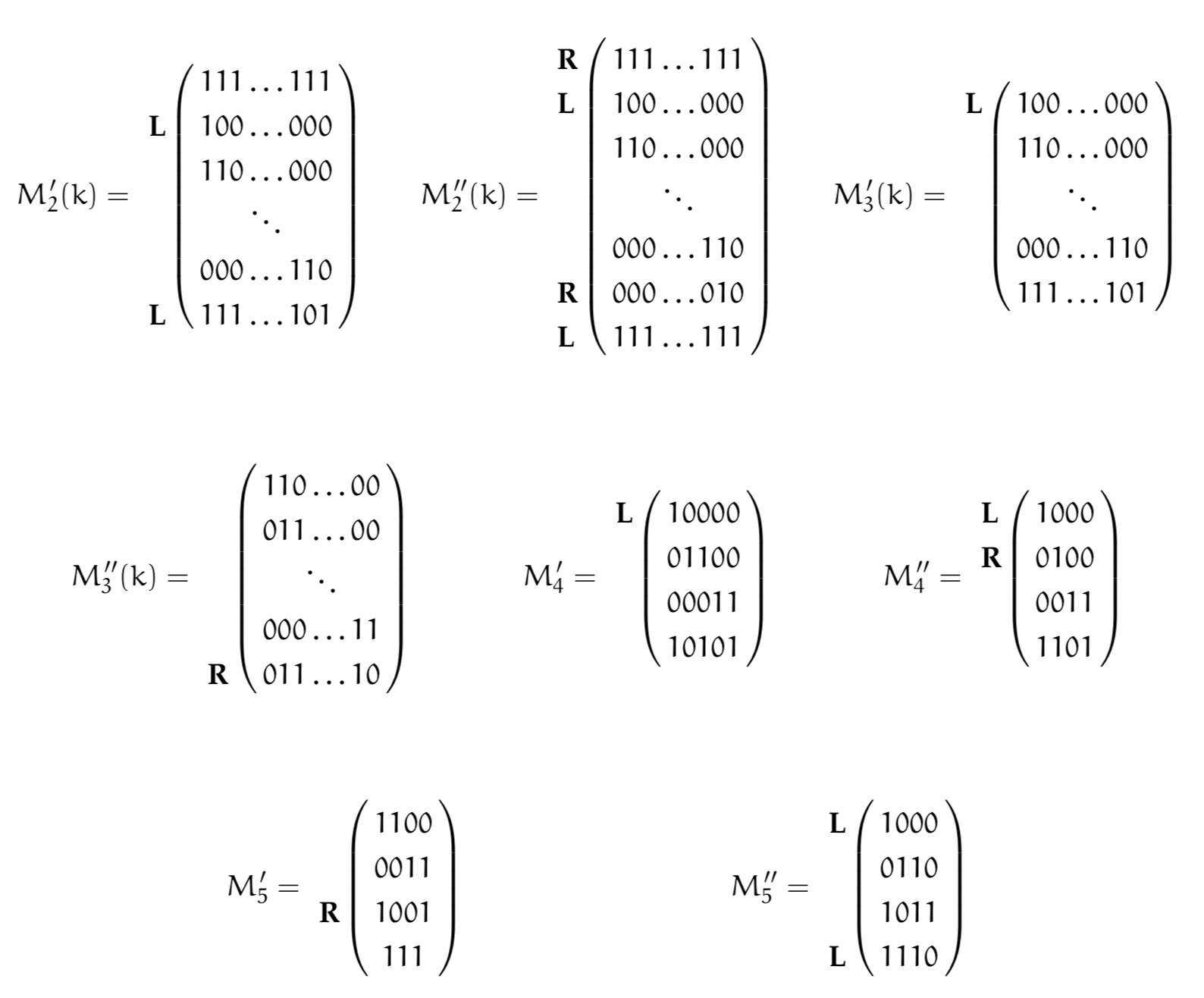}
	\caption{The enriched matrices in family $\mathcal{M}$: $M_{2}'(k)$, $M_{3}'(k)$, $M_{3}''(k)$, $M_{3}'''(k)$ for $k \geq 4$, and $M_{2}''(k)$ for $k \geq 5$. }  \label{fig:forb_LR-orderable}
\end{figure}

\begin{figure}[H]
	\centering
	\footnotesize{
	\begin{align*}
			M_0 = \bordermatrix{ & \cr
						 & 1 0 1 1 \cr
		  				 & 1 1 1 0  \cr
						 & 0 1 1 1 }\
			&&
			M_{II}(4) = \bordermatrix{ & \cr
						 & 0 1 1 1 \cr
		  				 & 1 1 0 0  \cr
						 & 0 1 1 0 \cr
						 & 1 1 0 1 }\
			&&
			M_V = \bordermatrix{ & \cr
						 & 1 1 0 0 0 \cr
		  				 & 0 0 1 1 0  \cr
						 & 1 1 1 1 0 \cr
						 & 1 0 0 1 1 }\
			&&
			S_0(k)&= \begin{pmatrix}
			 	1 11...11\\
				110...00\\
				011...00\\
				.   .   .   .   . \\
				.   .   .   .   . \\
				.   .   .   .   . \\
				000...11\\
				100...01\\
			\end{pmatrix}
	\end{align*}
	}
	\caption{The matrices $M_0$, $M_{II}(4)$, $M_V$ and $S_0(k) \in \{0,1\}^{((k+1)\times k}$ for any even $k \geq 4$.} 
	\label{fig:forb_M_chiquitas}
	\end{figure}

Now we are in conditions to state Theorem \ref{teo:2-nested_caract_bymatrices}, which characterizes $2$-nested matrices by forbidden subconfigurations and is the main result of this chapter. The proof for this theorem will be given at the end of the chapter.
 
\begin{teo}[label={teo:2-nested_caract_bymatrices}]
	Let $A$ be an enriched matrix. Then, $A$ is $2$-nested if and only if $A$ contains none of the following listed matrices or their dual matrices as subconfigurations: 
	\begin{itemize}		
	\item $M_0$, $M_{II}(4)$, $M_V$ or $S_0(k)$ for every even $k$ (See Figure \ref{fig:forb_M_chiquitas})
	\item Every enriched matrix in the family $\mathcal{D}$ (See Figure \ref{fig:forb_D}) 
	\item Every enriched matrix in the family $\mathcal{F}$ (See Figure \ref{fig:forb_F}) 
	\item Every enriched matrix in the family $\mathcal{S}$ (See Figure \ref{fig:forb_Si})
	\item Every enriched matrix in the family $\mathcal{P}$ (See Figure \ref{fig:forb_P})
	\item Monochromatic gems, monochromatic weak gems, badly-colored doubly-weak gems 
	\end{itemize}
and $A^*$ contains no Tucker matrices and none of the enriched matrices in $\mathcal{M}$ or their dual matrices as subconfigurations (See Figure \ref{fig:forb_LR-orderable}).

\end{teo}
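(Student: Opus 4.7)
The plan is to treat the two directions very asymmetrically, since the forward direction is essentially a verification while the backward direction is the core of the argument and leans heavily on the machinery developed earlier in the chapter.

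For the necessity direction, I would go through the listed families one by one and exhibit, for each matrix $M$, a specific clause of Definition \ref{def:2-nested} that it provably violates. The Tucker matrices and the matrices $M_0$, $M_{II}(4)$, $M_V$, $S_0(k)$ appearing in $A^*$ obstruct the very existence of an LR-ordering, because (by Definition \ref{def:A*}) the C$1$P for $A^*$ is exactly what encodes the LR-ordering requirement on the non-LR rows, the complements of LR-rows, and the L/R anchoring to the two end columns. The family $\mathcal{M}$ handles the additional LR-specific obstructions already identified in Section \ref{section:part2nested}. Monochromatic gems, monochromatic weak gems, and badly-colored doubly-weak gems are direct violations of properties (\ref{item:2nested6}), (\ref{item:2nested7}) and (\ref{item:2nested9}). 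The families $\mathcal{D}$, $\mathcal{F}$, $\mathcal{S}$, $\mathcal{P}$ each pin down a local obstruction to extending the pre-coloring: concretely, for each matrix I would specify which pair of blocks is forced to agree in color by one collection of properties and forced to disagree by another, producing a contradiction.

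For the sufficiency direction, I would proceed in two stages. Stage one is to produce an LR-ordering: assuming $A^*$ contains no Tucker matrix and none of the matrices in $\mathcal{M}$ (or their duals), invoke the characterization of LR-orderable matrices from Section \ref{section:part2nested} to obtain an LR-ordering $\Pi$ of the columns of $A$. Stage two is to produce a total block bi-coloring of the blocks induced by $\Pi$ that satisfies all nine conditions of Definition \ref{def:2-nested}. Here I would first apply the characterizations of admissibility and of partially $2$-nested matrices from Sections \ref{section:admissibility} and \ref{section:part2nested} to the pre-colored rows, using the absence of $\mathcal{D}$-type subconfigurations to conclude that the colors carried by the L-, R- and empty-LR-rows constitute a valid pre-coloring, i.e.\ they satisfy the restrictions of properties (\ref{item:2nested1})--(\ref{item:2nested5}) and (\ref{item:2nested8}) on the rows that are already colored.

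The substantive part of stage two is extending this valid pre-coloring to all the remaining blocks. I would encode the problem as a $2$-coloring of an auxiliary ``conflict'' graph $H$ whose vertices are the blocks of $A$ and whose edges record the pairs of blocks that are forced to receive opposite colors by properties (\ref{item:2nested1}), (\ref{item:2nested3}), (\ref{item:2nested5}), (\ref{item:2nested6}), (\ref{item:2nested7}) and (\ref{item:2nested9}); the pre-coloring gives an initial partial $2$-coloring of $H$, and what I need is that $H$ is bipartite and consistent with the pre-coloring. Then I would argue, by induction on the length of a hypothetical minimal obstruction (either an odd cycle in $H$ or a path between two equally pre-colored vertices of opposite parity), that any such obstruction forces one of the matrices in $\mathcal{F}\cup\mathcal{S}\cup\mathcal{P}$ or a monochromatic (weak) gem or a badly-colored doubly-weak gem to appear as a subconfiguration of $A$; the shape of the $\mathcal{S}$ and $\mathcal{P}$ families, in particular their alternating staircase structure, is exactly tailored to represent such odd cycles, with the parity determined by the parity of $k$ and by whether the endpoints of the cycle are L-, R- or LR-rows.

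The main obstacle I anticipate is precisely the last step: proving that every minimal odd-cycle obstruction in $H$ carries one of the listed matrices. This will require a careful case analysis according to the types (L, R, LR, U) of the rows supporting consecutive blocks of the cycle and according to whether the conflicting blocks overlap, are strictly nested, or sit side by side in $\Pi$. I would organize this analysis by the number of LR-rows involved in the cycle, which governs whether the obstruction lives in the family $\mathcal{P}_0$, $\mathcal{P}_1$ or $\mathcal{P}_2$, while the purely non-LR cycles yield the families $\mathcal{F}$ and $\mathcal{S}$, and the length-three cycles produce the gem-type obstructions. Once this case analysis is complete, bipartiteness of $H$ yields the total block bi-coloring, and the coloring automatically satisfies the remaining properties of Definition \ref{def:2-nested} thanks to the validity of the pre-coloring established earlier.
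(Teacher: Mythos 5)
Your proposal follows essentially the same route as the paper: necessity is dispatched by checking each forbidden family against the definition (via the admissibility/LR-orderability/partially-$2$-nested characterizations), and sufficiency is reduced to extending the valid pre-coloring over a conflict graph on the blocks, with the obstructions to bipartite extension (odd cycles, badly-anchored paths) shown to force a member of $\mathcal{F}\cup\mathcal{S}\cup\mathcal{P}$ or a gem-type subconfiguration. The paper formalizes your ``conflict graph on blocks'' through the auxiliary matrix $A+$ and an explicit partial-$2$-coloring extension lemma, and organizes the case analysis by the length of the maximal LR-subpath rather than by induction, but these are presentational differences rather than a different argument.
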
 

Throughout the following sections we will give some definitions and characterizations that will allow us to prove this theorem. In Section \ref{section:admissibility} we will define and characterize the notion of admissibility, which englobes all the properties we need to consider when coloring the blocks of an enriched matrix. In Section \ref{section:part2nested}, we give a characterization for LR-orderable matrices by forbidden subconfigurations. Afterwards, we define and characterize partially $2$-nested matrices, which are those enriched matrices that admit an LR-ordering and for which the given pre-coloring of those labeled rows induces a partial block bi-coloring.
These definitions and characterizations allow us to prove Lemmas \ref{lema:2-nested_if} and \ref{lema:2-nested_onlyif}, which are of fundamental importance for the proof of Theorem \ref{teo:2-nested_caract_bymatrices}.

\section{Admissibility} \label{section:admissibility}

In this section we will define the notion of admissibility for an enriched $(0,1)$-matrix, which will allow us to characterize those enriched matrices for which there is a total block bi-coloring for $A$. In the next chapter, we will see that such a block bi-coloring is a necessary condition to give a circle model.

Notice that the existance of a block bi-coloring for an enriched matrix is a property that can be defined and characterized by subconfigurations and forbidden submatrices. 

Let us consider the matrices defined in \ref{fig:forb_D}. The matrices in this family are all examples of enriched matrices that do not admit a total block bi-coloring as defined in Definition \ref{def:2-nested}.
 
For example, let us consider $D_0$. In order to have a block bi-coloring for every block of $D_0$, it is necessary that $D_0$ admits an LR-ordering of its columns. In particular, in such an ordering every row labeled with L starts in the first column. Hence, if there is indeed an LR-ordering for $D_0$, then the existance of two distinct non-nested rows labeled with L is not possible. The same holds if both rows are labeled with R.
We can use similar arguments to see that $D_2$, $D_3$, $D_7$ and $D_{11}$ do not admit an LR-ordering. 

Let us consider the matrix $D_1$. In this case, we see that condition \ref{item:2nested5} does not hold for the enriched matrix $D_1$.

Consider now the matrix $D_4$. It follows from property \ref{item:2nested8} that if an enriched matrix has two distinct rows labeled with L and colored with distinct colors, then every LR-row has an L-block, and thus $D_4$ does not admit a total block-bi-coloring.
Suppose now that $D_4$ is a submatrix of some enriched matrix and that the LR-row is nonempty. Notice that, if the LR-row has an L-block then it is properly contained in both rows labeled with L. It follows from this and property \ref{item:2nested3} of the definition of $2$-nested that, that the L-block of the LR-row must be colored with a distinct color than the one given to each row labeled with L. However, each of these rows is colored with a distinct color, thus a total block-bi-coloring is not possible in that case. 

If we consider the enriched matrix $D_5$, then it follows from property \ref{item:2nested4} that there is no possible LR-ordering such that the L-block of the LR-row does not intersect the L-row, and the same follows for the R-block of the LR-row and the R-row of $D_5$.

Let us consider an the enriched matrix in which we find $D_6$ as a subconfiguration.If the LR-row has an L-block, then it is contained in the L-row, and the same holds for the R-block of the LR-row and the R-row. By property \ref{item:2nested3}, the L-block must be colored with a distinct color than the L-row, and the R-block must be colored with a distinct color than the R-row. Equivalently, the L-block and the R-block of the LR-row are colored with the same color. However, this is not possible by property \ref{item:2nested1}.
Similarly, we can see that $D_8$, $D_9$, $D_{10}$, $D_{12}$ and $D_{13}$ do not admit a total block bi-coloring, also having in mind that the property \ref{item:2nested9} must hold pairwise for LR-rows.

\vspace{1mm}

\begin{defn} \label{lista:prop_adm}
Let $A$ be an enriched matrix. We define the following list of properties: 
	\begin{enumerate}[${(Adm}_\bgroup1\egroup)$]
	    \item If two rows are labeled both with L or both with R, then they are nested.  \label{item:0_def_adm}
	    
	 	\item If two rows with the same color are labeled one with L and the other with R, then they are disjoint. \label{item:1_def_adm}

	 	\item If two rows with distinct colors are labeled one with L and the other with R, then either they are disjoint or there is no column where both have $0$ entries. \label{item:2_def_adm}

		\item If two rows $r_1$ and $r_2$ have distinct colors and are labeled one with L and the other with R, then any LR-row with at least one non-zero column has nonempty intersection with either $r_1$ or $r_2$. \label{item:3_def_adm}
		
	 	\item If two rows $r_1$ and $r_2$ with distinct colors are labeled both with L or both with R, then for any LR-row $r$, $r_1$ is contained in $r$ or $r_2$ is contained in $r$. \label{item:4_def_adm}		

	
		\item If two non-disjoint rows $r_1$ and $r_2$ with distinct colors, one labeled with L and the other labeled with R, then any LR-row is disjoint with regard to the intersection of $r_1$ and $r_2$. \label{item:5_def_adm}

	
		\item If two rows with the same color are labeled one with L and the other with R, then for any LR-row $r$ one of them is contained in $r$. 
		Moreover, the same holds for any two rows with distinct colors and labeled with the same letter. \label{item:6_def_adm}
		

		\item For each three non-disjoint rows such that two of them are LR-rows and the other is labeled with either L or R, two of them are nested. \label{item:7_def_adm}

		\item If two rows $r_1$ and $r_2$ with distinct colors are labeled one with L and the other with R, and there are two LR-rows $r_3$ and $r_4$ such that $r_1$ is neither disjoint or contained in $r_3$ and $r_2$ is neither disjoint or contained in $r_4$, then $r_3$ is nested in $r_4$ or viceversa. \label{item:8_def_adm}

		\item For each three LR-rows, two of them are nested. \label{item:9_def_adm}
    \end{enumerate} 	
    
\end{defn}
    
\vspace{2mm}
For each of the above properties, we will characterize the set of minimal forbidden subconfigurations with the following Lemma.

\begin{lema}
	For any enriched matrix $A$, all of the following assertions hold:
	\begin{enumerate}
		\item $A$ satisfies \ref{item:0_def_adm} if and only if $A$ contains no $D_0$ or its dual matrix as a subconfiguration.
		\item $A$ satisfies \ref{item:1_def_adm} if and only if $A$ contains no $D_1$ or its dual matrix as a subconfiguration.
		\item $A$ satisfies \ref{item:2_def_adm} if and only if $A$ contains no $D_2$ or its dual matrix as a subconfiguration.
		\item $A$ satisfies \ref{item:3_def_adm} if and only if $A$ contains no $D_2$, $D_3$ or their dual matrices as subconfigurations.
		\item $A$ satisfies \ref{item:4_def_adm} if and only if $A$ contains no $D_0$, $D_4$ or their dual matrices as subconfigurations.
		\item $A$ satisfies \ref{item:5_def_adm} if and only if $A$ contains no $D_5$ or its dual matrix as a subconfiguration.
		\item $A$ satisfies \ref{item:6_def_adm} if and only if $A$ contains no $D_0$, $D_1$, $D_4$, $D_6$ or their dual matrices as subconfigurations.
		\item $A$ satisfies \ref{item:7_def_adm} if and only if $A$ contains no $D_7$, $D_8$, $D_9$ or their dual matrices as subconfigurations.		
		\item $A$ satisfies \ref{item:8_def_adm} if and only if $A$ contains no $D_5$, $D_9$, $D_{10}$ or its dual matrix as a subconfiguration.		
		\item $A$ satisfies \ref{item:9_def_adm} if and only if $A$ contains no $D_{11}$, $D_{12}$, $D_{13}$ or their dual matrices as subconfigurations.	
	\end{enumerate}
\end{lema}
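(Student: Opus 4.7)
For each of the ten items, the plan is to prove the biconditional by direct inspection followed by a column-extraction argument. One direction is uniform and immediate: for each listed forbidden matrix $D_j$, read off its labels, colors, and entries, and verify that the stated property fails on the rows of $D_j$ itself. For example, the two L-rows of $D_0$ are evidently not nested, so any matrix containing $D_0$ as a subconfiguration violates $(\mathrm{Adm}_1)$ on the extracted rows; similarly, the L-row and R-row of $D_1$ share their unique column and carry the same color, violating $(\mathrm{Adm}_2)$. This verification is finite and mechanical for each item.

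The substantive direction is the converse. I would fix a putative violation of $(\mathrm{Adm}_i)$, that is, a tuple of rows (and, where relevant, an LR-row) satisfying the hypothesis of the implication but for which the conclusion fails, and exhibit a small set of columns on which the restrictions of the chosen rows, together with their inherited labels and colors, form one of the listed matrices up to permutation of rows and columns. The column-selection follows a common template: for each relevant pair among the chosen rows pick one or two columns distinguishing them in the required way — a column witnessing non-nestedness, a column witnessing non-empty intersection, a column where both have a $0$, and so on — and take the union of these columns. For items $(\mathrm{Adm}_1)$–$(\mathrm{Adm}_6)$ the extracted matrices are $2\times 2$ to $3\times 3$, so the case analysis is routine; for instance, a violation of $(\mathrm{Adm}_3)$ produces two columns (one all-ones on $\{r_1,r_2\}$, one all-zeros) that together with the L-, R-labels and distinct colors yield $D_2$.

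The main obstacle lies in items $(\mathrm{Adm}_7)$–$(\mathrm{Adm}_{10})$, which concern triples of rows involving one or more LR-rows. The forbidden matrices $D_8$, $D_9$, $D_{10}$, $D_{12}$, $D_{13}$ are genuine three-row, four-column patterns whose identification requires simultaneously respecting several disjointness and containment constraints, and the sub-family used inside each item depends on a structural dichotomy among the three rows (typically whether two of them are themselves nested, or whether an LR-row has nonempty intersection with another). I would handle these by first invoking the already-proved items $(\mathrm{Adm}_1)$–$(\mathrm{Adm}_6)$ to assume that no pairwise violation exists; under that assumption a three-row failure forces one of a small number of structural configurations, and a distinguishing column chosen for each pair of the three rows can be assembled into the desired subconfiguration. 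The bookkeeping challenge is to ensure that the extracted submatrix matches one of the $D_j$ \emph{exactly}, so that columns do not accidentally collapse the configuration into a smaller, already-forbidden pattern from an earlier item.

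Finally, where an item lists several matrices (for example $D_0, D_4$ for $(\mathrm{Adm}_5)$, or $D_0, D_1, D_4, D_6$ for $(\mathrm{Adm}_7)$, or $D_7, D_8, D_9$ for $(\mathrm{Adm}_8)$), I view each $D_j$ as corresponding to a distinct mode of failure and split on the most natural structural dichotomy — typically whether two of the involved rows are nested, disjoint, or properly overlapping — to see which $D_j$ arises in each case. Applied uniformly, this template reduces the lemma to ten short, independent subproofs, each a labelled case analysis on the violating rows followed by a minimal column extraction.
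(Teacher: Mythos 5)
Your proposal is correct and follows essentially the same strategy as the paper: the forward direction is dismissed as a mechanical check, and the substantive converse is handled for each item by taking a violating tuple of rows, choosing distinguishing columns for each relevant pair (witnessing non-nestedness, non-disjointness, common zeros, etc.), and assembling them into the appropriate $D_j$, with the three-row items $(\mathrm{Adm}_7)$--$(\mathrm{Adm}_{10})$ resolved by a case split on which containment/disjointness relations hold among the rows. The only cosmetic difference is organizational: the paper does not first invoke the earlier items to exclude pairwise violations, but instead lists the smaller matrices (e.g.\ $D_5$ inside item 9) directly among the forbidden subconfigurations for each later item and falls back on them within the same case analysis.
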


\begin{proof}

\vspace{1mm}
First, we will find every forbidden subconfiguration given by statement \ref{item:0_def_adm}.

Let $f_1$ and $f_2$ be two rows labeled with the same letter, and suppose they are not nested. Thus, there is a column in which $f_1$ has a $1$ and $f_2$ has a $0$, and another column in which $f_2$ has a $1$ and $f_1$ has a $0$. Since the color of each row is irrelevant in the definition, we find $D_0$ as a forbidden subconfiguration in $A$.

Let us find now every forbidden subconfiguration given by statement \ref{item:1_def_adm}.
Let $f_1$ and $f_2$ be rows labeled with distinct letters and colored with the same color. If $f_1$ and $f_2$ are not disjoint, then there is a column in which both rows have a $1$. In this case, we find $D_1$ as a forbidden subconfiguration in $A$.

For statement \ref{item:2_def_adm}, let $f_1$ and $f_2$ be two rows labeled with distinct letters and colored with distinct colors, and suppose they are not disjoint and there is a column $j_1$ such that both rows have a $0$ in column $j_1$. Thus, there is a column $j_2 \neq j_1$ such that both rows have a $1$ in column $j_2$.
If $f_1$ and $f_2$ have the same color, then we find $D_1$ as a subconfiguration. Hence, $D_2$ is a forbidden subconfiguration in $A$.

With regard to statement \ref{item:3_def_adm}, let $f_1$ and $f_2$ be two rows labeled with distinct letters and colored with distinct colors. Let $f_3$ be a non-zero LR-row. Suppose that $f_3$ is disjoint with both $f_1$ and $f_2$. 
Hence, there is a column $l_1$ such that $f_1$ and $f_2$ have a $0$ and $f_3$ has a $1$. Moreover, either there are two distinct columns $j_1$ and $j_2$ such that the column $j_i$ has a $1$ in row $f_i$ and a $0$ in the other rows, for $i=1, 2$, or there is a column $l_2$ such that $f_1$ and $f_2$ both have a $1$ in column $l_2$ and $f_3$ has a $0$.
If the last statement holds, we find $D_2$ as a subconfiguration considering only the submatrix given by the rows $f_1$ and $f_2$.
If instead there are two distinct columns $j_1$ and $j_2$ as described above, then we find $D_3$ as a minimal forbidden subconfiguration in $A$.

For statement \ref{item:4_def_adm}, let $f_1$ and $f_2$ be two rows labeled with L and colored with distinct colors, and let $r$ be an LR-row. If $f_1$ and $f_2$ are not nested, then we find $D_0$. Suppose that $f_1$ and $f_2$ are nested. If neither $f_1$ or $f_2$ are contained in $r$, then there is a column $j$ in which $f_1$ and $f_2$ have a $1$ and $r$ has a $0$.
Thus, $D_4$ is a forbidden subconfiguration in $A$.

For statement \ref{item:5_def_adm}, let $f_1$ and $f_2$ be two non-disjoint rows colored with distinct colors, $f_1$ labeled with L and $f_2$ labeled with R. Since they are non-disjoint, there is at least one column $j$ in which both rows have a $1$. Suppose that for every such column $j$, there is an LR-row $f$ having a $1$ in that column. Then, we find $D_5$ as a subconfiguration in $A$.

For statement \ref{item:6_def_adm}, let $f$ be an LR-row and let $f_1$ and $f_2$ be two rows labeled with L and R respectively, and colored with the same color. If $f_1$ and $f_2$ are not disjoint, then we find $D_1$. Suppose that $f_1$ and $f_2$ are disjoint.
If neither $f_1$ is contained in $f$ nor $f_2$ is contained in $f$, then there are columns $j_1 \neq j_2$ such that $f_i$ has a $1$ and $f$ has a $0$, for $i=1, 2$. Thus, we find $D_6$ as a subconfiguration of $A$. If instead $f_1$ and $f_2$ are both labeled with L and colored with distinct colors, and neither is contained in $f$, then either they are not nested --in which case we find $D_0$-- or we find $D_4$ in $A$.

Suppose that $A$ satisfies \ref{item:7_def_adm}. Let $f_1$ be a row labeled with L, and $f_2$ and $f_3$ two distinct LR-rows such that none of them are nested in the others. Thus, we have three possibilities. If there are three columns $j_i$ $i=1,2,3$ such that $f_i$ has a $1$ and the other rows have a $0$, then we find $D_7$ as a subconfiguration of $A$.
If instead there are three rows $j_i$, $i=1,2,3$ such that $f_i$ and $f_{i+1}$ have a $1$ and $f_{i+2}$ has a $0$ in $j_i$ (mod 3), then we find $D_8$ as a subconfiguration.
The remaining possibility, is that there are 4 columns $j_1, j_2, j_3, j_4$ such that $f_1$ and $f_2$ have a $1$ and $f_3$ has a $0$ in $j_1$, $f_1$ has a $1$ and $f_2$ and $f_3$ have a $0$ in $j_2$, $f_3$ has a $1$ and $f_1$ and $f_2$ have a $0$ in $j_3$, and $f_2$ and $f_3$ have a $1$ and $f_1$ has a $0$ in $j_4$. Moreover, since all three rows are pairwise non-disjoint, either there is a fifth column for which $f_1$ and $f_3$ have a $1$ and $f_2$ has a $0$ (in which case we find $D_8$), or $f_2$ has a $1$ and $f_1$ and $f_3$ have a $0$ (in which case we have $D_7$), or all three rows have a $1$ in such column. In this case, we find $D_9$ has a subconfiguration of $A$.

For statement \ref{item:8_def_adm}, let $f_1$ and $f_2$ be two rows labeled with L and R, respectively, and colored with distinct colors. Let $f_3$ and $f_4$ be two LR-rows such that $f_1$ is neither disjoint or contained in $f_3$ and $f_2$ is neither disjoint or contained in $f_4$. 
If $f_1$ is also not contained in $f_4$ or $f_2$ is not contained in $f_3$, then we find $D_9$. Thus, suppose that $f_1$ is contained in $f_4$ and $f_2$ is contained in $f_3$. Moreover, we may assume that for any column such that $f_1$ and $f_3$ have a $1$, $f_2$ has a $0$, (and analogously for $f_2$ and $f_4$ having a $1$ and $f_1$), for if not we find $D_5$.
Hence, there is a column $j_1$ in $A$ having a $1$ in $f_1$ and $f_4$ and having a $0$ in $f_3$ and $f_2$, and another column $j_2$ having a $1$ in $f_2$ and $f_3$ and having a $0$ in $f_1$ and $f_4$. Moreover, since $f_1$ and $f_3$ are not disjoint and $f_2$ and $f_4$ are not disjoint (and $f_1$ is nested in $f_4$ and $f_2$ is nested in $f_3$), then there are columns $j_3$ and $j_4$ such that $f_1$, $f_3$ and $f_4$ have a $1$ and $f_2$ has a $0$ in $j_3$ and  $f_2$, $f_3$ and $f_4$ have a $1$ and $f_1$ has a $0$. Therefore, we find $D_{10}$ as a subconfiguration of $A$.

It follows by using a similar argument as in the previous statements that, if $A$ satisfies \ref{item:9_def_adm}, then that there are no $D_{11}$, $D_{12}$ or $D_{13}$ in $A$.

\end{proof}

\begin{cor}
Every enriched matrix $A$ that admits a total block bi-coloring contains none of the matrices in $\mathcal{D}$. Equivalently, if $A$ admits a total block bi-coloring, then every property listed in \ref{lista:prop_adm} hold. 
\end{cor}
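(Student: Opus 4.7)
The plan is to establish the contrapositive: if $A$ contains some $D_i \in \mathcal{D}$ as a subconfiguration, then $A$ admits no total block bi-coloring. Since, by Definition~\ref{def:2-nested}, a total block bi-coloring presupposes an LR-ordering $\Pi$ of the columns together with an assignment satisfying conditions (1)--(9), I would split the ten cases into two natural groups according to which ingredient fails.

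The first group, consisting of $D_0$, $D_2$, $D_3$, $D_7$ and $D_{11}$, obstructs the mere existence of an LR-ordering. In $D_0$ the two rows labeled with L (or with R) are not nested, yet any LR-ordering forces both their $1$-runs to start at the first column and therefore to be nested, a contradiction. The matrices $D_2$ and $D_3$ present an L-row and an R-row whose $1$'s and $0$'s cannot simultaneously be made consecutive while starting at the first column and ending at the last; in $D_3$ the extra LR-row amplifies this to a complement-consecutive obstruction. The matrix $D_7$ shows an L-row together with two LR-rows that pairwise overlap in the required pattern but admit no simultaneous linear arrangement. Finally $D_{11}$ exhibits three pairwise non-nested LR-rows whose complements cannot be simultaneously consecutive.

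The second group, consisting of $D_1$, $D_4$, $D_5$, $D_6$, $D_8$, $D_9$, $D_{10}$, $D_{12}$ and $D_{13}$, admits an LR-ordering but the conditions (1)--(9) force a color conflict. I would argue representatively as follows. In $D_1$ the L-row and the R-row share a column and carry the same color; condition (2) propagates that color to their blocks, while condition (5) forbids non-disjoint L- and R-blocks from sharing a color. In $D_4$ two L-rows of distinct colors both contain the (possibly empty) L-block of an LR-row, and conditions (3) and (8) become incompatible. In $D_5$ the LR-row shares a $1$-column with both the L- and R-rows, violating condition (4). In $D_6$ the L- and R-blocks of the LR-row are contained in an L-row and an R-row of distinct colors, so condition (3) forces each block to receive a colour different from its surrounding row, while condition (1) demands the two blocks of the LR-row receive distinct colours; with only two colours available this is impossible. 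The cases $D_8$, $D_9$, $D_{10}$, $D_{12}$, $D_{13}$ exploit condition (9) (overlapping LR-rows must agree on matching blocks) together with condition (1) (the two blocks of a single LR-row receive distinct colors), producing a monochromatic cycle among a pair or triple of LR-rows.

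The ``equivalently'' clause then follows at once from the lemma immediately preceding the corollary: each property $(\mathrm{Adm}_i)$ holds if and only if a prescribed sublist of matrices from $\mathcal{D}$ is absent, so the absence of \emph{every} $D_i$ is the conjunction of all ten admissibility properties. The main obstacle I anticipate is the bookkeeping in the larger matrices $D_9$, $D_{10}$, $D_{12}$, $D_{13}$, where several coloring conditions interact at once; the clean way to organize the verification is to first fix, via condition (2), the colours inherited from the labeled rows, then propagate them through (1), (3) and (5), and finally apply (9) to the LR-row pair, which in every case produces an explicit color contradiction.
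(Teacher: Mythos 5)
Your overall strategy coincides with the paper's: the corollary is obtained by checking that each matrix in $\mathcal{D}$ individually admits no total block bi-coloring, splitting the family into those that already obstruct an LR-ordering ($D_0$, $D_2$, $D_3$, $D_7$, $D_{11}$) and those that admit an LR-ordering but force a color conflict through conditions (1)--(9) of the definition of $2$-nested; the ``equivalently'' clause is then read off from the preceding lemma exactly as you say. Your case analysis for the first group and for $D_1$, $D_4$, $D_5$ matches the paper's, and your treatment of $D_8$--$D_{13}$ via conditions (1), (3) and (9) is at the same level of detail as the paper's own sketch.

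There is, however, one step that fails as written: your argument for $D_6$. You assume its L-row and R-row carry \emph{distinct} colors, and then invoke condition (3) to recolor each block of the LR-row with the color opposite to the row containing it. But if the two labeled rows have distinct colors, the two blocks of the LR-row are then forced to \emph{distinct} colors, which is precisely what condition (1) requires --- no contradiction arises. The matrix $D_6$ in the paper has its L-row and R-row colored with the \emph{same} color (they are disjoint, so this is consistent with condition (5)); only then does condition (3) push both the L-block and the R-block of the LR-row to the single opposite color, contradicting condition (1). With that correction the case goes through, and the rest of your proof stands.
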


Another example of families of enriched matrices that do not admit a total block bi-coloring are $\mathcal{S}$ and $\mathcal{P}$, which are the matrices shown in Figures \ref{fig:forb_Si} and \ref{fig:forb_P}, respectively.
Therefore, since the existance of a total block bi-coloring is a property that must hold for every subconfiguration of an enriched matrix, if an enriched matrix $A$ admits a total block bi-coloring, then it is a necessary condition that $A$ contains none of the matrices in $\mathcal{S}$ or $\mathcal{P}$. 
With this in mind, we give the following definition, which is also a characterization by forbidden subconfigurations.

\begin{defn} \label{teo:caract_admissible}
	Let $A$ be an enriched matrix. We say $A$ is \emph{admissible} if and only if $A$ is $ \{ \mathcal{D}, \mathcal{S}, \mathcal{P} \}$-free.
\end{defn}

\section{Partially 2-nested matrices} \label{section:part2nested}

This section is organized as follows. First, we give some definitions and examples that will help us obtain a characterization of LR-orderable matrices by forbidden subconfigurations, which were defined in \ref{def:LR-orderable}. 
Afterwards, we define and characterize partially $2$-nested matrices, which are those enriched matrices that admit an LR-ordering and for which the given pre-coloring of those labeled rows of $A$ induces a partial block bi-coloring.

\begin{defn}
A \emph{tagged matrix} is a matrix $A$, each of whose rows are either uncolored or colored with blue or red, together with a set of at most two distinguished columns of $A$. The distinguished columns will be refered to as \emph{tag columns}.
\end{defn}

\begin{defn}  \label{def:tagged_matrixA}
Let $A$ be an enriched matrix. We define the \emph{tagged matrix of $A$} as a tagged matrix, denoted by $A_\tagg$, whose underlying matrix is obtained from $A$ by adding two columns, $c_L$ and $c_R$, such that:
(1) the column $c_L$ has a $1$ if $f$ is labeled L or LR and $0$ otherwise, 
(2) the column $c_R$ has a $1$ if $f$ is labeled R or LR and $0$ otherwise, and 
(3) the set of distinguished columns of $A_{\tagg}$ is $\{ c_L, c_R\}$. 
We denote $A^*_\tagg$ to the tagged matrix of $A^*$. By simplicity we will consider column $c_L$ as the first and column $c_R$ as the last column of $A_\tagg$ and $A^*_\tagg$.
\end{defn}

\begin{figure}[h!]
\begin{align*}
	A = \bordermatrix{ &  \cr
		\textbf{LR} & 1  0  0  0  1 \cr
		\textbf{LR} & 1  1  0  0  1 \cr
						& 0  1 1  0  0 \cr
		\textbf L & \textcolor{red}{1}  \textcolor{red}{1}  \textcolor{red}{1}  0  0 \cr
		\textbf{LR} & 0  0  0   0  0 \cr
		\textbf R & 0  0  \textcolor{blue}{1} \textcolor{blue}{1} \textcolor{blue}{1} }\,
		\begin{matrix} 
	   \cr  \cr \cr  \cr \textcolor{red}{\bullet} \cr \textcolor{blue}{\bullet} \cr \textcolor{blue}{\bullet} 
	\end{matrix}
	&&
	A_\tagg = \bordermatrix{& c_L \hspace{7mm}  c_R \cr
	& \pmb 1 1 0 0 0 1 \pmb 1 \cr
	& \pmb 1 1 1 0 0 1 \pmb 1 \cr
 	& \pmb 0 0 1 1 0 0 \pmb 0 \cr
	& \pmb 1 \textcolor{red}{1}  \textcolor{red}{1}  \textcolor{red}{1}  0  0  \pmb 0 \cr
	& \pmb 1 0 0 0 0 0 \pmb 1 \cr
	& \pmb 0 0  0  \textcolor{blue}{1} \textcolor{blue}{1} \textcolor{blue}{1} \pmb 1 }\,
	\begin{matrix} 
	   \cr  \cr \cr  \cr \textcolor{red}{\bullet} \cr \textcolor{blue}{\bullet} \cr \textcolor{blue}{\bullet} 
	\end{matrix}
	&&
	A^{*}_\tagg = \bordermatrix{& c_L \hspace{7mm} c_R \cr
	& \pmb 0 0 1 1  1 0  \pmb 0 \cr
	& \pmb 0 0 0 1 1 0  \pmb 0 \cr
 	& \pmb 0 0 1 1 0 0 \pmb 0 \cr
	& \pmb 1 \textcolor{red}{1} \textcolor{red}{1} \textcolor{red}{1} 0 0 \pmb 0 \cr
	& \pmb 0 0 0 \textcolor{blue}{1} \textcolor{blue}{1} \textcolor{blue}{1} \pmb 1  }\, 
	\begin{matrix} 
	   \cr  \cr \cr  \textcolor{red}{\bullet} \cr \textcolor{blue}{\bullet} \cr \textcolor{blue}{\bullet} 
	\end{matrix}
\end{align*}
\caption{Example of a matrix $A$ and the matrices $A_\tagg$ and $A^*_\tagg$} \label{fig:example_tagg}
\end{figure}

The following remarks will allow us to give a simpler proof for the characterization of LR-orderable matrices.

\begin{remark}
	If $A^*_{\tagg}$ has the C$1$P, then the distinguished rows force the tag columns $c_L$ and $c_R$ to be the first and last columns of $A_{\tagg}$, respectively.
\end{remark}

\begin{remark}
	An admissible matrix $A$ is LR-orderable if and only if the tagged matrix $A^*_{\tagg}$ has the C$1$P for the rows.
\end{remark}


\begin{figure}[h!]
\small{
	\centering
	\begin{align*}
			M_2'(k) &= \begin{pmatrix}
				\pmb 0 1 1  \ldots 1 1 1\\
				\pmb 1  1 0  \ldots 0 0 0\\
				\ddots  \\
				\pmb 0 0 0 \ldots 1 1 0 \\
				\pmb 1 1 1 \ldots 1 0 1 \\
			\end{pmatrix}
			&
			M_2''(k) &= \begin{pmatrix}
				\pmb 0  1  1  \ldots  1 1 \pmb 1  \\
				\pmb 1  1  0  \ldots  0 0  \pmb 0 \\
				\ddots \\
				\pmb 0  0  0 \ldots  1 0 \pmb 1  \\
				\pmb 1  1  1  \ldots  1 1  \pmb 0  \\
			\end{pmatrix}	
			&
			M_3'(k) &= \begin{pmatrix}
				\pmb 1   1   0   \ldots   0   0   0 \\
				\pmb 0   1   1   \ldots   0   0   0 \\
				\ddots \\
				\pmb 0   0   0   \ldots   1   1   0 \\
				\pmb 0   1   1   \ldots   1   0   1 \\
			\end{pmatrix}
			\end{align*}
		\begin{align*}
			M_3''(k) &= \begin{pmatrix}
				1   1   0   \ldots   0   0   \pmb 0 \\
				0   1   1   \ldots   0   0   \pmb 0 \\
				\ddots \\
				0   0   0   \ldots   1   1   \pmb 0 \\
				0   1   1   \ldots   1   0   \pmb 1 \\
			\end{pmatrix}
			&
			M_4' &= \begin{pmatrix}
					\pmb 1   1   0   0   0   0 \\ 
					\pmb 0   0   1   1   0   0 \\
					\pmb 0   0   0   0   1   1 \\
					\pmb 0   1   0   1   0   1 \\
			\end{pmatrix}
			&
			M_4'' &= \begin{pmatrix}
					\pmb 1   1   0   0   0  \pmb 0  \\ 
					\pmb 0   0   1   0   0  \pmb 1  \\
					\pmb 0   0   0   1   1 \pmb 0  \\
					\pmb 0   1    1   0   1   \pmb 0 \\
			\end{pmatrix}		
	\end{align*}
	\begin{align*}	
			M_5' &= \begin{pmatrix}
					1   1   0   0   \pmb 0 \\ 
					0   0   1   1   \pmb 0 \\
					1   0   0   1   \pmb 1 \\
					1   1   1   1   \pmb 0 \\
			\end{pmatrix}
			&
			M_5'' &= \begin{pmatrix}
					  \pmb 1  1 0  0  0 \\ 
					  \pmb 0  0 1  1  0 \\
					  \pmb 0  1  0  1  1 \\
					  \pmb 1  1 1  1  0 \\
			\end{pmatrix}
	\end{align*}	
		}
	\caption{The tagged matrices of the family $\mathcal{M}$}  \label{fig:forb_LR-orderable_tags}
\end{figure}

\begin{teo} \label{teo:LR-orderable_caract_bymatrices}
	An admissible matrix $A$ is LR-orderable if and only if the tagged matrix $A^*_{\tagg}$ does not contain any Tucker matrices, nor $M_2'(k)$, $M_2''(k)$, $M_3'(k)$, $M_3''(k)$ for $k\geq 3$, $M_4'$, $M_4''$, $M_5'$, $M_5''$ as subconfigurations.  
\end{teo}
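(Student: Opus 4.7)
The key observation is the remark immediately preceding the theorem: an admissible $A$ is LR-orderable if and only if $A^*_{\tagg}$ has the consecutive-ones property (C$1$P) for the rows. The plan is therefore to show that the listed forbidden configurations are exactly the obstructions to the C$1$P of $A^*_{\tagg}$, taking into account the additional constraint that the tag columns $c_L$ and $c_R$ must end up at the two extremes of the ordering.

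For the necessity direction, given an LR-ordering $\Pi$ of $A$, I would verify directly from the definitions of $A^*$ and $A_{\tagg}$ that placing $c_L$ first and $c_R$ last extends $\Pi$ to a C$1$P ordering of $A^*_{\tagg}$. By the classical theorem of Tucker, no Tucker matrix can then appear as a subconfiguration of $A^*_{\tagg}$. For each matrix in the $\mathcal{M}$ family, a direct case-by-case verification shows that no C$1$P ordering of its columns places the tag column(s) at the required extreme position(s), so none can appear as a tagged subconfiguration of $A^*_{\tagg}$ either.

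For sufficiency, suppose $A^*_{\tagg}$ contains none of the listed matrices. By Tucker's theorem, $A^*_{\tagg}$ admits some C$1$P ordering $\Pi$. The two distinguished rows of $A^*$ project in $A^*_{\tagg}$ to rows that are all $1$'s except for a single $0$ at $c_R$ or $c_L$ respectively; these force $\Pi$ to place $c_L$ and $c_R$ at opposite extremes. Restricting $\Pi$ to the columns of $A$ then produces an ordering in which the tag-column encodings of L-, R- and LR-rows translate directly into the three required LR-ordering conditions.

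The main technical difficulty is justifying that the explicit list $M_2'(k)$, $M_2''(k)$, $M_3'(k)$, $M_3''(k)$, $M_4'$, $M_4''$, $M_5'$, $M_5''$ is exhaustive. These matrices appear as minimal tagged configurations that have the C$1$P as untagged $(0,1)$-matrices but fail the C$1$P once the tag column is pinned to an extreme, and so they are not captured by Tucker's theorem alone. The plan for this part is an exhaustive case analysis on how labeled rows interact with LR-rows (replaced by their complements in $A^*$) to either yield a pattern in Tucker or in $\mathcal{M}$, or else to produce a valid LR-ordering directly. Admissibility of $A$ is essential here: it already rules out the families $\mathcal{D}$, $\mathcal{S}$ and $\mathcal{P}$, reducing the remaining combinatorial possibilities to those captured precisely by $\mathcal{M}$.
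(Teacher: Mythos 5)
Your overall skeleton matches the paper's: both directions reduce to the remark that $A$ is LR-orderable iff $A^*_{\tagg}$ has the C$1$P, necessity is then immediate, and sufficiency comes down to analysing how a Tucker pattern can sit relative to the tag columns. However, there is a genuine gap in the sufficiency direction, rooted in a mischaracterization of the family $\mathcal{M}$. You describe these as ``minimal tagged configurations that have the C$1$P as untagged $(0,1)$-matrices but fail the C$1$P once the tag column is pinned to an extreme.'' That is not what they are: the underlying untagged matrix of each member of $\mathcal{M}$ is itself a Tucker matrix (for instance $M_2'(k)$ is exactly $M_{II}(k)$ with its first column declared a tag column, and $M_4''$ is a column permutation of $M_{IV}$). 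They do not have the C$1$P, and their role in the theorem is to record the only ways a Tucker matrix can occur \emph{through} the tag columns of an admissible matrix.

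Because of this, your step ``By Tucker's theorem, $A^*_{\tagg}$ admits some C$1$P ordering $\Pi$'' is not justified by the stated hypotheses. The hypothesis that $A^*_{\tagg}$ contains no Tucker matrix and none of the tagged configurations in $\mathcal{M}$ does not immediately rule out a Tucker pattern whose column set includes $c_L$ or $c_R$; showing that any such pattern either violates admissibility (producing $D_0$, $D_1$, $D_2$, $S_2(3)$ or $S_3(3)$) or yields one of the members of $\mathcal{M}$ as a tagged subconfiguration is precisely the content of the paper's proof, which runs through each Tucker matrix $M_I(k),\ldots,M_V$ and each admissible placement of the tag columns inside it. You relegate this to a secondary remark about the list being ``exhaustive,'' but it is the main body of the argument, and without it the sufficiency direction does not close. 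The necessity direction is fine, though your case-by-case verification for $\mathcal{M}$ is unnecessary: since each underlying matrix is a Tucker matrix, none of them can occur in a matrix with the C$1$P for any reason.
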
 

\begin{proof}
$\Rightarrow )$ This follows from the last remark.

$\Leftarrow )$ Suppose that the tagged matrix $A_{\tagg}$ does not contain any of the above listed submatrices as subconfigurations, and still the C$1$P does not hold for the rows of $A_{\tagg}$. 

Hence, there is a Tucker matrix $M$ such that $M$ is a submatrix of $A_{\tagg}$.

Suppose without loss of generality that, if $M$ intersects only one tag column, then this tag column is $c_L$, since the analysis is symmetric if assumed otherwise and gives as a result in each case the dual matrix.

\vspace{3mm}
\begin{mycases}
\case \textit{Suppose first that $M$ intersects one or both of the distinguished rows.} 
Thus, the underlying matrix of $M$ (i.e., the matrix without the tags) is either \textit{$M_V$, or $M_I(3)$, or $M_{II}(k)$ for some $k \geq 3$.} We consider each case separately. 

\vspace{2mm}
\subcase $M_V = \begin{pmatrix}
				11000\\
				00110\\
				11110\\
				10011\\
\end{pmatrix}$

\vspace{2mm}
In this case, the distinguished row is $(1,1,1,1,0)$ and thus the last column is a tag column. 
Hence $M = M_5'$, which results in a contradiction. 

\vspace{2mm}
\subcase $M_I(3) = \begin{pmatrix}
				110\\
				011\\
				101\\
\end{pmatrix}$

\vspace{2mm}

If $(1,1,0)$ is a distinguished row, then we find $D_0$ as a forbidden submatrix given by the second and third rows. It is symmetric if the distinguished row is either the second or the third row, and therefore this case is not possible.

\vspace{2mm}
\subcase $M_{II}(k) = \begin{pmatrix}
				011...111\\
				110...000\\
				011...000\\
				.   .   .   .   . \\
				.   .   .   .   . \\
				000...110\\
				111...101\\
			\end{pmatrix}$

\vspace{2mm}

In this case, the distinguished rows can be only the first and the last row.

Suppose only the first row $(0, 1, \ldots, 1)$ of $M$ is a distinguished row. Thus, the first column is a tag column. 

Hence, $M_2'(k)$ is a submatrix of $A_{\tagg}$, and this results in a contradiction. The same holds if instead the last row is the sole distinguished row.

Finally, suppose both the first and the last row are distinguished. If this is the case, then the columns $1$ and $k-1$ are tag columns.

Suppose first that $M = M_{II}(4)$. Since every row is a labeled row, then every row is colored. Moreover, the first and second row have distinct colors, for if not we find $D_1$ as a submatrix. The same holds for the second and third row, and also for the third and fourth row.  
However, this implies that the second and third row induce $D_2$, hence this case is not possible.

If instead $M = M_{II}(k)$ for $k \geq 5$, then $M_2''(k)$ is a submatrix of $A_{\tagg}$, and thus we reached a contradiction. 

\vspace{3mm}
\case \textit{Suppose that $M$ does not intersect any distinguished row.}

If $M$ does not have any tag column, then $M$ is a submatrix of $A$. Thus, $A$ does not have the C$1$P and we conclude that $M$ is a Tucker matrix.

Suppose that instead one of the columns in $M$ is a tag column.

\vspace{2mm}
 \subcase $M_I(k) = \begin{pmatrix}
				110...00\\
				011...00\\
				.   .   .   .   . \\
				.   .   .   .   . \\
				.   .   .   .   . \\
				000...11\\
				100...01\\
			\end{pmatrix}$ for some $k \geq 3$.

\vspace{2mm}
Notice that, if any of the columns is a tag column, then we find $D_0$ as a submatrix, which results in $A$ not being admissible and thus reaching a contradiction.

\vspace{2mm}
\subcase $M_{II}(k) = \begin{pmatrix}
				011...111\\
				110...000\\
				011...000\\
				.   .   .   .   . \\
				.   .   .   .   . \\
				000...110\\
				111...101\\
			\end{pmatrix}$ for some $k \geq 4$

\vspace{2mm}
As in the previous case, some of the columns are not elegible for being tag columns. If there is only one tag column, the only remaining possibilities for tag columns are column $1$ or column $k-1$, for in any other case we find $D_0$ as a submatrix. Analogously, if instead $M$ intersects both tag columns, then these columns are also columns $1$ and $k-1$.

However, if $c_L$ is either column $1$ or column $k-1$, then $M_2''(k)$ is a submatrix of $A_{\tagg}$. Notice that we can reorder the columns of $M_{II}(k)$ to have the same disposition of the rows by taking column $k-1$ as the first column. Analogously, if $c_R$ is either column $1$ or $k-1$, then we find the dual matrix of $M_2'(k)$ as a submatrix.

Finally, suppose that both columns are tag columns.
Notice that the first and second rows are colored with distinct colors, for if not we find $D_1$ as a submatrix. The same holds for the last two rows of $M$.
Hence, if $k = 4$, then we find $D_2$ as a submatrix given by the second and third rows. If instead $k > 5$, then $M_2''(k)$ is a submatrix of $A_{\tagg}$, which results once more in a contradiction.
 
\vspace{2mm}
\subcase $M_{III}(k)= \begin{pmatrix}
				110...000\\
				011...000\\
				.   .   .   .   . \\
				.   .   .   .   . \\
				000...110\\
				011...101\\
			\end{pmatrix}$ for some $k \geq 3$.
		
\vspace{2mm}
In this case, the only possibilities for tag columns are column 1, column $k-1$ and column $k$, for if not we find $D_0$ as a submatrix. Once more, it is easy to see that we can reorder the columns in such a way to have the same disposition of the rows with column $k-1$ or column $k$ replacing column 1.

Suppose first that the tag column is the first column. In that case, we find $M_3'(k)$ as a submatrix of $M$, which also results in a contradiction since $M$ is admissible.

\vspace{1mm}
If instead the tag column is column $k$, then we use an analogous reasoning to find $M_3''(k)$ as a submatrix and thus reaching a contradiction.

\vspace{2mm}
Suppose now that both the first column and the last column of $M$ are tag columns.

Since $M$ is admissible, this case is not possible for the first and last row induce $D_1$ or $D_2$ as submatrices, depending on whether the rows are colored with the same color or with distinct colors, respectively.

\vspace{2mm}
\subcase $M_{IV}= \begin{pmatrix}
				110000\\
				001100\\
				000011\\
				010101\\
			\end{pmatrix}$

\vspace{2mm}

In this case, the only elegible columns for being tag columns are column 1, column 3 and column 5, since if any other column is a tag column, we find $D_0$ as a submatrix, thus contradicting the hypothesis of pre-admissibility for $M$ and thus for $A$. 
Further\-more, the election of the tag column is symmetric since there is a reordering of the rows that allows us to obtain the same matrix if the tag column is either column 1, column 3 or column 5, disregarding the election of the column. 
Hence, we have two possibilities: when column $1$ is the sole tag column of $M$, and when the two tag columns are columns $1$ and 3.
If column $1$ is the only tag column, then we find $M_4'$ as a submatrix.
If instead the columns $1$ and 3 are both tag columns, then the first row and the second row are colored with the same color, for if not there is $S_3(3)$ as a submatrix and this is not possible since $M$ is admissible. Thus, in this case we find $M_4''$ as a submatrix. 

\vspace{2mm}
\subcase $M_{V}= \begin{pmatrix}
				11000\\
				00110\\
				11110\\
				10011\\
			\end{pmatrix}$
\vspace{2mm}

Once more and using the same argument, the only elegible columns for being tag columns are columns 2, 3 or 5. Moreover, if the second column is the sole tag column, then there is a reordering of the rows such that the matrix obtained is the same as the matrix when the third column is the tag column.
If column 5 is the only tag column, then we find $M_5'$ as in Case 1.\ 1.
If instead column 2 is the only tag column, then the first and second rows have the same color, for if not we find $S_2(3)$ as a submatrix of $M$, and thus we have $M= M_5''$.
Finally, if columns 2 and 5 are both tag columns, then the first and last row induce $D_2$ as a submatrix, disregarding the coloring of the rows and thus this case is also not possible. 

\end{mycases}

\vspace{2mm}
This finishes every possible case, and therefore we have reached a contradiction by assuming that $A_{\tagg}$ does not contain any of the listed submatrices and still the C$1$P does not hold for $A_{\tagg}$. 
\end{proof}


When giving the guidelines to draw a circle model for any split graph $G=(K,S)$, not only is important that the adjacency matrix of each partition of $K$ results admissible and LR-orderable. We also need to ensure that there is an LR-ordering that satisfies a certain property when considering how to split every LR-row into its L-block and its R-block. 
The following definition states necessary conditions for the LR-ordering that we need to consider to obtain a circle model. We will call this a \emph{suitable LR-ordering}. The lemma that follows ensures that, if a matrix $A$ is admissible and LR-orderable, then we can always find a suitable LR-ordering for the columns of $A$.

\begin{defn} \label{def:suitable_ordering}
	An LR-ordering $\Pi$ is \emph{suitable} if the L-blocks of those LR-rows with exactly two blocks are disjoint with every R-block, the R-blocks of those LR-rows with exactly two blocks are disjoint with the L-blocks and for each LR-row the intersection with any U-block is empty with either its L-block or its R-block.
\end{defn} 

\begin{teo} \label{teo:hay_suitable_ordering}
If $A$ is admissible, LR-orderable and contains no $M_0$, $M_{II}(4)$, $M_V$ or $S_0(k)$ for every even $k \geq 4$, then there is at least one suitable LR-ordering. 
\end{teo}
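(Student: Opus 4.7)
The plan is to start from an LR-ordering $\Pi$ of $A$, guaranteed by the LR-orderability hypothesis, and either verify that $\Pi$ is already suitable or modify it. Recall that violations of Definition~\ref{def:suitable_ordering} fall into three types: (i) an LR-row $r$ with both L-block and R-block has its L-block meeting some R-block of another row; (ii) the symmetric failure with L and R interchanged; (iii) an LR-row $r$ and some U-row $u$ are such that the U-block of $u$ meets both the L-block and R-block of $r$. I would choose $\Pi$ so as to lexicographically minimize the triple $(v_1,v_2,v_3)$ counting these violations, and aim to show each $v_i$ is zero.

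For type (i), consider an LR-row $r$ with L-block $[1,a]$ and R-block $[b,n]$ whose L-block meets the R-block $[c,n]$ of a row $r'$ (so $c \leq a < b \leq n$). Then $r'$ has a $1$ in column $b-1$ where $r$ has a $0$, and together with any further row witnessing that $r$'s R-block is not contained in $r'$ (admissibility, i.e.\ the absence of the $\mathcal{D}$-matrices, provides such a witness whenever the configuration is nontrivial), one extracts $M_0$ when $r'$ is non-LR and $M_{II}(4)$ when $r'$ is an LR-row. When no such forbidden submatrix arises, I would show that there is a local column permutation — essentially, swapping the columns of the L-overlap with those of the gap $[a+1,b-1]$ — which preserves LR-orderability and strictly decreases $v_1$, contradicting minimality. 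Type (ii) is handled dually. For type (iii), if the U-block of $u$ meets both blocks of $r$, the C$1$P of $u$ forces $u$ to have $1$'s across the entire gap of $r$; extracting the submatrix on $\{r,u\}$ together with an admissibility-provided witness yields $M_V$ or its dual, unless again a local swap is available.

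The main obstacle will be cyclic interlocking of violations. A local swap may remove one violation but create another, and when these form a closed cycle no single local swap resolves the whole cycle. Exactly at this point the entire cyclic pattern of alternating L-blocks and R-blocks extracts to a submatrix of the form $S_0(k)$, where $k$ is forced to be even because the sequence of L-blocks and R-blocks around the cycle alternates. Verifying that every residual cyclic obstruction does produce $S_0(k)$ for some even $k$, and that admissibility together with the absence of $M_0$, $M_{II}(4)$, $M_V$, and $S_0(k)$ for every even $k$ eliminates all possible obstructions, is the technically heaviest and most delicate part of the argument.
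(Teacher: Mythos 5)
Your overall strategy coincides with the paper's: take an LR-ordering, and for each violation of suitability either repair it by a column permutation or extract a forbidden configuration; the paper phrases the same idea in terms of ``blocking rows'' that prevent the repairing permutation, rather than a lexicographically minimal violation count, but this is a cosmetic difference. The genuine problem is in how you dispose of the residual obstructions. You claim that when local swaps interlock cyclically, the obstruction ``extracts to a submatrix of the form $S_0(k)$, where $k$ is forced to be even because the sequence of L-blocks and R-blocks around the cycle alternates.'' This is not correct: the blocking sequences do \emph{not} all have even length, and the odd-length ones do not produce $S_0(k)$ for even $k$. In the paper's Case~2.2, for instance, the chain of rows blocking the reordering of an L-block of an LR-row against an R-block yields $S_0(k)$ for one parity of the gap but $S_8(k)$ for the other; and the chains terminating in labelled rows yield $S_1(k)$, $S_4(k)$, $S_5(k)$, $S_6(k)$ or $S_6'(3)$ depending on the labels and colours of the endpoints. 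All of these except $S_0(k)$ are excluded not by the explicit hypothesis of the theorem but by admissibility (the families $\mathcal{S}$ and $\mathcal{D}$). Your plan, as written, would leave the odd-length and labelled-endpoint obstructions unaccounted for, since you only budget for $M_0$, $M_{II}(4)$, $M_V$, $S_0(k)$ and an unspecified ``admissibility-provided witness.''

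A second, smaller gap: your extraction claims for single violations are too optimistic. A type~(i) violation with a non-LR row does not by itself give $M_0$ (a $3\times 4$ configuration needing a third row with a specific overlap pattern), and a type~(iii) violation does not by itself give $M_V$; in the paper these matrices only appear in specific sub-cases where an \emph{unlabelled} blocking row overlaps the LR-row's two blocks and the U-block in a particular way, while the other sub-cases (labelled blockers, LR blockers, chains of blockers) produce entirely different configurations. To complete your argument you would need to carry out essentially the same exhaustive classification of blocking rows by label, colour and chaining direction that the paper does, and to invoke admissibility for most of the resulting configurations rather than the four matrices named in the statement.
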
 

\begin{proof}
	Let $A$ be an admissible LR-orderable matrix. Toward a contradiction, suppose that every LR-ordering is non-suitable.
	If $\Pi$ is an LR-ordering of $A$, since $\Pi$ is non-suitable, then either (1) there is a U-block $u$ such that $u$ is not disjoint with the L-block and the R-block of $f_1$, or (2) there is an LR-row $f_1$ such that its L-block is not disjoint with some R-block. In both cases, there is no possible reordering of the columns to obtain a suitable LR-ordering. 	
	
	\vspace{1mm}
	Notice that, since $A$ is admissible, the LR-rows can be split into a two set partition such that the LR-rows in each set are totally ordered. Moreover, any two LR-rows for which the L-block of one intersects the R-block of the other are in distinct sets of the partition and thus the columns may be reordered by moving the portion of the block that one of the rows has in common with the other all the way to the right (or left). Hence, if two such blocks intersect and there is no possible LR-reordering of the columns, then there is at least one non-LR row blocking the reordering.
	Throughout the proof and by simplicity, we will say that a row or block \emph{$a$ is chained to the left (resp.\ to the right) of another row or block $b$ }if $a$ and $b$ overlap and $a$ intersects $b$ in column $l(b)$ (resp.\ $r(b)$).
	
\begin{mycases}	
	\vspace{1mm}
	\case
	\textit{Let $a_1$ be the L-block of $f_1$ and $b_1$ be the R-block of $f_1$.
	Suppose first there is a U-block $u$ such that $u$ intersects both $a_1$ and $b_1$.}
	
	Let $j_1 = r(a_1)+1$, this is, the first column in which $f_1$ has a $0$, $j_2= r(a_1)$ and $j_3= l(b_1)$ in which both rows $f_1$ and $u$ have a $1$.
	Since it is not possible to rearrange the columns to obtain a suitable LR-ordering, in particular, there are two columns $j_4 < j_2$ and $j_5 > j_3$ in which $u$ has $0$, one before and one after the string of $1$'s of $u$. Moreover, there is at least one row $f_2$ distinct to $f_1$ and $u$ blocking the reordering of the columns $j_1$, $j_2$ and $j_3$. 
	\subcase Suppose $f_2$ is the only row blocking the reordering. Notice that $f_2$ is neither disjoint nor nested with $u$ and there is at least one column in which $f_1$ has a $0$ and $f_2$ has a $1$. We may assume without loss of generality that this is column $j_1$. Suppose $f_2$ is unlabeled.
	The only possibility is that $f_2$ overlaps with $u$, $a_1$ and $b_1$, for if not we can reorder the columns to obtain a suitable LR-ordering. In that case, we find $M_0$ in $A$.
	If instead $f_2$ is labeled with either L or R, then we find $S_6'(3)$ in $A$ considering columns $j_4$, $j_2$, $j_1$, $j_3$, $j_5$ and both tag columns. If $f_2$ is an LR-row and $f_2$ is the only row blocking the reordering, then either the L-block of $f_2$ is nested in the L-block of $f_1$ and the R-block of $f_2$ contains the R-block of $f_1$, or viceversa. However, in that case we can move the portion of the L-block of $f_1$ that intersects $u$ to the right and thus we find a suitable LR-ordering, therefore this case is not possible.
	
	\subcase Suppose now there is a sequence of rows $f_2, \ldots, f_k$ for some $k \geq 3$ blocking the reordering such that $f_i$ and $f_{i+1}$ overlap for each $i \in \{2, \ldots, k\}$. 
	Moreover, there is either --at least-- one row that overlaps $a_1$ or $b_1$. We may assume without loss of generality that $f_2$ is such a row and that $f_2$ and $b_1$ overlap.
	Suppose that $f_2$ and $f_3$ are unlabeled rows. Notice that, either all the rows are chained to the left of $f_2$ or to the right. Furthermore, since $A$ contains no $M_0$ and we assumed that $b_1$ and $f_2$ overlap, if $f_i$ is chained to the left of $f_2$, then $f_i$ is contained in $b_1$ for every $i \geq 3$, and if $f_i$ is chained to the right of $f_2$, then $f_i$ is contained in $u$ for every $3 \leq i <k$.
	In either case, we find $M_{II}(4)$ considering the columns $j_2$, $j_1$, $j_3$ and $j_5$. 
	Suppose that $f_2$ is the only labeled row in the sequence and that $f_2$ is labeled with R. If $u$ and $f_2$ overlap, then we find $S_6'(3)$ as in the previous paragraphs.
	Thus, we assume $u$ is nested in $f_2$. Since the sequence of rows is blocking the reordering, the rows $f_3, \ldots, f_k$ are chained one to one to the right and $f_k = u$, therefore we find $S_6(k)$ as a subconfiguration.
	The only remaining possibility is that there are two labeled rows in the sequence blocking the reordering. Since there are no $D_1$ or $S_3(3)$, then either these two rows are labeled with the same letter and nested, or they are labeled one with L and the other with R and are disjoint. We may assume without loss of generality that $f_2$ and $f_k$ are such labeled rows.
	
	If $f_2$ and $f_k$ are both labeled with L, then necessarily one is nested in the other, for $\Pi$ is an LR-ordering. In that case, one has a $0$ in column $j_1$ and the other has a $1$, for if not we can reorder the columns moving $j_1$ --and maybe some other columns in which $f_1$ has a $0$-- to the right. Hence, in this case we find $S_5(k)$ as a subconfiguration of the submatrix given by considering the rows $f_1, f_2, \ldots, f_k$. It is analogous if $f_2$ and $f_k$ are labeled with R.

	If instead $f_2$ and $f_k$ are labeled one with L and the other with R, then we have two possibilities. Either $f_2, \ldots, f_{k-1}$ are nested in $a_1$, or $f_2$ is chained to the right of $u$ and $f_3$ is chained to the left. In either case, if $f_2$ or $f_3$ have a $1$ in some column in which $f_1$ has a $0$ and $u$ has a $1$, then we find $S_6'(3)$. If instead $f_3$ is nested in $a_1$ and $f_2$ is nested in $b_1$, then we find $M_V$ as a subconfiguration considering the columns $j_4$, $j_2$, $j_1$, $j_3$ and $j_5$.

	\vspace{2mm}	
	\case
	\textit{Suppose now that there is a row $f_2$ such that the L-block $a_1$ of $f_1$ and the R-block $b_2$ of $f_2$ are not disjoint.} Notice that, by definition of R-block, $f_2$ is either labeled with R or LR.
	Once more, we consider $j_1=r(a_1) + 1$ the first column in which $f_1$ has a $0$.

	Since $a_1$ and $b_2$ intersect, there is a column $j_2 < j_1$ such that $a_1$ and $b_2$ both have a $1$ in column $j_2$.
	\subcase Suppose first that there is exactly one row $f_3$ blocking the possibility of reordering the columns to obtain a suitable LR-ordering.
	Notice that, for a row to block the reordering of the columns, such row must have a $1$ in $j_2$ and at least one column with a $0$. We have three possible cases:

 	\subsubcase Suppose first that $f_3$ is unlabeled. If $f_2$ is labeled with LR and $f_3$ does not intersect the L-block of $f_2$, then we can move to the R-block of $f_1$ those columns in which $f_3$ has $0$ and $a_1$ has $1$. If $f_3$ intersects the L-block of $f_2$, then this is precisely as in the previous case. Thus, we assume $f_2$ is labeled with R. 
	If $f_3$ is not nested in either $f_1$ nor $f_2$, then there is a column $j_3$ in which $f_3$ and $f_2$ have a $1$ and $f_1$ has a $0$, and a column $j_4$ in which $f_3$ and $f_1$ have a $1$ and $f_2$ has a $0$. 
	In that case, we find $S_6(3)$ considering the columns $j_1$, $j_2$, $j_3$, $j_4$ and both tag columns.
	 If $f_3$ is nested in $f_2$, then we can reorder by moving to the right all the columns in which $a_1$ and $f_2$ both have $1$ and mantaining those columns in which $f_3$ has a $1$ together.
	 If instead $f_3$ is nested in $f_1$, then we find $S_6'(3)$ as a subconfiguration. 
	
	\vspace{1mm}	
	\subsubcase Suppose now that $f_3$ is labeled with L. If $f_2$ is labeled with R, then $f_2$ and $f_3$ are colored with distinct colors, for if not we find $D_1$. Thus, we find $D_5$ as a subconfiguration in the submatrix given by $f_1$, $f_2$, $f_3$. Moreover, notice that, if $f_3$ is also labeled with R, then it is possible to move all those columns of $a_1$ that have a $1$ and intersect $f_2$ (and $f_3$) in order to obtain a suitable LR-ordering and thus $f_3$ did not block the reordering.
	If instead $f_2$ is an LR-row, then we find either $D_7$, $D_8$ or $D_9$, depending on where is the string of $0$'s in row $f_3$. Also notice that it is indistinct in this case if $f_3$ is labeled with R.
	\vspace{1mm} 
	 \subsubcase Suppose $f_3$ is labeled with LR. If $f_2$ is an LR-row, since $A$ is admissible, then either $f_3$ is nested in $f_1$ or $f_3$ is nested in $f_2$ (we may assume this since it is analogous if $f_3$ contains $f_1$ or $f_2$: we will see that $f_3$ is not blocking the reordering). If $f_3$ is nested in $f_2$, then we can move the part of the L-block $a_1$ that intersects $b_2$ all the way to the right and then we have a suitable reordering. It is analogous if $f_3$ is nested in $f_1$.
	 If $f_2$ is labeled with R, then we may assume that $f_2$ is not nested in $f_3$, for if not we have a similar situation as in the previous paragraphs. The same holds if $f_1$ and $f_3$ are nested LR-rows. 
	 We know that the L-block $a_3$ of $f_3$ intersects the R-block $b_2 =r_2$. Hence, in the column $j_3=r(a_3)+1$ the row $f_3$ has a $0$ and $f_2$ has a $1$, and in the column $j_4=l(b_2)-1$ the row $f_3$ has a $1$ and $f_2$ has a $0$. Moreover, since $f_1$ and $f_3$ are not nested, then there is a column greater than $j_2$ in which $f_1$ has a $0$ and $f_2$ and $f_3$ have a $1$. In this case, we find $D_8$ as a subconfiguration.

	\vspace{1mm}
	\subcase Suppose now that it is not possible to reorder the columns to obtain a suitable LR-ordering, since there is a sequence of rows $f_3, \ldots, f_k$, with $k > 3$, blocking --in particular-- the reordering of the columns $j_1=r(a_1)+1$ and $j_2=r(a_1)$.
	
	
	We may assume that the sequence of rows is either chained to the right --and thus $f_k$ is labeled with R-- or to the left --and thus $f_k$ is labeled with L, for if not we find $M_V$ as in the first case.
	Suppose that $f_2$ is labeled with R. If the sequence $f_3, \ldots, f_k$  is chained to the left, then we find $S_4(k)$ as a subconfiguration. If instead the sequence $f_3, \ldots, f_k$ is chained to the right, then we find $S_1(k)$.
	Suppose now that $f_2$ is an LR-row. 
	Since the L-block of $f_1$ and the R-block of $f_2$ intersect, then these rows are not nested. Whether the sequence is chained to the right or to the left, we may assume that $f_3$ is nested in $a_1$ and is disjoint with $a_2$. Let $k$ be the number of $0$'s between the L-block and the R-block of $f_2$. Depending on whether $k$ is odd or even, we find $S_0(k)$ or $S_8(k)$, respectively, as a subconfiguration of the submatrix given by considering the rows $f_1, f_2, \ldots, f_{k+3}$. 
\end{mycases}
	This finishes the proof.

\end{proof}

\begin{defn} \label{def:partially_2-nested}
	Let $A$ be an enriched matrix. We say $A$ is \emph{partially $2$-nested} if the following conditions hold:
	 \begin{itemize}
	 	\item $A$ is admissible, LR-orderable and contains no $M_0$, $M_{II}(4)$, $M_V$ or $S_0(k)$ for every even $k \geq 4$.
	 
	 	\item Each pair of non-LR-rows colored with the same color are either disjoint or nested in $A$.
	 	\item If an L-block (resp.\ R-block) of an LR-row is colored, then any non-LR row colored with the same color is either disjoint or contained in such L-block (resp.\ R-block).
	  	\item If an L-block (resp.\ R-block) of an LR-row $f_1$ is colored and there is a distinct LR-row $f_2$ for which its L-block (resp.\ R-block) is also colored with the same color, then $f_1$ and $f_2$ are nested in $A$.
	 \end{itemize}
\end{defn}

\begin{remark} \label{obs:partially2nested_hasnogems}
Notice that the second statement of the definition of partially $2$-nested implies that there are no monochromatic gems or monochromatic weak gems in $A$, since $A$ is admissible and thus any two labeled non-LR-rows do not contain $D_1$ as a subconfiguration. 
Moreover, the third statement implies that there are no monochromatic weak gems in $A$.
Furthermore, the last statement implies that there are no badly-colored doubly-weak gems in $A$.
\end{remark}

The following Corollary is a consequence of the previous remark and Theorem \ref{teo:LR-orderable_caract_bymatrices}.

\begin{cor} \label{cor:partially_2-nested_caract}
	An admissible matrix $A$ is partially $2$-nested if and only if $A$ contains no $M_0$, $M_{II}(4)$, $M_V$, monochromatic gems nor monochromatic weak gems nor badly-colored doubly-weak gems and the tagged matrix $A^*_{\tagg}$ does not contain any Tucker matrices, $M_2'(k)$, $M_2''(k)$, $M_3'(k)$, $M_3''(k)$ for $k\geq 3$, $M_4'$, $M_4''$, $M_5'$, $M_5''$.
\end{cor}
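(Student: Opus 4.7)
The plan is to derive this corollary as a near-immediate reformulation of Definition \ref{def:partially_2-nested}, by translating each of its four bullets into an equivalent forbidden-subconfiguration condition and then assembling them. Admissibility is a standing hypothesis in both the corollary and in the definition, so it need not be re-established.

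First I would deal with the first bullet of Definition \ref{def:partially_2-nested}, which asks that $A$ be LR-orderable and contain none of $M_0$, $M_{II}(4)$, $M_V$. The exclusion of these three matrices is explicit in the corollary's hypothesis. For LR-orderability, since $A$ is already assumed admissible, Theorem \ref{teo:LR-orderable_caract_bymatrices} gives that $A$ is LR-orderable if and only if $A^*_\tagg$ avoids Tucker matrices and the matrices $M_2'(k)$, $M_2''(k)$, $M_3'(k)$, $M_3''(k)$ for $k\geq 3$, and $M_4'$, $M_4''$, $M_5'$, $M_5''$, which is exactly the remaining condition in the statement. So the first bullet is equivalent to the collection of submatrix exclusions on the right-hand side.

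Next I would translate each of the three coloring bullets of Definition \ref{def:partially_2-nested} into avoidance of a corresponding gem-type pattern. The forward directions are the content of Remark \ref{obs:partially2nested_hasnogems}: the second bullet forbids monochromatic gems, the third forbids monochromatic weak gems, and the fourth forbids badly-colored doubly-weak gems. For the converses, I would argue contrapositively. A failure of the second bullet exhibits two non-LR rows of the same color that are neither disjoint nor nested; restricting to the three columns witnessing the overlap gives a monochromatic $0$-gem. A failure of the third supplies a colored L-block (or R-block) of an LR-row and a same-colored non-LR row that is neither disjoint from nor contained in that block; the two columns witnessing this produce a $1$-gem whose first row is the L-row (resp.\ R-row) and whose second row is the non-LR row, i.e.\ a monochromatic weak gem in the sense of Definition \ref{def:gems}. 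A failure of the fourth gives two LR-rows whose L-blocks (or R-blocks) are colored with the same color yet whose full rows are not nested; the relevant three columns exhibit a $2$-gem whose shared column lies inside two blocks of the same color, which is exactly a badly-colored doubly-weak gem.

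I do not anticipate a serious obstacle: the argument is essentially bookkeeping, with the work having already been done in Remark \ref{obs:partially2nested_hasnogems} and Theorem \ref{teo:LR-orderable_caract_bymatrices}. The only point requiring care is the converse half of each of the three coloring equivalences — verifying that the minimal witness of a coloring violation really is a $0$-, $1$-, or $2$-gem rather than some larger pattern. This is where admissibility is essential: by Definition \ref{teo:caract_admissible}, $A$ is $\{\mathcal{D},\mathcal{S},\mathcal{P}\}$-free, so alternative obstructions built on the $D_i$ configurations are ruled out, and the witness is forced into the basic gem shape.
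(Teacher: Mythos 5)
Your proposal is correct and follows essentially the same route as the paper, which derives the corollary in one line from Remark \ref{obs:partially2nested_hasnogems} together with Theorem \ref{teo:LR-orderable_caract_bymatrices}; your explicit contrapositive arguments for the converse of each coloring bullet are exactly the bookkeeping the paper leaves implicit. The only caveat, inherited from the corollary's statement itself rather than introduced by you, is that the first bullet of Definition \ref{def:partially_2-nested} also excludes $S_0(k)$ for even $k\geq 4$, a condition that appears neither on the right-hand side of the corollary nor in your claimed equivalence for that bullet, so strictly speaking that exclusion must either be added to the list or shown to follow from the remaining hypotheses.
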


\section{A characterization of $2$-nested matrices} \label{section:2nestedmatrices}

We begin this section stating and proving a lemma that characterizes when a partial $2$-coloring can be extended to a total proper $2$-coloring, for every partially $2$-colored connected graph $G$.
Then, we give the definition and some properties of the auxiliary matrix $A+$, which will help us throughout the proof of Theorem \ref{teo:2-nested_caract_bymatrices} at the end of the section.

\begin{lema} \label{lema:2-color-extension}
	Let $G$ be a connected graph with a partial proper $2$-coloring of the vertices. Then, the partial $2$-coloring can be extended to a total proper $2$-coloring of the vertices of $G$ if and only if all of the following conditions hold:
	\begin{itemize}
		\item There are no even induced paths such that the only colored vertices of the path are its endpoints, and they are colored with the same color
		\item There are no odd induced paths such that the only colored vertices of the path are its endpoints, and they are colored with distinct colors
		\item There are no induced uncolored odd cycles
		\item There are no induced odd cycles with exactly one colored vertex
		\item There are no induced cycles of length 3 with exactly on uncolored vertex
	\end{itemize}	  
\end{lema}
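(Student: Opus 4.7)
The plan is to prove necessity in a single line and to handle sufficiency in two stages: first showing that $G$ is bipartite, then extending the partial coloring consistently with the bipartition.

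Necessity is immediate: in any proper $2$-coloring, colors alternate along every induced path, so the parity of an induced path's length between two colored vertices with no colored internal vertex must match whether the endpoints share a color; moreover, a bipartite graph contains no odd cycles. Each of the five listed configurations directly violates one of these facts.

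For sufficiency, I would first prove $G$ is bipartite. Let $C$ be a shortest induced odd cycle of $G$, should one exist. If $|C|=3$, conditions 3, 4, and 5 rule out $C$ having $0$, $1$, or $2$ colored vertices respectively, while three colored vertices on a triangle cannot form a proper partial coloring. When $|C|\geq 5$, conditions 3 and 4 force $C$ to contain at least two colored vertices; splitting $C$ at its colored vertices produces induced sub-arcs of $G$ (induced because $C$ is) whose only colored vertices are the endpoints, and conditions 1 and 2 give, for each sub-arc $P^{(i)}$ with colored endpoints $u_i, u_{i+1}$, the relation $|P^{(i)}| \equiv c(u_i)\oplus c(u_{i+1}) \pmod 2$. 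Summing telescopically around $C$ yields $|C|\equiv 0\pmod 2$, contradicting $|C|$ odd. So $G$ is bipartite with parts $V_1, V_2$.

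Next, call an ordered pair of colored vertices $(u,v)$ \emph{discordant} if either they lie in the same part with different colors or in different parts with the same color; I would show no discordant pair exists. Take a discordant pair minimizing $d(u,v)$ and let $P$ be a shortest $uv$-path, which is induced. If no internal vertex of $P$ is colored, the parity of $|P|$ and the colors of $u,v$ contradict condition 1 or 2. Otherwise, pick a colored internal vertex $w$ and observe that for three colored vertices on a geodesic both part-parity and color-parity split additively via XOR, so the discordance function $D$ satisfies $D(u,v)=D(u,w)\oplus D(w,v)$; discordance of $(u,v)$ forces one of $(u,w),(w,v)$ to be discordant and strictly closer, contradicting minimality. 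With no discordant pair, each part's colored vertices share a common color; extending by assigning every vertex of $V_i$ this common color (and using connectedness to propagate, freely choosing colors for any part lacking a colored vertex so that the two parts receive distinct colors) yields a proper $2$-coloring extending the partial one. The main technical content lies in the parity/XOR bookkeeping applied to odd cycles and to shortest paths with several colored interior vertices, which is handled uniformly by the telescoping argument.
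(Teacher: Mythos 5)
Your architecture is genuinely different from the paper's: the paper proves sufficiency by induction on the number of uncolored vertices, greedily coloring an uncolored vertex whose colored neighbourhood is maximum and then checking that the five conditions survive, whereas you first establish that $G$ is bipartite and then eliminate discordant pairs along shortest paths. The discordance half of your argument is sound as far as it goes: shortest paths are induced, their subpaths are again shortest, and the XOR telescoping correctly replaces a discordant pair by a strictly closer one, so conditions 1 and 2 really do apply where you invoke them.

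The gap is in the bipartiteness step, in the claim that splitting an induced odd cycle $C$ at its colored vertices yields induced sub-paths. This fails exactly when $C$ has precisely two colored vertices and they are adjacent on $C$: the long arc then consists of all of $V(C)$ and its endpoints are adjacent, so it is not an induced path, conditions 1 and 2 say nothing about it, and your telescoping sum cannot be closed. The failure is not a repairable technicality: take $G=C_5$ with two adjacent vertices precolored red and blue and the remaining three uncolored. This partial coloring is proper; the only induced path between the two colored vertices is the edge joining them (an even path with distinct colors, forbidden by neither condition 1 nor 2), the unique induced cycle has two colored vertices, and there are no triangles, so all five conditions hold --- yet alternating colors around the cycle forces a conflict at the fifth vertex, so no extension exists. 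Thus the sufficiency direction is false as stated, and your proof (like the paper's own, whose concatenated path $\langle P,w\rangle$ need not be induced for exactly the same reason) necessarily breaks at this configuration; repairing your approach would require an additional hypothesis, e.g.\ forbidding induced odd cycles whose colored vertices are two adjacent vertices of distinct colors, or more generally induced odd cycles whose colored vertices do not force even total parity.
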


\begin{proof}
The if case is trivial.

On the other hand, for the only if part, suppose all of the given statements hold. Notice that, since $G$ has a given proper partial $2$-coloring of the rows, then there are no adjacent vertices pre-colored with the same color. 

Let $H$ be the induced uncolored subgraph of $G$. We will prove this by induction on the number of vertices of $H$.

For the base case, this is to say when $|H| = 1$, let $v$ in $H$. If $v$ cannot be colored, since $v$ is the only uncolored vertex in $G$, then there are two vertices $x_1$ and $x_2$ such that $x_1$ and $x_2$ have distinct colors. Thus, the set $\{ x_1, v, x_2 \}$ either induces an odd path in $G$ of length 3 with the endpoints colored with distinct colors, or an induced $C_3$ with exactly one uncolored vertex, which results in a contradiction.

For the inductive step, suppose that we can extend the partial $2$-coloring of $G$ to a proper $2$-coloring if $|V(H)| \leq n$. 

Suppose that $|V(H)| = n+1$. If $H$ is not connected, then for any isolated vertex we have the same situation as in the base case. Hence, we assume $H$ is connected.
Let $v$ in $H$ such that $N(v) \cap V(G-H)$ is maximum. Every vertex $w$ in $N(v) \cap V(G-H)$ must be colored with the same color, for if not we find either a $C_3$ with exactly on uncolored vertex or an odd induced path with its endpoints colored with distinct colors. Suppose that such a color is red. Thus, we can color $v$ with blue.
We will see that the graph $G'$ defined as $G' = (G-H) \cup \{v \}$ fullfils every listed property.
It is straightforward that there are no uncolored odd cycles, for there were no odd uncolored cycles in $H$. Furthermore, using the same argument, we see that there are no induced odd cycles with exactly one colored vertex, for this would imply that there is an odd uncolored cycle $C$ in $H$ such that $v$ is a vertex of $C$.

Since every statement of the list holds for $G$ when $H$ is uncolored, if there was an even induced path $P$ such that the only colored vertices are its endpoints and they are colored with the same color, then the only possibility is that one of the endpoints of $P$ is $v$. 
Let $v_1$ be the uncolored vertex of $P$ such that $v_1$ is adjacent to $v$. Since $N(v) \cap V(G-H)$ is maximum, then there is a vertex $w$ in $N(v) \cap V(G-H)$ such that $w$ is nonadjacent to $v_1$. 
Hence, there is an odd induced path $P'$ in the pre-colored $G$ given by $< P, w >$ such that the only colored vertices of $P'$ are its endpoints and they are colored with the same color, which results in a contradiction.

The same argument holds if there is an odd induced path in $H - \{v\}$.

Finally, there are no $C_3$ with exactly one uncolored vertex, for in that case we would have an odd cycle in the pre-colored $G$ with exactly one colored vertex, and this results once more in a contradiction.
	
\end{proof}


Let $A$ be an enriched matrix, and let $A_{LR}$ be the enriched submatrix of $A$ given by considering every LR-row of $A$. We now give a useful property for this enriched submatrix when $A$ is admissible.

\begin{lema} \label{lema:A_LR_2nested}
If $A$ is admissible, then $A_{LR}$ contains no $F_1(k)$ or $F_2(k)$, for every odd $k \geq 5$.
\end{lema}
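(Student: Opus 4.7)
The plan is to show that if $A_{LR}$ contains either $F_1(k)$ or $F_2(k)$ for some odd $k\geq 5$ as a subconfiguration, then $A$ contains $D_{13}$ as a subconfiguration, contradicting admissibility. The key observation is that both matrix families $F_1(k)$ and $F_2(k)$ embed the staircase pattern $D_{13}$ as a $3\times 4$ subconfiguration, and since every row of $A_{LR}$ is by definition an LR-row of $A$, the three selected rows already carry the LR-labels required by $D_{13}$.

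First, I would inspect $F_1(k)$. Its last three rows --- rows $k-2$, $k-1$ and $k$ --- are pairs of consecutive $1$'s located at the column-positions $\{3,4\}$, $\{2,3\}$ and $\{1,2\}$, respectively. Since $F_1(k)$ has $k-1\geq 4$ columns, the first four columns exist; restricting these three rows to columns $1, 2, 3, 4$ and listing them in the order row $k$, row $k-1$, row $k-2$ gives the $3\times 4$ matrix
\[
\begin{pmatrix} 1 & 1 & 0 & 0 \\ 0 & 1 & 1 & 0 \\ 0 & 0 & 1 & 1 \end{pmatrix},
\]
with all three rows labeled LR. This is precisely $D_{13}$ from Figure~\ref{fig:forb_D}.

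For $F_2(k)$, I would do the analogous argument. The rows $2$, $3$ and $4$ of $F_2(k)$ have their $1$'s at positions $\{1,2\}$, $\{2,3\}$ and $\{3,4\}$, respectively, so their restriction to the first four columns (which exist since $F_2(k)$ has $k\geq 5$ columns), listed in the natural order, is again exactly the matrix $D_{13}$, and once more all three rows are LR-rows of $A$.

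Since $A$ is admissible, by Definition~\ref{teo:caract_admissible} it is $\{\mathcal{D},\mathcal{S},\mathcal{P}\}$-free and in particular contains no $D_{13}$ as a subconfiguration. This contradicts having $F_1(k)$ or $F_2(k)$ inside $A_{LR}$, and finishes the proof. There is no substantial obstacle: the only point that requires attention is checking the indexing of the staircase rows in each family and confirming that enough columns are available, both of which follow from $k\geq 5$.
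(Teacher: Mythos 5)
Your proposal is correct and follows essentially the same route as the paper: both extract the $3\times 4$ staircase $D_{13}$ from three consecutive staircase rows of $F_1(k)$ (resp.\ $F_2(k)$), note that these rows are LR-rows since they live in $A_{LR}$, and conclude by admissibility. Your version is in fact slightly more careful than the paper's, since you verify the row indices and the availability of four columns explicitly.
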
 

\begin{proof}
Toward a contradiction, suppose that $A_{LR}$ contains either $F_1(k)$ or $F_2(k)$ in $A_{LR}$  as subconfiguration, for some odd $k \geq 5$. Moreover, since $k\geq5$, we find the following enriched submatrix in $A$ as a subconfiguration:
\begin{align*}
	\bordermatrix{ & \cr
		\textbf{LR} & 1 1 0 0 \cr
		\textbf{LR} & 0 1 1 0 \cr
		\textbf{LR} & 0 0 1 1 }\
\end{align*}
Since these three rows induce $D_{13}$, this is not possible. 
It follows from the same argument that there is no $F_2(k)$ in $A_{LR}$. 
Therefore, if $A_{LR}$ contains no $D_{13}$, then $A_{LR}$ contains no $F_1(k)$ or $F_2(k)$, for all odd $k \geq 5$. 

\end{proof}


\begin{remark}
	It follows from Lemma \ref{lema:A_LR_2nested} that, if $A$ is admissible, then there is a partition of the LR-rows of $A$ into two subsets $S_1$ and $S_2$ such that every pair of rows in each subset are either nested or disjoint. Moreover, since $A$ contains no $D_{11}$ as a subconfiguration, every pair of LR-rows that lie in the same subset $S_i$ are nested, for each $i=1,2$. Equivalently, the LR-rows in each subset $S_i$ are totally ordered by inclusion, for each $i=1,2$.
\end{remark} \label{rem:A_LR_2nested}
	 
	 Let $A$ be an admissible matrix, let $S_1$ and $S_2$ be a partition of the LR-rows of $A$ such that every pair of rows in $S_i$ is nested, for each $i=1,2$. 
	 Since there is no $D_0$, there is a row $m_L$ such that $m_L$ is labeled with L and contains every L-block of those rows in $A$ that are labeled with L. Analogously, we find a row $m_R$ such that every R-block of a row in $A$ labeled with R is contained in $m_R$. Moreover, there are two rows $m_1$ in $S_1$ and $m_2$ in $S_2$ such that every row in $S_i$ is contained in $m_i$, for each $i=1,2$.
	 This property allows us to well define the following auxiliary matrix, which will be helpful throughout the proof of Theorem \ref{teo:2-nested_caract_bymatrices}.

\begin{defn} \label{def:A+}
Let $A$ be an enriched matrix and let $\Pi$ be a suitable LR-ordering of $A$. The enriched matrix $A+$ is the result of applying the following rules to $A$:
\begin{itemize}
	\item Every empty row is deleted.

	\item Each LR-row $f$ with exactly one block is replaced by a row labeled with either L or R, depending on whether it has an L-block or an R-block.

	\item Each LR-row $f$ with exactly two blocks, is replaced by two uncolored rows, one having a $1$ in precisely the columns of its L-block and labeled with L, and another having a $1$ in precisely the columns of its R-block and labeled with R. We add a column $c_f$ with $1$ in precisely these two rows and $0$ otherwise. 

	\item If there is at least one row labeled with L or R in $A$, then each LR-row $f$ whose entries are all $1$'s is replaced by two uncolored rows, one having a $1$ in precisely the columns of the maximum L-block and labeled with L, and another having a $1$ in precisely the complement of the maximum L-block and labeled with R. We add a column $c_f$ with $1$ in precisely these two rows and $0$ otherwise.  

\end{itemize}
\end{defn} 

Notice that every non-LR-row remains the same.

\begin{figure}
\begin{align*}
	B = \bordermatrix{ ~ &  ~ \cr
	\textbf{LR} & 1 0 0 0 0  \cr
	\textbf{LR} & 1 1 0 0 1 \cr
 	 \textbf{LR} & 1 1 1 1 1  \cr
					 & 0 1  1  0  0 \cr
	\textbf L & 1 1 1 0 0 \cr
	\textbf{LR} & 0 0 0 0 0 \cr
	\textbf R & 0 0 0 1 1  }\,
	\begin{matrix}
	\\ \\ \\ \\ \\ \textcolor{red}{\bullet} \\ \textcolor{blue}{\bullet} \\ \textcolor{blue}{\bullet} 
	\end{matrix}
	&&
B+ = \bordermatrix{ & \cr
	\textbf{L} & 1 0 0 0 0 0 0 \cr
	\textbf{L} & 1 1 0 0 0 1 0  \cr
	\textbf{R} & 0 0 0 0 1 1 0 \cr
	\textbf{L} & 1 1 1 0 0 0 1 \cr
	\textbf{R} & 0 0 0 1 1  0  1\cr
					 & 0 1  1  0  0  0  0 \cr
	\textbf L & 1 1 1 0 0 0 0 \cr
	\textbf R & 0 0 0 1 1 0 0 }\,
	\begin{matrix}
	\\ \\ \\ \\ \\ \\ \\ \textcolor{red}{\bullet} \\ \textcolor{blue}{\bullet} \\ 
	\end{matrix}
\end{align*}
\caption{Example of an enriched admissible matrix $B$ and $B+$. The last two columns of $B+$ are $c_{r_2}$ and $c_{r_3}$.} \label{fig:example_B+}
\end{figure}


\begin{remark}\label{obs:props_de_A+_suitable}
Let $A$ be a partially $2$-nested matrix. Since $A$ is admissible, LR-orderable and contains no $M_0$, $M_{II}(4)$, $M_V$ or $S_0(k)$ for every even $k \geq 4$, then by Theorem \ref{teo:hay_suitable_ordering} we know that there exists a suitable LR-ordering $\Pi$. Hence, whenever we consider defining the matrix $A+$ for such a matrix $A$, we will always use a suitable LR-ordering $\Pi$ to do so.

Let us consider $A+$ as defined in \ref{def:A+} according to a suitable LR-ordering $\Pi$. Suppose there is at least one LR-row in $A$. Recall that, since $A$ is admissible, the LR-rows may be split into two disjoint subsets $S_1$ and $S_2$ such that the LR-rows in each subset are totally ordered by inclusion. This implies that there is an inclusion-wise maximal LR-row $m_i$ for each $S_i$, $i=1,2$. If we assume that $m_1$ and $m_2$ overlap, then either the L-block of $m_1$ is contained in the L-block of $m_2$ and the R-block of $m_1$ contains the R-block of $m_2$, or viceversa. Hence, if there is at least one LR-row in $A$, since $\Pi$ is suitable and $A$ contains no $D_1$, $D_4$ or uncolored rows labeled with either L or R, then the following holds:
	\begin{itemize}
		\item There is an inclusion-wise maximal L-block $b_L$ in $A+$ such that every R-block in $A+$ is disjoint with $b_L$. 
		\item There is an inclusion-wise maximal R-block $b_R$ in $A+$ such that every L-block in $A+$ is disjoint with $b_L$. 
	\end{itemize}
	
Therefore, when defining $A+$ we replace each LR-row having two strings of $1$'s by two distinct rows, one labeled with L and the other labeled with R, such that the new row labeled with L does not intersect with any row labeled with R and the new row labeled with R does not intersect with any row labeled with L. 
\end{remark}

We denote $A+ \setminus C_f$ to the submatrix induced by considering every non-$c_f$ column of $A+$. 
Notice that $A$ differs from $A+$ only in its LR-rows, which are either deleted or replaced in $A+$ by labeled uncolored rows. The following is a straightforward consequence of this.

\begin{lema} \label{lema:A+_tb_es_adm} 
	If $A$ is admissible and LR-orderable, then $A+ \setminus C_f$ is admissible and LR-orderable.
\end{lema}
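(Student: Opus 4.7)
The plan is to verify the two properties separately and to reduce each one to a statement about $A$ itself, exploiting the fact that the construction of $A+$ is essentially ``row-splitting plus bookkeeping.''

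First I would handle LR-orderability. Fix any LR-ordering $\Pi$ of $A$ (which exists by hypothesis). I claim $\Pi$, restricted to the original columns of $A$, is an LR-ordering of $A+\setminus C_f$. Indeed, the non-LR rows of $A$ are left unchanged in $A+\setminus C_f$, so they remain consecutive under $\Pi$. Any LR-row $f$ of $A$ with a single block becomes a single L-row or R-row in $A+\setminus C_f$ whose $1$'s coincide exactly with that block, and this block starts at the first column (resp.\ ends at the last column) with respect to $\Pi$ by the definition of LR-ordering. Any LR-row $f$ with two blocks is split into one L-row matching its L-block and one R-row matching its R-block; by the definition of the L- and R-block of an LR-row under $\Pi$, these start in the first column and end in the last column respectively. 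Finally, LR-rows whose row is entirely $1$ are split according to the maximum L-block, which again yields an L-row starting at the first column and an R-row ending at the last column. Since $A+\setminus C_f$ has no LR-rows, the condition on complements of LR-rows is vacuous, so $\Pi$ is an LR-ordering.

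Next I would prove admissibility of $A+\setminus C_f$. Assume toward a contradiction that $A+\setminus C_f$ contains some $X$ in $\mathcal{D}\cup\mathcal{S}\cup\mathcal{P}$ as a subconfiguration. Since $A+\setminus C_f$ has no LR-rows, $X$ itself has no LR-rows, so we only need to consider those forbidden matrices whose rows are all unlabeled or L/R-labeled: within $\mathcal{D}$ these are $D_0$, $D_1$, $D_2$; within $\mathcal{S}$ the matrices $S_2(k)$, $S_3(k)$, $S_6(k)$ (with the LR-rows absent, only the relevant ones remain); within $\mathcal{P}$ none, since every matrix in $\mathcal{P}$ has at least one LR-row. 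The key observation is that each row of $A+\setminus C_f$ is a \emph{truncation} of a row of $A$: a non-LR row is preserved verbatim; a row obtained by splitting an LR-row $f$ has $1$'s in precisely one block of $f$ and $0$'s elsewhere, including where $f$ originally had $1$'s in its other block. Given $X\subseteq A+\setminus C_f$, lift each of its rows back to the corresponding row of $A$, keeping the same set of columns. I would then run through the short list above and show, in each case, that the lifted configuration is itself a forbidden subconfiguration of $A$, contradicting admissibility of $A$.

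The main obstacle will be this case analysis. For example, a $D_0$ in $A+\setminus C_f$ lifts to two rows of $A$ that are either both L-rows (giving $D_0$ in $A$ directly), one L-row and one LR-row (in which case the L-block of the LR-row together with the L-row yields $D_4$ in $A$, unless they are already non-nested as whole rows, giving $D_0$), or two LR-rows whose L-blocks are not nested (yielding $D_{11}$ or $D_{12}$). Similarly, a $D_1$ in $A+\setminus C_f$ lifts to a configuration with at least one LR-row; the shared $1$-column forces $D_5$, $D_6$, or $D_1$ itself in $A$, using properties of a suitable LR-ordering and of the coloring inherited from $A$. The matrices $D_2$, $S_2(k)$, $S_3(k)$, $S_6(k)$ are handled by the same strategy: each $0$ in an L-row or R-row of $X$ that corresponds to a $1$ in the lifted LR-row of $A$ contributes to a forbidden pattern in $\mathcal{D}\cup\mathcal{S}\cup\mathcal{P}$, and the absence of this pattern in $A$ (by admissibility of $A$) gives the contradiction. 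Since this reduces in every case to an already-forbidden subconfiguration of $A$, no such $X$ can exist, and $A+\setminus C_f$ is admissible.
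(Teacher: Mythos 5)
Your overall strategy is the right one, and it is essentially what the paper intends (the paper offers no written proof, calling the lemma a ``straightforward consequence'' of the observation that $A+$ differs from $A$ only in that LR-rows are deleted or replaced by labeled uncolored rows). Two remarks on the LR-orderability half: the ordering you should use is not ``any LR-ordering of $A$'' but the \emph{suitable} LR-ordering $\Pi$ fixed in Definition~\ref{def:A+}, since the blocks into which each LR-row is split are only defined relative to that $\Pi$; with that correction the argument is fine. Also, $S_6(k)$ contains an LR-row, so it cannot occur in $A+\setminus C_f$ and should not be on your list; and indeed every matrix in $\mathcal{P}$ has an LR-row, as you say.

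The admissibility half has concrete errors in the case analysis, although the cases in question turn out to be vacuous for simpler reasons, so the proof is repairable. An L-row and an LR-row of $A$ cannot together form $D_0$ (its two rows carry the same label); $D_4$ consists of \emph{two} L-rows colored with \emph{distinct} colors plus an LR-row, so it is not produced by lifting one L-row and one LR-row; and $D_{11}$, $D_{12}$ each require \emph{three} LR-rows, so two lifted LR-rows cannot yield them. The clean way out is to use the LR-orderability you have already established: under $\Pi$ every L-row of $A+\setminus C_f$ is a prefix of the column order and every R-row is a suffix, so any two rows with the same label are nested ($D_0$ is impossible), and an L-row and an R-row that share a $1$-column jointly cover all columns ($D_2$ is impossible). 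For the remaining candidates $D_1$, $S_2(k)$, $S_3(k)$, observe that all of their labeled rows are colored, whereas every row of $A+$ created by splitting an LR-row is uncolored; hence any occurrence of these configurations uses only rows that already appear verbatim in $A$, contradicting the admissibility of $A$ directly. With these substitutions your lifting argument closes.
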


	Let us consider an enriched $(0,1)$-matrix $A$. From now on, for each row $f$ in $A$ that is colored, we consider its blocks colored with the same color as $f$ in $A$.

\begin{defn} \label{def:proper2coloring}
	A $2$-color assignment for the blocks of an enriched matrix $A$ is a \emph{proper $2$-coloring} if $A$ is admissible, the L-block and R-block of each LR-row of $A$ are colored with distinct colors, and $A$ contains no monochromatic gems, weak monochromatic gems or badly-colored doubly-weak gems as subconfigurations.
	
	Given a $2$-color assignment for the blocks an enriched matrix $A$, we say it is a \emph{proper $2$-coloring of $A+$} if it is a proper $2$-coloring of $A$. 
	
\end{defn}  

\begin{remark} \label{obs:admisible_espartialproper2color}
Let $A$ be an enriched matrix. If $A$ is admissible, then the given pre-coloring of the blocks is a (partial) proper $2$-coloring. This follows from the fact that every pre-colored row is either labeled with L or R, of is an empty LR-row, thus there are no monochromatic gems, monochromatic weak gems or badly-colored weak gems in $A$ for they would induce $D_1$.
\end{remark}

In Figure \ref{fig:example_B+} we give an example of the matrix $B$ with a pre-coloring that is a proper $2$-coloring, since $B$ is admissible and contains no monochromatic gems, monochromatic weak gems or badly-colored doubly-weak gems (there is no pre-colored nonempty LR-row). 

In Figure \ref{fig:example_B_extensioncoloring}, we show two distinct coloring extensions for the pre-coloring of $B$, and how each of these colorings induce a coloring for $B+$. The first one --represented by $B^{(1)}$-- is a proper $2$-coloring of $A$, whereas the second one represented by $B^{(2)}$ is not. This follows from the fact that the first LR-row and the first L-row of $B^{(2)}$ induce a monochromatic weak gem.
 

\begin{figure}
\begin{align*}
	B = \bordermatrix{ ~ &  ~ \cr
	\textbf{LR} & 1 0 0 0 0  \cr
	\textbf{LR} & 1 1 0 0 1 \cr
 	 \textbf{LR} & 1 1 1 1 1  \cr
					 & 0 1  1  0  0 \cr
	\textbf L & 1 1 1 0 0 \cr
	\textbf{LR} & 0 0 0 0 0 \cr
	\textbf R & 0 0 0 1 1  }\,
	\begin{matrix}
	\\ \\ \\ \\ \\ \textcolor{red}{\bullet} \\ \textcolor{blue}{\bullet} \\ \textcolor{blue}{\bullet} 
	\end{matrix}
	&&
	B^{(1)} = \bordermatrix{ ~ & ~ \cr
	\textbf{LR} & \textcolor{blue}{1} 0 0 0 0  \cr
	\textbf{LR} & \textcolor{blue}{1} \textcolor{blue}{1} 0 0 \textcolor{red}{1} \cr
 	 \textbf{LR} & \textcolor{red}{1} \textcolor{red}{1} \textcolor{red}{1} \textcolor{blue}{1} \textcolor{blue}{1}  \cr
		& 0 \textcolor{red}{1}  \textcolor{red}{1} 0 0 \cr
	\textbf L & \textcolor{red}{1} \textcolor{red}{1} \textcolor{red}{1} 0 0 \cr
	\textbf{LR} & 0 0 0 0 0 \cr
	\textbf R & 0 0 0 \textcolor{blue}{1} \textcolor{blue}{1}  }
	\begin{matrix}
	\\ \\ \\ \\ \\ \textcolor{red}{\bullet} \\ \textcolor{blue}{\bullet} \\ \textcolor{blue}{\bullet}
	\end{matrix}
	&&
	B^{(2)} = \bordermatrix{ ~ &  ~ \cr
	\textbf{LR} & \textcolor{red}{1} 0 0 0 0 \cr
	\textbf{LR} & \textcolor{blue}{1} \textcolor{blue}{1} 0 0 \textcolor{red}{1} \cr
 	 \textbf{LR} & \textcolor{red}{1} \textcolor{red}{1} \textcolor{red}{1} \textcolor{blue}{1} \textcolor{blue}{1}  \cr
		& 0 \textcolor{red}{1} \textcolor{red}{1} 0 0  \cr
	\textbf L & \textcolor{red}{1} \textcolor{red}{1} \textcolor{red}{1} 0 0 \cr
	\textbf{LR} & 0 0 0 0 0 \cr
	\textbf R & 0 0 0 \textcolor{blue}{1} \textcolor{blue}{1}  }\
	\begin{matrix}
	\\ \\ \\ \\ \\ \textcolor{red}{\bullet} \\ \textcolor{blue}{\bullet} \\ \textcolor{blue}{\bullet}
	\end{matrix}
\end{align*}
\begin{align*}
	B+^{(1)} =  \bordermatrix{ & \cr
	\textbf{L} & \textcolor{blue}{1} 0 0 0 0 0 0 \cr
	\textbf{L} & \textcolor{blue}{1} \textcolor{blue}{1} 0 0 0 1 0 \cr
	\textbf{R} & 0 0 0 0 \textcolor{red}{1} 1 0 \cr
	\textbf{L} & \textcolor{red}{1} \textcolor{red}{1} \textcolor{red}{1} 0 0 0 1 \cr
	\textbf{R} & 0 0 0\textcolor{blue}{1}  \textcolor{blue}{1} 0 1 \cr
					 & 0 \textcolor{red}{1} \textcolor{red}{1} 0 0 0 0 \cr
	\textbf L & \textcolor{red}{1} \textcolor{red}{1} \textcolor{red}{1}0 0 0 0  \cr
	\textbf R & 0 0 0 \textcolor{blue}{1} \textcolor{blue}{1} 0 0 }\
	\begin{matrix}
	\\ \\ \\ \\ \\ \\ \\ \textcolor{red}{\bullet} \\ \textcolor{blue}{\bullet} \\ 
	\end{matrix}
	&&
	B+^{(2)} = \bordermatrix{ & \cr
	\textbf{L} & \textcolor{red}{1} 0 0 0 0 0 0 \cr
	\textbf{L} & \textcolor{blue}{1} \textcolor{blue}{1} 0 0 0 1 0 \cr
	\textbf{R} & 0 0 0 0 \textcolor{red}{1} 1 0 \cr
	\textbf{L} & \textcolor{red}{1} \textcolor{red}{1} \textcolor{red}{1} 0 0 0 1 \cr
	\textbf{R} & 0 0 0\textcolor{blue}{1}  \textcolor{blue}{1} 0 1 \cr
					 & 0 \textcolor{red}{1} \textcolor{red}{1} 0 0 0 0 \cr
	\textbf L & \textcolor{red}{1} \textcolor{red}{1} \textcolor{red}{1}0 0 0 0  \cr
	\textbf R & 0 0 0 \textcolor{blue}{1} \textcolor{blue}{1} 0 0 }\
	\begin{matrix}
	\\ \\ \\ \\ \\ \\ \\ \textcolor{red}{\bullet} \\ \textcolor{blue}{\bullet} \\ 
	\end{matrix}
\end{align*}
\caption{Example of a proper and a non-proper $2$-coloring extension for the admissible matrix $B$ and the respective induced colorings for $B+$. The last two colums of $B+$ are $c_{r_2}$ and $c_{r_3}$.} \label{fig:example_B_extensioncoloring}
\end{figure}

The following is a straightforward consequence of Remark \ref{obs:partially2nested_hasnogems}.

\begin{lema}\label{lema:part2nested_is_2_colored}
	Let $A$ be an enriched matrix. If $A$ is partially $2$-nested, then the given pre-coloring of $A$ is a proper partial $2$-coloring. 
	Moreover, if $A$ is partially $2$-nested and admits a total $2$-coloring, then $A$ with such $2$-coloring is partially $2$-nested.
\end{lema}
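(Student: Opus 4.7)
The plan is to derive both statements directly from Definitions \ref{def:proper2coloring} and \ref{def:partially_2-nested}, leveraging Remark \ref{obs:partially2nested_hasnogems}, which already shows that a partially $2$-nested matrix cannot contain a monochromatic gem, a monochromatic weak gem, or a badly-coloured doubly-weak gem. The first assertion amounts to checking that the pre-coloring forced by the enriched matrix structure cannot create forbidden monochromatic patterns; the second is a verification that a total proper $2$-coloring preserves each clause of the partially $2$-nested definition.

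For the first assertion, I would begin by recalling that admissibility of $A$ is one of the hypotheses, and that by Definition \ref{def:enriched_matrix} the only pre-coloured rows are rows labeled L, rows labeled R, and empty LR-rows. Empty LR-rows have no $1$ entries and therefore cannot participate in any $0$-gem, $1$-gem, or $2$-gem pattern. For two pre-coloured non-LR-rows sharing a colour, admissibility excludes $D_0$ and $D_1$, so two such rows are nested when they carry the same label and disjoint when they carry different labels; in either case no monochromatic gem or monochromatic weak gem involving them is possible. The requirement that the L- and R-blocks of each LR-row receive distinct colours is vacuous since no block of a non-empty LR-row is pre-coloured. Consequently every clause of Definition \ref{def:proper2coloring} is satisfied.

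For the moreover part, I would assume that $A$ is partially $2$-nested and extends to a total proper $2$-coloring, and verify the four bullets of Definition \ref{def:partially_2-nested} for $A$ endowed with the extended colouring. The first bullet is purely structural and is unaffected by adding colours. The remaining bullets are obtained by contraposition. If two non-LR-rows coloured alike were neither disjoint nor nested, they would realise the $0$-gem pattern and yield a monochromatic gem, contradicting properness. Because every L-block starts at the first column, any overlap between a coloured L-block of an LR-row and a same-coloured non-LR-row that is not contained in the block must extend past it on the right, producing a $1$-gem whose first row is LR and whose second is non-LR, which is a monochromatic weak gem; the R-block case is symmetric. Finally, two LR-rows whose L-blocks carry the same colour but which are not nested force a $2$-gem configuration, and the first column—which lies inside both L-blocks—witnesses a badly-coloured doubly-weak gem. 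The main obstacle is purely a careful case check; the only subtlety is that L- and R-blocks are respectively left- and right-justified in any LR-ordering, so any overlap with a same-coloured block of an LR-row is automatically of the gem type required to invoke properness.
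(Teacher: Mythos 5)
Your proof is correct and matches the paper's approach: the paper dispatches this lemma as an immediate consequence of Remark \ref{obs:partially2nested_hasnogems} (together with the observation that the only pre-coloured rows are L-rows, R-rows and empty LR-rows), and your argument is exactly that remark spelled out clause by clause, under the same (necessary) reading of ``total $2$-coloring'' as ``total proper $2$-coloring''. The only tiny imprecision is in your third-bullet case: a same-coloured non-LR-row that meets a coloured L-block without containing it or being contained in it can yield a monochromatic $0$-gem rather than a $1$-gem, but that configuration is equally excluded by Definition \ref{def:proper2coloring}, so nothing is lost.
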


\begin{lema}  \label{lema:2-nested_if}
	Let $A$ be an enriched matrix. Then, $A$ is \emph{$2$-nested} if $A$ is partially $2$-nested and the given partial block bi-coloring of $A$ can be extended to a total proper $2$-coloring of $A$. 
\end{lema}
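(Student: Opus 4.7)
The plan is to produce an LR-ordering and a block coloring satisfying the nine conditions of Definition \ref{def:2-nested}. First, since $A$ is partially $2$-nested it is admissible, LR-orderable, and free of $M_0$, $M_{II}(4)$, $M_V$ and of every $S_0(k)$ with even $k$; hence Theorem \ref{teo:hay_suitable_ordering} furnishes a suitable LR-ordering $\Pi$. For the block coloring I would use the one induced by the hypothesized total proper $2$-coloring of $A$: each L-, R- or U-block inherits the color of the row it sits in, which is well defined because a proper $2$-coloring colors each row monochromatically.

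With $\Pi$ and the coloring fixed, I would then verify each of the nine properties in turn. Conditions (1), (2) and (9) follow immediately from properness of the $2$-coloring: (1) is the requirement that the L- and R-blocks of every LR-row receive opposite colors, (2) is the choice of coloring itself, and (9) is exactly the prohibition of badly-colored doubly-weak gems (two overlapping LR-rows forming a $2$-gem with common 1-entries in same-colored blocks). Condition (4) is precisely the definition of a suitable LR-ordering (Definition \ref{def:suitable_ordering}), and condition (5) reduces, after (4), to an L-row and an R-row that are not disjoint having opposite colors, which admissibility guarantees because otherwise they would induce $D_1$.

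Conditions (3), (6) and (7) follow from the partially-$2$-nested hypotheses together with the geometry of a suitable LR-ordering. For (6), any two same-colored U-rows are disjoint or nested by the second bullet of Definition \ref{def:partially_2-nested}, and in a C$1$P column order this is precisely the stated block condition. For (7), the third bullet of Definition \ref{def:partially_2-nested} already states that a colored block of an LR-row contains every same-colored non-LR row that meets it, and for an L-block of a non-LR L-row the same follows from the second bullet together with the fact that any L-row starts in column $1$. For (3), if the L-block of an LR-row $r$ were properly contained in the L-block of an L-row $r'$ and both blocks had the same color, the third bullet of Definition \ref{def:partially_2-nested} would force $r'$ to be contained in the L-block of $r$, contradicting proper containment; the dual argument handles R-blocks.

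For (8), suppose two L-blocks of non-LR rows $r_1, r_2$ have distinct colors; admissibility property \ref{item:4_def_adm} (which forbids $D_0$ and $D_4$) then says that every LR-row contains either $r_1$ or $r_2$, and both $r_1, r_2$ have a $1$ in the first column of $\Pi$, so every LR-row does as well, i.e.\ has an L-block; the dual argument works for R-blocks. The main obstacle is bookkeeping rather than depth: partially-$2$-nested and proper $2$-coloring are stated in terms of forbidden subconfigurations while $2$-nested is stated via nine positive block-level conditions, so the proof is a careful translation, and several conditions require checking separately the cases where the blocks come from LR-rows, labeled non-LR rows and unlabeled rows. No individual case is deep, but care is needed to cover every combination and to invoke the suitability of $\Pi$ at each step where an LR-row is split into L- and R-blocks.
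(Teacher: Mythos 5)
Your proposal is correct and follows essentially the same route as the paper's own proof: take a suitable LR-ordering from Theorem \ref{teo:hay_suitable_ordering}, let each block inherit the color of its row from the total proper $2$-coloring, and check the nine conditions of Definition \ref{def:2-nested} one by one using suitability, admissibility (no $D_1$, no $D_4$, etc.) and the absence of monochromatic gems, monochromatic weak gems and badly-colored doubly-weak gems. The only cosmetic difference is that for conditions (3), (6) and (7) you cite the bullets of Definition \ref{def:partially_2-nested} (which formally concern the pre-coloring) where the paper invokes the corresponding no-gem conditions built into Definition \ref{def:proper2coloring} for the \emph{total} coloring; the two are interchangeable here, and your treatment of condition (8) is in fact more explicit than the paper's.
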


\begin{proof}
Let $A$ be an enriched matrix that is partially $2$-nested and for which the given pre-coloring of the blocks can be extended to a total proper $2$-coloring of $A$. In particular, this induces a total block bi-coloring for $A$. Indeed, we want to see that a proper $2$-coloring induces a total block bi-coloring for $A$. 
Notice that the only pre-colored rows may be those labeled with L or R and those empty LR-rows.

Let us see that each of the properties that define $2$-nested hold.

\begin{enumerate}
	\item Since $A$ is an enriched matrix and the only rows that are not pre-colored are the nonempty LR-rows and those that correspond to U-blocks, then there is no ambiguity when considering the coloring of the blocks of a pre-colored row (Prop.\ 2 of $2$-nested).
	
	\item If $A$ is partially $2$-nested, then in particular is admissible, LR-orderable and contains no $M_0$, $M_{II}(4)$ or $M_V$. Thus, by Theorem \ref{teo:hay_suitable_ordering}, there is a suitable LR-ordering $\Pi$ for the columns of $A$. We consider $A$ ordered according to $\Pi$ from now on. Since $\Pi$ is suitable, then every L-block of an LR-row and an R-block of a non-LR-row are disjoint, and the same holds for every R-block of an LR-row and an L-block of a non-LR-row (Prop.\ 4 of $2$-nested). 

	\item Since $A$ is admissible, thus there are no subconfigurations as in $\mathcal{D}$. Moreover, since $A$ is partially $2$-nested, by Corollary \ref{cor:partially_2-nested_caract} there are no monochromatic gems or weak gems and no badly-colored doubly-weak gems induced by pre-colored rows. It follows from this and the fact that the LR-ordering is suitable, that Prop.\ 8 of $2$-nested holds.

	\item The pre-coloring of the blocks of $A$ can be extended to a total proper $2$-coloring of $A$. This induces a total block bi-coloring for $A$, for which we can deduce the following assertions:
	\begin{itemize}
	\item Since there is a total proper $2$-coloring of $A$, in particular the L-block and R-block of each LR-row are colored with distinct colors. (Prop.\ 1 of $2$-nested).
	
	\item Each L-block and R-block corresponding to distinct LR-rows with nonempty intersection are also colored with distinct colors since there are no badly-colored doubly-weak gems in $A$ (Prop.\ 9 of $2$-nested).
	
	\item Since $A$ is admissible, every L-block and R-block corresponding to distinct non-LR-rows are colored with different colors since there is no $D_1$ in $A$ (Prop.\ 5 of $2$-nested)
	
	\item Since there are no monochromatic weak gems in $A$, an L-block of an LR-row and an L-block of a non-LR row that contains the L-block must be colored with distinct colors. Furthermore, if any L-block and a U-block are not disjoint and are colored with the same color, then the U-block is contained in the L-block. (Prop.\ 3 and 7 of $2$-nested)
	
	\item There is no monochromatic gem in $A$, then each two U-blocks colored with the same color are either disjoint or nested. (Prop.\ 6 of $2$-nested)
	\end{itemize}
\end{enumerate}

\end{proof}

\begin{lema} \label{lema:if_suitable_noM}
Let $A$ be an enriched matrix. If $A$ admits a suitable LR-ordering, then $A$ contains no $M_0$, $M_{II}(4)$, $M_V$ or $S_0(k)$ for every even $k \geq 4$. 
\end{lema}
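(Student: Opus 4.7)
My plan is to argue the contrapositive: assuming $\Pi$ is a suitable LR-ordering of $A$, I show that $A$ cannot contain any of $M_0$, $M_{II}(4)$, $M_V$, or $S_0(k)$ (for even $k \geq 4$) as a subconfiguration. As a preliminary observation, each of the four matrices fails the C$1$P for its rows: $M_{II}(4)$ and $M_V$ are Tucker matrices, the bottom $k$ rows of $S_0(k)$ form the Tucker matrix $M_I(k)$, and the three pairwise-overlapping rows of $M_0$ admit no consecutive-ones column permutation by direct inspection. Consequently, in any copy of such a matrix inside $A$, at least one row must be an LR-row, since $\Pi$ forces all non-LR-rows to be C$1$P-ordered.

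The body of the argument would be a case analysis on the label distribution (LR versus non-LR) among the rows of the copy. For each labeling I locate the L-blocks and R-blocks of the LR-rows involved in the window spanned by $\Pi$ and exhibit one of the three possible violations of Definition~\ref{def:suitable_ordering}: an L-block of a two-block LR-row meeting some R-block, an R-block of such a row meeting some L-block, or an LR-row whose L-block and R-block both meet a single U-block. For $M_0$, the row whose $1$'s must split into two blocks in $\Pi$ has to be an LR-row, and the column common to all three rows then yields exactly a type-(iii) violation, whatever the labels of the other two rows. For $M_{II}(4)$ and $M_V$ the matrix is small enough that one can enumerate the finitely many label/ordering combinations and verify that each yields an unavoidable block intersection. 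For $S_0(k)$ with $k$ even, a staircase argument along the rows of the embedded $M_I(k)$ shows that the top all-ones row cannot be consistently split as an LR-row whose L- and R-blocks avoid each of the labeled blocks of the staircase, and the parity of $k$ is what obstructs the zig-zag cancellation.

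The technically hardest step I expect is the parity analysis for $S_0(k)$ at arbitrary even $k$. My approach there is to view the compatibility constraints between consecutive staircase rows as a $2$-coloring problem on a length-$k$ cycle induced by the all-ones top row, whose satisfiability fails exactly when $k$ is even; this mirrors, in reverse, the staircase-chasing argument used in the proof of Theorem~\ref{teo:hay_suitable_ordering}, where $S_0(k)$ appears as the final obstruction to rearranging the columns into a suitable ordering. The remaining cases for $M_0$, $M_{II}(4)$, and $M_V$ are finite checks whose structure parallels Cases~1 and~2 of that same proof, simply run in the opposite direction: instead of producing a forbidden matrix from a failed attempt to rearrange the columns, I read the failure off from the presence of the matrix itself.
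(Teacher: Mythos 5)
Your overall strategy coincides with the paper's: since a suitable LR-ordering is in particular an LR-ordering, every non-LR row is C$1$P under $\Pi$, so any copy of one of these matrices must use at least one LR-row (all four fail C$1$P, as you correctly note), and one then runs a finite case analysis on which rows are LR-rows and exhibits a violation of one of the three clauses of Definition~\ref{def:suitable_ordering}. For $M_0$, $M_{II}(4)$ and $M_V$ your plan is exactly what the paper does, and the checks you describe go through.

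The place where your proposal would go wrong is $S_0(k)$. First, the argument needed there is not the hardest step but the easiest: the bottom $k$ rows form $M_I(k)$, so one of them is forced to be an LR-row whose $1$'s split into an L-block and an R-block under $\Pi$; the all-ones top row then meets both of those blocks no matter how it is labeled (as a U-row it gives a type-(iii) violation, as an L-, R- or LR-row it gives a type-(i) or type-(ii) violation), and that is the whole proof. Second, the $2$-coloring machinery you invoke has no purchase here: Definition~\ref{def:suitable_ordering} is a purely combinatorial condition on block intersections and involves no color assignment, so there is no satisfiability problem to set up, and your parity claim is in any case reversed (a length-$k$ cycle fails to be $2$-colorable exactly when $k$ is odd, whereas the lemma excludes $S_0(k)$ for even $k$). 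The restriction to even $k$ in the statement is not something your argument needs to ``explain'': the all-ones-row argument works for every $k$, and the lemma simply asserts the weaker statement that is used elsewhere. The parity/coloring analysis you are pattern-matching against belongs to the $2$-nestedness results (Lemma~\ref{lema:2-color-extension} and Theorem~\ref{teo:2-nested_caract_bymatrices}), not to this lemma. If you replace the $S_0(k)$ paragraph with the direct block-intersection argument above, the proof is complete and agrees with the paper's.
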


\begin{proof}
The result follows trivially if $A$ contains no LR-rows, since $A$ admits an LR-ordering, thus if we consider $A$ without its LR-rows, that submatrix has the C$1$P and hence it contains no Tucker matrices. Toward a contradiction, suppose that $A$ contains either $M_0$, $M_{II}(4)$, $M_V$ or $S_0(k)$ for some even $k \geq 4$.
Since there is no $M_I(k)$ for every $k \geq 3$, then in particular there is no $M_0$ or $S_0(k)$ where at most one of the rows is an LR-row. Moreover, it is easy to see that, if we reorder the columns of $M_0$, then there is no possible LR-ordering in which every L-block and every R-block are disjoint.
Similarly, consider $S_0(4)$, whose first row has a $1$ in every column. We may assume that the last row is an LR-row for any other reordering of the columns yields an analogous situation with one of the rows. However, whether the first row is unlabeled or not, the first and the last row prevent a suitable LR-ordering. The reasoning is analogous for any even $k > 4$.

Suppose that $A$ contains $M_V$, and let $f_1$,$ f_2$, $f_3$ and $f_4$ be the rows of $M_V$ depicted as follows:
\[ M_V= \bordermatrix{ &  \cr
f_1 & 1 1 0 0 0 \cr
f_2 & 0 0 1 1 0 \cr
f_3 & 1 1 1 1 0 \cr
f_4 & 1 0 0 1 1  }
\]
If the first row is an LR-row, then either $f_3$ or $f_4$ is an LR-row, for if not we find $M_I(3)$ in $A^*$, which is not possible since there is an LR-ordering in $A$. The same holds if the second row is an LR-row. 
If $f_3$ is an LR-row, then $f_4$ is an LR-row, for if not $f_4$ must have a consecutive string of $1$'s, thus, if $f_4$ is an unlabeled row, then it intersect both blocks of $f_3$, and if $f_4$ is an R-row, then its R-block intersects the L-block of $f_3$. 
However, if we move the columns so that the L-block of $f_3$ does not intersect the R-block of $f_4$, then we either cannot split $f_1$ into two blocks such that one starts one the left and the other ends on the right, of we cannot maintain a consecutive string of $1$'s in $f_2$. It follows analogously if we assume that $f_4$ is an LR-row, thus $f_1$ is not an LR-row. By symmetry, we assume that $f_2$ is also a non-LR-row, and thus the proof is analogous if only $f_3$ and $f_4$ may be LR-rows. 

Suppose $A$ contains $M_{II}(4)$. Let us denote $f_1$, $f_2$, $f_3$ and $f_4$ to the rows of $M_{II}(4)$ depicted as follows:
\[ M_{II}(4) = \bordermatrix{ &  \cr
f_1 & 0 1 1 1 \cr
f_2 & 1 1 0 0 \cr
f_3 & 0 1 1 0 \cr
f_4 & 1 1 0 1 }
\]

If $f_2$ is an LR-row, then necessarily $f_3$ or $f_4$ are LR-rows, for if not we find $M_I(3)$ in $A^*$.
If only $f_2$ and $f_3$ are LR-rows, then we find $M_{II}(4)$ in $A^*$.
If instead only $f_2$ and $f_4$ are LR-rows, then --as it is-- whether $f_1$ is an R-row or an unlabeled row, the block of $f_1$ intersects the L-block and the R-block of $f_4$ (and also the L-block of $f_2$). The only possibility is to move the second column all the way to the right and split $f_2$ into two blocks and give the R-block of $f_4$ length $2$. However in this case, it is not possible to move another column and obtain an ordering that keeps all the $1$'s consecutive for $f_3$ and $f_1$ not intersecting both blocks of $f_4$ simultaneously. Thus, $f_1$ is also an LR-row. However, for any ordering of the columns, either it is not possible to simultaneously split the string of $1$'s in $f_1$ and keep the L-block of $f_2$ starting on the left, or it is not possible to simultaneously maintain the string of $1$'s in $f_3$ consecutive and the L-block of $f_1$ disjoint with the R-block of $f_4$. 
It follows analogously if both $f_3$ and $f_4$ are LR-rows. 
Hence, $f_2$ is a non-LR-row, and by symmetry, we may assume that $f_3$ is also a non-LR-row. Suppose now that $f_1$ is an LR-row. If $f_4$ is not an LR-row, then there is no possible way to reorder the columns and having a consecutive string of $1$'s for the rows $f_2$, $f_3$ and $f_4$ simultaneously, unless we move the fourth column all the way to the left. However in that case, either $f_4$ is an L-row and its L-block intersects the R-block of $f_1$ of it is an unlabeled row that intersects both blocks of $f_1$. Moreover, the same holds if $f_4$ is an LR-row, with the difference that in this case the R-block of $f_4$ intersects the L-block of $f_1$ or the string of $1$'s in $f_2$ and $f_3$ is not consecutive. 

\end{proof}

\begin{lema} \label{lema:2-nested_onlyif}
	Let $A$ be an enriched matrix. If $A$ is $2$-nested, then $A$ is partially $2$-nested and the total block bi-coloring induces a proper total $2$-coloring of $A$.
\end{lema}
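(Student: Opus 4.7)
The plan is to unpack the hypothesis that $A$ is $2$-nested---namely that $A$ admits an LR-ordering $\Pi$ together with a block bi-coloring satisfying properties~\ref{item:2nested1}--\ref{item:2nested9}---and to verify each clause in the definitions of partially $2$-nested (Definition~\ref{def:partially_2-nested}) and proper $2$-coloring (Definition~\ref{def:proper2coloring}). LR-orderability is immediate from $\Pi$. The remaining work splits naturally into three parts: verifying admissibility, showing that $\Pi$ is suitable, and translating the $2$-nested coloring properties into the required pre-coloring and proper $2$-coloring conditions.

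For admissibility, by Definition~\ref{teo:caract_admissible} I would exclude every subconfiguration in $\mathcal{D} \cup \mathcal{S} \cup \mathcal{P}$. The simpler cases reduce to the LR-ordering constraint alone: under $\Pi$ every L-row (resp.\ R-row) starts at column~$1$ (resp.\ ends at the last column), which rules out $D_0$, $D_2$, $D_3$, and many patterns in $\mathcal{S}, \mathcal{P}$ that require non-nested same-side rows or an L/R pair overlapping in a forbidden way. The coloring-sensitive matrices in $\mathcal{D}$ are handled by matching each to the relevant block bi-coloring property; for instance $D_1$ violates property~\ref{item:2nested5}, $D_4$ violates property~\ref{item:2nested8}, $D_5$ violates property~\ref{item:2nested4}, $D_6$ violates the combination of properties~\ref{item:2nested1} and~\ref{item:2nested3}, and $D_7$--$D_{13}$ are excluded either by property~\ref{item:2nested9} or by the fact that under $\Pi$ the LR-rows split into at most two classes of pairwise nested rows (cf.\ Lemma~\ref{lema:A_LR_2nested}). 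The larger families $\mathcal{S}$ and $\mathcal{P}$ are handled by a parallel but lengthier case analysis in which each parametrized pattern forces a coloring contradiction with one of properties~\ref{item:2nested1}, \ref{item:2nested5}, \ref{item:2nested6}, or~\ref{item:2nested7}.

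Next I would show that $\Pi$ is in fact \emph{suitable}. Property~\ref{item:2nested4} gives the first two conditions of Definition~\ref{def:suitable_ordering} directly. For the third, suppose a U-block $u$ meets both the L-block $a$ and the R-block $b$ of some LR-row $f$; property~\ref{item:2nested1} forces $a$ and $b$ to have distinct colors, while property~\ref{item:2nested7} applied separately with $a$ and with $b$ forces $u$ to agree in color with neither, impossible since only two colors are available. Hence $\Pi$ is suitable, and Lemma~\ref{lema:if_suitable_noM} then rules out $M_0$, $M_{II}(4)$, $M_V$, and $S_0(k)$ for every even $k\geq 4$, completing the first bullet in Definition~\ref{def:partially_2-nested}.

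The remaining bullets of Definition~\ref{def:partially_2-nested}, together with the three requirements of a proper $2$-coloring, translate almost directly from properties~\ref{item:2nested1}--\ref{item:2nested9}: pairwise nestedness or disjointness of same-colored non-LR-rows follows from admissibility combined with property~\ref{item:2nested6}; containment of a same-colored non-LR-row in a colored L-block of an LR-row is the combination of properties~\ref{item:2nested3} and~\ref{item:2nested7}; the nestedness of two LR-rows whose L-blocks (or R-blocks) are colored the same follows from property~\ref{item:2nested9} together with property~\ref{item:2nested1}; the distinct-color requirement on the L- and R-blocks of each LR-row is precisely property~\ref{item:2nested1}; and the absence of monochromatic gems, monochromatic weak gems, and badly-colored doubly-weak gems restates properties~\ref{item:2nested6}, \ref{item:2nested7} (with \ref{item:2nested3} for the LR/L variant), and the conjunction of~\ref{item:2nested1} with~\ref{item:2nested9}, respectively. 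I expect the main obstacle to be the bookkeeping required by the admissibility step, where each of the many (and infinite) forbidden patterns in $\mathcal{S}$ and $\mathcal{P}$ must be individually matched to the $2$-nested property that excludes it; once admissibility is established, the remaining steps are essentially direct translations of the $2$-nested axioms into the language of Definitions~\ref{def:partially_2-nested} and~\ref{def:proper2coloring}.
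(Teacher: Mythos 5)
Your proposal is correct and follows essentially the same route as the paper's proof: extract the suitable LR-ordering from properties \ref{item:2nested4} and \ref{item:2nested7}, invoke Lemma \ref{lema:if_suitable_noM} for $M_0$, $M_{II}(4)$, $M_V$ and $S_0(k)$, match each forbidden subconfiguration of admissibility to the $2$-nested property it would violate, and translate the remaining coloring axioms into the gem conditions of Definitions \ref{def:partially_2-nested} and \ref{def:proper2coloring}. The only cosmetic difference is that the paper excludes $D_4$ and $D_{10}$ by combining property \ref{item:2nested8} (resp.\ \ref{item:2nested9}) with property \ref{item:2nested3}, and derives the exclusion of $\mathcal{S}$ and $\mathcal{P}$ from the prior absence of monochromatic (weak) gems rather than by a fresh case analysis, but these are the same arguments in a different order.
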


\begin{proof}

If $A$ is $2$-nested, then in particular there is an LR-ordering $\Pi$ for the columns. Moreover, by properties 4 and 7, such an ordering is suitable.

Suppose first there is a monochromatic gem in $A$. Such a gem is not induced by two unlabeled rows since in that case property 6 of the definition of $2$-nested matrix would not hold. Hence, such a gem is induced by at least one labeled row. Moreover, if one is a labeled row and the other is an unlabeled row, then property 7 would not hold. Thus, both rows are labeled. By property 5, if the gem is induced by two non-disjoint L-block and R-block, then it is not monochromatic, disregarding on whether they correspond to LR-rows or non-LR-rows. Hence, exactly one of the rows is an LR-row. However, by property 4, an L-block of an LR-row and an R-block of a non-LR-row are disjoint, thus they cannot induce a gem. 

Suppose there is a monochromatic weak gem in $A$, thus at least one of its rows is a labeled row. It is not possible that exactly one of its rows is a labeled row and the other is an unlabeled row, since property 7 holds. Moreover, these rows do not correspond to rows labeled with L and R, respectively, for properties 4 and 5 hold. Furthermore, both rows of the weak gem are LR-rows, since if exactly one is an LR-row, then properties 3, 4 and 7 hold and thus it is not possible to have a weak gem. However, in that case, property 5 guarantees that this is also not possible.

Finally, there is no badly-colored doubly-weak gem since properties 4, 5 and 9 hold. 

\vspace{1mm}
Now, let us see that $A$ is admissible. Since there is an LR-ordering of the columns, there are no $D_0$, $D_2$, $D_3$, $D_6$, $D_7$, $D_8$ or $D_{11}$ in $A$. Moreover, by property 5, there is no $D_1$.  	
As we have previously seen, there are no monochromatic gems or monochromatic weak gems. Hence, it is easy to see that if there is a total block bi-coloring, then $A$ contains none of the matrices in $\mathcal{S}$ or $\mathcal{P}$ as a subconfiguration.
Suppose there is $D_4$. By property 8, if there are two L-blocks of non-LR-rows colored with distinct colors, then every LR-row has a nonempty L-block, and in this case such an L-block is contained in both rows labeled with L. However, by property 3, the L-block of the LR-row is properly contained in the L-blocks of the non-LR-rows, thus it must be colored with a distinct color than the color assigned to each L-block of a non-LR-row, and this leads to a contradiction.
By property 4, there is no $D_5$. 
Let us suppose there is $D_9$ given by the rows $f_1$, $f_2$ and $f_3$, were $f_1$ is labeled with L and $f_2$ and $f_3$ are LR-rows. Suppose that $f_1$ is colored with red. Since the L-block of $f_2$ is contained in $f_1$, by property 3, then the L-block of $f_2$ is colored with blue. The same holds for the L-block of $f_3$. However, $f_2$ and $f_3$ are not nested, thus by property 9 the L-blocks of $f_2$ and $f_3$ are colored with distinct colors, which results in a contradiction.

Let us suppose there is $D_{10}$ given by the rows $f_1$, $f_2$, $f_3$ and $f_4$, were $f_1$ is labeled with L and colored with red, $f_2$ is labeled with R and colored with blue, and $f_3$ and $f_4$ are LR-rows. 
Since the L-block of $f_3$ is properly contained in $f_1$, then by property 3, it is colored with blue. By property 1, the R-block of $f_3$ is colored with red. Using a similar argument, we assert that the R-block of $f_4$ is colored with red and the L-block of $f_4$ is colored with blue. However, $f_3$ and $f_4$ are non-disjoint and non-nested, thus the L-block of $f_3$ and the R-block of $f_4$ are colored with distinct colors, which results in a contradiction.

By Lemma \ref{lema:if_suitable_noM}, since there is a suitable LR-ordering, then $A$ contains  no $M_0$, $M_{II}(4)$, $M_V$ or $S_0(k)$ for every even $k \geq 4$. 

Finally, by property 9 and the fact that there is an LR-ordering, there are no $D_{12}$ nor $D_{13}$.

Therefore $A$ is partially $2$-nested.

\vspace{1mm}
Finally, we will see that the total block bi-coloring for $A$ induces a proper total $2$-coloring of $A$. 
Since every property of $2$-nested holds, then it is straightforward that there are no monochromatic gems or monochromatic weak gems or badly-colored weak gems in $A$. For more details on this, see Remark \ref{obs:partially2nested_hasnogems} and Lemma \ref{lema:2-nested_if} since the same arguments are detailed there. 
Moreover, since property 1 of $2$-nested holds, the L-block and R-block of the same LR-row are colored with distinct colors. Therefore, it follows that a total block bi-coloring of $A$ induces a proper total $2$-coloring of $A$. 

\end{proof}

\vspace{2mm}
The following corollary is a straightforward consequence of the previous.

\begin{cor} \label{lema:B_ext_2-nested} 
	Let $A$ be an enriched matrix. If $A$ is partially $2$-nested and $B$ is obtained from $A$ by extending its partial coloring to a total proper $2$-coloring, then $B$ is $2$-nested if and only if for each LR-row its L-block and R-block are colored with distinct colors and $B$ contains no monochromatic gems, monochromatic weak gems or badly-colored doubly-weak gems as subconfigurations.
	
	
\end{cor}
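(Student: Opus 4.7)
The plan is to derive this corollary directly from Lemmas \ref{lema:2-nested_if} and \ref{lema:2-nested_onlyif}, since together they essentially give the forward and backward implications once we unpack Definition \ref{def:proper2coloring} of a proper $2$-coloring. Concretely, a proper $2$-coloring of an enriched matrix requires three things: admissibility, that the L-block and R-block of every LR-row receive distinct colors, and the absence of monochromatic gems, monochromatic weak gems, and badly-colored doubly-weak gems as subconfigurations. Partially $2$-nested matrices (by Remark \ref{obs:partially2nested_hasnogems} and Corollary \ref{cor:partially_2-nested_caract}) already satisfy admissibility and, for the pre-coloring, exclude those bad configurations, so the corollary really just tracks what changes when we extend the coloring to all blocks.

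For the forward implication, I would assume $B$ is $2$-nested. Then Lemma \ref{lema:2-nested_onlyif} yields that $B$ is partially $2$-nested (which is automatic from our hypothesis on $A$ anyway, since the underlying enriched matrix is unchanged) and, more importantly, that the total block bi-coloring given by the $2$-nested condition induces a proper total $2$-coloring of $B$. Reading off the definition of proper $2$-coloring, this is exactly the statement that the L-block and R-block of every LR-row have distinct colors and that $B$ contains no monochromatic gem, no monochromatic weak gem, and no badly-colored doubly-weak gem.

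For the converse, I would start from a partially $2$-nested $A$ and a total $2$-coloring extension $B$ satisfying the two gem/block conditions. Since $A$ is partially $2$-nested, $A$ is admissible by Corollary \ref{cor:partially_2-nested_caract}, and admissibility is a property of the underlying enriched matrix so it transfers to $B$. Combining admissibility with the hypothesis that the L- and R-blocks of each LR-row are colored differently and that none of the three types of bad gems appear, the coloring of $B$ is a proper $2$-coloring in the sense of Definition \ref{def:proper2coloring}. Since $B$ is also partially $2$-nested (it has the same labels, the same pre-colored rows and the same underlying matrix as $A$, and all the forbidden-subconfiguration conditions of partially $2$-nested are preserved), Lemma \ref{lema:2-nested_if} applies and yields that $B$ is $2$-nested.

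There is no real obstacle here; the only point requiring a bit of care is to verify that extending a partial coloring to a total one does not affect the properties that make $A$ partially $2$-nested (admissibility, LR-orderability, the absence of $M_0$, $M_{II}(4)$, $M_V$, $S_0(k)$, and the nestedness/containment conditions on pre-colored rows), because those all depend only on the underlying enriched matrix and the originally pre-colored rows, both of which are shared by $A$ and $B$. Once that bookkeeping is made explicit, the corollary is just the conjunction of the two lemmas, so the proof reduces to citing them.
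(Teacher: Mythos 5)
Your derivation is correct and matches the paper's intent exactly: the paper presents this corollary as "a straightforward consequence of the previous," i.e.\ of Lemmas \ref{lema:2-nested_if} and \ref{lema:2-nested_onlyif}, and gives no further argument. Your unpacking of Definition \ref{def:proper2coloring} and the observation that admissibility and the partially $2$-nested conditions depend only on the underlying enriched matrix and its pre-colored rows (hence transfer from $A$ to $B$) is precisely the bookkeeping the paper leaves implicit.
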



We are now ready to give the proof for the main result of this chapter.

\begin{teo}
	Let $A$ be an enriched matrix. Then, $A$ is $2$-nested if and only if $A$ contains none of the following listed matrices or their dual matrices as subconfigurations: 
	\begin{itemize}		
	\item $M_0$, $M_{II}(4)$, $M_V$ or $S_0(k)$ for every even $k$ (See Figure \ref{fig:forb_M_chiquitas})
	\item Every enriched matrix in the family $\mathcal{D}$ (See Figure \ref{fig:forb_D}) 
	\item Every enriched matrix in the family $\mathcal{F}$ (See Figure \ref{fig:forb_F}) 
	\item Every enriched matrix in the family $\mathcal{S}$ (See Figure \ref{fig:forb_Si})
	\item Every enriched matrix in the family $\mathcal{P}$ (See Figure \ref{fig:forb_P})
	\item Monochromatic gems, monochromatic weak gems, badly-colored doubly-weak gems 
	\end{itemize}
and $A^*$ contains no Tucker matrices and none of the enriched matrices in $\mathcal{M}$ or their dual matrices as subconfigurations (See Figure \ref{fig:forb_LR-orderable}).

\end{teo}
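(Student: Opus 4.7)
The plan is to deduce both directions from the tools already assembled in Sections~\ref{section:admissibility}--\ref{section:2nestedmatrices}, and then to isolate the single remaining step, namely showing that the obstructions to extending a partial proper $2$-coloring are exactly the matrices in $\mathcal{F}$. For the only-if direction, Lemma~\ref{lema:2-nested_onlyif} shows that a $2$-nested matrix $A$ is partially $2$-nested and that its total block bi-coloring is a proper total $2$-coloring. Corollary~\ref{cor:partially_2-nested_caract} then forbids $M_0$, $M_{II}(4)$, $M_V$, monochromatic gems, monochromatic weak gems, badly-colored doubly-weak gems and every Tucker matrix or matrix in $\mathcal{M}$ inside $A^*$; admissibility forbids the matrices in $\mathcal{D}\cup\mathcal{S}\cup\mathcal{P}$, and Lemma~\ref{lema:if_suitable_noM} forbids $S_0(k)$ for even $k$. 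It remains to check that each matrix $F\in\mathcal{F}$ fails to admit any total block bi-coloring, which is a direct verification case by case using properties~(1), (3), (6), (7) and~(9) of Definition~\ref{def:2-nested}.

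For the if direction, suppose $A$ contains none of the listed forbidden subconfigurations and that $A^*$ contains no Tucker matrix and no matrix in $\mathcal{M}$. By definition $A$ is admissible, by Theorem~\ref{teo:LR-orderable_caract_bymatrices} $A$ is LR-orderable, and by Theorem~\ref{teo:hay_suitable_ordering} $A$ admits a suitable LR-ordering $\Pi$. Corollary~\ref{cor:partially_2-nested_caract} then yields that $A$ is partially $2$-nested, and Lemma~\ref{lema:part2nested_is_2_colored} shows that the given pre-coloring is a proper partial $2$-coloring. Thanks to Lemma~\ref{lema:2-nested_if} (together with Corollary~\ref{lema:B_ext_2-nested}), the whole theorem reduces to proving that the given partial proper $2$-coloring of $A$ can be extended to a total proper $2$-coloring.

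To produce the extension I would introduce, from the matrix $A+$ of Definition~\ref{def:A+} associated with the suitable ordering~$\Pi$, an auxiliary \emph{conflict graph} $\Gamma(A)$ whose vertices are the blocks of $A$ (equivalently, the labeled rows of $A+$ and the U-blocks of $A$), and whose edges join two blocks precisely when properties (1), (3), (5), (6), (7) and~(9) of Definition~\ref{def:2-nested} force them to receive distinct colors (i.e.\ two blocks that would form a monochromatic gem, a monochromatic weak gem, or a badly-colored doubly-weak gem, or the two blocks of a split LR-row). The pre-colored blocks inherit their colors from the hypotheses. An extension exists iff this partial proper $2$-coloring of $\Gamma(A)$ extends to a total proper $2$-coloring, and Lemma~\ref{lema:2-color-extension} tells us precisely the obstructions: an uncolored odd cycle, an odd cycle with one colored vertex, a triangle with a single uncolored vertex, or an induced path whose only colored vertices are its endpoints with a parity-violating colouring.

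The main obstacle, and the technical heart of the argument, will be to translate each of these five graph-theoretic obstructions in $\Gamma(A)$ back into a concrete forbidden subconfiguration of $A$. The plan is to track the block structure along such a cycle or path: each edge of $\Gamma(A)$ corresponds either to two overlapping blocks of the same type (yielding a staircase of 1's as in $F_0$ and the matrices $F_1(k),F_2(k)$) or to a containment/weak-gem configuration (yielding the additional labeled/LR-row patterns visible in $F'_0,F''_0,F'_1(k),F'_2(k)$). A careful case analysis on the labels of the blocks traversed, using admissibility, the suitability of $\Pi$, and the fact that LR-rows split into two totally ordered chains (Remark~\ref{rem:A_LR_2nested}), should show that a minimal obstruction of odd parity necessarily exhibits, as a subconfiguration, one of the matrices in $\mathcal{F}$ or its dual. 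Since these are forbidden by hypothesis, every such minimal cycle or path is absent from $\Gamma(A)$, the extension exists by Lemma~\ref{lema:2-color-extension}, and Lemma~\ref{lema:2-nested_if} completes the proof.
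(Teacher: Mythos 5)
Your overall architecture coincides with the paper's: the only-if direction is handled exactly as you describe, and your conflict graph $\Gamma(A)$ is precisely the paper's auxiliary graph $H(A+)$ built on the blocks of $A$ with respect to a suitable LR-ordering, with the reduction to Lemma~\ref{lema:2-color-extension} playing the same role. So the skeleton is sound and matches the source.

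The genuine gap is in the last step, which you defer to ``a careful case analysis'' and, more importantly, mis-state. It is not true that a minimal coloring obstruction in $\Gamma(A)$ ``necessarily exhibits one of the matrices in $\mathcal{F}$ or its dual.'' In the paper's analysis of the five obstruction types, most cases terminate by exhibiting a matrix of $\mathcal{D}$, $\mathcal{S}$ or $\mathcal{P}$ (contradicting the admissibility already in hand), or one of $M_0$, $M_{II}(4)$, $M_V$, $S_0(k)$; the $\mathcal{F}$-matrices arise only in some of the uncolored-cycle and weak-gem subcases. For instance, an even induced path whose only colored vertices are its same-colored labeled endpoints, with all interior vertices unlabeled, produces $S_2(k)$ or $S_3(k)$, not an $\mathcal{F}$-matrix; paths threading through LR-blocks produce $P_0$, $P_1$, $P_2$ or $S_4$--$S_8$. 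If you search only for $\mathcal{F}$-configurations you will get stuck on these cases. Relatedly, the treatment of LR-rows is not a routine bookkeeping matter: one must first establish structural claims about how LR-vertices can sit inside an obstruction path or cycle (at most three consecutive LR-vertices per letter, no isolated LR-vertex in a path, at most one maximal LR-subpath, etc.), and these claims are what make the case analysis finite. Without carrying out that analysis --- and without correcting the target of the reduction to ``some forbidden configuration from the full list, or a contradiction with admissibility'' --- the converse direction remains unproved.
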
 

The proof is organized as follows. The if case follows immediately using Lemma \ref{lema:2-nested_onlyif} and the characterizations of admissibility, LR-orderable and partially $2$-nested given in the previous sections. 
For the only if case, we have two possible cases: (1) either there are no labeled rows in $A$, or (2) there is at least one labeled row in $A$ (either L, R or LR).
	In each case, we define an auxiliary graph $H(A)$ that is partially $2$-colored according to the pre-coloring of the blocks of $A$. Toward a contradiction, we suppose that $H(A)$ is not bipartite. Using the characterization given in Lemma \ref{lema:2-color-extension}, we know there is one of the $5$ possible kinds of paths or cycles, we analyse each case and reach a contradiction.
	A complete proof of case (1) has been published in \cite{PDGS19}.

\begin{proof}
	Suppose $A$ is $2$-nested. In particular, $A$ is partially $2$-nested with the given pre-coloring and the block bi-coloring induces a total proper $2$-coloring of $A$. 
	Thus, by Corollary \ref{cor:partially_2-nested_caract}, $A$ is admissible and contains no $M_0$, $M_{II}(4)$, $M_V$, $S_0(k)$ for every even $k \geq 4$, monochromatic gems, monochromatic weak gems or badly-colored doubly-weak gems as subconfigurations, and $A^*_\tagg$ contains no Tucker matrices, $M_4'$, $M_4''$, $M_5'$, $M_5''$, $M'_2(k)$, $M''_2(k)$, $M_3'(k)$, $M_3''(k)$, $M_3'''(k)$, for any $k \geq 4$ as subconfigurations. 
	In particular, since $A$ is admissible, there is no $D_{13}$ induced by any three LR-rows.
	
	Moreover, notice that every pair of consecutive rows of any of the matrices $F_0$, $F_1(k)$, and $F_2(k)$ for all odd $k \geq 5$ induces a gem, and there is an odd number of rows in each matrix. Thus, if one of these matrices is a submatrix of $A_\tagg$, then there is no proper $2$-coloring of the blocks. Therefore, $A$ contains no $F_0$, $F_1(k)$, and $F_2(k)$ for any odd $k \geq 5$ as submatrices. A similar argument holds for $F'_0$, $F_1'(k)$, $F_2'(k)$, changing 'gem' for 'weak gem' whenever one of the two rows considered is a labeled row. 
	
	\vspace{1mm}	
	Conversely, suppose $A$ is not $2$-nested. 
	Henceforth, we assume that $A$ is admissible. 
	
	If $A$ is not partially $2$-nested, then either $A$ contains $M_0$, $M_{II}(4)$, $M_V$, $S_0(k)$ for some even $k \geq 4$, or there is a submatrix $M$ in $A^*_\tagg$ such that $M$ represents the same configuration as one of the forbidden submatrices for partially $2$-nested stated above, and thus $M$ is a subconfiguration of $A^*_\tagg$.
	
	Henceforth, we assume that $A$ is partially $2$-nested. 
.
	If $A$ is partially $2$-nested but is not $2$-nested, then the pre-coloring of the rows of $A$ (which is a proper partial $2$-coloring of $A$ since $A$ is admissible) cannot be extended to a total proper $2$-coloring of $A$. 
	
\begin{mycases}
\case \label{case:teo2nested_case1} \textit{There are no labeled rows in $A$.}

We define the auxiliary graph $H(A)=(V,E)$ where the vertex set $V= \{ w_1, \ldots, w_n \}$ has one vertex for each row in $A$, and two vertices $w_i$ and $w_k$ in $V$ are adjacent if and only if the rows $a_{i.}$ and $a_{k.}$ are neither disjoint nor nested. By abuse of language, $w_i$ will refer to both the vertex $w_i$ in $H(A)$ and the row $a_{i.}$ of $A$. In particular, the definitions given in the introduction apply to the vertices in $H(A)$; i.e., we say two vertices $w_i$ and $w_k$ in $H(A)$ are \emph{nested} (resp.\ \emph{disjoint}) if the corresponding rows $a_{i.}$ and $a_{k.}$ are nested (resp.\ disjoint). And two vertices $w_i$ and $w_k$ in $H(A)$ \emph{start} (resp.\ \emph{end}) \emph{in the same column} if the corresponding rows $a_{i.}$ and $a_{k.}$ start (resp.\ end) in the same column. 
It follows from the definition of $2$-nested matrices that $A$ is a $2$-nested matrix if and only if there is a bicoloring of the auxiliary graph $H(A)$ or, equivalently, if $H(A)$ is bipartite (i.e., $H(A)$ does not have contain cycles of odd length), since there are no labeled rows in $A$,and thus there are no pre-colored vertices in $H$.

	Let $\Pi$ be a linear ordering of the columns such that the matrix $A$ does not contain any $F_0$, $F_1(k)$ and $F_2(k)$ for every odd $k\geq 5$ or Tucker matrices as subconfigurations.
	Due to Tucker's Theorem, since there are no Tucker submatrices in $A$, the matrix $A$ has the C$1$P. 
	
	Toward a contradiction, suppose that the auxiliary graph $H(A)$ is not bipartite. Hence there is an induced odd cycle $C$ in $H(A)$.

	Suppose first that $H(A)$ has an induced odd cycle $C = w_1, w_2, w_3, w_1 $ of length 3, and suppose without loss of generality that the first rows of $A$ are those corresponding to the cycle $C$. 
	Since $w_1$ and $w_2$ are adjacent, both begin and end in different columns. The same holds for $w_2$ and $w_3$, and $w_1$ and $w_3$. We assume without loss of generality that the vertices start in the order of the cycle, in other words, that $l_1 < l_2 < l_3$. 
	
	Since $w_1$ starts first, it is clear that $a_{2 l_1} = a_{3 l_1} = 0$, thus the column $a_{. l_1}$ of $A$ is the same as the first column of the matrix $F_0$.

	Since $A$ has the C$1$P and $w_1$ and $w_2$ are adjacent, then $a_{1 l_2} = 1$. 
	As stated before, $w_2$ starts before $w_3$ and thus $a_{3 l_2} = 0$. Hence, column $a_{. l_2}$ is equal to the second column of $F_0$.
	
	The third column of $F_0$ is $a_{. l_3}$, for $w_3$ is adjacent to $w_1$ and $w_2$, hence it is straight\-forward that $a_{1 l_3} = a_{2 l_3} = a_{3 l_3} = 1$.
	
	To find the next column of $F_0$, let us look at column $a_{. (r_1+1)}$. Notice that $r_1 + 1 > l_3$. Since $w_1$ is adjacent to $w_2$ and $w_3$, and $w_2$ and $w_3$ both start after $w_1$, then necessarily $a_{2 (r_1+1)} = a_{3 (r_1+1)} = 1$, and thus $a_{. (r_1+1)}$ is equal to the fourth column of $F_0$.
	
	Finally, we look at the column $a_{. (r_2 +1)}$. Notice that $r_2 +1 > r_1 + 1$. 
	Since $A$ has the C$1$P, $a_{1 (r_2 +1)} = 0$ and $r_2 +1 > r_1+1$, then $a_{1 (r_2+1)} =0$ and $a_{3 (r_2+1)} = 1$, which is equal to last column of $F_0$. Therefore we reached a contradiction that came from assuming that there is a cycle of length 3 in $H(A)$.
	
	Suppose now that $H(A)$ has an induced odd cycle $C = w_1, \ldots, w_k, w_1$ of length $k \geq 5$. We assume without loss of generality that the first $k$ rows of $A$ are those in $C$ and that $A$ is ordered according to the C$1$P.
	
	\begin{remark} \label{rem:2N_1}
		Let $w_i, w_j$ be vertices in $H(A)$. If $w_i$ and $w_j$ are adjacent and $w_i$ starts before $w_j$, then $a_{i r_i} = a_{j r_i} = 1$ and $a_{i (r_i+1)} = 0$, $a_{j (r_i +1)} = 1$.
	\end{remark}

\begin{remark} \label{rem:2N_2}
	If $l_{i-1} > l_i$ and $l_{i+1} > l_i$ for some $i = 3, \ldots, k-1$, then for all $j\geq i+1$, $w_j$ is nested in $w_{i-1}$. The same holds if $l_{i-1} < l_i$ and $l_{i+1} < l_i$.
	Since $l_{i-1} > l_i$ and $l_{i+1} > l_i$, then $w_{i-1}$ and $w_{i+1}$ are not disjoint, thus necessarily $w_{i+1}$ is nested in $w_{i-1}$. It follows from this argument that this holds for $j \geq i+1$.
\end{remark}

Notice that $w_2$ and $w_k$ are nonadjacent, hence they are either disjoint or nested. Using this fact and Remark \ref{rem:2N_1}, we split the proof into two cases.

\subcase \textit{$w_2$ and $w_k$ are nested } 
\vspace{-1mm}
We may assume without loss of generality that $w_k$ is nested in $w_2$, for if not, we can rearrange the cycle backwards as $w_1$, $w_k$, $w_{k-1}, \ldots, w_2, w_1$. Moreover, we will assume without loss of generality that both $w_2$ and $w_k$ start before $w_1$. First, we need the following Claim.

\begin{claim} \label{claim:2N_1-1}
	If $w_2$ and $w_k$ are nested, then $w_i$ is nested in $w_2$, for $i = 4, \ldots, k-1$.
\end{claim}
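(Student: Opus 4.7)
Since $A$ has the C$1$P under the ordering $\Pi$, every row $w_i$ corresponds to an interval of columns $[l_i,r_i]$, and the adjacency relation in $H(A)$ becomes the overlap relation on intervals: $w_i$ and $w_j$ are adjacent iff their intervals meet but neither contains the other. For any non-consecutive pair on the cycle $C$, non-adjacency therefore means the two intervals are either disjoint or one contains the other. My plan is to prove the claim by reverse induction on $i$, from $i=k-1$ down to $i=4$, using the hypothesis that $w_k\subseteq w_2$ as an anchor that forces each intermediate interval to collapse into $w_2$.

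For the base case $i=k-1$, the edge $w_{k-1}w_k$ of $C$ gives that $w_{k-1}$ overlaps $w_k$, so $w_{k-1}$ meets some column of $[l_k,r_k]$; since $[l_k,r_k]\subseteq[l_2,r_2]$ by hypothesis, the intervals of $w_{k-1}$ and $w_2$ are not disjoint. Because $k\ge 5$, the vertices $w_{k-1}$ and $w_2$ are non-consecutive on $C$ and hence non-adjacent, so they must be nested. If instead $w_2$ were contained in $w_{k-1}$, then $w_k\subseteq w_2\subseteq w_{k-1}$, contradicting the fact that $w_{k-1}$ and $w_k$ overlap. Therefore $w_{k-1}$ is nested in $w_2$.

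For the inductive step, assume $w_{i+1}\subseteq w_2$ for some $i\in\{4,\ldots,k-2\}$. The cycle edge $w_iw_{i+1}$ yields that $w_i$ overlaps $w_{i+1}$, so $w_i$ meets $[l_{i+1},r_{i+1}]\subseteq[l_2,r_2]$, hence $w_i$ and $w_2$ are not disjoint. Since $C$ is an induced cycle of length $k\ge 5$ and $i\in\{4,\ldots,k-2\}$, the vertices $w_i$ and $w_2$ are non-consecutive and therefore non-adjacent, so they must be nested. The inclusion $w_2\subseteq w_i$ would again force $w_{i+1}\subseteq w_2\subseteq w_i$, contradicting the overlap of $w_i$ and $w_{i+1}$, so we must have $w_i\subseteq w_2$, completing the induction.

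The main obstacle is not in establishing the contact between $w_i$ and $w_2$ (which follows immediately from the induction hypothesis together with the cycle edge), but in pinning down the \emph{direction} of the nesting; this is where the overlap of $w_i$ with $w_{i+1}$ is crucial, since without it there would be no way to rule out $w_2\subseteq w_i$. The argument uses only the C$1$P and the cycle structure, so no further forbidden-submatrix analysis is needed at this stage.
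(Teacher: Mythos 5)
Your proof is correct, and it takes a genuinely different route from the paper's. The paper fixes the normalization $w_k\subseteq w_2$, splits into cases according to whether $w_1$ and $w_3$ are disjoint or nested, and then propagates \emph{forward} from $w_4$ using explicit endpoint inequalities (chains like $l_4<l_3<r_4<l_2<r_3<r_2$), deriving a contradiction only at the far end of the cycle from the requirement that $w_k$ be nested in $w_2$. You instead run a \emph{reverse} induction anchored directly at the hypothesis $w_k\subseteq w_2$: the cycle edge $w_iw_{i+1}$ gives non-disjointness of $w_i$ and $w_2$, inducedness of $C$ gives nestedness, and the overlap of $w_i$ with $w_{i+1}$ pins the direction of containment. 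This is cleaner and coordinate-free; it also makes no use of the paper's side assumptions that $w_2$ and $w_k$ start before $w_1$, or of the $w_1$-versus-$w_3$ dichotomy. What the paper's heavier bookkeeping buys is that the endpoint orderings $l_i$ it establishes along the way are reused immediately in Claim~\ref{claim:2N_1-2} to locate the columns of the forbidden submatrices $F_1(k)$ and $F_2(k)$; if you adopted your argument in the paper you would still need to recover those orderings afterwards, but that is a separate (and easy) step.
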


Suppose first that $w_1$ and $w_3$ are disjoint, and toward a contradiction suppose that $w_2$ and $w_4$ are disjoint. In this case, $l_4 < l_3 < r_4 < l_2 < r_3 < r_2$. 
The contradiction is clear if $k=5$.
If instead $k>5$ and $w_5$ starts before $w_4$, then $r_i < l_3$ for all $i > 5$, which contradicts the assumption that $w_k$ is nested in $w_2$. Hence, necessarily $w_5$ is nested in $w_3$ and $w_5$ and $w_2$ are disjoint. This implies that $l_3 < l_5 < r_4 < r_5 < l_2$ and once more, $r_i < l_2$ for all $i >5$, which contradicts the fact that $w_k$ is nested in $w_2$.

Suppose now that $w_3$ is nested in $w_1$. Toward a contradiction, suppose that $w_4$ is not nested in $w_2$. Thus, $w_2$ and $w_4$ are disjoint since they are nonadjacent vertices in $H(A)$.
Notice that, if $w_3$ is nested in $w_1$, then $l_2 < l_3$ and $r_2 < r_3$.
Furthermore, since $w_4$ is adjacent to $w_3$ and nonadjacent to $w_2$, then $l_3 < r_2 < l_4 < r_3 < r_4$. This holds for every odd $k\geq 5$.

If $k=5$, since $w_5$ is nested in $w_2$, then $r_5 < r_2 < l_4$, which results in a contradiction for $w_4$ and $w_5$ are adjacent.

Suppose that $k>5$. If $w_2$ and $w_i$ are disjoint for all $i= 5, \ldots, k-1$, then $w_{k-1}$ and $w_k$ are nonadjacent for $w_k$ is nested in $w_2$, which results in a contradiction. Conversely, if $w_i$ and $w_2$ are not disjoint for some $i > 3$, then they are adjacent, which also results in a contradiction that came from assuming that $w_2$ and $w_4$ are disjoint. Therefore, since $w_4$ is nested in $w_2$, $w_2$ and $w_i$ are nonadjacent and $w_i$ is adjacent to $w_{i+1}$ for all $i >4$, then necessarily $w_i$ is nested in $w_2$, which finishes the proof of the Claim. 

\begin{claim} \label{claim:2N_1-2}
	Suppose that $w_2$ and $w_k$ are nested. Then, if $w_3$ is nested in $w_1$, then $l_i > l_{i+1}$ for all $i=3, \ldots, k-1$. If instead $w_1$ and $w_3$ are disjoint, then $l_i < l_{i+1}$ for all $i=3, \ldots, k-1$.
\end{claim}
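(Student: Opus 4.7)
The plan is to prove both statements of the claim by induction on $i$, in each case reducing any failure of monotonicity to an application of Remark \ref{rem:2N_2}. The key global observation is that, in Case 1, the hypothesis $w_3 \subseteq w_1$ combined with the assumption $l_k < l_1$ gives $l_3 \geq l_1 > l_k$, while in Case 2, the disjointness of $w_1$ and $w_3$ combined with the overlap of $w_2$ and $w_3$ forces $w_3$ to lie entirely before $w_1$ (so $l_3 < l_2$) and $w_k \subset w_2$ gives $l_2 < l_k$, yielding $l_3 < l_k$. Thus the sequence $l_3, l_4, \ldots, l_k$ must move in a definite direction across the cycle, and any failure of strict monotonicity would produce an interior extremum of the sequence.

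For the base case $i = 3$, Case 2 is immediate, since $l_3 < l_2$ and, by Claim \ref{claim:2N_1-1}, $w_4 \subseteq w_2$ so $l_4 > l_2$, giving $l_3 < l_4$. The base case of Case 1 and every step of the inductive argument reduce to the following scheme: assuming that the conclusion fails, one locates an index $j$ with $4 \leq j \leq k-1$ at which $l_j$ is a local extremum (i.e.\ $l_{j-1} < l_j > l_{j+1}$ in Case 1, resp.\ $l_{j-1} > l_j < l_{j+1}$ in Case 2). In the inductive step the existence of such a $j$ is built into the failure of the claim at index $i$ combined with the inductive hypothesis; in the base case of Case 1 the existence of $j$ follows because otherwise the sequence would be strictly increasing, contradicting $l_k < l_3$. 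Remark \ref{rem:2N_2} then yields that $w_m$ is nested in $w_{j-1}$ for every $m \geq j+1$; in particular $w_k \subseteq w_{j-1}$.

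The main obstacle is extracting a contradiction from the containment $w_k \subseteq w_{j-1}$, and it will be resolved via a dichotomy on $j-1$. If $j - 1 = 3$, then in Case 1 the hypothesis $w_3 \subseteq w_1$ immediately gives $w_k \subseteq w_1$, contradicting the edge $w_1 w_k$ of $C$; in Case 2, the inequality $r_3 < l_1$ established above makes $w_3$ and $w_1$ disjoint, so $w_k \subseteq w_3$ forces $w_k$ disjoint from $w_1$, again contradicting the edge $w_1 w_k$. If $j - 1 \geq 4$, then Claim \ref{claim:2N_1-1} gives $w_{j-1} \subseteq w_2$; the alternative $w_1 \subseteq w_{j-1}$ would yield $w_1 \subseteq w_2$, contradicting the edge $w_1 w_2$, and since $w_{j-1}$ is non-adjacent to $w_1$ in $H(A)$ (its cyclic distance from $w_1$ is at least $3$), $w_{j-1}$ must be either nested in $w_1$ or disjoint from $w_1$. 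In either subcase, $w_k \subseteq w_{j-1}$ becomes nested in or disjoint from $w_1$, once more contradicting the adjacency of $w_k$ and $w_1$ in $C$. This completes the induction in both cases.
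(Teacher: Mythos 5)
Your proof is correct and follows essentially the same route as the paper's: establish the relative order of $l_3$ and $l_4$, then invoke Remark~\ref{rem:2N_2} at any failure of monotonicity to force $w_k$ to be nested in some $w_{j-1}$ and hence nested in or disjoint from $w_1$, contradicting the adjacency of $w_1$ and $w_k$; your explicit dichotomy on $j-1$ simply fills in what the paper compresses into ``applying this argument successively.'' The only blemish is that the parenthetical describing the type of extremum arising in the inductive step of Case~1 is swapped (a failure of $l_i > l_{i+1}$ together with the inductive hypothesis $l_{i-1} > l_i$ produces a local \emph{minimum}, not a maximum), but since Remark~\ref{rem:2N_2} treats both kinds of extrema symmetrically this has no effect on the argument.
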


\vspace{-0.5mm}
Recall that, by the previous Claim, since $w_i$ is nested in $w_2$ for all $i = 4, \ldots, k $, in particular $w_4$ is nested in $w_2$. Moreover, since $w_3$ and $w_4$ are adjacent, notice that, if $w_3$ is nested in $w_1$, then $l_3 > l_4$, and if $w_1$ and $w_3$ are disjoint, then $l_3 < l_4$.

It follows from Remark \ref{rem:2N_2} that, if $l_5 > l_4$, then $w_i$ is nested in $w_3$ for all $i = 5, \ldots, k $, which contradicts the fact that $w_1$ and $w_{k-1}$ are adjacent. The proof of the first statement follows from applying this argument successively.

The second statement is proven analogously by applying Remark \ref{rem:2N_2} if $l_5 < l_4$, and afterwards successively for all $i > 4$. 

If $w_1$ and $w_3$ are disjoint, then we obtain $F_2(k)$ first, by putting the first row as the last row, and considering the submatrix given by columns $j_1 = l_1 -1$, $j_2 = l_3$, $\ldots$, $j_i = l_{i+1}$, $\ldots$, $j_k = r_1+1$ (using the new ordering of the rows).
If instead $w_3$ is nested in $w_1$, then we obtain $F_1(k)$ by taking the submatrix given by the columns $j_1 = l_1 -1$, $j_2 = r_k$, $\ldots$, $j_i = l_{k -i +2}$, $\ldots$, $j_{k-1} = r_3$.

\vspace{1mm}
\subcase \textit{$w_2$ and $w_k$ are disjoint }

We assume without loss of generality that $l_2 < l_1$ and $l_k > l_1$.

\begin{claim} \label{claim:2N_2-1}
	If $w_2$ and $w_k$ are disjoint, then $l_i < l_{i+1}$ for all $i=2, \ldots, k-1$.
\end{claim}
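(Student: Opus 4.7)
The plan is to prove $l_2 < l_3 < \cdots < l_k$ by induction on $i$, showing $l_i < l_{i+1}$ for each $i \in \{2,\ldots,k-1\}$. The main tool is Remark~\ref{rem:2N_2}, which states that a strict local extremum in the sequence of left endpoints forces the tail of the cycle to be nested in an earlier interval; combining this with the disjointness hypothesis $r_2 < l_k$ and with the overlap/non-adjacency structure of the induced cycle $C$ will produce the required contradictions.

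First I would handle the base case $i = 2$. Since $w_2$ and $w_3$ are adjacent in $C$ they overlap, and in particular do not start in the same column, so assume for contradiction $l_3 < l_2$. Because $k \geq 5$, the vertex $w_3$ is non-adjacent in $C$ to both $w_1$ and $w_k$, hence each such pair is either disjoint or nested. Since $l_3 < l_2 < l_1 < l_k$, the nested option forces $w_1 \subseteq w_3$ (respectively $w_k \subseteq w_3$). A short case analysis, using $r_2 < l_k$ together with the overlap of $w_2$ with $w_3$, pushes $w_3$ to lie strictly to the left of $l_1$ in every scenario. Then the overlap chain $w_3 - w_4 - \cdots - w_{k-1} - w_k$ must travel from the left region to $w_k$ without creating adjacencies forbidden by the cycle, which yields a contradiction.

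For the inductive step, assume $l_2 < l_3 < \cdots < l_i$ has been established and suppose $l_{i+1} \leq l_i$. Then $l_{i-1} < l_i$ and $l_{i+1} < l_i$, so Remark~\ref{rem:2N_2} gives that every $w_j$ with $j \geq i+1$ is nested in $w_{i-1}$; in particular $w_k \subseteq w_{i-1}$. If $i - 1 = 2$, this directly contradicts the disjointness of $w_2$ and $w_k$. If $i - 1 \geq 4$, then $w_{i-1}$ is non-adjacent in $C$ to both $w_1$ and $w_2$, so each is disjoint from or nested with $w_{i-1}$; the options $w_{i-1}$ disjoint from $w_1$ and $w_{i-1} \subseteq w_1$ both contradict $w_k \subseteq w_{i-1}$ together with $w_1 \cap w_k \neq \emptyset$, while $w_1 \subseteq w_{i-1}$ combined with the analogous trichotomy for $w_2$ forces $w_2 \subseteq w_{i-1}$ (otherwise $w_{i-1} \subseteq w_2$ gives $r_k \leq r_{i-1} \leq r_2 < l_k$, impossible), and then $w_2, w_k$ both lie inside $w_{i-1}$ with $l_2 < l_1 \leq l_k$ and $r_2 < l_k \leq r_k$, contradicting that $w_{i-1}$ was supposed to place $l_{i-1}$ inductively after $l_2$ and $l_3$.

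The remaining case $i - 1 = 3$ is the main obstacle, since there $w_{i-1} = w_3$ is adjacent to $w_2$ in $C$ and the clean ``non-adjacent to both $w_1$ and $w_2$'' argument fails. I would resolve it by exploiting the overlap of $w_2$ with $w_3$ together with $w_k \subseteq w_3$ and $r_2 < l_k$ to conclude $l_3 < l_2$, which contradicts the inductive hypothesis $l_2 < l_3$. Once the claim is proved, the columns $j_1 = l_1 - 1$, $j_i = l_{i+1}$ for $2 \leq i \leq k-1$, and $j_k = r_1 + 1$ expose $F_2(k)$ (after cycling the first row to the end), contradicting the hypothesis that $A$ contains no $F_2(k)$.
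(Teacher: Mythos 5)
Your inductive step breaks at precisely the case you single out as the main obstacle, $i-1=3$, and the fix you propose is logically backwards. From $w_k\subseteq w_3$ you get $r_3\geq r_k\geq l_k>r_2$, and then ``$w_2$ overlaps $w_3$'' together with $r_2<r_3$ forces $l_2<l_3$ (if $l_3\leq l_2$ held, then $w_2$ would be contained in $w_3$ and the two rows would be nested, not overlapping). So the three facts you invoke are perfectly consistent with the inductive hypothesis and yield no contradiction. A concrete witness: take $k=5$ and rows whose $1$'s occupy the column intervals $w_1=[3,8]$, $w_2=[1,4]$, $w_3=[2,11]$, $w_4=[9,12]$, $w_5=[6,10]$. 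These induce exactly the $5$-cycle $w_1w_2w_3w_4w_5$ in $H(A)$, with $w_2\cap w_5=\emptyset$ and $l_2<l_1<l_5$, and they satisfy $l_2<l_3<l_4$ while $l_5<l_4$; every intermediate conclusion of yours up to $i=3$ holds, yet the claim fails at $i=4$. A companion configuration ($w_1=[3,8]$, $w_2=[2,4]$, $w_3=[1,2]$, $w_4=[2,9]$, $w_5=[5,10]$) likewise realizes the branch of your base case in which $w_3$ sits entirely to the left of $l_1$: the chain does ``travel across'' (here $w_4$ swallows $w_1$ and $w_2$) without creating any forbidden adjacency, so the contradiction you assert there does not exist either.

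The root cause is that the statement is simply false as a purely combinatorial fact about induced odd cycles with the C$1$P: it needs the standing hypothesis of Case~1 that $A$ contains no $F_2(k)$, which your argument never uses inside the induction. In both configurations above some vertex other than $w_1$ (namely $w_3$, resp.\ $w_4$) contains both of its non-neighbours on the cycle while its two cycle-neighbours are disjoint, which is exactly the $F_2(5)$ pattern anchored at that vertex instead of at $w_1$; that is what must be invoked to kill these cases. The paper's own proof takes a different route that localizes this issue: it first argues that $w_i$ must be nested in $w_1$ for every $i=3,\ldots,k-1$ (the real content being the exclusion of the alternatives $w_1\subseteq w_i$ and $w_i\cap w_1=\emptyset$), after which $l_2<l_3$ is immediate and Remark~\ref{rem:2N_2} propagates the monotonicity. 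To repair your proof you would need, in the base case and in the $i-1=3$ step, to recognize the offending configurations as copies of $F_2(k)$ (after cyclically re-indexing the cycle) rather than to seek a contradiction with the ordering of $l_2$ and $l_3$.
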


Notice first that, in this case, $w_i$ is nested in $w_1$, for all $i= 3, \ldots, k-1$. If not, then using Remark \ref{rem:2N_2}, we notice that it is not possible for the vertices $w_1, \ldots, w_k$ to induce a cycle. This implies, in particular, that $w_3$ is nested in $w_1$ and thus $l_2 < l_3$.
Furthermore, using this and the same remark, we conclude that $l_i < l_{i+1}$ for all $i = 2, \ldots, k-1$, therefore proving Claim \ref{claim:2N_2-1}. 
 
In this case, we obtain $F_2(k)$ by considering the submatrix given by the columns $j_1 = l_1 -1$, $j_2 = l_3$, $\ldots$, $j_i = l_{i+1}$, $\ldots$, $j_k = r_1+1$.

\vspace{1mm}

\case \textit{There is at least one labeled row in $A$.}
	
	We wish to extend the partial pre-coloring given for $A$. By Corollary \ref{lema:B_ext_2-nested}, if $B$ is obtained by extending the pre-coloring of $A$ and $B$ is $2$-nested, then neither two blocks corresponding to the same LR-row are colored with the same color, nor there are monochromatic gems, monochromatic weak gems or badly-colored doubly-weak gems in $B$. 
	Let us consider the auxiliary matrix $A+$, defined from a suitable LR-ordering $\Pi$ of the columns of $A$.
	Notice that, if there is at least one labeled row in $A$, then there is at least one labeled row in $A+$ and these labeled rows in $A+$ correspond to rows of $A$ that are labeled with either L, R, or LR. 
	
\vspace{1mm}
	Let $H = H(A+)$ be the graph whose vertices are the rows of $A+$. We say a vertex is an \emph{LR-vertex (resp.\ non-LR vertex)} if it corresponds to a block of an LR-row (resp.\ non-LR row) of $A$. The adjacencies in $H$ are as follows:
	\begin{itemize}
		\item Two non-LR vertices are adjacent in $H$ if the underlying uncolored submatrix of $A$ determined by these two rows contains a gem or a weak gem as a subconfiguration. 
		\item Two LR-vertices corresponding to the same LR-row in $A$ are adjacent in $H$.
		\item Two LR-vertices $v_1$ and $v_2$ corresponding to distinct LR-rows are adjacent if $v_1$ and $v_2$ are labeled with the same letter in $A+$ and the LR-rows corresponding to $v_1$ and $v_2$ overlap in $A$. 
		\item An LR-vertex $v_1$ and a non-LR vertex $v_2$ are adjacent in $H$ if the rows corresponding to $v_1$ and $v_2$ are not disjoint and $v_2$ is not contained in $v_1$. 
	\end{itemize}	  
	The vertices of $H$ are partially colored with the pre-coloring given for the rows of $A$.
		
	Notice that every pair of vertices corresponding to the same LR-row $f$ induces a gem in $A+$ that contains the column $c_f$, and two adjacent LR-vertices $v_1$ and $v_2$ in $H$ do not induce a any kind of gem in $A+$, except when considering both columns $c_{r_1}$ and $c_{r_2}$.
	
The following Claims will be useful throughout the proof.


\begin{claim}  \label{claim:1_teo2nested} 
	Let $C$ be a cycle in $H= H(A+)$. Then, there are at most 3 consecutive LR-vertices labeled with the same letter. The same holds for any path $P$ in $H$. 
\end{claim}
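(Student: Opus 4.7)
I plan to argue by contradiction. Suppose four consecutive LR-vertices $v_1, v_2, v_3, v_4$ in $C$ (or $P$) all carry the same letter. By duality, replacing $A$ with its dual matrix $\widetilde A$, I may assume they are all labeled L. Being distinct LR-vertices, each $v_i$ corresponds to a distinct LR-row $r_i$ of $A$, and the adjacency rule governing same-letter LR-vertices from distinct rows forces $r_i$ and $r_{i+1}$ to overlap in $A$ for $i=1,2,3$.

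The first step is to determine the mutual position of these four rows. Applying admissibility to each triple $\{r_i, r_{i+1}, r_{i+2}\}$ for $i=1,2$, the absence of $D_{11}$ rules out three pairwise disjoint LR-rows, the absence of $D_{12}$ rules out three pairwise overlapping ones, and the absence of $D_{13}$ rules out the chain-of-overlaps-with-disjoint-endpoints pattern. Combined with the two overlaps already present in each triple, this pins down $r_1, r_3$ and $r_2, r_4$ as nested pairs. The two-inclusion-chain partition supplied by Lemma~\ref{lema:A_LR_2nested} then places $\{r_1, r_3\}$ in one chain $S_1$ and $\{r_2, r_4\}$ in the other chain $S_2$ (the assignment is forced by the overlap pattern). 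Up to the symmetry $r_i \leftrightarrow r_{i+2}$, I may assume $r_1 \subseteq r_3$, and split into two subcases according to whether $r_2 \subseteq r_4$ or $r_4 \subseteq r_2$.

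In each subcase I would translate the three overlaps and the two nestings into inequalities among the endpoints $p_i$ of the L-blocks and $q_i$ of the R-blocks of each $r_i$ in the suitable LR-ordering $\Pi$ used to build $A+$. Suitability of $\Pi$ yields $\max_i p_i < \min_j q_j$; each overlap $r_i \cap r_{i+1}$ produces an alternation $p_i < p_{i+1} < q_i < q_{i+1}$ (or its mirror image); and the nestings contribute $p_1 \le p_3$, $q_3 \le q_1$ together with their analogues for $(r_2, r_4)$. A short arithmetic check over the resulting inequalities then exhibits four columns on which $r_1, r_2, r_3, r_4$ restrict to a copy of $D_{10}$, or else the four rows together with the auxiliary split columns $c_{r_i}$ introduced by $A+$ realize one of the patterns in the family $\mathcal{F}$ inside the submatrix $A_{LR}$, in either case contradicting that $A$ is $\{\mathcal{D},\mathcal{F},\mathcal{S},\mathcal{P}\}$-free.

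The main obstacle will be the case analysis itself. The two nesting directions of $(r_2, r_4)$ interact with the a priori unknown relation between $r_1$ and $r_4$ (which may turn out to be overlap or nesting depending on the branch), so several sub-branches must be handled uniformly. The most delicate branch is the one in which $r_1$ and $r_4$ themselves become nested, because then the overlap pattern among the four rows is not visible on the original columns of $A$; here one must use the separating columns $c_{r_i}$ produced in $A+$ as explicit witnesses distinguishing the four rows and forcing the small forbidden submatrix to appear.
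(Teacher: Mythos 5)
Your opening moves are sound and match the paper's: the four vertices come from four distinct LR-rows $r_1,\dots,r_4$ of $A$ with $r_i$ and $r_{i+1}$ overlapping, and the absence of $D_{11}$, $D_{12}$, $D_{13}$ forces $r_1,r_3$ and $r_2,r_4$ to be nested. The divergence, and the gap, is in your endgame. The paper does not hunt for a forbidden submatrix at all; it closes with a direct containment argument that also uses the nestedness of $r_1$ and $r_4$ (coming from the non-adjacency of $v_1$ and $v_4$ in the induced cycle or path), a constraint you leave open. Your substitute, ``a short arithmetic check exhibits a copy of $D_{10}$, or an $\mathcal{F}$-pattern using the columns $c_{r_i}$,'' cannot work as stated. $D_{10}$ is an enriched matrix whose first two rows are labeled L and R and are pre-colored; since subconfiguration containment for enriched matrices respects labels and colors, four uncolored LR-rows can never realize it --- the only members of $\mathcal{D}$ realizable by LR-rows alone are $D_{11}$, $D_{12}$, $D_{13}$, which you have already excluded. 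Likewise the columns $c_{r_i}$ belong to $A+$, not to $A$, so they cannot witness a forbidden subconfiguration of $A$; and Lemma \ref{lema:A_LR_2nested} rules out $F_1(k)$, $F_2(k)$ in $A_{LR}$ only as a consequence of the absence of $D_{13}$, so there is nothing further in $\mathcal{F}$ for four LR-rows to violate.

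More decisively, the inequality system you propose to analyze is satisfiable, so no arithmetic check can end in a contradiction. Take L-block lengths $p=(1,6,2,11)$ and R-block lengths $s=(2,1,12,3)$ for $r_1,\dots,r_4$ in a sufficiently wide matrix, so that all L-blocks precede all R-blocks as a suitable LR-ordering requires. Then $r_i$ and $r_{i+1}$ overlap for $i=1,2,3$, while $r_1\subseteq r_3$, $r_2\subseteq r_4$ and $r_1\subseteq r_4$; every triple contains a nested pair, so none of $D_{11}$, $D_{12}$, $D_{13}$ appears, and the four rows alone carry no forbidden subconfiguration of $A$. Hence whatever closes this claim must come from outside the four LR-rows themselves --- in the paper's intended setting the path or cycle is induced and, in the cases where the claim is invoked, is flanked by colored L- or R-rows or unlabeled rows that supply the extra structure. (Note also that the paper's own final deduction, ``since $v_3$ and $v_4$ are not nested, then $v_1$ and $v_3$ are not nested,'' is not satisfied by the configuration above, so simply importing that line will not repair your argument either.) To complete a proof you must either bring in those neighbouring non-LR rows explicitly or identify the additional constraint that excludes the configuration exhibited here; as written, your case analysis has at least one branch in which no contradiction is available.
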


Let $v_1$, $v_2$ and $v_3$ be 3 consecutive LR-vertices in $H$, all labeled with the same letter. Notice that any subset in $H$ of LR-vertices labeled with the same letter in $A+$ corresponds to a subset of the same size of distinct LR-rows in $A$. By definition, two LR-vertices are adjacent in $H$ only if they are labeled with the same letter and the corresponding rows in $A$ contain a gem, or equivalently, if they are not nested. 
Moreover, notice that once the columns of $A$ are ordered according to $\Pi$, these rows have a $1$ in the first non-tag column and a $1$ in the last non-tag column.
Hence, if there were 4 consecutive LR-vertices $v_1$, $v_2$, $v_3$ and $v_4$ in the cycle $C$ of $H$ and all of them are labeled with the same letter, then $v_1$ and $v_2$ are not nested, $v_2$ and $v_3$ are not nested and $v_1$ and $v_3$ must be nested. Thus, since $v_2$ and $v_4$ and $v_1$ and $v_4$ are also nested, then $v_4$ either contains $v_1$ and $v_2$ or is nested in both. In either case, since $v_3$ and $v_4$ are not nested, then $v_1$ and $v_3$ are not nested and this results in a contradiction. \QED

\begin{claim} \label{claim:2_teo2nested}
	There are at most 6 uncolored labeled consecutive vertices in $C$. The same holds for any path $P$ in $H$.
\end{claim}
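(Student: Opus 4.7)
The strategy is to argue by contradiction: suppose that $v_1, v_2, \ldots, v_7$ are seven consecutive uncolored labeled vertices in $C$ (the argument for a path $P$ will be identical). I would begin by classifying each $v_i$ by its letter (L or R in $A+$) and by whether it is an \emph{LR-vertex} (arose from splitting an LR-row of $A$) or a \emph{non-LR vertex} (corresponds to an uncolored L- or R-row of $A$, since pre-colored L/R rows of $A$ give colored vertices in $H$).

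The plan rests on three structural observations:
\begin{itemize}
\item By Claim \ref{claim:1_teo2nested}, at most three consecutive vertices in $C$ can all be LR-vertices sharing the same letter.
\item Since $A$ is admissible (hence $D_0$-free), any two non-LR L-rows of $A$, or any two non-LR R-rows of $A$, must be nested; they therefore cannot form a gem or weak gem and so are non-adjacent in $H$. In particular, no two consecutive vertices in our sequence can both be non-LR vertices carrying the same letter.
\item Because the LR-ordering used to define $A+$ is suitable (Definition \ref{def:suitable_ordering}), every L-block of an LR-row is disjoint from every R-block, and symmetrically. Translated to $H$, this means an LR L-vertex is never adjacent to a non-LR R-vertex, and an LR R-vertex is never adjacent to a non-LR L-vertex.
\end{itemize}

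The third observation forces every letter-switch along the sequence to occur either on an edge between two paired LR-vertices (the L-part and R-part of a single split LR-row) or on an edge between a non-LR L-row and a non-LR R-row that overlap in $A$. I would then decompose the sequence into maximal monochromatic runs (by letter) and bound the length of each run. Inside an L-run, the first two observations force an alternation between LR L-vertices (at most three consecutive by Claim \ref{claim:1_teo2nested}) and isolated non-LR L-vertices. Using the two-chain structure of LR-rows granted by Lemma \ref{lema:A_LR_2nested} together with $D_{11}$-freeness, and examining how a non-LR L-row extending past an LR L-block interacts with a chain of three such LR-blocks, I would show that any attempt to prolong the run past length three produces a subconfiguration from $\mathcal{S}$ or $\mathcal{P}$, contradicting admissibility. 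A symmetric argument handles R-runs.

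Finally, a monochromatic sequence of length seven is immediately ruled out by combining Claim \ref{claim:1_teo2nested} with the non-adjacency of same-label non-LR vertices, so the seven vertices must involve at least one letter-switch. With each run of length at most three, the total length of the sequence is at most $3+3 = 6$, contradicting the assumption. The main obstacle is the careful case analysis required to certify that each monochromatic run has length at most three: one must enumerate how non-LR labeled vertices can be interleaved with chains of up to three same-labeled LR-vertices and verify that in every remaining case one of the enriched matrices in $\mathcal{D}$, $\mathcal{S}$, or $\mathcal{P}$ necessarily appears as a subconfiguration of $A$.
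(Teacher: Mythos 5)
Your overall architecture (runs of same-letter vertices of length at most $3$, joined by letter-switches, giving $3+3=6$) is the same as the paper's, but the paper's proof is essentially a two-line corollary of Claim~\ref{claim:1_teo2nested}: in this setting every row of $A$ labeled with L or R is pre-colored, so the \emph{uncolored} labeled vertices of $H$ are exactly the LR-vertices, and two LR-vertices carrying distinct letters are adjacent only when they are the two blocks of one and the same LR-row. Hence a maximal run of consecutive uncolored labeled vertices consists of at most three L-labeled LR-vertices, at most one switch (a second switch back to L is impossible, since an LR R-vertex is never adjacent to an LR L-vertex of a different row), and then at most three R-labeled LR-vertices.

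By admitting a category of ``uncolored non-LR L- or R-vertices'' you set yourself a harder problem than the one the claim actually poses, and the resulting argument has two genuine gaps. First, your bound of $3$ on the length of a monochromatic run is only asserted: Claim~\ref{claim:1_teo2nested} bounds consecutive \emph{LR}-vertices of the same letter, but says nothing about a run in which LR L-vertices are interleaved with non-LR L-vertices, and your statement that prolonging such a run ``produces a subconfiguration from $\mathcal{S}$ or $\mathcal{P}$'' is exactly the case analysis you would need to carry out and do not. Second, the final count $3+3=6$ silently assumes there are at most two runs, i.e.\ at most one letter-switch; in your extended setting a switch can also occur across a gem between two uncolored non-LR rows labeled L and R, and nothing you cite (the matrices $D_1$, $D_2$ carry color hypotheses, so $\mathcal{D}$-freeness does not immediately apply to uncolored rows) rules out a second switch and a third run, which would give $9$ rather than $6$. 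Both gaps disappear once you observe that the only uncolored labeled vertices are the LR-vertices.
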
 

This follows from the previous claim and the fact that every pair of uncolored labeled vertices labeled with distinct rows are adjacent only if they correspond to the same LR-row in $A$. \QED

\vspace{2mm}
If $A$ is not $2$-nested, then the partial $2$-coloring given for $H$ cannot be extended to a total proper $2$-coloring of the vertices. Notice that the only pre-colored vertices are those labeled with either L or R, and those LR vertices corresponding to an empty row, which we are no longer considering when defining $A+$. 
According to Lemma \ref{lema:2-color-extension} we have 5 possible cases. 

	\vspace{1mm} 
	\subcase  \textit{There is an even induced path $P =  v_1, v_2, \ldots, v_k $ such that the only colored vertices are $v_1$ and $v_k$, and they are colored with the same color.
}	

We assume without loss of generality througout the proof that $v_1$ is labeled with L, since it is analogous otherwise by symmetry.

If $v_2, \ldots, v_{k-1}$ are unlabeled rows, then we find either $S_2(k)$ or $S_3(k)$ which is not possible since $A$ is admissible.  	


Suppose there is at least one LR-vertex in $P$.	Recall that, an LR-vertex and a non-LR-vertex are adjcent in $H$ only if the rows in $A+$ are both labeled with the same letter and the LR-row is properly contained in the non-LR-row. 

Suppose that every LR-vertex in $P$ is nonadjacent with each other. Let $v_i$ be the first LR-vertex in $P$, and suppose first that $i = 2$. Since $v_2$ is an LR-vertex and is adjacent to $v_1$, then $v_2$ is labeled with L and $v_2 \subsetneq v_1$. Hence, since we are assuming there are no adjacent LR-vertices in $P$ and $k \geq 4$, then $v_3$ is not an LR-vertex, thus it is unlabeled since we are considering a suitable LR-ordering to define $A+$. Let $v_3, \ldots, v_j$ be the maximal sequence of consecutive unlabeled vertices in $P$ that starts in $v_3$. Thus, $v_l \subseteq v_1$ for every $3 \leq l \leq j$. 
	
	Notice that there are no other LR-vertices in $P$: toward a contradiction, let $v_j$ be the next LR-vertex in $P$. If $v_j$ is labeled with L, since $v_3$ is nested in $v_1$, then $v_j$ is adjacent to $v_1$, which is not possible. It is analogous if $v_j$ is labeled with R. Thus, $v_l$ is unlabeled for every $3 \leq l \leq k-1$.
	Moreover, the vertex $v_k$ is labeled with L, for if not we find $D_1$ in $A$ induced by $v_1$ and $v_k$ and this is not possible since $A$ is admissible. However, in that case we find $S_5(k)$.
	
	Hence, if $v_i$ is an isolated LR-vertex (i.e., nonadjacent to other LR-vertices), then $i >2$. It follows that $v_2$ is an unlabeled vertex.
	Notice that a similar argument as in the previous paragraph proves that there are no more LR-vertices in $P$: since $v_{i+1}$ is nested in $v_{i-1}$, it follows that any other LR-vertex is adjacent to $v_{i-1}$. 
	Suppose first that $v_i$ is labeled with L and let $v_2, \ldots, v_{i-1}$ be the maximal sequence of unlabeled vertices in $P$ that starts in $v_2$.
	
	 Since $v_i$ is the only LR-vertex in $P$, if $v_k$ is labeled with L, then necessarily $i=k-1$ for if not $v_k$ is adjacent to $v_{i-1}$. However, since in that case $v_k \supsetneq v_{k-1}= v_i$ and $v_k$ is nonadjacent to every other vertex in $P$, then we find $S_5(k)$. Analogously, if $v_k$ is labeled with R, since $v_j \subseteq v_{i-1}$ for every $j > i$, then $v_k$ is adjacent to $v_{i-1}$ which leads to a contradiction. 
	
	Suppose now that $v_i$ is labeled with R and remember that $i>2$. Furthermore, $v_j$ is unlabeled for every $j > i$. Moreover, $v_j$ is nested in $v_{i-1}$ for every $j>i$, for if not $v_k$ would be adjacent to $v_i$. However, in that case $v_k$ is adjacent to $v_{i-1}$, whether labeled with R or L, and this results in a contradiction. 
	
	Notice that we have also proven that, when considering an admissible matrix and a suitable LR-ordering to define $H$, there cannot be an isolated LR-vertex in such a path $P$, disregarding of the parity of the length of $P$. This last part follows from the previous and the fact that, if the length is 3 and $P$ has one LR-vertex, since the endpoints are colored with distinct colors, then we find $D_4$ if the endpoints are labeled with the same letter and $D_5$ if the endpoints are labeled one with L and the other with R. Moreover, the ordering would not be suitable, which is a necessary condition for the well definition of $A+$, and thus of $H$. If the length of $P$ is odd and greater than 3, then the arguments are analogous as in the even case.
	The following Claim is a straightforward consequence of the previous.
	
\begin{claim}   \label{claim:3_teo2nested}
If there is an isolated LR-vertex in $P$, then it is the only LR-vertex in $P$. Moreover, there are no two nonadjacent LR-vertices in $P$. Equivalently, every LR-vertex in $P$ lies in a sequence of consecutive LR-vertices.
\end{claim}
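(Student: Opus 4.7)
The plan is to leverage the local envelope analysis carried out in the paragraphs immediately preceding the claim, where it was shown that an isolated LR-vertex $v_i$ of $P$ forces, via admissibility and the suitability of the LR-ordering used to define $A^+$, very restrictive containment relations on all later non-LR vertices of $P$. The claim has two pieces: (a) an isolated LR-vertex must be the only LR-vertex in $P$, and (b) no two nonadjacent LR-vertices can both occur in $P$. Since (b) trivially implies (a), and the equivalence with ``every LR-vertex lies in a maximal run of consecutive LR-vertices'' is just a rephrasing, I would direct the argument at (b) and then harvest (a) as a corollary.

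First I would recall the envelope setup: if $v_i$ is an LR-vertex in $P$ and its $P$-neighbors $v_{i-1}, v_{i+1}$ are non-LR (which is the definition of $v_i$ being isolated), then $v_i$ being adjacent to the non-LR vertex $v_{i-1}$ in $H$ forces $v_{i-1}$ and $v_i$ to carry the same letter label, with the LR-row of $v_i$ properly contained in $v_{i-1}$, and it was already shown that every later unlabeled $v_j$ must be nested in $v_{i-1}$. Second I would argue that the presence of any further LR-vertex $v_j$ with $j>i$ and $v_j$ nonadjacent to $v_i$ is impossible: nonadjacency of $v_i$ and $v_j$ in $H$ means the two LR-rows are nested in $A$, so regardless of whether $v_j$ is labeled L or R in $A^+$, the containment forces $v_j$ to be adjacent to $v_{i-1}$, creating a chord in the induced path $P$ -- contradiction. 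This yields (b) and therefore (a). The minimal-pair reduction is standard: pick a counterexample pair with $j-i$ minimum; all intermediate vertices are either unlabeled (and hence nested in $v_{i-1}$) or form a consecutive run of LR-vertices not touching $v_j$, so $v_i$ still acts as isolated relative to the prefix $v_1,\dots,v_j$.

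The main obstacle will be the parity verification the author signals in passing. In the even case explicitly written out just before the claim, the endpoint $v_k$ being labeled L with the same color as $v_1$ produced $S_5(k)$; but for odd subpaths, or for path endings where $v_k$ is labeled R (opposite letter) or carries the opposite color, the terminal forbidden submatrix one extracts changes: it is one of $D_4, D_5, S_4(k), S_5(k)$ (and in the unlabeled-middle regime also $S_2(k)$ or $S_3(k)$), each of which is ruled out by admissibility of $A$. I would discharge this by listing the three endpoint-labeling sub-cases (both L, both R, mixed L/R), and in each one extracting the column indices -- the first column where the endpoint row disagrees with $v_{i-1}$, the column separating the L-block and R-block of $v_i$, and the last column of the envelope -- that realize one of these forbidden configurations, exactly as was done for the $S_5(k)$ extraction in the paragraph above the claim. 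Once every sub-case yields a member of $\mathcal{D}\cup\mathcal{S}$, admissibility provides the contradiction uniformly, and the claim follows.
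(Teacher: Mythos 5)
Your proposal follows the paper's own route: use the envelope $v_{i-1}$ (or $v_1$) around an isolated LR-vertex, show every later vertex of $P$ is nested in it so that any further LR-vertex would be adjacent to $v_{i-1}$ and hence a chord of the induced path, and close the remaining endpoint/parity cases by exhibiting a forbidden configuration ($S_5(k)$, $D_1$, $D_4$, $D_5$, \dots) that contradicts admissibility. One justification should be repaired: nonadjacency of two LR-vertices in $H(A+)$ does \emph{not} imply that the corresponding LR-rows are nested in $A$ (they may carry different letters while the rows overlap), so the adjacency of $v_j$ to $v_{i-1}$ must instead be derived, as your first paragraph already sets up, from $v_{j-1}\subseteq v_{i-1}$ together with $v_{j-1}\not\subseteq v_j$, which forces $v_j$ and $v_{i-1}$ to be non-disjoint with $v_{i-1}\not\subseteq v_j$; this argument works uniformly whether $v_j$ is labeled L or R.
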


We say a subpath $Q$ of $P$ is an \emph{LR-subpath} if every vertex in $Q$ is an LR-vertex. We say an LR-subpath $Q$ in $P$ is \emph{maximal} if $Q$ is not properly contained in any other LR-subpath of $P$.

We say that two LR-vertices $v_i$ and $v_j$ are \emph{consecutive }in the path $P$ (resp.\ in the cycle $C$) if either $j=i+1$ or $v_l$ is unlabeled for every $l=i+1, \ldots, j-1$. 
 	
\vspace{1mm} 	
It follows from Claims \ref{claim:1_teo2nested}, \ref{claim:2_teo2nested} and \ref{claim:3_teo2nested} that there is one and only one maximal LR-subpath in $P$. Thus, we have one subcase for each possible length of such maximal LR-subpath of $P$, which may be any integer between $2$ and $6$, inclusive.

	
	\subsubcase Let $v_i$ and $v_{i+1}$ be the two adjacent LR-vertices that induce the maximal LR-subpath. Suppose first that both are labeled with L and that $i=2$. Since $v_2$ is an LR-vertex, $v_2$ is nested in $v_1$ and $v_3$ contains $v_1$.
	Moreover, $v_4$ is labeled with R, for if not $v_4$ is also adjacent to $v_2$. This implies that the R-block of the LR-row corresponding to $v_2$ contains $v_4$ in $A$, for if not we find $D_6$. However, either the R-block of $v_2$ intersects the L-block of $v_3$ --which is not possible since we are considering a suitable LR-ordering--, or $v_3$ is disjoint with $v_4$ since the LR-rows corresponding to $v_2$ and $v_3$ are nested, and thus we find $D_6$. Hence, $k>4$. 
	
	By Claim \ref{claim:3_teo2nested} and since there is no other LR-vertex in the maximal LR-subpath, there are no other LR-vertices in $P$. Equivalently, $v_4, \ldots, v_{k-1}$ are unlabeled vertices. Moreover, this sequence of unlabeled vertices is chained to the right, since if it was chained to the left, then every left endpoint of $v_j$ for $j=4, \ldots, k-1$ would be greater than $r(v_1)$ and thus $v_k$ results adjacent to $v_2$. Hence, we find $P_0(k-1,0)$ in $A$ as a subconfiguration of the submatrix given by considering the rows corresponding to $v_1$, $v_2$, $v_4, \ldots, v_k$, which is not possible since $A$ is admissible. 
	The proof is analogous if $i>2$, with the difference that we find $P_0(k-1,i)$ in $A$.
	Furthermore, the proof is analogous if $v_i$ and $v_{i+1}$ are labeled with distinct letters.
	
\vspace{1mm}

\subsubcase Let $Q= < v_i, v_{i+1},v_{i+2}>$ be the maximal LR-subpath of $P$. Suppose first that not every vertex in $Q$ is labeled with the same letter. 

If $v_i$ is labeled with R, since there is a sequence of unlabeled vertices between $v_1$ and $v_i$, then $v_{i+1}$ is labeled with R. This follows from the fact that if not, $v_{i+1}$ would be adjacent to either $v_1$ or some vertex in the unlabeled chain. The same holds for $v_{i+1}$ and thus we are in the previous situation. Hence, $v_{i}$ is labeled with L and we have the following claim. 

\begin{claim} \label{claim:1_case2-1_teo2nested}
For every maximal LR-subpath of $P$, the first vertex is labeled with L.
\end{claim}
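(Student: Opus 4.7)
The plan is to prove the claim by contradiction. Suppose $v_i$, the first vertex of the (unique) maximal LR-subpath $Q$ of $P$, is labeled with R. Since $v_1$ is labeled with L and therefore non-LR, we have $i \geq 2$; by the maximality of $Q$, the vertex $v_{i-1}$ is either $v_1$ or an unlabeled non-LR vertex, so there is a chain of overlapping non-LR-rows $v_1, v_2, \ldots, v_{i-1}$ whose rows in $A+$ progressively ``migrate'' (in the suitable LR-ordering used to define $A+$) from the left side of the column ordering (where $v_1$'s L-block sits) toward the right side (where $v_i$, being an R-labeled LR-vertex, sits). This migration is forced because consecutive vertices in $P$ must overlap without nesting in $A+$.

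The first step is to show that $v_{i+1}$ must also be labeled with R. Assume, toward a contradiction, that $v_{i+1}$ is labeled with L. Two LR-vertices carrying distinct letters can be adjacent in $H$ only if they correspond to the same LR-row of $A$; hence $v_i$ and $v_{i+1}$ arise from splitting a single LR-row $r$ of $A$, with $v_{i+1}$ being the L-block of $r$ (sitting at the left end of the ordering) and $v_i$ its R-block. I would then track the chain $v_1, \ldots, v_{i-1}$: since $v_{i-1}$ is adjacent to $v_i$ in $H$, the row $v_{i-1}$ overlaps the R-block of $r$ without being contained in it, so $v_{i-1}$ extends strictly to the right of $r$'s R-block; carrying the overlap/non-containment forward from $v_1$ along the chain (using Claim \ref{claim:3_teo2nested} to ensure no LR-vertex appears in between), I would show that $v_{i-1}$ must also overlap, but not be contained in, the L-block $v_{i+1}$. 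This makes $v_{i-1}$ adjacent to $v_{i+1}$ in $H$, contradicting that $P$ is an induced path. The same argument, applied to $v_{i+1}$ and $v_{i+2}$, forces $v_{i+2}$ to be labeled with R as well.

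Once $Q$ is entirely labeled with R, I would run the symmetric analysis from the other end of $P$: by Claim \ref{claim:3_teo2nested}, $v_{i+3}, \ldots, v_{k-1}$ are unlabeled non-LR vertices forming a chain that ends at $v_k$, a vertex labeled L. Exactly the same reasoning as above, but starting from $v_k$ and moving leftward, reveals that at least one vertex of $Q$ would have to be labeled L—contradicting the conclusion of the previous step. Alternatively, having identified the full configuration of a monochromatic R-labeled LR-subpath bounded by L-labeled endpoints at both sides connected through two unlabeled chains, I would extract an explicit forbidden subconfiguration from the family $\mathcal{P}$ (namely an instance of $P_0(k-1,l)$, $P_1(k-1,l)$, or $P_2(k-1,l)$, according to the length of $Q$ and the position $i$), which cannot appear in an admissible matrix.

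The main obstacle will be the careful bookkeeping of the column positions in the suitable LR-ordering: I need to argue that the overlap/non-containment relation propagates cleanly along the unlabeled chain $v_1, \ldots, v_{i-1}$ so that $v_{i-1}$ genuinely straddles the L-block of $r$. This relies essentially on suitability (Definition \ref{def:suitable_ordering}) ensuring that the L-block and R-block of $r$ sit on disjoint sides of the ordering and that no U-block intersects both blocks of $r$, together with the admissibility properties ruling out $D_6$, $D_8$, $D_9$ in any attempted shortcut.
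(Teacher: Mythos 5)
Your overall skeleton (assume $v_i$ is labeled R, force $v_{i+1}$ and then all of $Q$ to carry the label R, then dispose of the all-R configuration) is the same as the paper's, but the mechanism you give for the first step does not work. You argue that $v_{i-1}$, being adjacent to the R-block $v_i$ of the LR-row $r$, ``must also overlap, but not be contained in, the L-block $v_{i+1}$,'' and you derive the contradiction from the adjacency $v_{i-1}v_{i+1}$. This is impossible: $v_{i-1}$ is a U-block meeting the R-block of $r$, and a suitable LR-ordering (Definition \ref{def:suitable_ordering}) requires a U-block that meets one block of an LR-row to be disjoint from the other block, so $v_{i-1}$ is \emph{disjoint} from $v_{i+1}$ and can never be adjacent to it. (You invoke exactly this property of suitability in your final paragraph, so the proposal contradicts itself; note also that a U-block overlapping the R-block extends to its \emph{left}, since the R-block ends at the last column.) The forbidden adjacency actually arises further back in the chain: $v_1$ and the L-block $v_{i+1}$ both contain the first column, while $v_{i-1}\cap v_{i+1}=\emptyset$, so if $j^\ast$ is the largest index with $v_{j^\ast}\cap v_{i+1}\neq\emptyset$ then $j^\ast\le i-2$ and, because $v_{j^\ast}$ overlaps $v_{j^\ast+1}$ which lies entirely to the right of $v_{i+1}$, the row $v_{j^\ast}$ is not contained in $v_{i+1}$; hence $v_{j^\ast}$ is adjacent to $v_{i+1}$, violating inducedness. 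The adjacent vertex is some earlier member of the chain (possibly $v_1$ itself), never $v_{i-1}$.

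The second gap is in your treatment of the all-R case. You run the symmetric argument from ``$v_k$, a vertex labeled L,'' but nothing forces this: the hypothesis of the case fixes only the \emph{colors} of the endpoints, and ``$v_1$ is labeled with L'' is a normalization of one endpoint only, so $v_k$ may be labeled with R. In that situation the symmetric argument yields no contradiction (an R-labeled vertex at the right end of $Q$ is consistent with $v_k$ being an R-row), and a different argument is required: one shows $v_{i+1}\subsetneq v_i$ and $v_{i+1}\subsetneq v_{i+2}$, deduces that every unlabeled vertex after $Q$ reaches past $l(v_{i+1})$, and concludes that $v_k$ is adjacent to $v_{i+1}$; only when $v_k$ is labeled with L does one instead exhibit $D_8$. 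Your fallback of extracting a member of $\mathcal{P}$ is also not available as stated: a length-three all-R LR-subpath contributes three distinct LR-rows, whereas $P_0$, $P_1$ and $P_2$ contain one, two and four LR-rows respectively, and their two bounding labeled rows are one L-row and one R-row, which again presupposes the letter of $v_k$.
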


Suppose $v_i$ and $v_{i+1}$ are both labeled with L and $v_{i+2}$ is labeled with R. Notice that, if $i=2$, then $v_2$ is labeled with L, $v_4$ is labeled with R and $v_3$ may be labeled with either L or R.

Since $v_{i+1}$ and $v_{i+2}$ are labeled with distinct letters, then they correspond to the same LR-row in $A$. Notice that $v_i$ is contained in $v_{i+1}$. Thus, since $v_i$ and $v_{i+1}$ are adjacent, the R-block corresponding to $v_i$ in $A$ contains $v_{i+2}$. Therefore, we find $P_0(k,i)$ or $P_1(k,i)$ as a subconfiguration of the submatrix induced by considering all the rows of $P$.

If instead $v_{i+1}$ and $v_{i+2}$ are both labeled with R, then $v_i$ and $v_{i+1}$ are the two blocks of the same LR-row in $A$. Hence, since $v_{i+1}$ and $v_{i+2}$ are adjacent and $v_k$ is nonadjacent to $v_{i+1}$, then $v_{i+1}$ contains $v_{i+2}$ and thus the L-block of the LR-row corresponding to $v_{i+2}$ contains $v_i$. Once again, we find either $P_0(k,i)$ or $P_1(k,i)$.

\vspace{1mm}
Suppose now that all vertices in $Q$ are labeled with the same letter and suppose first that $i=2$. Since $v_1$ and $v_2$ are adjacent, then every vertex in $Q$ is labeled with L.
Notice that $k>4$ since $v_5$ is uncolored and the endpoints of $P$ are colored with the same color.
Since $v_2$ is adjacent to $v_1$, then $v_1 \subsetneq v_3$ and $v_4 \subsetneq v_3$.
Since $k$ is even and $k>4$, then $v_5$ is an unlabeled vertex. Moreover, for every unlabeled vertex $v_j$ such that $j >4$, $l(v_j) > r(v_1)$ and $r(v_j) \leq r(v_3)$, for if not $v_j$ and $v_3$ would be adjacent. However, $v_k$ is not labeled with L for in that case it would be adjacent to $v_3$. Furthermore, if $v_k$ is labeled with R, then we find $D_8$, which is not possible since we assumed $A$ to be admissible.

Suppose now that $i>2$. In this case, there is a sequence of unlabeled vertices between $v_1$ and $v_i$. 
If every vertex in $Q$ is labeled with L, since $v_1$ and $v_i$ are nonadjacent (and thus $v_1$ is nested in $v_i$) and $v_{i+1}$ is nonadjacent with $v_{i-1}$, then $v_i \subsetneq v_{i+1}$, $v_{i+2} \subsetneq v_{i+1}$. It follows that $v_j$ is contained between $r(v_i)$ and $r(v_{i+1})$ for every $j>i+2$ and therefore $v_k$ is adjacent either to $v_{i+1}$ or $v_i$, which results in a contradiction. 

If every vertex in $Q$ is labeled with R, then $v_{i+1} \subsetneq v_{i}$ and $v_{i+1} \subsetneq v_{i+2}$ for if not $v_{i+1}$ would be adjacent to $v_{i-1}$ and $v_{i+2}$. Hence, if $i+2=k-1$, then $v_k$ would be adjacent also to $v_{i+1}$. Hence, there is at least one unlabeled vertex $v_j$ with $j>i+2$. Moreover, for every such vertex $v_j$ holds that $l(v_j) < l(v_i)$ and $r(v_j) > l(v_{i+1})$. Hence, if $v_k$ is labeled with R, then $v_k$ is adjacent to $v_{i+1}$. If instead $v_k$ is labeled with L, then we find $D_8$ as a subconfiguration in the submatrix of $A$ induced by $v_k$ and the LR-rows corresponding to $v_{i}$ and $v_{i+1}$.


\subsubcase Let $Q= <v_i, v_{i+1}, v_{i+2}, v_{i+3}>$ be the maximal LR-subpath of $P$. Notice that either 2 are labeled with L and 2 are labeled with R, or $1$ is labeled with L and 3 are labeled with R, or viceversa. Moreover, by Claim \ref{claim:1_case2-1_teo2nested} we know that $v_i$ is labeled with L. Every vertex $v_j$ such that $1<j<i$ or $i+3<j<k$ is an unlabeled vertex. 

Suppose first that $v_i$ is the only vertex in $Q$ labeled with L. Thus, $v_{i+1}$ is the R-block of the LR-row in $A$ corresponding to $v_i$. Hence, either $v_{i+1} \subsetneq v_{i+2}$ or viceversa.
Notice that there is at least one unlabeled vertex $v_j$ between $v_{i+3}$ and $v_k$, for if not $v_k$ is adjacent to $v_{i+1}$ or $v_{i+2}$. Moreover, either $v_j$ is contained in $v_{i+2} \setminus v_{i+3}$ or in $v_{i+3} \setminus v_{i+2}$ for every $j>i+4$. In any case, $v_k$ results adjacent to either $v_{i+2}$ or $v_{i+3}$, which results in a contradiction. 

Hence, at least $v_i$ and $v_{i+1}$ are labeled with L. Suppose that $v_{i+2}$ is labeled with R --and thus $v_{i+3}$ is labeled with R. Notice that, if $v_{i+3} \supsetneq v_{i+2}$, then there is no possibility for $v_k$ for, if $v_k$ is labeled with R, then $v_k$ is adjacent to $v_{i+2}$ and if $v_k$ is labeled with L, then $v_k$ is adjacent to $v_i$ and $v_{i+1}$.
However, the same holds if $v_{i+2} \supsetneq v_{i+3}$ since there is at least one unlabeled vertex $v_j$ with $j>i+3$ and thus for every such vertex holds $l(v_j)>l(v_{i+2})$ and therefore this case is not possible.


Finally, suppose that $v_i$, $v_{i+1}$ and $v_{i+2}$ are labeled with L and thus $v_{i+3}$ is labeled with R. Thus, $v_k$ is labeled with R and is nested in $v_{i+3}$. Moreover, there is a chain of unlabeled vertices $v_j$ between $v_{i+3}$ and $v_k$ such that $v_j$ is nested in $v_{i+3}$ for every $j>i+4$. 
Furthermore, $v_i \subsetneq v_{i+1}$ and $v_i \subseteq v_{i+2} \subsetneq v_{i+1}$: if $i=2$, then $v_2 \subsetneq v_1$ and since $v_3$ and $v_4$ are nonadjacent to $v_1$, then $v_3, v_4 \supseteq v_1$. If instead $i>2$, then for every unlabeled vertex $v_j$ between $v_1$ and $v_i$, $r(v_j) < r (v_i)$, except for $j=i-1$ for which holds $r(v_{i-1}) > r(v_i)$. Hence, since $v_{i+1}$ and $v_{i+2}$ are nonadjacent to every such vertex, then $v_j \subset v_{i+1}, v_{i+2}$ for $1<j<i$. We find $P_0(k-3,i)$ in $A$ 
since the R-block corresponding to $v_i$ is contained in $v_{i+3}$ and thus the R-block intersects the chain of vertices between $v_{i+3}$ and $v_k$. 

We have the following as a consequence of the previous arguments.
\begin{claim} \label{claim:2_case2-1_teo2nested}
Let $v_i$ and $v_{i+1}$ be the first LR-vertices that appear in $P$. If $v_{i+1}$ is also labeled with L, then $v_i \subsetneq v_{i+1}$. Moreover, if $v_{i+2}$ is also an LR-vertex that is labeled with L, then $v_{i+2} \subsetneq v_{i+1}$.
\end{claim}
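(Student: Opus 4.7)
The plan is to deduce both containments in the claim by chaining elementary facts that follow directly from the definition of adjacency in $H(A+)$, together with the observation that any two L-blocks in $A+$ are automatically nested (since each is a maximal prefix of $1$'s starting at the first column of $\Pi$). By Claim \ref{claim:1_case2-1_teo2nested}, $v_i$ is labeled with L, so both $v_i$ and $v_{i+1}$ are L-blocks and one is contained in the other; the task is to identify the direction and to rule out equality.

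To prove $v_i\subsetneq v_{i+1}$, I would split into the cases $i=2$ and $i>2$. For $i=2$, the adjacency $v_1 \sim v_2$ (with $v_1$ a non-LR L-row and $v_2$ an LR-vertex labeled with L) forces, by the rule for LR/non-LR adjacency, $v_2 \subsetneq v_1$; similarly the non-adjacency $v_1 \not\sim v_3$, combined with the fact that two L-blocks are never disjoint, gives $v_1 \subseteq v_3$; chaining yields $v_2 \subsetneq v_3$. For $i>2$, I would use the last unlabeled vertex $v_{i-1}$ in the chain between $v_1$ and the first LR-vertex $v_i$: the adjacency $v_{i-1}\sim v_i$ combined with the non-adjacency $v_{i-1}\not\sim v_{i+1}$ forces $v_i$ to end before $v_{i-1}$ while $v_{i-1}$ is contained in $v_{i+1}$, producing once more $v_i \subsetneq v_{i+1}$ (properness follows because $v_i$ and $v_{i+1}$ are distinct LR-rows that overlap and therefore cannot have identical L-blocks without contradicting admissibility via $D_{11}$ or $D_{12}$).

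The moreover part is entirely symmetric: if $v_{i+2}$ is also an LR-vertex labeled with L, then $v_{i+2}$ is adjacent to $v_{i+1}$ (so they are nested as L-blocks) and nonadjacent to $v_i$. The non-adjacency $v_i \not\sim v_{i+2}$ between two distinct LR-vertices both labeled with L means the corresponding LR-rows do not overlap in $A$ and hence are nested; combined with $v_i \subsetneq v_{i+1}$ this forces $v_{i+2} \subsetneq v_{i+1}$, for the alternative $v_{i+2}\supseteq v_{i+1}$ would imply $v_{i+2}\supsetneq v_i$ and create an adjacency between $v_i$ and $v_{i+2}$ via the same rule for overlapping L-blocks.

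The main obstacle I anticipate is the case $i>2$, where the intermediate unlabeled vertices require a careful bookkeeping of left and right endpoints to guarantee that the containment direction is correctly determined; in particular, one must verify that $v_{i-1}$ (rather than any earlier unlabeled vertex) is the one that actually witnesses the containment $v_i\subsetneq v_{i+1}$, and that properness cannot fail. Everything else is bookkeeping of the adjacency definition and the observation that admissibility rules out the degenerate coincidence of two distinct overlapping LR-rows sharing the same L-block.
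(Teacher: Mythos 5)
Your treatment of the first containment is essentially the argument the paper extracts from the preceding subsubcases: for $i=2$ you chain $v_2\subsetneq v_1\subseteq v_3$, and for $i>2$ you use $v_{i-1}$ as the witness, exactly as the paper does. The step you flag yourself --- why $v_{i-1}$ cannot be \emph{disjoint} from $v_{i+1}$, which would instead give $v_{i+1}\subsetneq v_i$ --- does need to be closed, and it closes by propagating along the unlabeled chain: $v_1\subseteq v_{i+1}$ since they are nonadjacent and both meet the first column; $v_2$ meets $v_1$ and hence meets $v_{i+1}$, so $v_2\subseteq v_{i+1}$ by nonadjacency; inductively $v_j\subseteq v_{i+1}$ for all $2\le j\le i-1$, in particular $v_{i-1}\subseteq v_{i+1}$, and since $r(v_{i-1})>r(v_i)$ this yields $v_i\subsetneq v_{i+1}$. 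So the first half is correct modulo this bookkeeping.

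The ``moreover'' part contains a genuine gap. You argue that $v_{i+1}\subseteq v_{i+2}$ would give $v_i\subsetneq v_{i+2}$ and hence ``an adjacency between $v_i$ and $v_{i+2}$ via the same rule for overlapping L-blocks.'' But two LR-vertices labeled with the same letter are adjacent in $H$ only when the corresponding \emph{LR-rows overlap in $A$}, not when their L-blocks are in strict containment: writing $R_j$ for the LR-row of $A$ corresponding to $v_j$, the situation $v_i\subsetneq v_{i+2}$ is perfectly compatible with $R_i\subseteq R_{i+2}$ as rows, which gives nestedness and hence nonadjacency --- no contradiction. The contradiction must come from the R-blocks. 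Since $R_i$ and $R_{i+1}$ overlap while $v_i\subsetneq v_{i+1}$, the R-block of $R_i$ strictly contains that of $R_{i+1}$ (in particular $R_i$ has a nonempty R-block); if also $v_{i+1}\subseteq v_{i+2}$, the overlap of $R_{i+1}$ and $R_{i+2}$ forces the R-block of $R_{i+1}$ to strictly contain that of $R_{i+2}$. Then $R_i$ has a $1$ in a column of its R-block where $R_{i+2}$ has a $0$, and $R_{i+2}$ has a $1$ in a column of its L-block where $R_i$ has a $0$ (that column cannot lie in the R-block of $R_i$ because the LR-ordering is suitable), so $R_i$ and $R_{i+2}$ overlap, contradicting the nonadjacency of $v_i$ and $v_{i+2}$. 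With this repair the conclusion $v_{i+2}\subsetneq v_{i+1}$ follows (equality of the L-blocks is impossible, since equal L-blocks together with nested R-blocks would make $R_{i+1}$ and $R_{i+2}$ nested and hence nonadjacent).
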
 

\subsubcase 
Let $Q=<v_i, \ldots, v_{i+4}>$ be the maximal LR-subpath of $P$. By Claim \ref{claim:1_case2-1_teo2nested}, $v_i$ is labeled with L. Moreover, either (1) $v_i$ and $v_{i+1}$ are labeled with L and $v_{i+2}$, $v_{i+3}$ and $v_{i+4}$ are labeled with R, or (2) $v_i$, $v_{i+1}$ and $v_{i+2}$ are labeled with L and $v_{i+3}$ and $v_{i+4}$ are labeled with R. It follows from Claim \ref{claim:2_case2-1_teo2nested} that $v_i \subsetneq v_{i+1}$. 

Let us suppose the first statement. If $v_{i+3} \subsetneq v_{i+4}$, then there is at least one unlabeled vertex in $P$ between $v_{i+4}$ and $v_k$, for if not $v_k$ would be adjacent to $v_{i+2}$. Since every vertex $v_j$  for $i+5<j \leq k$ is contained in $v_{i+4} \setminus v_{i+3}$, it follows that $v_k$ is adjacent to $v_{i+2}$ and thus this is not possible. Hence, $v_{i+3} \supsetneq v_{i+4}$. 
Furthermore, $v_{i+3} \supsetneq v_{i+2}$, and since $v_k$ is nonadjacent to $v_{i+2}$, then $v_{i+2} \supsetneq v_{i+4}$. Since there is a sequence of unlabeled vertices between $v_{i+4}$ and $v_k$, then we find $P_2(k,i-2)$ if $v_{i+4}$ is nested in the R-block of $v_{i+2}$, or $P_0(k-3,i-2)$ otherwise.

Suppose now (2), this is, $v_i$, $v_{i+1}$ and $v_{i+2}$ are labeled with L and $v_{i+3}$ and $v_{i+4}$ are labeled with R. By Claim \ref{claim:2_case2-1_teo2nested}, $v_i \subsetneq v_{i+1}$ and $v_{i+2} \subsetneq v_{i+1}$. Furthermore, since $v_k$ is nonadjacent to $v_{i+3}$, it follows that $v_{i+3} \supsetneq v_{i+4}$. In this case, we find $P_2(k,i-2)$ if $v_{i+4}$ is nested in the R-block of $v_{i+2}$, or $P_0(k-3,i-2)$ otherwise. 

\vspace{1mm}
\subsubcase  
Suppose by simplicity that the length of $P$ is 8 (the proof is analogous if $k>8$), and thus let $Q=<v_2, \ldots, v_7>$ be the maximal LR-subpath of $P$ of length 6. Notice that $v_8$ is labeled with R and colored with the same color as $v_1$.
Hence, $v_2$, $v_3$ and $v_4$ are labeled with L and $v_5$, $v_6$ and $v_7$ are labeled with R. By Claim \ref{claim:2_case2-1_teo2nested}, $v_2 \subsetneq v_3$ and $v_4 \subsetneq v_3$. It follows that $v_2 \subsetneq v_4$, since $v_1$ and $v_4$ are nonadjacent. Using an analogous argument, we see that $v_5 \subsetneq v_6$, $v_6 \supsetneq v_5, v_7$ and $v_7 \subsetneq v_5$ for if not it would be adjacent to $v_8$.
Since consecutive LR-vertices are adjacent, the LR-rows corresponding to $v_{i+3}$ and $v_{i+4}$ are not nested, and the same holds for the LR-rows in $A$ of $v_3$ and $v_2$. Since $A$ is admissible, the LR-rows of $v_6$ and $v_3$ are nested. This implies that the L-block of the LR-row corresponding to $v_6$ contains the L-block of $v_4$ and $v_2$.
Moreover, since the LR-rows of $v_7$ and $v_5$ are nested, the LR-rows of $v_6$ and $v_7$ are not and $v_7$ is contained in  $v_6$, then the L-block of $v_7$ contains the L-block of $v_6$. Hence, $v_7$ contains $v_5$ and thus $v_8$ results adjacent to $v_5$, which is a contradiction. 
	
	\vspace{2mm}
 	\subcase \textit{There is an odd induced path $P =  < v_1, v_2, \ldots, v_k >$ such that the only colored vertices are $v_1$ and $v_k$, and they are colored with distinct colors.}
	
	Throughout the previous case proof we did not take under special consideration the parity of $k$, with one exception: when $k=5$ and the maximal LR-subpath has length $2$. In other words, notice that for every other case, we find the same forbidden submatrices of admissibility with the appropriate coloring for those colored labeled rows.

	Suppose that $k=5$, the maximal LR-subpath has length $2$, and suppose without loss of generality that $v_2$ and $v_3$ are the LR-vertices (it is analogous otherwise by symmetry). If both are labeled with L, then $v_2$ is contained in $v_3$ and thus the R-block of $v_2$ properly contains the R-block of $v_3$. Moreover, since $v_4$ is unlabeled and adjacent to $v_5$ --which should be labeled with R since the LR-ordering is suitable--, it follows that there is at least one column in which the R-block of the LR-row corresponding to $v_3$ has a $0$ and $v_5$ has a $1$. Furthermore, there exists such a column in which also the R-block of $v_2$ has a $1$. Since $v_1$ and $v_2$ are adjacent, then $v_2 \subsetneq v_1$ and thus there is also a column in which $v_2$ has a $0$, $v_3$ has a $1$ and $v_1$ has a $1$. Moreover, there is a column in which $v_1$, $v_2$ and $v_3$ have a $1$ and $v_5$ and the R-blocks of $v_2$ and $v_3$ all have a $0$, and an analogous column in which $v_1$, $v_2$ and $v_3$ have a $0$ and $v_5$ and the R-blocks of $v_2$ and $v_3$ have a $1$. It follows that there is $D_{10}$ in $A$ which is not possible since $A$ is admissible. 
	If instead $v_2$ is labeled with L and $v_3$ is labeled with R, then $v_2$ and $v_3$ are the L-block and R-block of the same LR-row $r$ in $A$, respectively. We can find a column in $A$ in which $v_1$ and $r$ have a $1$ and the other rows have a $0$, a column in which only $v_1$ has a $1$, a column in which only $v_4$ has a $1$ (notice that $v_4$ is unlabeled), and a column in which $r$, $v_4$ and $v_5$ have a $1$ and $v_1$ has a $0$. It follows that there is $P_0(4,0)$ in $A$, which results in a contradiction.

 	\vspace{2mm}
 	\subcase \textit{There is an induced uncolored odd cycle $C$ of length $k$.} 

	If every vertex in $C$ is unlabeled, then the proof is analogous as in case \ref{case:teo2nested_case1}, where we considered that there are no labeled vertices of any kind.
	
	Suppose there is at least one LR-vertex in $C$. Notice that there no labeled vertices in $C$ corresponding to rows in $A$ labeled with L or R, which are the only colored rows in $A+$.
		
	Suppose $k = 3$. If 2 or 3 vertices in $C$ are LR-vertices, then there is either $D_7$, $D_8$, $D_9$, $D_{11}$, $D_{12}$, $D_{13}$ or $S_7(3)$. If instead there is exactly one LR-vertex and since every uncolored vertex corresponds either to an unlabeled row or to an LR-row, then we find $F'_0$ in $A$.
		
	Suppose that $k\geq 5$ and let $C = v_1, v_2, \ldots, v_k $ be an uncolored odd cycle of length $k$. Suppose first that there is exactly one LR-vertex in $C$. We assume without loss of generality by symmetry that $v_1$ is such LR-vertex and that $v_1$ is labeled with L in $A+$.
	
 	Hence, either $v_j$ is nested in $v_1$, or $v_j$ is disjoint with $v_1$, for every $j=3, \ldots, k-1$.
 	If $v_j$ is nested in $v_1$ for every $j=3, \ldots, k-1$, since $v_k$ is adjacent to $v_1$ and nonadjacent to $v_j$ for every $j=3, \ldots, k-1$, then $l(v_k) < l(v_{k-2}) < l(v_{k-3}) < \ldots < l(v_2)<r(v_1)$ and $r(v_k) > r(v_1)$. Hence, we either find $F_1(k)$ or $F'_1(k)$ induced by the columns $l(v_{k-1}), \ldots, r(v_k)$. 
 	 	
 	If instead $v_j$ is disjoint with $v_1$ for all $j=3, \ldots, k-1$, then $v_j$ is nested in $v_k$ for every $j=3, \ldots, k-2$. In this case, we find $F_2(k)$ or $F'_2(k)$ induced by the columns $l(v_k)-1, \ldots, r(v_{k-1})$. 
 	
 	Now we will see what happens if there is more than one LR-vertex in $C$. First we need the following Claim.
 	
\begin{claim} \label{claim:5_claim_2-nested}
	If $v$ and $w$ in $C$ are two nonadjacent consecutive LR-vertices, then there is one sense of the cycle for which there is exactly one unlabeled vertex between $v$ and $w$.
\end{claim}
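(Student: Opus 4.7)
The plan is to derive a contradiction from assuming that the number of unlabeled vertices $l$ strictly between $v$ and $w$, in the direction where no other LR-vertex of $C$ appears, is at least $2$, and to rule out $l=0$ by a direct adjacency argument. If $l=0$, then $v$ and $w$ are consecutive in $C$, hence adjacent in $H$, contradicting the hypothesis of nonadjacency; it therefore suffices to show that $l\leq 1$ in at least one of the two senses of the cycle.

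First, I would split according to the labels of $v$ and $w$ in $A+$. Two LR-vertices coming from the same LR-row of $A$ are adjacent in $H$ by the second adjacency rule, so $v$ and $w$ must come from two distinct LR-rows $r_v$ and $r_w$. If $v$ and $w$ share their label, say both are labeled L, then the corresponding L-blocks both start at the first column of the suitable LR-ordering $\Pi$ provided by Theorem \ref{teo:hay_suitable_ordering}, so they are automatically nested; nonadjacency between two LR-vertices with the same label then forces strict nesting (not overlapping), say the L-block of $r_v$ is properly contained in that of $r_w$. If $v$ is an L-vertex and $w$ is an R-vertex (the symmetric case being analogous), then by the definition of a suitable ordering the L-block of $r_v$ sits entirely to the left of the R-block of $r_w$ in $\Pi$.

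The core step is then to analyse the chain of unlabeled vertices $u_1,\ldots,u_l$ between $v$ and $w$: consecutive $u_i$'s correspond to overlapping rows (their submatrix contains a gem), $u_1$ overlaps $v$ without being contained in it, $u_l$ overlaps $w$ without being contained in it, and each intermediate $u_i$ (for $1<i<l$) is nonadjacent to both $v$ and $w$, hence either disjoint from or contained in each of them. Assuming $l\geq 2$, I would read off, from $\Pi$, the column positions of the $1$'s of $u_1$, $u_2$, $v$ and $w$; combining these rows together with (where needed) the two distinguished all-ones rows of $A^*_{\tagg}$, I expect to exhibit a subconfiguration from the family $\mathcal{P}$, typically $P_0(k,l)$ or $P_1(k,l)$ depending on whether $r_v$ and $r_w$ are nested or overlap, or from $\mathcal{S}$, or one of the obstructions in $\mathcal{M}$ in $A^*$, all of which are forbidden under the partially $2$-nested hypothesis.

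The main obstacle I foresee is the case in which $v$ is L-labeled and $w$ is R-labeled coming from two \emph{overlapping} LR-rows $r_v$ and $r_w$: there, the L-block of $r_v$ and the R-block of $r_w$ lie at opposite ends of $\Pi$, but $r_v$ and $r_w$ themselves may share many columns, and one must choose the correct subset of columns of the chain so that the resulting submatrix is precisely a member of $\mathcal{P}$ rather than a larger pattern already covered by previous forbidden configurations. The admissibility of $A$ (in particular the absence of $D_5$, $D_9$ and $D_{10}$), together with the suitability of $\Pi$ (which forbids $M_0$ and $S_0(k)$), should give enough rigidity to locate the intended forbidden subconfiguration and close the argument.
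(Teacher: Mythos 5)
Your setup — ruling out $l=0$ and describing the nested/disjoint structure of the chain — is fine, but the core step cannot work, for two reasons. First, the target family is wrong: every matrix in $\mathcal{P}$ has a pre-colored L-row and a pre-colored R-row, whereas the cycle $C$ in this case is \emph{uncolored}, so its labeled vertices are all LR-vertices (blocks of LR-rows of $A$, uncolored in $A+$); no member of $\mathcal{P}$ can be assembled from them. Second, and more fundamentally, the statement you are actually aiming at — that the ``consecutive'' sense of the cycle contains at most one unlabeled vertex — is false, so no local argument on $v$, $w$ and the chain between them can succeed. Concretely, take $A$ with seven columns, two LR-rows with empty R-blocks whose L-blocks are $[1,2]$ and $[1,5]$, and three unlabeled rows $[2,4]$, $[3,6]$, $[2,7]$. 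One checks with the adjacency rules for $H(A+)$ that these five vertices induce an uncolored $5$-cycle in which the two LR-vertices are nonadjacent and consecutive, separated by \emph{two} unlabeled vertices in one sense; the claim is satisfied only because the other sense contains exactly one (namely $[2,7]$). The four rows consisting of the two LR-vertices and the two intermediate unlabeled vertices form a perfectly realizable pattern and are not a forbidden subconfiguration; the obstruction in this instance (an $F'_1(5)$) requires the fifth row $[2,7]$.

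This is precisely why the paper's proof is global rather than local: assuming two unlabeled vertices $v_2,v_3$ between $v_1$ and $v_4$, it examines the remaining vertices $v_5,\dots,v_k$ of the cycle and either shows the cycle cannot close up ($v_k$ cannot be adjacent to $v_1$ because every vertex after $v_4$ gets trapped inside $v_3$), or extracts a forbidden configuration that necessarily uses a vertex outside the chain — $F_2(5)$ or $S_7(4)$ via $v_5$ when $k=5$, and analogous configurations via $v_k$ or $v_{k-1}$ when $k>5$. To salvage your approach you would have to bring the complementary arc of the cycle into the argument, i.e., actually use the hypothesis that the other sense does \emph{not} consist of a single unlabeled vertex; as written, your proposal never invokes it.
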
 	

If $k=5$, then we have to see what happens if $v_1$ and $v_4$ are such vertices and $v_5$ is an LR-vertex. We are assuming that $v_2$ and $v_3$ are unlabeled since by hypothesis $v_1$ and $v_4$ are consecutive LR-vertices in $C$. Suppose that $v_1$ and $v_4$ are labeled with L and for simplicity assume that $v_1 \subsetneq v_4$. Thus, $v_5$ is labeled with L, for if not $v_5$ can only be adjacent to $v_1$ or $v_4$ and not both. Moreover, since $v_5$ is nonadjacent to $v_2$, then $v_5$ is contained in $v_1$ and $v_4$. In this case, we find $F_2(5)$ as a subconfiguration in $A$.

If instead $v_1$ is labeled with L and $v_4$ is labeled with R, then $v_5$ is the L-block of the LR-row corresponding to $v_4$. In this case, we find $S_7(4)$ as a subconfiguration of $A_\tagg$.

Let $k>5$, and suppose without loss of generality that $v_1$ and $v_4$ are such LR-vertices. Thus, by hypothesis, $v_2$ and $v_3$ are unlabeled vertices. 
Suppose first that $v_1$ and $v_4$ are labeled with L and $v_1 \subsetneq v_4$. Then $l(v_2) < l(v_3)$. 
If $v_j$ is unlabeled for every $j>4$, then $v_j$ is nested in $v_3$ and thus $v_k$ cannot be adjacent to $v_1$. 
Moreover, for every $j>4$, $v_j$ is not an LR-vertex labeled with L either. Suppose to the contrary that $v_5$ is an LR-vertex labeled with L. Since $v_5$ is adjacent to $v_4$ and the LR-rows corresponding to $v_1$ and $v_4$ are nested, then $v_5$ is also adjacent to $v_1$, which is not possible since we are assuming that $k>5$. If instead $j>5$, since there is a sequence of unlabeled vertices between $v_4$ and $v_j$, then $r(v_j)>l(v_3)$ and thus it is adjacent to $v_3$.
By an analogous argument, we may assert that $v_j$ is not an LR-vertex for every $j>4$. The proof is analogous if $v_1 \supsetneq v_4$.

Thus, let us suppose now that $v_1$ is labeled with L and $v_4$ is labeled with R. If $v_5$ is the L-block of the LR-row corresponding to $v_4$, since $v_2$ and $v_5$ are nonadjacent, then $r(v_5)< l(v_2)$ and hence $v_5 \subsetneq v_1$. 
Moreover, $v_6$ is not an LR-vertex for in that case $v_6$ must be labeled with L and thus $v_6$ is also adjacent to $v_1$. Furthermore, since at least $v_6$ is an unlabeled vertex, then every LR-vertex $v_j$ in $C$ with $j>4$ is labeled with L, for if not $v_j$ is either adjacent to $v_4$ or nonadjacent to $v_6$ (or the maximal sequence of unlabeled vertices in $C$ that contains $v_6$).
Thus, we may assume that there no other LR-vertices in $C$, perhaps with the exception of $v_k$. 
However, if $v_k$ is an LR-vertex labeled with L, since it is adjacent to $v_1$, then it is also adjacent to $v_5$. And if $v_k$ is unlabeled, then $v_k$ is adjacent to $v_2$, $v_3$ or $v_4$ ($v_k$ must contain this vertices so that it results nonadjacent to them, but $v_4$ is the limit since $v_4$ is labeled with R and thus it ends in the last column).

Analogously, if $v_5$ is unlabeled, then $v_k$ is nonadjacent to $v_1$ since it must be contained in $v_3$. Finally, if $v_5$ is an LR-vertex labeled with R, then it is contained in $v_4$. Thus, the only possibility is that $v_{k-1}$ is an LR-vertex labeled with R and $v_7$ is the L-block of the corresponding LR-row. However, since $A$ is admissible, either $v_6$ is nested in $v_5$ or $v_6$ is nested in $v_4$. In the first case, it results also adjacent to $v_4$ and in the second case it results nonadjacent to $v_5$, which is a contradiction. 
Notice that the arguments are analogous if the number of unlabeled vertices in both senses of the cycle is more than $2$. Therefore, there is one sense of the cycle in which there is exactly one unlabeled vertex between any two nonadjacent consecutive LR-vertices of $C$. \QED

This Claim follows from the previous proof.

\begin{claim} \label{cor:5_claim_2-nested}
	If $C$ is an odd uncolored cycle in $H$, then there are at most two nonadjacent LR-vertices.
\end{claim}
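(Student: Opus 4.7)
The plan is to derive this corollary directly from the arguments used in the proof of Claim \ref{claim:5_claim_2-nested}. I interpret the statement as follows: among all pairs of consecutive LR-vertices (i.e., pairs with no LR-vertex strictly between them along $C$ in one direction), at most two such pairs can be nonadjacent in $H$. Toward a contradiction, I would assume that there are three pairs $(u_1,w_1)$, $(u_2,w_2)$, $(u_3,w_3)$ of consecutive LR-vertices in $C$ that are pairwise not adjacent in $H$, and show that this forces one of the forbidden subconfigurations (from $\mathcal{D}$, $\mathcal{F}$, or $\mathcal{S}$) to appear.

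First I would apply Claim \ref{claim:5_claim_2-nested} to each of the three pairs: each such pair has, in one sense of $C$, exactly one unlabeled vertex between $u_t$ and $w_t$. This carves $C$ into three ``short arcs'' each of length $2$ (two LR-vertices separated by one unlabeled vertex) alternating with three ``long arcs'' made up of the remaining runs of adjacent LR-vertices. Next I would case-split on the labels of $u_t,w_t$ in $A+$ (each is L or R), and on whether $u_t,w_t$ arise from the two blocks of the same LR-row of $A$ or from distinct LR-rows. In each subcase the argument would parallel the corresponding subcase in the proof of Claim \ref{claim:5_claim_2-nested}: if $u_t,w_t$ are labeled with the same letter, the unique unlabeled vertex between them together with the surrounding LR-vertices reproduces either $F_2(5)$, $F'_2(5)$ or $D_{13}$; if they are labeled with different letters, we recover $S_7(4)$ (or its extension via the unlabeled separator) in $A_\tagg$.

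The key structural observation I would exploit is Remark \ref{rem:A_LR_2nested}: since $A$ is admissible, the LR-rows of $A$ split into two subsets $S_1,S_2$ such that each $S_i$ is totally ordered by containment. Thus among the at most six LR-rows underlying the six vertices $u_1,w_1,u_2,w_2,u_3,w_3$, at least four must lie in the same subset $S_i$, hence be pairwise nested. By examining how nested LR-rows interact with the labels L and R and with the mandatory unlabeled vertex between each pair, this nestedness would force two of the three alleged nonadjacent pairs to be actually adjacent in $H$ (same letter, overlapping), contradicting the assumption.

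The main obstacle will be the combinatorial bookkeeping across the three pairs simultaneously: in Claim \ref{claim:5_claim_2-nested} only a single pair had to be analyzed and only in the presence of unlabeled vertices on both sides, whereas here the three short arcs are interlocked around a cycle, and one must rule out all label distributions uniformly. I expect the cleanest route is to reduce to the case of three LR-vertices that are pairwise nonadjacent (by choosing one vertex from each pair, using that adjacent LR-vertices in $C$ within a run all share the same letter and are nested), and then to invoke directly one of $D_{11}$, $D_{12}$, $D_{13}$ together with the forced label configuration, obtaining the desired contradiction.
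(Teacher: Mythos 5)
Your scaffolding matches the paper's: Claim \ref{claim:5_claim_2-nested} pins the configuration down to LR-vertices in positions $1,3,5$ with single unlabeled vertices interleaved, and the reduction to ``all three LR-vertices carry the same letter'' is exactly how the paper disposes of the mixed-label cases. The genuine gap is in your closing move. You propose to finish by exhibiting $D_{11}$, $D_{12}$ or $D_{13}$ from three pairwise nonadjacent LR-vertices, but this is backwards: for two LR-vertices labeled with the same letter, nonadjacency in $H$ means the underlying LR-rows of $A$ are \emph{nested}, and three pairwise nested LR-rows are precisely what admissibility \emph{allows} (property \ref{item:9_def_adm} forbids three LR-rows no two of which are nested, which is the opposite situation). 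So the hypothesis of three mutually nonadjacent LR-vertices produces, by itself, nothing forbidden, and no case analysis on labels will conjure a $\mathcal{D}$-, $\mathcal{F}$- or $\mathcal{S}$-configuration out of it alone.

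The contradiction must come from the cycle, not from the matrix. The paper chases containments around $C$: the nonadjacencies $v_1\not\sim v_3$ and $v_3\not\sim v_5$ force nested L-blocks, hence containments among the complementary R-blocks, and the edge closing the induced cycle together with the adjacencies of the interleaved unlabeled vertices then forces the R-blocks of $v_3$ and $v_5$ to overlap, i.e.\ forces $v_3\sim v_5$ after all --- a direct contradiction with no forbidden submatrix needed. Your one sentence about nestedness ``forcing two of the three alleged nonadjacent pairs to be actually adjacent'' is in fact the correct mechanism, but you present it as a fallback and never develop it; as written, your primary plan does not close. A secondary issue: you set the statement up as a bound on the number of nonadjacent \emph{pairs} of consecutive LR-vertices (up to six distinct vertices), whereas the claim as proved and as used later concerns a single third LR-vertex extending a nonadjacent pair; the collapse from six vertices to three is only legitimate because Claim \ref{claim:5_claim_2-nested} forces the pairs to interlock in positions $1,3,5$, and that step should be made explicit rather than assumed.
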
 
	
 	Suppose that $v_1$ and $v_i$ are consecutive nonadjacent LR-vertices, where $i>2$. It follows from Claim \ref{claim:5_claim_2-nested} that $i=3$ or $i=k-1$. We assume the first without loss of generality, and suppose that $v_1$ is labeled with L.
 	Suppose there is at least one more LR-vertex nonadjacent to both $v_1$ and $v_3$, and let $v_j$ be the first LR-vertex that appears in $C$ after $v_3$. It follows from Claim \ref{claim:5_claim_2-nested} that $j=5$. 	
 	If $v_1$ and and $v_3$ are labeled with distinct letters, since $v_4$ is an unlabeled vertex, then $v_4$ is contained in $v_2$, and thus $v_5$ cannot be labeled with L or R for, in either case, it would be adjacent to $v_2$.
 	Thus, every LR-vertex in $C$ must be labeled with the same letter. Let us assume for simplicity that $k=5$ (the proof is analogous for every odd $k>5$) and that $v_1 \subset v_3$. Since $v_5$ is nonadjacent to $v_3$, then the corresponding LR-rows are nested. The same holds for $v_1$ and $v_3$. Moreover, $v_5$ contains both $v_1$ and $v_3$, and the R-block of $v_3$ contains the R-block of $v_1$. Furthermore, since $v_1$ and $v_5$ are adjacent, the R-block of the LR-row corresponding to $v_1$ contains the R-block of the LR-row corresponding to $v_5$ and thus the R-block of $v_3$ also contains the R-block of $v_5$, which results in $v_3$ and $v_5$ being adjacent and thus, in a contradiction that came from assuming that there is are at least three nonadjacent LR-vertices in $C$. \QED
 	
\vspace{2mm}
	We now continue with the proof of the case. 
	Notice first that, as a consequence of the previous claim and Claim \ref{claim:2_teo2nested}, either there are exactly two nonadjacent LR-vertices in $C$ or every LR-vertex is contained in a maximal LR-subpath of length at most $6$.
	 
	\subsubcase Suppose there are exactly two LR-vertices in $C$ and that they are nonadjcent. Let $v_1$ and $v_3$ be such LR-vertices. Suppose without loss of generality that $v_1 \subset v_3$. Hence, every vertex that lies between $v_3$ and $v_1$ is nested in $v_2$, since they are all unlabeled vertices by hypothesis. 	Thus, if $v_1$ and $v_3$ are both labeled with L, then we find $F_1(k)$ contaned in the submatrix induced by the columns $r(v_1), \ldots, r(v_2)$.
	If instead $v_1$ is labeled with L and $v_3$ is labeled with R, then we find $F_2(k)$ contained in the same submatrix. 

	
	\subsubcase Suppose instead that $v_1$ and $v_2$ are the only LR-vertices in $C$. If $v_1$ and $v_2$ are the L-block and R-block of the same LR-row, then we find $S_8(k-1)$ in $A$. 
	If instead they are both labeled with L, then every other vertex $v_j$ in $C$ is unlabeled and $v_j$ is nested in $v_1$ or $v_2$ for every $j>3$, depending on whether $v_1 \subsetneq v_2$ or viceversa. Suppose that $v_1 \subsetneq v_2$. If there is a column in which both $v_3$ and the R-block of $v_1$ have a $1$, then we find $S_8(k-1)$ in $A$. If there is not such a column, then we find $F_2(k)$ in $A$.
	
 	\vspace{1mm}
	\subsubcase Suppose that the maximal LR-subpath $Q$ in $C$ has length $3$, and suppose $Q= <v_1, v_2, v_3>$.
	If $v_1$, $v_2$ and $v_3$ are labeled with the same letter, then either $v_2 \subsetneq v_1, v_3$ or $v_2 \supsetneq v_1, v_3$, and since $v_1$ and $v_3$ are nonadjacent if $k>3$, either $v_3 \subsetneq v_1$ or $v_1 \subsetneq v_3$. Suppose without loss of generality that all three LR-vertices are labeled with L, $v_2 \subsetneq v_1, v_3$ and $v_1 \subsetneq v_3$. In this case, there is a sequence of unlabeled vertices between $v_3$ and $v_1$ such that the column index of the left endpoints of the vertices decreases as the vertex path index increases. As in the previous case, if there is a column such that the R-block of $v_2$ and $v_4$ have a $1$, then we find $S_8(k-1)$ in $A$ contained in the submatrix induced by the columns $r(v_1), \ldots, l(v_1)$. If instead there is not such column, then we find $F_2(k)$ contained in the same submatrix.
	
	If $v_1$ and $v_2$ are labeled with L and $v_3$ is labeled with R, then there is a sequence of unlabeled vertices $v_4, \ldots, v_k$ such that the column index of the left endpoints of such vertices decreases as the path index increases. Moreover, since $v_k$ is adjacent to $v_1$ and nonadjacent to $v_2$, then $v_1 \supsetneq v_2$. Hence, we find $S_7(k-1)$ contained in the submatrix of $A$ induced by the columns $r(v_1), \ldots, l(v_1)$.	 
	 	
 	\vspace{1mm}
 	\subsubcase Suppose that the maximal LR-subpath $Q$ in $C$ has length $4$ and that $Q= <v_1$, $v_2$, $v_3$, $v_4>$.
 	Suppose that $v_1$ and $v_2$ are labeled with L and $v_3$ and $v_4$ are labeled with R. If $v_1 \subsetneq v_2$, then $v_k$ cannot be adjacent to $v_1$. 
 	Thus $v_2 \subsetneq v_1$ and $v_4 \supsetneq v_3$. Since there is a chain of unlabeled vertices and its left endpoints decrease as the cycle index increases, then we find $S_7(k-1)$ considering the submatrix induced by every row in $A$.
	If instead $v_1$ is labeled with L and the other three LR-vertices are labeled with R, then first let us notice that $v_2$ is the R-block of $v_1$, the LR-rows of $v_2$ and $v_4$ are nested and $v_3 \subsetneq v_2,v_4$. Moreover, $v_2 \subsetneq v_4$, for if not $v_k$ would not be adjacent to $v_1$. Thus, the left endpoint of the chain of unlabeled vertices between $v_4$ and $v_1$ decreases as the cycle index increases. Hence, if $k=5$, then we find $S_7(3)$ induced by the LR-rows corresponding to $v_3$ and $v_4$ and the unlabeled row corresponding to $v_5$. Suppose that $k>5$. Since $v_3 \subsetneq v_2$ and $v_2$ is the R-block of $v_1$, then the L-block of the LR-row corresponding to $v_3$ contains both $v_1$ and the L-block of $v_4$. We find $S_7(k-3)$ in $A$ as a subconfiguration of the submatrix induced by the rows $v_3, v_4, \ldots, v_{k-1}$.	
 	
 	\subsubcase Suppose now that $Q=<v_1, \ldots, v_5>$ is the longest LR-subpath in $C$, and suppose that $v_1$ and $v_2$ are labeled with L and that the remaining rows in $Q$ are labeled with R. Since $v_1$ is adjacent to $v_k$, then $v_1 \supsetneq v_2$ and $v_5 \supsetneq v_4,v_3$. Since the LR-rows corresponding to $v_3$ and $v_5$ are nested, then $v_2$ is contained in the L-block corresponding to $v_5$, and since $v_4 \subsetneq v_5$, the R-block of $v_1$ is also contained in $v_5$. Thus, we find $S_7(k-3)$ in $A$ as a subconfiguration considering the LR-rows corresponding to $v_1$ and $v_5$ and $v_6, \ldots, v_k$.
 	The proof is analogous if $Q$ has length $6$, and thus this case is finished. 
 	
 	\vspace{1mm}
 	\subcase \textit{There is an induced odd cycle $C = v_1, v_2, \ldots, v_k, v_1$ with exactly one colored vertex. }
 	We assume without loss of generality that $v_1$ is the only colored vertex in the cycle $C$, and that $v_1$ is labeled with L.
 	Notice that, if there are no LR-vertices in $C$, then the proof is analogous as in the case in which we considered that there are no labeled vertices of any kind. Hence, we assume there is at least one LR-vertex in $C$.

\begin{claim} \label{claim:6_claim_2-nested}
	If there is at least one LR-vertex $v_i$ in $C$ and $i \neq 2$, then $v_i$ is the only LR-vertex in $C$.
\end{claim}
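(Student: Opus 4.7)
The plan is to show that introducing a second LR-vertex into $C$ alongside $v_i$ (when $i \neq 2$) always produces an obstruction already ruled out either by partial $2$-nestedness, by admissibility, or by the suitability of $\Pi$. The argument proceeds by first extracting the positional constraint that the colored L-labeled vertex $v_1$ imposes on any LR-vertex non-adjacent to $v_1$ in $C$, and then chaining this constraint along $C$ to any putative second LR-vertex.

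First I would reduce to the case $2 < i < k$: if $i = k$ then the cycle orientation $v_1, v_k, v_{k-1}, \ldots, v_2$ puts the LR-vertex at position $2$ from the other side, so the interesting case is precisely the LR-vertices of $C$ that are non-adjacent to $v_1$. Since $C$ is induced, such a $v_i$ is non-adjacent to $v_1$ in $H$. By the adjacency rule between an LR-vertex and a non-LR-vertex in $H(A+)$, together with the suitability of $\Pi$ (which forces every L-block of a non-LR-row to be disjoint from every R-block of an LR-row and vice versa), this non-adjacency splits into exactly two structural options: either $v_i$ is labeled L in $A+$ and $v_1 \subsetneq v_i$, or $v_i$ is labeled R in $A+$ and $v_i$ is disjoint from $v_1$.

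Next I would suppose toward a contradiction that a further LR-vertex $v_j$ appears in $C$. Applying Claim \ref{claim:5_claim_2-nested} and Claim \ref{cor:5_claim_2-nested} restricts the pattern of unlabeled vertices between consecutive LR-vertices, so I would split into three configurations: (a) $v_j$ is adjacent to $v_i$ in $C$, so $v_i, v_j$ lie in a common maximal LR-subpath of length at least $2$; (b) $v_i$ and $v_j$ are consecutive non-adjacent LR-vertices in $C$, separated by a single unlabeled vertex in one sense of the cycle; (c) $v_j \in \{v_2, v_k\}$ is adjacent to $v_1$ in $C$. In every configuration I would examine the admissible labels of $v_j$ in $A+$, combine them with the option carried by $v_i$, and push the forced block containments along the path between $v_j$ and $v_i$ (which is a sequence of unlabeled vertices with monotonically shifting left endpoints, exactly as in the earlier cases). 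The targeted contradictions are the admissibility forbidden matrices $D_7$, $D_8$, $D_9$, $D_{10}$ when the two LR-vertices interact tightly with $v_1$, members of $\mathcal{S}$ and $\mathcal{P}$ when a long unlabeled chain is involved, and the matrices $F_1(k)$, $F_2(k)$, $F'_1(k)$, $F'_2(k)$ when the cycle forces an alternating gem/weak-gem pattern reaching the labeled row $v_1$. Crucially, $v_1$ being colored and L-labeled is what upgrades the gem configurations to their labeled variants $F'_1(k)$, $F'_2(k)$ in the relevant subcases.

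The main obstacle will be the bookkeeping in configuration (c): when $v_j \in \{v_2, v_k\}$ is an LR-vertex, the edge $v_1 v_j$ forces $v_j$ to be labeled L in $A+$ with its block strictly containing $v_1$, and we must simultaneously coordinate this with the structural option carried by $v_i$ across the (possibly long) unlabeled portion of $C$ joining them. The delicate step is to verify that, when $v_i$ is R-labeled in $A+$, the disjointness between L-blocks of non-LR-rows and R-blocks of LR-rows (a consequence of suitability) forces the unlabeled vertices on the $v_i$-side of $C$ to nest in the R-block of the LR-row of $v_i$, which combined with the nesting on the $v_j$-side inside the L-block of the LR-row of $v_j$ yields one of the $\mathcal{P}$ matrices or, when the two LR-rows overlap, a badly-colored doubly-weak gem ruled out by partial $2$-nestedness. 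Once every subcase of (a), (b), (c) has been closed in this way, no second LR-vertex can coexist with $v_i$, completing the proof.
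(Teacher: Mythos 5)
Your overall strategy --- pin down how $v_i$ sits relative to the colored L-row $v_1$, then chase containments along the unlabeled portion of $C$ until a forbidden subconfiguration of $A$ appears --- is the same one the paper uses, and in one respect you are more careful than the paper: you treat both the case where $v_i$ is an L-block and the case where it is an R-block of its LR-row, whereas the paper's written proof only works out the L-labeled case. The paper's second half is leaner than yours, though: it takes $v_i$ of \emph{minimum} index and uses that minimality to force any further LR-vertex $v_j$ to sit at position $j=i+1$ (all unlabeled vertices between $v_i$ and $v_j$ would be nested in $v_{i-1}$, so a later LR-vertex would be adjacent to one of them), after which the contradiction is either an adjacency violation involving $v_k$ and $v_1$ or an explicit $F_2(k)$. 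Your three-way positional split reaches similar configurations but at the cost of importing Claim~\ref{claim:5_claim_2-nested} and Claim~\ref{cor:5_claim_2-nested}, which were proved for \emph{uncolored} odd cycles; they do not automatically transfer to the present case, where one of the two arcs between consecutive LR-vertices may pass through the colored vertex $v_1$, so you would have to re-justify them here.

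There are two concrete errors. First, in configuration (c) you assert that the edge $v_1v_j$ forces the L-block of $v_j$ to \emph{strictly contain} $v_1$. This is backwards: by the adjacency rule of $H(A+)$, an LR-vertex and a non-LR-vertex are adjacent only if the non-LR row is \emph{not} contained in the LR-block, and since $v_1$ and the L-block of $v_j$ both begin in the first column of the suitable LR-ordering they are nested; adjacency therefore forces $v_j\subsetneq v_1$, exactly as in the paper's proof of Claim~\ref{claim:7_claim_2-nested}. The ``coordination across the unlabeled portion'' you describe in (c) is built on the reversed inclusion, so the containments pushed from the $v_j$-side go the wrong way and the forbidden configurations you name would not materialize as described. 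Second, a badly-colored doubly-weak gem is not an available contradiction at this stage: partial $2$-nestedness excludes such gems only among \emph{pre-colored} rows, and nonempty LR-rows are uncolored in the pre-coloring, so two overlapping LR-rows of $C$ violate nothing yet --- they merely constrain the eventual coloring. The contradiction must instead be a label-level forbidden subconfiguration of $A$ itself (a matrix of $\mathcal{D}$, $\mathcal{S}$, $\mathcal{P}$ or $\mathcal{F}$), which is what the paper produces. With the inclusion corrected and the gem target replaced, your plan would likely collapse back onto the paper's argument.
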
 

	Let $v_i$ be the LR-vertex in $C$ with the minimum index, and suppose first that $v_i$ is labeled with L. Since $i \neq 2$ and $v_1$ is a non-LR row in $A$, then $v_i \supseteq v_1$, for if not they would be adjacent. Moreover, $v_l \subset v_i$ for every $l < i-1$. 
	Toward a contradiction, let $v_j$ the first LR-vertex in $C$ with $j >i$ and suppose $v_j$ is labeled with L. Notice that the only possibility for such vertex is $j=i+1$. This follows from the fact that, if $v_{i+1}$ is unlabeled, then $v_{i+1}$ is contained in $v_{i-1}$, and the same holds for every unlabeled vertex between $v_i$ and $v_j$. Hence, if there was other LR-vertex $v_j$ labeled with L such that $j>i+1$, then it would be adjacent to $v_{i+1}$ which is not possible.		
	Then, $j= i+1$ and thus $v_j$ contains $v_l$ for every $l \leq i$. However, $v_k$ and $v_1$ are adjacent, and since $v_k$ must be an unlabeled vertex, then $v_k$ is not disjoint with $v_i$, which results in a contradiction. 
	
	Suppose that instead $v_j$ is labeled with R. Using the same argument, we see that, if $j>i+1$, then every unlabeled vertex between $v_i$ and $v_j$ is contained in $v_{i-1}$ and thus it is not possible that $v_j$ results adjacent to $v_{j-1}$ if it is unlabeled. Hence, $j=i+1$. Moreover, there must be at least one more LR-vertex labeled with R since if not, it is not possible for $v_1$ and $v_k$ to be adjacent. Thus, $v_{k-1}$ must be labeled with R and $v_k$ is the L-block of the LR-row corresponding to $v_{k-1}$. Furthermore, $v_{k-1}$ is contained in $v_1$. We find $F_2(k)$ in $A$ as a subconfiguration in the submatrix induced by considering every row. 
	Therefore, $v_i$ is the only LR-vertex in $C$. \QED 	

	The following is a straightforward consequence of the previous proof and the fact that, if $v_i$ is the first LR-vertex in $C$ and $i >2$, then every unlabeled vertex that follows $v_i$ is nested in $v_{i-1}$, thus if $v_1$ is adjacent to $v_k$ then $v_k$ must be nested in $v_2$.
	
\begin{claim} \label{claim:6_claim_2-nested_2}
	If $v_i$ in $C$ is an LR-vertex and $i \neq 2$, then $i=3$.
\end{claim}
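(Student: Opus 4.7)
The plan is to argue by contradiction: assume that $v_i$ is an LR-vertex of $C$ with $i \neq 2$ but $i \geq 4$. By Claim \ref{claim:6_claim_2-nested}, $v_i$ is then the unique LR-vertex of $C$, so the vertices $v_2,\ldots,v_{i-1}$ and $v_{i+1},\ldots,v_k$ are all non-LR (unlabeled) rows. Since $i \geq 4$ and $v_1$ is colored, the unlabeled vertex $v_{i-1}$ is not a neighbour of $v_1$ in $C$, so $v_1$ and $v_{i-1}$ are either disjoint or nested. The goal of the proof is to show that both alternatives are incompatible with the cycle closing through $v_k$, so $i=4$ (and hence $i\geq 4$) is impossible.

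First I would establish that the unlabeled chain $v_1,v_2,\ldots,v_{i-1}$ has a strictly monotone structure. Because $v_1$ is an L-row, its $1$'s occupy the interval $[1,r_1]$. The adjacency of $v_2$ with $v_1$ is a weak gem between an L-row and a U-row, which (since $v_2$ is neither nested in nor disjoint from $v_1$) forces $l_2>1$ and $r_2>r_1$. For $j\in\{3,\ldots,i-1\}$, I would argue inductively that, since $v_j$ is unlabeled, adjacent to $v_{j-1}$, and nonadjacent to $v_1,\ldots,v_{j-2}$, the same kind of reasoning used in Case~\ref{case:teo2nested_case1} of the main theorem applies: the option of $v_j$ being nested in $v_1$ is ruled out by $r_{j-1}>r_1$, and the option of $v_j$ being nested in some $v_l$ with $l<j-1$ is ruled out by $v_j$ being adjacent only to $v_{j-1}$ in $\{v_1,\ldots,v_{j-1}\}$. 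Consequently $l_j > l_{j-1}$ and $r_j > r_{j-1}$, and in particular $l_{i-1}>r_1$, so $v_{i-1}$ is disjoint from $v_1$.

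At this point I would invoke the consequence noted just after Claim \ref{claim:6_claim_2-nested} (obtained there from the fact that every unlabeled vertex following $v_i$ is nonadjacent to $v_{i-1}$ and adjacent to its successor in $C$): every unlabeled vertex of $C$ appearing after $v_i$ is nested in $v_{i-1}$. In particular $v_k\subseteq v_{i-1}$. Combined with the previous paragraph, this gives $l_k\geq l_{i-1}>r_1$, so $v_k$ and $v_1$ are disjoint, which contradicts the edge $v_kv_1$ of $C$ (adjacency requires non-disjointness). This contradiction shows that $i\geq 4$ is impossible, and together with $i\neq 2$ from the hypothesis this forces $i=3$.

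The main obstacle will be the monotonicity step: one has to verify carefully that, with $v_1$ an L-row rather than an arbitrary unlabeled row, the same kind of left-to-right ordering observed in Case~\ref{case:teo2nested_case1} persists. The constraint that $v_1$ begins in the very first column is actually what makes the base case work cleanly and forbids the alternative ``decreasing'' branch, so once this is formalised the rest is a straightforward application of facts already proven (Claim \ref{claim:6_claim_2-nested} and the structural consequence stated immediately above the claim).
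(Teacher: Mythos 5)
Your overall route is the paper's: everything after $v_i$ is nested in $v_{i-1}$, and if $i\geq 4$ then $v_{i-1}$ lies entirely to the right of $v_1$, so $v_k$ cannot meet $v_1$. The problem is the base case of your monotonicity induction. You assert that ``the constraint that $v_1$ begins in the very first column is actually what makes the base case work cleanly and forbids the alternative decreasing branch,'' and that $v_3\subseteq v_1$ is ruled out by $r(v_2)>r(v_1)$. Neither is true: adjacency of $v_1$ and $v_2$ forces $l(v_2)\leq r(v_1)<r(v_2)$, so one can take $v_3$ with $l(v_3)<l(v_2)\leq r(v_3)\leq r(v_1)$; this $v_3$ overlaps $v_2$ (hence is adjacent to it) and is nested in $v_1$ (hence nonadjacent to it, since nonadjacency only requires disjoint-or-nested). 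So the ``decreasing'' branch at $j=3$ is not excluded by the geometry of $v_1$ and $v_2$ alone, and your induction never gets started. (A similar remark applies to your phrase ``the option of $v_j$ being nested in some $v_l$ with $l<j-1$ is ruled out by nonadjacency'' --- nesting \emph{is} one of the two ways of being nonadjacent, so it cannot be ruled out that way; for $j\geq 4$ the turn-around is harmless because, by Remark~\ref{rem:2N_2}-type reasoning, everything then stays to the right of $r(v_1)$, but at $j=3$ it is genuinely a separate case.)

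The gap is closable, but it needs the LR-vertex, not the L-row $v_1$. If $v_3\subseteq v_1$, then inductively $v_4,\ldots,v_{i-1}\subseteq v_1$: each $v_j$ meets its predecessor inside $v_1$, is nonadjacent to $v_1$, and cannot contain $v_1$ (it would then contain $v_{j-1}$ and be nonadjacent to it), so it is nested in $v_1$. Now $v_i$ is an LR-vertex nonadjacent to $v_1$, which by the adjacency rule for an LR-vertex and a non-LR vertex means $v_i$ is disjoint from $v_1$ or $v_1\subseteq v_i$. In the first case $v_i$ is disjoint from $v_{i-1}\subseteq v_1$; in the second $v_{i-1}\subseteq v_1\subseteq v_i$; either way $v_i$ and $v_{i-1}$ are nonadjacent in $H$, contradicting that they are consecutive in $C$. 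With this case disposed of, $v_{i-1}$ is indeed disjoint from $v_1$ whenever $i\geq 4$, and your concluding step (using Claim~\ref{claim:6_claim_2-nested} and the nesting of $v_{i+1},\ldots,v_k$ in $v_{i-1}$) gives the contradiction with the edge $v_kv_1$, which is exactly the paper's one-line justification spelled out.
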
	

	 It follows from Claim \ref{claim:2_teo2nested} that there are at most 6 consecutive LR-vertices in such a cycle $C$.
	Let $Q=<v_i, \ldots, v_j>$ be the maximal LR-subpath and suppose that $|Q|=5$ and $v_1$ is labeled with L. 
	Notice that, if $v_i$ is labeled with R,  then $v_{j-1}$ and $v_j$ are labeled with L. Moreover, since there is a sequence of unlabeled vertices between $v_1$ and $v_i$ and $v_{j-1}$ is nonadjacent to $v_2$, then $v_{j-1}$ is contained in $v_1$ and thus it results adjacent to $v_1$, which is not possible. 
	Then, necessarily $v_i$ is labeled with L and thus $v_j$ is labeled with R. Moreover, if $i>2$, then $v_i$ contains $v_1$ and every unlabeled vertex between $v_1$ and $v_{i-1}$, and if $i=2$, then $v_2 \subsetneq v_1$. In either case, $v_{i+1}$ contains $v_i$. Hence, at most $v_{i+2}$ is labeled with L and there are no other LR-vertices labeled with L for they would be adjacent to $v_i$ or $v_{i+1}$. In particular, the last vertex of the cycle $v_k$ is not labeled with L, thus, since it is uncolored, $v_k$ is an unlabeled vertex. 
	However, $v_k$ is adjacent to $v_1$, and this results in a contradiction. Therefore, it is easy to see that it is not possible to have more than 4 consecutive LR-vertices in $C$. Furthermore, in the case of $|Q|=4$, either $v_i$ and $v_{i+3}$ are labeled with L and $v_{i+1}$ and $v_{i+2}$ are labeled with R, or $v_i$ and $v_{i+1}$ are labeled with R and $v_{i+3}$ is labeled with L.

\begin{claim} \label{claim:7_claim_2-nested}
Suppose $v_2$ is an LR-vertex and let $v_i$ be another LR-vertex in $C$. Then, either $i=k$ or $i \in \{3,4,5\}$. Moreover, in this last case, $v_j$ is an LR-vertex for every $2 \leq j \leq i$.
\end{claim}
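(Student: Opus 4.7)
The plan is to leverage the adjacency between $v_1$ and $v_2$, together with the structural constraints established in the prior claims of this case, to pin down where the remaining LR-vertices of $C$ can lie. Since $v_2$ is an LR-vertex adjacent to the colored non-LR-vertex $v_1$ (labeled with L in $A+$), $v_2$ must itself be labeled with L in $A+$: by definition of $H$, an LR-vertex and a non-LR-vertex are adjacent only when they share the same label and the LR-row is properly contained in the non-LR-row. Let $Q = <v_2, v_3, \ldots, v_m>$ denote the maximal consecutive LR-subpath of $C$ starting at $v_2$. The goal is to show that any further LR-vertex $v_i$ of $C$ either lies in $Q$ (so that $v_j$ is an LR-vertex for all $2 \leq j \leq i$) with $i \leq 5$, or coincides with $v_k$.

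First I would rule out any isolated LR-vertex $v_i$ with $i \neq k$ lying outside $Q$. If such a $v_i$ existed, then $v_m$ and $v_i$ would be a pair of nonadjacent consecutive LR-vertices along one arc of $C$, so Claim \ref{claim:5_claim_2-nested} would force exactly one unlabeled vertex between them in that arc. Tracing the rest of the cycle back to $v_1$ through the remaining uncolored vertices, and using the fact that $v_k$ is unlabeled and adjacent to $v_1$, I would use the nesting pattern among the LR-rows of $v_2, v_m, v_i$ and the intermediate unlabeled rows to exhibit a forbidden subconfiguration in $A$ (of family $\mathcal{S}$, $\mathcal{F}$, or one of the matrices in $\mathcal{D}$), or to produce an unwanted adjacency between $v_i$ and a vertex of $Q$ in $H$. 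This eliminates the isolated case except when $v_i = v_k$.

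Next I would bound the length of $Q$. Since $v_2$ is labeled L, Claim \ref{claim:1_teo2nested} forces $Q$ to contain at most three consecutive L-labeled vertices, after which the remaining vertices of $Q$ must carry the label R (again at most three in a row). I would then run a case analysis on the label pattern of $Q$, analogous to the one performed in the previous subcase for the even-path case: for each pattern, admissibility (i.e., the absence of matrices in $\mathcal{D}$) and the fact that the LR-ordering is suitable determine the nesting relations among the L- and R-blocks of the LR-rows comprising $Q$. Using the closing constraint that the chain of uncolored vertices from $v_m$ to $v_1$ must terminate at an unlabeled $v_k$ which is adjacent to $v_1$ and nonadjacent to every $v_j$ with $2 \leq j \leq m$, I expect that any pattern with $m \geq 6$ forces a subconfiguration of family $\mathcal{P}$ or $\mathcal{S}$, whereas patterns with $m \leq 5$ are consistent; this gives $i \in \{3, 4, 5\}$ in the non-isolated case.

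The main obstacle is precisely this exhaustive analysis in the second step. There are several label patterns to handle for $Q$ (such as LL, LLL, LR, LLR, LLRR, LRR, LLRRR, LLLRRR), and for each the containment relations among the LR-rows must be propagated along the uncolored arc ending at $v_k$ to expose a forbidden subconfiguration or a contradicting adjacency. This replays, for the present cycle-with-one-colored-vertex setting, the arguments used in the subcases for LR-subpaths of length $4$, $5$, and $6$ in the even-path case, but with the cycle-closing vertex $v_k$ replacing the second colored endpoint of the path; the bookkeeping is delicate because $v_k$ is now uncolored, which relaxes some of the coloring-based obstructions while tightening the nesting constraints imposed by the adjacency $v_k$--$v_1$.
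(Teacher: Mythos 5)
Your setup matches the paper's: since $v_2$ is an LR-vertex adjacent to the L-row $v_1$, it must be labeled L with $v_2 \subsetneq v_1$, and the claim reduces to showing that every other LR-vertex either extends the consecutive LR-run beginning at $v_2$ or is $v_k$. The genuine gap is in how you exclude an LR-vertex $v_i$ with $i \neq k$ that is separated from $v_2$ by unlabeled vertices: you defer this to Claim \ref{claim:5_claim_2-nested} plus an unspecified forbidden subconfiguration or ``unwanted adjacency,'' but never produce either. The argument that actually closes this case is a short containment chain, not a forbidden-submatrix hunt: since $v_2 \subsetneq v_1$ and the unlabeled vertices $v_3, \ldots, v_{i-1}$ are nonadjacent to $v_1$ while overlapping their neighbours, each $v_j \subseteq v_1$; suitability of the LR-ordering forces $v_i$ to be labeled L (an R-block is disjoint from the L-row $v_1$, hence from $v_{i-1} \subseteq v_1$, killing the adjacency $v_{i-1}v_i$); and nonadjacency of $v_i$ and $v_1$ then gives $v_1 \subseteq v_i$, so $v_{i-1} \subseteq v_1 \subseteq v_i$, contradicting the adjacency of $v_{i-1}$ and $v_i$ in $H$. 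Without this (or an equivalent) argument the ``moreover'' part of the claim — that $v_j$ is an LR-vertex for all $2 \leq j \leq i$ — is unproven. Note also that Claim \ref{claim:5_claim_2-nested} was established for the uncolored-cycle subcase; invoking it here, where the cycle contains the colored L-row $v_1$ (which could sit on the arc between the two LR-vertices), requires separate justification that you do not give.

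For the bound $i \in \{3,4,5\}$ in the consecutive case, you propose to redo a label-pattern analysis on runs of length up to $6$. This duplicates work already available: the discussion immediately preceding the claim shows that in this setting a maximal LR-subpath of $C$ has length at most $4$, and since the run starts at $v_2$ this yields $i \leq 5$ at once. Your re-derivation could be made to work, but as written both load-bearing steps of your proof are promissory rather than carried out.
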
	

Notice first that, if $v_2$ is an LR-vertex, then by definition of $H$, $v_2$ is labeled with L and $v_2 \subsetneq v_1$. 
If $i = 3$ or $i=k$, then we are done. Suppose that $i \neq k$ and there is a sequence of unlabeled vertices $v_j$ between $v_2$ and $v_i$, where $j=3, \ldots, i-1$. Hence, since $v_2 \subsetneq v_1$, then $v_j \subseteq v_1$ for $j=3, \ldots, i-1$. In that case, $v_i$ is labeled with the same letter than $v_1$ and $v_2$. Moreover, since $i \neq k$, $v_1$ and $v_i$ are nonadjacent and thus $v_i \supseteq v_1$ which is not possible since $v_{i-1} \subseteq v_1$. The contradiction came for assuming that there is a sequence of unlabeled vertices between $v_2$ and $v_i$ and that $v_i \neq v_k$. 
Hence, if $i \neq 3,k$, then every vertex between $v_2$ and $v_i$ is an LR-vertex. Since we know that the maximal LR-subpath in $C$ has length at most $4$ and $v_2$ is an LR-vertex, then necessarily $v_i$ must be either $v_3$,$v_4$ or $v_5$. \QED

\vspace{1mm}
 We now split the proof in two cases.
 \subsubcase \textit{$v_2$ is an LR-vertex.}
 
 	Suppose first that $v_2$ is the only LR-vertex in $C$. By definition of $H$, $v_2$ is labeled with L and $v_2 \subsetneq v_1$. Since there are no other LR-vertices in $C$, then $v_j \subseteq v_1$ for every $j<k$. 
 	In this case, we find $F'_1(k)$ as a subconfiguration contained in the submatrix of $A$ induced by the columns $r(v_2), \ldots, r(v_k)$. 
 	
 	Suppose now that there is exactly one more LR-vertex $v_i$ with $i > 2$. If $i \neq 3$, then by the previous claim we know that $i=k$. If $v_k$ is labeled with L, then we find $F'_1(k)$ contained in the submatrix of $A$ induced by the columns $r(v_2), \ldots, r(v_k)$. 
 	If instead $v_k$ is labeled with R, then $r(v_1) > l(v_k)$ and this is not possible since the LR-ordering used to define $A+$ is suitable. 
 	Suppose that $i =3$. If $v_3$ is labeled with L, then $v_3 \supseteq v_1$, and if $v_3$ is labeled with R, then $v_3$ is the R-block corresponding to the same LR-row of $v_2$ in $A$. In either case, since every other vertex $v_j$ in $C$ is unlabeled, then $l(v_j) > r(v_1)$ for every $j<k$. Thus, if $v_3$ is labeled with L, then we find $F'_2(k)$ contained in the submatrix of $A$ induced by the columns $r(v_1), r(v_2), r(v_3), \ldots, r(v_k)$. 
 	If instead $v_3$ is labeled with R, then we find $S_1(k)$ in $A$ contained in the submatrix induced by the columns $r(v_1), r_(v_{k-1}), \ldots, r(v_3)$. 
 	
 	Suppose that there are exactly two LR-vertices distinct than $v_2$. As a consequence of Claim \ref{claim:7_claim_2-nested}, we see that these vertices are necessarily $v_3$ and $v_4$.
 If $v_3$ and $v_4$ are LR-vertices and are both labeled with L, then $v_3$ and $v_4$ correspond to two distinct LR-rows that are not nested. Moreover, since $v_2 \subsetneq v_1$, then $v_3 \supseteq v_1$ and thus $v_1 \subseteq v_4 \subsetneq v_3$. Thus, since $v_5$ is unlabeled and there is at least one column for which the R-blocks of $v_2$, $v_3$ and $v_5$ have $1$, $0$ and $1$, respectively, we find $F_1(k)$ contained in the submatrix of $A$ induced by columns $1$ to $k-1$.
 
 If instead $v_3$ or $v_4$ (or both) are labeled with R, then $v_3$ corresponds to the same LR-row in $A$ as $v_2$. This follows from the fact that, if $v_3$ and $v_4$ correspond to the same LR-row in $A$, then $v_3$ is labeled with L and $v_4$ is labeled with R. Hence, since $v_3 \subseteq v_1$, $v_k$ cannot be adjacent to $v_1$ and thus this is not possible. However, if $v_3$ is the R-block of the LR-row corresponding to $v_2$, then we find $D_9$ in $A$ induced by the three rows corresponding to $v_1$, $v_2$ and $v_3$, and $v_4$.
 
 \vspace{1mm}
  	Suppose that there are exactly three LR-vertices other than $v_2$. Hence, these vertices are $v_3$, $v_4$ and $v_5$. Recall that $v_1$ and $v_2$ are labeled with L, and that two LR-vertices labeled with distinct letters are adjacent only if they correspond to the same LR-row in $A$. In any case, $v_5$ is labeled with R. However, since $v_1$ is labeled with L and $v_3 \supseteq v_1$, then $v_k$ results either adjacent to $v_3$, $v_4$ or $v_5$, which is a contradiction.
 
	 \subsubcase \textit{$v_2$ is not an LR-vertex.} 
	 
	By Claim \ref{claim:6_claim_2-nested_2}, if there is an LR-vertex $v$, then there are no other LR-vertices and $v = v_3$.

	In either case, since there is a exactly one LR-vertex in $C$ (we are assuming that there is at least one LR-vertex for if not the proof is as in Case 1.), then $v_2$ contains $v_j$ for every $j>3$. If $v_3$ is labeled with L, then there is $F'_2(k)$ as a subconfiguration in $A$ of the submatrix given by columns $r(v_1), \ldots, l(v_2)$. 
	If instead $v_3$ is labeled with R, then we find $S_1(k)$ as a subconfiguration in the same submatrix.

 	\vspace{1mm}
 	\subcase \textit{There is an induced $3$-cycle with exactly one uncolored vertex. }
 	
 	Let $C_3 =  v_1, v_2, v_3, v_1 $. We assume without loss of generality that $v_1$ and $v_3$ are the colored vertices. Since $A+$ is defined by considering a suitable LR-ordering and $v_1$ and $v_3$ are adjacent colored vertices, then $v_1$ and $v_3$ are labeled with distinct letters, for if not, the underlying uncolored matrix induced by these rows either induce $D_0$ or not induce any kind of gem. Moreover, $v_1$ and $v_3$ are colored with distinct colors since $A$ is admissible and thus there is no $D_1$. 
 	Furthermore, $v_2$ is unlabeled for if not it cannot be adjacent to both $v_1$ and $v_3$, since in that case $v_2$ should be nested in both $v_1$ and $v_3$. However, we find $F''_0$ as a submatrix of $A$, and this is a contradiction.

\end{mycases} 	
 	
 	\vspace{1mm}
 	This finishes the proof, since we have reached a contradiction by assuming that $A$ is partially $2$-nested but not $2$-nested.

\end{proof}


\selectlanguage{spanish}%
\chapter*{Caracterización por subgrafos prohibidos para grafos split circle}

El resultado más importante de este capítulo es el Teorema \ref{teo:circle_split_caract}, en el cual se utiliza la teoría matricial desarrollada en el capítulo anterior. 

Denotamos $\mathcal{T}$ a la familia de grafos obtenidos al considerar el tent${}\vee{}K_1$, todos los soles impares con centro y aquellos grafos cuya matriz de adyacencia $A(S,K)$ representa la misma configuración que alguna de las matrices de Tucker, con la excepción de $M_I(k)$ para todo $k\geq 3$ impar o $M_{III}(k)$ para todo $k\geq 5$ impar.
Denotamos $\mathcal{F}$ a la familia de grafos obtenida al considerar todos los grafos cuya matriz de adyacencia $A(S,K)$ representa la misma configuración que $F_0$, $F_1(k)$ o $F_2(k)$ para algún $k \geq 5$ impar.
Una representación gráfica de estos grafos puede verse en las Figuras \ref{fig:forb_T_graphs_esp} y \ref{fig:forb_F_graphs_esp}. Ninguno de estos grafos es un grafo círculo (ver Apéndice).xs

\begin{figure}[h!]
\centering
\includegraphics[scale=.35]{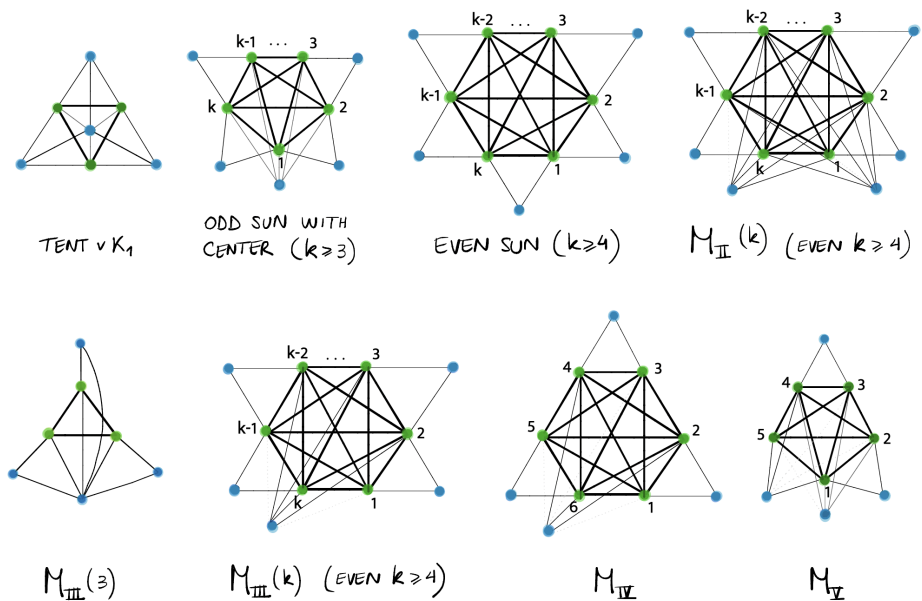} 
\caption{Los grafos de la familia $\mathcal{T}$.} \label{fig:forb_T_graphs_esp}
\end{figure}

\begin{figure}[h!]
\centering
\includegraphics[scale=.33]{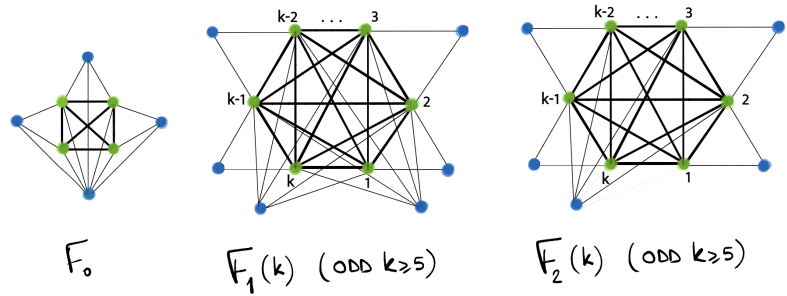} 
\caption{Los grafos de la familia  $\mathcal{F}$.} \label{fig:forb_F_graphs_esp}
\end{figure}

\begin{teo}[label={teo:circle_split_caract_esp}] 
	Sea $G=(K,S)$ un grafo split. Entonces, $G$ es un grafo circle si y sólo si $G$ no contiene ningún grafo de las familias $\mathcal{T}$ y $\mathcal{F}$ (Ver Figuras \ref{fig:forb_T_graphs_esp} y \ref{fig:forb_F_graphs_esp}).
\end{teo}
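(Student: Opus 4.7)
The plan is to prove both directions separately, with the forward direction (necessity) being essentially a verification that each member of $\mathcal{T}\cup\mathcal{F}$ is itself non-circle. This can be checked either directly from a case analysis on circle models or, more systematically, by applying Bouchet's characterization (Theorem \ref{teo:bouchet}) after suitable local complementations, exhibiting $W_5$, $W_7$ or $BW_3$ as an induced subgraph of some graph locally equivalent to the candidate. Each of the Tucker-type graphs in $\mathcal{T}$ and the $F_i$-type graphs in $\mathcal{F}$ has a uniform structure that makes this verification routine, though lengthy; the details belong in an appendix.

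For the sufficiency direction, I would begin with the dichotomy established in Chapter~2: if $G=(K,S)$ is $\{\mathcal{T},\mathcal{F}\}$-free and contains no induced tent, $4$-tent, co-$4$-tent or net, then $G$ is a permutation graph and hence circle. Otherwise $G$ contains at least one of these four graphs as an induced subgraph $H$, and the net case reduces to the others (as announced in Section~\ref{sec:circle5}). This leaves three principal cases, which I would treat in the order tent, $4$-tent, co-$4$-tent, devoting one section to each. In each case I would fix the partitions $K=K_1\cup\cdots\cup K_m$ and $S=\bigcup S_{ij}$ given by Lemmas~\ref{lema:tent_2}, \ref{lema:4tent_1} and \ref{lema:co4tent_1}, and then, for every partition class $K_i$, build the enriched matrix $A(S,K_i)$ whose rows are labelled L, R or LR and pre-coloured red or blue according to the way each vertex of $S$ attaches to $K_i$ inside $H$; the labelling records whether an independent vertex has one endpoint of its future chord arriving from the left arc, from the right arc, or from both, while the pre-colouring records the arc in which an endpoint is forced to lie.

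The central step is then to argue that each such matrix $A(S,K_i)$ is $2$-nested in the sense of Definition~\ref{def:2-nested}. By Theorem~\ref{teo:2-nested_caract_bymatrices}, this amounts to verifying that $A(S,K_i)$ contains none of $M_0$, $M_{II}(4)$, $M_V$, $S_0(k)$, no enriched matrix from $\mathcal{D}\cup\mathcal{F}\cup\mathcal{S}\cup\mathcal{P}$, no monochromatic gem or weak gem and no badly-coloured doubly-weak gem, and that $A(S,K_i)^*$ contains no Tucker matrix nor any matrix from $\mathcal{M}$. The hypothesis that $G$ is $\{\mathcal{T},\mathcal{F}\}$-free is precisely tailored so that each of these forbidden submatrices would translate into an induced subgraph of $G$ lying in $\mathcal{T}$ or $\mathcal{F}$, or into a graph already handled in Chapter~2; this translation is the main technical content of the proof and will proceed subcase by subcase, using the detailed structure of the partitions of $S$ in Lemmas~\ref{lema:tent_2}, \ref{lema:4tent_1} and \ref{lema:co4tent_1}. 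Once $A(S,K_i)$ is $2$-nested, Theorem~\ref{teo:hay_suitable_ordering} supplies a suitable LR-ordering of $K_i$ together with a total block bi-colouring.

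Finally, having established a suitable LR-ordering and bi-colouring for every $K_i$, the construction of a circle model for $G$ proceeds piecewise: the arcs dedicated to each $K_i$ are filled in using the ordering of its columns, the endpoints of chords representing vertices of $S_{ij}$ are placed in the arcs prescribed by their colour class, and the chords corresponding to the original $H$ are placed first (since $H$ is prime, its model is essentially unique). I expect the main obstacle to be the bookkeeping in the co-$4$-tent case, where $K$ splits into eight parts, the partition of $S$ has many non-empty cells, and the matching of forbidden submatrices of $A(S,K_i)$ to elements of $\mathcal{T}\cup\mathcal{F}$ involves the greatest number of subcases; writing out the explicit chord placement that realises the bi-colouring in that case is where most of the drawing effort will concentrate.
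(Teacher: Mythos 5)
Your overall architecture coincides with the paper's: necessity via local complementation in an appendix, the permutation-graph dichotomy, the three principal cases plus the net reduction, the enriched matrices $A(S,K_i)$, the appeal to Theorem~\ref{teo:2-nested_caract_bymatrices} to certify $2$-nestedness by translating each forbidden subconfiguration into a member of $\mathcal{T}\cup\mathcal{F}$, and the piecewise chord placement from a suitable LR-ordering and block bi-colouring. However, there is one genuine gap in the sufficiency direction: $2$-nestedness of each $A(S,K_i)$ \emph{separately} is not enough to carry out your final construction step. Two independent vertices $v_1\in S_{ij}$ and $v_2\in S_{kl}$ whose chords are assigned to the same pair of arcs may each be nested-or-disjoint inside every single $K_m$ and yet \emph{overlap globally} in $K$ (nested one way in $K_i$, the other way in $K_j$), in which case their chords would have to cross even though the vertices are non-adjacent. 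The paper closes this hole by additionally requiring that four concatenated matrices ($\mathbb A_r$, $\mathbb A_b$, $\mathbb A_{r-b}$, $\mathbb A_{b-r}$, and their $\mathbb B$- and $\mathbb C$-analogues), built from the rows of a given colour class read across \emph{all} the $K_i$ at once, be nested; ruling out the $0$-gem in these matrices produces further copies of $F_0$, $F_1(5)$ and $F_2(5)$ in $G$ (Lemmas~\ref{lema:matrices_union_son_nested}, \ref{lema:matrices_union_son_nested_4tent} and \ref{lema:matrices_union_son_nested_co4tent}), and the final equivalences (Theorems~\ref{teo:finalteo_tent}, \ref{teo:finalteo_4tent}, \ref{teo:finalteo_co4tent}) take the form ``$G$ is circle iff each $\mathbb A_i$ is $2$-nested \emph{and} these four matrices are nested.'' You need to add this cross-partition compatibility condition, both as an extra verification step under the $\{\mathcal T,\mathcal F\}$-free hypothesis and as an explicit hypothesis of the model-building step; without it the claimed chord placement is not well defined.
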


En las Secciones $4.1$ a $4.4$ se estudia el problema de caracterizar aquellos grafos split que además son circle. En cada una de estas secciones, consideramos un grafo split $G$ que contiene un cierto subgrafo $H$, donde $H$ puede ser un tent, un $4$-tent o un co-$4$-tent. Cada sección es un caso de la prueba del Teorema \ref{teo:circle_split_caract_esp}. 
Usando las particiones de $K$ y $S$ descriptas en el Capítulo $2$, definimos una matriz enriquecida para cada partición $K_i$ de $K$ y otras cuatro matrices auxiliares (no enriquecidas) que nos ayudarán a con los detalles a la hora de dar un modelo circle para el grafo $G$. Al final de cada sección, se demuestra que $G$ es circle si y sólo si las matrices enriquecidas definidas en cada sección son $2$-nested y las cuatro matrices no enriquecidas son nested, dando las pautas necesarias para encontrar un modelo circle en cada caso.

El primer caso, tratado en la Sección $4.1$, consiste en considerar un grafo split $G$ que contiene un tent como subgrafo inducido. Este es el caso más simple de todos, dada la simetría entre la mayoría de los conjuntos de la partición de $K$ y $S$, y ya que las matrices enriquecidas $\mathbb A_1, \ldots, \mathbb A_6$ definidas en la Sección $4.1.1$ no tienen filas LR en su definición.
En el segundo caso, el cual se trata en la Sección $4.2$, consideramos un grafo split $G$ que no contiene subgrafos isomorfos al tent pero sí contiene un $4$-tent como subgrafo inducido. La mayor diferencia con el caso anterior, es que la matriz enriquecida $\mathbb B_6$ definida en la Sección $4.2.1$ sí puede tener filas LR.
En la Sección $4.3$ consideramos un grafo split $G$ que no contiene subgrafos isomorfos al tent ni al $4$-tent, pero sí contiene co-$4$-tent como subgrafo inducido. En este caso, el mayor obstáculo es que, a diferencia del tent y del $4$-tent, el co-$4$-tent no es un grafo primo. Además, la matriz enriquecida $\mathbb C_7$ definida en la Sección $4.3.1$ puede tener filas LR.
Finalmente, en la Sección $4.4$ se explica en detalle cómo el caso en que $G$ contiene un net se reduce utilizando los tres casos previos. 


\selectlanguage{english}%
\chapter{Characterization by forbidden subgraphs for split circle graphs} \label{chapter:split_circle_graphs}

The main result of this chapter is Theorem \ref{teo:circle_split_caract}, which uses the matrix theory de\-vel\-oped in the previous chapter.

We denote $\mathcal{T}$ to the family of graphs obtained by considering the tent${}\vee{}K_1$, all the odd-suns with center and those graphs whose $A(S,K)$ matrix represents the same configuration as a Tucker matrix distinct to $M_I(k)$ for every odd $k\geq 3$ or $M_{III}(k)$ for every odd $k\geq 5$.
We denote $\mathcal{F}$ to the family of graphs obtained by considering those graphs whose $A(S,K)$ matrix represents the same configuration as either $F_0$, $F_1(k)$ or $F_2(k)$ for some odd $k \geq 5$.
For a representation of these graphs, see Figures \ref{fig:forb_T_graphs} and \ref{fig:forb_F_graphs}.

\begin{teo}[label={teo:circle_split_caract}]
	Let $G=(K,S)$ be a split graph. Then, $G$ is a circle graph if and only if $G$ is $\{ \mathcal{T}, \mathcal{F}\}$-free (See Figures \ref{fig:forb_T_graphs2} and \ref{fig:forb_F_graphs2}).
\end{teo}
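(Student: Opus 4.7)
The proof proceeds by an extensive case analysis. The necessity direction I would handle first and fairly directly: for each graph in $\mathcal{T}$ I would exhibit a Tucker-type obstruction in its split adjacency matrix $A(S,K)$, and for each graph in $\mathcal{F}$ I would verify that the corresponding adjacency matrix realises one of the configurations $F_0$, $F_1(k)$, $F_2(k)$ which, when combined with Lemma~\ref{lema:A_LR_2nested} together with Bouchet's local-complementation characterisation, yields a forbidden structure. Since circle graphs are closed under induced subgraphs, this shows no circle split graph contains any member of $\mathcal{T}\cup\mathcal{F}$.

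For sufficiency, assume $G=(K,S)$ is $\{\mathcal{T},\mathcal{F}\}$-free. If $G$ is a permutation graph there is nothing to prove, since permutation graphs are circle. Otherwise, by the preliminary discussion of Chapter~\ref{chapter:partitions}, $G$ contains an induced tent, $4$-tent, co-$4$-tent, or net. The bulk of the proof is split across four sections, one per possible induced obstruction, with the last reducing the net case to the other three. In each of the first three cases I would leverage the partitions $K=\bigcup K_i$ and $S=\bigcup S_{ij}$ established in Sections~\ref{sec:tent_partition}--\ref{sec:co4tent_partition} to associate, for every partition class $K_i$, an enriched matrix (denoted $\mathbb{A}_i$, $\mathbb{B}_i$, or $\mathbb{C}_i$ respectively) whose rows are the independent vertices meeting $K_i$ and whose columns are the vertices of $K_i$. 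The labels \textbf{L}, \textbf{R}, \textbf{LR} would record whether the chord of an independent vertex has its other endpoint to the left, right, or both sides of the arc pair reserved for $K_i$, while the pre-colouring would record the forced arc placement of rows with a unique feasible position, exactly as motivated by the example in Section~\ref{section:basic_defs} and the opening of Chapter~\ref{chapter:2nested_matrices}.

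The crucial step is a translation lemma in each case: $\{\mathcal{T},\mathcal{F}\}$-freeness of $G$ is equivalent to each enriched matrix $\mathbb{A}_i$ (respectively $\mathbb{B}_i$, $\mathbb{C}_i$) containing none of the forbidden subconfigurations listed in Theorem~\ref{teo:2-nested_caract_bymatrices}, and to a small family of auxiliary non-enriched matrices recording cross-partition adjacencies (between vertices of $S_{ij}$ and $K_l$ for $l\notin\{i,j\}$) containing no $0$-gem, so that these auxiliary matrices are nested by Theorem~\ref{teo:nested_caract}. The forward implication of this translation is routine matrix-level bookkeeping; the reverse is where each individual matrix in $\mathcal{T}$ and $\mathcal{F}$ pays off, as the structure constants of $\mathcal{T}$ are designed so that a Tucker submatrix in some $\mathbb{A}_i^{*}$ or an $F$-type submatrix in some $\mathbb{A}_i$ reassembles into a member of $\mathcal{T}\cup\mathcal{F}$ as an induced subgraph of $G$. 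Once this is established, each $\mathbb{A}_i$ is $2$-nested, so it admits a suitable LR-ordering (Theorem~\ref{teo:hay_suitable_ordering}) and a total block bi-colouring; the LR-ordering dictates, within the two arcs reserved for $K_i$ on the fixed circle frame of the induced tent/$4$-tent/co-$4$-tent, the clockwise sequence of $K_i$-chord endpoints, and the block bi-colouring decides, for each $S_{ij}$-chord with both endpoints in $K_i$-arcs, which of the two arcs houses each endpoint. Concatenating these local prescriptions around the frame produces a valid circle model for $G$.

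The main obstacle will be the co-$4$-tent case: unlike the tent and $4$-tent, the co-$4$-tent is not a prime graph, so the enriched matrices $\mathbb{C}_i$ genuinely carry nontrivial \textbf{LR}-rows, and the coherence conditions between the block bi-colourings of different $\mathbb{C}_i$ must be tracked carefully. I would address this either by first reducing via Cunningham's split decomposition to a prime factor (as announced in Remark~\ref{obs:co4tent_1} and justified by the fact that circle graphs are closed under split decomposition), or by adding compatibility columns across the $\mathbb{C}_i$ matrices so that the suitability clause of Definition~\ref{def:suitable_ordering} globally encodes the shared orientation of \textbf{LR}-rows. A secondary, smaller obstacle is the net case: here I would show that, together with $\{\mathcal{T},\mathcal{F}\}$-freeness, the presence of an induced net in a split graph forces the presence of an induced tent, $4$-tent, or co-$4$-tent, so that one of the previous three constructions applies verbatim — this is the content of Section~\ref{sec:circle5} alluded to in the introduction of this chapter.
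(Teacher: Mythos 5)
Your proposal follows essentially the same route as the paper: necessity by showing every member of $\mathcal{T}\cup\mathcal{F}$ is non-circle and invoking heredity, and sufficiency by the four-way case split on an induced tent, $4$-tent, co-$4$-tent or net (with the net case reduced to the others via primality), using the partitions of $K$ and $S$, the enriched matrices $\mathbb{A}_i$, $\mathbb{B}_i$, $\mathbb{C}_i$, the translation lemmas equating $\{\mathcal{T},\mathcal{F}\}$-freeness with $2$-nestedness plus nestedness of the auxiliary cross-partition matrices, and the explicit circle-model construction from suitable LR-orderings and block bi-colourings. One caveat on necessity: the paper does not argue through Tucker configurations in $A(S,K)$ or through Lemma~\ref{lema:A_LR_2nested} (which concerns admissible enriched matrices and is irrelevant here); it instead exhibits, in the Appendix, explicit local-complementation sequences reducing each graph of $\mathcal{T}\cup\mathcal{F}$ to one of Bouchet's obstructions $W_5$, $W_7$, $BW_3$ --- the mere shape of the adjacency matrix does not by itself certify non-circularity, so that computation cannot be skipped.
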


\begin{figure}[h]
\centering
\includegraphics[scale=.35]{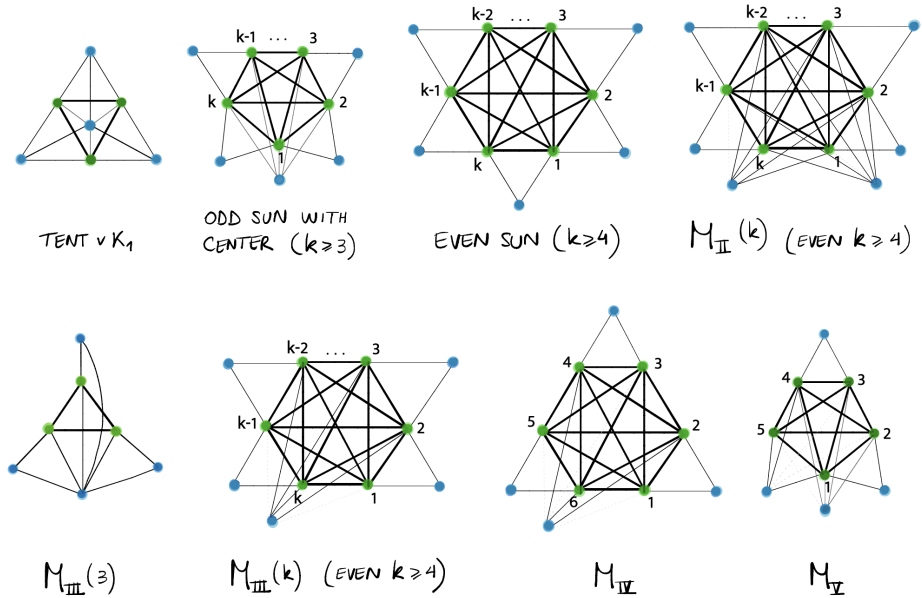} 
\caption{The graphs in the family $\mathcal{T}$.} \label{fig:forb_T_graphs2}
\end{figure}

\begin{figure}[h]
\centering
\includegraphics[scale=.35]{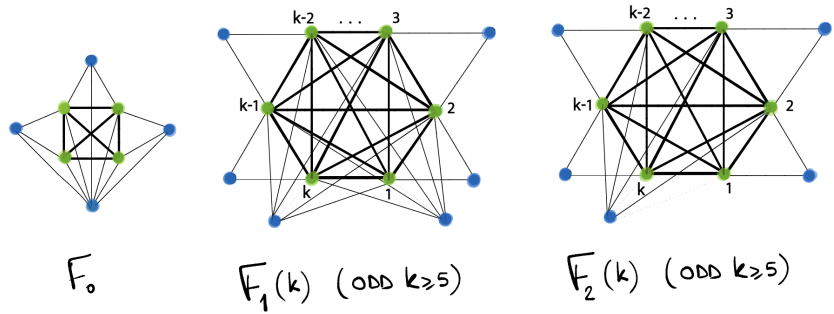} 
\caption{The graphs in the family $\mathcal{F}$.} \label{fig:forb_F_graphs2}
\end{figure}

All the graphs depicted in Figures~\ref{fig:forb_T_graphs2} and~\ref{fig:forb_F_graphs2} are non-circle graphs (see Appendix). It follows that if $G$ is a circle graph, then it contains none the graphs in Figures~\ref{fig:forb_T_graphs2} and~\ref{fig:forb_F_graphs2} as induced subgraph. 

This chapter is organized as follows. 
 In Sections \ref{sec:circle2}, \ref{sec:circle3}, \ref{sec:circle4} and \ref{sec:circle5} we address the problem of characterizing those split graphs that are also circle. In each of these sections, we consider a split graph $G$ that contains an induced subgraph $H$, where $H$ is either a tent, a $4$-tent or a co-$4$-tent, and each of these is a case of the proof of Theorem \ref{teo:circle_split_caract}. 
Using the partitions of $K$ and $S$ described in Chapter 2, we define one enriched $(0,1)$-matrix for each partition $K_i$ of $K$ and four auxiliary non-enriched $(0,1)$-matrices that will help us give a circle model for $G$. At the end of each section, we prove that $G$ is circle if and only if these enriched matrices are $2$-nested and the four non-enriched matrices are nested, giving the guidelines for a circle model in each case.

The first case, adressed in Section \ref{sec:circle2}, consists of considering a split graph $G$ that contains a tent as an induced subgraph. This is the simplest case, given the symmetry between most of the partitions of $K$ and $S$ and since the enriched matrices $\mathbb A_1, \ldots, \mathbb A_6$ that are defined in Section \ref{subsec:tent2} do not have any LR-rows.
In the second case, adressed in Section \ref{sec:circle3}, we consider a split graph $G$ that contains no  tent but contains a $4$-tent as an induced subgraph. The main difference with the previous section is that the enriched matrix $\mathbb B_6$ defined in Section \ref{subsec:4tent2} may have some LR-rows.
In Section \ref{sec:circle4} we consider a split graph $G$ that contains no tent or $4$-tent, but contains a co-$4$-tent as an induced subgraph. In this case, the main obstacles are that the co-$4$-tent is not a prime graph and that the enriched matrix $\mathbb C_7$ defined in Section \ref{subsec:co4tent1} may have some LR-rows.
Finally, in Section \ref{sec:circle5} we explain in detail how to reduce the case in which $G$ contains a net as an induced subgraph using the previous cases.

\section{Split circle graphs containing an induced tent} \label{sec:circle2}

In this section we will address the first case of the proof of Theorem \ref{teo:circle_split_caract}, which is the case where $G$ contains an induced tent.
This section is subdivided as follows. In Section \ref{subsec:tent2}, we use the partitions of $K$ and $S$ given in Section \ref{sec:tent_partition} to define the matrices $\mathbb{A}_i$ for each $i=1, 2, \ldots, 6$ and prove some properties that will be useful further on. 
In Subsection \ref{subsec:tent3}, the main results are the necessity of the $2$-nestedness of each $\mathbb A_i$ for $G$ to be $\{\mathcal{T}, \mathcal{F}\}$-free and the guidelines to give a circle model for a split graph $G$ containing an induced tent in Theorem \ref{teo:finalteo_tent}.

Notice that the net${}\vee{}K_1$, the $4$-tent${}\vee{}K_1$ and the co-$4$-tent${}\vee{} K_1$ are the graphs whose $A(S,K)$ matrix represents the same configuration as $M_{III}(3)$, $F_0$ and $M_{II}(4)$, respectively.

\subsection{Matrices $\mathbb A_1,\mathbb A_2,\ldots,\mathbb A_6$} \label{subsec:tent2}

Let $G=(K,S)$ and $H$ as in Section \ref{sec:tent_partition}. 
For each $i\in\{1,2,\ldots,6\}$, let $\mathbb A_i$ be an enriched $(0,1)$-matrix having one row for each vertex $s\in S$ such that $s$ belongs to $S_{ij}$ or $S_{ji}$ for some $j\in\{1,2,\ldots,6\}$, and one column for each vertex $k\in K_i$ and such that the entry corresponding to the row $s$ and the column $k$ is $1$ if and only if $s$ is adjacent to $k$ in $G$. For each $j\in\{1,2,\ldots,6\}-\{i\}$, we mark those rows corresponding to vertices of $S_{ji}$ with L and those corresponding to vertices of $S_{ij}$ with R.

Moreover, we color some of the rows of $\mathbb A_i$ as follows.
\begin{itemize}
 \item If $i\in\{1,3,5\}$, then we color each row corresponding to a vertex $s\in S_{ij}$ for some $j\in\{1,2,\ldots,6\}-\{i\}$ with color red and each row corresponding to a vertex $s\in S_{ji}$ for some $j\in\{1,2,\ldots,6\}-\{i\}$ with color blue.
 \item If $i\in\{2,4,6\}$, then we color each row corresponding to a vertex $s\in S_{ij}\cup S_{ji}$ for some $j\in\{1,2,\ldots,6\}$ with color red if $j=i+1$ or $j=i-1$ (modulo $6$) and with color blue otherwise.
\end{itemize}

Example:
\[ \mathbb A_3 = \bordermatrix{ & K_3\cr
		S_{34}\ \textbf{R} & \cdots \cr
                S_{35}\ \textbf{R} & \cdots \cr
                S_{33}\            & \cdots \cr
                S_{13}\ \textbf{L} & \cdots \cr
                S_{23}\ \textbf{L} & \cdots}\
                \begin{matrix}
                \textcolor{red}{\bullet} \\ \textcolor{red}{\bullet} \\ \\ \textcolor{blue}{\bullet} \\ \textcolor{blue}{\bullet}
                \end{matrix} \qquad\qquad
   \mathbb A_4 = \bordermatrix{ & K_4\cr
		S_{34}\ \textbf{L} & \cdots \cr
                S_{45}\ \textbf{R} & \cdots\cr
                S_{44}\            & \cdots \cr
                S_{14}\ \textbf{L} & \cdots \cr
                S_{64}\ \textbf{L} & \cdots \cr
                S_{41}\ \textbf{R} & \cdots \cr
                S_{42}\ \textbf{R} & \cdots }\ 
                \begin{matrix}
                \textcolor{red}{\bullet} \\ \textcolor{red}{\bullet} \\ \\ \textcolor{blue}{\bullet} \\ \textcolor{blue}{\bullet} \\ \textcolor{blue}{\bullet} \\ \textcolor{blue}{\bullet}
                \end{matrix} \]

\begin{figure}[h!]
  \begin{subfigure}[b]{0.5\textwidth}
    \includegraphics[width=\textwidth]{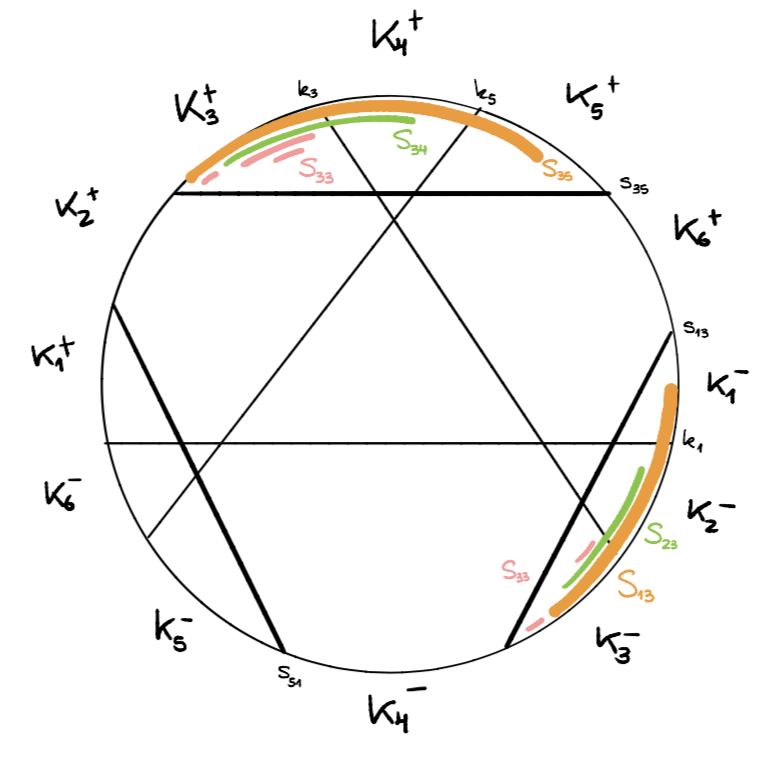}
    \caption{$\mathbb{A}_3$}
    \label{fig:modelA3}
  \end{subfigure}
  \hfill
  \begin{subfigure}[b]{0.48\textwidth}
    \includegraphics[width=\textwidth]{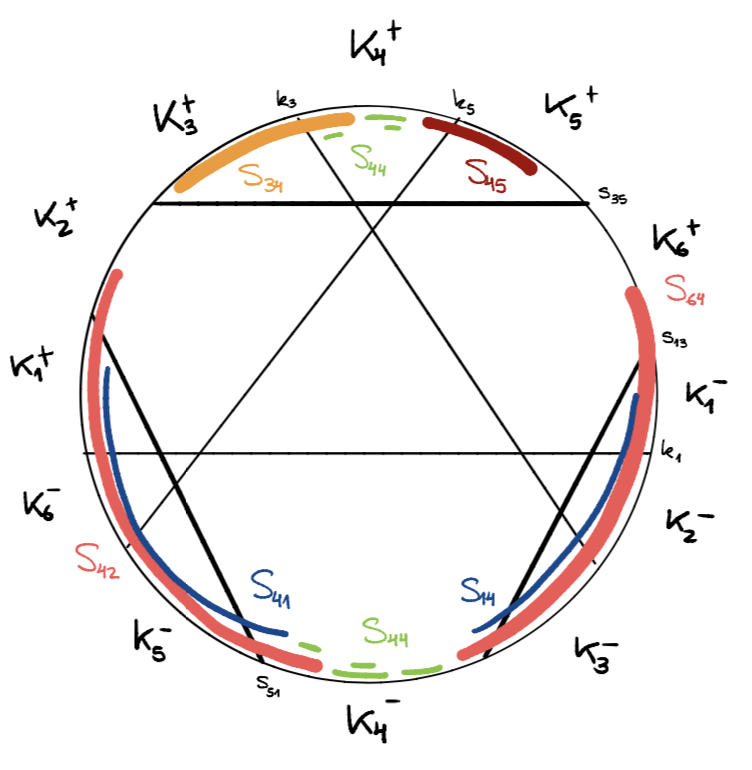}
    \caption{$\mathbb{A}_4$}    
    \label{fig:f2}
  \end{subfigure}
  \caption{Sketch model of $G$ with some of the chords associated to rows in $\mathbb{A}_3$ and $\mathbb{A}_4$.}
\end{figure}

The following results are useful in the sequel.

\begin{claim} \label{claim:tent_0}
Let $v_1$ in $S_{ij}$ and $v_2$ in $S_{ik}$, for $i,j,k \in \{1, 2, \ldots, 6 \}$ such that $i \neq j,k$. If $\mathbb A_i$ is admissible for each $i \in \{1, 2, \ldots, 6\}$, then the following assertions hold:
	\begin{itemize}
		\item If $j \neq k$, then $v_1$ and $v_2$ are nested in $K_i$. Moreover, if $j=k$, then $v_1$ and $v_2$ are nested in both $K_i$ and $K_j$.
		\item For each $i \in \{1, 2, \ldots, 6 \}$, there is a vertex $v^*_i$ in $K_i$ such that for every $j \in \{1, 2, \ldots, 6 \} - \{i\}$ and every $s$ in $S_{ij}$, the vertex $s$ is adjacent to $v^*_i$.
	\end{itemize}
\end{claim}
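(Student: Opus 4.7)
The plan is to reduce both parts of the claim to the $D_0$-freeness that admissibility encodes. I would begin by unpacking the labeling convention of the matrices: in $\mathbb A_i$, the row of a vertex $s\in S_{il}$ carries label R and the row of a vertex $s\in S_{li}$ carries label L, for every $l\neq i$. Because the hypothesis of the claim stipulates $i\neq j$ and $i\neq k$, the rows corresponding to $v_1\in S_{ij}$ and $v_2\in S_{ik}$ are both R-rows of $\mathbb A_i$. This bookkeeping step is what lets us avoid the unlabeled regime coming from $S_{ii}$, where admissibility alone would not suffice.

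For the first assertion I would appeal directly to admissibility. Since $\mathbb A_i$ is $\{\mathcal D,\mathcal S,\mathcal P\}$-free, in particular it contains no $D_0$; but $D_0$ is exactly the configuration of two same-labeled rows that fail to be nested. Applied to the two R-rows associated to $v_1$ and $v_2$, this gives that $N(v_1)\cap K_i$ and $N(v_2)\cap K_i$ are nested, which is the meaning of ``$v_1$ and $v_2$ are nested in $K_i$''. For the moreover, where $j=k\neq i$, the vertices $v_1,v_2\in S_{ij}$ also appear in $\mathbb A_j$, but now as L-rows; rerunning the same $D_0$-freeness argument inside $\mathbb A_j$ yields nestedness in $K_j$ as well.

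For the second assertion the structural work is already packaged in the first. If $K_i=\emptyset$ there is nothing to prove, as every $S_{ij}$ is then empty by definition; so I would assume $K_i\neq\emptyset$. Consider the family $\mathcal F_i=\{\,N(s)\cap K_i \ : \ s\in S_{ij}\text{ for some }j\neq i\,\}$: each member is nonempty because membership in $S_{ij}$ requires a neighbor in $K_i$, and by the first assertion every pair of members is nested. Hence $\mathcal F_i$ is totally ordered by inclusion, so it admits a (nonempty) inclusion-minimum element $N(s^*)\cap K_i$. Any vertex $v_i^*\in N(s^*)\cap K_i$ then lies in $N(s)\cap K_i$ for every $s\in\bigcup_{j\neq i}S_{ij}$, and is therefore adjacent to each such $s$, as required.

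The argument is essentially a direct reading of the definitions, so I do not anticipate a genuine difficulty; the only point requiring care is the bookkeeping around labels. Specifically, had either $j=i$ or $k=i$ been permitted, the corresponding row would be unlabeled in $\mathbb A_i$ and the $D_0$-based argument would break, so the role of the hypothesis $i\neq j,k$ should be flagged explicitly in the write-up.
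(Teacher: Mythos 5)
Your proof is correct and follows essentially the same route as the paper's: the first assertion is exactly the paper's $D_0$-freeness argument (two same-labeled, non-nested rows produce the forbidden $\begin{pmatrix}1&0\\0&1\end{pmatrix}$ subconfiguration), run in $\mathbb A_i$ and, for the ``moreover,'' again in $\mathbb A_j$. For the second assertion your chain argument (a finite family of nonempty, pairwise nested sets has an inclusion-minimum, any of whose elements works as $v_i^*$) is in fact a cleaner justification than the paper's, which appeals to a C$1$P for $\mathbb A_i$ that admissibility alone does not supply.
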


Let $v_1$, $v_2$ in $S_{ij}$, for some $i, j \in \{1, \ldots, 6\}$. Toward a contradiction, suppose without loss of generality that $v_1$ and $v_2$ are not nested in $K_i$, since by symmetry the proof is analogous in $K_j$. Since $v_1$ and $v_2$ are both adjacent to at least one vertex in $K_i$, then there are vertices $w_1$, $w_2$ in $K_i$ such that $w_1$ is adjacent to $v_1$ and nonadjacent to $v_2$, and $w_2$ is adjacent to $v_2$ and nonadjacent to $v_1$. 
Moreover, since $v_1$ and $v_2$ lie in $S_{ij}$ and $i \neq j$, then by definition of $\mathbb A_i$ the corresponding rows are labeled with the same letter and colored with the same color. 
Therefore, we find $D_0$ induced by the rows corresponding to $v_1$ and $v_2$, and the columns $w_1$ and $w_2$, which results in a contradiction for $\mathbb A_i$ is admissible. 
The proof is analogous if $j\neq k$. Moreover, the second statement of the claim follows from the previous argument and the fact that there is a C1P for the columns of $\mathbb A_i$. \QED

                
\subsection{Split circle equivalence} \label{subsec:tent3}

In this subsection, we will use the matrix theory developed in Chapter~\ref{chapter:2nested_matrices} to characterize the forbidden induced subgraphs that arise in a split graph that contains an induced tent when this graph is not a circle graph. We will start by proving that, given a split graph $G$ that contains an induced tent, if $G$ is $\{ \mathcal{T}, \mathcal{F} \}$-free, then the matrices $\mathbb A_i$ for each $i=1, 2, \ldots, 6$ are $2$-nested. 

\begin{lema} \label{lema:equiv_circle_2nested_tent}
	If $\mathbb A_i$ is not $2$-nested, for some $i \in \{ 1, \ldots, 6 \}$, then $G$ contains an induced subgraph of the families $\mathcal{T}$ or $\mathcal{F}$.
\end{lema}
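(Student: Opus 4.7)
The plan is to proceed by contrapositive: assume $G$ is $\{\mathcal T,\mathcal F\}$-free and show that each $\mathbb A_i$, $i\in\{1,\ldots,6\}$, is $2$-nested. By Theorem~\ref{teo:2-nested_caract_bymatrices}, it suffices to verify that $\mathbb A_i$ contains no subconfiguration from $\mathcal D\cup\mathcal F\cup\mathcal S\cup\mathcal P$, no copy of $M_0$, $M_{II}(4)$, $M_V$, $S_0(k)$, no monochromatic gem, monochromatic weak gem, or badly-coloured doubly-weak gem, and that $\mathbb A_i^*$ contains no Tucker matrix and no member of $\mathcal M$; I would go through these obstructions one family at a time and show that each one forces a graph in $\mathcal T\cup\mathcal F$ as an induced subgraph of $G$.

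First I would dispense with the ``coordinate-free'' obstructions. Every Tucker matrix appearing in $\mathbb A_i$ is also a submatrix of $A(S,K)$, so by the definition of $\mathcal T$ it already produces an induced subgraph of $G$ in $\mathcal T$ (recall that the two Tucker matrices not listed in $\mathcal T$, namely $M_I(k)$ and $M_{III}(k)$ for $k$ odd, do appear in $\mathcal F$ together with $F_0$, $F_1(k)$ and $F_2(k)$, so these too translate directly into $\mathcal T\cup\mathcal F$). This takes care of Tucker submatrices of $\mathbb A_i$, of the $\mathcal F$-submatrices, and of $M_0$, $M_{II}(4)$, $M_V$, $S_0(k)$.

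Next I would handle the obstructions that exploit the labels L, R. Here the key observation is that by construction a row of $\mathbb A_i$ labeled L (respectively R) comes from a vertex of $S_{ji}$ (resp.\ $S_{ij}$) for some $j\neq i$, and by Lemma~\ref{lema:tent_2} each label records a specific adjacency pattern with $\{k_1,k_3,k_5,s_{13},s_{35},s_{51}\}$. For each $D_t\in\mathcal D$, $S_j(k)\in\mathcal S$, $P_j(k,l)\in\mathcal P$ contained in $\mathbb A_i$, I would take the rows and columns certifying the subconfiguration, adjoin the appropriate subset of the tent vertices, and, when necessary, the vertex $v_i^*\in K_i$ of Claim~\ref{claim:tent_0}. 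In each case the resulting induced subgraph of $G$ is either a tent-with-center, a $3$- or $4$-sun with or without center, a net$\vee K_1$, a tent$\vee K_1$, or a graph whose $A(S,K)$ realises $M_{II}(k)$, $M_{III}(k)$, $M_{IV}$, or $M_V$, hence belongs to $\mathcal T$. Note that in the tent case $\mathbb A_i$ has no LR-rows, so many of the matrices in $\mathcal D$, $\mathcal S$, $\mathcal P$ are vacuously absent; this considerably shortens the case analysis.

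The colour-driven obstructions (monochromatic gems, monochromatic weak gems, badly-coloured doubly-weak gems, and the matrices of $\mathcal M$ in $\mathbb A_i^*$) are the technical heart of the argument. Two rows share a colour precisely when the corresponding vertices of $S$ lie in ``matching'' parts of the partition $\{S_{ab}\}$ defined relative to the tent; a monochromatic obstruction therefore produces two vertices whose chords would have to sit on the same arc of any hypothetical circle model of $G$, yet whose rows witness non-nested, non-disjoint behaviour on $K_i$. Adjoining the vertices of $K_j$ and of the tent that witness the colour assignment yields an induced subgraph realising one of $M_{II}(k)$, $M_{III}(k)$, $M_{IV}$, $M_V$ (members of $\mathcal T$), or $F_0$, $F_1(k)$, $F_2(k)$ (members of $\mathcal F$); for the $\mathcal M$-obstructions in $\mathbb A_i^*$ I would translate back to $\mathbb A_i$ by un-complementing the LR-row of $\mathbb A_i^*$ (which is empty in the tent case because there are no LR-rows), reducing these to already handled cases. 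The main obstacle will be the bookkeeping: for each forbidden matrix one must identify exactly which tent vertices and which $K_j$'s to adjoin, and verify the resulting induced subgraph against $\mathcal T\cup\mathcal F$; the absence of LR-rows in this case makes the enumeration finite and tractable, and the uniformity provided by Claim~\ref{claim:tent_0} and Lemma~\ref{lema:tent_2} keeps the argument essentially symmetric in $i$.
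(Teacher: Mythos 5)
Your plan is essentially the paper's own proof: a case analysis over the forbidden subconfigurations of Theorem~\ref{teo:2-nested_caract_bymatrices} (which the paper organizes through the hierarchy admissible / LR-orderable / partially $2$-nested / $2$-nested), translating each matrix obstruction in $\mathbb A_i$ into a member of $\mathcal T\cup\mathcal F$ by adjoining tent vertices, vertices of other parts $K_j$, and the vertex $v_i^*$ of Claim~\ref{claim:tent_0}, with the LR-free structure of the $\mathbb A_i$ trimming the case list exactly as you predict. One small correction: an odd $M_I(k)$ in $\mathbb A_i$ does not land in $\mathcal F$ but yields an odd $k$-sun with center (a member of $\mathcal T$) after adjoining the tent vertex $s_{i,i+2}$ as center --- precisely the kind of adjunction your second paragraph already prescribes for the labelled obstructions.
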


\begin{proof}

We will prove each case assuming that either $i = 3$ or $i= 4$, since the matrices $\mathbb A_i$ are analogous when $i$ is odd or even, and thus the proof depends solely on the parity of $i$.

The proof is organized as follows. First, we will assume that $\mathbb A_i$ is not admissible. In that case, $\mathbb A_i$ contains one of the forbidden subconfigurations stated in Theorem \ref{teo:caract_admissible}. Once we reach a contradiction, we will assume that $\mathbb A_i$ is admissible but not LR-orderable, thus $\mathbb{A}_i$ contains one of the forbidden subconfigurations in Theorem \ref{teo:LR-orderable_caract_bymatrices}, once again reaching a contradiction. The following steps are to assume that $\mathbb A_i$ is LR-orderable but not partially $2$-nested, and finally that $\mathbb A_i$ is partially $2$-nested but not $2$-nested. We will use the characterizations given in Corollary \ref{cor:partially_2-nested_caract} and Theorem $\ref{teo:2-nested_caract_bymatrices}$ for each case, respectively.

Recall that for each vertex $k_i$ of the tent, $k_i$ lies in $K_i$ by definition and thus $K_i \neq \emptyset$ for every $i= 1, 3, 5$.
Notice that, if $G$ is $\{ \mathcal{T}, \mathcal{F} \}$-free, then in particular, for each $i=1, \ldots, 6$, $\mathbb A_i$ contains no $M_0$, $M_{II}(4)$, $M_V$ or $S_0(k)$ for every even $k \geq 4$ since these matrices are the adjacency matrices of non-circle graphs.

\vspace{1mm}
\begin{mycases}
\case \textit{Suppose first that $\mathbb A_i$ is not admissible.} By Theorem \ref{teo:caract_admissible} and since $\mathbb{A}_i$ contains no LR-rows, then $\mathbb A_i$ contains either $D_0$, $D_1$, $D_2$ or $S_2(k)$, $S_3(k)$ for some $k \geq 3$. 

\subcase \textit{$\mathbb A_i$ contains $D_0$. }

Let $v_0$ and $v_1$ in $S$ be the vertices whose adjacency is represented by the first and second row of $D_0$, respectively, and let $k_{i1}$ and $k_{i2}$ in $K_i$ be the vertices whose adjacency is represented by the first and second column of $D_0$, respectively.

Notice that both rows of $D_0$ are labeled with the same letter, and the coloring given to each row is indistinct. We assume without loss of generality that both rows are labeled with L, due to the symmetry of the problem.

\subsubcase \textit{Suppose first that $i = 3$.} In this case, $v_1$ and $v_2$ lie in $S_{34}$ or $S_{35}$. Hence, there are vertices $k_{31}$ and $k_{32}$ in $K_3$ such that $v_j$ is adjacent to $k_{3j}$ and is nonadjacent to $k_{3(j+1)}$ (induces modulo $2$). By Claim \ref{claim:tent_0} there is a vertex $k_4$ in $K_4$ (resp.\ $k_5$ in $K_5$) adjacent to every vertex in $S_{34}$ (resp.\ $S_{35}$).
Thus, if both $v_1$ and $v_2$ lie in $S_{35}$, since $s_{51}$ is adjacent to every vertex in $K_5$ by definition, then we find a net${}\vee{} K_1$ induced by $\{ k_5$, $k_{31}$, $k_{32}$, $v_1$, $v_2$, $s_{51}, k_1 \}$. If instead both $v_1$ and $v_2$ lie in $S_{34}$, then we find a tent with center induced by $\{ k_4$, $k_{31}$, $k_{32}$, $v_1$, $v_2$, $s_{35}, s_{13} \}$.

Thus, let us suppose that $v_1$ in $S_{34}$ and $v_2$ in $S_{35}$. Let $k_4$ in $K_4$ such that $v_1$ is adjacent to $k_4$. Recall that $v_2$ is complete to $K_4$. Let $k_5$ in $K_5$ such that $v_2$ is adjacent to $k_5$, and let $k_1$ be any vertex in $K_1$. 
Since $v_2$ in $S_{34}$ and $v_1$ in $S_{35}$, then $v_1$ and $v_2$ are nonadjacent to $k_1$, and also $v_1$ is nonadjacent to $k_5$. Hence, we find a $4$-sun induced by the set $\{ s_{13}$, $s_{51}$, $ v_1$, $v_2$, $k_1$, $k_{31}$, $k_4$, $k_5 \}$.

\subsubcase 
\textit{Suppose now that $i = 4$. }Thus, the vertices $v_1$ and $v_2$ belong to either $S_{34}$, $S_{14}$ or $S_{64}$.
Suppose $v_1$ in $S_{34}$ and $v_2$ in $S_{14}$, and let $k_1$ in $K_1$ and $k_3$ in $K_3$ such that $v_1$ is adjacent to $k_3$. Since $v_2$ is complete to $K_3$, then $v_2$ is adjacent to $k_3$, and both $v_1$ and $v_2$ are nonadjacent to $k_1$. Hence, we find co-$4$-tent${}\vee{} K_1$ induced by $\{ s_{13}$, $s_{35}$, $v_1$, $v_2$, $k_3$, $k_{41}$, $k_{42}$, $k_1 \}$.
The same holds if $v_2$ lies in $S_{64}$.

If instead $v_1$ and $v_2$ lie in $S_{34}$, then we find a net${}\vee{} K_1$ induced by the set $\{ k_3$, $k_{41}$, $k_{42}$, $v_1$, $v_2$, $s_{13}, k_1 \}$. 

\vspace{1mm}
Finally, if $v_1$ and $v_2$ lie in $S_{14} \cup S_{64}$, then we find a tent${}\vee{} K_1$ induced by $\{ k_3$, $k_1$, $k_{41}$, $k_{42}$, $v_1$, $v_2$, $s_{35} \}$, where $k_1$ is a vertex in $K_1$ adjacent to $v_1$ and $v_2$.

\subcase \textit{$\mathbb A_i$ contains $D_1$. }

As in the previous case, let $v_1$ and $v_2$ in $S$ be the vertices whose adjacency is rep\-re\-sent\-ed by the first and second row of $D_1$, respectively, and let $k_i$ in $K_i$ be the vertex whose adjacency is represented by the column of $D_1$.

Notice that both rows of $D_1$ are labeled with distinct letters and are colored with the same color. We assume without loss of generality that $v_1$ is labeled with L and $v_2$ is labeled with R.
Moreover, if $i$ is odd, then it is not possible to have two such vertices corresponding to rows in $\mathbb A_i$ labeled with distinct letters and colored with the same color. 

\subsubcase \textit{Let us suppose that $i = 4$. }
In this case, either $v_1$ in $S_{34}$ and $v_2$ in $S_{45}$, or $v_1$ in $S_{14} \cup S_{64}$ and $v_2$ in $S_{41} \cup S_{42}$.

If $v_1$ in $S_{34}$ and $v_2$ in $S_{45}$, then we find a $4$-sun induced by $\{ v_1$, $v_2$, $s_{13}$, $s_{51}$, $k_1$, $k_3$, $k_4$, $k_5 \}$, where $k_3$ in $K_3$ is adjacent to $v_1$ and nonadjacent to $v_2$, $k_4$ in $K_4$ is adjacent to both $v_1$ and $v_2$, $k_5$ in $K_5$ is adjacent to $v_2$ and nonadjacent to $v_1$, and $k_1$ in $K_1$ is nonadjacent to both $v_1$ and $v_2$.

Suppose that $v_1$ lies in $S_{14}$ and $v_2$ lies in $S_{41}$. In this case, we find a tent${}\vee{}K_1$ induced by $\{ v_1$, $v_2$, $s_{35}$, $k_1$, $k_3$, $k_4$, $k_5 \}$, where $k_1$, $k_3$, $k_4$ and $k_5$ are vertices analogous as those described in the previous paragraph. The same holds if $v_1$ in $S_{64}$ or $v_2$ in $S_{42}$.

\subcase \textit{$D_2$ in $\mathbb A_i$. }

Let $v_1$ and $v_2$ in $S$ be the vertices whose adjacency is represented by the first and second row of $D_2$, respectively, and let $k_{i1}$ and $k_{i2}$ in $K_i$ be the vertices whose adjacency is represented by the first and second column of $D_2$, respectively.

Both rows of $D_2$ are labeled with distinct letters and colored with distinct colors, for the `'same color'' case is covered since we proved that there is no $D_1$ as a submatrix of $\mathbb A_i$. We assume without loss of generality that $v_1$ is labeled with L and $v_2$ is labeled with R.

\subsubcase \textit{Suppose that $i= 4$.} Thus, $v_1$ in $S_{34}$ and $v_2$ in $S_{41} \cup S_{42}$.
In this case we find a tent with center induced by $\{ v_1$, $v_2$, $s_{13}$, $k_1$, $k_3$, $k_{41}$, $k_{42} \}$, where $k_1$ in $K_1$ is adjacent to $v_2$ and nonadjacent to $v_1$ and $k_3$ in $K_3$ is adjacent to $v_1$ and nonadjacent to $v_2$. We find the same forbidden subgraph if $v_2$ in $S_{41}$ or $S_{42}$.

\subsubcase \textit{Suppose that $i=3$.} In this case, $v_1$ in $S_{13} \cup S_{23}$, and $v_2$ in $S_{34} \cup S_{35}$.

Suppose first that $K_2 \neq \emptyset$. If $v_1$ in $S_{23}$ and $v_2$ in $S_{34}$, then we find co-$4$-tent${}\vee{}K_1$ induced by $\{ v_1$, $v_2$, $s_{13}$, $s_{35}$, $k_2$, $k_4$, $k_{31}$, $k_{32} \}$, where $k_2$ in $K_2$ is adjacent to $v_1$ and nonadjacent to $v_2$, and $k_4$ in $K_4$ is adjacent to $v_2$ and nonadjacent to $v_1$. If instead $v_2$ in $S_{35}$, then we find once more a co-$4$-tent${}\vee{}K_1$ induced by the same set of vertices with the exception of $k_4$ and adding a vertex $k_5$ in $K_5$ adjacent to $v_2$ and nonadjacent to $v_1$.
The same forbidden subgraph can be found if $v_1$ in $S_{13}$, if $K_2 \neq \emptyset$.

If instead $K_2 = \emptyset$, then necesarily $v_1$ in $S_{13}$. If $v_2$ in $S_{35}$, then we find a tent with center induced by the subset $\{ v_1$, $v_2$, $s_{51}$, $k_1$, $k_5$, $k_{31}$, $k_{32} \}$, where $k_1$ in $K_1$ is adjacent to $v_1$ and nonadjacent to $v_2$, and $k_5$ in $K_5$ is adjacent to $v_2$ and nonadjacent to $v_1$.
If $v_2$ in $S_{34}$, then we find $M_{III}(4)$ induced by $\{ v_1$, $v_2$, $s_{51}$, $s_{13}$, $k_1$, $k_4$, $k_5$, $k_{31}$, $k_{32} \}$, where $k_1$ in $K_1$ is adjacent to $v_1$ and nonadjacent to $v_2$, $k_4$ in $K_4$ is adjacent to $v_2$ and nonadjacent to $v_1$, and $k_5$ in $K_5$ is nonadjacent to both $v_1$ and $v_2$.

\subcase \textit{There is $S_2(j)$ as a submatrix of $\mathbb A_i$, with $j \geq 3$.} Let $v_1, v_2, \ldots, v_j$ be the vertices in $S$ represented by the rows of $S_2(j)$, and let $k_{i1}, \ldots, k_{i,j-1}$ be the vertices in $K_i$ that are represented by columns $1$ to $j-1$ of $S_2(j)$. Notice that $v_1$ and $v_j$ are labeled with the same letter, and depending on whether $j$ is odd or even, then $v_1$ and $v_j$ are colored with distinct colors or with the same color, respectively. We assume without loss of generality that $v_1$ and $v_j$ are both labeled with L.

\subsubcase \textit{Suppose first that $j$ is odd. }If $i =3$, then there are no vertices $v_1$ and $v_j$ labeled with the same letter and colored with distinct colors as in $S_2(j)$. 
Hence, suppose that $i= 4$. In this case, $v_1$ in $S_{34}$ and $v_j$ in $S_{14} \cup S_{64}$.
Let $k_3$ in $K_3$ be a vertex adjacent to both $v_1$ and $v_j$, and let $k_1$ in $K_1$ adjacent to $v_j$. Thus, we find $F_1(j+2)$ induced by the subset $\{ s_{13}$, $s_{35}$, $v_1$, $\ldots$, $v_j$, $k_1$, $k_3$, $k_{i1}$, $\ldots$, $k_{i,j-1} \}$.

\subsubcase  \textit{Suppose $j$ is even. }We split this in two cases, depending on the parity of $i$. 
If $i=3$, then $v_1$ and $v_j$ lie in $S_{13} \cup S_{23}$.
Suppose that $v_1$ in $S_{13}$ and $v_j$ in $S_{23}$. Let $k_2$ in $K_2$ adjacent to $v_1$ and $v_j$. Hence, we find $F_1(j+2)$ induced by the subset $\{ v_1$, $\ldots$, $v_j$, $k_2$, $k_{i2}$, $\ldots$, $k_{i,j-1}$, $s_{35} \}$. The same holds if both $v_1$ and $v_j$ lie in $S_{23}$.
If instead $v_1$ and $v_j$ both lie in $S_{13}$, then we find $F_1(j+2)$ induced by the same subset but replacing $k_2$ for a vertex $k_1$ in $K_1$ adjacent to both $v_1$ and $v_j$.

Suppose now that $i=4$. In this case, $v_1$ and $v_j$ lie in $S_{14} \cup S_{64}$. In either case, there is a vertex $k_1$ in $K_1$ that is adjacent to both $v_1$ and $v_j$. Hence, we find $F_1(j+1)$ induced by $\{ v_1$, $\ldots$, $v_j$, $k_1$, $k_{i1}$, $\ldots$, $k_{i,j-1}$, $ s_{35} \}$.

\subcase \textit{There is $S_3(j)$ as a submatrix of $\mathbb A_i$, for some $j \geq 3$. }
Let $v_1, v_2, \ldots, v_j$ be the vertices in $S$ represented by the rows of $S_3(j)$, and let $k_{i1}, \ldots, k_{i(j-1)}$ be the vertices in $K_i$ that are represented by columns $1$ to $j-1$ of $S_3(j)$. Notice that $v_1$ and $v_j$ are labeled with distinct letters, and as in the previous case, depending on whether $j$ is odd or even, $v_1$ and $v_j$ are either colored with distinct colors or with the same color, respectively. We assume without loss of generality that $v_1$ is labeled with L and $v_j$ is labeled with R.

\subsubcase \textit{Suppose first that $j$ is odd. }If $i=3$, then $v_1$ lies in $S_{34} \cup S_{35}$, and $v_j$ lies in $S_{13} \cup S_{23}$.
If $v_1$ lies in $S_{34}$ and $v_j$ lies in $S_{23}$, then we find $F_1(j+2)$ induced by $\{ v_1$, $\ldots$, $v_j$, $k_2$, $k_4$, $k_{i1}$, $\ldots$, $k_{i(j-1)}$, $s_{35}$, $s_{13} \}$, where $k_4$ in $K_4$ is adjacent to $v_1$ and nonadjacent to $v_j$, and $k_2$ in $K_2$ adjacent to $v_j$ and nonadjacent to $v_1$. 
If $v_1$ lies in $S_{34}$ and $v_j$ lies in $S_{13}$, then we find $F_1(j+2)$ induced by $\{ v_1$, $\ldots$, $v_j$, $k_1$, $k_4$, $k_{i1}$, $\ldots$, $k_{i(j-1)}$, $s_{35}$, $s_{13} \}$, with $k_1$ in $K_1$ adjacent to $v_j$ and nonadjacent to $v_1$. If instead $v_1$ lies in $S_{35}$ and $v_j$ lies in $S_{23}$, then we find $F_1(j+2)$ induced by $\{ v_1$, $\ldots$, $v_j$, $k_2$, $k_5$, $k_{i1}$, $\ldots$, $k_{i(j-1)}$, $s_{35}$, $s_{13} \}$, with $k_5$ in $K_5$ adjacent to $v_1$ and nonadjacent to $v_j$.

Suppose that $i=4$. In this case, $v_1$ in $S_{34}$ and $v_j$ in $S_{41} \cup S_{42}$. In either case, we find a $j+1$-sun induced by $\{ v_1$, $\ldots$, $v_j$, $k_{i1}$, $\ldots$, $k_{i(j-1)}$, $k_1$, $k_3$, $s_{13} \}$, with $k_1$ in $K_1$ adjacent to $v_j$ and nonadjacent to $v_1$, and $k_3$ in $K_3$ adjacent to $v_1$ and nonadjacent to $v_j$.

\subsubcase \textit{Suppose now that $j$ is even.} If $i = 3$, then there no two rows in $\mathbb A_3$ labeled with distinct letters and colored with the same color. Hence, let $i=4$. In this case, either $v_1$ in $S_{34}$ and $v_j$ in $S_{45}$, or $v_1$ in $S_{14} \cup S_{64}$ and $v_j$ in $S_{41} \cup S_{42}$.

If $v_1$ in $S_{34}$ and $v_j$ in $S_{45}$, then we find a $(j+2)$-sun induced by $\{ v_1$, $\ldots$, $v_j$, $k_1$, $k_3$, $k_5$, $k_{i1}$, $\ldots$, $k_{i(j-1)}$, $s_{13}$, $s_{51} \}$, where $k_1$ in $K_1$ is nonadjacent to both $v_1$ and $v_j$, $k_3$ in $K_3$ is adjacent to $v_1$ and nonadjacent to $v_j$, and $k_5$ in $K_5$ is adjacent to $v_j$ and nonadjacent to $v_1$.

If instead $v_1$ in $S_{14} \cup S_{64}$ and $v_j$ in $S_{41} \cup S_{42}$, then we find a $j$-sun induced by $\{ v_1$, $\ldots$, $v_j$, $k_1$, $k_{i1}$, $\ldots$, $k_{i(j-1)} \}$, with $k_1$ in $K_1$ adjacent to both $v_1$ and $v_j$.

\vspace{1mm}
Since we have reached a contradiction for every forbidden submatrix of admissibility, then the matrix $\mathbb A_i$ is admissible.


\case \textit{$\mathbb A_i$ is admissible but not LR-orderable. }

Then it contains a Tucker matrix, or one of the following submatrices: $M_4'$, $M_4''$, $M_5'$, $M_5''$, $M'_2(k)$, $M''_2(k)$, $M_3'(k)$, $M_3''(k)$, or their corresponding dual matrices, for any $k \geq 4$.

We will assume throughout the rest of the proof that, for each pair of vertices $x$ and $y$ that lie in the same subset $S_{ij}$ of $S$, there are vertices $k_i$ in $K_i$ and $k_j$ in $K_j$ such that both $x$ and $y$ are adjacent to $k_i$ and $k_j$. This is given by Claim \ref{claim:tent_0}.

Suppose there is $M_I(j)$ as a submatrix of $\mathbb A_i$. Let $v_1, \ldots, v_j$ be the vertices of $S$ represented by rows $1$ to $j$ of $M_I(k)$, and let $k_{i1}, \ldots, k_{ij}$ be the vertices in $K$ represented by colums $1$ to $j$. Thus, if $j$ is even, then we find either a $j$-sun induced by $\{v_1$, $\ldots$, $v_j$, $k_{i1}$, $\ldots$, $k_{ij} \}$, and if $j$ is odd, then we find a $j$-sun with center induced by the subset $\{v_1$, $\ldots$, $v_j$, $k_{i1}$, $\ldots$, $k_{ij}$, $s_{i,i+2} \}$.

For any other Tucker matrix, we find the homonym forbidden subgraph induced by the subset $\{v_1$, $\ldots$, $v_j$, $k_{i1}$, $\ldots$, $k_{ij} \}$.

Suppose that $\mathbb A_i$ contains one of the following submatrices: $M_4'$, $M_4''$, $M_5'$, $M_5''$, $M'_2(k)$, $M''_2(k)$, $M_3'(k)$, $M_3''(k)$, or their corresponding dual matrices, for any $k \geq 4$. Let $M$ be such a submatrix. In this case, we have the following remark.

Notice that, for any tag column $c$ of $M$ that denoted which vertices are labeled with L, there is a vertex $k'$ in either $K_{i-1}$ or $K_{i-2}$ such that the vertices represented by a labeled row in $c$ are adjacent in $G$ to $k'$. 
	If instead the tag column $c$ denoted which vertices are labeled with R, then we find an analogous vertex $k''$ in either $K_{i+1}$ or $K_{i+2}$.

Depending on whether there is one or two tag columns in $M$, we find the homonym forbidden subgraph induced by the vertices in $S$ and $K$ represented by the rows and non-tagged columns of $M$ plus one or two vertices $k'$ and $k''$ as described in the previous remark.

\case \textit{ $\mathbb A_i$ is LR-orderable but not partially $2$-nested. }Thus, since there are no LR-rows in $\mathbb A_i$, then there is either a monochromatic gem or a monochromatic weak gem in $\mathbb A_i$.

Let $v_1$ and $v_2$ in $S$ the independent vertices represented by the rows of the monochromatic gem. Notice that both rows are labeled rows, since every unlabeled row in $\mathbb A_i$ is uncolored. It follows from this that a monochromatic gem or a monochromatic weak gem is induced only by two rows labeled with L or R, and thus both are the same case.

\subcase If $i=3$, since both vertices need to be colored with the same color, then $v_1$ in $S_{34}$ and $v_2$ in $S_{35}$. 
In that case, we find $D_0$ in $\mathbb A_i$ since both rows are labeled with the same letter, which results in a contradiction for we assumed that $\mathbb A_i$ is admissible. The same holds if both vertices belong to either $S_{34}$ or $S_{35}$.

\subcase If instead $i= 4$, then we have three possibilities. Either $v_1$ in $S_{14}$ and $v_2$ in $S_{64}$, or $v_1$ in $S_{34}$ and $v_2$ in $S_{45}$, or $v_1$ in $S_{14}$ and $v_2$ in $S_{41}$.
The first case is analogous to the $i=3$ case stated above.
For the second and third case, since both rows are labeled with distinct letters, then we find $D_1$ as a submatrix of $\mathbb A_i$. This results once more in a contradiction, for $\mathbb A_i$ is admissible.

Therefore, $\mathbb A_i$ is partially $2$-nested. 

\case \textit{ $\mathbb A_i$ is partially $2$-nested but not $2$-nested.}

Hence, for every proper $2$-coloring of the rows of $\mathbb A_i$, there is either a monochromatic gem or a monochromatic weak gem. Notice that, in such a gem, there is at least one unlabeled row for there are no LR-rows in $\mathbb A_i$ and we have just proven that $\mathbb A_i$ is partially $2$-nested. We consider the columns of the matrix $\mathbb A_i$ ordered according to an LR-ordering. 
Let us suppose without loss of generality that there is a monochromatic gem, since the case in which one of the rows is labeled with L or R and the other is unlabeled is analogous.
Let $v_j$ and $v_{j+1}$ be the rows that induce such a gem, and suppose that the gem induced by $v_j$ and $v_{j+1}$ is colored with red. 

Since there is no possible $2$-coloring for which these two rows are colored with distinct colors, then there is at least one distinct row $v_{j-1}$ colored with blue that forces $v_j$ to be colored with red. If $v_{j-1}$ is unlabeled, then $v_{j-1}$ and $v_j$ are neither disjoint or nested. If $v_{j-1}$ is labeled with L or R, then $v_j$ and $v_{j-1}$ induce a weak gem.

If $v_{j-1}$ forces the coloring only on $v_j$, let $v_{j+2}$ be a row such that $v_{j+2}$ forces $v_{j+1}$ to be colored with red. Suppose first that $v_{j+2}$ forces the coloring only to the row $v_{j+1}$. Hence, there is a submatrix as the following in $\mathbb A_i$:

\vspace{-5mm}
\[ \bordermatrix{ & \cr
		v_{j-1}\  & 1 1 0 0 0 \cr
		v_{j}\  & 0 1 1 0 0  \cr
		v_{j+1}\  & 0 0 1 1 0  \cr	
		v_{j+2}\  &  0 0 0 1 1  }\ 
	\begin{matrix}
	\textcolor{blue}{\bullet} \\ \textcolor{red}{\bullet} \\ \textcolor{red}{\bullet} \\ \textcolor{blue}{\bullet}
	\end{matrix}			
		\]

If there are no more rows forcing the coloring of $v_{j-1}$ and $v_{j+2}$, then this submatrix can be colored blue-red-blue-red. Since this is not possible, there are rows $v_l, \ldots, v_{j-2}$ and $v_{j+3}, \ldots, v_k$ such that every row forces the coloring of the next one -and only that row- including $v_{j-1}$, $v_j$, $v_{j+1}$ and $v_{j+2}$. Moreover, if this is the longest chain of vertices with this property, then $v_l$ and $v_k$ are labeled rows, for if not, we can proper color again the rows and thus extending the pre-coloring, which would be a contradiction.
Hence, we find either $S_2(k-l+1)$ or $S_3(k-l+1)$ in $\mathbb A_i$, and this also results in a contradiction, for $\mathbb A_i$ is admissible.

Suppose now that $v_{j-1}$ forces the red color on both $v_j$ and $v_{j+1}$. Thus, if $v_{j-1}$ is unlabeled, then $v_{j-1}$ is neither nested nor disjoint with both $v_j$ and $v_{j+1}$. Since $v_j$ and $v_{j+1}$ are neither disjoint nor nested, either $v_j[r_j] = v_{j+1}[r_j] = 1$ or $v_j[l_j] = v_{j+1}[l_j] = 1$. Suppose without loss of generality that $v_j[r_j] = v_{j+1}[r_j] = 1$. 
Since $v_{j-1}$ is neither disjoint or nested with $v_j$, then either $v_{j-1}[l_j] = 1$ or $v_{j-1}[r_j] = 1$, and the same holds for $v_{j-1}[l_{j+1}] = 1$ or $v_{j-1}[r_{j+1}] = 1$.

If $v_{j-1}[l_j] = 1$, then $v_{j-1}[l_{j+1}] = 1$ and $v_j[l_{j+1}] = 1$, and thus we find $F_0$ induced by $\{ v_{j-1}$, $v_j$, $v_{j+1}$, $l_{j-1}$, $l_{j+1}-1$, $l_{j+1}$, $r_j$, $r_j + 1 \}$, which results in a contradiction.

Analogously, if $v_{j-1}[r_j] = 1$, then $v_{j-1}[l_{j+1}] = 1$ and $v_{j-1}[l_j] = 1$, and thus we find $F_0$ induced by $\{ v_{j-1}$, $v_j$, $v_{j+1}$, $l_j$, $l_{j+1}$, $r_j$, $r_j +1$, $r_{j-1} \}$.

If instead $v_{j-1}$ is labeled with L or R, then the proof is analogous except that we find $F'_0$ instead of $F_0$ as a subconfiguration in $\mathbb A_i$.
\end{mycases}
Therefore, we have reached a contradiction in every case and thus $\mathbb A_i$ is $2$-nested.

\end{proof}

Let $G= (K,S)$ and $H$ as in Section \ref{sec:tent_partition}, and the matrices $\mathbb A_i$ for each $i=1, 2, \ldots, 6$ as in the previous subsection.

Suppose $\mathbb A_i$ is $2$-nested for each $i =1, 2, \ldots, 6$. Let $\chi_i$ be a coloring for every matrix $\mathbb A_i$. Hence, every row in each matrix  $\mathbb A_i$ is colored with either red or blue, and this is a proper $2$-coloring extension of the given precoloring (or equivalently, a block bi-coloring), and there is an LR-ordering $\Pi_i$ for each $i =1, 2, \ldots, 6$.

Let $\Pi$ be the ordering of the vertices of $K$ given by concatenating the LR-orderings $\Pi_1$, $\Pi_2$, $\ldots$, $\Pi_6$.
Let $A=A(S,K)$ and consider the columns of $A$ ordered according to $\Pi$.

For each vertex $s$ in $S_{ij}$, if $i \leq j$, then the R-block corresponding to $s$ in $\mathbb A_i$ and the L-block corresponding to $s$ in $\mathbb A_j$ are colored with the same color. Thus, we consider the row corresponding to $s$ in $A$ colored with that color. Notice that, if $i < l < j$, then $v$ is complete to each $K_l$. Thus, when defining $\mathbb A_l$ we did not consider such vertices since they do not interfere with the possibility of having an LR-ordering of the columns, for such a vertex would have a $1$ in each column of $\mathbb A_l$. 

If instead $i >j$, then the R-block corresponding to $s$ in $\mathbb A_i$ and the L-block corresponding to $s$ in $\mathbb A_j$ are colored with distinct colors. Moreover, notice that the row corresponding to $s$ in $A$ has both an L-block and an R-block. Thus, we consider its L-block colored with the same color assigned to $s$ in $\mathbb A_j$ and the R-block colored with the same color assigned to $s$ in $\mathbb A_j$.
Notice that the distinct coloring in $\mathbb A_i$ and $\mathbb A_j$ makes sense, since we are describing vertices whose chords must have one of its endpoints drawn in the $K_i^+$ portion of the circle and the other endpoint in the $K_j^-$ portion of the circle. 
Throughout the following, we will denote $s_i$ to the row corresponding to $s$ in $\mathbb A_i$.

Let $s \in S$. Hence, $s$ lies in $S_{ij}$ for some $i,j\in \{1,2,\ldots,6\}$. Notice that, a row representing a vertex $s$ in $S_{ii}$ is entirely colored with the same color. Moreover, this is also true for a row representing $s$ in $S_{ij}$ such that $i<j$. However, if $s$ in $S_{ij}$ and $i>j$, then $s_i$ and $s_j$ are colored with distinct colors.

\begin{defn} \label{def:matrices_A_por_colores}
We define the $(0,1)$-matrix $\mathbb A_r$ as the matrix obtained by considering only those rows representing vertices in $S \setminus \bigcup_{i=1}^6 S_{ii}$ and adding two distinct columns $c_L$ and $c_R$ such that the entry $\mathbb A_r (s, k)$ is defined as follows:
\begin{itemize}
	\item If $i<j$ and $s_i$ is colored with red, then the entry $\mathbb A_r (s, k)$ has a $1$ if $s$ is adjacent to $k$ and a $0$ otherwise, for every $k$ in $K$, and $\mathbb A_r (s, c_R)$ = $\mathbb A_r (s, c_L) = 0$. 
	\item If $i>j$ and $s_i$ is colored with red, then the entry $\mathbb A_r (s, k)$ has a $1$ if $s$ is adjacent to $k$ and a $0$ otherwise, for every $k$ in $K_i \cup \ldots K_6$, and $\mathbb A_r (s, c_R) = 1$, $\mathbb A_r (s, c_L) = 0$. Analogously, if $i>j$ and instead $s_j$ is colored with red, then the entry $\mathbb A_r (s, k)$ has a $1$ if $s$ is adjacent to $k$ and a $0$ otherwise, for every $k$ in $K_1 \cup \ldots K_j$, and $\mathbb A_r (s, c_R) = 0$, $\mathbb A_r (s, c_L) = 1$.
\end{itemize}

The matrix $\mathbb A_b$ is defined in an entirely analogous way, changing red for blue in the definition. 


\vspace{2mm}
We define the $(0,1)$-matrix $\mathbb A_{r-b}$ as the submatrix of $A$ obtained by considering only those rows corresponding to vertices $s$ in $S_{ij}$ with $i>j$ for which $s_i$ is colored with red.
The matrix $\mathbb A_{b-r}$ is defined as the submatrix of $A$ obtained by considering those rows corresponding to vertices $s$ in $S_{ij}$ with $i>j$ for which $s_i$ is colored with blue.

\end{defn}


\begin{lema} \label{lema:matrices_union_son_nested}
	Suppose that $\mathbb A_i$ is $2$-nested for every $=1, 2, \ldots, 6$. If $\mathbb A_r$, $\mathbb A_b$, $\mathbb A_{r-b}$ or $\mathbb A_{b-r}$ are not nested, then $G$ contains $F_0$ as a minimal forbidden induced subgraph for the class of circle graphs.
\end{lema}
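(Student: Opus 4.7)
The plan is to exploit the characterization of nested matrices via $0$-gems (Theorem \ref{teo:nested_caract}): for any of the four matrices in the statement, failing to be nested is equivalent to containing a $0$-gem as a subconfiguration, because the concatenated LR-ordering $\Pi$ produces by construction a consecutive-ones ordering for the rows of $\mathbb A_r, \mathbb A_b$ (each row is built from a union of consecutive pieces, plus possibly the first/last tag column) and likewise for $\mathbb A_{r-b},\mathbb A_{b-r}$. I will give the argument in detail for $\mathbb A_r$; the case of $\mathbb A_b$ follows by swapping the two colours throughout, and the two cross matrices $\mathbb A_{r-b},\mathbb A_{b-r}$ will be handled by an almost identical case analysis.

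So suppose $\mathbb A_r$ contains a $0$-gem given by two rows corresponding to independent vertices $v_1,v_2\in S$ and three columns corresponding to clique vertices $k_a,k_b,k_c\in K$, witnessing $v_2\sim k_a,k_b$, $v_1\sim k_b,k_c$, $v_1\not\sim k_a$ and $v_2\not\sim k_c$. The first key observation is that the witnessing columns cannot all lie in a single partition $K_i$: since $\mathbb A_i$ is $2$-nested, any two blocks of the same colour inside $\mathbb A_i$ are nested or disjoint (properties \ref{item:2nested3}, \ref{item:2nested5}, \ref{item:2nested6} and \ref{item:2nested7} of Definition \ref{def:2-nested}), so no $0$-gem supported on a single $K_i$ can appear. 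Hence $k_a$ and $k_c$ lie in distinct partitions $K_p$ and $K_q$; combining this with Claim \ref{claim:tent_0} (common-neighbour structure in each $K_i$) and the description of the parts of $S$ in Lemma \ref{lema:tent_2}, the possible locations of $v_1, v_2$ among the sets $S_{ij}$ are then severely constrained.

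Next, I will carry out a case analysis according to which pair $(S_{ij},S_{kl})$ contains $(v_1,v_2)$. In each case, the ``red chord'' picture tells us exactly across which partitions the chord of each $v_m$ runs, and the rotational symmetry of the tent reduces these combinations to a short list of essentially distinct configurations. For each representative I will exhibit a third independent vertex---typically a tent vertex from $\{s_{13},s_{35},s_{51}\}$, or, failing that, a suitably chosen vertex of $S$---and two auxiliary clique vertices so that the resulting $3\times 5$ submatrix of $A(S,K)$ is precisely $F_0$. As a representative example, if $v_1\in S_{31}$ and $v_2\in S_{34}$ have their red halves crossing in $K_3$, choosing $k_2\in K_2$, $k_3\in(v_2\setminus v_1)\cap K_3$, $k'_3\in v_1\cap v_2\cap K_3$, $k_4\in v_2\cap K_4$, $k_5\in K_5$ together with the third row $s_{13}$ yields exactly the pattern of $F_0$. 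The argument for $\mathbb A_{r-b},\mathbb A_{b-r}$ follows the same template, with the added observation that for these matrices $v_1$ and $v_2$ necessarily lie in sets $S_{ij}$ with $i>j$, so the chord crosses the ``equator'' of the tent; the tent chord lying along this equator again provides the third row needed to realize $F_0$.

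The main obstacle is the breadth of the case analysis: although the guiding idea is always the same---two red chord fragments that cross, together with a tent chord separating them, produce $F_0$---each set $S_{ij}$ carries its own adjacency signature, and $\mathbb A_{r-b},\mathbb A_{b-r}$ add further subcases where the two halves of a chord live in differently coloured arcs. Careful bookkeeping, exploiting the rotational symmetry of the tent and the empty/complete adjacencies listed in Lemma \ref{lema:tent_2}, is what keeps the number of genuinely distinct configurations manageable and lets $F_0$ be exhibited uniformly in each one.
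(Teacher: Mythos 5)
Your proposal follows essentially the same route as the paper: reduce non-nestedness to the presence of a $0$-gem, observe that $2$-nestedness of each $\mathbb A_i$ rules out a monochromatic gem supported inside a single $K_i$, and then run a case analysis over the pairs of subsets $S_{ij}$ that can host the two offending rows, exhibiting $F_0$ in each case via a tent vertex together with auxiliary clique vertices. One caveat: your representative case $v_1\in S_{31}$ is vacuous, since $S_{31}=S_{3(3-2)}$ is empty by Lemma \ref{lema:tent_2} and does not occur among the rows of $\mathbb A_r$; the genuine cases are exactly the nonempty parts listed in that lemma's table (for $\mathbb A_r$: $S_{34}$, $S_{45}$, $S_{35}$, $S_{36}$, $S_{25}$, $S_{26}$, $S_{42}$, $S_{52}$, $S_{51}$, $S_{61}$, $S_{64}$, $S_{63}$).
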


\begin{proof}
	Suppose first that $\mathbb A_r$ is not nested. Then, there is a $0$-gem.
	Since $\mathbb A_i$ is $2$-nested for every $=1, 2, \ldots, 6$, in particular there are no monochromatic gems in each $\mathbb A_i$. Let $f_1$ and $f_2$ be two rows that induce a $0$-gem in $\mathbb A_r$ and let $v_1$ in $S_{ij}$ and $v_2$ in $S_{lm}$ be the vertices corresponding to such rows in $G$. Notice that, in each case the proof will be analogous whenever two rows overlap and the corresponding two vertices lie in the same subset.
	
	\begin{mycases}
	The rows in $\mathbb A_r$ represent vertices in the following subsets of $S$: $S_{34}$, $S_{45}$, $S_{35}$, $S_{36}$, $S_{25}$, $S_{26}$, $S_{42}$, $S_{52}$, $S_{51}$, $S_{61}$, $S_{64}$ or $S_{63}$.
	Notice that $S_{36} = S_{[36}$, $S_{25} = S_{25]}$.
	
	\case $v_1$ in $S_{34}$. Thus, $v_2$ in $S_{35}$ since $\mathbb A_4$ is admissible. We find $F_0$ induced by $\{ v_1$, $v_2$, $s_{13}$, $k_{1}$, $k_{31}$, $k_{32}$, $k_{4}$, $k_{5} \}$.
	It follows analogously if $v_1$ in $S_{45}$, for in this case the only possibility is $v_2$ in $S_{35}$ since $S_{25}$ is complete to $K_5$. 
	\case $v_1$ in $S_{35} \cup S_{36}$. Since $S_{36}$ is complete to $K_3$, $S_{25}$ is complete to $K_5$ and $\mathbb A_6$ is admissible, the only possibility is $v_1$ in $S_{36}$ and $v_2$ in $S_{25} \cup S_{26}$. We find $F_0$ induced by $\{ v_1$, $v_2$, $s_{13}$, $k_{1}$, $k_{2}$, $k_{3}$, $k_{5}$, $k_{6} \}$ if $v_2$ in $S_{25}$ or $\{ v_1$, $v_2$, $s_{13}$, $k_{1}$, $k_{2}$, $k_{3}$, $k_{61}$, $k_{62} \}$ if $v_2$ in $S_{26}$.
	
	\case $v_1$ in $S_{25} \cup S_{26}$. In this case, the only possibility is that $v_1$ in $S_{25}$ and $v_2$ in $S_{26}$, since $\mathbb A_2$ and $\mathbb A_6$ are admissible.
	We find $F_0$ induced by $\{ v_1$, $v_2$, $s_{13}$, $k_{1}$, $k_{21}$, $k_{22}$, $k_{5}$, $k_{6} \}$.
	
	\end{mycases}	
	
	Thus, $\mathbb A_r$ is nested. Let us suppose that $\mathbb A_b$ is not nested.
		\begin{mycases}
	The rows in $\mathbb A_b$ represent vertices in the following subsets of $S$: $S_{12}$, $S_{13}$, $S_{23}$, $S_{14}$, $S_{42}$, $S_{52}$, $S_{51}$, $S_{61}$, $S_{64}$ or $S_{63}$.
	Notice that $S_{14} = S_{[14}$.
	
	\case $v_1$ in $S_{13}$. Thus, $v_2$ in $S_{12} \cup S_{23}$. We find $F_0$ induced by $\{ v_1$, $v_2$, $s_{51}$, $k_{5}$, $k_{11}$, $k_{12}$, $k_{2}$, $k_{3} \}$.
	The proof is analogous by symmetry if $v_1$ in $S_{23}$. Notice that there is no $0$-gem induced by $S_{12}$ and $S_{23}$ since $\mathbb A_2$ is admissible.
	
	\case $v_1$ in $S_{23}$. Since $S_{14}$ is complete to $K_1$, the only possibility is $v_2$ in $S_{63}$. We find $F_0$ induced by $\{ v_1$, $v_2$, $s_{35}$, $k_{6}$, $k_{2}$, $k_{31}$, $k_{32}$, $k_{5} \}$.
	
	\case $v_1$ in $S_{14}$. In this case, the only possibility is that $v_1$ in $S_{63} \cup S_{64}$, since $\mathbb A_4$ is admissible.
	We find $F_0$ induced by $\{ v_1$, $v_2$, $s_{35}$, $k_{6}$, $k_{1}$, $k_{3}$, $k_{4}$, $k_{5} \}$ if $v_2$ in $S_{63}$ and induced by $\{ v_1$, $v_2$, $s_{35}$, $k_{6}$, $k_{1}$, $k_{3}$, $k_{41}$, $k_{42} \}$ if $v_2$ in $S_{64}$.
	
	\end{mycases}	
	
	
	Suppose now that $\mathbb A_{b-r}$ is not nested. The rows in $\mathbb A_{b-r}$ represent vertices in the following subsets of $S$: $S_{41}$, $S_{42}$, $S_{51}$, $S_{52}$ or $S_{61}$.
	Suppose that $v_1$ in $S_{41}$ and $v_2$ in $S_{42}$. Thus, we find $F_0$ induced by $\{ v_1$, $v_2$, $s_{13}$, $k_{41}$, $k_{42}$, $k_{1}$, $k_{2}$, $k_{3} \}$. The proof is analogous if the vertices lie in $S_{51} \cup S_{52}$.
	Suppose that $v_1$ in $S_{61}$, thus $v_2$ in $S_{51} \cup S_{41}$. We find $F_0$ induced by $\{ v_1$, $v_2$, $s_{13}$, $k_{11}$, $k_{12}$, $k_{3}$, $k_{5}$, $k_{6} \}$ and therefore $\mathbb A_{b-r}$ is nested.
	
	 Suppose that $\mathbb A_{r-b}$ is not nested. The rows in $\mathbb A_{r-b}$ represent vertices in $S_{63}$ or $S_{64}$. If $v_1$ in $S_{63}$ and $v_2$ in $S_{64}$, then we find $F_0$ induced by $\{ v_1$, $v_2$, $s_{51}$, $k_{5}$, $k_{61}$, $k_{62}$, $k_{3}$, $k_{4} \}$. It follows analogously if one of both lie in $S_{63}$ or one or both lie in $S_{64}$ changing $k_3$ and $k_4$ for some analogous $k_{31}$, $k_{32}$ in $K_3$ or $k_{41}$, $k_{42}$ in $K_4$, respectively.
	
	
	This finishes the proof and therefore the four matrices are nested.
\end{proof}

\begin{teo} \label{teo:finalteo_tent}
	Let $G=(K,S)$ be a split graph containing an induced tent. Then, $G$ is a circle graph if and only if $\mathbb A_1,\mathbb A_2,\ldots,\mathbb A_6$ are $2$-nested and $\mathbb A_r$, $\mathbb A_b$, $\mathbb A_{b-r}$ and $\mathbb A_{r-b}$ are nested.
\end{teo}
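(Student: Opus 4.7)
The plan is to treat the two directions separately, exploiting the uniqueness of the circle model of the tent (which is prime) and using the $2$-nestedness of each $\mathbb A_i$ as the local obstruction and the nestedness of $\mathbb A_r,\mathbb A_b,\mathbb A_{r-b},\mathbb A_{b-r}$ as the global compatibility obstruction.

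For the only if direction, suppose $G$ has a circle model. Since the tent is a prime graph, its induced submodel is unique up to reflection; we may assume the six tent chords appear exactly as in Figure~\ref{fig:example_tentmodel}, so that each arc between two consecutive endpoints is canonically associated to one of the subsets $K_1,K_2,\ldots,K_6$ (the arc $k_{i-1}k_{i+1}$ on one side and $s_{(i-2)i}s_{i(i+2)}$ on the other determine the two ``slots'' where chords of $K_i$ must have their endpoints). Reading off the cyclic order of the endpoints of the chords of $K_i$ along the arc $k_{i-1}k_{i+1}$ yields a linear ordering $\Pi_i$ of $K_i$. I would then verify that $\Pi_i$ is a suitable LR-ordering for $\mathbb A_i$: every vertex of $S_{ij}$ (resp.\ $S_{ji}$) with $j\neq i$ has exactly one endpoint in the $K_i$-region, forcing its row to start in the first column (R-label) or to end in the last column (L-label) according to whether its other endpoint lies to the left or to the right of the tent arc. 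The color assignment for each $\mathbb A_i$ is then read off by looking at which of the two $K_i$-arcs contains the endpoint of the chord; since two chords corresponding to independent vertices cannot cross, each of the nine properties defining a $2$-nested matrix translates into a ``no forbidden crossing'' statement in the model, making $\mathbb A_i$ $2$-nested. The nestedness of $\mathbb A_r,\mathbb A_b,\mathbb A_{r-b},\mathbb A_{b-r}$ then follows from Lemma~\ref{lema:matrices_union_son_nested} read in reverse: any $0$-gem in one of these four matrices would produce two independent vertices whose chords must cross in the model, contradicting independence.

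For the if direction I would build a circle model from the combinatorial data. First, draw the six tent chords in the canonical configuration. Then, using the LR-orderings $\Pi_1,\ldots,\Pi_6$ provided by $2$-nestedness, concatenate them into a global ordering $\Pi$ of $K$ and place the chords of each $k\in K_i$ with one endpoint in each of the two $K_i$-arcs, in the order dictated by $\Pi_i$. Next, for every $s\in S_{ii}$, its row in $\mathbb A_i$ is a U-block colored red or blue, and the block bi-coloring tells us which of the two $K_i$-arcs should contain both endpoints; the C$1$P of $\mathbb A_i$ restricted to the U-blocks of a single color guarantees such an arc placement can be performed so that independent chords do not cross. For every $s\in S_{ij}$ with $i<j$, the row $s_i$ has an R-block in $\mathbb A_i$ and the row $s_j$ has an L-block in $\mathbb A_j$; by construction these share the same color, and the chord is drawn with its two endpoints in the arcs of $K_i$ and $K_j$ indicated by that common color, passing through the region of the disk complete to $K_{i+1},\ldots,K_{j-1}$. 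For $s\in S_{ij}$ with $i>j$ the row has both an L-block and an R-block, colored differently, and the chord instead wraps the other way around the tent. The nestedness of $\mathbb A_r$ and $\mathbb A_b$ is precisely what guarantees that the chords drawn in the same ``same-color'' region do not cross each other, and the nestedness of $\mathbb A_{r-b}$ and $\mathbb A_{b-r}$ guarantees the same for the ``wrap-around'' chords coming from $S_{ij}$ with $i>j$.

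I expect the main obstacle to be the second direction: showing that the local information (LR-ordering plus block bi-coloring of each $\mathbb A_i$) plus the four global nestedness conditions are actually sufficient to produce a valid circle model without any pair of independent chords crossing. The heart of the argument is a careful bookkeeping of where the endpoints of a chord corresponding to $s\in S_{ij}$ can lie, distinguishing the four cases $i=j$, $i<j$, $i>j$ and ($s$ complete to several consecutive $K_l$'s), and in each case using the relevant matrix to rule out a crossing. I would prove this by a case analysis on the types of pairs $(s,s')$ of independent vertices: if both are in $S_{ii}$ for the same $i$, noncrossing follows from $\mathbb A_i$ being $2$-nested (property 6 and the suitability of $\Pi_i$); if they belong to different $K$-regions of the same color family, noncrossing follows from $\mathbb A_r$ or $\mathbb A_b$ being nested; and if one or both wrap around, noncrossing follows from $\mathbb A_{r-b}$ or $\mathbb A_{b-r}$ being nested. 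The adjacency structure of $G$ is recovered because two non-independent chords $k\in K$, $s\in S$ cross iff their endpoints interleave, which is exactly the condition encoded by a $1$-entry in $A(S,K)$ under the ordering $\Pi$.
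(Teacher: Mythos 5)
Your sufficiency direction is essentially the paper's: concatenate the LR-orderings of the $\mathbb A_i$ into a global ordering $\Pi$, place the $K_i$-chords accordingly, and use the block bi-coloring to decide in which of the two $K_i$-arcs each endpoint of an $S$-chord lands, with the nestedness of $\mathbb A_r$, $\mathbb A_b$, $\mathbb A_{r-b}$ and $\mathbb A_{b-r}$ ruling out crossings between independent chords assigned to the same arcs (the paper merely spells out the explicit arc assignments for each $S_{ij}$). Your necessity direction, however, takes a genuinely different route: the paper obtains it as the contrapositive of Lemma~\ref{lema:equiv_circle_2nested_tent} and Lemma~\ref{lema:matrices_union_son_nested} --- if some matrix fails the required property, a long case analysis exhibits an induced subgraph from $\mathcal{T}\cup\mathcal{F}$, all of which are non-circle --- whereas you argue directly from an arbitrary circle model of $G$, using primality of the tent to fix the twelve arcs and reading off the orderings and colorings geometrically. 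Your route is shorter and more conceptual for proving this particular equivalence, but it does not produce the explicit forbidden subgraphs that the chapter needs for Theorem~\ref{teo:circle_split_caract}, which is the reason the paper channels necessity through the case analysis. Two steps in your sketch need to be made explicit: (i) the coloring read off from the model is only determined up to a global reflection, so you must fix one orientation and check that it extends the specific pre-coloring built into the definitions of all six matrices $\mathbb A_1,\ldots,\mathbb A_6$ simultaneously (this is exactly property~2 of $2$-nestedness and is not automatic for an arbitrary reading of the model); and (ii) the assertion that ``each of the nine properties translates into a no-crossing statement'' should be verified property by property --- in the tent case the $\mathbb A_i$ have no LR-rows so most conditions are vacuous, but the C$1$P, the L/R endpoint conditions of the LR-ordering, and properties 5, 6 and 7 each require a short geometric argument of the kind you indicate.
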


\begin{proof} 
Necessity is clear by the previous lemmas and the fact that the graphs in families $\mathcal{T}$ and $\mathcal{F}$ are all non-circle. Suppose now that each of the matrices $\mathbb A_1,\mathbb A_2,\ldots,\mathbb A_6$ is $2$-nested, and that the matrices $\mathbb A_r$, $\mathbb A_b$, $\mathbb A_{b-r}$ and $\mathbb A_{r-b}$ are nested.
Let $\Pi_i$ be an LR-ordering for the columns of $\mathbb A_i$ for each $i=1,2,\ldots,6$, and let $\Pi$ be the ordering obtained by concatenation of $\Pi_i$ for all the vertices in $K$.
Consider the circle divided into twelve pieces as in Figure \ref{fig:modelA3}. For each $i\in\{1,2,\ldots,6\}$ and for each vertex $k_i \in K_i$ we place a chord having one end in $K_i^+$ and the other end in $K_i^-$, in such a way that the ordering of the endpoints of the chords in $K_i^+$ and $K_i^-$ is $\Pi_i$. 

Let us see how to place the chords for every subset $S_{ij}$ of $S$.

Notice that, by Lemma \ref{lema:matrices_union_son_nested} for every subset $S_{ij}$ such that $i \neq j$, all the vertices in $S_{ij}$ are nested according to the ordering $\Pi$. In other words, the vertices in each $S_{ij}$ are totally ordered by inclusion.
Moreover, it is also a consequence of Lemma \ref{lema:matrices_union_son_nested} and Claim \ref{claim:tent_0}, that if $i \geq k$ and $j\leq l$, then every vertex in $S_{ij}$ is contained in every vertex of $S_{kl}$.

Furthermore, let $i \in \{1, 3, 5\}$. Notice that, since $S_{i-1,i}$ is labeled with L in $\mathbb A_i$, $S_{i,i+1}$ is labeled with R in $\mathbb A_i$, any row in each of these subsets is colored with red and $\mathbb A_i$ is admissible and LR-orderable, then there is no vertex in $K_i$ such that the corresponding column has value $1$ in two distinct vertices in $S_{i-1,i}$ and $S_{i,i+1}$, respectively. Equivalently, the vertex set $N_{K_i}(S_{i-1,i}) \cap N_{K_i}(S_{i,i+1})$ is empty. 

A similar situation occurs with the vertices in $S_{i-2,i+1}$ and $S_{i+1,i-2}$ for each $i\in\{2,4,6\}$, for the vertices in each subset are labeled with R and L respectively, and since $\mathbb A_{i-2}$ is $2$-nested, then the rows corresponding to vertices in $S_{i-2,i+1}$ end in the last column of $\mathbb A_{i-2}$ and the vertices corresponding to $S_{i+1,i-2}$ start in the first column of $\mathbb A_{i-2}$. Furthermore, this implies that the sets $N_{K_{i-2}}(S_{i-2,i+1})$ and $N_{K_{i-2}}(S_{i+1,i-2})$ are disjoint. The same holds for $N_{K_{i+1}}(S_{i-2,i+1})$ and $N_{K_{i+1}}(S_{i+1,i-2})$.

\vspace{2mm}
We will place the chords according to the ordering $\Pi$ given for every vertex in $K$. For each subset $S_{ij}$, we order its vertices with the inclusion ordering of the neighbourhoods in $K$ and the ordering $\Pi$. When placing the chords corresponding to the vertices of each subset, we do it from lowest to highest according to the previously stated ordering given for each subset.

Hence, we first place the chords of every subset $S_{i,i+1}$.
\begin{itemize}
	\item If $i =1, 2, 5$, then we place one endpoint in $K_i^-$ and the other endpoint in $K_{i+1}^-$. 
	\item If $i =3, 4$, then we place one endpoint in $K_i^+$ and the other endpoint in $K_{i+1}^+$. 
	\item If $i = 6$, then we place one endpoint in $K_6^-$ and the other endpoint in $K_1^+$.  
\end{itemize}
Afterwards, we place the chords that represent vertices in $S_{i-1, i+1}$.
\begin{itemize}
	\item If $i = 2$, then we place one endpoint in $K_1^-$ and the other endpoint in $K_3^-$. 
	\item If $i =4$, then we place one endpoint in $K_3^+$ and the other endpoint in $K_5^+$. 
	\item If $i =6$, then we place one endpoint in $K_5^-$ and the other endpoint in $K_1^+$. 
\end{itemize}

We denote $a_i^-$ and $a_i^+$ to the placement in the circle given to the chords of $K_i$ 
corresponding to the first and last column of $\mathbb A_i$, respectively.
We denote $s_{i,i+2}^+$ to the placement of the chord corresponding to the vertex $s_{i,i+2}$ of the tent $H$, which lies between $a_{i-1}^+$ and $a_i^-$, and $s_{i,i+2}^+$ to the placement of the chord of the vertex $s_{i, i+2}$ that lies between $a_{i+1}^+$ and $a_{i+2}^-$.


\vspace{1mm}
For each $i \in \{ 1, 2, \ldots, 6\}$, we give the placement of the chords corresponding to the vertices in $S_{i-1, i+2}$:
\begin{itemize}
	\item For $i = 1$, we place one endpoint in $K_6^+$, and the other endpoint between $s_{13}^-$ and the chord corresponding to $a_4^-$ in $K_4^-$. 
	\item For $i = 2$, we place one endpoint between the chord corresponding to $a_6^+$ in $K_6^+$ and $s_{13}^+$, and the other endpoint in $K_4^-$. 
	\item For $i = 3$, we place one endpoint in $K_2^+$, and the other endpoint between $s_{35}^-$ and the chord corresponding to $a_6^-$ in $K_6^+$. 
	\item For $i = 4$, we place one endpoint between the chord corresponding to $a_2^+$ in $K_2^+$ and $s_{35}^+$, and the other endpoint in $K_6^+$.
	\item For $i = 5$, we place one endpoint in $K_4^-$, and the other endpoint between $s_{51}^-$ and the chord corresponding to $a_2^-$ in $K_2^+$. 
	\item For $i = 6$, we place one endpoint between the chord corresponding to $a_4^+$ in $K_4^-$ and $s_{51}^+$, and the other endpoint in $K_2^+$.  
\end{itemize}

Finally, for the vertices in $S_{i-2,i+2}$, we place the chords as follows:
\begin{itemize}
	\item For $i=2$, we place one endpoint in $K_6^+$ and the other endpoint in $K_4^-$.
	\item For $i=4$, we place one endpoint in $K_2^+$ and the other endpoint in $K_6^+$.
	\item For $i=6$, we place one endpoint in $K_4^-$ and the other endpoint in $K_2^+$.
\end{itemize}

This gives a circle model for the given split graph $G$.
\end{proof}

\section{Split circle graphs containing an induced $4$-tent} \label{sec:circle3}

In this section we will address the second case of the proof of Theorem \ref{teo:circle_split_caract}, which is the case where $G$ contains an induced $4$-tent. The difference between this case and the tent case, is that one of the matrices that we need to define contains LR-rows, which does not happen in the tent case.
This section is subdivided as follows. 
In Subsection \ref{subsec:4tent2}, we define the matrices $\mathbb{B}_i$ for each $i=1, 2, \ldots, 6$ and demonstrate some properties that will be useful further on. 
In subsections \ref{subsec:4tent3} and \ref{subsec:4tent4}, we prove the necessity of the $2$-nestedness of each $\mathbb B_i$ for $G$ to be a circle graph, and give the guidelines to draw a circle model for a split graph $G$ containing an induced $4$-tent in Theorem \ref{teo:finalteo_4tent}.

\subsection{Matrices $\mathbb B_1,\mathbb B_2,\ldots,\mathbb B_6$} \label{subsec:4tent2}

Let $G=(K,S)$ and $H$ as in Section \ref{sec:4tent_partition}.
For each $i\in\{1,2,\ldots,6\}$, let $\mathbb B_i$ be an enriched $(0,1)$-matrix having one row for each vertex $s\in S$ such that $s$ belongs to $S_{ij}$ or $S_{ji}$ for some $j\in\{1,2,\ldots,6\}$ and one column for each vertex $k\in K_i$ and such that such that the entry corresponding to row $s$ and column $k$ is $1$ if and only if $s$ is adjacent to $k$ in $G$. For each $j\in\{1,2,\ldots,6\}-\{i\}$, we mark those rows corresponding to vertices of $S_{ji}$ with L and those corresponding to vertices of $S_{ij}$ with R. Those vertices in $S_{[15]}$ and $S_{[16}$ are labeled with LR.

As in the previous section, some of the rows of $\mathbb B_i$ are colored. However, since we do not have the same symmetry as in the tent case, we will give a description of every matrix separately, for each $i \in \{1, \ldots, 6\}$ (See Figure \ref{fig:matricesB}). 

\begin{figure}
\centering
\[ \mathbb B_1 = \bordermatrix{ & K_1\cr
		S_{12}\ \textbf{R} &\cdots \cr
		         S_{11}\            & \cdots \cr
                S_{14}\ \textbf{R} & \cdots \cr
                S_{15}\ \textbf{R} & \cdots \cr
                S_{16}\ \textbf{R} & \cdots \cr
                S_{61}\ \textbf{L} & \cdots }\
                \begin{matrix}
                \textcolor{red}{\bullet} \\ \\ \textcolor{blue}{\bullet} \\ \textcolor{blue}{\bullet} \\ \textcolor{blue}{\bullet} \\ \textcolor{blue}{\bullet}
                \end{matrix} \qquad
   \mathbb B_2 = \bordermatrix{ & K_2\cr
		S_{12}\ \textbf{L} & \cdots \cr
                S_{22}\            & \cdots \cr
                S_{23}\ \textbf{R} & \cdots \cr
                S_{24}\ \textbf{R} &  \cdots }\
                \begin{matrix}
                \textcolor{red}{\bullet} \\ \\ \textcolor{blue}{\bullet} \\ \textcolor{blue}{\bullet} 
                \end{matrix} \qquad
   \mathbb B_3 = \bordermatrix{ & K_3\cr
                S_{35}\ \textbf{R} & \cdots \cr
                S_{36}\ \textbf{R} & \cdots \cr
                S_{13}\ \textbf{L} & \cdots \cr
                S_{33}\            & \cdots \cr
                S_{34}\ \textbf{R} & \cdots \cr
                S_{23}\ \textbf{L} & \cdots }\
                \begin{matrix}
                 \textcolor{red}{\bullet} \\ \textcolor{red}{\bullet} \\ \textcolor{red}{\bullet} \\ \\ \textcolor{blue}{\bullet} \\ \textcolor{blue}{\bullet}
                \end{matrix} \]
                
\[   \mathbb B_4 = \bordermatrix{ & K_4\cr
		S_{45}\ \textbf{R} & \cdots \cr
                S_{44}\            & \cdots \cr
                S_{24}\ \textbf{L} & \cdots \cr
                S_{34}\ \textbf{L} & \cdots }\       
                \begin{matrix}
                \textcolor{red}{\bullet} \\ \\ \textcolor{blue}{\bullet} \\ \textcolor{blue}{\bullet} 
                \end{matrix} \qquad
   \mathbb B_5 = \bordermatrix{ & K_5\cr
                S_{45}\ \textbf{L} & \cdots \cr
                S_{55}\            & \cdots \cr
                S_{56}\ \textbf{R} & \cdots \cr
                S_{25}\ \textbf{L} & \cdots \cr
                S_{15}\ \textbf{L} & \cdots \cr
                S_{65}\ \textbf{L} & \cdots }\
                \begin{matrix}
                \textcolor{red}{\bullet} \\ \\ \textcolor{blue}{\bullet} \\ \textcolor{blue}{\bullet} \\ \textcolor{blue}{\bullet} \\ \textcolor{blue}{\bullet}
                \end{matrix} \qquad
    \mathbb B_6 = \bordermatrix{ & K_6\cr
		S_{61}\ \textbf{R} & \cdots \cr
				S_{64}\ \textbf{R} & \cdots \cr
				S_{65}\ \textbf{R} & \cdots \cr
               S_{36}\ \textbf{L} & \cdots \cr
                S_{46}\ \textbf{L} & \cdots \cr
                S_{66}\            & \cdots \cr
                S_{62}\ \textbf{R} & \cdots \cr
                S_{63}\ \textbf{R} & \cdots \cr
                S_{26}\ \textbf{L} & \cdots \cr
                S_{56}\ \textbf{L} & \cdots \cr
                S_{16}\ \textbf{L} & \cdots \cr
                S_{[15]}\ \textbf{LR} & \cdots \cr
                S_{[16}\ \textbf{LR} & \cdots }\
                \begin{matrix}
                \textcolor{red}{\bullet} \\ \textcolor{red}{\bullet} \\ \textcolor{red}{\bullet} \\ \textcolor{red}{\bullet} \\ \textcolor{red}{\bullet} \\ \\ \textcolor{blue}{\bullet} \\ \textcolor{blue}{\bullet} \\ \textcolor{blue}{\bullet} \\ \textcolor{blue}{\bullet} \\ \textcolor{blue}{\bullet} \\ \\ \\
                \end{matrix} \]

\caption{The matrices $\mathbb{B}_1$, $\mathbb{B}_2$, $\mathbb{B}_3$, $\mathbb{B}_4$, $\mathbb{B}_5$ and $\mathbb{B}_6$.} \label{fig:matricesB}
\end{figure}
                
Notice that, since $S_{25}$, $S_{26}$, $S_{52}$ and $S_{62}$ are complete to $K_2$, then they are not considered for the definition of the matrix $\mathbb B_2$. The same holds for $S_{13}$ with regard to $\mathbb B_1$, $S_{63}$ with regard to $\mathbb B_3$, $S_{41}$, $S_{46}$, $S_{14}$ and $S_{64}$ with regard to $\mathbb B_4$, and $S_{35}$ with regard to $\mathbb B_5$.
Also notice that we considered $S_{16}$ and $S_{[16}$ as two distinct subsets of $S$. Moreover, every vertex in $S_{[16}$ is labeled with LR and every vertex in $S_{16}$ is labeled with L. 
Furthermore, every row that represents a vertex in $S_{[15]}$ is an empty LR-row in $\mathbb B_6$. Since we need $\mathbb B_6$ to be an enriched matrix, by definition of enriched matrix every row corresponding to a vertex in $S_{[15]}$ must be colored with the same color.
We will give more details on this in Subsection \ref{subsec:4tent4}.


\begin{remark}
	Claim \ref{claim:tent_0} remains true if $G$ contains an induced $4$-tent.
	The proof is anal\-o\-gous as in the tent case.
\end{remark}

\subsection{Split circle equivalence} \label{subsec:4tent3}

In this subsection, we will prove a result analogous to Lemma \ref{lema:equiv_circle_2nested_tent}. In this case, the matrices $\mathbb B_i$ contain no LR-rows, for each $i \in \{1, \ldots, 5 \}$, hence the proof is very similar to the one given in Subsection \ref{subsec:tent3} for the tent case.

\begin{lema}  \label{lema:equiv_circle_2nested_4tent_sinLR} 
	If $\mathbb B_i$ is not $2$-nested, for some $i \in \{ 1, \ldots, 5 \}$, then $G$ contains one of the forbidden subgraphs in $\mathcal{T}$ or $\mathcal{F}$. 
\end{lema}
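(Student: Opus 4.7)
The plan is to follow the same four-stage structure as in the proof of Lemma \ref{lema:equiv_circle_2nested_tent}, exploiting the key simplification that $\mathbb B_i$ for $i\in\{1,\ldots,5\}$ contains no LR-rows. By the contrapositive, I would assume $G$ is $\{\mathcal{T},\mathcal{F}\}$-free and derive that each $\mathbb B_i$ satisfies the four conditions that, by Theorem \ref{teo:2-nested_caract_bymatrices}, characterize $2$-nested matrices: admissibility, LR-orderability, partial $2$-nestedness, and extendability of the partial $2$-coloring.

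First, I would suppose $\mathbb B_i$ is not admissible. Since no LR-rows are present, by Definition \ref{teo:caract_admissible} the matrix must contain one of $D_0$, $D_1$, $D_2$, or $S_2(k)$, $S_3(k)$ for some $k\geq3$. For each such subconfiguration, the rows correspond to independent vertices lying in certain $S_{ij}$ and the columns to vertices of $K_i$; using Lemma \ref{lema:4tent_1} to recall which $S_{ij}$ are nonempty and how each is labeled/colored in $\mathbb B_i$, and invoking Claim \ref{claim:tent_0} (which remains valid in the $4$-tent case) to produce common neighbours in adjacent partitions $K_\ell$, I would construct an induced subgraph of $G$ isomorphic to one of the graphs in $\mathcal{T}$ or $\mathcal{F}$. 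This mirrors the tent analysis but must be done for each of the five matrices separately because the labelings differ; typical witnesses are $k$-suns, tent${}\vee K_1$, $4$-tent${}\vee K_1$, net${}\vee K_1$, and the $F_1(k)$, $F_2(k)$ graphs.

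Second, assuming admissibility but not LR-orderability, Theorem \ref{teo:LR-orderable_caract_bymatrices} forces $(\mathbb B_i)^*_\tagg$ to contain a Tucker matrix or one of $M'_2(k), M''_2(k), M'_3(k), M''_3(k), M'_4, M''_4, M'_5, M''_5$. The untagged Tucker matrices lift directly to the corresponding graphs in $\mathcal{T}$ (odd/even suns, tent with centre, $k$-suns, etc.), while the tagged versions lift by adjoining appropriate vertices of the fixed $4$-tent $H$ and of the partitions $K_{i\pm1}$, $K_{i\pm2}$ as dictated by which tag column appears. Third, with $\mathbb B_i$ admissible and LR-orderable but not partially $2$-nested, the absence of LR-rows forces a monochromatic (weak) gem between two labeled rows; admissibility then immediately yields $D_0$ or $D_1$, a contradiction that closes this stage without further casework.

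Finally, assuming $\mathbb B_i$ is partially $2$-nested but not $2$-nested, I would reproduce the chain argument from the tent case: some monochromatic (weak) gem is forced in every proper extension, and tracing the shortest forcing chain produces either $S_2(k)$ or $S_3(k)$ (already discarded) or an $F_0$/$F'_0$ configuration. The $F_0$/$F'_0$ configuration lifts, via common neighbours from Claim \ref{claim:tent_0} and vertices of $H$, to an induced $F_0$, $F_1(k)$ or $F_2(k)$ in $G$. The main obstacle will be the bookkeeping of stage one: unlike the tent case, the partitions of $S$ in the $4$-tent setting are asymmetric (Lemma \ref{lema:4tent_1}), so $\mathbb B_1,\mathbb B_2,\mathbb B_3,\mathbb B_4,\mathbb B_5$ each have different sets of labeled-and-coloured row types, and for every forbidden subconfiguration I must separately enumerate which subsets $S_{ij}$ may supply its rows and then select a witness from the correct $K_\ell$ to complete the induced subgraph. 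Symmetries within the $4$-tent reduce the enumeration substantially (e.g.\ $\mathbb B_1$ and $\mathbb B_5$ are mirror cases, as are $\mathbb B_2$ and $\mathbb B_4$), so in practice only a few canonical matrices require full treatment, with the remaining cases following by the obvious relabelling.
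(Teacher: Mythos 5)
Your proposal is correct and follows essentially the same route as the paper's proof: the same four-stage contrapositive (admissibility, LR-orderability, partial $2$-nestedness, extendability), the same observation that the absence of LR-rows restricts the admissibility obstructions to $D_0$, $D_1$, $D_2$, $S_2(k)$, $S_3(k)$, the same reduction of the LR-orderability and coloring-extension stages to the tent-case arguments, and the same use of symmetry (the paper treats only $i=1,2,3$, exactly the mirror pairing you describe). The remaining work is the case-by-case enumeration of witnesses you outline, which is what the paper carries out.
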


\begin{proof}
	Relying on the symmetry between some of the sets $K_1, \ldots, K_5$, we will only prove the statement for $i = 1, 2, 3$.
	The proof is organized analogously as in Lemma \ref{lema:equiv_circle_2nested_tent}. 
	As in Lemma \ref{lema:equiv_circle_2nested_tent}, notice that, if $G$ is $\{ \mathcal{T}, \mathcal{F} \}$-free, then in particular, for each $i=1, \ldots, 6$, $\mathbb B_i$ contains no $M_0$, $M_{II}(4)$, $M_V$ or $S_0(k)$ for every even $k \geq 4$ since these matrices are the adjacency matrices of non-circle graphs.

\begin{mycases}
	
	\case \textit{$\mathbb B_i$ is not admissible} 
	
	It follows from Theorem \ref{teo:caract_admissible} and the fact that $\mathbb B_i$ contains no LR-rows that $\mathbb B_i$ contains some submatrix $D_0$, $D_1$, $D_2$, $S_2(k)$ or $S_3(k)$ for some $k \geq 3$. 

Let $v_1$ and $v_2$ in $S$ be the vertices whose adjacency is represented by the first and second row of $D_j$, for each $j=0, 1, 2$, and let $k_{i1}$ and $k_{i2}$ in $K_i$ be the vertices whose adjacency is represented by the first and second column of $D_j$ respectively, for each $j= 0, 2$, and $k_i$ in $K_i$ is the vertex whose adjacency is represented by the column of $D_1$.

\subcase \textit{$\mathbb B_i$ contains $D_0$}
 
We assume without loss of generality that both rows are labeled with L.

\subsubcase \textit{Suppose that $i=1$.} 
Since the coloring is indistinct, the vertices $v_1$ and $v_2$ may belong to one or two of the following subclasses: $S_{61}$, $S_{12}$, $S_{14}$, $S_{15}$, $S_{16}$.
Suppose first that $v_1$ and $v_2$ both lie in $S_{61}$, thus $K_6 \neq \emptyset$. If $N_{K_6}(v_1)$ and $N_{K_6}(v_2)$ are non-disjoint, then we find a tent induced by $\{v_1$, $v_2$, $s_{12}$, $k_{11}$, $k_{12}$, $k_6\}$, where $k_6$ in $K_6$ is adjacent to both $v_1$ and $v_2$. If instead there is no such vertex $k_6$, then there are vertices $k_{61}$ and $k_{62}$ in $K_6$ such that $k_{61}$ is adjacent to $v_1$ and nonadjacent to $v_2$, and $k_{62}$ is adjacent to $v_2$ and nonadjacent to $v_1$. Then, we find $M_{IV}$ induced by $\{ v_1$, $v_2$, $s_{12}$, $s_{24}$, $k_{11}$, $k_2$, $k_4$, $k_{61}$, $k_{62}$, $k_{12} \}$.

Suppose that $v_1$ and $v_2$ lie in $S_{12}$.  
If $N_{K_2}(v_1)$ and $N_{K_2}(v_2)$ are non-disjoint, then we find net${}\vee{}K_1$ induced by $\{ v_1$, $v_2$, $s_{24}$, $k_{11}$, $k_2$, $k_4$, $ k_{12}\}$. We find the same subgraph very similarly if $v_1$ and $v_2$ lie both in $S_{14}$ or in $S_{15}$ and neither $v_1$ nor $v_2$ is complete to $K_5$. 
If $v_1$ in $S_{12}$ and $v_2$ in $S_{14]} \cup S_{15} \cup S_{16}$, then we find $M_{II}(4)$ induced by $\{ k_{11}$, $k_{12}$, $k_2$, $k_4$, $v_1$, $v_2$, $s_{12}$, $s_{24} \}$.

If $v_1$ in $S_{14]}$ and $v_2$ in $S_{15} \cup S_{16}$, then we find tent with center induced by $\{ k_{11}$, $k_{12}$, $k_{2}$,$k_4$, $v_1$, $v_2$, $s_{12} \}$. Moreover, we find the same subgraph if $v_1$ and $v_2$ in $S_{15}$ and only $v_1$ is complete to $K_5$ and if $v_1$ and $v_2$ in $S_{16}$ or in $S_{15}$ and are both complete to $K_5$. 

If instead $N_{K_2}(v_1)$ and $N_{K_2}(v_2)$ are disjoint, then we find $M_{IV}$ induced by $\{ k_{11}$, $k_{22}$, $k_{12}$, $k_{21}$, $k_5$, $k_4$, $v_1$, $v_2$, $s_{45}$, $s_{24} \}$.

\subsubcase
\textit{Suppose that $i=2$.} If $v_1$ and $v_2$ lie in $S_{12}$, and $N_{K_1}(v_1)$ and $N_{K_1}(v_2)$ are disjoint, then we find $M_{IV}$ as in the previous case, induced by $\{ v_1$, $v_2$, $s_{24}$, $s_{45}$, $k_{11}$, $k_{21}$, $k_{12}$, $k_{22}$, $k_5$, $k_4 \}$. If instead $N_{K_1}(v_1)$ and $N_{K_1}(v_2)$ are non-disjoint, then we find net${}vee{}K_1$ induced by $\{ v_1$, $v_2$, $s_{24}$, $k_{21}$, $k_{22}$, $k_{1}$, $k_4 \}$. Similarly, we find the same subgraphs if $v_1$ and $v_2$ lie in $S_{23} \cup S_{24}$.

\subsubcase
\textit{Suppose that $i=3$.}
If $v_1$ and $v_2$ lie in $S_{34}$ and $N_{K_4}(v_1)$ and $N_{K_4}(v_2)$ are disjoint, then we find $M_V$ induced by $\{ v_1$, $v_2$, $s_{24}$, $s_{45}$, $k_{41}$, $k_{31}$, $k_{32}$, $k_{42}$, $k_5 \}$.
If $N_{K_4}(v_1)$ and $N_{K_4}(v_2)$ are non-disjoint, then we find a net $\vee K_1$ induced by $\{ v_1$, $v_2$, $s_{45}$, $k_{31}$, $k_{32}$, $k_4$, $k_5 \}$. 
Similarly, we find the same subgraphs if $v_1$ and $v_2$ in $S_{23}$.

Suppose that $v_1$ and $v_2$ lie in $S_{36} \cup S_{35}$. If $N_{K_5}(v_1)$ and $N_{K_5}(v_2)$ are not disjoint, then we find a tent induced by $\{v_1$, $v_2$, $s_{24}$, $k_5$, $k_{31}$, $k_{32}\}$. If $N_{K_5}(v_1)$ and $N_{K_5}(v_2)$ are disjoint, then we find a $4$-sun induced by $\{ v_1$, $v_2$, $s_{45}$, $s_{24}$, $k_{31}$, $k_{32}$, $k_{51}$, $k_{52} \}$.

If $v_1$ in $S_{34}$ and $v_2$ in $S_{35} \cup S_{36}$, then we find $M_{II}(4)$ induced by $\{v_1$, $v_2$, $s_{24}$, $s_{45}$, $k_{31}$,$k_{32}$,$k_4$, $k_5 \}$. The proof is analogous if $v_1$ and $v_2$ in $S_{23} \cup S_{13}$ or $S_{34} \cup S_{35}$.

\subcase \textit{$\mathbb B_i$ contains $D_1$} 

\subsubcase
\textit{Suppose that $i=1$.} In this case, $v_1$ lies in $S_{61}$ and $v_2$ lies in $S_{14}\cup S_{15} \cup S_{16}$. 
If $v_1$ in $S_{61}$ and $v_2$ in $S_{14} \cup S_{15}$ is not complete to $K_5$, then we find $F_2(5)$ induced by $\{ v_1$, $s_{12}$, $s_{24}$, $s_{45}$, $v_2$, $k_6$, $k_1$, $k_2$, $k_4$, $k_5\}$.
If $v_2$ lies in $S_{15}$ but is complete to $K_5$, then by definition of $\mathbb B_1$, $v_2$ is not complete to $K_1$. Let $k_{11}$ in $K_1$ be a vertex nonadjacent to $v_2$ and let $k_{12}$ in $K_1$ be the vertex represented by the column of $D_1$. Thus, $v_1$ and $v_2$ are adjcent to $k_{12}$. If $v_1$ is also adjacent to $k_{11}$, then we find $F_0$ induced by $\{ v_1$, $v_2$, $s_{12}$, $k_6$, $k_{11}$, $k_{12}$, $k_2$, $k_4 \}$. If instead $v_1$ is nonadjacent to $k_{11}$, then we find a net${}\vee{}K_1$ induced by $\{ v_1$, $v_2$, $s_{12}$, $k_6$, $k_{11}$, $k_{12}$, $k_4 \}$.
The same forbidden subgraph arises when considering a vertex $v_2$ in $S_{16}$ such that there is a vertex $k_6$ in $K_6$ adjacent to $v_1$ and nonadjacent to $v_2$. Suppose now that $v_2$ in $S_{16}$ and $v_2$ is nested in $v_1$ with regard to $K_6$. If $v_1$ is adjacent to $k_{11}$ and $k_{12}$, then we find a tent with center induced by $\{ v_1$, $v_2$, $s_{12}$, $s_{24}$, $s_{45}$, $k_6$, $k_{11}$, $k_{12}$, $k_2$, $k_4$, $k_5\}$. If instead $v_1$ is nonadjacent to $k_{11}$, then we find $M_V$ induced by $\{ s_{24}$, $v_1$, $v_2$, $s_{12}$, $k_2$, $k_4$, $k_6$, $k_{12}$, $k_{11}\}$.

\subsubcase If $i=2$, then there are no vertices labeled with distinct letters and colored with the same color.

\subsubcase
\textit{Suppose that $i=3$}, We have two possibilities: either $v_1$ lies in $S_{35} \cup S_{36}$ and $v_2$ lies in $S_{13}$, or $v_1$ lies in $S_{23}$ and $v_2$ lies in $S_{34}$.
If $v_1$ lies in $S_{35} \cup S_{36}$ and $v_2$ lies in $S_{13}$, then we find $F_0$ induced by $\{ v_1$, $v_2$, $s_{24}$, $k_1$, $k_2$, $k_3$, $k_4$, $k_5\}$. 
If $v_1$ lies in $S_{34}$ and $v_2$ lies in $S_{23}$, then we find $F_2(5)$ induced by $\{ v_1$, $v_2$, $s_{12}$, $s_{45}$, $s_{24}$, $k_1$, $k_2$, $k_3$, $k_4$, $k_5\}$.

\subcase \textit{$\mathbb B_i$ contains $D_2$} 

\subsubcase \textit{Let $i=1$. }In this case, $v_1$ in $S_{12}$ and $v_2$ in $S_{61}$, hence we find $M_{III}(4)$ induced by $\{ s_{24}$, $v_1$, $v_2$, $s_{12}$, $k_4$, $k_2$, $k_{11}$, $k_6$, $ k_{12}\}$.

\subsubcase
\textit{Suppose that $i=2$. }In this case, $v_1$ in $S_{12}$ and $v_2$ lies in $S_{23} \cup S_{24}$. We find $M_{II}(4)$ induced by $\{ v_1$, $v_2$, $s_{24}$, $s_{12}$, $k_1$, $k_{21}$, $k_4$, $k_{22} \}$.

\subsubcase
Finally,\textit{ let $i=3$.} We have two possibilities. If $v_1$ lies in $S_{35} \cup S_{36}$ and $v_2$ lies in $S_{23}$, then we find $M_{III}(4)$ induced by $\{ v_1$, $v_2$, $s_{12}$ $s_{24}$, $k_1$, $k_2$, $k_{31}$, $k_5$, $k_{32}\}$.
If $v_1$ in $S_{13}$ and $v_2$ lies in $S_{34}$, then we find $M_{III}(4)$ induced by $\{ v_1$, $v_2$, $s_{24}$, $s_{45}$, $k_1$, $k_{31}$, $k_4$, $k_5$, $k_{32}\}$.

\subcase \textit{Suppose there is $S_2(j)$ in $\mathbb B_i$ for some $j \geq 3$.} Let $v_1, v_2, \ldots, v_j$ be the vertices corresponding to the rows in $S_2(j)$ and $k_{i1}, k_{i2}, \ldots, k_{i(j-1)} $ be the vertices corresponding to the columns in $S_3(j)$. Thus, $v_1$ and $v_j$ are labeled with the same letter.

\subsubcase
\textit{Let $i=1$}, and suppose first that $j$ is odd. Hence, $v_1$ and $v_j$ are colored with distinct colors. If $v_1$ in $S_{12}$ and $v_j$ in $S_{14} \cup S_{15} \cup S_{16}$, then we find $F_1(j+2)$ induced by $\{ v_1$, $\ldots$, $v_j$, $s_{12}$, $s_{24}$, $k_4$, $k_2$, $k_{11}$, $\ldots$, $ k_{1j}\}$.
Conversely, if $v_j$ in $S_{12}$ and $v_1$ in $S_{14} \cup S_{15} \cup S_{16}$, then we find $F_2(j)$ induced by $\{ v_1$, $\ldots$, $v_j$, $k_4$, $k_{11}$, $\ldots$, $ k_{1j}\}$.

Suppose instead that $j$ is even, hence $v_1$ and $v_j$ are colored with the same color. If $v_1$ and $v_j$ lie in $S_{14} \cup S_{15} \cup S_{16}$, since there is no $D_0$, then Claim \ref{claim:tent_0} and Claim \ref{claim:tent_0} hold and thus $v_1$ and $v_j$ are nested in $K_4$. Hence, we find $F_1(j+1)$ induced by $\{ v_1$, $\ldots$, $v_j$, $s_{12}$, $k_4$, $k_{11}$, $\ldots$, $ k_{1j}\}$. We find the same forbidden subgraph if $v_1$ and $v_j$ lie both in $S_{61}$ by changing $k_6$ for $k_4$.

\subsubcase
\textit{Let $i=2$.} Since there are no vertices labeled with the same letter and colored with distinct colors, then it is not possible to find $S_2(j)$ for any odd $j$. If instead $j$ is even, then either $v_1$ and $v_j$ lie in $S_{12}$ or $v_1$ and $v_j$ lie in $S_{23}$. If $v_1$ and $v_j$ lie in $S_{12}$, then we find $F_2(j+1)$ induced by $\{ v_1$, $\ldots$, $v_j$, $s_{24}$, $k_1$, $k_{21}$, $\ldots$, $ k_{2j} \}$. We find the same forbidden subgraph if $v_1$ and $v_j$ lie in $S_{23}$ or $S_{24}$ by changing $k_1$ for $k_4$ and $S_{24}$ for $s_{12}$.

\subsubcase
\textit{Suppose that $i=3$}, and suppose first that $j \geq 3$ is odd. 
If $v_1$ in $S_{35} \cup S_{36}$ and $v_j$ in $S_{34}$, then we find $F_2(j)$ induced by $\{ v_1$, $\ldots$, $v_j$, $k_5$, $k_{31}$, $\ldots$, $ k_{3j} \}$. If instead $v_1$ in $S_{34}$ and $v_j$ in $S_{35} \cup S_{36}$, then we find $F_1(j+2)$ induced by $\{ v_1$, $\ldots$, $v_j$, $s_{45}$, $s_{24}$, $k_5$, $k_4$, $k_{31}$, $\ldots$, $ k_{3j} \}$.
We find the same forbidden subgraphs if $v_1$ in $S_{13}$ and $v_j$ in $S_{23}$ by changing $k_1$ for $k_5$, and if $v_1$ in $S_{23}$ and $v_j$ in $S_{13}$ by changing $k_4$ for $k_2$ and $k_5$ for $k_1$.

Suppose that $j$ is even. If $v_1$ and $v_j$ lie in $S_{35} \cup S_{36}$, then it follows from Claim \ref{claim:tent_0} that they are nested in $K_5$, hence we find $F_1(j+1)$ induced by $\{ v_1$, $\ldots$, $v_j$, $s_{24}$, $k_5$, $k_{31}$, $\ldots$, $ k_{3j} \}$. If $v_1$ and $v_j$ lie in $S_{13}$ we find the same forbidden subgraph by changing $k_5$ for $k_1$. It follows analogously for $v_1$ and $v_j$ lying both in $S_{34}$ or $S_{23}$.

\subcase \textit{Suppose there is $S_3(j)$ in $\mathbb B_i$ for some $j \geq 3$.} Let $v_1, v_2, \ldots, v_j$ be the vertices corresponding to the rows in $S_3(j)$ and $k_{i1}, k_{i2}, \ldots, k_{i(j-1)} $ be the vertices corresponding to the columns in $S_3(j)$. Thus, $v_1$ and $v_j$ are labeled with the distinct letters.

\subsubcase
\textit{Let $i=1$}, and suppose that $j$ is odd. In this case, $v_1$ in $S_{12}$ and $v_j$ in $S_{61}$, and we find $F_2(j+2)$ induced by $\{ v_1$, $\ldots$, $v_j$, $s_{12}$, $s_{24}$, $k_4$, $k_2$, $k_{11}$, $\ldots$, $ k_{1(j-1)}$, $k_6\}$.
If instead $j$ is even, then $v_1$ in $S_{14} \cup S_{15} \cup S_{16}$ and $v_j$ in $S_{61}$, and we find $F_2(j+1)$ induced by $\{ v_1$, $\ldots$, $v_j$, $s_{12}$, $k_4$, $k_{11}$, $\ldots$, $ k_{1(j-1)}$, $k_6\}$.

\subsubcase
\textit{Let $i = 2$}. If $j$ is even, then there are no vertices labeled with the same letter and colored with distinct colors in $S_3(j)$.

If instead $j$ is odd, then $v_1$ in $S_{12}$ and $v_j$ in $S_{23} \cup S_{24}$. In this case, we find $F_1(j+2)$ induced by $\{ v_1$, $\ldots$, $v_j$, $s_{12}$, $s_{24}$, $k_1$, $k_{21}$, $\ldots$, $ k_{2(j-1)}$, $k_4\}$.

\subsubcase
\textit{Suppose that $i=3$}. Let $j$ be odd. If $v_1$ lies in $S_{35} \cup S_{36}$ and $v_j$ in $S_{23}$, then we find $F_2(j+2)$ induced by $\{ v_1$, $\ldots$, $v_j$, $s_{12}$, $s_{24}$, $k_5$, $k_{31}$, $\ldots$, $ k_{3(j-1)}$, $k_2$, $k_1\}$. 
If instead $v_1$ in $S_{13}$ and $v_j$ in $S_{34}$, then we find $F_2(j+2)$ induced by $\{ v_1$, $\ldots$, $v_j$, $s_{45}$, $s_{24}$, $k_1$, $k_{31}$, $\ldots$, $ k_{3(j-1)}$, $k_4$, $k_5\}$.

If instead $j$ is even, then $v_1$ in $S_{35} \cup S_{36}$ and $v_j$ in $S_{13}$. In this case we find $F_2(j+1)$ induced by $\{ v_1$, $\ldots$, $v_j$, $s_{24}$, $k_5$, $k_{31}$, $\ldots$, $ k_{3(j-1)}$, $k_1\}$.

\vspace{1mm}
Notice that $\mathbb B_i$ has no LR-rows, thus there are no $S_1(j)$, $S_4(j)$, $S_5(j)$, $S_6(j)$, $S_7(j)$, $S_8(j)$, $P_0(k,l)$, $P_1(k,l)$ or $P_2(k,l)$ as subconfigurations. Hence, $\mathbb B_i$ is admissible for each $i=1,2,3$, and thus it follows for $i=4,5$ for symmetry.

Furthermore, it follows by the same argument as in the tent case that it is not possible that $\mathbb B_i$ is admissible but not LR-orderable.

\case \textit{Suppose that $\mathbb B_i$ is LR-orderable and is not partially $2$-nested.} 

Since there are no LR-rows in $\mathbb B_i$ for each $i=1,2,3$, if $\mathbb B_i$ is not partially $2$-nested, then there is either a monochromatic gem or a monochromatic weak gem in $\mathbb B_i$ as a subconfiguration. 
Remember that every colored row in $\mathbb B_i$ is a row labeled with L or R, hence both rows of a monochromatic gem or weak gem are labeled rows. However, this is not possible since in each case we find either $D_0$ or $D_1$, and this results in a contradiction for we showed that $\mathbb B_i$ is admissible and therefore $\mathbb B_i$ is partially $2$-nested.

\case \textit{Suppose that $\mathbb B_i$ is partially $2$-nested and is not $2$-nested}. 

If $\mathbb B_i$ is partially $2$-nested and is not $2$-nested, then, for every proper $2$-coloring of the rows of $\mathbb B_i$, there is a monochromatic gem or a monochromatic wek gem indued by at least one unlabeled row. This proof is also analogous as in the tent case (See Lemma \ref{lema:equiv_circle_2nested_tent} for details).

\end{mycases}
\end{proof}


\subsection{The matrix $\mathbb B_6$} \label{subsec:4tent4}

In this subsection we will demostrate a lemma analogous to Lemma \ref{lema:equiv_circle_2nested_4tent_sinLR} but for the matrix $\mathbb B_6$. In other words, we will use the matrix theory developed in Chapter \ref{chapter:2nested_matrices} in order to characterize the $\mathbb B_6$ matrix when the split graph $G$ that contains an induced $4$-tent is also a circle graph. Although the result is the same --we will find all the forbidden subgraphs for the class of circle graphs given when $\mathbb B_6$ is not $2$-nested--, the most important difference between this matrix and the matrices $\mathbb B_i$ for each $i=1, 2, \ldots, 5$, is that $\mathbb B_6$ contains LR-rows. 

First, we will define how to color those rows that correspond to vertices in $S_{[15]}$, since we defined $\mathbb B_6$ as an enriched matrix and these rows are the only empty LR-rows in $\mathbb B_6$.
Remember that all the empty LR-rows must be colored with the same color. Hence, if there is at least one red row labeled with L or one blue row labeled with R (resp.\ blue row labeled with L or red row labeled with R), then we color every LR-row in $S_{[15]}$ with blue (resp.\ with red).
This will give a $1$-color assignment to each empty LR-row only If $G$ is $\{ \mathcal{T}, \mathcal{F} \}$-free.

\begin{lema} \label{lema:4tent_coloreoLRvacias}
	Let $G$ be a split graph that contains an induced $4$-tent and such that $G$ contains no induced tent, and let $\mathbb B_6$ as defined in the previous section. If $S_{[15]} \neq \emptyset$ and one of the following holds:
	\begin{itemize}
	\item There is at least one red row $f_1$ and one blue row $f_2$, both labeled with L (resp.\ R)
	\item There is at least one row $f_1$ labeled with L and one row $f_2$ labeled with R, both colored with red (resp.\ blue).
	\end{itemize}
Then, we find either $F_1(5)$ or $4$-sun as an induced subgraph of $G$.
\end{lema}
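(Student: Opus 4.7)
The plan is to translate each of the four color/label patterns in the hypothesis into a concrete list of subset pairs $(S_{ij}, S_{kl})$ from which the witnesses $f_1$ and $f_2$ can come, and then, for every such pair together with a vertex $v \in S_{[15]}$, exhibit 4 or 5 clique vertices and 4 or 5 independent vertices of $G$ that induce either a $4$-sun or $F_1(5)$. First I would read off the row-coloring of $\mathbb{B}_6$ given in Section~\ref{subsec:4tent2} to record: a red L-row can only come from $S_{[36}\cup S_{[46}$; a blue L-row from $S_{16}\cup S_{26}\cup S_{56}$; a red R-row from $S_{61}\cup S_{64}\cup S_{65}$; a blue R-row from $S_{62}\cup S_{63}$. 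I also note that $v\in S_{[15]}$ is complete to $K_1\cup K_2\cup K_3\cup K_4\cup K_5$ and anticomplete to $K_6$, so in $G$ the vertex $v$ is a ``universal--except for $K_6$'' vertex that supplies the long row $01\cdots1$ of $F_1(5)$.

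Next I would split the argument along the two hypotheses. Hypothesis~(i) (red L plus blue L, or its R-dual) gives an $S_{[36}\cup S_{[46}$ versus $S_{16}\cup S_{26}\cup S_{56}$ combination; hypothesis~(ii) (red L plus red R, or its blue dual) gives an $S_{[36}\cup S_{[46}$ versus $S_{61}\cup S_{64}\cup S_{65}$ combination. In each case an easy symmetry (interchanging L with R reflects the $K_6$-ordering; interchanging red with blue reflects the ``upper/lower'' partition of the $4$-tent) lets me reduce to a small number of representative subcases, essentially one per underlying pair of subsets. In all cases the common vertex in $K_6$ that I use to form the clique is guaranteed to exist: two L-rows must agree in the first column of any LR-ordering, two R-rows in the last, and an L-row together with an R-row either meet already in $K_6$ or give an even stronger obstruction directly from the definitions of $S_{ij}$.

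As a prototype, in hypothesis~(ii) with $f_1\in S_{[36}$, $f_2\in S_{61}$, pick any $k_6\in K_6$ with $k_6\in N(f_1)\cap N(f_2)$ and form the clique $\{k_1,k_2,k_4,k_6\}$ together with the independent set $\{s_{12},s_{24},f_1,f_2\}$. By the definitions of the partition classes one checks that $s_{12}$ hits $\{k_1,k_2\}$, $s_{24}$ hits $\{k_2,k_4\}$, $f_1$ hits $\{k_4,k_6\}$ (complete to $K_4$, adjacent to $K_6$, anticomplete to $K_1\cup K_2$), and $f_2$ hits $\{k_1,k_6\}$; this is exactly the adjacency matrix of a $4$-sun. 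A typical hypothesis~(i) subcase, e.g.\ $f_1\in S_{[36}$ and $f_2\in S_{26}$, instead uses the extra structural role of $v$: with clique $\{k_6,k_2,k_3,k_4\}$ the row of $v$ is $0111$, the row of $f_2$ becomes $1100$, and rows arising from $s_{24}$, $s_{45}$ and $f_1$ fill in the remaining three rows $1110$, $0011$, $0110$ of $F_1(5)$ after a suitable choice of $k_3\in K_3$ (nonempty whenever a vertex of $S_{[36}$ exists).

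The main obstacle is not any single subcase but the bookkeeping: making sure that in each of the ten or so essentially distinct subset-pairs I have enough freedom in $K_3$ and $K_6$ to pick distinct common neighbors of $f_1$ and $f_2$, and verifying no unwanted adjacencies appear (in particular between $f_1$, $f_2$ and the chosen $K_6$-vertex). I expect to dispatch the ``richer'' subcases ($f_1\in S_{[46}$, $f_2\in S_{56}$, etc.)\ by the same recipe as above, and to handle degenerate configurations (e.g.\ $K_3=\emptyset$, which empties $S_{[36}$ and $S_{63}$) by observing that the subset pairs still available automatically admit the simpler $4$-sun construction that does not use $K_3$ at all.
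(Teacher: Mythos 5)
Your overall strategy (reduce to the subset pairs that can realize each color/label pattern, then exhibit an explicit $4$-sun or $F_1(5)$) is the same as the paper's, and your hypothesis~(ii) prototype with $f_1\in S_{36}$, $f_2\in S_{61}$ is exactly the paper's $4$-sun on $\{k_1,k_2,k_4,k_6\}$ with petals $s_{12},s_{24},f_1,f_2$. However, your prototype for hypothesis~(i) is wrong, and it is the case that actually uses $w\in S_{[15]}$ and produces $F_1(5)$. By definition, a vertex $f_2\in S_{26}$ is adjacent to $K_2$ and $K_6$, \emph{complete to $K_3$, $K_4$ and $K_5$}, and anticomplete only to $K_1$; so on your clique $\{k_6,k_2,k_3,k_4\}$ its row is $1111$ (or $1\,{*}\,11$), not $1100$. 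Likewise $s_{24}$ gives $0101$ and $s_{45}$ gives $0001$ on that clique, so none of the remaining rows $1110$, $0011$, $0110$ of $F_1(5)$ appear; the configuration you describe simply is not induced. The fix is to move the "separating" column from $K_3$ to $K_1$: the point of $w\in S_{[15]}$ is that it is complete to $K_1$ and anticomplete to $K_6$, while $f_2\in S_{26}\cup S_{16}$ is (non-)adjacent the other way around. The paper's witness is the clique $\{k_1,k_2,k_4,k_6\}$ with independent set $\{w,f_2,f_1,s_{12},s_{24}\}$, giving rows $0111$, $1011$, $1001$, $0110$, $0011$ (columns ordered $k_6,k_1,k_2,k_4$), which is $F_1(5)$ up to permutation; the $S_{56}$ subcase uses $\{k_2,k_4,k_5,k_6\}$ with $s_{24},s_{45}$ instead.

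A second gap: in hypothesis~(ii) you only treat $f_2\in S_{61}$ and assert the remaining subcases follow "by the same recipe". They do not: when $f_2\in S_{64}\cup S_{65}$, $f_2$ is complete to $K_1,K_2,K_3$ and to (or adjacent to) $K_4$, so the cyclic petal structure of the $4$-sun on $\{k_1,k_2,k_4,k_6\}$ breaks down, and the paper instead extracts a tent with center on $\{k_6,k_1,k_4,k_5,f_1,f_2,w\}$ — a different forbidden subgraph that genuinely requires the vertex $w$. Finally, your remark that two rows labeled with distinct letters "either meet in $K_6$ or give an even stronger obstruction" needs to be made precise somewhere (the paper defers this to the admissibility of $\mathbb B_6$, i.e., the absence of $D_0$/$D_1$-type configurations, proved in the following lemma); as written it is an unproved assumption on which your $4$-sun construction depends.
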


\begin{proof}
	We assume that $\mathbb B_6$ contains no $D_0$, for we will prove this in Lemma \ref{lema:B6_2nested_4tent}.
	
	Let $v_1$ be a vertex corresponding to a red row labeled with L, $v_2$ be the vertex corresponding to a blue row labeled with L, and $w$ in $S_{[15]}$.
	Thus, $v_1$ in $S_{36} \cup S_{46}$ and $v_2$ in $S_{56} \cup S_{26} \cup S_{16}$. 
	In either case, we find $F_1(5)$ induced by $\{ k_2$, $k_4$, $k_5$, $k_6$, $v_1$, $v_2$, $w$, $s_{24}$, $s_{45} \}$ or $\{ k_1$, $k_2$, $k_4$, $k_6$, $v_1$, $v_2$, $w$, $s_{12}$, $s_{24} \}$, depending on whether $v_2$ in $S_{56}$ or in $S_{26} \cup S_{16}$.
	Suppose now that $v_1$ is a vertex corresponding to a red row labeled with L and $v_2$ is a vertex corresponding to a red row labeled with R. Thus, $v_1$ in $S_{36} \cup S_{46}$ and $v_2$ in $S_{61} \cup S_{64} \cup S_{65}$.
	 If $v_2$ in $S_{61}$, then there is a $4$-sun induced by $\{ k_1$, $k_2$, $k_4$, $k_6$, $v_1$, $v_2$, $s_{12}$, $s_{24} \}$. If instead $v_2$ in $S_{64} \cup S_{65}$, then we find a tent with center induced by $\{ k_6$, $k_1$, $k_4$, $k_5$, $v_1$, $v_2$, $w \}$.
	This finished the proof since the other cases are analogous by symmetry.
\end{proof}

In order to prove the following lemma, we will assume without loss of generality that $S_{[15]} = \emptyset$.

\begin{lema}\label{lema:B6_2nested_4tent}
	Let $G=(K,S)$ be a split graph containing an induced $4$-tent such that $G$ contains no induced tent, and let $B = \mathbb B_6$.
	 If $B$ is not $2$-nested, then $G$ contains an induced subgraph of the families $\mathcal{T}$ or $\mathcal{F}$.
\end{lema}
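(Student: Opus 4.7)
The plan is to proceed exactly as in Lemmas \ref{lema:equiv_circle_2nested_tent} and \ref{lema:equiv_circle_2nested_4tent_sinLR}: invoke Theorem \ref{teo:2-nested_caract_bymatrices} and enumerate the forbidden subconfigurations that $B$ may contain, then for each one exhibit an induced subgraph of $G$ belonging to $\mathcal{T}\cup\mathcal{F}$. The proof should be organized in four successive layers matching the four levels of the characterization: (i) $B$ is not admissible, (ii) $B$ is admissible but not LR-orderable, (iii) $B$ is LR-orderable but not partially $2$-nested, and (iv) $B$ is partially $2$-nested but not $2$-nested. In each layer we assume the previous ones hold.

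For the cases where the forbidden subconfiguration involves no LR-row ($D_0, D_1, D_2$, $S_2(k), S_3(k)$, the Tucker matrices, the $\mathcal{M}$-family via $B^*_\tagg$, and monochromatic (weak) gems using only labeled/unlabeled non-LR rows), the analysis will proceed almost verbatim as in Lemma \ref{lema:equiv_circle_2nested_4tent_sinLR}. The vertices represented by the rows of the subconfiguration lie in the sets $S_{ij}$ indexed by $\{1,\dots,6\}\setminus\{6\}$ or in $S_{66}$, and by Claim~\ref{claim:tent_0} (still valid here) we can extract witnesses in every relevant $K_j$ so that, together with $s_{12},s_{24},s_{45}$, they induce a member of $\mathcal{T}\cup\mathcal{F}$. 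The bookkeeping is tedious but routine.

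The genuinely new content is the analysis of the forbidden subconfigurations that involve LR-rows; these correspond, by the definition of $\mathbb B_6$, to vertices of $S_{[16}$ (and to the empty LR-rows of $S_{[15]}$, which by Lemma~\ref{lema:4tent_coloreoLRvacias} are either absent or force a colour that is compatible with the pre-colouring, hence may be assumed empty in $B$). The crucial structural fact is that any $v\in S_{[16}$ is complete to $K_1\cup K_2\cup K_3\cup K_4\cup K_5$ and adjacent to some vertex of $K_6$, so an LR-row in $B$ has $1$'s in every column outside $K_6$ but is split into an L-block and/or an R-block inside $K_6$. With this in mind, each matrix $D_3,D_4,\ldots,D_{13}$ in $\mathcal D$ translates into a small configuration of vertices of $S_{[16}$ together with representatives in $S_{i6}\cup S_{6i}$, and together with appropriate $k_j$'s and the $4$-tent vertices produces a tent, a $4$-tent with center, an $M_{II}(4)$, an $M_{III}(4)$, an $M_V$, an $F_2(5)$, or a $4$-sun (the exact choice depending on the letters/colours of the rows). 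Similarly, the larger matrices $S_1(k),S_4(k),\ldots,S_8(2j)$ of $\mathcal{S}$ and the matrices $P_0(k,l),P_1(k,l),P_2(k,l)$ of $\mathcal{P}$ yield, respectively, elongated versions such as $F_1(k),F_2(k)$ and $(2j+2)$-suns with or without center, by chaining the LR-rows with the alternating diagonal of $1$'s. The final case (iv) is handled exactly as in Case 4 of Lemma~\ref{lema:equiv_circle_2nested_tent}: a forced monochromatic (weak) gem across a maximal forcing chain produces an $F_0$, $F'_0$, $F_1(k)$ or $F'_1(k)$ in $B$, which in turn lifts to an $F_0$, tent with center, or $F_1(k)$-graph in $G$ via the same vertex extraction.

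The main obstacle will be the case analysis for $\mathcal{P}$ and for the $\mathcal{S}$-matrices of length depending on a parameter $k$, because the presence of LR-rows forces us to track which endpoint of the LR-row sits in $K_6$ and which sits on the ``other side'' of the circular neighbourhood, and to verify that the resulting sequence of adjacencies in $G$ really does close into one of the $F_1(k)$ or odd-sun graphs. The bookkeeping is delicate but systematic: once one identifies that the L-block of any LR-row of $B$ must be disjoint from every R-block corresponding to a non-LR-row (by the admissibility subcases handled earlier), the remaining analysis reduces to finitely many subcases for the ``ends'' of the Tucker-like pattern, each easily dispatched by exhibiting the appropriate induced subgraph.
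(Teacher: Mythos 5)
Your proposal follows the paper's own proof essentially verbatim: the same four-layer decomposition (admissible $\to$ LR-orderable $\to$ partially $2$-nested $\to$ $2$-nested), the same reduction of the empty LR-rows of $S_{[15]}$ via Lemma~\ref{lema:4tent_coloreoLRvacias}, the same identification of nonempty LR-rows with vertices of $S_{[16}$ complete to $K_1\cup\dots\cup K_5$, and the same extraction of tents, $M_{II}(4)$, $M_{III}(4)$, $M_V$, $F_0$, $F_1(k)$, $F_2(k)$ and (odd/even) suns from the $\mathcal{D}$, $\mathcal{S}$ and $\mathcal{P}$ subconfigurations. The remaining work you defer as ``tedious but routine'' is exactly the case-by-case bookkeeping the paper carries out, so the approach is sound and matches the original.
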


\begin{proof}
 	We will assume proven Lemma \ref{lema:equiv_circle_2nested_4tent_sinLR}. This is, we assume that the matrices $\mathbb B_1, \ldots,$ $\mathbb B_5$ are $2$-nested. In particular, it follows that any pair of vertices $v_1$ in $S_{ij}$ and $v_2$ in $S_{ik}$ such that $i\neq 6$ and $j \neq k$ are nested in $K_i$. Moreover, there is a vertex $v*_i$ in $K_i$ adjacent to both $v_1$ and $v_2$. 
 
 	Throughout the proof, we will refer indistinctly to a row $r$ (resp.\ a column $c$) and the vertex in the independent (resp.\ complete) partition of $G$ whose adjacency is represented by the row (resp.\ column). The structure of the proof is analogous as in Lemmas \ref{lema:equiv_circle_2nested_tent} and \ref{lema:equiv_circle_2nested_4tent_sinLR}. The only difference is that, in this case $B$ admits LR-rows by definition, and thus we have to consider all the forbidden subconfigurations for every characterization in each case.

\begin{mycases} 	
 	\case \textit{Suppose that $B$ is not admissible. }
 	
 	Hence, $B$ contains at least one of the matrices $D_0, D_1, \ldots, D_{13}$, $S_1(j), S_2(j), \ldots, S_8(j)$ for some $j \geq 3$ or $P_0(j,l)$, $P_1(j,l)$ for some $l \geq 0, j \geq 5$ or $P_2(j,l)$, for some $l \geq 0, j \geq 7$.
	
	\subcase \textit{$B$ contains $D_0$}. 
	Let $v_1$ and $v_2$ be the vertices represented by the first and second row of $D_0$ respectively, and $k_{61}$, $k_{62}$ in $K_6$ represented by the first and second column of $D_0$, respectively. 
	\subsubcase Suppose first that both vertices are colored with the same color. Since the case is symmetric with regard of the coloring, we may assume that both rows are colored with red, hence either $v_1$ and $v_2$ lie in $S_{61} \cup S_{64} \cup S_{65}$, or $v_1$ and $v_2$ lie in $S_{36} \cup S_{46}$.
	If $v_1$ and $v_2$ lie in $S_{61}$ and $k_1$ in $K_1$ is adjacent to both $v_1$ and $v_2$, then we find a net${}\vee{}K_1$ induced by $\{ k_{61}$, $k_{62}$, $k_1$, $k_2$, $v_1$, $v_2$, $ s_{12} \}$. We find the same forbidden subgraph if either $v_1$ and $v_2$ lie in $S_{64} \cup S_{65}$ changing $k_1$ for some $k_4$ in $K_4$ adjacent to both $v_1$ and $v_2$, $k_2$ for some $k_5$ in $K_5$ nonadjacent to both $v_1$ and $v_2$ and $s_{12}$ for $s_{45}$. We also find the same subgraph if $v_1$ and $v_2$ lie in $S_{36} \cup S_{46}$, changing $k_1$ for some $k_4$ in $K_4$ adjacent to both $v_1$ and $v_2$ and $s_{12}$ for $s_{24}$.
 	If instead $v_1$ in $S_{61}$ and $v_2$ in $S_{64} \cup S_{65}$, since by definition every vertex in $S_{65}$ is adjacent but not complete to $K_5$, then there are vertices $k_4$ in $K_4$ and $k_5$ in $K_5$ such that $v_1$ is nonadjacent to both, and $v_2$ is adjacent to $k_4$ and is nonadjacent to $k_5$. Thus, we find $F_{2}(5)$ induced by $\{ k_{62}$, $k_1$, $k_2$, $k_4$, $k_5$, $v_1$, $v_2$, $s_{45}$, $s_{12}$, $s_{24} \}$.
	
	\subsubcase Suppose now that both rows are colored with distinct colors. By symmetry, assume without loss of generality that $v_1$ is colored with red and $v_2$ is colored with blue. Hence, $v_1$ lies in $S_{62} \cup S_{63}$, and $v_2$ lies in $S_{61} \cup S_{64} \cup S_{65}$.	
	If $v_2$ in $S_{61}$, then there is a vertex $k_4$ in $K_4$ nonadjacent to $v_1$ and $v_2$. Hence, we find $M_{III}(4)$ induced by $\{ k_{61}$, $k_{62}$, $k_1$, $k_2$, $k_4$, $v_1$, $v_2$, $s_{12}$ $s_{24} \}$. If instead $v_2$ in $S_{64}$ or $S_{65}$, then we find $M_{III}(4)$ induced by $\{ k_{61}$, $k_{62}$, $k_2$, $k_4$, $k_5$, $v_1$, $v_2$, $s_{24}$, $s_{45} \}$.
	
	\subcase \textit{$B$ contains $D_1$. }
	Let $v_1$ and $v_2$ be the vertices that represent the rows of $D_1$, and let $k_6$ in $K_6$ be the vertex that represents the column of $D_1$. Suppose without loss of generality that both rows are colored with red, hence $v_1$ in  $S_{36} \cup S_{46}$ and $v_1$ in $S_{61} \cup S_{64} \cup S_{65}$. Notice that we are assuming there is no $D_1$ in $\mathbb B_4$, thus, if $v_2$ is not complete to $K_4$, then there is a vertex $k_4$ in $K_4$ adjacent to $v_1$ and nonadjacent to $v_2$. If $v_2$ in $S_{61}$, then we find a $4$-sun induced by $\{ k_6$, $k_1$, $k_2$, $k_4$, $v_1$, $v_2$, $s_{12}$, $s_{24} \}$. If $v_2$ in $S_{64}$ is not complete to $K_4$, then we find a tent induced by $\{ k_6$, $k_2$, $k_4$, $v_1$, $v_2$, $s_{24} \}$.
	 If instead $v_2$ in $S_{64} \cup S_{65}$ is complete to $K_4$, then we find a $M_{II}(4)$ induced by $\{ k_2$, $k_4$, $k_5$, $k_6$, $v_1$, $v_2$, $s_{24}$, $s_{45} \}$.
		
	\subcase \textit{$B$ contains $D_2$. } 
	Let $v_1$ and $v_2$ be the first and second row of $D_2$, and let $k_{61}$ and $k_{62}$ be the vertices corresponding to first and second column of $D_2$, respectively. By symmetry we suppose without loss of generality that $v_1$ is colored with blue and $v_2$ is colored with red. Thus, $v_1$ lies in $S_{56} \cup S_{26} \cup S_{16}$ and $v_2$ lies in $S_{61} \cup S_{64} \cup S_{65}$.
	If $v_1$ in $S_{56}$ and $v_2$ in $S_{61}$, then we find a $5$-sun with center induced by $\{ k_{61}$, $k_{62}$, $k_1$, $k_2$, $k_4$, $k_5$, $v_1$, $v_2$, $s_{12}$, $s_{24}$, $s_{45} \}$.
	If instead $v_1$ in $S_{26} \cup S_{16}$, since $v_1$ is not complete to $K_1$ and we assume that $\mathbb B_1$ is admissible, then there is a vertex $k_1$ in $K_1$ adjacent to $v_2$ and nonadjacent to $v_1$, for if not we find $D_1$ in $\mathbb B_1$. We find a tent induced by $\{ k_{61}$, $k_1$, $k_2$, $v_1$, $v_2$, $s_{12} \}$. The same holds if $v_1$ in $S_{56}$ and $v_2$ in $S_{65}$, for $\mathbb B_5$ is admissible and $v_2$ is adjacent but not complete to $K_5$.
	Moreover, if $v_1$ in $S_{56}$ and $v_2$ in $S_{64}$, then we find a tent induced by $\{ k_{61}$, $k_4$, $k_5$, $v_1$, $v_2$, $s_{45} \}$.
	Finally, if $v_1$ in $S_{26} \cup S_{16}$ and $v_2$ in $S_{64} \cup S_{65}$, then there are vertices $k_1$ in $K_1$ and $k_5$ in $K_5$ such that $k_1$ is nonadjacent to $v_1$ and adjacent to $v_2$, and $k_5$ is nonadjacent to $v_2$ and adjacent to $v_1$. Hence, we find $F_1(5)$ induced by $\{ k_1$, $k_2$, $k_4$, $k_5$, $v_1$, $v_2$, $s_{12}$, $s_{24}$, $s_{45} \}$.
	
\begin{remark} \label{obs:4tent_modelo0}
	If $G$ is $\{ \mathcal{T}, \mathcal{F} \}$-free, then $S_{26} \cup S_{16} \neq \emptyset$ if $S_{64} \cup S_{65} = \emptyset$, and viceversa.
\end{remark}	

	
	\subcase \textit{$B$ contains $D_3$.} 
	Let $v_1$ and $v_2$ be the vertices corresponding to the rows of $D_3$ labeled with L and R, respectively, $w$ be the vertex corresponding to the LR-row, and $k_{61}$, $k_{62}$ and $k_{63}$ in $K_6$ be the vertices corresponding to the columns of $D_3$. An uncolored LR-row in $B$ represents a vertex in $S_{[16}$. 
	Notice that there is no vertex $k_i$ in $K_i$ for some $i \in \{1, \ldots, 5\}$ adjacent to both $v_1$ and $v_2$, since $w$ is complete to $K_i$, thus we find $M_{III}(3)$ induced by $\{ k_{61}$, $k_{62}$, $k_{63}$, $k_i$, $v_1$, $v_2$, $w \}$.
	
	If $v_1$ is colored with red and $v_2$ is colored with blue, then $v_1$ in $S_{36} \cup S_{46}$ and $v_2$ in $S_{62} \cup S_{63}$. We find $M_{III}(4)$ induced by $\{ k_{62}$, $k_2$, $k_4$, $k_{61}$, $k_{63}$, $v_1$, $v_2$, $w$, $s_{24} \}$.
	
	Conversely, if $v_1$ is colored with blue and $v_2$ is colored with red, then $v_1$ in $S_{56} \cup S_{26} \cup S_{16}$ and $v_2$ in $S_{61} \cup S_{64} \cup S_{65}$.
	It follows by symmetry that it suffices to see what happens if $v_2$ in $S_{61}$ and $v_1$ in either $S_{56}$ or $S_{26}$.
	If $v_1$ in $S_{56}$, then we find $M_{III}(6)$ induced by $\{ k_{61}$, $k_1$, $k_2$, $k_4$, $k_5$, $k_{62}$, $k_{63}$, $v_1$, $v_2$, $s_{12}$, $s_{24}$, $s_{45}$, $w \}$. If instead $v_1$ in $S_{26}$, then we find $M_{III}(4)$ induced by $\{ k_{63}$, $k_{61}$, $k_1$, $k_2$, $k_{62}$, $v_1$, $v_2$, $s_{12}$, $w \}$.

	
	
	
	\subcase \textit{$B$ contains $D_4$.} 
	Let $v_1$ and $v_2$ be the vertices represented by the rows labeled with L, $w$ be the vertex represented by the LR-row and $k_6$ in $K_6$ corresponding to the only column of $D_4$. Suppose without loss of generality that $v_1$ is colored with red and $v_2$ is colored with blue. Thus, $v_1$ lies in $S_{61} \cup S_{64} \cup S_{65}$ and $v_2$ lies in $S_{62} \cup S_{63}$.
	In either case, we find $F_1(5)$: if $v_1$ in $S_{61}$, then it is induced by $\{ k_6$, $k_1$, $k_2$, $k_4$, $v_1$, $v_2$, $w$, $s_{12}$, $s_{24} \}$, and if $v_1$ in $S_{64}$ or $S_{65}$, then it is induced by $\{ k_6$, $k_2$, $k_4$, $k_5$, $v_1$, $v_2$, $w$, $s_{24}$, $s_{45} \}$. 
	
	\subcase \textit{$B$ contains $D_5$.} Let $v_1$ and $v_2$ be the vertices representing the rows labeled with L and R, respectively, $w$ be the vertex corresponding to the LR-row, and $k_6$ in $K_6$ corresponding to the column of $D_5$. Suppose without loss of generality that $v_1$ is colored with blue and $v_2$ is colored with red. 

	\begin{remark} \label{obs:2_demo_equiv_circle_hayLR}
		If $x_1$ in $S_{ij}$ and $x_2$ in $S_{jk}$, then we may assume that there are vertices $k_{j1}$ and $k_{j2}$ in $K_j$ such that $x_1$ is adjacent to $k_{j1}$ and is nonadjacent to $k_{j2}$ and $x_2$ is adjacent to $k_{j2}$ and is nonadjacent to $k_{j1}$, for if not $\mathbb B_i$ is not admissible, for $i \in \{1, \ldots, 5\}$.  
	\end{remark} 
	
	By the previous remark, notice that, if $v_1$ in $S_{26} \cup S_{16}$ and $v_2$ in $S_{61}$, then there is a tent induced by $\{k_6$, $k_1$, $k_2$, $v_1$, $v_2$, $s_{12} \}$, where $k_1$ is a vertex nonadjacent to $v_1$. The same holds if $v_1$ in $S_{56}$ and $v_2$ in $S_{65}$, where the tent is induced by $\{ k_6$, $k_4$, $k_5$, $v_1$, $v_2$, $s_{45} \}$, with $k_5$ in $K_5$ adjacent to $v_1$ and nonadjacent to $v_2$.
	Finally, if $v_1$ in $S_{56}$ and $v_2$ in $S_{61}$, then we find a $5$-sun with center induced by $\{ k_5$, $k_6$, $k_1$, $k_2$, $k_4$, $v_1$, $v_2$, $w$, $s_{12}$, $s_{24}$, $s_{45} \}$.
	
\begin{remark} \label{obs:4tent_modelo2}
	If $G$ is $\{ \mathcal{T}, \mathcal{F} \}$-free and contains no induced tent, then any two vertices $v_1$ in $S_{56}$ and $v_2$ in $S_{65}$ are disjoint in $K_5$. The same holds for any two vertices $v_1$ in $S_{16}$ and $v_2$ in $S_{61}$ in $K_1$. 
	\end{remark}

	\subcase \textit{$B$ contains $D_6$}. Let $v_1$ and $v_2$ be the vertices represented by the rows labeled with L and R, respectively, $w$ be the vertex corresponding to the LR-row, and $k_{61}$ and $k_{62}$ in $K_6$ corresponding to the first and second column of $D_6$, respectively. Suppose without loss of generality that $v_1$ and $v_2$ are both colored with red, thus $v_1$ lies in $S_{36} \cup S_{46}$ and $v_2$ lies in $S_{61} \cup S_{64} \cup S_{65}$.
	Since $\mathbb B_i$ is admissible for every $i \in \{1, \ldots, 5\}$, then there is no vertex in $K_i$ adjacent to both $v_1$ and $v_2$. It follows that, since $v_1$ is complete to $K_4$, then $v_2$ in $S_{61}$. However, we find $F_2(5)$ induced by $\{ k_{62}$, $k_1$, $k_2$, $k_4$, $k_{61}$, $v_1$, $v_2$, $w$, $s_{12}$, $s_{24} \}$. 
		
	The following remark is a consequence of the previous statement.
	\begin{remark} \label{obs:4tent_modelo1}
	If $G$ is $\{ \mathcal{T}, \mathcal{F} \}$-free, $v_1$ in $S_{36}\cup S_{46}$ and $v_2$ in $S_{61} \cup S_{64} \cup S_{65}$, then for every vertex $w$ in $S_{[16}$ either $N_{K_6}(v_1) \subseteq N_{K_6}(w)$ or $N_{K_6}(v_2) \subseteq N_{K_6}(w)$. The same holds for $v_1$ in $S_{56} \cup S_{26} \cup S_{16}$ and $v_2$ in $S_{62} \cup S_{63}$.
	\end{remark}

	\subcase \textit{$B$ contains $D_7$ or $D_{11}$}. 
	Thus, there is a vertex $k_i$ in some $K_i$ with $i\neq 6$ such that $k_i$ is adjacent to the three vertices corresponding to every row of $D_7$, thus we find a net ${}\vee{}K_1$.	The same holds if there is $D_{11}$.
	\subcase \textit{$B$ contains $D_8$ or $D_{12}$}.
	 In that case, there is an induced tent.
	\subcase \textit{$B$ contains $D_9$ or $D_{13}$.}
	It is straightforward that in this case we find $F_0$.
	
	\subcase \textit{$B$ contains $D_{10}$}. 
	Let $v_1$ and $v_2$ be the vertices represented by the rows labeled with L and R, respectively, $w_1$ and $w_2$ be the vertices represented by the LR-rows and $k_{61}, \ldots, k_{64}$ in $K_6$ be the vertices corresponding to the columns of $D_{10}$. 
	Suppose without loss of generality that $v_1$ is colored with red and $v_2$ is colored with blue. Hence, $v_1$ lies in $S_{36} \cup S_{46}$ and $v_2$ lies in $S_{62} \cup S_{63}$.
	Let $k_2$ in $K_2$ adjacent to $v_2$ and nonadjacent to $v_1$ and let $k_4$ in $K_4$ adjacent to $v_1$ and nonadjacent to $v_2$. Hence, we find $F_1(5)$ induced by $\{ v_1$, $v_2$, $w_1$, $w_2$, $s_{24}$, $k_2$, $k_4$, $k_{62}$, $k_{63} \}$.

	\subcase \textit{Suppose that $B$ contains $S_1(j)$}
	\subsubcase If $j \geq 4$ is even, let $v_1, v_2, \ldots, v_j$ be the vertices represented by the rows of $S_1(j)$, where $v_1$ and $v_j$ are labeled both with L or both with R, $v_{j-1}$ is a vertex corresponding to the LR-row, and $k_{61}, \ldots, k_{6(j-1)}$ in $K_6$ the vertices corresponding to the columns. Suppose without loss of generality that $v_1$ and $v_j$ are labeled with L. It follows that either $v_1$ and $v_j$ lie in $S_{36} \cup S_{46}$, or $v_1$ and $v_j$ lie in $S_{62} \cup S_{63}$ or $v_1$ lies in $S_{56} \cup S_{26} \cup S_{16}$ and $v_j$ lies in $S_{36} \cup S_{46}$.
In either case, there is $k_5$ in $K_5$ adjacent to both $v_1$ and $v_j$. Moreover, $k_5$ is also adjacent to $v_{j-1}$. Thus, this vertex set induces a $j-1$-sun with center.

\subsubcase If $j$ is odd, since $S_1(j)$ has $j-2$ rows (thus there are $v_1, \ldots, v_{j-2}$ vertices), then the subset of vertices given by $\{ v_1, \ldots, v_{j-2}$, $k_{61}, \ldots, k_{6(j-2)}$, $k_5 \}$ induces an even $j-1$-sun.


 \subcase \textit{Suppose that $B$ contains $S_2(j)$}. 
 
 Let $v_1$ and $v_j$ be the vertices corresponding to the labeled rows, $k_{61}, \ldots, k_{6 (j-1})$ in $K_6$ be the vertices corresponding to the columns of $S_2(j)$, and suppose without loss of generality that $v_1$ and $v_j$ are labeled with R. 
 
 \subsubcase Suppose first that $j$ is odd, $v_1$ is colored with red and $v_j$ is colored with blue. Thus, $v_1$ in $S_{61} \cup S_{64} \cup S_{65}$ and $v_j$ in $S_{62} \cup S_{63}$, or viceversa. If $v_1$ in $S_{61}$, then let $k_i$ in $K_i$ for $i = 1, 2, 4$ such that $k_1$ is adjacent to $v_1$ and $v_j$, $k_2$ is adjacent to $v_j$ and nonadjacent to $v_1$, and $k_4$ is nonadjacent to both $v_1$ and $v_j$. We find $F_2(j+2)$ induced by $\{ k_4$, $k_2$, $k_1$, $k_{61}, \ldots, k_{6 (j-1})$, $v_1, \ldots, v_j$, $s_{12}$, $s_{24} \}$. 
	If $v_1$ in $S_{64} \cup S_{65}$, then we find $F_2(j)$ induced by $\{ k_5$, $k_{61}, \ldots, k_{6 (j-1)}$, $v_1, \ldots, v_j \}$, with $k_5$ in $K_5$ adjacent to $v_1$ and nonadjacent to $v_j$. 

	 Conversely, suppose $v_1$ in $S_{62}\cup S_{63}$ and $v_j$ in $S_{61} \cup S_{64} \cup S_{65}$. If $v_j$ lies in $S_{64} \cup S_{65}$, then $F_2(j+2)$ is induced by $\{ k_2$, $k_4$, $k_5$, $k_{61}, \ldots, k_{6 (j-1)}$, $v_1, \ldots, v_j$, $s_{24}$, $s_{45} \}$, with $k_i$ in $K_i$ for $i=2, 4, 5$ such that $k_2$ is adjacent to $v_1$ and $v_k$, $k_4$ is adjacent to $v_j$ and nonadjacent to $v_1$, and $k_5$ is nonadjacent to both $v_1$ and $v_j$. If instead $v_j$ in $S_{61}$, then it is induced by $\{ k_4$, $k_2$, $k_1$, $k_{61}, \ldots, k_{6 (j-1)}$, $v_1, \ldots, v_j$, $s_{12}$, $s_{24} \}$. 
	\subsubcase Suppose now that $j$ is even, and thus both $v_1$ and $v_j$ are colored with the same color. Suppose without loss of generality that are both colored with red, and thus $v_1$ and $v_j$ lie in $S_{61} \cup S_{64} \cup S_{65}$.
	If $v_1$ and $v_j$ in $S_{61}$, then we find $F_2(j+1)$ induced by $\{ k_2$, $k_1$, $k_{61}, \ldots, k_{6 (j-1)}$, $v_1, \ldots, v_j$, $s_{12} \}$.
	We find the same forbidden subgraph if $v_1$ and $v_j$ lie in $S_{64}$ or $S_{65}$, by changing $s_{12}$ for $s_{45}$, and $k_1$ and $k_2$ for $k_4$ and $k_5$, where $k_5$ is nonadjacent to both $v_1$ and $v_j$ and $k_4$ is adjacent to both.
	If only $v_1$ lies in $S_{61}$, then we find $F_2(j+3)$ induced by $\{ k_1$, $k_2$, $k_4$, $k_5$, $k_{61}, \ldots, k_{6 (j-1)}$, $v_1, \ldots, v_j$, $s_{12}$,$s_{24}$, $s_{45} \}$, with $k_i$ in $K_i$ for $i=1, 2, 4, 5$. 
	If only $v_j$ lies in $S_{61}$, then we find $F_2(5)$ induced by $\{ k_1$, $k_2$, $k_4$, $k_5$, $k_{62}$, $v_1$, $v_j$, $s_{12}$, $s_{24}$, $s_{45} \}$, with $k_i$ in $K_i$ for $i=1, 2, 4, 5$.


	\subcase \textit{Suppose that $B$ contains $S_3(j)$}. 
Let $v_1$ and $v_j$ be the vertices corresponding to the labeled rows, $k_{61}, \ldots, k_{6 (j-1})$ in $K_6$ be the vertices corresponding to the columns of $S_3(j)$.

	\subsubcase Suppose first that $j$ is odd, and suppose that $v_1$ is labeled with L and colored with blue and $v_j$ is labeled with R and colored with red. In this case, $v_1$ in  $S_{56} \cup S_{26} \cup S_{16}$ and $v_j$ in $S_{61} \cup S_{64} \cup S_{65}$. 
	If $v_1$ in $S_{56}$, then we find a $(j+3)$-sun if $v_j$ in $S_{61}$, induced by $\{ k_1$, $k_2$,  $k_4$, $k_5$, $k_{61}, \ldots, k_{6 (j-1)}$, $v_1, \ldots, v_j$, $s_{45}$, $s_{12}$, $s_{24} \}$. 
	If $v_j$ in $S_{64} \cup S_{65}$, then we find a $(j+1)$-sun induced by $\{ k_4$, $k_5$, $k_{61}, \ldots, k_{6 (j-1)}$, $v_1, \ldots, v_j$, $s_{45} \}$. 
	Moreover, if $v_j$ in $S_{61}$ and $v_1$ in $S_{26} \cup S_{16}$, then we find a $(j+1)$-sun induced by $\{ k_1$, $k_{61}, \ldots, k_{6 (j-1)}$, $k_2$, $v_1, \ldots, v_j$, $s_{12} \}$. Finally, if $v_j$ in $S_{64} \cup S_{65}$ and $v_1$ in $S_{26} \cup S_{16}$, then we find $F_1(5)$ induced by $\{ k_1$, $k_2$, $k_4$, $k_5$, $v_1$, $v_j$, $s_{24}$, $s_{45}$, $s_{12} \}$.
		
	\subsubcase Suppose now that $j$ is even, and suppose without loss of generality that $v_1$ and $v_j$ are both colored with red. Thus, $v_1$ in $S_{61} \cup S_{64} \cup S_{65}$ and $v_j$ in $S_{36} \cup S_{46}$.
	If $v_1$ in $S_{61}$, then we find $(j+2)$-sun induced by $\{ k_4$, $k_2$, $k_1$, $k_{61}, \ldots, k_{6 (j-1)}$, $v_1, \ldots, v_j$, $s_{12}$, $s_{24} \}$. If instead $v_1$ in $S_{64} \cup S_{65}$, then we find $j$-sun induced by $\{ k_4$, $k_{61}, \ldots, k_{6 (j-1)}$, $v_1, \ldots, v_j \}$.
	
	\subcase \textit{$B$ contains $S_4(j)$.}
	
	Let $v_1$, $v_2$ and $v_j$ be the labeled rows and $k_{61}, \ldots, k_{6 (j-1})$ in $K_6$ be the vertices corresponding to the columns of $S_4(j)$. Suppose without loss of generality that $v_1$ is the vertex corresponding to the row labeled with LR, $v_2$ corresponding to the row labeled with L, $v_j$ labeled with R. Notice that $v_1$ lies in $S_{[16}$.
	
	\subsubcase Suppose $j$ is even, thus $v_2$ and $v_j$ are colored with the same color. Suppose without loss of generality that they are both colored with red. Hence, $v_2$ in $S_{36} \cup S_{46}$ and $v_j$ in $S_{61} \cup S_{64} \cup S_{65}$.
	If $v_j$ lies in $S_{64} \cup S_{65}$, then we find a $(j-1)$-sun with center induced by $\{ k_4$, $k_{61}, \ldots, k_{6(j-1)}$, $v_1$, $2, \ldots, v_j \}$. If instead $v_j$ in $S_{61}$, then we find a $(j+1)$-sun with center induced by $\{ k_{61}, \ldots, k_{6(j-1)}$, $k_1$, $k_2$, $k_4$, $v_1$, $2, \ldots, v_j$, $s_{12}$, $s_{24} \}$.
	
	\subsubcase Suppose $j$ is odd, thus assume without loss of generality that $v_2$ is colored with red and $v_j$ is colored with blue. Hence, $v_2$ in $S_{36} \cup S_{46}$ and $v_j$ in $S_{62} \cup S_{63}$. We find a $j$-sun with center induced by $\{ k_4$, $k_{61}, \ldots, k_{6(j-1)}$, $k_2$,  $v_1$, $2, \ldots, v_j$, $s_{24} \}$.

	\subcase \textit{$B$ contains $S_5(j)$.}	
	
	Let $v_1$, $v_{j-1}$ and $v_j$ be the labeled rows and $k_{61}, \ldots, k_{6 (j-2})$ in $K_6$ be the vertices corresponding to the columns of $S_4(j)$. Suppose without loss of generality $v_2$ and $v_j$ are labeled with L and that $v_{j-1}$ is the vertex corresponding to the row labeled with LR.
	
	\subsubcase Suppose $j$ is even, hence $v_1$ and $v_j$ lie in $S_{36} \cup S_{46}$. In this case we find $F_1(j+1)$ induced by $\{ k_2$, $k_4$, $k_{61}, \ldots, k_{6(j-2)}$, $v_1, \ldots, v_{j-1}, v_j$, $s_{24} \}$.
	
	\subsubcase Suppose $j$ is odd, and suppose that $v_1$ is colored with red and $v_j$ is colored with blue. Thus, $v_1$ in $S_{36} \cup S_{46}$ and $v_j$ in $S_{56} \cup S_{26} \cup S_{16}$.
	If $v_j$ in $S_{56}$, then we find $F_1(j)$ induced by $\{ k_4$, $k_5$, $k_{61}, \ldots, k_{6(j-2)}$, $v_1, \ldots, v_{j-1}, v_j$, $s_{45} \}$. If instead $v_j$ lies in $S_{26} \cup S_{16}$, then we find $F_1(j+2)$ induced by $\{ k_1, $, $k_2$, $k_4$, $k_{61}, \ldots, k_{6(j-2)}$, $v_1, \ldots, v_{j-1}, v_j$, $s_{24}$, $s_{12} \}$.
	
	\subcase \textit{$B$ contains $S_6(j)$.}
	
	\subsubcase Suppose first that $B$ contains $S_6(3)$ or $S'_6(3)$, and let $v_1$, $v_2$ and $v_3$ be the vertices that respresent the LR-row, the R-row and the unlabeled row, respectively. Independently on where does $v_2$ lie in, there is vertex $v$ in $K \setminus K_6$ such that $v$ is adjacent to $v_1$ and $v_2$ and nonadjacent to $v_3$, then we find an induced tent with center. 
	
	\subsubcase If $B$ contains $S_6(j)$ for some even $j$, then we find $F_1(j)$ induced by every row and column of $S_6(j)$. If instead $j$ is odd, then we find $M_{II}(j)$ induced by every row and column of $S_6(j)$ and a vertex $k_i$ in some $K_i$ with $i \neq 6$. We choose such a vertex $k_i$ adjacent to $v_2$, and thus since $v_1$ in $S_{[16}$, $v_1$ is also adjacent to $k_i$ and $v_3, \ldots, v_j$ are nonadjacent to $k_i$ for they represent vertices in $S_{66}$.
	
	\subcase \textit{$B$ contains $S_7(j)$.}
	
	Suppose $B$ contains $S_7(3)$. It is straightforward that the rows and columns induce a co-$4$-tent${}\vee{}K_1$.
	Furthermore, if $j >3$, then $j$ is even. The rows and columns of $S_7(j)$ induce a $j$-sun.

	\subcase \textit{$B$ contains $S_8(2j)$.} 
	
	If $j=2$, then we can find an tent induced by the last three columns and the last three rows.
	If instead $j>2$, then we find a $(2j-1)$-sun with center induced by every unlabeled row, every column but the first and one more column --which will be the center-- representing any vertex in $K_1$, since $K_1 \neq \emptyset$.
	
	\subcase \textit{$B$ contains $P_0(j,l)$.}
	
	Let $v_1, \ldots, v_j$ be the vertices represented by the rows of $P_0(j,l)$ and $k_{61}, \ldots, k_{6j}$	be the vertices in $K_6$ represented by the columns. The rows corresponding to $v_1$ and $v_j$ are labeled with L and R, respectively, and the row corresponding to $v_{l+2}$ is an LR-row.
	
	\subsubcase Suppose first that $l=0$. If $j$ is even, then $v_1$ and $v_j$ are colored with the same color. Suppose without loss of generality that both are colored with red, thus $v_1$ lies in $S_{36}\cup S_{46}$ and $v_j$ lies in $S_{62} \cup S_{63}$. In that case, there are vertices $k_i$ in $K_i$ for $i=2,4$ such that $k_2$ is adjacent to $v_j$ and nonadjacent to $v_1$ and $k_4$ is adjacent to $v_1$ and nonadjacent to $v_j$. We find $F_2(j+1)$ induced by $\{ k_2, $, $k_4$, $k_{62}, \ldots, k_{6j}$, $v_1, \ldots, v_j$, $s_{24} \}$
	
	If instead $j$ is odd, then $v_1$ and $v_j$ are colored with the same colors. Suppose without loss of generality that they are both colored with red. Hence, $v_1$ lies in $S_{36} \cup S_{46}$ and $v_j$ lies in $S_{61} \cup S_{64} \cup S_{65}$. In either case, we find $F_2(j+2)$ induced by $\{ k_1, $, $k_2$, $k_4$, $k_{62}, \ldots, k_{6j}$, $v_1, \ldots, v_j$, $s_{24}$, $s_{12} \}$ if $v_j$ lies in $S_{61}$, and induced by $\{ k_2, $, $k_4$, $k_5$, $k_{62}, \ldots, k_{6j}$, $v_1, \ldots, v_j$, $s_{24}$, $s_{45} \}$ if $v_j$ lies in $S_{64} \cup S_{65}$.
	
	\subsubcase Suppose that $l>0$.	 The proof is very similar to the case $l=0$. If $j$ is odd, then $v_1$ and $v_j$ are colored with the same color. If it is red, then we find $F_2(j+2)$ induced by $\{ k_1, $, $k_2$, $k_4$, $k_{61}, \ldots, k_{6(j-1)}$, $v_1, \ldots, v_j$, $s_{24}$, $s_{12} \}$ if $v_j$ lies in $S_{61}$, and we find $F_2(j)$ induced by $\{ k_4$, $k_{61}, \ldots, k_{6(j-1)}$, $v_1, \ldots, v_j \}$ if $v_j$ lies in $S_{64} \cup S_{65}$. 
	
	If instead $j$ is even, then $v_1$ and $v_j$ are colored with distinct colors. Then, we find $F_2(j+1)$ induced by $\{ k_2$, $k_4$, $k_{61}, \ldots, k_{6(j-1)}$, $v_1, \ldots, v_j$, $s_{24} \}$.
	
	\subcase \textit{$B$ contains $P_1(j,l)$.}
	
		Let $v_1, \ldots, v_j$ be the vertices represented by the rows of $P_1(j,l)$ and $k_{61}, \ldots, k_{6(j-1)}$ be the vertices in $K_6$ represented by the columns. The rows corresponding to $v_1$ and $v_j$ are labeled with L and R, respectively, and the rows corresponding to $v_{l+2}$ and $v_{l+3}$ are LR-rows.
	
	\subsubcase Suppose first that $l=0$. 
	If $j$ is odd, then $v_1$ and $v_j$ are colored with the same color. We assume without loss of generality that they are colored with red. Thus, $v_1$ lies in $S_{36} \cup S_{46}$ and $v_j$ lies in $S_{61} \cup S_{64} \cup S_{65}$. In either case, $v_1$ is anticomplete to $K_1$. Hence, we find $F_1(j)$ induced by every row and column of $P_1(j,0)$ and an extra column that represents a vertex in $K_1$ adjacent to $v_j$, $v_2$ and $v_3$ and nonadjacent to $v_i$, for $1 \leq i \leq j-1$, $i \neq j,2,3$.
	If instead $j$ is even, then we assume that $v_1$ and $v_j$ are colored with red and blue, respectively. Thus, $v_1$ lies in $S_{36} \cup S_{46}$ and $v_j$ lies in $S_{62} \cup S_{63}$. We find $F_1(j+1)$ induced by every row and every column of $P_1(j,0)$, the row corresponding to $s_{24}$ and two columns corresponding to vertices $k_2$ in $K_2$ and $k_4$ in $K_4$ such that $k_2$ is adjacent to $v_j$, $v_2$ and $v_3$ and is nonadjacent to $v_i$, and $k_4$ is adjacent to $v_1$, $v_2$ and $v_3$ and is nonadjacent to $v_i$, for each $1 \leq i \leq j-1$, $i \neq j, 2, 3$. 
	
	\subsubcase Suppose $l>0$. The proof is analogous to the previous case if $j$ is even. If instead $j$ is odd, then $v_1$ lies in $S_{36} \cup S_{46}$ and $v_j$ lies in $S_{61} \cup S_{64} \cup S_{65}$. If $v_j$ in $S_{61}$, then we find $F_1(j+2)$ induced by $\{ k_4$, $k_{61}, \ldots, k_{6(j-2)}$, $k_1$, $k_2$, $v_1, \ldots, v_j$, $s_{12}$, $s_{24} \}$.	
	If instead $v_j \not\in S_{61}$, then we find $F_1(j)$ induced by every row and every column of $P_1(j,l)$ and one more column representing a vertex in $K_4$ adjacent to every vertex represented by a labeled row.

	\subcase \textit{$B$ contains $P_2(j,l)$.}
	
		Let $v_1, \ldots, v_j$ be the vertices represented by the rows of $P_2(j,l)$ and $k_{61}, \ldots, k_{6(j-1)}$ be the vertices in $K_6$ represented by the columns. The rows corresponding to $v_1$ and $v_j$ are labeled with L and R, respectively, and the rows corresponding to $v_{l+2}$, $v_{l+3}$, $v_{l+4}$ and $v_{l+5}$ are LR-rows.
	
	Suppose $l=0$. If $j$ is even, then we find $F_1(j-1)$ induced by $\{ k_{62}$, $k_{65}, \ldots, k_{6(j-1)}$, $v_1$, $v_2$, $v_5, \ldots, v_j$, $s_{24} \}$.	The same subgraph arises if $l>0$.
	
	 Suppose now that $j$ is odd, thus $v_1$ and $v_j$ are colored with the same color. We can assume without loss of generality that $v_1$ lies in $S_{36} \cup S_{46}$ and $v_j$ lies in $S_{61} \cup S_{64} \cup S_{65}$.
	 If $v_j \not\in S_{61}$, then we find $F_1(j-2)$ induced by $\{ k_{61}$, $k_{62}$, $k_{65}, \ldots, k_{6(j-1)}$, $v_1$, $v_2$, $v_5, \ldots, v_j$, $k_4 \}$, where $k_4$ in $K_4$ is adjacent to $v_1$, $v_2$, $v_5$ and $v_j$. The same subgraph arises if $l>0$.
	 If $v_j$ in $S_{61}$, then there are vertices $k_i$ in $K_i$, for $i=1,2,4$ such that $k_1$ is adjacent to $v_j$ and is nonadjacent to $v_1$, $k_2$ is nonadjacent to both and $k_4$
 is adjacent to $v_1$ and nonajcent to $v_j$. If $l=0$, we find $M_{II}(j)$ induced by $\{ k_{62}$, $k_{63}$, $k_{65}, \ldots, k_{6(j-1)}$, $v_1$, $v_2$, $v_5, \ldots, v_j$, $k_1$, $k_2$, $k_4$, $s_{12}$, $s_{24} \}$.
	  If instead $l>0$, then we find $F_1(j)$ induced by  $\{ k_{61}$, $k_{62}$, $k_{64}, \ldots, k_{6(j-1)}$, $k_1$, $k_2$, $k_4$, $v_1$, $v_2$, $v_3$, $v_6, \ldots, v_j$, $s_{12}$, $s_{24} \}$.

Therefore, $B$ is admissible.	

	\vspace{3mm}	
	\case Suppose now that $B$ is admissible but not LR-orderable, thus $B^*_\tagg$ contains either a Tucker matrix, or $M_4'$, $M_4''$, $M_5'$, $M_5''$, $M'_2(k)$, $M''_2(k)$, $M_3'(k)$, $M_3''(k)$, $M_3'''(k)$ for some $k \geq 4$.
	
	Toward a contradiction, it suffices to see that $B^*_\tagg$ does not contain any Tucker matrix, for in the case of the matrices listed in Figure \ref{fig:forb_LR-orderable_tags}, each labeled column can be replaced by a column that represents a vertex that belongs to the same subclasses considered in the analysis for a Tucker matrix with at least one LR-row, and since some of the rows may be non-LR-rows, then that case can be reduced to a particular case. 

Let $M$ be a Tucker matrix contained in $B^*_\tagg$. Thoughout the proof, when we refer to an LR-row in $M$, we refer to the row in $B$, this is, the complement of the row that appears in $M$. 
	
	 	
	\subcase Suppose first that $B^*_\tagg$ contains $M_I(j)$, for some $j \geq 3$. Let $v_1, \ldots, v_j$ be the vertices corresponding to the rows of $M_I(j)$, and $k_{61}, \ldots, k_{6j}$ in $K_6$ be the vertices corresponding to the columns.
		
	\begin{remark} \label{obs:no_Di_in_MI}
	If two non-LR-rows in $M_I(j)$ are labeled with the same letter, then they induce $D_0$. Moreover, any pair of consecutive non-LR-rows labeled with distinct letters induce $D_1$ or $D_2$. This follows from the fact that $B$ is admissible.
		Hence, there are at most two non-LR-rows in $M_I(j)$ and such rows are non-consecutive and labeled with distinct letters.
	Furthermore, since $B$ is admissible, it is easy to see that there are at most two LR-rows in $M_(j)$, for if not such rows induce $D_{11}$, $D_{12}$ or $D_{13}$.
	\end{remark}
	
	
	\subsubcase Suppose first that $j=3$ and that $v_1$ is the only LR-row in $M_I(j)$.

	If rows $v_2$ and $v_3$ are unlabeled, then we find a net $\vee{}K_1$ induced by $\{v_1$, $v_2$, $v_3$, $k_{61}$, $k_{62}$, $k_{63}$, $ k_l \}$, where $k_l$ is any vertex in $K_l \neq K_6$. The same holds if either $v_2$ or $v_3$ are labeled rows, by accordingly replacing $k_l$ for some $l$ such that $k_l$ is nonadjacent to both $v_2$ and $v_3$ (there are no labeled rows complete to each partition $K_i \neq K_6$ of $K$). 
	By the previous remark, if both $v_2$ and $v_3$ are labeled rows, then they are labeled with distinct letters. Thus, we find $F_0$ induced by $\{ v_1$, $v_2$, $v_3$, $k_{61}$, $k_{62}$, $k_{63}$, $k_1$, $k_5  \}$, where $k_1$ in $K_1$ is adjacent to $v_2$ and nonadjacent to $v_3$ and $k_5$ in $K_5$ is adjacent to $v_3$ and nonadjacent to $v_2$, or viceversa. Such vertices exist since we assumed $\mathbb B_i$ admissible for every $i \in \{1, \ldots, 5\}$. 
	
	
	If instead $v_1$ and $v_2$ are LR-rows, then we find a tent by considering any vertex $k_l$ in $K_l$ for some $l \in \{1, \ldots, 5\}$ such that $v_3$ is nonadjacent to $k_l$. The tent is induced by the set $\{ v_1$, $v_2$, $v_3$, $k_{61}$, $k_{63}$, $ k_l \}$. Every other case is analogous by symmetry.
Moreover, if $v_1$, $v_2$ and $v_3$ are LR-rows, then there is a vertex $k_l$ in $K_l$ with $l \neq 6$ such that $v_1$, $v_2$ and $v_3$ are adjacent to $k_l$, hence we find a net${}\vee{}K_1$ induced by $\{ v_1$, $v_2$, $v_3$, $k_{61}$, $k_{62}$, $k_{63}$, $ k_l \}$.
 
	\subsubcase Suppose now that $j \geq 4$, and let us suppose first that there is exactly one LR-row in $M_I(j)$. Thus, we may assume that $v_1$ is the only LR-row in $M_I(j)$.
	Notice first that, if $j$ is odd, then we find $F_2(j)$ in $B$ induced by the vertices represented by every row and column. Hence, we may assume that $j$ is even.
	By Remark \ref{obs:no_Di_in_MI}, there are at most two labeled rows in $M_I(j)$ and such rows are labeled with distinct letters.
	 
	If either there are no labeled rows or there is exactly one labeled row, then we find $M_{III}(j)$ induced by $\{ v_1, \ldots, v_j$, $k_{61}, \ldots, k_{6j}$, $ k_l \}$, where $k_l$ is any vertex in some $K_l \neq K_6$ that is nonadjacent to the only labeled row. 
	
	Suppose there are two labeled rows in $M_I(j)$. If there are two labeled rows $v_i$ and $v_l$, then it suffices to see what happens if $v_i$ belongs to $S_{36} \cup S_{46}$ and $v_l$ belongs to either $S_{61}$, $S_{64} \cup S_{65}$ or $S_{62} \cup S_{63}$. If $v_l$ belongs to $S_{61}$, then there is a vertex $k_2$ in $K_2$ nonadjacent to both $v_i$ and $v_l$, and thus we also find $M_{III}(j)$ induced by the same vertex set as before.
If instead $v_l$ lies in $S_{64} \cup S_{65}$, then there are vertices $k_2$ in $K_2$ and $k_4$ in $K_4$ such that $k_4$ is adjacent to both $v_l$ and $v_i$. Hence, if $|l-i|$ is even, then we find an $(l-i)$-sun. If instead $|l-i|$ is odd, then we find a $(l-i)$-sun with center, where the center is given by the LR-vertex $v_1$. 
Using a similar argument, if $v_l$ lies in $S_{62} \cup S_{63}$, then we find an even sun or an odd sun with center considering the same vertex set as before plus $s_{24}$.


	
	Suppose now that $v_1$ and $v_2$ are LR-rows. If $j \geq 4$ is even and every row $v_i$ with $i>2$ is unlabeled (or is at most one is a labeled row), then we find $M_{II}(j)$ induced by $\{ v_1, \ldots, v_j$, $k_{61}$, $k_{63}, \ldots, k_{6j}$, $ k_l \}$, where $k_l$ is any vertex in some $K_l \neq K_6$ such that each $v_i$ is nonadjacent to $k_l$ for every $i \geq 3$. 
	Moreover, if $j \geq 4$ is odd, then we find $F_1(j)$ induced by $\{ v_1, \ldots, v_j$, $k_{61}$, $k_{63}, \ldots, k_{6j}\}$. The same holds if there is exactly one labeled row since we can always choose when necessary a vertex in some $K_l$ with $l \neq 6$ that is nonadjacent to such labeled vertex.

	Let us suppose there are exactly two labeled rows $v_i$ and $v_l$. By Remark \ref{obs:no_Di_in_MI}, these rows are non-consecutive and are labeled with distinct letters. As in the previous case, $v_i$ belongs to $S_{36} \cup S_{46}$ and $v_l$ belongs to either $S_{61}$ or $S_{64} \cup S_{65}$. If $v_l$ belongs to $S_{61}$, then there is a vertex $k_2$ in $K_2$ nonadjacent to both $v_i$ and $v_l$, and thus we find $\{ v_1, \ldots, v_j$, $k_{61}$, $k_{63}, \ldots, k_{6j}$, $k_2 \}$. If instead $v_l$ lies in $S_{64} \cup S_{65}$, then we find $k_4$ in $K_4$ adjacent to both $v_i$ and $v_l$ and thus we find either an even sun or an odd sun with center as in the previous case.
	Using a similar argument, if $v_l$ lies in $S_{62} \cup S_{63}$, then we find an even sun if $l-i$ is even or an odd sun with center if $l-i$ is odd.
	
	Finally, suppose $v_1$ and $v_i$ are LR-rows, where $i>2$. If $j=4$, then we find a $4$-sun induced by every row and every column, hence, suppose that $j>5$. In that case, we find a tent contained in the subgraph induced by $\{v_1$, $v_2$, $v_3 \}$ if $i=3$ and $\{v_1$, $v_{j-1}$, $v_j \}$ if $i= j-1$.
	Thus, let $3<i<j-1$. However, in that case we find $M_{II}(i)$ induced by $\{v_1$, $v_2, \ldots, v_i$, $k_{62}, \ldots, k_{6(j-2)}, k_{6j} \}$.
	Therefore, there is no $M_I(j)$ in $B^*_\tagg$.

	\subcase Suppose that $B^*_\tagg$ contains $M_{II}(j)$. Let $v_1, \ldots, v_j$ be the vertices corresponding to the rows, and $k_{61}, \ldots, k_{6j}$ in $K_6$ the vertices representing the columns. If $j$ is odd and there are no labeled rows, then we find $F_1(j)$ by considering $\{ v_1, \ldots, v_j$, $k_{61} \ldots, k_{6(j-1)} \}$. Moreover, if there are no LR-rows and $j$ is odd, then we find $M_{II}(j)$ as an induced subgraph. 
	Hence, we assume from now on that there is at least one LR-row.

\begin{remark} \label{obs:reduce_casos_MII_4tent}
	As in the previous case, there are at most two rows labeled with L or R in $M_{II}(k)$, for any three LR-rows induce an enriched submatrix that contains either $D_0$, $D_1$ or $D_2$. Moreover, since $B$ is admisssible, then there are at most three LR-rows.

	If $v_i$ and $v_l$ with $1<i<l<j$ are two rows labeled with either L or R, then they are labeled with distinct letters for if not we find $D_0$. Moreover, they are not consecutive since in that case we find either $D_1$ or $D_2$. Thus, since $v_i$ belongs to $S_{36} \cup S_{46}$ and $v_l$ belongs to either $S_{61}$ or $S_{64} \cup S_{65}$ or $S_{62} \cup S_{63}$, one of the following holds:
	\begin{itemize}
		\item If $v_l$ in $S_{61}$, then we find a $(l-i+2)$-sun if $l-i$ is even or a $(l-i+2)$-sun with center if $l-i$ is odd (the center is $k_{6j}$) induced by $\{ v_i, \ldots, v_l$, $s_{12}$, $s_{24}$, $k_{6(i+1)} \ldots, k_{6l}$, $k_1$, $k_2$, $k_4$, $k_{6j} \}$.
		\item If $v_l$ in $S_{64} \cup S_{65}$ (resp.\ $S_{62} \cup S_{63}$), then we find a $(l-i)$-sun if $l-i$ is even or a $(l-i)$-sun with center if $l-i$ is odd (the center is $k_{6j}$) induced by $\{ v_i, \ldots, v_l$, $k_{6(i+1)} \ldots, k_{6l}$, $k_4$, $k_{6j} \}$ (resp.\ $k_1$, $k_2$).	
	\end{itemize}
	
	Furthermore, suppose $v_1$ and $v_i$ are rows labeled with either L or R, where $1<i \leq j$. If $i=2,j$, then they are labeled with distinct letters for if not we find $D_0$. Moreover, they are colored with distinct colors for if not we find $D_1$. If instead $2<i<j$, then they are labeled with the same letter for if not we find $D_1$ or $D_2$.

\end{remark}

As a consequence of the previous remark we may assume without loss of generality that, if there are rows labeled with either L or R, then these rows are either $v_j$ and $v_{j-1}$, $v_1$ and $v_j$ or $v_{j-2}$ and $v_j$ for every other case is analogous. Moreover, if $v_j$ and $v_{j-1}$ (resp.\ $v_1$) are labeled rows, then we may assume they are colored with distinct colors.

	\subsubcase Suppose there is exactly one LR-row and suppose first that $v_1$ is the only LR-row. If every non-LR row is unlabeled or $v_{j-2}$ and $v_j$ are labeled rows, since they are labeled with the same letter (for if not we find $D_1$ or $D_5$ considering $v_1$, $v_{j-2}$ and $v_j$), then we find $M_{III}(j)$ induced by $\{ k_l$, $v_1, \ldots, v_j$, $k_{61}$, $\ldots$, $k_{6j} \}$, where $k_l$ is any vertex in $K_l \neq K_6$.
	Moreover, if $v_{j-1}$ is a labeled row, then we find either a $(j-1)$-sun or a $(j-1)$-sun with center, depending on whether $j$ is even or odd, induced by $\{ v_1, \ldots, v_{j-1}$, $k_l$, $k_{61}, \ldots, k_{6(j-2)}$, $k_{6j} \}$, thus we finished this case. 

If $v_2$ is an LR-row, then we find $M_{II}(j-1)$ or $F_1(j-1)$ (depending on whether $j$ is odd or even) induced by every column of $B$ and the rows $v_2$ to $v_j$. It does not depend on whether there are or not rows labeled with L or R.

Suppose $v_i$ is an LR-row for some $2<i<j-1$. Let $r_i$ be the first column in which $v_i$ has a $0$ and $c_i$ be column in which $v_j$ has a $0$, then we find a tent induced by columns $k_{61}$, $k_{6(r_i)}$ and $k_{6(c_i)}$ and the rows $v_1$, $v_i$ and $v_j$.

If $v_{j-1}$ is an LR-row, then we find $M_{II}(j-1)$ induced by $\{ v_1, \ldots, v_{j-1}$, $k_{61}, \ldots, k_{6(j-2)}$, $k_{6j}\}$.

If $v_j$ is an LR-row and either every other row is unlabeled or there is exactly one labeled row, then we find $M_{III}(j)$ induced by $\{ k_l$, $v_1, \ldots, v_j$, $k_{61}$, $\ldots$, $k_{6j} \}$, where $k_l$ is any vertex in $K_l \neq K_6$ such that the vertex representing the only labeled row is nonadjacent to $k_l$.
Suppose there are two labeled rows. It follows from Remark \ref{obs:reduce_casos_MII_4tent} that such rows are either $v_1$ and $v_2$ or $v_1$ and $v_i$ for some $2<i<j$. However, if $v_i$ is a labeled row for some $1<i<j-1$, then we find either an even sun or an odd sun with center analgously as we have in Remark \ref{obs:reduce_casos_MII_4tent}. If instead $v_{j-1}$ and $v_1$ are labeled rows, then they are labeled with the same letter and thus we are in the same situation as if there were no labeled rows in $B$ since we can find a vertex that results nonadjacent to both $v_1$ and $v_{j-1}$.

\subsubcase Suppose there are two LR-rows. If $v_1$ and $v_2$ are LR-rows, then we find $M_{II}(j-1)$ as we have in the case where only $v_2$ is an LR-row. 
Suppose $v_1$ and $v_3$ are LR-rows. If $j=4$, then we find $M_{II}(j)$ induced by $\{v_1, \ldots, v_4$, $k_{61}$, $k_{62}$, $k_{64}$, $ k_l \}$ where $k_l$ in $K_l \neq K_6$. Such a vertex exists, since $v_2$ and $v_4$ are either unlabeled rows or are rows labeled with the same letter, for if they were labeled with distinct letters we would find $D_0$ or $D_1$. Thus, there is a vertex that is nonadjacent to both $v_2$ and $v_4$ and is adjacent to $v_1$ and $v_3$. 
If $j>4$, then we find a tent induced by rows $v_3$, $v_{j-1}$ and $v_j$ and columns $j-2$, $j-1$ and $j$. 
Moreover, if $v_i$ is an LR-row for $1<2<j-1$ and $v_{j-1}$ and $v_j$ are non-LR-rows, then we find a tent induced by the rows $v_i$, $v_{j-1}$ and $v_j$ and the columns $j-2$, $j-1$ and $j$. 

Thus, it remains to see what happens if $v_1$ and $v_{j-1}$ and $v_1$ and $v_j$ are LR-rows.
If $v_1$ and $v_{j-1}$ are LR-rows, then we find $M_{II}(j)$ induced by all the rows of $M_{II}(j)$ and every column except for column $j-1$, which is replaced by some vertex $k_l$ in $K_l \neq K_6$ (since in this case, if there are two labeled rows, then they must be $v_i$ for some $1<i<j-1$ and $v_j$, thus they are labeled with the same letter, hence there is a vertex $k_l$ nonadjacent to both).
Finally, if $v_1$ and $v_j$ are LR-rows, then we find a $j$-sun or a $j$-sun with center, depending on whether $j$ is even or odd, contained in the subgraph induced by $\{ v_1, \ldots, v_j$, $k_{61}, \ldots, k_{6j}$, $k_l \}$, where $k_l$ in $K_l \neq K_6$ is nonadjacent to every non-LR row (same argument as before). Therefore, there is no $M_{II}(j)$ in $B^*_\tagg$.

	\subcase Suppose that $B$ contains $M= M_{III}(j)$, let $v_1, \ldots v_j$ be the rows of $M$ and $k_{61}, \ldots, k_{6(j+1)}$ be the columns of $M$.
	If there are no LR-rows, then we find $M_{III}(j)$, hence we assume there is at least one LR-row. As in the previous cases, since $B$ is admissible, there are at most two LR-rows in $M$.
	
	Notice that every pair of rows $v_i$ and $v_l$ with $1\leq 1<i,l<j-1$ are not labeled with the same letter, since they induce $D_0$. Once more, if such rows are labeled with distinct letters, then they are not consecutive for in that case we would find $D_1$ or $D_2$. Furthermore, if such $v_i$ and $v_l$ are labeled rows, then we find either an even sun or an odd sun with center.
	Moreover, if $i=1,j-1$ and $l=j$, then $v_i$ and $v_l$ are not both labeled rows, for the same arguments holds.
	Hence, if there are two labeled rows, then such rows must be $v_j$ and $v_i$ for some $i$ such that $2<i<j-1$.

\subsubcase There is exactly one LR-row. Suppose first that $v_1$ is an LR-row. In this case, we find $M_{II}(j)$ induced by $\{ v_1, \ldots, v_j$, $k_{62}, \ldots, k_{6(j+1)} \}$. If $v_i$ is an LR-row, for some $1\leq i< j-1$, then we find $M_{II}(j-i+1)$ induced by $\{ v_i, \ldots, v_j$, $k_{6(i+1)}, \ldots, k_{6(j+1)} \}$.

If $v_{j-1}$ is an LR-row, then we also find $M_{II}(j)$, induced by $\{ v_1, \ldots, v_j$, $k_{62}, \ldots,  k_{6(j-1)}$, $k_{6(j+1)} \}$.

If instead $v_j$ is an LR-row, then we find an even $j$-sun or an odd $j$-sun with center $k_{6(j+1)}$.

\subsubcase Suppose now there are two LR-rows $v_i$ and $v_l$. If $1\leq i<l<j-1$ and $v_i$ and $v_l$ are not consecutive rows, then we find a tent induced by the rows $v_i$, $v_l$ and $v_j$, and columns $k_s$ in $K_s \neq K_6$ adjacent to both $v_i$ and $v_l$ and nonadjacent to $v_j$, and $k_{6i}$ (or $k_{6(i+1)}$ if $i=1$) and $k_{6l}$ (or $k_{6(l+1)}$ if $l=j-1$). The same subgraph contains an induced tent if $l=i+1$ and $i>1$.
If instead $i=1$ or $i=j-1$ and $l=i+1$, then we find $F_0$ (or $M_{III}(3)$ if $j=3)$ induced by $\{ v_i$, $v_{i+1}$, $k_{6i}$, $k_{6(i+1)}$, $k_{6(i+2)}$, $k_6(j+1)$, $ k_s\}$ with $k_s$ in $K_s \neq K_6$ adjacent to both $v_i$ and $v_{i+1}$.

Finally, if $v_1$ and $v_j$ are LR-rows, then we find $M_{III}(j)$ induced by every row $v_1, \ldots, v_j$ and column $k_{61}, \ldots, k_6(j+1)$. If instead $v_i$ and $v_j$ are LR-rows with $i>1$, then we find $M_V$ induced by $\{ v_i$, $v_j$, $v_1$, $v_{j-1}$, $k_{61}$, $k_{62}$, $k_{6i}$, $k_{6(i+1)}$, $k_{6j}\}$, therefore there is no $M_{III}(j)$ in $B^*_\tagg$.

	\subcase Suppose that $B$ contains $M=M_{IV}$, let $v_1, \ldots, v_4$ be the rows of $M$ and $k_{61}, \ldots, k_{66}$ be the columns of $M$. If there are no labeled rows, then we find $M_{IV}$ as an induced subgraph, and since $B$ is admissible and any three rows are not pairwise nested, then there are at most two LR-rows, hence we assume there are exactly either one or two LR-rows.

	If the row $v_i$ is an LR-row for $i=1,2,3$, then we find $M_V$ induced by $\{ v_2$, $v_3$, $v_4$, $k_{62}, \ldots, k_{66} \}$. Moreover, if only $v_4$ is an LR-row, then we find $M_{IV}$ induced by all the rows and columns of $M$. Thus, we assume there are exactly two LR-rows. 
	If $v_1$ and $v_4$ are LR-rows, then we find $M_V$ induced by $\{ v_1$, $v_2$, $v_3$, $v_4$, $k_{61}$, $k_{63}, \ldots, k_{66} \}$. The same holds if $v_i$ and $v_4$ are LR-rows, with $i=2,3$.
	Finally, if $v_1$ and $v_2$ are LR-rows, then we find a tent induced by $\{ v_1$, $v_2$, $v_4$, $k_{62}$, $k_{64}$, $k_{65} \}$. It follows analogously by symmetry if $v_1$ and $v_3$ or $v_2$ and $v_3$ are LR-rows, therefore there is no $M_{IV}$ in $B^*_\tagg$. 
	
	\subcase Suppose that $B$ contains $M=M_V$, let $v_1, \ldots, v_4$ be the rows of $M$ and $k_{61}, \ldots, k_{65}$ be the columns of $M$.
	Once more, if there are no LR-rows, then we find $M_V$ as an induced subgraph, thus we assume there is at least one LR-row. Moreover, since any three rows are not pairwise nested, there are at most two LR-rows. 
	
	\subsubcase If $v_1$ is the only LR-row, then we find a tent induced by $\{ v_1$, $v_3$, $v_4$, $k_{61}$, $k_{63}$, $k_{65} \}$. The same holds if $v_2$ is the only LR-row.
	
	If $v_3$ is the only LR-row and every other row is unlabeled or are all labeled with the same letter, then we find $M_{IV}$ induced by$\{ v_1$, $v_2$, $v_3$, $v_4$, $k_{61}, \ldots, k_{65}$, $k_l \}$ where $k_l$ in $K_l \neq K_6$ adjacent only to $v_3$.
	Suppose there are at least two rows labeled with either L or R. Notice that, if $v_1$ and $v_2$ are labeled, then they are labeled with distinct letters for if not they contain $D_0$. Moreover, $v_1$ (resp.\ $v_2$) and $v_4$ cannot be both labeled, for in that case they contain either $D_0$ or $D_1$ or $D_2$. Hence, there are at most two rows labeled with either L or R, and they are necessarily $v_1$ and $v_2$.
	In that case, there is a vertex $k_l$ in some $K_l \neq K_6$ such that $v_2$ and $v_3$ are adjacent to $k_l$ and $v_4$ is nonadjacent to $k_l$, thus we find a tent induced by $v_2$, $v_3$, $v_4$, $k_l$, $k_{64}$ and $k_{65}$.
	
	
	If $v_4$ is the only LR-row and every other row is unlabeled or are (one, two or) all labeled with the same letter, then we find $M_V$ induced by $\{ v_1$, $v_2$, $v_3$, $v_4$, $k_{61}, \ldots, k_{64}$, $k_l \}$ where $k_l$ in $K_l \neq K_6$ adjacent only to $v_4$.
	
	\subsubcase Suppose there are exactly two LR-rows. 
	If $v_1$ and $v_2$ are such LR-rows, then we find a tent induced by $\{ v_1$, $v_2$, $v_3$, $k_{62}$, $k_{63}$, $k_{65} \}$, thus we discard this case.
	If instead $v_1$ and $v_3$ are LR-rows and every other row is unlabeled or (one or) all are labeled with the same letter, then we find $M_V$ induced by every row and column plus a vertex $k_l$ in some $K_l \neq K_6$ such that both $v_2$ and $v_4$ are nonadjacent to $k_l$.
	Moreover, since $v_2$ and $v_4$ are neither disjoint or nested and there is a column in which both rows have a $0$, then they are not labeled with distinct letters, disregarding of the coloring, for in that case we find $D_1$ or $D_2$.

	If exactly $v_1$ and $v_4$ are LR-rows and every other row is unlabeled or are (one or) all labeled with the same letter, then we find a tent induced by every row and column plus a vertex $k_l$ in some $K_l \neq K_6$ such that both $v_2$ and $v_4$ are nonadjacent to $k_l$. Once more, $v_2$ and $v_3$ are not labeled with distinct letters since in that case we find either $D_1$ or $D_2$.
	
	If exactly $v_3$ and $v_4$ are LR-rows and every other row is unlabeled or either $v_1$ or $v_2$ is labeled with L or R, then we find $M_{IV}$ induced by every row and column plus a vertex $k_l$ in some $K_l \neq K_6$ such that both $v_1$ and $v_2$ are nonadjacent to $k_l$. 
	Once more, $v_1$ and $v_2$ are not labeled with the same letter for they would induce $D_0$, neither they are labeled with distinct letters since in that case we find either $D_1$ or $D_2$. 
	
	If $v_1$, $v_2$ and $v_3$ are LR-rows, since there is a vertex $k_l \in K_l$ with $l \neq 6$ such that $v_4$ is nonadjacent to $k_l$, then we find a tent induced by $\{ v_1$, $v_2$, $v_4$, $k_{61}$, $k_{64}$, $k_l \}$. Analogously, if $v_1$, $v_2$ and $v_4$ are LR-rows and $v_3$ is not, then the tent is induced by $\{ v_1$, $v_2$, $v_3$, $k_{61}$, $k_{64}$, $k_{65} \}$. The same holds if all 4 rows are LR-rows, where the tent is induced by $\{ v_1$, $v_2$, $v_4$, $k_{62}$, $k_{63}$, $k_{65} \}$.
	Finally, if $v_2$, $v_3$ and $v_4$ are LR-rows, since there is a vertex $k_l \in K_l$ with $l \neq 6$ such that $v_1$ is nonadjacent to $k_l$, then we find $M_V$ induced by $\{ v_1$, $v_2$, $v_3$, $v_4$, $k_{61}$, $k_{62}$, $k_{63}$, $k_{65}$, $k_l \}$.

	\case Therefore, we may assume that $B$ is admissible and LR-orderable but is not partially $2$-nested. Since there are no uncolored labeled rows and those colored rows are labeled with either L or R and do not induce any of the matrices $\mathcal{D}$, then in particular no pair of pre-colored rows of $B$ induce a monochromatic gem or a monochromatic weak gem, and there are no badly-colored gems since every LR-row is uncolored, therefore $B$ is partially $2$-nested. 

	
	\case Finally, let us suppose that $B$ is partially $2$-nested but is not $2$-nested. As in the previous cases, we consider $B$ ordered with a suitable LR-ordering.
	Let $B'$ be a matrix obtained from $B$ by extending its partial pre-coloring to a total $2$-coloring. It follows from Lemma \ref{lema:B_ext_2-nested} that, if $B'$ is not $2$-nested, then either there is an LR-row for which its L-block and R-block are colored with the same color, or $B'$ contains a monochromatic gem or a monochromatic weak gem or a badly-colored doubly weak gem. 	
	
	If $B'$ contains a monochromatic gem where the rows that induce such a gem are not LR-rows, then the proof is analogous as in the tent case. Thus, we may assume that at least one of the rows is an LR-row.

		
	\subcase \textit{Let us suppose first that there is an LR-row $w$ for which its L-block $w_L$ and R-block $w_R$ are colored with the same color. 	}
	If these two blocks are colored with the same color, then there is either one odd sequence of rows $v_1, \ldots, v_j$ that force the same color on each block, or two distinct sequences, one that forces the same color on each block. 
	
	\subsubcase Let us suppose first that there is one odd sequence $v_1, \ldots, v_j$ that forces the color on both blocks. If $k=1$, then notice this is not possible since we are coloring $B'$ using a suitable LR-ordering. If there is not a suitable LR-ordering, then $B$ is not admissible or LR-orderable, which results in a contradiction. Thus, let $j>1$ and assume without loss of generality that $v_1$ intersects $w_L$ and $v_j$ intersects $w_R$. Moreover, we assume that each of the rows in the sequence $v_1, \ldots, v_j$ is colored with a distinct color and forces the coloring on the previous and the next row in the sequence.
	If $v_1, \ldots, v_j$ are all unlabeled rows, then we find an even $(j+1)$-sun.
	If instead $v_1$ is an L-row, then $w_L$ is properly contained in $v_1$. Thus, $v_2, \ldots, v_{j-1}$ are not contained in $v_1$, since at least $v_j$ intersects $w_R$. If $v_j$ is unlabeled or labeled with R, then we find an even $(j+1)$-sun. If instead $v_j$ is labeled with L, since $j$ is odd, then we find $S_1(j+1)$ in $B$ which is not possible since we are assuming $B$ admissible.

	\subsubcase Suppose now that there are two independent sequences $v_1, \ldots, v_j$ and $x_1, \ldots, x_l$  that force the same color on $w_L$ and $w_R$, respectively. Suppose without loss of generality that $w_L$ and $w_R$ are colored with red.
	If $j=1$ and $l=1$, then we find $D_6$, which is not possible. Hence, we assume that either $j>1$ or $l>1$. 
	Suppose that $j>1$ and $l>1$. In this case, there is a labeled row in each sequence, for if not we can change the coloring for each row in one of the sequences and thus each block of $w$ can be colored with distinct colors. We may assume that $v_j$ is labeled with L and $x_l$ is labeled with R (for the LR-ordering used to color $B'$ is suitable and thus there is no R-row intersecting $w_L$, and the same holds for each L-block and $w_R$). As in the previous paragraphs, we assume that each row in each sequence forces the coloring on both the previous and the next row in its sequence. In that case, $v_2, \ldots, v_j$  is contained in $w_L$ and $x_2, \ldots, x_l$ is contained in $w_R$. 
	Moreover, $w$ represents a vertex in $S_{[16}$, $v_j$ lies in $S_{46} \cup S_{36}$ or $S_{16} \cup S_{26} \cup S_{56}$ and $x_l$ lies in $S_{61} \cup S_{64} \cup S_{65}$ or $S_{62} \cup S_{63}$ (depending on whether they are colored with red or blue, respectively).
	Suppose first that they are both colored with red, thus $v_j$ lies in $S_{46} \cup S_{36}$ and $x_l$ lies in $S_{61} \cup S_{64} \cup S_{65}$. In this case $j$ and $l$ are both even.
	If $x_l$ lies in $S_{64} \cup S_{65}$, since there is a $k_i$ in some $K_i \neq K_6$ adjacent to both $v_j$ and $x_l$, then we find $F_2(j+l+1)$ contained in the submatrix induced by each row and column on which the rows in $w$ and both sequences are not null and the column representing $k_i$. If instead $x_l$ lies in $S_{61}$, we find $F_2(k+l+3)$ contained in the same submatrix but adding three columns representing vertices $k_i$ in $K_i$ for $i=1,2,4$.
	The same holds if $v_j$ and $x_l$ are both blue.
	Suppose now that $v_j$ is colored with red and $v_l$ is colored with blue. Thus, $j$ is even and $l$ is odd. In this case, we find $F_2(j+l+2)$ contained in the submatrix induced by the row that represents $s_{24}$, two columns representing any two vertices in $K_2$ and $K_4$ and each row and column on which the rows in $w$ and both sequences are not null.
	The proof is analogous if either $j=1$ or $l=1$.	
	
	
	Hence, we may assume there is either a monochromatic weak gem in which one of the rows is an LR-row or a badly-colored doubly-weak gem in $B'$, for the case of a monochromatic gem or a monochromatic weak gem where one of the rows is an L-row (resp.\ R-row) and the other is unlabeled is analogous to the tent case.  
	
	\subcase \textit{Let us suppose there is a monochromatic weak gem in $B'$}, and let $v_1$ and $v_2$ be the rows that induce such gem, where $v_2$ is an LR-row. 
	Suppose first that $v_1$ is a pre-colored row. Suppose without loss of generality that the monochromatic weak gem is induced by $v_1$ and the L-block of $v_2$ and that $v_1$ and $v_2$ are both colored with red. We denote $v_{2L}$ to the L-block of $v_2$.
	If $v_1$ is labeled with R, then $v_2$ is the L-block of some LR-row $r$ in $B$ and $v_1$ is the R-block of itself. However, since the LR-ordering we are considering to color $B'$ is suitable, then the L-block of an LR-row has empty intersection with the R-block of a non-LR row and thus this case is not possible. 

	If $v_1$ is labeled with L, since they induce a weak gem, then $v_{2L}$ is properly contained in $v_1$.
	Since $v_1$ is a row labeled with L in $B$, then $v_1$ is a pre-colored row. Moreover, since $v_{2L}$ is colored with the same color as $v_1$, then there is either a blue pre-colored row, or a sequence of rows $v_3, \ldots, v_j$ where $v_j$ forces the red coloring of $v_{2L}$. 
	In either case, there is a pre-colored row in that sequence that forces the color on $v_{2L}$, and such row is either labeled with L or with R. 
	
	Suppose first that such row is labeled with L. If $v_3$ is a the blue pre-colored row that forces the red coloring on $v_{2L}$, then $v_{2L}$ is properly contained in $v_3$. However, in that case we find $D_4$ which is not possible since $B$ is admissible. 
	Hence, we assume $v_3, \ldots, v_{j-1}$  is a sequence of unlabeled rows and that $v_j$ is a labeled row such that this sequence forces $v_{2L}$ to be colored with red, and each row in the sequence forces the color on both its predecesor and its succesor.
	If $j-3$ is even, then $v_j$ is colored with blue, and if $j-3$ is odd then $v_j$ is colored with red. In either case, we find $S_5(j)$ contained in the submatrix induced by rows $v_1, v_2, v_3, \ldots, v_j$.
	
	If instead the row $v$ that forces the coloring on $v_{2L}$ is labeled with R, since the LR-ordering used to color $B$ is suitable, then the intersection between $v_{2L}$ and $v$ is empty. Hence, $v \neq v_3$, thus we assume that $v_3, \ldots, v_{j-1}$ are unlabeled rows and $v_j = v$. If $j-3$ is odd, then $v_j$ is colored with red, and if $j-3$ is even, then $v_j$ is colored with blue. In either case we also find $S_5(j)$, which is not possible since $B$ is admissible.
		
	Suppose now that $v_1$ is an unlabeled row. Notice that, since $v_1$ and $v_2$ induce a weak gem, then $v_1$ is not nested in $v_2$. 

	Hence, either the coloring of both rows is forced by the same sequence of rows or the coloring of $v_1$ and $v_2$ is forced for each by a distinct sequence of rows. As in the previous cases, we assume that the last row of each sequence represents a pre-colored labeled row. 
	 
	Suppose first that both rows are forced to be colored with red by the same row $v_3$. Thus, $v_3$ is a labeled row pre-colored with blue. Moreover, since $v_3$ forces $v_1$ to be colored with red, then $v_1$ is not contained in $v_3$ and thus there is a column $k_{61}$ in which $v_1$ has a $1$ and $v_3$ has a $0$.
	
We may also assume that $v_2$ has a $0$ in such a column since $v_1$ is also not contained in $v_2$.
Moreover, since $v_3$ forces $v_2$ to be colored with red, then $v_3$ is labeled with the same letter than $v_2$ and $v_3$ is not contained in $v_2$, thus we can find a column $k_{62}$ in which $v_2$ has a $0$ and $v_1$ and $v_3$ both have a $1$.
Furthermore, since $v_3$ and $v_2$ are both labeled with the same letter and the three rows have pairwise nonempty intersection, then there is a column $k_{63}$ in which all three rows have a $1$. Since $v_3$ is a row labeled with either L or R in $B$, then there are vertices $k_l \in K_l$, $k_m \in K_m$ with $l \neq m$, $l,m \neq 6$ such that $v_3$ is adjacent to $k_l$ and nonadjacent to $k_m$. Moreover, since $v_2$ is an LR-row, then $v_2$ is adjacent to both $k_l$ and $k_m$ and $v_j$ is nonadjacent to $k_l$ and $k_m$.
Hence, we find $F_0$ induced by $\{v_3$, $v_1$, $v_2$, $k_l$, $k_{61}$, $k_{63}$, $k_{62}$, $k_m \}$.

Suppose instead there is a sequence of rows $v_3, \ldots, v_j$ that force the coloring of both $v_1$ and $v_2$, where $v_3, \ldots, v_{j-1}$ are unlabeled rows and $v_j$ is labeled with either L or R and is pre-colored. 

We have two possibilities: either $v_j$ is labeled with L or with R.

If $v_j$ is labeled with L and $v_j$ forces the coloring of $v_2$, then we have the same situation as in the previous case. Thus we assume $v_j$ is nested in $v_2$. In this case, since $v_j$ and $v_2$ are labeled with L, the vertices $v_3, \ldots, v_{j-1}$ are nested in $v_2$ and thus they are chained from right to left. Moreover, since $v_1$ and $v_2$ are colored with the same color, then there is an odd index $1 \leq l \leq j-1$ such that $v_1$ contains $v_3, \ldots, v_l$ and does not contain $v_{l+1}, \ldots, v_j$. Hence, we find $F_1(l+1)$ considering the rows $v_1, v_2, \ldots, v_{l+1}$.

Suppose now that $v_j$ is labeled with R. Since $B'$ is colored using a suitable LR-ordering, then $v_j$ and $v_2$ have empty intersection, thus there is a sequence of unlabeled rows $v_3, \ldots, v_{j-1}$, chained from left to right. Notice that it is possible that $v_1= v_3$. 
Suppose first that $j$ is even. If $v_1 = v_3$, then there is an odd number of unlabeled rows between $v_1$ and $v_j$. In this case we find a $(j-2)$-sun contained in the subgraph induced by rows $v_2, v_1=v_3, v_4, \ldots, v_j$. 
If instead $v_1 \neq v_3$, then $v_1$ and $v_3$ and $v_1$ and $v_5$ both induce a $0$-gem, and thus we find a $(j-2)$-sun in the same subgraph.
If $j$ is odd, then there is an even number of unlabeled rows between $v_2$ and $v_j$. Once more, we find a $(j-1)$-sun contained in the subgraph induced by rows $v_2, v_3, \ldots, v_j$.


Notice that these are all the possible cases for a weak gem. This follows from the fact that, if there is a pre-colored labeled row that forces the coloring upon $v_1$ then it forces the coloring upon $v_2$ and viceversa. Moreover, if there is a sequence of rows that force the coloring upon $v_2$, then one of these rows of the sequence also forces the coloring upon $v_1$, and viceversa. 
Furthermore, since the label of the pre-colored row of the sequence determines a unique direction in which the rows overlap in chain, then there is only one possibility in each case, as we have seen in the previous paragraphs.
It follows that the case in which there is a sequence forcing the coloring upon each $v_1$ and $v_2$ can be reduced to the previous case.

	\subcase \textit{Suppose there is a badly-colored doubly-weak gem in $B'$. }
	Let $v_1$ and $v_2$ be the LR-rows that induce the doubly-weak gem. Since the suitable LR-ordering determines the blocks of each LR-row, then the L-block of $v_1$ properly contains the L-block of $v_2$ and the R-block of $v_1$ is properly contained in the R-block of $v_2$, or viceversa. Moreover, the R-block of $v_1$ may be empty. Let us denote $v_{1L}$ and $v_{2L}$ (resp.\ $v_{1R}$ and $v_{2R}$) to the L-blocks (resp.\ R-blocks) of $v_1$ and $v_2$.
	
	There is a sequence of rows that forces the coloring on both LR-rows simultaneously or there are two sequences of rows and each forces the coloring upon the blocks of $v_1$ and $v_2$, respectively.
	Whenever we consider a sequence of rows that forces the coloring upon the blocks of $v_1$ and $v_2$, we will consider a sequence in which every row forces the coloring upon its predecessor and its succesor, a pre-colored row is either the first or the last row of the sequence, the first row of the sequence forces the coloring upon the corresponding block of $v_1$ and the last row forces the coloring upon the corresponding block of $v_2$. It follows that, in such a sequence, every pair of consecutive unlabeled rows overlap.
	We can also assume that there are no blocks corresponding to LR-rows in such a sequence, for we can reduce this to one of the cases.
	
		Suppose first there is a sequence of rows $v_3, \ldots, v_j$ that forces the coloring upon both LR-rows simultaneously. We assume that $v_3$ intersects $v_1$ and $v_j$ intersects $v_2$.
	
	If $v_3, \ldots, v_j$ forces the coloring on both L-blocks, then we have four cases: (1) either $v_3, \ldots, v_j$ are all unlabeled rows, (2) $v_3$ is the only pre-colored row, (3) $v_j$ is the only pre-colored row or (4) $v_3$ and $v_j$ are the only pre-colored rows. 
	In either case, if $v_3, \ldots, v_j$ is a minimal sequence that forces the same color upon both $v_{1L}$ and $v_{2L}$, then $j$ is odd.

	\subsubcase Suppose $v_3, \ldots, v_j$ are unlabeled.
	If $j=3$, then we find $S_7(3)$ contained in the submatrix induced by $v_1$, $v_2$ and $v_3$. Suppose $j>3$, thus we have two possibilities. If $v_2 \cap v_3 \neq \emptyset$, since $j$ is odd, then we find a $(j-1)$-sun contained in the submatrix induced by considering all the rows $v_1$, $v_2$, $v_3, \ldots, v_j$.
	If instead $v_2 \cap v_3 = \emptyset$, then we find $F_2(j)$ contained in the same submatrix.

	\subsubcase Suppose $v_3$ is the only pre-colored row.
	Since $v_3$ is a pre-colored row and forces the color red upon the L-block of $v_1$, then $v_3$ contains $v_{1L}$ and $v_3$ is colored with blue. If $v_4 \cap v_{1L} \neq \emptyset$, then we find $F_0$ in the submatrix given by considering the rows $v_1$, $v_3$, $v_4$, having in mind that there is a column representing some $k_i$ in $K_i \neq K_6$ in which the row corresponding to $v_1$ has a $1$ and the rows corresponding to $v_3$ and $v_4$ both have $0$. This follows since $v_4$ is unlabeled and thus represents a vertex that lies in $S_{66}$, and $v_3$ is pre-colored and labeled with L or R and, thus it represents a vertex that is not adjacent to every partition $K_i$ of $K$. 
	If instead $v_4 \cap v_{1L} = \emptyset$, then we find $F_2(j-2)$ contained in the submatrix induced by the rows $v_1, v_2, \ldots, v_{j-2}$ if $v_2 \cap v_{2R} = \emptyset$, and induced by the rows $v_1, v_2, v_5, \ldots, v_j$ if $v_2 \cap v_{2R} \neq \emptyset$.
	
	\subsubcase \label{subsubcase:B6_4.3.3.} Suppose $v_j$ is the only pre-colored row.
	In this case, $v_j$ properly contains $v_{2L}$ and we can assume that the rows $v_4, \ldots, v_{j-1}$ are contained in $v_{1L}$. If $v_3 \cap v_2 \neq \emptyset$, then we find an even $(j-1)$-sun in the submatrix induced by the rows $v_2, v_3, \ldots, v_j$.
	If instead $v_3 \cap v_2 = \emptyset$, then we find $F_2(j)$ in the submatrix given by rows $v_1, \ldots, v_j$.	
	
	\subsubcase Suppose that $v_3$ and $v_j$ are the only pre-colored rows. Thus, we can assume that $v_j$ properly contains $v_{2L}$ and $v_3$ properly contains $v_{2L}$, thus $v_3$ properly contains $v_{2L}$. Hence, we find $D_9$ induced by the rows $v_1$, $v_2$ and $v_3$ which is not possible since $B$ is admissible.
	
	The only case we have left is when $v_3, \ldots, v_j$ forces the coloring upon $v_{1L}$ and $v_{2R}$. This follows from the fact that, if $v_3, \ldots, v_j$ forces the color upon $v_{2L}$ and $v_{1R} \neq \emptyset$, then this case can be reduced to case (4.3.\ref{subsubcase:B6_4.3.3.}).
	
	 Hence, either (1) $v_3, \ldots, v_j$ are unlabeled rows, (2) $v_3$ is the only pre-colored row, or (3) $v_3$ and $v_j$ are the only pre-colored rows.
	Notice that in either case, $j$ is even and thus for (1) we find $S_8(j)$, which results in a contradiction since $B$ is admissible.
	Moreover, in the remaining cases, $v_3$ properly contains $v_{1L}$ and $v_{2L}$. Since $v_1$ and $v_2$ overlap, we find $D_9$ which is not possible for $B$ is admissible.


	This finishes the proof.

\end{mycases} 	

\end{proof}


Let $G= (K,S)$, $H$ as in Section\ref{sec:4tent_partition} and the matrices $\mathbb B_i$ for $i= \{1 \ldots, 6\}$ as defined in the previous subsection.
Suppose $\mathbb B_i$ is $2$-nested for each $i \in \{1, 2, \ldots, 6\}$. Let $\chi_i$ be a proper $2$-coloring for $\mathbb B_i$ for each $i \in \{1, \ldots, 5\}$ and $\chi_6$ be a proper $2$-coloring for $\mathbb B_6$. Moreover, there is a suitable LR-ordering $\Pi_i$ for each $i \in \{1, 2, \ldots, 6\}$.

Let $\Pi$ be the ordering of the vertices of $K$ given by concatenating the orderings $\Pi_1$, $\Pi_2$, $\ldots$, $\Pi_6$, as defined in Subsection \ref{subsec:tent3}.
Let $s \in S$. Hence, $s$ lies in $S_{ij}$ for some $i,j\in \{1,2,\ldots,6\}$. 
Notice that there are at most two rows $r_1$ in $\mathbb B_i$ and $r_2$ in $\mathbb B_j$ both representing $s$. Also notice that the row $r_l$ represents the adjacencies of $s$ with regard to $K_l$ for each $l=i,j$, and if $i>j$, then $r_i$ and $r_j$ are colored with distinct colors.

\begin{defn} \label{def:matrices_B_por_colores}
We define the $(0,1)$-matrices $\mathbb B_r$, $\mathbb{B}_b$, $\mathbb B_{r-b}$ and $\mathbb B_{b-r}$ as in the previous subsection, considering only those independent vertices that are not in $S_{[16}$.
\end{defn}





Notice that the only nonempty subsets $S_{ij}$ with $i>j$ that we are considering are those with $i=6$. Hence, the rows of $\mathbb B_{r-b}$ are those representing vertices in $S_{61} \cup S_{64} \cup S_{65}$ and the rows of $\mathbb B_{b-r}$ are those representing vertices in $S_{62} \cup S_{63}$.



\begin{lema} \label{lema:matrices_union_son_nested_4tent} 
	Suppose that $\mathbb B_i$ is $2$-nested for each $i =1,2 \ldots, 6$. If $\mathbb B_r$, $\mathbb B_b$, $\mathbb B_{r-b}$ or $\mathbb B_{b-r}$ are not nested, then $G$ contains $F_0$, $F_1(5)$ or $F_2(5)$ as forbidden induced subgraphs for the class of circle graphs.
\end{lema}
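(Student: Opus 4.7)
The plan is to mirror the strategy of Lemma \ref{lema:matrices_union_son_nested} (the analogous statement in the tent case), adapting the case analysis to the richer partition structure of the $4$-tent setting. Suppose one of the four matrices is not nested; then it contains a $0$-gem, so there exist two rows $f_1, f_2$ and two columns $k^{(1)}, k^{(2)}$ witnessing overlap (i.e.\ $f_1$ has a $1,0$ and $f_2$ has a $0,1$ in these columns, plus a common column with two $1$'s). Since by hypothesis each $\mathbb{B}_i$ is individually $2$-nested, such a $0$-gem cannot live inside a single $\mathbb{B}_i$: the two rows must correspond to independent vertices $v_1 \in S_{ij}$ and $v_2 \in S_{k\ell}$ whose chosen blocks sit in different $\mathbb{B}_i$ but received compatible colours under Definition~\ref{def:matrices_B_por_colores}. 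The task is then to enumerate, for each of $\mathbb{B}_r$, $\mathbb{B}_b$, $\mathbb{B}_{r-b}$, $\mathbb{B}_{b-r}$, which pairs $(S_{ij}, S_{k\ell})$ can contribute and to exhibit in each case an explicit induced $F_0$, $F_1(5)$ or $F_2(5)$ in $G$.

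For $\mathbb{B}_r$ and $\mathbb{B}_b$, I would read off the list of contributing subsets from Lemma \ref{lema:4tent_1} and the colouring rules for $\mathbb{B}_1,\ldots,\mathbb{B}_6$ (as defined in Section \ref{subsec:4tent2}). Each candidate pair then splits into a handful of subcases according to which partitions $K_m$ separate $v_1$ and $v_2$; for every one I would pick the six or eight concrete vertices (some $k_m \in K_m$ witnessing the non-adjacencies already guaranteed by membership in $S_{ij}$ and $S_{k\ell}$, together with the appropriate vertices $s_{12}, s_{24}, s_{45}$ of the $4$-tent) and check directly that they induce $F_0$, $F_1(5)$ or $F_2(5)$. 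The choice between the three targets will be driven by whether the two subsets share a common adjacent $K_m$ (giving $F_0$), or require two extra independent bridging vertices from the $4$-tent (giving $F_1(5)$ or $F_2(5)$, depending on which direction of the cycle is traversed). For $\mathbb{B}_{r-b}$ and $\mathbb{B}_{b-r}$, the argument is shorter: the contributing rows come only from $S_{61}\cup S_{64}\cup S_{65}$ (resp.\ $S_{62}\cup S_{63}$), so the $0$-gem immediately yields an $F_0$ using one vertex of the tent plus two vertices in $K_6$ and three vertices in $K_1\cup K_2\cup K_4\cup K_5$.

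The main obstacle will be the cases involving $\mathbb{B}_6$, because its rows include the LR-rows associated to $S_{[16}$ and because $K_6$ interacts with \emph{every} other $K_i$. In particular, when one of the two rows of the $0$-gem corresponds to a vertex in $S_{[16}$ (which is complete to $K_1,\ldots,K_5$), I need to verify that the witnessing columns in $K_6$ cannot be ``absorbed'' by the LR-row's global $1$'s elsewhere, and also to make sure that the configuration is not already blocked by the $2$-nestedness of $\mathbb{B}_6$ itself. Handling this cleanly will require explicitly invoking the suitable LR-ordering provided by Theorem~\ref{teo:hay_suitable_ordering} so that the L- and R-blocks of the LR-rows are disjoint from the relevant blocks of the other rows, thereby pinning down exactly which $k_6$ vertices are available to play the role of the two distinguished columns of the $0$-gem, and the remaining check reduces to producing the same $F_0$/$F_1(5)$/$F_2(5)$ pattern as before.
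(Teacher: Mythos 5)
Your proposal follows essentially the same route as the paper: locate a $0$-gem, use the $2$-nestedness of each $\mathbb B_i$ to force the gem's witnessing columns to span more than one block $K_i$, enumerate the pairs of subsets $S_{ij}$ that can contribute rows to each of $\mathbb B_r$, $\mathbb B_b$, $\mathbb B_{r-b}$, $\mathbb B_{b-r}$, and exhibit an explicit $F_0$, $F_1(5)$ or $F_2(5)$ in each surviving case. One correction that will save you work: the ``main obstacle'' you anticipate does not arise, because Definition~\ref{def:matrices_B_por_colores} builds these four matrices only from independent vertices \emph{not} in $S_{[16}$, so no LR-rows appear in them and no appeal to a suitable LR-ordering is needed at this stage; the only residue of $\mathbb B_6$ is that vertices in $S_{6j}$ contribute \emph{partial} rows to $\mathbb B_r$ or $\mathbb B_b$, whose compatibility inside $K_6$ was already settled when checking that $\mathbb B_6$ is $2$-nested. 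Also note that in the $\mathbb B_{r-b}$ case the gem does not always yield $F_0$: when $v_1\in S_{61}$ and $v_2\in S_{64}\cup S_{65}$ one needs $F_2(5)$ (using all of $s_{12}$, $s_{24}$, $s_{45}$), which is still within the statement's allowance.
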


\begin{proof}
	Notice that the only partial rows considered in $\mathbb B_r$ and $\mathbb B_b$ may be those in $S_{62} \cup S_{63}$ and $S_{61} \cup S_{64} \cup S_{65}$, respectively. Thus, if the partial row coincides with the row in $\mathbb B_6$ or $\mathbb B_1$, then we can consider the matrices $\mathbb B_r$ and $\mathbb B_b$ without these rows since the compatiblity with the rest of the rows was already considered when analysing if $\mathbb B_6$ and $\mathbb B_1$ are $2$-nested or not.
	
	Suppose first that $\mathbb B_r$ is not nested. Thus, there is a $0$-gem.
	Let $f_1$ and $f_2$ be two rows that induce a gem in $\mathbb B_r$ and $v_1$ in $S_{ij}$ with $i<j$ and $v_2$ in $S_{lm}$ with $l<m$ be the corresponding to vertices in $G$. Suppose without loss of generality that $f_1$ starts before $f_2$, thus $i \geq l$. Since $\mathbb B_i$ is $2$-nested for every $i\in \{1, 2, \ldots, 5,6\}$, in particular there are no monochromatic gems in each $\mathbb B_i$. Moreover, if $j=l$, then we find $D_1$ in $K_i$ or $K_j$, respectively.

Notice that every row in $\mathbb B_r$ represents a vertex that belongs to one of the following subsets of $S$: $S_{12}$, $S_{13}$, $S_{35}$, $S_{36}$, $S_{45}$, $S_{62}$ or $S_{63}$.
Analogously, every row in $\mathbb B_b$ represents a vertex belonging to either $S_{23}$, $S_{24}$, $S_{34}$, $S_{14}$, $S_{25}$, $S_{15}$, $S_{16}$, $S_{61}$, $S_{64}$ or $S_{65}$.
	
	\begin{mycases}
	\case Suppose first that $i=l$. We have two cases:
	\subcase $v_1$, $v_2$ in $S_{12} \cup S_{13}$. Suppose without loss of generality that both vertices lie in $S_{12}$ since the proof is analogous otherwise. Let $k_{ii}$ in $K_i$ such that $v_i$ is adjacent to $k_{ii}$ and $v_{i+1}$ is nonadjacent to $k_{ii}$ for $i=1,2$ (mod 2). Notice that $v_1$ and $v_2$ are labeled with R in $\mathbb B_1$ and are labeled with L in $\mathbb B_2$. Moreover, since $\mathbb B_1$ and $\mathbb B_2$ are admissible, then there are vertices $k_{12}$ in $K_1$ and $k_{21}$ in $K_2$ adjacent to both $v_1$ and $v_2$, for if not we find $D_0$ in each matrix. Moreover, there is a vertex $k_4$ in $K_4$ nonadjacent to both. We find $F_0$ induced by $\{ v_1$, $v_2$, $s_{24}$, $k_{11}$, $k_{12}$, $k_{21}$, $k_{22}$, $k_4 \}$.
	
	The proof is analogous if $v_1$ and $v_2$ in $S_{45} \cup S_{46}$, where $F_0$ is induced by $\{ v_1$, $v_2$, $s_{24}$, $k_2$, $k_{41}$, $k_{42}$, $k_{5}$, $k_6 \}$ or $\{ v_1$, $v_2$, $s_{24}$, $k_2$, $k_{41}$, $k_{42}$, $k_{51}$, $k_{52} \}$, depending on whether only one lies in $S_{46}$ or both lie in $S_{46}$.	
	If $v_1$ in $S_{45} \cup S_{46}$ and $v_2$ in $S_{62} \cup S_{63}$ is the vertex represented by a partial row in $\mathbb B_r$, then it is not possible that these rows induce a gem since they do not intersect. 
	Thus, we assume that $v_1$ in $S_{12} \cup S_{13}$. We find $F_0$ induced by $\{ v_1$, $v_2$, $s_{24}$, $k_{11}$, $k_{12}$, $k_{21}$, $k_{22}$, $k_{4} \}$ if $v_1$ in $S_{12}$ (thus necessarily $v_2$ in $S_{62}$ since they induce a $0$-gem). 
	If instead $v_1$ in $S_{13}$, since $v_1$ is complete to $K_1$, then one of the columns of the $0$-gem is induced by the column $c_L$. Thus, there is a vertex $k_6$ in $K_6$ adjacent to $v_2$ and nonadjacent to $v_1$. Hence, we find $F_0$ induced by $\{ v_1$, $v_2$, $s_{24}$, $k_{6}$, $k_{1}$, $k_{2}$, $k_{3}$, $k_{4} \}$.

	\subcase $v_1$, $v_2$ in $S_{35} \cup S_{36}$. Suppose that $v_1$ in $S_{35}$ and $v_2$ in $S_{36}$. Let $k_2$ in $K_2$ nonadjacent to both. There are vertices $k_{31}$, $k_{32}$ in $K_3$ such that $k_{31}$ is adjacent only to $v_1$ and $k_{32}$ is adjacent to both. Moreover, there are vertices $k_5$ in $K_5$ and $k_6$ in $K_6$ such that $k_5$ is adjacent to both and $k_6$ is adjacent only to $v_2$. We find $F_0$ induced by $\{ v_1$, $v_2$, $s_{24}$, $k_{31}$, $k_{32}$, $k_{5}$, $k_{6}$, $k_2 \}$. The proof is analogous if both lie in $S_{35}$ changing $k_6$ for other vertex in $K_5$ adjacent only to $v_2$ (exists since both rows induce a gem), and if both lie in $S_{36}$ we can find two vertices $k_{61}$ and $k_{62}$ in $K_6$ to replace $k_5$ and $k_6$ in the previous subset. Notice that, if instead $v_1$ in $S_{35} \cup S_{36}$ and $v_2$ in $S_{45} \cup S_{46}$ we also find $F_0$ changing $k_{32}$ for some vertex $k_4$ in $K_4$ in the same subset. This is the only case we had to see in which $j=m$.
	Furthermore, the partial rows corresponding to $S_{62} \cup S_{63}$ cannot induce a gem with a row corresponding to a vertex in $S_{35} \cup S_{36}$ since we aer assuming that $\mathbb B_3$ is admissible.

	\case Suppose now that $i<l$. Since $j \neq l$ and both rows induce a gem, then $i<l<j<m$. Thus, the only possibility is $v_1$ in $S_{35}$ and $v_2$ in $S_{46}$.
	In this case we find $F_0$ induced by $\{ v_1$, $v_2$, $s_{24}$, $k_2$, $k_3$, $k_4$, $k_{5}$, $k_6 \}$.
	\end{mycases}
	
	Hence $\mathbb B_r$ is nested.
	Suppose now that $\mathbb B_b$ is not nested, and let $v_1$ in $S_{ij}$ with $i<j$ and $v_2$ in $S_{lm}$ with $l<m$ two vertices for which its rows in $\mathbb B_b$ induce a $0$-gem. Once more, we assume that $i \leq l$.
	
	\begin{mycases}
		\case Suppose that the gem is induced by two rows corresponding to two vertices $v_1$ and $v_2$ such that $v_2$ is a partial row, thus $v_2$ in $S_{64} \cup S_{65}$. Notice that the $0$-gem may be induced by the column $c_L$.
		\subcase $v_2$ in $S_{64}$. 
		\subsubcase $v_1$ in $S_{24} \cup S_{34} \cup S_{14}$. We find $F_0$ induced by $\{ v_1$, $v_2$, $s_{45}$, $k_1$, $k_{2}$, $k_{41}$, $k_{42}$, $k_{5} \}$. Notice that, since $S_{64}$ is complete to $K_4$, the $0$-gem cannot be induced by $v_2$ and a vertex $v_1$ in $S_{14}$ complete to $K_1$, since we are considering that every vertex in $S_{14}$ is also complete to $K_4$ (for if not we have previously shown a forbidden subgraph).
		\subsubcase $v_1$ in $S_{15} \cup S_{25} \cup S_{16}$. In this case we find $F_1(5)$ induced by $\{ v_1$, $v_2$, $s_{12}$, $s_{24}$, $s_{45}$, $k_1$, $k_{2}$, $k_{4}$, $k_{5} \}$ if $v_1$ in $S_{15}$ is not complete to $K_1$. If instead $v_1$ in $S_{15}$ is complete to $K_1$, then it is not complete to $K_5$ (for we split those vertices that are adjacent to $K_1, \ldots, K_5$ into two disjoint subsets, $S_{[15]}$ and $S_{15}$). Moreover, one of the columns that induce the $0$-gem is the column $c_L$. Thus, there are vertices $k_6$ in $K_6$, $k_{51}$ and $k_{52}$ in $K_5$ such that $v_2$ is adjacent to $k_6$ and is nonadjacent to $k_{51}$ and $k_{52}$ and $v_1$ is adjacent to $k_{51}$ and is nonadjacent to $k_6$ and $k_{52}$. Hence, we find $F_0$ induced by $\{ v_1$, $v_2$, $s_{45}$, $k_{6}$, $k_{2}$, $k_{4}$, $k_{51}$, $k_{52} \}$. 
		
		\subcase $v_2$ in $S_{65}$. In this case, $v_1$ in $S_{25} \cup S_{15} \cup S_{16}$. Since these rows induce a gem and $v_2$ has a $1$ in every column corresponding to $K_1, \ldots, K_4$, there are vertices $k_1$ in $K_1$ and $k_5$ in $K_5$ such that $v_1$ is adjacent to $k_1$ and $v_2$ is nonadjacent to $k_1$, and $v_1$ is nonadjacent to $k_5$ and $v_2$ is adjacent to $k_5$. Thus, we find $F_1(5)$ induced by $\{ v_1$, $v_2$, $s_{12}$, $s_{24}$, $s_{45}$, $k_1$, $k_{2}$, $k_{4}$, $k_{5} \}$.
	
		\case Suppose now that $i=l$.
		\subcase \label{subcase:Bb_1.1}$v_1$, $v_2$ in $S_{23} \cup S_{24} \cup S_{25}$. Suppose first that both lie in $S_{24}$. In that case we find $F_0$ induced by $\{ v_1$, $v_2$, $s_{12}$, $k_{21}$, $k_{22}$, $k_{41}$, $k_{42}$, $k_1 \}$. If instead one of both lie in $S_{23}$, then we change $k_{41}$ for some analogous $k_3$ in $K_3$, and if one of both lie in $S_{25}$ we change $k_{42}$ for some analogous $k_5$ in $K_5$.
		
		\subcase $v_1$, $v_2$ in $S_{34}$. In this case, we find $F_0$ induced by $\{ v_1$, $v_2$, $s_{45}$, $k_{31}$, $k_{32}$, $k_{41}$, $k_{42}$, $k_5 \}$.
		
		\subcase $v_1$, $v_2$ in $S_{14} \cup S_{15} \cup S_{16}$. Remember that $S_{15}$ are those independent vertices that are not complete to $K_5$ and $S_{16}$ are those independent vertices that are not complete to $K_1$.

		\subsubcase If both lie in $S_{14}$, then we find $F_0$ induced by $\{ v_1$, $v_2$, $s_{24}$, $k_{11}$, $k_{12}$, $k_{41}$, $k_{42}$, $k_5 \}$. 
		
		\subsubcase If $v_1$ in $S_{14}$ and $v_2$ in $S_{15}$, then we find $F_1(5)$ induced by $\{ v_1$, $v_2$, $s_{12}$, $s_{24}$, $s_{45}$, $k_{1}$, $k_{2}$, $k_{4}$, $k_{5} \}$. The same holds if instead $v_2$ in $S_{16}$ or if both lie in $S_{15}$. Moreover, we find the same subgraph induced by the same subset if $v_1$ in $S_{15}$ and $v_2$ in $S_{16}$, since there is a vertex in $K_5$ that is nonadjacent to $v_1$. 
		
		\subsubcase If both lie in $S_{16}$, then we find $F_0$ induced $\{ v_1$, $v_2$, $s_{12}$, $k_{11}$, $k_{12}$, $k_{2}$, $k_{4}$, $k_6 \}$.
		
		\case Suppose now that $j=m$. The case where $v_1$, $v_2$ in $S_{14} \cup S_{24} \cup S_{34}$ is analogous as Case \ref{subcase:Bb_1.1}. Let $v_1$ in $S_{15}$ and $v_2$ in $S_{25}$. We find $F_1(5)$ induced by $\{ v_1$, $v_2$, $s_{12}$, $s_{24}$, $s_{45}$, $k_1$, $k_{2}$, $k_{4}$, $k_{5} \}$.
		
		\case Suppose that $i<l$, thus $i<l<j<m$. In this case, $v_1$ in $S_{14}$ and $v_2$ in $S_{25}$. We find $F_1(5)$ induced by $\{ v_1$, $v_2$, $s_{12}$, $s_{24}$, $s_{45}$, $k_1$, $k_{2}$, $k_{4}$, $k_{5} \}$. 		
		\end{mycases}
	
	Hence $\mathbb B_b$ is nested.
	Suppose that $\mathbb B_{b-r}$ is not nested, thus let $v_1$ and $v_2$ in $S_{62} \cup S_{63}$ two vertices whose rows induce a $0$-gem. If both lie in $S_{62}$, then we find $F_0$ induced by $\{ v_1$, $v_2$, $s_{24}$, $k_{61}$, $k_{62}$, $k_{21}$, $k_{22}$, $k_4 \}$. If instead one or both lie in $S_{63}$, we find the same subgraph changing $k_{22}$ for some analogous $k_3$ in $K_3$.
	
	Finally, suppose that $\mathbb B_{r-b}$ is not nested, and let $v_1$ and $v_2$ in $S_{61} \cup S_{64} \cup S_{65}$ be two vertices whose rows induce a $0$-gem.
	If both lie in $S_{61}$, then we find $F_0$ induced by $\{ v_1$, $v_2$, $s_{12}$, $k_{61}$, $k_{62}$, $k_{11}$, $k_{12}$, $k_2 \}$. Similarly, we find $F_0$ induced by $\{ v_1$, $v_2$, $s_{45}$, $k_{61}$, $k_{2}$, $k_{4}$, $k_{51}$, $k_{52} \}$ if $v_1$ in $S_{64}$ and $v_2$ in $S_{65}$ or if both lie in $S_{64}$, changing $k_{51}$ for an analogous vertex $k'_4$ in $K_4$.
	If instead $v_1$ in $S_{61}$ and $v_2$ in $S_{64} \cup S_{65}$, then we find $F_2(5)$ induced by $\{ v_1$, $v_2$, $s_{12}$, $s_{24}$, $s_{45}$, $k_{61}$, $k_{1}$, $k_{2}$, $k_{4}$, $k_5 \}$.

\end{proof}

\begin{teo} \label{teo:finalteo_4tent}
	Let $G=(K,S)$ be a split graph containing an induced $4$-tent. Then, $G$ is a circle graph if and only if $\mathbb B_1,\mathbb B_2,\ldots,\mathbb B_6$ are $2$-nested and $\mathbb B_r$, $\mathbb B_b$, $\mathbb B_{r-b}$ and $\mathbb B_{b-r}$ are nested.
\end{teo}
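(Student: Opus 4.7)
The plan is to mirror the structure of the proof of Theorem \ref{teo:finalteo_tent}, treating the $4$-tent case by analogy and then handling the extra complications introduced by the presence of LR-rows in $\mathbb{B}_6$ and by the fact that $K_6$ may be empty (which is forbidden for $K_1,\ldots,K_5$ since the $4$-tent forces those sets to be nonempty).

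For the necessity direction, I would combine the results already proven in this section. Every graph in $\mathcal{T}\cup\mathcal{F}$ is non-circle (as stated after Theorem \ref{teo:circle_split_caract}), so if $G$ is circle, it is $\{\mathcal{T},\mathcal{F}\}$-free. Lemma \ref{lema:equiv_circle_2nested_4tent_sinLR} then gives that $\mathbb{B}_i$ is $2$-nested for $i\in\{1,\ldots,5\}$; Lemma \ref{lema:4tent_coloreoLRvacias} shows that the choice of color for the empty LR-rows representing $S_{[15]}$ is well-defined; Lemma \ref{lema:B6_2nested_4tent} gives that $\mathbb{B}_6$ is $2$-nested; and Lemma \ref{lema:matrices_union_son_nested_4tent} gives that $\mathbb{B}_r$, $\mathbb{B}_b$, $\mathbb{B}_{r-b}$ and $\mathbb{B}_{b-r}$ are all nested.

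For the sufficiency direction, assume each $\mathbb{B}_i$ is $2$-nested and the four auxiliary matrices are nested. Choose a suitable LR-ordering $\Pi_i$ for $\mathbb{B}_i$ (which exists by Theorem \ref{teo:hay_suitable_ordering} since $\mathbb{B}_i$ is $2$-nested) together with a total block bi-coloring $\chi_i$, and let $\Pi$ be the concatenation $\Pi_1\Pi_2\cdots\Pi_6$. Divide the circle into twelve arcs analogously as in the tent case, labelled $K_i^+$ and $K_i^-$ for $i=1,\ldots,6$, with arcs $s_{12}^\pm$, $s_{24}^\pm$, $s_{45}^\pm$ separating them (corresponding to the independent vertices of the $4$-tent $H$). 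For each vertex $k\in K_i$, place a chord with one endpoint in $K_i^+$ and the other in $K_i^-$ following the ordering $\Pi_i$. For each $s\in S_{ij}$ with $i<j$ and $i,j\in\{1,\ldots,5\}$, the row of $s$ has neither an L-block nor an R-block in $\mathbb{B}_6$, so its two endpoints can be placed in the outer or inner portion of the arcs between $K_i$ and $K_j$ according to whether the color assigned by $\mathbb{B}_r$ or $\mathbb{B}_b$ is red or blue; the fact that these auxiliary matrices are nested ensures that chords of the same color within the same arc are nested or disjoint, so they can be drawn without unwanted crossings. For each $s\in S_{ij}$ with $i>j$ (necessarily $i=6$), one uses the color assignments from $\mathbb{B}_{r-b}$ or $\mathbb{B}_{b-r}$ to decide in which pair of arcs the two endpoints lie.

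The main obstacle, and the real novelty relative to the tent case, will be placing the chords corresponding to vertices in $S_{[16}$ and $S_{[15]}$, which are the LR-rows in $\mathbb{B}_6$. A vertex $s\in S_{[16}$ is adjacent to a proper subset of $K_6$ but complete to $K_1\cup\cdots\cup K_5$, so its chord must have one endpoint in $K_6^+$ and the other in $K_6^-$ and must cross every chord of $\bigcup_{i=1}^5 K_i$; the fact that such a row appears in $\mathbb{B}_6$ as an LR-row whose L-block and R-block are colored with distinct colors by the block bi-coloring $\chi_6$ is exactly what dictates the location of the two endpoints within $K_6^+$ and $K_6^-$. Similarly a vertex in $S_{[15]}$ corresponds to an empty LR-row in $\mathbb{B}_6$ whose color (red or blue, uniformly assigned after Lemma \ref{lema:4tent_coloreoLRvacias}) tells us in which of the two ``sides'' of the diagram to place its chord, with endpoints outside every arc of $K_6$ but crossing every chord of $K_1,\ldots,K_5$. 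The conditions of Definition \ref{def:2-nested} applied to $\mathbb{B}_6$ --- in particular properties (1), (4), (5) and (9) --- together with the nestedness of $\mathbb{B}_r$, $\mathbb{B}_b$, $\mathbb{B}_{r-b}$, $\mathbb{B}_{b-r}$, guarantee that in every arc where several chord endpoints must be placed, they can be ordered compatibly so that two chords cross in the resulting model if and only if the corresponding vertices are adjacent in $G$. Verifying this compatibility case by case, following the adjacency table for the $4$-tent partition given in Lemma \ref{lema:4tent_1}, will be the technical heart of the sufficiency argument.
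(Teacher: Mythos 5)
Your proposal follows essentially the same route as the paper: necessity is assembled from Lemmas \ref{lema:equiv_circle_2nested_4tent_sinLR}, \ref{lema:4tent_coloreoLRvacias}, \ref{lema:B6_2nested_4tent} and \ref{lema:matrices_union_son_nested_4tent} together with the non-circularity of the graphs in $\mathcal{T}\cup\mathcal{F}$, and sufficiency is proved by the same explicit construction (twelve arcs, concatenated suitable LR-orderings, endpoints placed by the block bi-coloring, with the L-block/R-block split of the LR-rows of $\mathbb B_6$ and the uniformly colored empty LR-rows of $S_{[15]}$ handled exactly as the paper does). The only difference is that you defer the case-by-case placement verification that the paper spells out, but the ideas you identify as carrying that verification are the correct ones.
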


\begin{proof}

Necessity is clear by the previous lemmas and the fact that the graphs in families $\mathcal{T}$ and $\mathcal{F}$ are all non-circle. Suppose now that each of the matrices $\mathbb B_1,\mathbb B_2,\ldots,\mathbb B_6$ is $2$-nested and the matrices $\mathbb B_r$, $\mathbb B_b$, $\mathbb B_{r-b}$ or $\mathbb B_{b-r}$ are nested.
Let $\Pi$ be the ordering for all the vertices in $K$ obtained by concatenating each suitable LR-ordering $\Pi_i$ for $i \in \{1, 2,\ldots, 6\}$.

\begin{figure}[h!]
    \includegraphics[width=\textwidth]{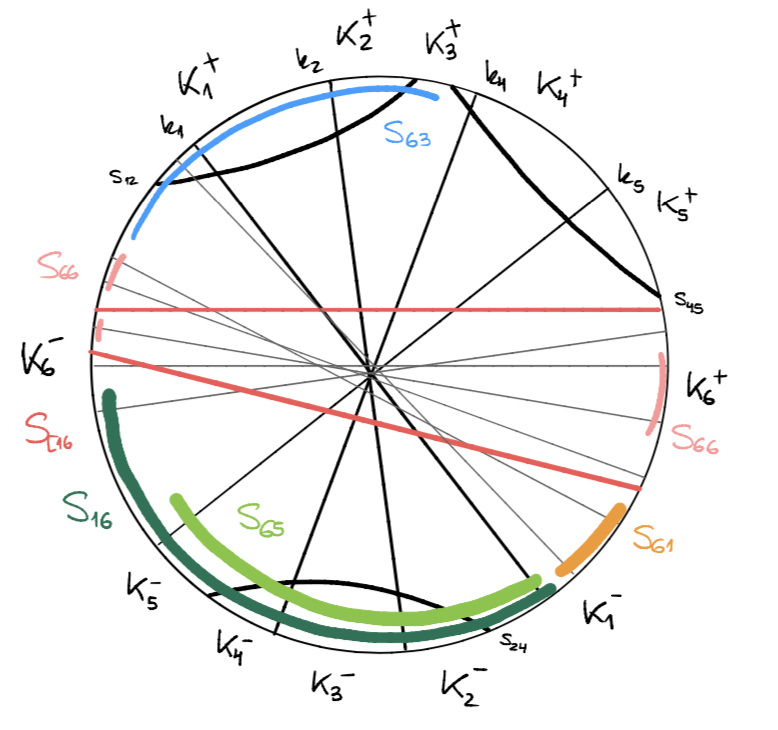}   
	\caption{Sketch model of $G$ with some of the chords associated to the rows in $\mathbb{B}_6$.}
     \label{fig:4tent_model} 
\end{figure}

Consider the circle divided into twelve pieces as in Figure \ref{fig:4tent_model}. For each $i\in\{1,2,\ldots,6\}$ and for each vertex $k_i \in K_i$ we place a chord having one endpoint in $K_i^+$ and the other endpoint in $K_i^-$, in such a way that the ordering of the endpoints of the chords in $K_i^+$ and $K_i^-$ is $\Pi_i$. 

Let us see how to place the chords for each subset $S_{ij}$ of $S$. First, some useful remarks.

\begin{remark} The following assertions hold:
	\begin{itemize}
		\item By Lemma \ref{lema:matrices_union_son_nested_4tent}, all the vertices in $S_{ij}$ are nested, for every pair $i, j = \{1, 2, \ldots, 6\}$, $i \neq j$. This follows since any two vertices in $S_{ij}$ are nondisjoint. Moreover, in each $S_{ij}$, all the vertices are colored with either one color (the same), or they are colored red-blue or blue-red. Hence, these vertices are represented by rows in the matrices $\mathbb B_{r-b}$ and $\mathbb B_{b-r}$ and therefore they must be nested since each of these matrices is a nested matrix.
		\item As a consequence of the previous and Claim \ref{claim:tent_0}, if $i \geq k$ and $j\leq l$, then every vertex in $S_{ij}$ is nested in every vertex of $S_{kl}$.
		\item Also as a consequence of the previous and Lemma \ref{lema:matrices_union_son_nested_4tent}, if we consider only those vertices labeled with the same letter in some $\mathbb B_i$, then there is a total ordering of these vertices. This follows from the fact that, if two vertices $v_1$ and $v_2$ are labeled with the same letter in some $\mathbb B_i$, since $\mathbb B_i$ is --in particular-- admissible, then they are nested in $K_i$. Moreover, if $v_1$ and $v_2$ are labeled with L in $\mathbb B_i$, then they are either complete to $K_{i-1}$ or labeled with R in $\mathbb B_{i-1}$. Thus, there is an index $j_l$ such that $v_i$ is labeled with R  in $\mathbb B_{j_l}$, for $l=1,2$. Therefore, we can find in such a way a total ordering of all these vertices.
		\item If $v_1$ and $v_2$ are labeled with distinct letters in some $\mathbb B_i$, then they are either disjoint in $K_i$ (if they are colored with the same color) or $N_{K_i}(v_1) \cup N_{K_i}(v_2) = K_i$  (if they are colored with distinct colors), for there are no $D_1$ or $D_2$ in $\mathbb B_i$ for all $i \in \{1, 2, \ldots, 6\}$.
	\end{itemize}
\end{remark}


Notice that, when we define the matrix $\mathbb B_6$, we pre-color every vertex in $S_{[15]}$ with the same color. Since, we are assuming $\mathbb B_6$ is $2$-nested and thus in particular is admissible, the subset $S_{[15]} \neq \emptyset$ if and only if the vertices represented in $\mathbb B_6$ are either all vertices in $S_{66} \cup S_{[16}$ and vertices that are represented by labeled rows $r$, all of them colored and labeled with the same color and letter L or R.

Moreover, since $\mathbb B_6$ is admissible, the sets $N_{K_6}(S_{6i}) \cap N_{K_6}(S_{j6})$ are empty, for $i = 1, 4, 5$ and $j =3,4$. The same holds for the sets $N_{K_6}(S_{6i}) \cap N_{K_6}(S_{j6})$, for $i = 2, 3$ and $j =2, 5, 1$.

If $S_{[16}= \emptyset$, then the placing of the chords that represent vertices with one or both endpoints in $K_6$ is very similar as in the tent case. Suppose that $S_{[16} \neq \emptyset$.

\vspace{1mm}
Before proceeding with the guidelines to draw the circle model, we have some remarks on the relationship between the vertices in $S_{ij}$ with either $i=6$ or $j=6$, and those vertices in $S_{[16}$. This follows from the proof of Lemma \ref{lema:B6_2nested_4tent}: 

\begin{remark} \label{obs:4tent_guidelines_model}
Let $G$ be a circle graph that contains no induced tent but contains an induced $4$-tent, and such that each matrix $\mathbb B_i$ is $2$-nested for every $i=1,2, \ldots, 6$. Then, all of the following statements hold:
\begin{itemize}
\item If $S_{26} \cup S_{16} \neq \emptyset$, then $S_{64} \cup S_{65} = \emptyset$, and viceversa.
\item If $S_{36} \cup S_{46} \neq \emptyset$, then $S_{61} \cup S_{64} \cup S_{65} = \emptyset$, and viceversa.
\item If $S_{56} \cup S_{26} \cup S_{16} \neq \emptyset$, then $S_{62} \cup S_{63} = \emptyset$, and viceversa.	
\item If $S_{56} \neq \emptyset$, then $S_{65} = \emptyset$.
\end{itemize}
\end{remark}

Let $v$ in $S_{ij} \neq S_{[16}$ and $w$ in $S_{[16}$, with either $i=6$ or $j=6$.
Suppose first that $i=j=6$. Since $\mathbb B_6$ is $2$-nested, the submatrix induced by the rows that represent $v$ and $w$ in $\mathbb B_6$ contains no monochromatic gems or monochromatic weak gems.
If instead $i<j$, since $\mathbb B_6$ is admissible, then the submatrix induced by the rows that represent $v$ and $w$ in $\mathbb B_6$ contains no monochromatic weak gem, and thus we can place the endpoint of $w$ corresponding to $K_6$ in the arc portion $K^+_6$ and the $K_6$ endpoint of $v$ in $K^-_6$, or viceversa.


Remember that, since we are considering a suitable LR-ordering, there is an L-row $m_L$ such that any L-row and every L-block of an LR-row are contained in $m_L$ and every R-row and R-block of an LR-row are contained in the complement of $m_L$. Moreover, since we have a block bi-coloring for $\mathbb B_6$, then for each LR-row one of its blocks is colored with red and the other is colored with blue. 
Hence, for any LR-row, we can place one endpoint in the arc portion $K^+_6$ using the ordering given for the block that colored with red, and the other endpoint in the arc portion $K^-_6$ using the ordering given for the block that is colored with blue.

Notice that, if $\mathbb B_6$ is $2$-nested, then all the rows labeled with L (resp.\ R) and colored with the same color and those L-blocks (resp.\ R-blocks) of LR-rows are nested. In particular, the L-block (resp.\ R-block) of every LR-row contains all the L-blocks of those rows labeled with L (resp.\ R) that are colored with the same color. 
Equivalently, let $r$ be an LR-row in $\mathbb B_6$ with its L-block $r_L$ colored with red and its R-block $r_R$ colored with blue, $r'$ be a row labeled with L and $r''$ be a row labeled with R. Hence, if $r_L$, $r'$ and $r''$ are colored with the same color, then $r$ contains $r'$ and $r \cap r'' = \emptyset$. This holds since we are considering a suitable LR-ordering and a total block bi-coloring of the matrix $\mathbb B_6$, thus it contains no $D_0$, $D_1$, $D_2$, $D_8$ or $D_9$.

Since every matrix $\mathbb B_r$, $\mathbb B_b$, $\mathbb B_{r-b}$ and $\mathbb B_{b-r}$ are nested, there is a total ordering for the rows in each of these matrices. Hence, there is a total ordering for all the rows that intersect that are colored with the same color, or with red-blue or with blue-red, respectively. Moreover, if $v$ and $w$ are two vertices in $S$ such that they both have rows representing them in one of these matrices --hence, they are colored with the same color or sequence of colors--, then either $v$ and $w$ are disjoint or they are nested. 


With this in mind, we give guidelines to build a circle model for $G$.

\vspace{2mm}
We place first the chords corresponding to every vertex in $K$, using the ordering $\Pi$. 
For each subset $S_{ij}$, we order its vertices with the inclusion ordering of the neighbourhoods in $K$ and the ordering $\Pi$. When placing the chords corresponding to the vertices of each subset, we do it from lowest to highest according to the previously stated ordering given for each subset.

Notice that there are no other conditions besides being disjoint or nested outside each of the following subsets: $S_{11}$, $S_{22}$, $S_{33}$, $S_{44}$, $S_{55}$, $S_{66}$.
For the subset $S_{12}$, we only need to consider if every vertex in $S_{12} \cup S_{11} \cup S_{22}$ are disjoint or nested. The same holds for the subsets $S_{24}$, $S_{45}$, considering $S_{22} \cup S_{44}$ and $S_{44} \cup S_{55}$, respectively. 

Since each matrix $\mathbb B_i$ is $2$-nested for every $i=1, 2, \ldots, 6$, if there are vertices in both $S_{23}$ and $S_{34}$, then they are disjoint in $K_3$. The same holds for vertices in $S_{62}$ and $S_{63}$, and $S_{61}$ and $S_{14} \cup S_{15} \cup S_{16}$.
This is in addition to every property seen in Remark \ref{obs:4tent_guidelines_model}.

First, we place those vertices in $S_{ii}$ for each $i=1,2, \ldots, 6$, considering the ordering given by inclusion. If $v$ in $S_{ii}$ and the row that represents $v$ is colored with red, then both endpoints of the chord corresponding to $v$ are placed in $K_i^+$. If instead the row is colored with blue, then both endpoints are placed in $K_i^-$. 

For each $v$ in $S_{ij} \neq S_{[16}$, if the row that represents $v$ in $\mathbb B_i$ is colored with red (resp.\ blue) , then we place the endpoint corresponding to $K_i$ in the portion $K_i^+$ (resp.\ $K_i^-$) . We apply the same rule for the endpoint corresponding to $K_j$.

Let us consider now the vertices in $S_{[15]}$. If $G$ is $\{ \mathcal{T}, \mathcal{F} \}$-free, then all the rows in $\mathbb B_6$ are colored with the same color. Moreover, if $S_{[15]} \neq \emptyset$, then either every row labeled with L or R in $\mathbb B_6$ is labeled with L and colored with red or labeled with R and colored with blue, or viceversa. 
Suppose first that every row labeled with L or R in $\mathbb B_6$ is labeled with L and colored with red or labeled with R and colored with blue. In that case, every row representing a vertex $v$ in $S_{[15]}$ is colored with blue, hence we place one endpoint of the chord corresponding to $v$ in $K_6^+$ and the other endpoint in $K_6^-$. In both cases, the endpoint of the chord corresponding to $v$ is the last chord of an independent vertex that appears in the portion of $K_6^+$ and is the first chord of an independent vertex that appears in the portion of $K_6^-$. We place all the vertices in $S_{[15]}$ in such a manner.
If instead every row labeled with L or R in $\mathbb B_6$ is labeled with L and colored with blue or labeled with R and colored with red, then every row representing a vertex in $S_{[15]}$ is colored with red. We place the endpoints of the chord in $K_6^-$ and $K_6^+$, as the last and first chord that appears in that portion, respectively.

Finally, let us consider now a vertex $v$ in $S_{[16}$. Here we have two possibilities: (1) the row that represents $v$ has only one block, (2) the row that represents the row that represents $v$ has two blocks of $1$'s.
Let us consider the first case. If the row that represents $v$ has only one block, then it is either an L-block or an R-block. Suppose that it is an L-block. If the row in $\mathbb B_6$ is colored with red, then we place one endpoint of the chord as the last of $K_6^-$ and the other endpoint in $K_6^+$, considering in this case the partial ordering given for every row that has an L-block colored with red in $\mathbb B_6$. 
If instead the row in $\mathbb B_6$ is colored with blue, then we place one endpoint of the chord as the first of $K_6^+$ and the other endpoint in $K_6^-$, considering in this case the partial ordering given for every row that has an L-block colored with blue in $\mathbb B_6$. The placement is analogous for those LR-rows that are an R-block.

Suppose now that the row that represents $v$ has an L-block $v_L$ and an R-block $v_R$. If $v_L$ is colored with red, then $v_R$ is colored with blue. We place one endpoint of the chord in $K_6^+$, considering the partial ordering given by every row that has an L-block colored with red in $\mathbb B_6$, and the other enpoint of the chord in $K_6^-$, considering the partial ordering given by every row that has an R-block colored with blue in $\mathbb B_6$. The placement is analogous if $v_L$ is colored with blue.

This gives a circle model for the given split graph $G$.

\end{proof}

\section{Split circle graphs containing an induced co-4-tent} \label{sec:circle4}


In this section we will address the last case of the proof of Theorem \ref{teo:circle_split_caract}, which is the case where $G$ contains an induced co-$4$-tent. This case is mostly similar to the $4$-tent case, with one particular difference: the co-$4$-tent is not a prime graph, and thus there is more than one possible circle model for this graph. 

This section is subdivided as follows. In Subsection \ref{subsec:co4tent1}, we define the matrices $\mathbb{C}_i$ for each $i=1, 2, \ldots, 8$ and prove some properties that will be useful further on. 
In Subsection \ref{subsec:co4tent2} we prove the necessity of the $2$-nestedness of each $\mathbb C_i$ for $G$ to be a circle graph and give the guidelines to draw a circle model for a split graph $G$ containing an induced co-$4$-tent in Theorem \ref{teo:finalteo_co4tent}.

\subsection{Matrices $\mathbb C_1,\mathbb C_2,\ldots,\mathbb C_8$}\label{subsec:co4tent1}

Let $G=(K,S)$ and $H$ as in Section \ref{sec:co4tent_partition}. For each $i\in\{1,2,\ldots,8\}$, let $\mathbb C_i$ be a $(0,1)$-matrix having one row for each vertex $s\in S$ such that $s$ belongs to $S_{ij}$ or $S_{ji}$ for some $j\in\{1,2,\ldots,8\}$ and one column for each vertex $k\in K_i$ and such that such that the entry corresponding to row $s$ and column $k$ is $1$ if and only if $s$ is adjacent to $k$ in $G$. 
For each $j\in\{1,2,\ldots,8\}-\{i\}$, we label those rows corresponding to vertices of $S_{ji}$ with L and those corresponding to vertices of $S_{ij}$ with R, with the exception of those rows in $\mathbb{C}_7$ that represent vertices in $S_{76]}$ and $S_{[86]}$which are labeled with LR.
Notice that we have considered those vertices that are complete to $K_1, \ldots, K_5$ and $K_8$ and are also adjacent to $K_6$ and $K_7$ divided into two distinct subsets. Thus, $S_{76}$ are those vertices that are not complete to $K_6$ and therefore the corresponding rows are labeled with R in $\mathbb C_6$ and with L in $\mathbb C_7$.
As in the $4$-tent case, there are LR-rows in $\mathbb C_7$. Moreover, there may be some empty LR-rows, which represent those independent vertices that are complete to $K_1, \ldots, K_6$ and $K_8$ and are anticomplete to $K_7$. These vertices are all pre-colored with the same color, and that color is assigned depending on whether $S_{74} \cup S_{75} \cup S_{76} \neq \emptyset$ or $S_{17} \cup S_{27} \neq\emptyset$. 

We color some of the remaining rows of $\mathbb C_i$ as we did in the previous sections, to denote in which portion of the circle model the chords have to be drawn. 
In order to characterize the forbidden induced subgraphs of $G$ and using an argument of symmetry, we will only analyse the properties of the matrices $\mathbb C_1$, $\mathbb C_2$, $\mathbb C_3$, $\mathbb C_6$ and $\mathbb C_7$, since the matrices $\mathbb C_i$ $i=4,5,8$ are symmetric to $\mathbb C_2$, $\mathbb C_3$ and $\mathbb C_6$, respectively.


We will consider 5 distinct cases, according to whether the subsets $K_6$, $K_7$ and $K_8$ are empty or not, for the matrices we need to define may be different in each case.

Using the symmetry of the subclasses $K_6$ and $K_8$, the cases we need to study are the following: (1) $K_6, K_7, K_8 \neq \emptyset$, (2) $K_6, K_7 \neq \emptyset$, $K_8 = \emptyset$, (3) $K_6, K_8 \neq \emptyset$, $K_7 = \emptyset$, (4) $K_6 \neq \emptyset$, $K_7, K_8 = \emptyset$, (5) $K_7 \neq \emptyset$, $K_6, K_8 = \emptyset$

In (1), the subsets are given as described in Table \ref{fig:tabla_co4tent_1}, and thus the matrices we need to analyse are as follows:
\[ \mathbb C_1 = \bordermatrix{ & K_1\cr
				S_{12}\ \textbf{L} & \cdots \cr
		         S_{11}\            & \cdots \cr
                S_{16]}\ \textbf{L} & \cdots \cr
                S_{17}\ \textbf{L} & \cdots }\
                \begin{matrix}
                \textcolor{red}{\bullet} \\ \\ \textcolor{blue}{\bullet} \\ \textcolor{blue}{\bullet} 
                \end{matrix} \qquad
   \mathbb C_2 = \bordermatrix{ & K_2\cr
				S_{12}\ \textbf{R} & \cdots \cr
                S_{22}\            & \cdots \cr
                S_{23}\ \textbf{L} & \cdots \cr
                S_{25]}\ \textbf{L} & \cdots \cr
                S_{26}\ \textbf{L} & \cdots }\
                \begin{matrix}
                \textcolor{red}{\bullet} \\ \\ \textcolor{blue}{\bullet} \\ \textcolor{blue}{\bullet} \\ \textcolor{blue}{\bullet} 
                \end{matrix} \qquad
   \mathbb C_3 = \bordermatrix{ & K_3\cr
                S_{13}\ \textbf{R} & \cdots \cr
                S_{34}\ \textbf{L} & \cdots \cr
                S_{33}\            & \cdots \cr
                S_{35}\ \textbf{L} & \cdots \cr
                S_{36}\ \textbf{L} & \cdots \cr
                S_{23}\ \textbf{R} & \cdots }\
                \begin{matrix}
                \textcolor{red}{\bullet} \\  \textcolor{red}{\bullet} \\ \\ \textcolor{blue}{\bullet} \\ \textcolor{blue}{\bullet} \\ \textcolor{blue}{\bullet} 
                \end{matrix}  \]
                
\[   \mathbb C_6 = \bordermatrix{ & K_6\cr
                S_{66}\            & \cdots \cr
                S_{26}\ \textbf{R} & \cdots \cr
                S_{36}\ \textbf{R} & \cdots \cr
                S_{[46}\ \textbf{R} & \cdots \cr
                S_{76}\ \textbf{R} & \cdots \cr
                S_{[86}\ \textbf{R} & \cdots }\
                \begin{matrix}
                \\ \\  \textcolor{blue}{\bullet} \\  \textcolor{blue}{\bullet} \\ \textcolor{blue}{\bullet} \\ \textcolor{red}{\bullet} \\ \textcolor{red}{\bullet} \\ \\ 
                \end{matrix}  \qquad
    \mathbb C_7 = \bordermatrix{ & K_7\cr
				S_{17}\ \textbf{R} & \cdots \cr
				S_{[27}\ \textbf{R} & \cdots \cr
                S_{77}\            & \cdots \cr
                S_{74]}\ \textbf{L} & \cdots \cr
                S_{75}\ \textbf{L} & \cdots \cr
                S_{76}\ \textbf{L} & \cdots \cr
                S_{87}\ \textbf{R} & \cdots \cr
                S_{[86]}\ \textbf{LR} & \cdots \cr
                S_{76]}\ \textbf{LR} & \cdots }\
                \begin{matrix}
                \textcolor{blue}{\bullet} \\  \textcolor{blue}{\bullet} \\ \\ \textcolor{blue}{\bullet} \\ \textcolor{blue}{\bullet} \\ \textcolor{blue}{\bullet}  \\ \textcolor{blue}{\bullet} \\ \\ \\
                \end{matrix}  \]

In (2), the matrices $\mathbb C_1$, $\mathbb C_2$ and $\mathbb C_3$ are analogous. The subclasses $S_{[15}$ and $S_{[16}$ may be nonempty and are analogous to the subclasses $S_{[85}$ and $S_{[86}$, respectively. Moreover, the vertices in $S_{[16]}$ are analogous to those vertices in $S_{[86]}$, which are represented as empty LR-rows in $\mathbb C_7$. 

\begin{figure}[h!]	 
\begin{center}
	\begin{tabular}{ c | c c c c c c c c} 
		 \hline
		 $i\setminus j$ & 1 & 2 & 3 & 4 & 5 & 6 & 7 \\ 
		  \hline
		 1 & \checkmark & \checkmark & \checkmark & \checkmark & \checkmark & \checkmark & \checkmark \\ 
		 2 & $\emptyset$ & \checkmark & \checkmark & $\emptyset$ & \checkmark & \checkmark & \checkmark  \\
 		 3 & $\emptyset$ & $\emptyset$ & \checkmark & \checkmark & \checkmark & \checkmark & $\emptyset$  \\
		 4 & $\emptyset$ & $\emptyset$ & $\emptyset$ & \checkmark & \checkmark & \checkmark & $\emptyset$ \\
		 5 & $\emptyset$ & $\emptyset$ & $\emptyset$ & $\emptyset$ & \checkmark & $\emptyset$ & $\emptyset$ \\
		 6 & $\emptyset$  & $\emptyset$  & $\emptyset$  & $\emptyset$  & $\emptyset$ & \checkmark & $\emptyset$ \\
		7 & $\emptyset$ & $\emptyset$ & $\emptyset$ & \checkmark & \checkmark & \checkmark & \checkmark \\
	\end{tabular}
\end{center} 
\caption{The nonempty subsets $S_{ij}$ in case (2) $K_6, K_7 \neq \emptyset$, $K_8 = \emptyset$.} \label{fig:tabla_co4tent_2}
\end{figure}

For its part, the matrices $\mathbb C_6$ and $\mathbb C_7$ are as follows:
                
\[   \mathbb C_6 = \bordermatrix{ & K_6\cr
                S_{66}\            & \cdots \cr
                S_{26}\ \textbf{R} & \cdots \cr
                S_{36}\ \textbf{R} & \cdots \cr
                S_{[46}\ \textbf{R} & \cdots \cr
                S_{76}\ \textbf{R} & \cdots \cr
                S_{[16}\ \textbf{R} & \cdots }\
                \begin{matrix}
                \\ \\  \textcolor{blue}{\bullet} \\  \textcolor{blue}{\bullet} \\ \textcolor{blue}{\bullet} \\ \textcolor{red}{\bullet} \\ \textcolor{red}{\bullet} \\ \\ 
                \end{matrix}  \qquad
    \mathbb C_7 = \bordermatrix{ & K_7\cr
				S_{17}\ \textbf{R} & \cdots \cr
				S_{[27}\ \textbf{R} & \cdots \cr
                S_{77}\            & \cdots \cr
                S_{74]}\ \textbf{L} & \cdots \cr
                S_{75}\ \textbf{L} & \cdots \cr
                S_{76}\ \textbf{L} & \cdots \cr
                S_{[16]}\ \textbf{LR} & \cdots \cr
                S_{76]}\ \textbf{LR} & \cdots }\
                \begin{matrix}
                \textcolor{blue}{\bullet} \\  \textcolor{blue}{\bullet} \\ \\ \textcolor{blue}{\bullet} \\ \textcolor{blue}{\bullet} \\ \textcolor{blue}{\bullet}  \\ \\ \\
                \end{matrix}  \]
                
Therefore this case can be considered as a particular case of case (1).

If instead we are in case (3), then the matrices $\mathbb C_2$ and $\mathbb{C}_3$ are analogous as in (1). In this case there are no LR-vertices in any of the matrices. 

\begin{figure}[h!]	 
\begin{center}
	\begin{tabular}{ c | c c c c c c c c} 
		 \hline
		 $i\setminus j$ & 1 & 2 & 3 & 4 & 5 & 6 & 8 \\ 
		  \hline
		 1 & \checkmark & \checkmark & \checkmark & \checkmark & $\emptyset$ & \checkmark & \checkmark \\ 
		 2 & $\emptyset$ & \checkmark & \checkmark & $\emptyset$ & \checkmark & \checkmark & $\emptyset$ \\
 		 3 & $\emptyset$ & $\emptyset$ & \checkmark & \checkmark & \checkmark & \checkmark & $\emptyset$  \\
		 4 & $\emptyset$ & $\emptyset$ & $\emptyset$ & \checkmark & \checkmark & \checkmark & $\emptyset$ \\
		 5 & $\emptyset$ & $\emptyset$ & $\emptyset$ & $\emptyset$ & \checkmark & $\emptyset$ & $\emptyset$ \\
		 6 & $\emptyset$  & $\emptyset$  & $\emptyset$  & $\emptyset$  & $\emptyset$ & \checkmark & $\emptyset$ \\
		8 & $\emptyset$ & \checkmark & \checkmark & \checkmark & \checkmark & \checkmark & \checkmark \\
	\end{tabular}
\end{center} 
\caption{The nonempty subsets $S_{ij}$ in case (3) $K_6, K_8 \neq \emptyset$, $K_7 = \emptyset$.} \label{fig:tabla_co4tent_3}
\end{figure}
For its part, the matrices $\mathbb C_1$ and $\mathbb C_6$ are as follows:

\[  \mathbb C_1 = \bordermatrix{ & K_1\cr
				S_{12}\ \textbf{L} & \cdots \cr
		         S_{11}\            & \cdots \cr
                S_{16]}\ \textbf{L} & \cdots }\
                \begin{matrix}
                \textcolor{red}{\bullet} \\ \\ \textcolor{blue}{\bullet} \\ 
                \end{matrix} \qquad
     \mathbb C_6 = \bordermatrix{ & K_6\cr
                S_{66}\            & \cdots \cr
                S_{26}\ \textbf{R} & \cdots \cr
                S_{36}\ \textbf{R} & \cdots \cr
                S_{46}\ \textbf{R} & \cdots \cr
                S_{[86}\ \textbf{R} & \cdots }\
                \begin{matrix}
                \\ \\  \textcolor{blue}{\bullet} \\  \textcolor{blue}{\bullet} \\ \textcolor{blue}{\bullet} \\ \textcolor{red}{\bullet} \\ \\ 
                \end{matrix}  \qquad \]

In case (4), the matrices $\mathbb{C}_1$, $\mathbb C_2$ and $\mathbb C_3$ are analogous as in case (3). There is no matrix $\mathbb C_7$ and thus there are no LR-vertices. Notice that the subset $S_{15}$ contains only vertices that are complete to $K_1$ and thus $S_{15} = S_{[15}$. Furthermore, this subset is equivalent to $S_{[85}$ in case (1).
Moreover, in this case, the vertices in $S_{[16}$ in $\mathbb C_6$ are analogous as those vertices in $S_{[86}$ and thus the matrix $\mathbb C_6$ results analogous as in case (3).
Also notice that those vertices in $S_{[16]}$ can be placed all having one endpoint in the arc $s_{13} s_{35}$ and the other in $k_1 k_3$. It follows that $S_{54} = S_{[54]}$, thus these vertices are complete to $K$ and hence $S_{[54]}= S_{[16]}$. Moreover, those vertices in $S_{65}$ are complete to $K_5$ and thus we can consider $S_{65} = \emptyset$ and $S_{65} = S_{[16}$.

\begin{figure}[h!]	 
\begin{center}
	\begin{tabular}{ c | c c c c c c c c} 
		 \hline
		 $i\setminus j$ & 1 & 2 & 3 & 4 & 5 & 6 \\ 
		  \hline
		 1 & \checkmark & \checkmark & \checkmark & \checkmark & \checkmark & \checkmark \\ 
		 2 & $\emptyset$ & \checkmark & \checkmark & $\emptyset$ & \checkmark & \checkmark \\
 		 3 & $\emptyset$ & $\emptyset$ & \checkmark & \checkmark & \checkmark & \checkmark \\
		 4 & $\emptyset$ & $\emptyset$ & $\emptyset$ & \checkmark & \checkmark & \checkmark \\
		 5 & $\emptyset$ & $\emptyset$ & $\emptyset$ & $\emptyset$ & \checkmark & $\emptyset$ \\
		 6 & $\emptyset$  & $\emptyset$  & $\emptyset$  & $\emptyset$  & $\emptyset$ & \checkmark \\
	\end{tabular}
\end{center} 
\caption{The nonempty subsets $S_{ij}$ in case (4) $K_6 \neq \emptyset$, $K_7, K_8 = \emptyset$.} \label{fig:tabla_co4tent_4}
\end{figure}

Finally, let us consider case (5). When considering those vertices in $S_{54}$, it follows easily that $S_{54} = S_{54]}$ and thus these vertex subset is equivalent to those vertices in $S_{75}$ (in case (1) ) that are complete to $K_7$. Hence, we consider these vertices as in $S_{75}$ and $S_{54} = \emptyset$.
The subset $S_{15}$ of vertices of $S$ is split in three distinct subsets: $S_{15]}$, $S_{[15}$ and $S_{[15]}$. The rows representing vertices in $S_{15]}$ are pre-colored with blue and labeled with L, only in $\mathbb C_1$, and are equivalent to those vertices in $S_{16]}$ in case (1). For their part, the rows that represent $S_{[15}$ are pre-colored with red and labeled with R, and they appear only in $\mathbb C_5$. These rows are equivalent to those in $S_{[85}$ in case (1). Finally, the vertices in $S_{[15]}$ are represented by uncolored empty LR-rows in $\mathbb C_7$, resulting equivalent to those vertices in $S_{[86]}$ in case (1).

\begin{figure}[h!]	 
\begin{center}
	\begin{tabular}{ c | c c c c c c c c} 
		 \hline
		 $i\setminus j$ & 1 & 2 & 3 & 4 & 5 & 7 \\ 
		  \hline
		 1 & \checkmark & \checkmark & \checkmark & \checkmark & \checkmark & \checkmark \\ 
		 2 & $\emptyset$ & \checkmark & \checkmark & $\emptyset$ & \checkmark & \checkmark \\
 		 3 & $\emptyset$ & $\emptyset$ & \checkmark & \checkmark & \checkmark & $\emptyset$ \\
		 4 & $\emptyset$ & $\emptyset$ & $\emptyset$ & \checkmark & \checkmark & $\emptyset$ \\
		 5 & $\emptyset$ & $\emptyset$ & $\emptyset$ & $\emptyset$ & \checkmark & $\emptyset$ \\
		 7 & $\emptyset$  & $\emptyset$ & $\emptyset$  & \checkmark  & \checkmark & \checkmark \\
	\end{tabular}
\end{center} 
\caption{The nonempty subsets $S_{ij}$ in case (5) $K_7 \neq \emptyset$, $K_6, K_8 = \emptyset$.} \label{fig:tabla_co4tent_5}
\end{figure}

Therefore, it suffices to see what happens if $K_6, K_7, K_8 \neq \emptyset$, since the matrices defined in the cases (2) to (5) have the same rows or less that each of the corresponding matrices $\mathbb{C}_1, \ldots, \mathbb{C}_8$ defined for case (1). In other words, the case $K_6, K_7, K_8 \neq \emptyset$ is the most general of all. 

\vspace{2mm}
Let us suppose that $K_6, K_7, K_8 \neq \emptyset$.
The Claims in Chapter \ref{chapter:partitions} and the following prime circle model allow us to assume that some subsets of $S$ are empty.

\begin{figure}[h!]
\centering
\includegraphics[scale=1]{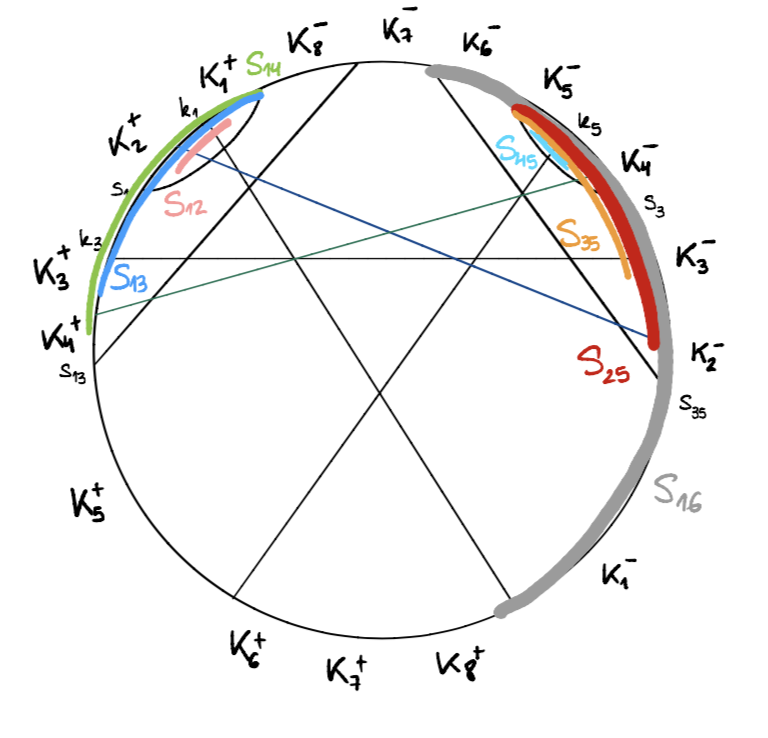}
	  \caption{A circle model for the co-$4$-tent graph.} \label{fig:co4tent_model}    
\end{figure}

We denote $S_{87}$ to the set of vertices in $S$ that are complete to $K_1, \ldots, K_6$, are adjacent to $K_7$ and $K_8$ but are not complete to $K_8$, and analogously $S_{76}$ is the set of vertices in $S$ that are complete to $K_1, \ldots, K_5, K_8$, are adjacent to $K_6$ and $K_7$ but are not complete to $K_6$. Hence, $S_{76]}$ denotes the vertices of $S$ that are complete to $K_1, \ldots, K_6, K_8$ and are adjacent to $K_7$.

\begin{remark}
	Claim \ref{claim:tent_0} remains true if $G$ contains an induced co-$4$-tent.
	The proof is analogous as in the tent case.
\end{remark}

\subsection{Split circle equivalence} \label{subsec:co4tent2}

In this subsection, we will show results analogous to Lemmas \ref{lema:equiv_circle_2nested_4tent_sinLR} and \ref{lema:B6_2nested_4tent}.

\begin{lema} \label{lema:equiv_circle_2nested_co4tent}
	If $\mathbb C_1$, $\mathbb C_2, \ldots, \mathbb C_8$ are not $2$-nested, then $G$ contains one of the forbidden subgraphs in $\mathcal{T}$ or $\mathcal{F}$. 
\end{lema}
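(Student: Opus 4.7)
The plan is to mirror the case analysis carried out in Lemmas~\ref{lema:equiv_circle_2nested_4tent_sinLR} and~\ref{lema:B6_2nested_4tent}, exploiting the structural analogy between the co-$4$-tent situation and the $4$-tent situation. First I would reduce the work by symmetry: the roles of $K_2$ and $K_4$, of $K_3$ and $K_5$, and of $K_6$ and $K_8$ are exchanged by the natural reflection of the co-$4$-tent, so it suffices to treat $\mathbb C_1$, $\mathbb C_2$, $\mathbb C_3$, $\mathbb C_6$ and $\mathbb C_7$. Then, using the five-case split described in Section~\ref{subsec:co4tent1} (which of $K_6,K_7,K_8$ are empty), I would argue that all four of the reduced scenarios embed into the fully generic case $K_6,K_7,K_8\neq\emptyset$, so that only this generic case needs to be handled in detail. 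Throughout I would assume in each step that the previously analysed matrices are already $2$-nested, so that Claim~\ref{claim:tent_0} applies and the neighbourhoods inside each $K_i$ behave nicely.

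For the LR-free matrices $\mathbb C_1$, $\mathbb C_2$, $\mathbb C_3$ and $\mathbb C_6$, I would follow the template of Lemma~\ref{lema:equiv_circle_2nested_4tent_sinLR} and walk through the four characterization layers of Theorem~\ref{teo:2-nested_caract_bymatrices}: admissibility (forbidden configurations $D_0$, $D_1$, $D_2$, $S_2(k)$, $S_3(k)$), LR-orderability (Tucker matrices and the $M'_j/M''_j$ family on $A^*_\tagg$), partial $2$-nestedness (monochromatic gems between labelled rows, which are automatically ruled out by admissibility since all coloured rows in these matrices are L- or R-rows), and full $2$-nestedness (monochromatic gems involving an unlabelled row). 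For each forbidden subconfiguration I would exhibit an explicit induced subgraph of $G$ drawn from $\mathcal{T}\cup\mathcal{F}$, using the same recipe as in the $4$-tent proof: a pair of vertices of $S$ giving the rows, the corresponding vertices of $K_i$, and one or two auxiliary vertices of the co-$4$-tent $H$ (chosen among $s_{13}$, $s_{35}$, $s_1$, $s_5$ and vertices of the adjacent partitions $K_{i\pm 1}, K_{i\pm 2}$) to complete the forbidden graph; which Tucker/forbidden family it lands in is dictated by the parity of $k$ and the colouring pattern, exactly as in Lemma~\ref{lema:equiv_circle_2nested_4tent_sinLR}.

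For the matrix $\mathbb C_7$, which plays here the role that $\mathbb B_6$ played in the $4$-tent case (it is the only matrix that carries LR-rows, coming from $S_{[86]}$ and $S_{76]}$), I would transplant the proof of Lemma~\ref{lema:B6_2nested_4tent} almost verbatim. Before entering the case analysis I would first record the analogue of Lemma~\ref{lema:4tent_coloreoLRvacias}, namely that if $S_{[86]}\neq\emptyset$ and there are simultaneously two LR-compatible labelled rows of $\mathbb C_7$ with incompatible pre-colourings, then $G$ contains $F_1(5)$ or a $4$-sun (with the roles of $s_{12},s_{24},s_{45}$ replaced by the co-$4$-tent vertices $s_1,s_{13},s_{35},s_5$). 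Then I would run through the admissibility configurations $D_0,\ldots,D_{13}$, the $S_j$ family and the $P_j(k,l)$ family; then LR-orderability through the Tucker matrices (with particular attention to $M_I(j)$, $M_{II}(j)$, $M_{III}(j)$, $M_{IV}$, $M_V$), noting that the number of LR-rows in any such submatrix is bounded by admissibility, which drastically reduces the subcases; then partial $2$-nestedness (which is automatic here exactly as for $\mathbb B_6$); and finally the full $2$-nested step, splitting into the three obstructions of Corollary~\ref{lema:B_ext_2-nested}: an LR-row whose two blocks are forced to the same colour, a monochromatic weak gem involving an LR-row, or a badly-coloured doubly-weak gem. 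In each subcase the desired induced subgraph of $G$ will be extracted by combining the blocking rows/columns of $\mathbb C_7$ with a single vertex $k_\ell\in K_\ell$ ($\ell\neq 7$) chosen to separate the relevant labelled rows, exactly as the vertex $k_\ell$ was used in the $\mathbb B_6$ analysis.

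The main obstacle I anticipate is the $\mathbb C_7$ step: it has substantially more pre-coloured labelled rows than $\mathbb B_6$ (five on the red side and five on the blue side once one counts $S_{36},S_{46},S_{76}$ and their mirrors, plus the LR-rows coming from $S_{[86]}$ and $S_{76]}$), so the bookkeeping required to rule out all monochromatic weak gems and badly-coloured doubly-weak gems multiplies. A secondary subtlety, absent in the $4$-tent case, is that the co-$4$-tent itself is not prime (Remark~\ref{obs:co4tent_1}); one must therefore assume $K_2,K_4\neq\emptyset$ when extracting auxiliary vertices, and verify at each use that the chosen $k_\ell$ actually exists. I expect that once these bookkeeping points are pinned down, every forbidden subconfiguration will translate into an explicit graph in $\mathcal{T}\cup\mathcal{F}$ (a tent$+K_1$, net$+K_1$, $4$-tent$+K_1$, co-$4$-tent$+K_1$, an odd sun, an even sun with or without centre, or a member of the $F$-family), completing the proof.
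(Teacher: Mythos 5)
Your plan follows the paper's proof essentially step for step: the same symmetry reduction to $\mathbb C_1,\mathbb C_2,\mathbb C_3,\mathbb C_6,\mathbb C_7$, the same reduction of the five emptiness scenarios to the generic case $K_6,K_7,K_8\neq\emptyset$, the same four-layer walk through admissibility, LR-orderability, partial $2$-nestedness and full $2$-nestedness, and the same transplantation of the $\mathbb B_6$ argument to $\mathbb C_7$. One small correction to your anticipated obstacle: every pre-coloured labelled row of $\mathbb C_7$ is blue (the red-coloured rows you cite, such as those for $S_{36}$ and $S_{46}$, belong to $\mathbb C_6$), and the paper exploits exactly this monochromaticity to prune the admissibility obstructions for $\mathbb C_7$ down to $D_0$, $D_1$, $D_6$, $D_7$ and the even-indexed members of the $\mathcal S$-family, so that step is in fact lighter, not heavier, than the $\mathbb B_6$ analysis.
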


\begin{proof}
	Using the argument of symmetry, we will prove this for the matrices $\mathbb C_1$, $\mathbb C_2$, $\mathbb C_3$, $\mathbb C_6$ and $\mathbb C_7$.
	
\begin{mycases}
Let us suppose that one of the matrices $\mathbb C_i$ is not $2$-nested. By Lemma \ref{lema:2-nested_if}, suppose that $\mathbb C_i$ is not partially $2$-nested. 
The structure of the proof is analogous as in Lemmas \ref{lema:equiv_circle_2nested_tent}, \ref{lema:equiv_circle_2nested_4tent_sinLR} and \ref{lema:B6_2nested_4tent}, and as in those lemmas we notice that, if $G$ is $\{ \mathcal{T}, \mathcal{F} \}$-free, then in particular, for each $i=1, \ldots, 8$, $\mathbb C_i$ contains no $M_0$, $M_{II}(4)$, $M_V$ or $S_0(k)$ for every even $k \geq 4$ since these matrices are the adjacency matrices of non-circle graphs.

	\case Suppose that one of the matrices $\mathbb C_i$ is not admissible, for some $i=1,2,3,6,7$.
	
	\subcase Suppose first that $\mathbb C_1$ is not admissible.
	Hence, since $\mathbb C_1$ has no uncolored labeled rows, or any rows labeled with R or LR, then $\mathbb C_1$ contains either $D_0$ or $S_2(k)$.
	Suppose that $\mathbb C_1$ contains $D_0$. Let $v_1$ and $v_2$ in $S_{12}$, $k_{11}$ and $k_{12}$ in $K_1$ such that $k_{1i}$ is adjacent to $v_i$ and nonadjacent to $v_{i+1}$ (mod 2), for $i=1,2$.

Notice that, if $v_1$ and $v_2$ have empty intersection in $K_2$, then we find a $4$-tent induced by $\{ k_{21}$, $k_{11}$, $k_{12}$, $k_{22}$, $v_1$, $v_2$, $ s_{35}\}$. The same holds for any two vertices $v_1$ and $v_2$ in $S_{12} \cup S_{16} \cup S_{17}$ (considering $s_1$ instead of $s_{35}$), hence we may assume that there is a vertex $k_i$ in $K_i$ --for $i=2,6,7$ as appropriate-- adjacent to both $v_1$ and $v_2$.

Thus, if both $v_1$ and $v_2$ lie in $S_{12}$, then we find a net${}\vee{}K_1$ induced by $\{ k_{11}$, $k_{12}$, $k_{2}$, $k_{3}$, $v_1$, $v_2$, $ s_{35}\}$. If $v_1$ and $v_2$ both lie in or $S_{16]}$ or $S_{17}$, then we find a net${}\vee{}K_1$ induced by $\{ k_{11}$, $k_{12}$, $k_{2}$, $k_{8}$, $v_1$, $v_2$, $ s_{35}\}$ (since $K_8 \neq \emptyset$, however the same holds using any vertex in $K_7$ nonadjacent to both $v_1$ and $v_2$). 
If $v_1$ in $S_{12}$ and $v_2$ in $S_{16]}$, then we find $M_{II}(4)$ induced by $\{ k_{11}$, $k_{2}$, $k_{5}$, $k_{12}$, $v_1$, $v_2$, $s_{13}$, $ s_{35}\}$. If $v_2$ in $S_{17}$ is analogous changing $k_5$ by $k_6$.

Suppose there is $S_2(j)$ as a subconfiguration of $\mathbb C_1$, and suppose $j$ is even, thus $v_1$ and $v_j$ lie both in $S_{12}$ or both in $S_{16} \cup S_{17}$.
If both lie in $S_{12}$, then we find $F_2(j+1)$ induced by $\{ k_{11}, \ldots, k_{1(j-1)}$, $k_{2}$, $k_{3}$, $v_1, \ldots, v_j$, $ s_{35}\}$. 
If instead both lie in $S_{16} \cup S_{17}$, then we find $F_1(j+1)$ induced by $\{ k_{11}, \ldots, k_{1(j-1)}$, $k_{3}$, $v_1, \ldots, v_j$, $ s_{1}\}$.
Suppose $j$ is odd, then $v_1$ in $S_{12}$ and $v_2$ in $S_{16} \cup S_{17}$, or viceversa.
In the first case, we find $F_1(j+2)$ induced by $\{ k_{11}, \ldots, k_{1(j-1)}$, $k_{2}$, $k_{3}$, $v_1, \ldots, v_j$, $s_1$, $s_{35}\}$. In the second case, we find $F_2(j)$ induced by $\{ k_{11}, \ldots, k_{1(j-1)}$, $k_{3}$, $v_1, \ldots, v_j \}$, therefore $\mathbb C_1$ is admissible.

\subcase Suppose $\mathbb C_2$ is not admissible. Since $\mathbb C_2$ has no uncolored labeled rows, or LR rows, or blue rows labeled with R, or red rows labeled with L, then $\mathbb C_2$ contains either $D_0$, $D_2$, $S_2(j)$ for some $j$ even or $S_3(j)$ for some $j$ odd.
Suppose there is $D_0$. Let $v_1$ and $v_2$ be the rows of $D_0$, and $k_{21}$ and $k_{22}$ in $K_2$ such that $v_i$ is adjacent to $k_{2i}$ and is nonadjacent to $k_{2(i+1)}$ (mod 2) for $i=1,2$. If $v_1$ and $v_2$ lie in $S_{12}$, then we know by the previous case that there is a vertex $k_1$ in $K_1$ adjacent to both. However, in this case we find a tent induced by $\{ k_{21}$, $k_{1}$, $k_{22}$, $v_1$, $v_2$, $s_{35}\}$. The same holds if $v_1$ and $v_2$ lie in $S_{23} \cup S_{25} \cup S_{26}$, changing $k_1$ by $k_3$ and $s_{35}$ by $s_1$, thus there is no $D_0$.

Suppose there is $D_2$, let $v_1$ and $v_2$ be the rows of $D_2$, one is labeled with L and the other is labeled with R. Thus, $v_1$ in $S_{12}$ and $v_2$ in $S_{23} \cup S_{25} \cup S_{26}$, or viceversa. Let $k_{21}$ and $k_{22}$ in $K_2$ such that $k_{21}$ is adjacent to both $v_1$ and $v_2$ and $k_{22}$ is nonadjacent to $v_1$ and $v_2$. Then, we find $M_{II}(4)$ induced by $\{ k_{1}$, $k_{21}$, $k_{3}$, $k_{22}$, $v_1$, $v_2$, $s_1$, $s_{35}\}$, and thus there is no $D_2$.

Suppose there is $S_2(j)$ for some even $j$. If $v_1$ and $v_j$ lie in $S_{12}$, then we find $F_2(j+1)$ induced by $\{ k_{21}, \ldots, k_{2(j-1)}$, $k_{1}$, $v_1, \ldots, v_j$, $ s_{35}\}$. If instead $v_1$ and $v_j$ lie in $S_{23} \cup S_{25} \cup S_{26}$, then we also find $F_2(j+1)$ induced by $\{ k_{21}, \ldots, k_{2(j-1)}$, $k_{3}$, $v_1, \ldots, v_j$, $ s_{1}\}$, and hence there is no $S_2(j)$.

Suppose there is $S_3(j)$ for some odd $j$. Thus, $v_1$ in $S_{12}$ and $v_2$ in $S_{23} \cup S_{25} \cup S_{26}$, or viceversa. In that case, we find $F_2(j+2)$ induced by $\{ k_{21}, \ldots, k_{2(j-1)}$, $k_{1}$, $k_{3}$, $v_1, \ldots, v_j$, $s_1$, $s_{35}\}$, and therefore $\mathbb C_2$ is admissible. 

\subcase Suppose $\mathbb C_3$ is not admissible. Since there are no LR-rows, or uncolored labeled rows, then there is either $D_0$, $D_1$, $D_2$, $S_2(j)$ or $S_3(j)$.

Suppose there is $D_0$, let $v_1$ and $v_2$ be the rows of $D_0$ and $k_{31}$ and $k_{32}$ in $K_3$ the columns of $D_0$.
The vertices $v_1$ and $v_2$ lie in $S_{13}$, $S_{34}$, $S_{35}$, $S_{36}$ or $S_{23}$.
First notice that, in either case, if the intersection is empty in $K_1$ (resp.\ $K_i$ for $i=2,3,4,5$), then we find a $4$-tent induced by $\{ k_{11}$, $k_{31}$, $k_{32}$, $k_{12}$, $v_1$, $v_2$, $s_{35} \}$ (resp.\ $s_1$, $s_{13}$, $s_5$).

If $v_1$ and $v_2$ both lie in $S_{13}$, then we find a tent induced by $\{ k_1$, $k_{31}$, $k_{32}$, $v_1$, $v_2$, $s_{35} \}$. The same holds if both lie in $S_{35}$ or $S_{36}$.
If $v_1$ and $v_2$ lie in $S_{34}$, then we find net${}\vee{}K_1$ induced by $\{ k_{31}$, $k_{4}$, $k_{32}$, $k_{5}$, $v_1$, $v_2$, $s_{5}\}$. The same holds by symmetry if both lie in $S_{23}$. 
If $v_1$ in $S_{13}$ and $v_2$ in $S_{23}$, then we find $M_{II}(4)$ induced by $\{ k_{1}$, $k_{2}$, $k_{31}$, $k_{32}$, $v_1$, $v_2$, $s_1$, $s_{35}\}$. The same holds if $v_1$ in $S_{35} \cup S_{36}$ and $v_2$ in $S_{23}$, therefore there is no $D_0$.

Suppose there is $D_1$, let $v_1$ and $v_2$ be the rows of $D_1$ and $k_{3}$ in $K_3$ be the non-tag column of $D_1$.
Suppose that $v_1$ in $S_{13}$ and $v_2$ in $S_{34}$. Then, we find $F_1(5)$ induced by $\{ k_{1}$, $k_{3}$, $k_{4}$, $k_{5}$, $v_1$, $v_2$, $s_5$, $s_{13}$, $s_{35}\}$. The same holds by symmetry if $v_1$ in $S_{35} \cup S_{36}$ and $v_2$ in $S_{23}$, thus there is no $D_1$.
	
Suppose there is $D_2$, let $v_1$ and $v_2$ be the rows of $D_2$, and $k_{31}$ and $k_{32}$ in $K_3$ be the columns of $D_2$. If $v_1$ in $S_{13}$ and $v_2$ in $S_{35} \cup S_{36}$, then we find $M_{II}(4)$ induced by $\{ k_{1}$, $k_{5}$, $k_{31}$, $k_{32}$, $v_1$, $v_2$, $s_{13}$, $s_{35}\}$. The other case is analogous, therefore there is no $D_2$.

Suppose there is $S_2(j)$ with $j$ even. If $v_1$ and $v_j$ in $S_{13}$, then we find $F_1(j+1)$ induced by $\{ k_{31}, \ldots, k_{3(j-1)}$, $k_{1}$, $v_1, \ldots, v_j$, $s_{35}\}$. If instead $v_1$ and $v_j$ in $S_{34}$, then we find $F_1(j)$ induced by $\{ k_{31}, \ldots, k_{3(j-1)}$, $k_{4}$, $k_{5}$, $v_1, \ldots, v_j \}$. It is analogouos by symmetry if $v_1$ and $v_j$ are colored with blue, thus there is no $S_2(j)$ with $j$ even, hence suppose $j$ is odd.
If $v_1$ in $S_{13}$ and $v_j$ in $S_{23}$, then we find $F_2(j)$ induced by $\{ k_{31}, \ldots, k_{3(j-1)}$, $k_{1}$, $v_1, \ldots, v_j \}$. If instead $v_1$ in $S_{23}$ and $S_{13}$, then we find $F_1(j+2)$ induced by $\{ k_{31}, \ldots, k_{3(j-1)}$, $k_{1}$, $k_{2}$, $v_1, \ldots, v_j$, $s_1$, $s_{35}\}$. It is analgous for the other cases.

Suppose there is $S_3(j)$. If $j$ is even, then $v_1$ in $S_{13}$ and $v_j$ in $S_{34}$, or the analogous blue labeled rows. However, in that case we find $F_1(j+3)$ induced by $\{ k_{31}, \ldots, k_{3(j-1)}$, $k_{1}$, $k_{4}$, $k_5$, $v_1, \ldots, v_j$, $s_1$, $s_5$, $s_{35}\}$.
If instead $j$ is odd, then $v_1$ in $S_{13}$ and $S_{35} \cup S_{36}$ or the analogous labeled rows. In that case, we find $F_1(j+2)$ induced by $\{ k_{31}, \ldots, k_{3(j-1)}$, $k_{1}$, $k_{5}$, $v_1, \ldots, v_j$, $s_{13}$, $s_{35}\}$, therefore $\mathbb C_3$ is admissible.

\subcase Suppose $\mathbb C_6$ is not admissible. Since there are no LR-rows or uncolored labeled rows, or rows labeled with L, then there is either $D_0$ or $S_2(j)$.
Suppose there is $D_0$, let $v_1$ and $v_2$ be the rows of $D_0$ and $k_{61}$ and $k_{62}$ in $K_6$ be the columns of $D_0$.
If $v_1$ and $v_2$ lie in $S_{26} \cup S_{36} \cup S_{46}$, then we find a net${}\vee{}K_1$ induced by $\{ k_1$, $k_4$, $k_{61}$, $k_{62}$, $v_1$, $v_2$, $s_{13} \}$.
Once more, if the intersection in $K_4$ is empty, then we find a $4$-tent induced by $\{ k_{41}$, $k_{61}$, $k_{62}$, $k_{42}$, $v_1$, $v_2$, $s_{13} \}$.

If $v_1$ and $v_2$ in $S_{76} \cup S_{[86}$, then we find a tent induced by $\{ k_1$, $k_{61}$, $k_{62}$, $v_1$, $v_2$, $s_{35} \}$.
If instead $v_1$ in $S_{26} \cup S_{36} \cup S_{46}$ and $v_2$ in $S_{76} \cup S_{[86}$, then we find $M_{II}(4)$ induced by $\{ k_{61}$, $k_{62}$, $k_{1}$, $k_{4}$, $v_1$, $v_2$, $s_{13}$, $s_{35}\}$, thus there is no $D_0$.

Suppose there is $S_2(j)$. If $j$ is even, then $v_1$ and $v_j$ lie in $S_{26} \cup S_{36} \cup S_{46}$. In that case, we find $F_2(j+1)$ induced by $\{ k_{61}, \ldots, k_{6(j-1)}$, $k_{1}$, $k_{4}$, $v_1, \ldots, v_j$, $s_{13}\}$.
If instead $j$ is odd, then $v_1$ in $S_{26} \cup S_{36} \cup S_{46}$ and $v_2$ in $S_{76} \cup S_{[86}$, or viceversa. In the first case, we find $F_2(j+1)$ induced by $\{ k_{61}, \ldots, k_{6(j-1)}$, $k_{1}$, $k_{4}$, $v_1, \ldots, v_j$, $s_{13}$, $s_{35}\}$. In the second case, we find $F_2(j)$ induced by $\{ k_{61}, \ldots, k_{6(j-1)}$, $k_{1}$, $v_1, \ldots, v_j \}$, and therefore $\mathbb C_6$ is admissible.

\subcase Finally, suppose $\mathbb C_7$ is not admissible. Notice that, if there is $D_8$, then we find a tent, and if there is $D_9$, then we find $F_0$. Since there are no red labeled rows, then there is either $D_0$, $D_1$, $D_6$, $D_7$, $S_1(j)$, $S_2(j)$ with even $j$, $S_3(j)$ with even $j$, $S_4(j)$ with even $j$, $S_5(j)$ with even $j$, $S_6(j)$ or $S_7(j)$. 

Suppose there is $D_0$, let $v_1$ and $v_2$ be the rows, and $k_{71}$, $k_{72}$ in $K_7$ be the columns of $D_0$.
If $v_1$ and $v_2$ lie in $S_{[74} \cup S_{75} \cup S_{76}$, then we find a net${}\vee{}K_1$ induced by $\{ k_{71}$, $k_{72}$, $k_{4}$, $k_{6}$, $v_1$, $v_2$, $s_{35}\}$. If instead $v_1$ and $v_2$ lie in $S_{17} \cup S_{[27} \cup S_{87}$, since $v_1$ and $v_2$ are not complete to $K_8$, then there is either a $4$-tent (if there is no vertex in $K_8$ adjacent to both, induced by $\{ k_{71}$, $k_{81}$, $k_{82}$, $k_{72}$, $v_1$, $v_2$, $s_{13} \}$), or a net${}\vee{}K_1$ induced by $\{ k_{71}$, $k_{72}$, $k_{8}$, $k_{2}$, $v_1$, $v_2$, $s_{13} \}$, therefore there is no $D_0$.

Suppose there is $D_1$, let $v_1$ and $v_2$ be the rows, and $k_7$ in $K_7$ be the non-tag column. Let $v_1$ in $S_{74]} \cup S_{75} \cup S_{76}$ (notice that $v_1$ is complete to $K_8$ and is not complete to $K_6$) and $v_2$ in $S_{17} \cup S_{[27} \cup S_{87}$ (is not complete to $K_8$ and is complete to $K_6$). Thus, we find $M_{II}(4)$ induced by $\{ k_{8}$, $k_{3}$, $k_{6}$, $k_{7}$, $v_1$, $v_2$, $s_{13}$, $s_{35}\}$, hence there is no $D_1$.
Suppose there is $D_6$, let $v_1$, $v_2$ and $v_3$ be the rows where $v_3$ is an LR-row, and $k_{71}$ and $k_{72}$ in $K_7$ be the columns of $D_6$. In that case, $v_1$ lies in $S_{74]} \cup S_{75} \cup S_{76}$, $v_2$ in $S_{17} \cup S_{[27} \cup S_{87}$ and $v_3$ in $S_{76]}$, hence we find a $4$-tent induced by $\{ k_{71}$, $k_{8}$, $k_{6}$, $k_{72}$, $v_1$, $v_2$, $v_{3}\}$, therefore there is no $D_6$.
Suppose there is $D_7$, let $v_1$ be any row labeled with either L or R, and $v_2$ and $v_3$ LR-rows in $S_{76]}$. In either case, there is a vertex $k_i$ in $K_i$ with $i \neq 7$ such that $v_1$ is adjacent to $k_i$, and hence we find a net${}\vee{}K_1$ induced by $\{ k_{71}$, $k_{72}$, $k_{73}$, $k_{i}$, $v_1$, $v_2$, $v_{3} \}$, thus there is also no $D_7$.

Suppose there is $S_1(j)$, and suppose that $j$ is even. Since $v_1$ and $v_j$ correspond to rows labeled with either L or R, in either case $v_1$ and $v_j$ are complete to $K_4$. Hence, we find an odd $(j-1)$-sun with center induced by $\{ k_{71}, \ldots, k_{7(j-2)}$, $k_{4}$, $v_1, \ldots, v_j \}$. Moreover, if $j$ is odd, then we find a $(j-1)$-sun induced by the same subset.

Suppose there is $S_2(j)$ where $j$ is even. If $v_1$ and $v_j$ are labeled with L, then they are both complete to $K_6$ and $K_5$. Analogously, if they are labeled with R, then they are both complete to $K_8$ and $K_2$. In the first case, we find $F_2(j+1)$ induced by $\{ k_{71}, \ldots, k_{7(j-1)}$, $k_{5}$, $k_{6}$, $v_1, \ldots, v_j$, $s_{35}\}$. It is analogous if they are labeled with R.

Suppose there is $S_3(j)$ where $j$ is even. However, we find a $j$-sun induced by $\{ k_{71}$, $\ldots$, $k_{7(j-1)}$, $k_{5}$, $v_1, \ldots, v_j \}$ and thus it is not possible.

If there is $S_4(j)$ with even $j$, then we find a $j-1$-sun with center induced by $\{ k_{71}$, $\ldots$, $k_{7(j-2)}$, $k_{5}$, $v_1, \ldots, v_j \}$. 

If instead there is $S_5(j)$ with $j$ even, then we find $F_2(j+1)$ induced by $\{ k_{71}, \ldots, k_{7(j-1)}$, $k_{6}$, $k_{4}$, $v_1, \ldots, v_j$, $s_{35}\}$ if $v_1$ and $v_j$ lie in $S_{74} \cup S_{75} \cup S_{76}$. It is analogous if $v_1$ and $v_j$ lie in $S_{27} \cup S_{17} \cup S_{87}$ using $k_8$, $k_2$ and $s_{13}$.

Finally, if there is $S_6(j)$, then we find $M_{II}(j)$, and if there is $S_7(j)$ then we find a $j$-sun if $j$ is even, and a $j$-sun with center if $j$ is odd.

Therefore $\mathbb C_i$ is admissible for every $i=1,2,3,6,7$.

\case Let $C= \mathbb C_i$ and suppose that $C$ is not LR-orderable, then $C^*_\tagg$ contains either a Tucker matrix or $M_4'$, $M_4''$, $M_5'$, $M_5''$, $M'_2(k)$, $M''_2(k)$, $M_3'(k)$, $M_3''(k)$, $M_3'''(k)$ for some $k \geq 4$ (see Figure \ref{fig:forb_LR-orderable_tags}). 

The proof of this case is analogous as in Lemma \ref{lema:B6_2nested_4tent}, since in most situations we only use the fact that $C$ is admissible. Moreover, whenever we consider two labeled rows $v$ and $w$ labeled with distinct letters, we have at least two vertices $k_6$ in $K_6$ and $k_8$ in $K_8$ such that $v$ is adjcent to $k_6$ and nonadjacent to $k_8$ and $w$ is adjacent to $k_8$ and nonadjacent to $k_6$. Moreover, there is always a vertex $k_4$ in $K_4$ that is adjacent to both. This holds whether they are labeled with the same letter or not.


	\case Therefore, we may assume that $\mathbb C_i$ is admissible and LR-orderable but is not partially $2$-nested. Since there are no uncolored labeled rows and those colored rows are labeled with either L or R and do not induce any of the matrices $\mathcal{D}$, then in particular no pair of pre-colored rows of $\mathbb C_i$ induce a monochromatic gem or a monochromatic weak gem, and there are no badly-colored gems since every LR-row is uncolored, therefore $\mathbb C_i$ is partially $2$-nested.

\case Finally, let us suppose that $C= \mathbb C_i$ is partially $2$-nested but is not $2$-nested. As in the previous cases, we consider $C$ ordered with a suitable LR-ordering.
	Let $C'$ be a matrix obtained from $C$ by extending its partial pre-coloring to a total $2$-coloring. It follows from Lemma \ref{lema:B_ext_2-nested} that, if $C'$ is not $2$-nested, then either there is an LR-row for which its L-block and R-block are colored with the same color, or $C'$ contains a monochromatic gem or a monochromatic weak gem or a badly-colored doubly weak gem. 	
	
	If $C'$ contains a monochromatic gem where the rows that induce such a gem are not LR-rows, then the proof is analogous as in the tent case. Thus, we may assume that at least one of the rows is an LR-row and hence let $i=7$.
	
	\subcase \textit{Let us first suppose there is an LR-row $w$ for which its L-block $w_L$ and R-block $w_R$ are colored with the same color. 	}
	If these two blocks are colored with the same color, then there is either one odd sequence of rows $v_1, \ldots, v_j$ that force the same color on each block, or two distinct sequences, one that forces the same color on each block. 
	
	\subsubcase If there is one odd sequence $v_1, \ldots, v_j$ that forces the color on both blocks, then the proof is analogous as in \ref{lema:B6_2nested_4tent}. 
	
	\subsubcase Suppose there are two independent sequences $v_1, \ldots, v_j$ and $x_1, \ldots, x_l$  that force the same color on $w_L$ and $w_R$, respectively. Suppose without loss of generality that $w_L$ and $w_R$ are colored with red.
	If $j=1$ and $l=1$, then we find $D_6$, which is not possible. Hence, we assume that either $j>1$ or $l>1$. 
	Suppose that $j>1$ and $l>1$, thus there is one labeled row in each sequence. We may assume that $v_j$ is labeled with L and $x_l$ is labeled with R, since LR-ordering used to color $B'$ is suitable. As in the proof of Lemma \ref{lema:B6_2nested_4tent}, we assume throughout the proof that each row in each sequence forces the coloring on both the previous and the next row in its sequence. Thus in this case, $v_2, \ldots, v_j$  is contained in $w_L$ and $x_2, \ldots, x_l$ is contained in $w_R$. 
	Moreover, $w$ represents a vertex in $S_{76]}$, $v_j$ lies in $S_{74]} \cup S_{75} \cup S_{76}$ and $x_l$ lies in $S_{[27} \cup S_{17} \cup S_{87}$, and thus both are colored with blue and $j$ and $l$ are both odd.
	If $x_l$ lies in $S_{[27} \cup S_{17} \cup S_{87}$, since there is a $k_4$ in $K_4$ adjacent to both $v_j$ and $x_l$, then we find $F_2(j+l+1)$ contained in the submatrix induced by each row and column on which the rows in $w$ and both sequences are not null and the column representing $k_i$.
	The proof is analogous if either $j=1$ or $l=1$.

	Hence, we assume there is either a monochromatic weak gem in which one of the rows is an LR-row or a badly-colored doubly-weak gem in $C'$, for the case of a monochromatic gem or a monochromatic weak gem where one of the rows is an L-row (resp.\ R-row) and the other is unlabeled is analogous to the tent case. Moreover, if an LR-row and an unlabeled row (or a row labeled with L or R) induce a monochromatic gem, then in particular these rows induce a monochromatic weak gem. 
	
	However, the proof follows analogously as in Lemma \ref{lema:B6_2nested_4tent} and therefore, If $G$ is $\{ \mathcal{T}, \mathcal{F} \}$-free, then $\mathbb{C}_i$ is $2$-nested for each $i=1, 2,\ldots, 8$.

\end{mycases}	 
\end{proof}

\begin{defn} \label{def:matrices_C_por_colores}
We define the matrices $\mathbb{C}_r$, $\mathbb{C}_b$, $\mathbb{C}_{r-b}$ and $\mathbb C_{b-r}$ as in Section \ref{subsec:4tent4}. Similarly, we have the following Lemma for these matrices.
\end{defn}

\begin{lema} \label{lema:matrices_union_son_nested_co4tent}
	Suppose that $\mathbb C_i$ is $2$-nested for each $i =1,2 \ldots, 8$. If $\mathbb C_r$, $\mathbb C_b$, $\mathbb C_{r-b}$ or $\mathbb C_{b-r}$ are not nested, then $G$ contains $F_0$ as a minimal forbidden induced subgraph for the class of circle graphs.
\end{lema}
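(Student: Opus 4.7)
The strategy mirrors the proof of Lemma~\ref{lema:matrices_union_son_nested_4tent}, leveraging Theorem~\ref{teo:nested_caract}: a $(0,1)$-matrix fails to be nested precisely when it contains a $0$-gem. So I would suppose that one of $\mathbb C_r$, $\mathbb C_b$, $\mathbb C_{r-b}$, $\mathbb C_{b-r}$ is not nested, fix two rows $f_1,f_2$ inducing a $0$-gem in that matrix, and let $v_1,v_2\in S$ be the corresponding vertices. The goal is to locate five columns of $A(S,K)$ in $K_1\cup\cdots\cup K_8$ which together with $v_1,v_2$ and a suitable vertex of $V(T)\cap S$ induce $F_0$ (i.e.\ three consecutive overlapping $5$-element rows).

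The main bookkeeping step is identifying, for each of the four matrices, the admissible pairs of subsets $S_{ij}, S_{lm}$ that $v_1$ and $v_2$ may belong to. Recall from Definition~\ref{def:matrices_C_por_colores} that $\mathbb C_r$ collects vertices whose row in the relevant $\mathbb C_i$ was colored red (so those of $S_{12}, S_{13}, S_{34}, S_{76}, S_{[86}$ and their partial-row counterparts), $\mathbb C_b$ collects the blue ones (so $S_{16]}, S_{17}, S_{23}, S_{25]}, S_{26}, S_{35}, S_{36}, S_{[46}$ together with the vertices of $\mathbb C_7$), and $\mathbb C_{r-b}, \mathbb C_{b-r}$ collect the vertices of $S_{6j}, S_{8j}$ for $j<i$ split by the two orientations of their bicoloring. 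Since $\mathbb C_i$ is $2$-nested (hence admissible) for every $i$, any two vertices with rows in the same $\mathbb C_i$ that induce a $0$-gem there would already yield a contradiction inside that matrix; thus the $0$-gem we are considering must cross partitions, and its column positions can only fall inside the specific $K_l$ blocks where both rows are simultaneously supported.

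For each pair of subsets I would then write down the $F_0$-witness explicitly. For $\mathbb C_r$, the representative cases are $v_1\in S_{13}$, $v_2\in S_{34}$ (with the witness $\{v_1,v_2,s_{35},k_1,k_3,k_{41},k_{42},k_5\}$ or an analogue), $v_1\in S_{76}$, $v_2\in S_{[86}$ (using $\{v_1,v_2,s_1,k_{61},k_{62},k_7,k_{81},k_{82}\}$-type sets), and the mixed pairs with one vertex in a partial row; for each, admissibility of the intervening $\mathbb C_i$ supplies the two columns with disjoint $K_i$-neighborhoods needed to make the middle row of $F_0$ overlap both endpoints. The matrix $\mathbb C_b$ splits similarly into families involving $S_{23},S_{35},S_{36},S_{[46}$ and the LR-blocks of $\mathbb C_7$; here the presence of the independent-set vertices $s_{13}, s_{35}, s_1, s_5$ of the co-$4$-tent inside $V(T)\cap S$ supplies the seventh and eighth vertices of the $F_0$. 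Finally, for $\mathbb C_{r-b}$ and $\mathbb C_{b-r}$ the analysis is essentially that of Lemma~\ref{lema:matrices_union_son_nested_4tent} with $K_8$ playing a role symmetric to $K_6$; the two vertices that induce the $0$-gem live in $S_{62}\cup S_{63}$ or $S_{76}\cup S_{[86}\cup S_{87}$ respectively, and in each situation I can produce an $F_0$ using one column of $K_6$ (or $K_8$), one column of the relevant $K_j$ on the other side of the independent vertex, and the vertex $s_{13}$ or $s_{35}$.

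The main obstacle is purely combinatorial: because $G$ contains an induced co-$4$-tent (rather than a tent or $4$-tent), there are eight partitions $K_1,\dots,K_8$ instead of six, and consequently many more pairs of subsets $S_{ij}, S_{lm}$ to enumerate. The presence of the LR-rows in $\mathbb C_7$ (representing $S_{76]}$ and $S_{[86]}$) adds a subtlety because a `$0$-gem' between such a row's L-block and another row's R-block must still be realized by genuine non-neighbors in $G$; however, the suitability of the LR-ordering used to define the block-coloring guarantees that the two blocks of an LR-row sit on disjoint sides of the partition of $K_7$, so the witness columns in the proof are already available. No new structural ingredient beyond those used for the $4$-tent case is needed; the work lies in tabulating the cases and verifying that, in each, the chosen eight vertices do induce~$F_0$.
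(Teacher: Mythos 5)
Your proposal follows essentially the same route as the paper's proof: reduce non-nestedness to the presence of a $0$-gem (Theorem~\ref{teo:nested_caract}), enumerate the pairs of subsets $S_{ij}$, $S_{lm}$ that can host the two rows of the gem in each of $\mathbb C_r$, $\mathbb C_b$, $\mathbb C_{r-b}$, $\mathbb C_{b-r}$, and exhibit an explicit eight-vertex $F_0$ witness for each pair using admissibility of the $\mathbb C_i$ and the independent vertices $s_1$, $s_5$, $s_{13}$, $s_{35}$ of the co-$4$-tent. The only caveats are bookkeeping slips carried over from the $4$-tent case --- the wrap-around subsets here are $S_{7j}$ and $S_{8j}$ (not $S_{6j}$, and there is no $S_{62}\cup S_{63}$ in this partition), and a $0$-gem can also be induced by two vertices of the \emph{same} subset (e.g.\ both in $S_{12}$, overlapping crosswise between $K_1$ and $K_2$), which is in fact the first case the paper treats --- but these do not affect the validity of the method.
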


\begin{proof}
Suppose that $\mathbb C_r$ is not nested, and let $v_1$ and $v_2$ be the vertices represented by the rows that induce a $0$-gem in $\mathbb C_r$. The rows in $\mathbb C_r$ represent vertices in the following subsets of $S$: $S_{12}$, $S_{[13}$, $S_{[14}$, $S_{34}$, $S_{74]}$, $S_{75}$, $S_{76}$, $S_{82]}$, $S_{83}$, $S_{84}$, $S_{85}$, $S_{[86}$, $S_{86]}$ or $S_{87}$. Notice that, by definition, these last two subsets are not complete to $K_8$.

Notice that the vertices in $S_{86]} \cup S_{87}$ do not induce $0$-gems in $\mathbb C_r$.

\begin{mycases}
	\case Suppose that $v_1$ in $S_{12}$. Since $S_{[13}$ and $S_{[14}$ are complete to $K_1$ and $S_{82]}$ is complete to $K_2$, the only possibility is that $v_2$ in $S_{12}$. In that case, we find $F_0$ induced by $\{ v_1$, $v_2$, $s_{35}$, $k_{11}$, $k_{12}$, $k_{21}$, $k_{22}$, $k_4 \}$, where $k_{11}$ and $k_{12}$ in $K_1$, $k_{21}$ and $k_{22}$ in $K_2$ and $k_4$ in $K_4$.
 We find the same forbidden subgraph if $v_1$ and $v_2$ lie both in $S_{34}$, with vertices $k_{31}$, $k_{32}$ in $K_3$, $k_{41}$ and $k_{42}$ in $K_4$, $k_5$ in $K_5$ and $s_5$ instead of $s_{35}$.
 
 \case Let $v_1$ in $S_{[13} \cup S_{[14}$. 

	 \subcase If $v_1$ in $S_{[14}$, then $v_2$ lies in $S_{34}$ or in $S_{82]} \cup S_{83} \cup S_{84}$ since every vertex in $S_{12}$, $S_{[13}$ is contained in every vertex of $S_{[14}$, and every vertex in $S_{[14}$ is properly contained in every vertex of $S_{74]} \cup S_{75} \cup S_{76} \cup S_{85} \cup S_{[86}$. 
	 If $v_2$ in $S_{34}$, then we find $F_0$ induced by $\{ v_1$, $v_2$, $s_{5}$, $k_{1}$, $k_{3}$, $k_{41}$, $k_{42}$, $k_5 \}$.
	 If instead $v_2$ in $S_{82]} \cup S_{83} \cup S_{84}$, then we find $F_0$ induced by $\{ v_1$, $v_2$, $s_{35}$, $k_{8}$, $k_{1}$, $k_{2}$, $k_{4}$, $k_5 \}$, since there is a vertex $k_4$ in $K_4$ adjacent to $v_1$ and nonadjacent to $v_2$ and a vertex $k_8$ in $K_8$ adjacent to $v_2$ and nonadjacent to $v_1$ (which is represented in the $0$-gem by the column $c_L$).

	\subcase If $v_1$ in $S_{[13}$, then $v_2$ lies in $S_{34}$ or in $S_{82]} \cup S_{83}$. However, the first is not possible since $\mathbb C_3$ is admissible. The proof if $v_2$ lies in $S_{82]} \cup S_{83}$ follows analogously as in the previous subcase. 
	
	\case Suppose $v_1$ in $S_{34}$. Since $\mathbb C_3$ is admissible and $S_{74]}$ is complete to $K_4$, then the only possibility is that $v_2$ in $S_{84}$. We find $F_0$ induced by $\{ v_1$, $v_2$, $s_{5}$, $k_{2}$, $k_{3}$, $k_{41}$, $k_{42}$, $k_5 \}$.

\end{mycases}

Suppose that $\mathbb C_b$ is not nested, and let $v_1$ and $v_2$ be the vertices represented by the rows that induce a $0$-gem in $\mathbb C_b$. The rows in $\mathbb C_b$ represent vertices in the following subsets of $S$: $S_{23}$, $S_{25]}$, $S_{26}$, $S_{[27}$, $S_{16]}$, $S_{17}$, $S_{35]}$, $S_{36}$, $S_{45}$, $S_{[46}$, $S_{74]}$, $S_{75}$, $S_{76}$, $S_{82]}$, $S_{83}$, $S_{84}$, $S_{[85}$, $S_{[86}$, $S_{86]}$ or $S_{87}$. 

Notice that the vertices in $S_{86]}$, $S_{87}$, $S_{82]}$, $S_{83}$, $S_{84}$ do not induce $0$-gems in $\mathbb C_r$. The same holds for those vertices in $S_{74]}$, $S_{75}$ and $S_{76}$, however in this case this follows from the fact that $\mathbb C_7$ is admissible.

\begin{mycases}
	\case Suppose $v_1$ in $S_{23}$. Since $\mathbb C_3$ is admissible, then $v_2$ lies in $S_{23} \cup S_{25]} \cup S_{26}$. If $v_2$ in $S_{23}$, then we find $F_0$ induced by $\{ v_1$, $v_2$, $s_{1}$, $k_{1}$, $k_{21}$, $k_{22}$, $k_{31}$, $k_{32} \}$. If $v_2$ in $S_{25]}$ or $S_{26}$, then we find $F_0$ induced by the same subset changing $k_{32}$ for some vertex in $K_5$ or $K_6$, respectively. 
	
	\case Let $v_1$ in $S_{25]} \cup S_{26}$, thus $v_2$ in $S_{26} \cup S_{36} \cup S_{46}$. We assume that $v_1$ in $S_{25]}$, since the proof is analogous if $v_1$ in $S_{26}$. We find $F_0$ induced by $\{ v_1$, $v_2$, $s_{1}$, $k_{1}$, $k_{21}$, $k_{22}$, $k_{5}$, $k_6 \}$ if $v_2$ in $S_{26}$. If instead $v_2$ in $S_{36}$ or $S_{46}$, then the subset is the same with the exception of $k_{22}$, which is replaced by an analogous vertex in $K_3$ or $K_4$, respectively.
	
	\case Suppose $v_1$ in $S_{[27}$. Thus, $v_2$ in $S_{16]} \cup S_{17} \cup S_{86]} \cup S_{87}$. Since $v_2$ is never complete to $K_8$ and both vertices induce a $0$-gem, we find $F_0$ induced by $\{ v_1$, $v_2$, $s_{13}$, $k_{1}$, $k_{2}$, $k_{6}$, $k_{7}$, $k_{8} \}$.
	
	\case Suppose $v_1$ in $S_{16]}$. Thus, $v_2$ in $S_{17}$. Since $K_8 \neq \emptyset$, we find $F_0$ induced by $\{ v_1$, $v_2$, $s_{13}$, $k_{11}$, $k_{12}$, $k_{6}$, $k_{7}$, $k_{8} \}$. 
	
	\case Suppose $v_1$ in $S_{35]}$. Thus, $v_2$ in $S_{36} \cup S_{46}$. We find $F_0$ induced by $\{ v_1$, $v_2$, $s_{13}$, $k_{1}$, $k_{31}$, $k_{32}$, $k_{5}$, $k_{6} \}$ if $v_2$ in $S_{36}$, and if $v_2$ in $S_{46}$ we change $k_{32}$ for an analogous vertex in $K_4$.
	
	\case Suppose $v_1$ in $S_{17}$. Thus, $v_2$ in $S_{86]} \cup S_{87}$. Since $v_2$ is not complete to $K_8$, then we find $F_0$ induced by $\{ v_1$, $v_2$, $s_{13}$, $k_{8}$, $k_{11}$, $k_{12}$, $k_{6}$, $k_{7} \}$.
	
\end{mycases}

Suppose that $\mathbb C_{r-b}$ is not nested, and let $v_1$ and $v_2$ be the vertices represented by the rows that induce a $0$-gem. The rows in $\mathbb C_{r-b}$ represent vertices in either $S_{86]}$ or $S_{87}$.

Suppose that $v_1$ in $S_{86]}$ and $v_2$ in $S_{87}$. Since none of the vertices is complete to $K_8$, $\mathbb C_8$ is admissible and these rows are R-rows in $\mathbb C_8$, then there is no $D_0$ and thhus there are three vertices $k_{81}$, $k_{82}$ and $k_{83}$ in $K_8$ such that $k_{81}$ is nonadjacent to both $v_1$ and $v_2$, $k_{83}$ is adjacent to both $v_1$ and $v_2$ and $k_{82}$ is adjacent to $v_1$ and nonadjacent to $v_2$. We find $F_0$ induced by $\{ v_1$, $v_2$, $s_{13}$, $k_{81}$, $k_{82}$, $k_{83}$, $k_{6}$, $k_{7} \}$. It follows analogously if both vertices lie in $S_{87}$, and if both lie in $S_{[86}$ only changing $k_7$ for an analogous $k_{62}$ in $K_6$.

\vspace{1mm}
Suppose that $\mathbb C_{b-r}$ is not nested, and let $v_1$ and $v_2$ be the vertices represented by the rows that induce a $0$-gem. The rows in $\mathbb C_{b-r}$ represent vertices in $S_{74]}$, $S_{75}$, $S_{76}$, $S_{82]}$, $S_{83}$, $S_{84}$, $S_{[85}$ or $S_{[86}$.

\begin{mycases}
	\case Suppose that $v_1$ and $v_2$ in $S_{74]} \cup S_{75} \cup S_{76}$. In either case, $v_1$ and $v_2$ are not complete to $K_6$ by definition. Since $\mathbb C_6$ is admissible, thus there is no $D_0$ and there are vertices $k_{61}$ and $k_{62}$ in $K_6$ such that $v_1$ is nonadjacent to $k_{61}$ and $k_{62}$ and $v_2$ is adjacent to $k_{61}$ and is nonadjacent to $k_{62}$. We find $F_0$ induced by $\{ v_1$, $v_2$, $s_{35}$, $k_{71}$, $k_{72}$, $k_{4}$, $k_{61}$, $k_{62} \}$ if $v_1$ and $v_2$ lie in $S_{76}$. It follows analogously if $v_1$ or $v_2$ lie in $S_{74]} \cup S_{75}$ changing $k_{61}$ for an analogous vertex $k_5$ in $K_5$.
	
	\case Suppose that $v_1$ and $v_2$ in $S_{82]} \cup S_{83} \cup S_{84} \cup S_{[85} \cup S_{[86}$. Since every vertex in $S_{[85}$ and $S_{[86}$ is complete to $K_8$, then none of these vertices induce a $0$-gem in $\mathbb C_{b-r}$. Thus, $v_1$ and $v_2$ lie in $S_{82]} \cup S_{83} \cup S_{84}$. Moreover, since every vertex in $S_{82]}$ is complete to $K_2$, then it is not possible that both vertices lie in $S_{82]}$. Let $k_{81}$ and $k_{82}$ in $K_8$ such that $v_1$ is adjacent to both and $v_2$ is adjacent to $k_{82}$ and is nonadjacent to $k_{81}$. Notice that in that case we are assuming that, if one of the vertices lies in $S_{82]}$, then such vertex is $v_1$. If $v_2$ in $S_{83}$, then we find $F_0$ induced by $\{ v_1$, $v_2$, $s_{35}$, $k_{81}$, $k_{82}$, $k_{2}$, $k_{3}$, $k_{5} \}$. If instead $v_2$ in $S_{84}$, we find $F_0$ with the same subset only changing $k_3$ for some analogous $k_4$ in $K_4$.
	
	\case Suppose that $v_1$ in $S_{74]} \cup S_{75} \cup S_{76}$ and $v_2$ in $S_{82]} \cup S_{83} cup S_{84} \cup S_{[85} \cup S_{86}$. Notice that, if $v_2$ in $S_{82]} \cup S_{83} \cup S_{84}$, then $v_2$ is contained in $v_1$ and thus such vertices cannot induce a $0$-gem in $\mathbb C_{b-r}$. Thus, $v_2$ in $S_{[85} \cup S_{[86}$. In this case, there is a vertex $k_6$ in $K_6$ that is nonadjacent to both $v_1$ and $v_2$ since none of these vertices is complete to $K:6$ by definition and $\mathbb C_6$ is admissible. If $v_1$ in $S_{74]} \cup S_{75}$, then we find we find $F_0$ induced by $\{ v_1$, $v_2$, $s_{35}$, $k_{7}$, $k_{8}$, $k_{4}$, $k_{5}$, $k_{6} \}$. If instead $v_1$ in $S_{76}$ and $v_1$ and $v_2$ induce a $0$-gem, then $v_2$ in $S_{[86}$. We find $F_0$ with the same subset as before, only changing $k_5$ for some analogous $k_{62}$ in $K_6$.
\end{mycases}

This finishes the proof.

\end{proof}

The main result of this section is the following theorem, which follows directly from the previous lemmas.

\begin{teo} \label{teo:finalteo_co4tent}
	Let $G=(K,S)$ be a split graph containing an induced co-$4$-tent. Then, $G$ is a circle graph if and only if $\mathbb C_1,\mathbb C_2,\ldots,\mathbb C_8$ are $2$-nested and $\mathbb C_r$, $\mathbb C_b$, $\mathbb C_{r-b}$ and $\mathbb C_{b-r}$ are nested.
\end{teo}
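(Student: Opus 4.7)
The plan is to mirror the structure used in Theorems \ref{teo:finalteo_tent} and \ref{teo:finalteo_4tent}, proving necessity by invoking the preceding lemmas and sufficiency by explicitly building a circle model.

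For necessity, I would note that every graph in the families $\mathcal{T}$ and $\mathcal{F}$ is non-circle (see the appendix referenced after the statement of Theorem \ref{teo:circle_split_caract}). If some $\mathbb C_i$ fails to be $2$-nested, then by Lemma \ref{lema:equiv_circle_2nested_co4tent}, $G$ contains an induced subgraph in $\mathcal{T}\cup\mathcal{F}$, and if any of $\mathbb C_r,\mathbb C_b,\mathbb C_{r-b},\mathbb C_{b-r}$ fails to be nested, then by Lemma \ref{lema:matrices_union_son_nested_co4tent}, $G$ contains an $F_0$. In either case $G$ is not circle.

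For sufficiency, I would fix a suitable LR-ordering $\Pi_i$ of the columns of each $\mathbb C_i$ together with the block bi-coloring $\chi_i$ given by the $2$-nestedness hypothesis, and let $\Pi$ be their concatenation, as in the previous two sections. I would divide the circle into sixteen arcs labelled $K_1^+,K_1^-,\ldots,K_8^+,K_8^-$ following the circular model of the co-$4$-tent depicted in Figure \ref{fig:co4tent_model}, and for each $k\in K_i$ draw a chord with endpoints in $K_i^+$ and $K_i^-$ in the order prescribed by $\Pi_i$. Since Lemma \ref{lema:matrices_union_son_nested_co4tent} guarantees that $\mathbb C_r,\mathbb C_b,\mathbb C_{r-b},\mathbb C_{b-r}$ are nested, the chords associated to the independent vertices within each color class are totally ordered by inclusion, which together with the compatibility across distinct $K_i$'s given by Claim \ref{claim:tent_0} (transferred verbatim to the co-$4$-tent case) tells us exactly where to put each endpoint: the color of a row in $\mathbb C_i$ dictates whether the endpoint lies in $K_i^+$ or $K_i^-$, and the nestedness gives a consistent partial order on these endpoints inside each arc. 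The placement is then analogous to the one detailed in the proofs of Theorems \ref{teo:finalteo_tent} and \ref{teo:finalteo_4tent}, with one subset $S_{ij}$ handled at a time from smallest to largest neighbourhood.

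The main obstacle will be the treatment of $\mathbb C_7$, since it is the only matrix with LR-rows, exactly as $\mathbb B_6$ was in the $4$-tent case. For the uncolored empty LR-rows representing vertices in $S_{[86]}$, I will invoke the admissibility of $\mathbb C_7$ to show that the pre-colored labels in $\mathbb C_7$ force all such rows to receive a unique admissible color, placing one endpoint at the extremal position of $K_7^+$ and the other at the extremal position of $K_7^-$. For the genuinely two-block LR-rows corresponding to $S_{76]}$, property \ref{item:2nested1} of $2$-nestedness guarantees that the L-block and R-block have distinct colors, so one endpoint goes inside $K_7^+$ using the partial order given by L-blocks of the same color, and the other endpoint inside $K_7^-$ using the partial order given by R-blocks of the corresponding color; the suitability of the LR-ordering ensures the required disjointness with the remaining chords. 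A secondary subtlety, and the reason we carried out Remark \ref{obs:co4tent_1}, is that the co-$4$-tent is not a prime graph; if $K_2=\emptyset$ or $K_4=\emptyset$, a split decomposition applies and the resulting factors are handled inductively (or reduced to the cases $K_6,K_7,K_8$ partially empty treated in Section \ref{subsec:co4tent1}). Once all chord placements respect both $\Pi$ and the block bi-coloring, the resulting diagram realises $G$ as the intersection graph of chords in a circle, completing the proof.
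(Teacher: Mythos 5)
Your proposal is correct and follows essentially the same route as the paper: necessity via Lemmas \ref{lema:equiv_circle_2nested_co4tent} and \ref{lema:matrices_union_son_nested_co4tent} together with the non-circularity of the graphs in $\mathcal{T}\cup\mathcal{F}$, and sufficiency by concatenating the suitable LR-orderings, splitting the circle into sixteen arcs as in Figure \ref{fig:co4tent_model}, and using the block bi-coloring to decide between $K_i^+$ and $K_i^-$ with the nestedness of $\mathbb C_r,\mathbb C_b,\mathbb C_{r-b},\mathbb C_{b-r}$ ordering the endpoints inside each arc. Your treatment of the empty and two-block LR-rows of $\mathbb C_7$ and of the non-primality issue via Remark \ref{obs:co4tent_1} matches the paper's handling.
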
 

\begin{proof}

Necessity is clear by the previous lemmas and the fact that the graphs in families $\mathcal{T}$ and $\mathcal{F}$ are all non-circle. Suppose now that each of the matrices $\mathbb C_1,\mathbb C_2,\ldots,\mathbb C_8$ is $2$-nested and the matrices $\mathbb C_r$, $\mathbb C_b$, $\mathbb C_{r-b}$ or $\mathbb C_{b-r}$ are nested.
Let $\Pi$ be the ordering for all the vertices in $K$ obtained by concatenating each suitable LR-ordering $\Pi_i$ for $i \in \{1, 2,\ldots, 8\}$.

Consider the circle divided into sixteen pieces as in Figure \ref{fig:co4tent_model}. For each $i\in\{1$,$2$,$\ldots$,$8\}$ and for each vertex $k_i \in K_i$ we place a chord having one endpoint in $K_i^+$ and the other endpoint in $K_i^-$, in such a way that the ordering of the endpoints of the chords in $K_i^+$ and $K_i^-$ is $\Pi_i$. Throughout the following, we will consider the circular ordering  clockwise.

Let us see how to place the chords for each subset $S_{ij}$ of $S$. 

The vertices with exactly endpoint in $K_7^{-}$ that are not LR-vertices in $\mathbb C_7$ are $S_{74]} \cup S_{75} \cup S_{76}$ and $S_{[27} \cup S_{17} \cup S_{87}$. Since $\mathbb C_7$ is admissible, the vertices in $S_{74]} \cup S_{75} \cup S_{76}$ and $S_{[27} \cup S_{17} \cup S_{87}$ do not intersect in $K_7$.
	Moreover, since there are no pre-colored red rows, then there are no vertices with exactly one endpoint in $K_7^{+}$.
	Furthermore, the vertices in $S_{76]}$ and $S_{[86]}$ are represented by LR-rows in $\mathbb C_7$. These last ones are exactly those empty LR-rows. Since $\mathbb C_7$ is $2$-nested, then all of these vertices can be drawned in the circle model. It follows that, if $S_{[86]} \neq \emptyset$, then either $S_{74]} \cup S_{75} = \emptyset$ or $S_{[27} \cup S_{17} \cup S_{87} = \emptyset$.
	On the other hand, those nonempty LR-rows in $\mathbb C_7$ correspond to vertices in $S_{76]}$. Each of these vertices with two blocks in $\mathbb C_7$ have one endpoint in $K_7^{+}$, placed according to the ordering $\Pi_7$ of the nonempty columns of its red block, and the other endpoint placed in $K_7^{-}$ according to the ordering $\Pi_7$ of the nonempty columns of its blue block. It follows analogously for those nonempty LR-vertices with exactly one block.

Notice that in $\mathbb C_1$ (resp.\ in $\mathbb K_5$ by symmetry) there are no R-rows (resp.\ L-rows). Since $\mathbb C_1$ is $2$-nested, then all the vertices that have exactly one endpoint in $K_1^{-}$ (resp.\ $K_5^{+}$) are nested and thus such endpoint can be placed without issues. The same holds for every vertex with both endpoints in $K_1^{-}$ and $K_5^{+}$.
	Moreover, the only vertices with exactly one endpoint in $K_1^{+}$ may be those in $S_{12}$, for all the vertices in $S_{[13} \cup S_{[14}$ are nested and have the endpoint corresponding to $K_1$ placed between $s_{14}^{-}$ and the first endpoint of a vertex in $S_{82]} \cup S_{83} \cup S_{84}$ (or $s_{13}^{-}$ if this set is empty). The vertices in $S_{12}$ are nested, and thus each endpoint of these vertices may be placed in the ordering given by $\Pi_2$ and $\Pi_1$, respectively, between $s_1^{+}$ and $s_1^{-}$.

The only vertices that have exactly one endpoint in $K_2^{+}$ are those in $S_{12}$. The vertices that have exactly one endpoint in $K_2^{-}$ are those in $S_{23} \cup S_{25]} \cup S_{26}$. Since $\mathbb C_2$ is $2$-nested and $\mathbb C_b$ is nested, then these vertices are all nested and thus we can place the chords according to the ordering $\Pi_2$. Those vertices in $S_{[27}$ have the endpoint corresponding to $K_2$ placed right after $s_{35}^{-}$, and before any of the chords with endpoint in $K_1^{-}$.
	The same holds by symmetry for those chords with exactly one endpoint in $K_4^{+}$ and $K_4^{-}$.

The vertices with exactly one endpoint in $K_3^{+}$ are $S_{34}$ and $S_{[13} \cup S_{83}$. Since $\mathbb C_3$ is admissible, the vertices in $S_{34}$ and $S_{[13} \cup S_{83}$ do not intersect in $K_3$. Moreover, since $\mathbb C_3$ is $2$-nested and $\mathbb C_r$ and $\mathbb C_{b-r}$ are nested, then the vertices in $S_{[13} \cup S_{83}$ are nested and thus we can place both of its endpoints following the ordering given by $\Pi_3$.
	The vertices with exactly one endpoint in $K_3^{-}$ are those in $S_{23}$ (which we have already shown where to place) and those in $S_{35]} \cup S_{36}$. These last vertices are nested since $\mathbb C_b$ is nested and thus we place both its endpoints according to $\Pi_3$. Notice that, since $\mathbb C_3$ is admissible, then the vertices in $S_{23}$ and $S_{35]}  \cup S_{36}$ do not intersect in $K_3$.
	
	Since $\mathbb C_{b-r}$ is nested, if $S_{[85} \neq \emptyset$, then $S_{74]} = \emptyset$, and viceversa. The same holds for $S_{[86}$ and $S_{74]} \cup S_{75}$. Moreover, if $S_{[85} \neq \emptyset$, then every vertex in $S_{[85}$ is nested in $S_{75}$, and if $S_{[86} \neq \emptyset$, then every vertex in $S_{[86}$ is nested in $S_{76}$. It follows analogously by symmetry for those vertices in $S_{[27} \cup S_{17} \cup S_{87} \cup S_{86] \cup S_{16]}}$.
	
	Those vertices with exactly one endpoint in $K_6^{+}$ are those in $S_{76} \cup S_{[86}$. These vertices are nested since $\mathbb C_6$ is $2$-nested and $\mathbb C_{b-r}$ is nested. Thus, if these subsets are nonempty, then $S_{74]} \cup S_{75} = \emptyset$. Therefore, we can place both its enpoints according to $\Pi_6$, one in $K_6^{+}$ and the other between $s_{13}^{-}$ and $s_{35}^{+}$.
	The vertices that have exactly one endpoint in $K_6^{-}$ are those in $S_{26} \cup S_{36} \cup S_{46}$, and since $\mathbb C_b$ is nested, then these vertices are all nested and therefore we place both its endpoints according to $\Pi_6$.
	
Finally, all the vertices represented by unlabeled rows in each $\mathbb C_i$ for $i=1, 2, \ldots, 8$ represent the vertices in $S_{ii}$. These vertices are entirely colored with either red or blue, and are either disjoint or nested with every other vertex colored with its color. Hence, we place both endpoints of the corresponding chord in $K_i^{+}$ if it is colored with red, and in $K_i^{-}$ if it is colored with blue, according to the ordering $\Pi_i$ given for $K_i$.
	
This gives the guidelines for a circle model for $G$.

\end{proof}

\section{Split circle graphs containing an induced net} \label{sec:circle5}

Let $G=(K,S)$ be a split graph. If $G$ is a minimally non-circle graph, then it contains either a tent, or a $4$-tent, or a co-$4$-tent, or a net as induced subgraphs.
In the previous sections, we have addressed the problem of having a split minimally-non-circle graph that contains an induced tent, $4$-tent and co-$4$-tent, respectively. Let us consider a split graph $G$ that contains no induced tent, $4$-tent or co-$4$-tent, and suppose there is a net subgraph in $G$. 

\begin{figure}[h!]
\centering
\includegraphics[scale=.5]{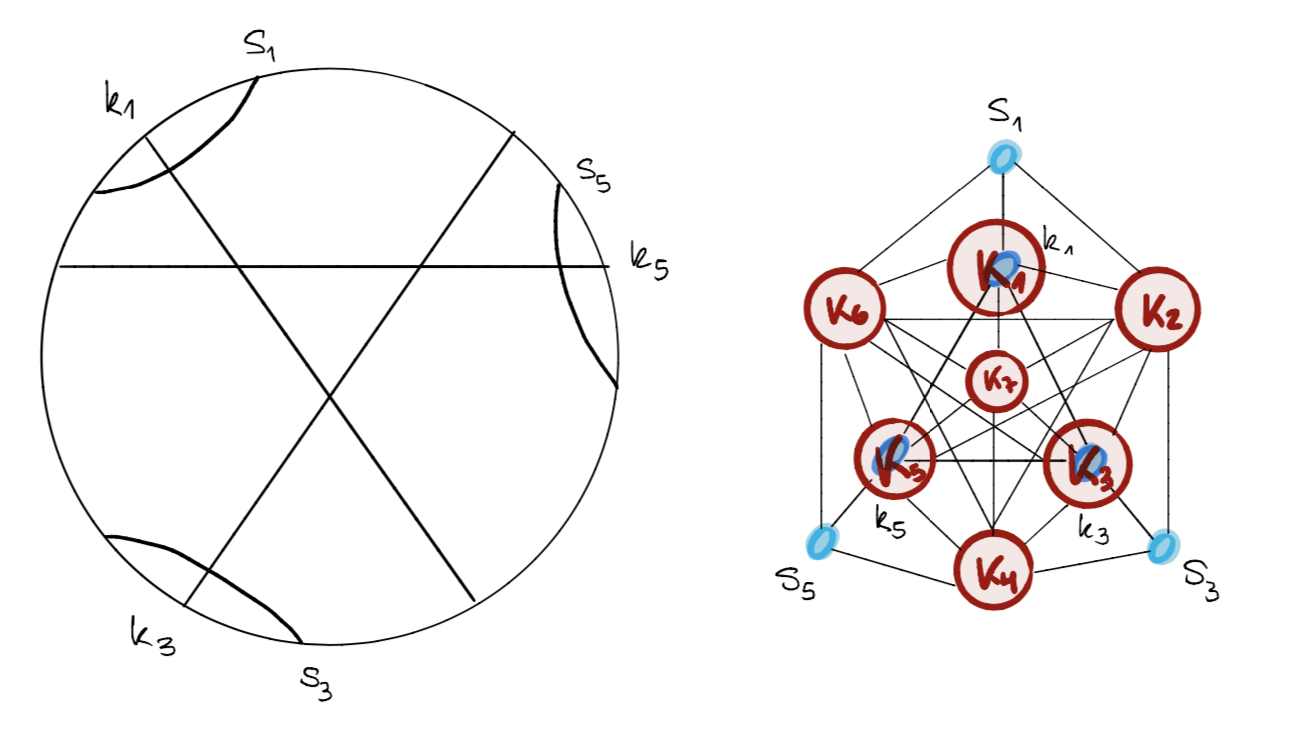}
		\label{fig:net_model}    
	  \caption{A circle model for the net graph and the partitions of $K$.}
\end{figure}

We define $K_i$ as the subset of vertices in $K$ that are adjacent only to $s_i$ if $i=1,3,5$, and if $i=2,4,6$ as those vertices in $K$ that are adjacent to $s_{i-1}$ and $s_{i+1}$. We define $K_7$ as the subset of vertices in $K$ that are nonadjacent to $s_1$, $s_3$ and $s_5$.
Let $s$ in $S$. We denote $T(s)$ to the vertices that are false twins of $s$.

\begin{remark}
The net is not a prime graph. Moreover, if $K_i = \emptyset$, $K_j = \emptyset$ for any pair $i, j \in \{2,4,6\}$, then $G$ is not prime.
For example, if $K_2 = \emptyset$ and $K_4 = \emptyset$, then a split decomposition can be found considering the subgraphs $H_1 = K_3 \cup T(s_3)$ and $H_2 = G \setminus T(s_3)$.
\end{remark}

Since in the proof we consider a minimally non-circle graph $G$, it follows from the previous remark that at least two of $K_2$, $K_4$ and $K_6$ must be nonempty so that $G$ results prime. However, in that case we find a $4$-tent as an induced subgraph. Therefore, as a consequence of this and the previous sections, we have now proven the characterization theorem given at the begining of the chapter.

\begin{teo}
Let $G=(K,S)$ be a split graph. Then, $G$ is a circle graph if and only if $G$ is $\{ \mathcal{T}, \mathcal{F}\}$-free (See Figures \ref{fig:forb_T_graphs2} and \ref{fig:forb_F_graphs2}).
\end{teo}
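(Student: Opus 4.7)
The plan is to prove both directions separately, leveraging the machinery already built up in Sections 4.1--4.4.

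For the forward implication, I would argue that every graph in $\mathcal{T} \cup \mathcal{F}$ is non-circle (as indicated by the author's reference to the Appendix). Since the class of circle graphs is closed under induced subgraphs, if $G$ is circle then $G$ cannot contain any such forbidden configuration. This direction is essentially bookkeeping and should follow immediately from the definitions of $\mathcal{T}$ and $\mathcal{F}$, since these families are precisely the split graphs whose adjacency matrices $A(S,K)$ realize the Tucker matrices (other than $M_I(k)$ and $M_{III}(k)$ for odd $k$), odd suns with center, the tent$\vee K_1$, and the matrices $F_0$, $F_1(k)$, $F_2(k)$ for odd $k \geq 5$, all of which are known to be non-circle.

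For the converse, I would proceed by contradiction: assume $G=(K,S)$ is $\{\mathcal{T}, \mathcal{F}\}$-free but not circle, and take $G$ to be a minimally non-circle such graph. Since permutation graphs are a subclass of circle graphs and permutation graphs are comparability graphs whose complement is also a comparability graph, the characterization of comparability graphs by Gallai together with the split hypothesis force $G$ to contain at least one of the following as an induced subgraph: a tent, a $4$-tent, a co-$4$-tent, or a net (as discussed in Chapter 2). I would then split into these four cases and in each case apply the characterization theorem established in the corresponding section.

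The heart of the argument lies in the tent, $4$-tent, and co-$4$-tent cases. For each, I would invoke the respective equivalence theorem (Theorem \ref{teo:finalteo_tent}, \ref{teo:finalteo_4tent}, or \ref{teo:finalteo_co4tent}), which states that $G$ is circle if and only if all matrices $\mathbb{A}_i$ (resp.\ $\mathbb{B}_i$, $\mathbb{C}_i$) are $2$-nested and the four auxiliary matrices $\mathbb{A}_r,\mathbb{A}_b,\mathbb{A}_{r-b},\mathbb{A}_{b-r}$ (resp.\ for $\mathbb{B}$ or $\mathbb{C}$) are nested. Thus it suffices to show, under the hypothesis that $G$ is $\{\mathcal{T},\mathcal{F}\}$-free, that all these matrix conditions hold. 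This is precisely the content of Lemmas \ref{lema:equiv_circle_2nested_tent}, \ref{lema:equiv_circle_2nested_4tent_sinLR}, \ref{lema:B6_2nested_4tent}, and \ref{lema:equiv_circle_2nested_co4tent}, together with Lemmas \ref{lema:matrices_union_son_nested}, \ref{lema:matrices_union_son_nested_4tent}, and \ref{lema:matrices_union_son_nested_co4tent}, each of which shows by contrapositive that a violation of the matrix condition forces one of the forbidden subgraphs in $\mathcal{T} \cup \mathcal{F}$ to appear in $G$.

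The net case is the remaining subtlety, but as explained in Section 4.4, if $G$ contains an induced net and is prime (which a minimally non-circle graph must be, since circle graphs are closed under split decomposition by Bouchet), then at least two of the sets $K_2, K_4, K_6$ associated with the net partition must be nonempty. This in turn forces the presence of an induced $4$-tent, reducing the net case to the $4$-tent case already treated. Hence the main obstacle -- which has already been dispatched in the chapter -- is the intricate case analysis in Lemma \ref{lema:B6_2nested_4tent} and Lemma \ref{lema:equiv_circle_2nested_co4tent} for handling LR-rows in $\mathbb{B}_6$ and $\mathbb{C}_7$; the theorem itself is then a clean assembly of these components, which I would present as a short synthesis invoking the four section-level theorems and observing that the four cases are exhaustive for minimally non-circle split graphs.
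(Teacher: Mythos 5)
Your proposal is correct and mirrors the paper's own argument: the forward direction via the Appendix lemma that every graph in $\mathcal{T}\cup\mathcal{F}$ is non-circle together with heredity of circle graphs, and the converse by reducing a (minimally non-circle) split graph to the tent, $4$-tent, co-$4$-tent, or net cases and invoking Theorems \ref{teo:finalteo_tent}, \ref{teo:finalteo_4tent}, \ref{teo:finalteo_co4tent} together with the lemmas showing that a failure of $2$-nestedness or nestedness produces a member of $\mathcal{T}\cup\mathcal{F}$. Your treatment of the net case via primality and split decomposition is exactly the reduction given in Section \ref{sec:circle5}, so no gap remains.
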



\part{Minimal completions}


\selectlanguage{spanish}%
\chapter*{Introducción}

Dado un grafo $G$ y una clase de grafos $\Pi$, un problema de modificación de grafos consiste en estudiar cómo agregar o borrar minimalmente vértices o aristas de $G$ de forma tal de obtener un grafo que pertenezca a la clase $\Pi$.

Dado que los grafos pueden ser utilizados para representar diversos problemas del mundo real y estructuras teóricas, no es difícil observar que los problemas de modificación permiten modelar un amplio número de aplicaciones prácticas en distintos campos. Algunos ejemplos son: redes; álgebra numérica; biología molecular; bases de datos, etc. Es entonces natural que estos problemas sean ampliamente estudiados. 

Una clase de grafos $\Pi$ es una familia de grafos que tiene la propiedad $\Pi$, por ejemplo, $\Pi$ puede ser la propiedad de ser cordal, o planar, o perfecto, etc. 

El problema de modificación que estudiamos es el problema de $\Pi$-completación.
Una $\Pi$-completación de un grafo $G=(V,E)$ es un supergrafo $H = (V, E \cup F)$ tal que $H$ pertenece a la clase $\Pi$ y $E \cap F = \emptyset$. En otras palabras, queremos hallar un conjunto de aristas $F$ tal que, al agregarlas a $G$, el grafo resultante pertenezca a la clase $\Pi$. Las aristas en $F$ se llaman \emph{aristas de relleno}.
Una $\Pi$-completación se dice \emph{mínima }si para todo conjunto de aristas $F'$ tal que $H'= (V, E \cup F')$ pertenece a $\Pi$, entonces $|F'| \geq |F|$. Una $\Pi$-completación es \emph{minimal }si para todo subconjunto $F' \subset F$, el supergrafo $H'= (V, E \cup F')$ no pertenece a $\Pi$.

El problema de calcular una completación mínima de un grafo arbitrario a una clase de grafos específica ha sido ampliamente estudiado, ya que tiene aplicaciones en áreas tan diversas como biología molecular, álgebra computacional, y más precisamente en aquellas áreas que involucran el modelado basado en grafos donde las aristas que no están se deben a la falta de data, como por ejemplo en problemas de clustering de datos \cite{GGKS95,NSS01}.
Desafortunadamente, las completaciones mínimas de grafos arbitrarios a clases de grafos específicas como los cografos, grafos bipartitos, grafos cordales, etc., son NP-hard de computar \cite{NSS01,BBD06,Y81}. 

Por esta razón, la investigación actual en este tópico se enfoca en hallar completaciones minimales de grafos arbitrarios a clases de grafos específicas de la forma más eficiente desde el punto de vista computacional. Más aún, aunque el problema de completar minimalmente es y ha sido muy estudiado, se desconocen caracterizaciones estructurales para la mayoría de los problemas para los cuales se ha dado un algoritmo polinomial para hallar tal completación. Estudiar la estructura de las completaciones minimales puede permitir hallar algoritmos de reconocimiento eficientes. 

Las completaciones minimales de un grafo arbitrario a un grafo de intervalos y de intervalos propios han sido estudiadas en \cite{CT13,RST06}.
En estos casos particulares, una completación minimal se puede hallar en $\mathcal{O}(n^2)$ y $\mathcal{O}(n+m)$, repsectivamente. Sin embargo, no hay resultados en la literatura que se refieran a la complejidad del problema de reconocimiento, en ninguno de los dos casos.

La motivación más conocida para los problemas de Modificación Mínima de Intervalos viene de la biología molecular, y es una de las razones principales por las cuales los grafos de intervalos comenzaron a estudiarse. En un artículo de 1959 \cite{B59}, Benzer mostró fuertes evidencias de que la colección de ADN que compone a un gen bacterial es lineal, de la misma forma que la estructura de los genes en los cromosomas. Esta estructura lineal puede ser representada como intervalos en la recta real que se solapan, ergo, como un grafo de intervalos. Sin embargo, el mapeo de la estructura genética se hace por observación indirecta. Es decir, esta estructura lineal no es observada de forma directa, sino que es inferida por cómo se pueden recombinar algunos de los fragmentos del genoma original.  
Para estudiar diversas propiedades de una cierta secuencia de ADN, la pieza original de ADN es fragmentada en pedazos más pequeños. Estos fragmentos luego son clonados varias veces utilizando varios métodos biológicos, y toman el nombre de clones. En este proceso, la posición de cada clon en el pedazo original de ADN se pierde, aunque como generalmente varias copias de la misma pieza de ADN se fragmentan de diferentes formas, algunos clones se solapan. El problema de reconstruir el arreglo original de los clones en la secuencia original se denomina mapeo físico de ADN. Decidir si dos clones se solapan o no es la parte crítica en la cual pueden surgir errores, dado que es un proceso basado en información parcial. 
Sabemos que, una vez que se decide una forma de ordenar estos clones consistente en términos de solapamiento, el modelo resultando debería representar a un grafo de intervalos. Sin embargo, puede haber algún falso positivo o falso negativo, debido a la interpretación errónea de alguno de los datos. Corregir el modelo para deshacerse de inconsistencias es entonces equivalente a remover o agregar aristas al grafo que representa el set de datos, de modo tal que se convierta en un grafo de intervalos. Por supuesto, queremos cambiar el grafo original lo menos posible. Más aún, cuando todos los clones tienen el mismo tamaño, es decir, que la secuencia de ADN ha sido fragmentada en partes iguales, el grafo resultando debe ser no sólo de intervalos, sino de intervalos propios. 

Se mostró en \cite{KF79, Y81, GJ79, GGKS95} que el problema de $\Pi$-completación es NP-completo si $\Pi$ es la familia de los grafos cordales, de intervalos o de intervalos propios.


\selectlanguage{english}%
\chapter{Introduction}

Given a graph $G$ and a graph class $\Pi$, a graph modification problem consists in studying how to minimally add or delete vertices or edges from $G$ such that the resulting graph belongs to the class $\Pi$.

As graphs can be used to represent various real world and theoretical structures, it is not difficult to see that these modification problems can model a large number of practical applications in several different fields. Some examples are: networks reliability; numerical algebra; molecular biology; computer vision; and relational databases. It is thus natural that such problems have been widely studied.

A graph class $\Pi$ is a family of graphs having the property $\Pi$, for example, $\Pi$ can be the property of being chordal, or planar, or perfect, etc.

The modification problem we studied is the $\Pi$-completion problem. 
A $\Pi$-completion of a graph $G=(V,E)$ is a supergraph $H = (V, E \cup F)$ such that $H$ belongs to $\Pi$ and $E \cap F = \emptyset$. In other words, we want to find a set of edges $F$ such that, when added to $G$, the resulting graph belongs to the class $\Pi$. The edges in $F$ are referred to as \emph{fill edges}.
A $\Pi$-completion is \emph{minimum }if for any set of edges $F'$ such that $H'= (V, E \cup F')$ belongs to $\Pi$, then $|F'| \geq |F|$. A $\Pi$-completion is \emph{minimal }if for any proper subset $F' \subset F$, the supergraph $H'= (V, E \cup F')$ does not belong to $\Pi$. 


The problem of calculating a minimum completion in an arbitrary graph to a specific graph class has been rather studied, since it has
applications in areas such as molecular biology, computational algebra, and more specifically in those areas that involve mod\-elling 
based in graphs where the missing edges are due to lack of data, for example in data clustering problems \cite{GGKS95,NSS01}.
Unfortunately, minimum completions of arbitrary graphs to specific graph classes, such as cographs, bipartite graphs, chordal graphs, etc., 
have been showed to be NP-hard to compute \cite{NSS01,BBD06,Y81}. 

For this reason, current research on this topic is focused in finding minimal com\-ple\-tions of arbitrary graphs to specific graph classes in the most
efficient way possible from the computational point of view.
And even though the minimal completion problem is and has been rather studied, structural characterizations are still unknown for most 
of the problems for which a polynomial algorithm to find such a completion has been given. 
Studying the structure of minimal completions may allow to find efficent recognition algorithms.

Minimal completions from an arbitrary graph to interval graphs and proper interval graphs have been studied in \cite{CT13,RST06}.
In these particular cases, a minimal completion can be found in $\mathcal{O}(n^2)$ and $\mathcal{O}(n+m)$ respectively, but there are no results in the literature that refer to the complexity of the recognition problem in both cases.

The most well known motivation for Minimum Interval Modification problems, comes from molecular biology, and it is one of the main reasons why interval graphs started being studied in the first place. In a paper from 1959 \cite{B59}, Benzer first gave strong evidences that the collection of DNA composing a bacterial gene was linear, just like the structure of the genes themselves in the chromosome. This linear structure could be represented as overlapping intervals on the real line, and therefore as an interval graph. However, mapping of the genetic structure is done by indirect observation. That is, such linear structure is not observed directly, but it is inferred by how various fragments of the original genome can be recombined.
In order to study various properties of a certain DNA sequence, the original piece of DNA is fragmented into smaller pieces. This fragments are then cloned many times using various biological methods, and take the name of clones. In this process the position of each clone on the original stretch of DNA is lost, but since usually many copies of the same piece of DNA are fragmented in different ways, some clones will overlap. The problem of reconstructing the original arrangements of the clones in the original sequence is called physical mapping of DNA. Deciding whether two clones overlap or not is the critical part where errors may arise, since it is a process based on partial information. We know that once we decide an arrangement of these clones consistent with the overlapping, the resulting model should represent an interval graph. However, there might be some false positive or false negatives, due to erroneous interpretation of some data. Correcting the model to get rid of inconsistencies is then equivalent to remove or add edges to the graph representing the dataset, so that it becomes interval. Of course we want to change it as little as possible. Moreover, when all the clones have the same size, i.e., the DNA sequence has been fragmented in equal parts, the resulting graph should be not only interval, but proper interval.

It was shown in \cite{KF79, Y81, GJ79, GGKS95} that the minimum $\Pi$-completion problem is NP-complete if $\Pi$ is the family of chordal, interval, or proper interval graphs.

In the following sections we give some basic definitions and state some of the known structural characterizations for chordal, interval and proper interval graphs, which will be useful in the next chapter.

\section{Basic definitions}

A graph $G$ is \emph{chordal} if every cycle of length greater or equal to $4$ has a chord, which is an edge that is not part of the cycle but connects two vertices of the cycle.

\vspace{0.5mm}
We say $G$ is an \emph{interval graph } if $G$ admits an intersection model consisting of intervals in the real line. It has one vertex for
each interval in the family and an edge between every pair of vertices represented by intervals that intersect.
In particular, $G$  is a \emph{unit interval graph } if there is a model in which every interval has length 1, and $G$ is a \emph{proper interval graph} if $G$ admits a model such that no interval is properly included in any other.
Interval, unit interval and proper interval graphs are all subclasses of chordal graphs.

The neighbourhood of a vertex  $x$ in $V$ is the set $N(x) = \{ v \in V \mid v \mbox{ is adjacent to } x \}$. 
If $X \subseteq V$, we define $N_{X} (w) = \{ v \in X \subseteq V \mid v \mbox{ is adjacent to } w \}$. When $X=V$ we will simply denote it $N(w)$.

\vspace{0.5mm}
Three independent vertices form an \emph{asteroidal triple (AT) } if, for each two, there is a path $P$ from one to the other such that $P$ does not pass through a neighbor of the third one.

\vspace{0.5mm}
Let $u$ and $v$ in $V$ be two nonadjacent vertices. A set $S \subseteq V$ is a \emph{$u,v$-minimal separator} if $u$ and $v$ belong to distinct connected components in $G \left[ V \setminus S \right]$, and $S$ is minimal with this property. 
We say indistinctly that $S$ is a minimal separator if such vertices $u$ and $v$ exist.

\vspace{0.5mm}
Let $G$ and $H$ be two graphs. We say that \emph{$G$ is $H$-free} if there is no subgraph isomorphic to $H$ in $G$.

\section{Known characterizations of interval and proper interval graphs}

We now give a list of properties and characterization theorems that will be strongly used in the following chapter.

\begin{lema} \label{lema_separadores_2comp} \cite{KK98}
	Let $G=(V,E)$ be a graph, and $S \subseteq V$. Then, $S$ is a minimal separator if and only if $G \left[ V \setminus S \right]$ has at least two connected components $C_1$, $C_2$ such that $N(C_1)=N(C_2)=S$.
\end{lema}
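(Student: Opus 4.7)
The plan is to prove both directions by direct arguments about paths and components in $G[V\setminus S]$, without invoking any deeper structural results. Throughout, recall that if $C$ is a connected component of $G[V\setminus S]$ then $N(C)\subseteq S$ trivially.

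For the forward direction, suppose $S$ is a $u,v$-minimal separator, and let $C_u,C_v$ be the connected components of $G[V\setminus S]$ containing $u,v$ respectively (they are distinct by hypothesis). I would show $N(C_u)=S$ (the argument for $C_v$ being symmetric). The inclusion $N(C_u)\subseteq S$ is immediate. For the reverse inclusion, fix $s\in S$ and argue by contradiction: if $s\notin N(C_u)$, I will show $S\setminus\{s\}$ still separates $u$ from $v$, contradicting minimality of $S$. Indeed, any path from $u$ to $v$ in $G[V\setminus(S\setminus\{s\})]$ starts in $C_u$, must eventually leave $C_u$, and the first vertex it visits outside $C_u$ lies in $N(C_u)\cap(V\setminus(S\setminus\{s\}))\subseteq S\cap(V\setminus(S\setminus\{s\}))=\{s\}$; but $s\notin N(C_u)$ rules this out. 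Hence $s\in N(C_u)$ and so $N(C_u)=S$; the same argument gives $N(C_v)=S$.

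For the backward direction, suppose $C_1,C_2$ are components of $G[V\setminus S]$ with $N(C_1)=N(C_2)=S$. Pick any $u\in C_1$ and $v\in C_2$. They are nonadjacent since distinct components of $G[V\setminus S]$ cannot be joined by an edge of $G$ lying in $V\setminus S$, and $S$ clearly separates them. To see $S$ is minimal, fix any $s\in S$; I must exhibit a $u$–$v$ path in $G[V\setminus(S\setminus\{s\})]$. Since $s\in N(C_1)$ there is $u'\in C_1$ adjacent to $s$, and since $s\in N(C_2)$ there is $v'\in C_2$ adjacent to $s$. Concatenating a $u$–$u'$ path inside $C_1$, the edge $u's$, the edge $sv'$, and a $v'$–$v$ path inside $C_2$ produces the required path, completing the minimality check.

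No step here is a genuine obstacle; the argument is essentially a short exercise in the definitions of component, neighborhood, and minimal separator. The only point that requires mild care is the forward direction, where one must be precise that once $s$ is reinserted, the only gateway in $S\setminus(S\setminus\{s\})$ through which a path leaving $C_u$ could pass is $s$ itself, so the hypothesis $s\notin N(C_u)$ really does yield the contradiction with minimality. The proof therefore fits naturally into two short paragraphs, one per implication.
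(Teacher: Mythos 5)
Your proof is correct, and both implications are handled cleanly: in the forward direction the key observation that the first vertex of a putative $u$--$v$ path leaving $C_u$ must lie in $N(C_u)\cap\bigl(V\setminus(S\setminus\{s\})\bigr)\subseteq\{s\}$ is exactly the point that needs care, and in the backward direction reducing minimality to deleting a single vertex $s$ and routing through $C_1$--$s$--$C_2$ is the standard argument. Note that the paper does not prove this lemma at all (it is quoted from the cited reference), so there is no in-paper proof to compare against; your self-contained elementary argument is the expected one.
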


\begin{lema} \label{lema_separadores_KK} \cite{KK98}
	Let $G=(V,E)$ be a graph. If $a$ and $b$ are nonadjacent vertices in $G$, then there is a unique $a,b$-minimal separator $S$ such that $S \subseteq N(a)$.
\end{lema}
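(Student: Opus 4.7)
The plan is to exhibit a canonical candidate for $S$ built from the component structure of $G\setminus N(a)$, and then argue that any other minimal separator contained in $N(a)$ must coincide with it. Since $a$ and $b$ are nonadjacent, $b\notin N(a)$, so $b$ lies in some connected component $C_b$ of the subgraph $G\setminus N(a)$; note also that $a$ is isolated in $G\setminus N(a)$, so $a\notin C_b$. The natural candidate is $S := N(C_b)$, i.e.\ the set of vertices in $V\setminus C_b$ having a neighbour in $C_b$.

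First I would verify the easy half: that $S$ is an $a,b$-minimal separator contained in $N(a)$. Containment $S\subseteq N(a)$ is immediate from the definition of $C_b$ as a component of $G\setminus N(a)$. To see that $S$ separates $a$ from $b$, observe that $C_b$ is still a full component of $G\setminus S$ (removing only the boundary of $C_b$ cannot merge it with anything else), and $a\notin C_b$. For minimality, for each $v\in S$ there exists a neighbour $w\in C_b$, and since $v\in N(a)$ the path $a\,v\,w\,\cdots\,b$ reconnects $a$ and $b$ in $G\setminus(S\setminus\{v\})$, so no proper subset of $S$ separates.

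The main obstacle is uniqueness, so I would spend most of the argument there. Suppose $S'$ is any $a,b$-minimal separator with $S'\subseteq N(a)$, and let $C'$ denote the component of $G\setminus S'$ containing $b$; minimality of $S'$ (via Lemma~\ref{lema_separadores_2comp}) gives $N(C')=S'$, so it suffices to prove $C'=C_b$. Since $S'\subseteq N(a)$, we have $G\setminus S'\supseteq G\setminus N(a)$, hence $C_b\subseteq C'$. For the reverse inclusion, suppose for contradiction that some $v\in C'\setminus C_b$ exists. The vertices in $N(a)\setminus S'$ lie in the component of $a$ in $G\setminus S'$, hence outside $C'$; this forces $v\in V\setminus N(a)$, so $v$ belongs to a component of $G\setminus N(a)$ distinct from $C_b$. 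But any path from $v$ to $b$ inside $C'$ avoids $S'$, and whenever it crosses into $N(a)$ it must use a vertex of $N(a)\setminus S'$, which we just ruled out of $C'$. This contradiction yields $C'=C_b$ and therefore $S'=N(C')=N(C_b)=S$.

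Finally I would tidy up by noting that the two halves together give existence \emph{and} uniqueness of an $a,b$-minimal separator contained in $N(a)$, namely the canonical set $S=N(C_b)$, completing the proof. The delicate point throughout is the interplay between ``component of $G\setminus N(a)$'' and ``component of $G\setminus S'$'': one direction is free (fewer removed vertices), but the reverse containment $C'\subseteq C_b$ requires the careful path-tracing argument above, which is where I expect any mistakes to creep in.
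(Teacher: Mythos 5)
Your proof is correct; note that the paper itself gives no argument for this lemma (it is quoted from \cite{KK98}), and what you have written is the standard self-contained proof: the canonical separator is $N(C_b)$ for $C_b$ the component of $b$ in $G\setminus N(a)$, and uniqueness follows by showing the component of $b$ in $G\setminus S'$ must equal $C_b$ for any competitor $S'\subseteq N(a)$. The only loose stitch is your appeal to Lemma~\ref{lema_separadores_2comp} to get $N(C')=S'$: that lemma only guarantees \emph{some} two full components, not that the component of $b$ is one of them; for an $a,b$-minimal separator this is nonetheless immediate (if $s\in S'$ had no neighbour in $C'$, then $S'\setminus\{s\}$ would still leave $b$'s component equal to $C'$ and hence still separate $a$ from $b$, contradicting minimality), so the gap is cosmetic and worth one added sentence.
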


\begin{lema} \label{lema_separador_clique} \cite{D61}
	If $G = (V,E)$ is a chordal graph, then every minimal separator is a clique.
\end{lema}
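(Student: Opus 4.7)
The plan is to argue by contradiction, exhibiting a chordless cycle of length at least $4$ when the minimal separator fails to be a clique. Suppose $S$ is an $a,b$-minimal separator of a chordal graph $G$ and that $u,v \in S$ are two non-adjacent vertices. By Lemma \ref{lema_separadores_2comp}, the graph $G[V \setminus S]$ has at least two connected components $C_1, C_2$ with $N(C_1) = N(C_2) = S$, so in particular both $u$ and $v$ have at least one neighbor in $C_1$ and at least one neighbor in $C_2$.

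Next I would build, inside each of the two subgraphs $G[C_i \cup \{u,v\}]$ for $i=1,2$, a shortest $u$--$v$ path $P_i$ whose internal vertices all lie in $C_i$. Such paths exist because $C_i$ is connected and both $u,v$ have neighbors in $C_i$. Since $u$ and $v$ are non-adjacent in $G$, each $P_i$ has at least one internal vertex, so concatenating $P_1$ and the reverse of $P_2$ produces a cycle $\mathcal{C}$ of length at least $4$, passing through $u$ and $v$.

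The key step is to check that $\mathcal{C}$ admits no chord, which will contradict chordality and finish the proof. I would rule out chords case by case: an edge between an internal vertex of $P_1$ and an internal vertex of $P_2$ is impossible since these vertices lie in different components of $G \setminus S$; the edge $uv$ is excluded by hypothesis; and any chord lying entirely within a single $P_i$ (either between two internal vertices, or between an endpoint $u$ or $v$ and a non-adjacent internal vertex) would produce a strictly shorter $u$--$v$ path inside $G[C_i \cup \{u,v\}]$, contradicting the minimality of $P_i$.

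The main obstacle, and the only slightly delicate point in the argument, is the final case analysis on potential chords: one has to ensure that the shortest-path choice inside each $G[C_i \cup \{u,v\}]$ really does forbid every chord of $\mathcal{C}$ that stays on one side, including chords incident to $u$ or $v$. Once this verification is in place, the resulting chordless cycle of length $\geq 4$ in $G$ contradicts the hypothesis that $G$ is chordal, so $S$ must be a clique.
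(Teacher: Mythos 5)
Your proof is correct, and it is the classical argument for Dirac's theorem (which the paper only cites from \cite{D61} without reproducing a proof): take two non-adjacent vertices of $S$, join them by shortest paths through two components $C_1,C_2$ with $N(C_1)=N(C_2)=S$ guaranteed by Lemma~\ref{lema_separadores_2comp}, and observe that the resulting cycle of length at least $4$ is chordless. The case analysis on chords is complete --- cross-component chords are impossible, $uv$ is excluded by assumption, and any chord inside one side (including one incident to $u$ or $v$) would shorten the corresponding shortest path --- so no gap remains.
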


\begin{teo} \label{teo_caract_intAT} \cite{LB62}
	$G$ is an interval graph if and only if $G$ is chordal and AT--free.
\end{teo}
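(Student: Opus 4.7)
The plan is to prove both directions of this classical Lekkerkerker-Boland characterization, exploiting results already stated in the excerpt (Lemmas \ref{lema_separadores_2comp}, \ref{lema_separadores_KK}, and \ref{lema_separador_clique}).

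For the forward direction, I would suppose $G$ is an interval graph with interval model $\{I_v\}_{v \in V}$. Chordality follows easily: given any induced cycle $v_1,v_2,\ldots,v_k,v_1$ with $k \geq 4$, one can take the interval with the leftmost right endpoint among $I_{v_1},\ldots,I_{v_k}$, say $I_{v_i}$; its two cycle-neighbors $I_{v_{i-1}}$ and $I_{v_{i+1}}$ both contain that right endpoint, forcing $I_{v_{i-1}} \cap I_{v_{i+1}} \neq \emptyset$, contradicting induced-cycleness. For AT-freeness, I would assume $\{a,b,c\}$ is an asteroidal triple and sort the three intervals by left endpoint, say $l(I_a) \leq l(I_b) \leq l(I_c)$. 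Then any path from $a$ to $c$ must ``cross over'' $I_b$ in the sense that consecutive intervals on the path overlap, and since the union of intervals along the path is connected and contains points to the left and right of $I_b$, some interval on the path must meet $I_b$, meaning some path vertex is adjacent to $b$; this contradicts the AT definition.

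For the backward (hard) direction, I would use the clique tree machinery for chordal graphs. Since $G$ is chordal, by Lemma \ref{lema_separador_clique} its minimal separators are cliques, and $G$ admits a clique tree $T$ whose nodes are the maximal cliques of $G$ and such that for each vertex $v$, the nodes containing $v$ induce a subtree of $T$. Interval graphs are precisely the chordal graphs admitting a clique \emph{path} (a clique tree that is a path), so it suffices to show that under AT-freeness some clique tree of $G$ is a path. The strategy is by contrapositive: assume every clique tree of $G$ has a node of degree $\geq 3$, and build an asteroidal triple. Fix a clique tree $T$ and a maximal clique $K$ of degree $\geq 3$ in $T$; removing $K$ splits $T$ into at least three subtrees $T_1,T_2,T_3$, with associated vertex sets $V_i = \bigl(\bigcup_{K' \in T_i} K'\bigr) \setminus K$. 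I would pick $a_i \in V_i$ maximizing distance from $K$ in an appropriate sense, and show that $\{a_1,a_2,a_3\}$ is an AT: between any two $a_i,a_j$ one finds a path through $V_i \cup V_j$ avoiding $N(a_k)$, using Lemma \ref{lema_separadores_2comp} to ensure that the $V_i$'s are in distinct connected components of $G - S$ for the relevant minimal separators $S \subseteq K$, and Lemma \ref{lema_separadores_KK} to pin down which separator to use.

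Once the clique path is obtained, ordering its nodes $K_1,K_2,\ldots,K_m$ linearly gives each vertex $v$ a consecutive set of indices $\{i : v \in K_i\}$, and assigning $v$ the interval $[\min i, \max i]$ produces an interval representation of $G$, completing the proof.

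The main obstacle will be the AT construction in the backward direction: one must pick the three representatives $a_1,a_2,a_3$ carefully so that the connecting paths exist and simultaneously avoid the neighborhoods of the third vertex, and this requires a nontrivial argument combining the subtree-intersection property of clique trees with the separator structure guaranteed by Lemmas \ref{lema_separadores_2comp} and \ref{lema_separadores_KK}. In particular, showing that the separator between two branches is contained in $K$ and therefore misses each $a_i$ (after choosing $a_i$ far enough from $K$) is the technical heart of the argument.
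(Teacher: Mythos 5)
The paper states this theorem only as a citation to Lekkerkerker and Boland and gives no proof of its own, so there is nothing to compare against; I can only assess your argument on its merits. Your forward direction is fine: the leftmost-right-endpoint argument for chordality and the ``middle interval blocks every connecting path'' argument for AT-freeness are both standard and correct.

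The backward direction has a genuine gap at exactly the point you flag as the ``technical heart.'' You fix an \emph{arbitrary} clique tree $T$ and an \emph{arbitrary} node $K$ of degree at least $3$, and propose to extract an asteroidal triple from three vertices chosen deep in the three branches. But a branch node in some clique tree does not witness non-intervality: the star $K_{1,4}$, with center $x$ and leaves $a,b,c,d$, is an interval graph (hence AT-free) yet admits the clique tree in which $\{x,a\}$ is adjacent to each of $\{x,b\}$, $\{x,c\}$, $\{x,d\}$ and so has degree $3$. Your construction there selects $b,c,d$, and these do not form an AT, because every $b$--$c$ path passes through $x\in N(d)$. The structural reason is general: any path between two branches must cross the minimal separator between them, which is contained in $K$, and nothing prevents the third chosen vertex from being adjacent to all of that separator; choosing the $a_i$ ``far from $K$'' keeps them out of the separator but does not keep the separator out of $N(a_k)$. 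So the hypothesis you actually have --- that \emph{no} clique tree of $G$ is a path --- must enter the construction, and your sketch never says how; Lemmas \ref{lema_separadores_2comp} and \ref{lema_separadores_KK} only locate the separators, they do not control adjacency between the separator and the third branch. This is precisely where the real Lekkerkerker--Boland argument does its work, so as written the hard direction is not established. You also invoke the Fulkerson--Gross equivalence ``interval if and only if some clique tree is a path'' without proof; it is standard, but it is not among the results available in this chapter and would need its own citation or argument.
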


\begin{teo} \label{teo_caract_int_prop} \cite{J92}
The following properties are equivalent:
	\begin{itemize}
		\item $G$ is a proper interval graph
		\item $G$ is chordal and contains no claw, net or tent as induced subgraphs (See Figure \ref{fig:forb_chordal})
		\item $G$ is an interval graph and contains no claws
	\end{itemize}
\end{teo}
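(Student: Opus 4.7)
The plan is to prove the theorem by establishing the cyclic chain of implications $(1) \Rightarrow (2) \Rightarrow (3) \Rightarrow (1)$, which is the natural order given the tools available in the excerpt (in particular Theorems \ref{teo_caract_intAT} and Lemmas \ref{lema_separadores_2comp}--\ref{lema_separador_clique}).

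For $(1) \Rightarrow (2)$, I would argue directly from a proper interval model. Chordality is immediate since proper interval graphs are a subclass of interval graphs (and interval graphs are chordal by Theorem \ref{teo_caract_intAT}). To rule out each of the three forbidden subgraphs, I would fix a hypothetical proper interval model of the obstruction and derive a contradiction with the ``no proper containment'' condition. For the claw $K_{1,3}$, the central vertex's interval $I_c$ meets three pairwise disjoint intervals $I_1, I_2, I_3$; ordering them from left to right on the line forces $I_2 \subseteq I_c$ or $I_c \subseteq I_2$. The net and the tent can be excluded by analogous case analyses on the left/right endpoints of the intervals associated to their degree-$2$ independent vertices versus the intervals of the triangle.

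For $(2) \Rightarrow (3)$, since claw-freeness is already among the hypotheses, I only need to upgrade chordal to interval. By Theorem \ref{teo_caract_intAT} it suffices to show that a $\{\text{claw}, \text{net}, \text{tent}\}$-free chordal graph is AT-free. Suppose for contradiction that $\{a,b,c\}$ is an AT. By Lemma \ref{lema_separadores_KK}, for each pair there is a unique minimal separator contained in the relevant closed neighbourhood, and by Lemma \ref{lema_separador_clique} these separators are cliques. I would pick, for each pair, a shortest path in $G \setminus N(\text{third vertex})$ and study the smallest such configuration. The clique structure of the separators will force the vertices on the three paths together with $a,b,c$ to either (i) be pairwise adjacent between the inner ``connecting'' vertices, giving a tent, (ii) share a central adjacency in one of the three triangle vertices, giving a net, or (iii) produce a claw at a separator vertex that is forced to be adjacent to one vertex from each of three independent branches. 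This case analysis is the main obstacle: the bookkeeping of short paths through minimal separators, and the reduction to the three small obstructions, is where the real combinatorial work lies.

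For $(3) \Rightarrow (1)$, I would start from any interval representation of $G$ and show that claw-freeness permits a proper modification. The clean route is via the consecutive-ones property of the clique-vertex incidence matrix: in a claw-free interval graph, each vertex belongs to a set of maximal cliques that is consecutive in the standard clique ordering, and no vertex's set of maximal cliques is strictly contained in another's, because otherwise the two vertices plus two witnesses of the proper containment at the extreme cliques would form a claw. Once the maximal cliques are linearly ordered $C_1, \ldots, C_k$ and each vertex $v$ corresponds to a consecutive interval $[\ell(v), r(v)]$ of clique indices with no proper containment, I can place $v$ on the real line as the interval $[\ell(v), r(v) + \varepsilon_v]$ for suitable small perturbations $\varepsilon_v$ to break ties, producing the desired proper interval model. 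The verification that adjacencies are preserved reduces to the fact that $uv \in E$ iff their clique-index intervals meet, which is a standard property of the clique ordering in a chordal graph. I expect the first two implications, especially $(2) \Rightarrow (3)$, to require the most care, while $(3) \Rightarrow (1)$ is essentially a structural translation of the claw-free hypothesis into the language of consecutive-ones orderings.
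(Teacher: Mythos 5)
This theorem is quoted from \cite{J92}; the paper itself contains no proof of it, so there is no in-paper argument to compare your route against. Judged on its own terms, your proposal follows the standard cycle of implications, and the direction $(1)\Rightarrow(2)$ is fine. However, there are two genuine problems.

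First, the implication $(2)\Rightarrow(3)$ --- showing that a chordal, $\{$claw, net, tent$\}$-free graph has no asteroidal triple --- is the entire combinatorial content of the theorem, and you explicitly defer it (``this case analysis is the main obstacle \ldots is where the real combinatorial work lies''). A plan to ``pick shortest paths through clique minimal separators and reduce to one of three small obstructions'' is not yet a proof; the reduction of an arbitrary AT configuration to a claw, net or tent is exactly what Lekkerkerker--Boland-type arguments spend several pages on, and nothing in your sketch guarantees the case analysis closes.

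Second, the implication $(3)\Rightarrow(1)$ rests on a false claim: it is not true that in a claw-free interval graph ``no vertex's set of maximal cliques is strictly contained in another's.'' In $P_3 = a\text{--}b\text{--}c$ the clique set of $a$ is properly contained in that of $b$, yet $P_3$ is claw-free and proper interval. Your witness argument produces a claw only when the containment $[\ell(u),r(u)]\subsetneq[\ell(v),r(v)]$ is strict at \emph{both} ends of the clique ordering, since only then do you get two witnesses $w_1, w_2$ that are nonadjacent to $u$ and to each other. Claw-freeness therefore excludes only two-sided strict nesting; one-sided containments remain and must be eliminated by the final perturbation. But the perturbation you propose, $[\ell(v), r(v)+\varepsilon_v]$, moves only right endpoints, so two vertices with $\ell(u)=\ell(v)$ and $r(u)<r(v)$ still yield properly nested intervals (again $P_3$ is a counterexample). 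You need to perturb both endpoints, ordering vertices consistently by the pair $(\ell(v),r(v))$, for the construction to produce a proper model.
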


\begin{figure}[h]
\centering
\includegraphics[scale=.4]{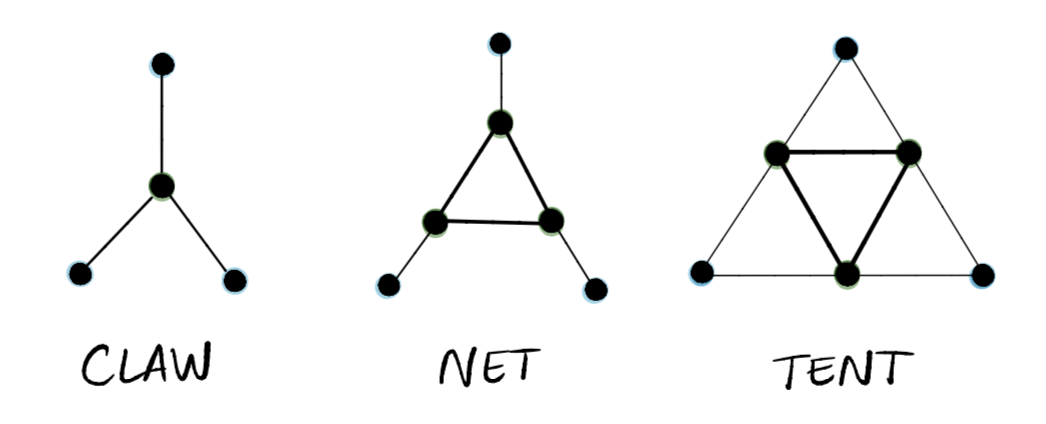}
\caption{Some of the forbidden induced subgraphs for proper interval graphs.} \label{fig:forb_chordal}
\end{figure}

\begin{teo} \label{teo_coinciden} \cite{R69}
	The class of unit interval graphs coincides with the class of proper interval graphs.
\end{teo}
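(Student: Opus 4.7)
The easy direction (unit $\Rightarrow$ proper) is immediate: if $I_u=[a,a+1]$ and $I_v=[b,b+1]$ are two unit intervals with $I_u\subseteq I_v$, then $b\le a$ and $a+1\le b+1$ force $a=b$, so the containment is not proper. Hence every unit interval representation is automatically proper, and every unit interval graph is a proper interval graph.

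For the nontrivial direction, my plan is to take a proper interval representation $\{I_v=[l_v,r_v]\}_{v\in V(G)}$, assume after a generic perturbation that all $2n$ endpoints are distinct, and then manufacture a unit interval representation $\{J_v=[x_v,x_v+1]\}$ that realises the same graph. Order the vertices $v_1,\ldots,v_n$ so that $l_{1}<l_{2}<\cdots<l_{n}$. The first key step is to prove that in this ordering the right endpoints are also increasing, $r_{1}<r_{2}<\cdots<r_{n}$: if we had $l_i<l_j$ with $r_i\ge r_j$, then $I_j\subseteq I_i$, contradicting properness. From this I deduce the consecutive-neighbourhood property: $v_i$ and $v_j$ ($i<j$) are adjacent iff $l_j\le r_i$, and this implies that for each $v_i$ both its left neighbours and its right neighbours form an interval of consecutive indices. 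Concretely, let $m_i=\min\{j\le i:v_j\sim v_i\text{ or }j=i\}$ and $M_i=\max\{j\ge i:v_j\sim v_i\text{ or }j=i\}$; then $N[v_i]=\{v_{m_i},\ldots,v_{M_i}\}$.

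With this structural fact in hand, I would construct the $x_i$'s inductively. Since $J_i\cap J_j\ne\emptyset$ iff $|x_i-x_j|\le 1$, what is needed is a strictly increasing sequence $x_1<x_2<\cdots<x_n$ satisfying, for every $i$,
\begin{equation*}
x_{m_i-1}+1<x_i\le x_{m_i}+1
\end{equation*}
(with the convention that the left inequality is vacuous when $m_i=1$). Set $x_1=0$. Given $x_1,\ldots,x_{i-1}$, the two conditions carve out a non-empty open interval because the inductive hypothesis guarantees $x_{m_i-1}<x_{m_i}$, so $x_{m_i-1}+1<x_{m_i}+1$; picking $x_i$ in $(\max(x_{i-1},x_{m_i-1}+1),\,x_{m_i}+1]$ and, if necessary, slightly below the upper bound gives a strictly increasing continuation. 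A short verification then shows that for every pair $i<j$ one has $x_j-x_i\le 1$ exactly when $j\le M_i$, i.e.\ exactly when $v_iv_j\in E(G)$, so $\{J_v\}$ is a unit interval representation of $G$.

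The main obstacle is the feasibility step in the inductive construction: one must check that the window $(\max(x_{i-1},x_{m_i-1}+1),\,x_{m_i}+1]$ is always non-empty and that the chosen $x_i$ is compatible not only with the constraint coming from $m_i$ but with every earlier index simultaneously. This reduces to showing that it suffices to enforce the single tightest left-neighbour constraint (through $m_i$) and the single tightest non-neighbour constraint (through $m_i-1$), which in turn uses the consecutive-neighbourhood property proved in the second step. As an alternative route, one could invoke Theorem \ref{teo_caract_int_prop} and argue that any interval graph without an induced claw admits, by a similar but purely combinatorial endpoint-reassignment, a unit representation; however, the explicit inductive construction above seems cleaner and avoids re-entering the forbidden-subgraph machinery.
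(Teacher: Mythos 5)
The paper does not prove this statement at all: it is quoted as a known classical result with a citation to Roberts (1969), so there is no in-paper argument to compare against. Your proposal is a correct, self-contained proof along the standard lines of Roberts' original argument: the containment $\Rightarrow$ direction is immediate, and for the converse the ordering by left endpoints, the monotonicity of right endpoints forced by properness, the consecutive closed-neighbourhood property, and the greedy placement of unit intervals subject to the two constraints $x_{m_i-1}+1<x_i\le x_{m_i}+1$ are exactly the right ingredients. Two small points to tighten: (i) the generic perturbation must be chosen so that single-point intersections are not destroyed and no new containments are created (e.g.\ by slightly lengthening intervals rather than arbitrarily jittering endpoints); and (ii) in the induction you should always select $x_i$ \emph{strictly} below $x_{m_i}+1$, since $m_{i-1}\le m_i$ only gives $x_{i-1}\le x_{m_{i-1}}+1\le x_{m_i}+1$, and equality throughout would make the window $\bigl(\max(x_{i-1},x_{m_i-1}+1),\,x_{m_i}+1\bigr]$ empty at the next step. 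With those adjustments the argument is complete.
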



\selectlanguage{spanish}%
\chapter*{Completaciones minimales de grafos de intervalos propios}

El resultado principal de este capítulo es el Teorema \ref{teo:caract_minimal_ifcase_}. En la primera sección se dan una serie de definiciones y propiedades básicas sobre separadores minimales en un grafo de intervalos.

\begin{defn} 
	Sea $G = (V, E)$ un grafo conexo, $S$ un separador minimal de $G$, y $C_i$ una componente conexa de $G \left[ V \setminus S \right]$.
	Definimos el \emph{núcleo $A_i(S)$} como el conjunto de vértices $v$ en $C_i$ tales que existe al menos un vértice $s$ en el separador $S$ de modo tal que $v$ y $s$ son adyacentes.
\end{defn}

\begin{prop} 
	Sea $G = (V, E)$ un grafo de intervalos propios conexo y sea $S$ un separador minimal de $G$. Entonces, todo núcleo $A_i$ es una clique para $i= 1,2$.
\end{prop}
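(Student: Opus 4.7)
The strategy is to argue by contradiction: assume $u, v \in A_i$ are non-adjacent, and derive a forbidden induced subgraph. Two structural facts are essential at the outset. First, because proper interval graphs are chordal, Lemma \ref{lema_separador_clique} tells us that $S$ is a clique. Second, by Lemma \ref{lema_separadores_2comp}, there is another component $C_j$ of $G \setminus V(S)$ with $N(C_j) = S$; in particular, every vertex of $S$ has a neighbor in $C_j$. From the definition of $A_i$, pick $s_u, s_v \in S$ with $s_u \sim u$ and $s_v \sim v$. The plan is then to split into two cases depending on whether $u$ and $v$ share a common neighbor in $S$.

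In the first case, suppose $s \in S$ is a common neighbor of $u$ and $v$. Choose any $w \in C_j$ adjacent to $s$. Then $\{s;\,u,v,w\}$ induces a claw: $s$ is adjacent to each of the three, while $u \not\sim v$ by assumption and $u \not\sim w$, $v \not\sim w$ because $u,v$ and $w$ lie in different components of $G \setminus V(S)$. This contradicts the claw-freeness of proper interval graphs (Theorem \ref{teo_caract_int_prop}).

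In the second case, no vertex of $S$ is a common neighbor, so we may take $s_u \neq s_v$ with $s_u \not\sim v$ and $s_v \not\sim u$, and note $s_u \sim s_v$ since $S$ is a clique. Pick $w \in C_j$ adjacent to $s_u$. The key claim is that $\{u,v,w\}$ is an asteroidal triple: the three vertices are pairwise non-adjacent (by hypothesis and by being in different components), and the three required paths are exhibited explicitly --- a $u$--$v$ path entirely inside the connected component $C_i$ (which avoids $N[w]$ since $N(w) \subseteq S \cup C_j$ is disjoint from $C_i$), the path $u$--$s_u$--$w$ (which avoids $N[v]$ because $s_u \not\sim v$), and the path $v$--$s_v$--$w$ (which avoids $N[u]$ because $s_v \not\sim u$). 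This contradicts Theorem \ref{teo_caract_intAT}, since proper interval graphs are AT-free.

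The main obstacle in this argument is justifying the last path $v$--$s_v$--$w$, which requires the edge $s_v \sim w$ that has not been guaranteed so far. The cleanest way to obtain it is an auxiliary claw argument: if instead $s_v \not\sim w$, then $\{s_u;\,u,s_v,w\}$ would induce a claw (center $s_u$, with independent leaves $u, s_v, w$, where independence of $\{u,s_v\}$ uses that we are in the second case, independence of $\{u,w\}$ and $\{s_v,w\}$ uses the component separation, except we must verify $s_v \not\sim w$ which is the hypothesis for contradiction). Once this auxiliary claw is excluded, $s_v \sim w$ follows, the asteroidal triple is complete, and both cases yield the desired contradiction, proving that $A_i$ is a clique.
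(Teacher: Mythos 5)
Your proof is correct, and in the main case it takes a genuinely different route from the paper. Your first case (two nonadjacent vertices of $A_i$ with a common neighbour $s\in S$ yield a claw $\{s;u,v,w\}$ with $w$ a neighbour of $s$ in the other component) is exactly the paper's argument in Proposition \ref{PIC_p3}, and it already disposes of $|S|=1$. For the remaining case the paper proceeds quite differently: it first proves (Proposition \ref{PIC_p4}, itself via an asteroidal-triple argument valid for all interval graphs) that any two separator vertices have a common neighbour in each nucleus, and then uses this to exhibit an induced \emph{tent} on $\{v_1,w_1,v_2,s_1,s_2,w_2\}$, contradicting Theorem \ref{teo_caract_int_prop}. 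You instead build an asteroidal triple $\{u,v,w\}$ directly, with $w$ in the opposite component, securing the needed edge $s_v w$ by an auxiliary claw centered at $s_u$; the three witnessing paths ($u$--$v$ inside $C_i$, $u$--$s_u$--$w$, $v$--$s_v$--$w$) are all correctly justified, and the contradiction comes from AT-freeness of interval graphs (Theorem \ref{teo_caract_intAT}). Your argument is more self-contained and elementary, needing only claw- and AT-freeness and never the tent; the paper's detour through Proposition \ref{PIC_p4} is heavier but yields a structural fact (common neighbourhoods of separator vertices in each nucleus) that it reuses later, e.g.\ in establishing the nuclear ordering of Proposition \ref{PIC_p6} and its corollary. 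Both proofs are valid.
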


\begin{defn} 
Sea $G$ un grafo de intervalos, $H$ una completación de $G$ a intervalos propios, y sea $e=(v,w)$ en $F$ una arista de relleno.
	\begin{enumerate}
		\item Decimos que \emph{$e$ es tipo I}, si existen un separador minimal $S$ de $H$ y un núcleo $A$ tales que $v$ y $w$ pertenecen ambos a $A$.
		\item Decimos que \emph{$e$ es tipo II}, si $e$ \textit{no es tipo I} y existen al menos un separador minimal $S$ de $H$ y un núcleo $A$ para los cuales $v$ está en $S$, $w$ está en $A$, de forma tal que si se borra $e$, entonces deja de haber un ordenamiento nuclear en $A$. 
		\item Decimos que \emph{$e$ es tipo III} si $e$ no es tipo I, existen al menos un separador minimal $S$ de $H$ y un núcleo $A$ para los cuales $v$ está en $S$, $w$ está en $A$, y para los cuales cada separador minimal $S$ y núcleo $A$, si $e$ se borra, entonces sigue existiendo un orden nuclear en $A$.
		\item Decimos que \emph{$e$ es tipo IV}, si $e$ no es tipo I y, para todo separador minimal $S$, o bien ambos $v,w \in S$ o bien ambos $v,w \not\in S$
	\end{enumerate}
\end{defn}

Esta definición induce una partición de las aristas en $F$. El resultado principal de este capítulo es la siguiente condición necesaria para que una completación a intervalos propios sea minimal cuando el grafo de input es de intervalos.

\begin{teo} \label{teo:caract_minimal_ifcase_}
	Sea $G = (V,E)$ un grafo conexo de intervalos y sea $H = (V, E \cup F)$ una completación de $G$ a intervalos propios. Si $H$ es minimal, entonces toda arista $e$ en $F$ o bien es de tipo I, o es de tipo II.
\end{teo}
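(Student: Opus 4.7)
The plan is to argue by contrapositive: assuming $H$ is a proper interval completion of $G$ and that some edge $e=(v,w)\in F$ is of type III or type IV, we will exhibit a proper subset $F'\subsetneq F$ (namely $F'=F\setminus\{e\}$) such that $H'=(V,E\cup F')$ is still a proper interval graph, contradicting minimality of $H$. By the characterizations recalled in Section~5.2 (Theorems~\ref{teo_caract_intAT} and~\ref{teo_caract_int_prop}), it suffices in each case to check that deleting $e$ from $H$ preserves chordality, AT-freeness, and claw-freeness, or equivalently that the resulting graph still admits a proper interval model; I will use the nuclear-ordering viewpoint (cores of minimal separators must be linearly orderable in a way compatible with neighbourhoods), which is precisely what the type definition is built around.

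First I would dispose of the type IV case. If $e=(v,w)$ is type IV then $e$ is not type I and for every minimal separator $S$ of $H$, either $\{v,w\}\subseteq S$ or $\{v,w\}\cap S=\emptyset$. I would show that in this situation the edge $e$ lies entirely inside a single ``clique block'' of the proper interval structure (either inside a clique separator, or inside one of the connected components of $H\setminus S$ disjoint from the core $A(S)$ at the relevant side), and so the interval assignment of $H$ can be locally perturbed to shrink exactly the interval of one endpoint so that $v$ and $w$ no longer overlap while keeping all other overlaps and non-overlaps intact. Concretely I would take a proper interval model of $H$ given by Theorem~\ref{teo_caract_int_prop}, translate the adjacency condition on minimal separators into the statement ``removing $e$ does not create a new asteroidal triple and does not break any cycle into a chordless $C_k$ with $k\geq 4$'', and verify both via Lemmas~\ref{lema_separadores_2comp}, \ref{lema_separadores_KK} and~\ref{lema_separador_clique}: any new chordless cycle or AT in $H-e$ would force $v,w$ to lie on opposite sides of some minimal separator, contradicting type IV.

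Next I would handle type III. Here $e$ crosses some minimal separator in the sense that $v\in S$, $w\in A(S)$ for a core $A$, but by definition, for every such pair $(S,A)$, a nuclear ordering of $A$ still exists after deleting $e$. Using the earlier proposition that each core is a clique in a proper interval graph, together with the fact that a proper interval graph is determined (up to reversal of components) by the linear orderings of its cores consistent with the separator incidences, I would argue that the local ordering on each affected core can be preserved after removing $e$: the hypothesis ``a nuclear ordering still exists in $A$'' means the linear order witnessing properness is not broken at $A$, and since $e$ is not of type I the endpoints do not share a core either, so no intra-core chord is destroyed. Combining these local verifications across all separators $S$ that $e$ meets, I would conclude that $H-e$ still admits a proper interval model, again contradicting the minimality of $H$.

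The main obstacle I anticipate is the type III case: unlike type IV, the edge $e$ genuinely straddles a separator, so one has to be careful that ``a nuclear ordering still exists'' is not merely a local property of one pair $(S,A)$ but can be pieced together consistently over all pairs. The delicate step will be to show that the local nuclear orderings obtained after removing $e$ can be glued into a single linear interval representation without reintroducing a claw, a net, a tent, an asteroidal triple, or a chordless cycle $C_k$ with $k\geq 4$. My intended tool for the gluing is the fact that proper interval graphs are exactly the chordal claw-free graphs (Theorem~\ref{teo_caract_int_prop}): I would check separately that $H-e$ remains chordal (using that $e$ is not of type I, so $v$ and $w$ do not share a core-clique that $e$ would be completing) and remains claw-free (using that, were a claw centred at $v$ or $w$ to appear in $H-e$, its structure would force the existence of a minimal separator with respect to which the type III hypothesis fails, i.e., the nuclear ordering of the corresponding core would collapse upon removal of $e$). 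Once these two properties are established, net- and tent-freeness follow from the nuclear-ordering hypothesis applied at the separators meeting the forbidden subgraph, completing the argument.
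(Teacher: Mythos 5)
There is a genuine gap in your treatment of type IV edges. Your plan is to take $F'=F\setminus\{e\}$ in \emph{both} cases and show $H-e$ is still proper interval. For type III this matches the paper's strategy (and your checklist -- chordality, AT-freeness, claw-freeness, with each failure forcing either a forbidden structure already present in $H$ or forcing $e$ to be type I -- is essentially how the paper argues). But for type IV the single-edge deletion cannot work. If $e=(s_1,s_2)$ with both endpoints in every minimal separator that meets $e$, the paper first shows (via Lemma~\ref{lema_separadores_KK}) that $s_1$ and $s_2$ must be \emph{universal} vertices of $H$. Then for any $a_1\in A_1(S)$ and $a_2\in A_2(S)$, the set $\{s_1,a_1,s_2,a_2\}$ induces a chordless $C_4$ in $H-e$, since $a_1$ and $a_2$ lie in different components of $H\setminus S$ and are therefore nonadjacent. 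So $H-e$ is not even chordal. Your claim that ``any new chordless cycle or AT in $H-e$ would force $v,w$ to lie on opposite sides of some minimal separator, contradicting type IV'' is exactly where this breaks: the $C_4$ above has $v$ and $w$ both inside $S$ and the other two vertices on opposite sides, which is perfectly compatible with the type IV hypothesis. Likewise, the proposed local perturbation of the interval model cannot shrink the interval of a universal vertex away from another universal vertex without destroying other adjacencies.

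The missing idea is that in the type IV case one must delete a \emph{larger} proper subset of $F$, not just $e$. The paper shows that chordality of $G$ forces a partition $B_i,B_j$ of one nucleus such that all edges from $s_1$ to $B_j$ and from $s_2$ to $B_i$, as well as all edges between $B_i$ and $B_j$, are fill edges; it then deletes $e$ together with these sets (the set $J=F_1\cup F_{i,j}\cup\{e\}$, or the variants $J_1,J_2$ when a third separator vertex $s_3$ interferes) and verifies that the resulting graph $X_{i,j}$ still contains $G$, is chordal, AT-free and claw-free. Without some construction of this kind -- identifying which other fill edges must accompany $e$ out of the completion -- the contrapositive argument for type IV does not go through. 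Your type III outline, by contrast, only needs the bookkeeping of how the induced nuclear ordering degrades when $e$ is removed (the paper's Remark~\ref{obs:nuclearordering_t3}), which your plan implicitly contains.
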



\selectlanguage{english}%
\chapter{Minimal completion of proper interval graphs} \label{chapter:completions}

In this chapter, we study how to structurally characterize a minimal completion of an interval graph to a proper interval graph. In Section \ref{section:preliminares_completaciones}, we define and characterize some orderings for the vertices that are strongly based in the minimal separators of an interval graph. In Section \ref{section:teorema_completaciones}, we define the types of edges that can be found in any completion of an interval graph. Afterwards, we state and prove a necessary condition for a minimal completion in this particular case.

\section{Preliminaries} \label{section:preliminares_completaciones}

In this section, we will start giving some definitions and properties that will allow us to describe in the next section all the types of edges that can be found in a completion of an interval graph and state Theorem \ref{teo:caract_minimal_ifcase}. These definitions and properties include a necessary condition regarding the ordering of the vertices for any proper interval graph.

The following property allows us to assume from now on that the graph $G$ is connected.

\begin{prop} \cite{LMP10} \label{PIC_p1}
	Let $G = (V, E)$ be a graph, let $\mathcal{C} (G)= \{C_1, \ldots, C_k \}$ be the set of all connected components of $G$ and let $H = (V, E \cup F)$ be a $\Pi-$completion. 
	Then, $H$ is a minimal $\Pi-$completion of $G$ if and only if $H \left[ C_i \right]$
	is a minimal $\Pi$-completion of $G \left[ C_i \right]$ for every connected component $C_i \in \mathcal{C} (G)$.
\end{prop}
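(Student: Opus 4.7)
The plan is to exploit two basic properties of the class $\Pi$ of proper interval graphs (which also hold for chordal and interval graphs): \emph{heredity} (induced subgraphs of a $\Pi$-graph lie in $\Pi$) and \emph{closure under disjoint union} (obvious by placing interval representations side by side). The proof is essentially a bookkeeping argument once these two ingredients are available.

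First I would establish a preliminary claim: any minimal $\Pi$-completion $H = (V, E \cup F)$ has no fill edge between different connected components of $G$. Indeed, suppose $F$ contained such ``cross-edges'' and let $F^\star$ be the set obtained from $F$ by deleting them. Then $G + F^\star$ is exactly the disjoint union $\bigsqcup_i H[C_i]$; each $H[C_i]$ lies in $\Pi$ by heredity, so $G + F^\star \in \Pi$ by closure under disjoint union. Since $F^\star \subsetneq F$, this contradicts the minimality of $H$. Consequently, for any minimal $\Pi$-completion one may write $F = \bigcup_i F_i$ with $F_i \subseteq C_i \times C_i$.

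For the forward direction, assume $H$ is a minimal $\Pi$-completion. By the claim, $F$ decomposes as above, so $H[C_i] = G[C_i] + F_i \in \Pi$ for each $i$. If some $H[C_j]$ were not minimal, a proper subset $F'_j \subsetneq F_j$ would satisfy $G[C_j] + F'_j \in \Pi$. Replacing $F_j$ by $F'_j$ and keeping the other $F_i$'s untouched yields a set $F' \subsetneq F$ such that $G + F'$ is a disjoint union of members of $\Pi$, hence in $\Pi$, contradicting minimality of $H$.

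For the backward direction, assume each $H[C_i]$ is a minimal $\Pi$-completion of $G[C_i]$. The preliminary claim, applied to $H$, forces $F$ to have no cross-edges (otherwise removing them would exhibit a strictly smaller $\Pi$-completion, contradicting the assertion we want to prove). Thus $F = \bigsqcup_i F_i$. Now suppose for contradiction that some $F' \subsetneq F$ gives $G + F' \in \Pi$, and pick $e \in F \setminus F'$ lying in some $C_j$. Setting $F'_j = F' \cap (C_j \times C_j)$, we have $F'_j \subsetneq F_j$, and $G[C_j] + F'_j = (G + F')[C_j] \in \Pi$ by heredity, contradicting the minimality of $H[C_j]$. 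I expect the main obstacle to be presenting the cross-edge argument crisply, since it is needed in both directions and is the only place where closure under disjoint union (as opposed to mere heredity) enters the picture.
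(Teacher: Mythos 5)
Your preliminary claim and your forward direction are correct, and in fact more careful than the paper's own two-line argument: you correctly isolate the two properties of $\Pi$ that matter (heredity and closure under disjoint union) and you handle the possibility of fill edges between components, which the paper silently ignores.

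The backward direction, however, contains a genuine gap, and it is exactly at the step you flagged as delicate. You write that the preliminary claim ``forces $F$ to have no cross-edges (otherwise removing them would exhibit a strictly smaller $\Pi$-completion, contradicting the assertion we want to prove).'' That is circular: the preliminary claim applies only to completions already known to be minimal, and in the backward direction minimality of $H$ is the conclusion, not a hypothesis. A contradiction with ``the assertion we want to prove'' is not a contradiction with anything you have assumed. Worse, the implication genuinely fails if cross-edges are allowed: take $G$ to be two isolated vertices $u,v$ and $F=\{uv\}$, so $H=K_2$. Each component $C_i$ is a single vertex, $H[C_i]=G[C_i]$ is a (trivially) minimal $\Pi$-completion with empty fill, yet $H$ is not a minimal $\Pi$-completion of $G$ since $F'=\emptyset$ already works. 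So the backward direction needs the extra hypothesis that every edge of $F$ has both endpoints in a single $C_i$; with that hypothesis your final paragraph (picking $e\in F\setminus F'$ inside some $C_j$ and restricting $F'$ to $C_j$) goes through verbatim. You should state this hypothesis explicitly rather than try to derive it. For what it is worth, the paper's own proof makes the same silent assumption (``Since there are no edges between distinct components\ldots''), and the proposition is only ever used in the paper to reduce to the case of connected $G$, for which the restricted statement suffices.
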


\vspace{0.5mm}
\begin{defn} \label{PIC_def1}
	Let $G = (V, E)$ a connected graph, $S$ a minimal separator of $G$, and let $C_i$ be a connected component of $G \left[ V \setminus S \right]$.
	We define the \emph{nucleus $A_i(S)$} as the set of vertices $v$ in $C_i$ for which there is at least one vertex $s$ in the separator $S$ such that $v$ and $s$ are adjacent.
		
	In this regard, $A_i(S)$ will refer as needed in each case by abuse of language both of the vertex set $A_i(S)$ and the induced subgraph $G \left[ A_i(S) \right]$.
	Moreover, we will use $A_i = A_i(S)$ whenever it is clear which is the minimal separator. 
\end{defn}

\begin{prop} \label{PIC_p2}
	Let $H = (V, E)$ be a connected proper interval graph. 
	Then, for every minimal separator $S$ of $H$, the subgraph $H \left[ V \setminus S \right]$ has exactly two connected components.
\end{prop}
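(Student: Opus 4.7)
The plan is to combine the structural description of minimal separators given by Lemma~\ref{lema_separadores_2comp} with the claw-freeness of proper interval graphs (Theorem~\ref{teo_caract_int_prop}) to rule out the existence of a third connected component in $H[V\setminus S]$.

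First, I would invoke Lemma~\ref{lema_separadores_2comp} to ensure that $H[V\setminus S]$ has at least two connected components, say $C_1$ and $C_2$, both satisfying $N(C_1)=N(C_2)=S$; I will call such components \emph{full}. This gives the ``at least two'' half of the claim for free, so the real task is to show the count is not greater than two.

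Next, I would argue for a contradiction: suppose there exists a third connected component $C_3$ of $H[V\setminus S]$. Because $H$ is connected, $C_3$ must have at least one neighbor in $S$, so I can pick some $s\in S$ adjacent to a vertex $a_3\in C_3$. The key step is then to exploit the fullness of $C_1$ and $C_2$: since $N(C_1)=N(C_2)=S$, the vertex $s$ has neighbors $a_1\in C_1$ and $a_2\in C_2$. The three vertices $a_1,a_2,a_3$ live in pairwise distinct connected components of $H[V\setminus S]$, so they are pairwise non-adjacent in $H$. Therefore $\{s,a_1,a_2,a_3\}$ induces a claw $K_{1,3}$ in $H$.

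Finally, I would close the argument by invoking Theorem~\ref{teo_caract_int_prop}, which states that a proper interval graph is claw-free, contradicting the existence of the induced claw above. Hence no third component can exist, and $H[V\setminus S]$ has exactly two connected components. I do not expect any substantial obstacle here: the proof is essentially a one-step application of claw-freeness once the two full components supplied by Lemma~\ref{lema_separadores_2comp} are in place. The only minor point worth being careful about is making explicit that any third component (full or not) still must attach to $S$ via some vertex $s$, and that the fullness of $C_1$ and $C_2$ at the \emph{same} $s$ is what manufactures the claw.
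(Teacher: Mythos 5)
Your proposal is correct and follows essentially the same argument as the paper: both invoke Lemma~\ref{lema_separadores_2comp} to obtain two full components, then build a claw from a vertex $s\in S$ adjacent to one vertex in each of the three components, contradicting claw-freeness from Theorem~\ref{teo_caract_int_prop}. No issues.
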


\begin{proof}
		By Lemma \ref{lema_separadores_2comp}, there are at least two distinct connected components $C_1,C_2$ of $H \left[ V \setminus S \right]$ such that $N(C_1)=N(C_2)=S$.
	Toward a contradiction, let $C_3$ be a nonempty connected component of $H \left[ V \setminus S \right]$ such that $C_1 \neq C_3$ and $C_2 \neq C_3$.

		Notice that, if we consider any three vertices $x_i$ in $C_i$ for each $i=1, 2, 3$, then these vertices are nonadjacent.
		Since $C_3$ is nonempty and $H$ is connected, there are vertices $v_3$ in $C_3$ and $s$ in $S$ such that $v_3$ is adjacent to $s$.
		Similarly, let $v_1$ in $C_1$ and $v_2$ in $C_2$ such that $v_1$ and $v_2$ are both adjacent to the vertex $s$.
		Hence, the set $ \{ v_1,v_2,v_3, s \}$ induces a claw and this contradicts the hypothesis of $H$ being a proper interval graph.
\end{proof}

By proposition \ref{PIC_p2}, we will assume from now on that, if $H$ is a connected proper interval graph, then for every minimal separator $S$ of $H$, the subgraph $H \left[ V \setminus S \right]$ has \emph{exactly} two connected components.


\begin{prop} \label{PIC_p3}
	Let $H = (V, E)$ be a connected proper interval graph, $S$ a minimal separator of $H$ and let $A_i$ be a nucleus of the separator $S$, for $i=1,2$.
	For every pair of vertices $v, w$ in $A_i$ with a common neighbour $s$ in $S$, then $(v,w)$ is an edge of $E$.
\end{prop}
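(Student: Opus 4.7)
The plan is to exploit the fact that proper interval graphs are claw-free (by Theorem \ref{teo_caract_int_prop}) together with the structural information about the minimal separator $S$ provided by Proposition \ref{PIC_p2} and Lemma \ref{lema_separadores_2comp}.

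First, I would fix $v, w \in A_i$ with a common neighbour $s \in S$, and suppose toward a contradiction that $v$ and $w$ are nonadjacent in $H$. By Proposition \ref{PIC_p2}, the graph $H[V \setminus S]$ has exactly two connected components $C_1, C_2$; say $v, w \in C_i$. By Lemma \ref{lema_separadores_2comp} applied to the minimal separator $S$, we have $N(C_1) = N(C_2) = S$, so in particular the vertex $s \in S$ must have at least one neighbour in the other component $C_{3-i}$. Call such a neighbour $u \in A_{3-i}$ (it lies in the nucleus by definition).

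Next, I would analyse the induced subgraph on $\{s, v, w, u\}$. Since $v, w$ lie in $C_i$ and $u$ lies in $C_{3-i}$, and $C_i$ and $C_{3-i}$ are distinct connected components of $H[V \setminus S]$, there are no edges in $H$ between $u$ and $\{v, w\}$. By construction, $s$ is adjacent to each of $v$, $w$, and $u$. Together with the assumption that $v$ and $w$ are nonadjacent, this means that $\{s, v, w, u\}$ induces a claw in $H$ centred at $s$.

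Finally, this contradicts Theorem \ref{teo_caract_int_prop}, which states that every proper interval graph is claw-free. Hence the initial assumption fails, and $(v, w) \in E$. There is no real obstacle here: the main ingredient is noticing that the ``other component'' always supplies a third independent neighbour of $s$ needed to complete the claw. The result is really just the combination of the minimal-separator structure with claw-freeness, and no subtle case analysis is required.
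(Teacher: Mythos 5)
Your proof is correct and follows essentially the same route as the paper: assume $v,w$ nonadjacent, produce a neighbour of $s$ in the opposite nucleus, and exhibit a claw centred at $s$, contradicting claw-freeness of proper interval graphs. Your invocation of Lemma~\ref{lema_separadores_2comp} to guarantee that $s$ has a neighbour in the other component is in fact a slightly more careful justification than the paper's, which simply asserts that such a vertex of $A_2$ adjacent to $s$ exists.
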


\begin{proof}
	Suppose to the contrary that $v$ and $w$ in $A_1$ are both adjacent to some vertex $s$ in $S$, and that the edge $(v,w)$ is not in $E$.
	 
	Since $S$ is a minimal separator and $H$ is connected, then $A_2$ is nonempty. Thus, let $z$ in $A_2$ such that $z$ is adjacent to $s$.
	Hence, the set $\{ v,w,s,z \}$ induces a claw in $H$ and this results in a contradiction.
\end{proof}

\begin{cor} \label{PIC_cor1}
	Under the previous hypothesis, if $|S| = 1$, then $A_i$ is a clique for $i=1, 2$.
\end{cor}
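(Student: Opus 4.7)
The plan is to derive this corollary as a direct and immediate consequence of Proposition \ref{PIC_p3}. The key observation is that when $|S|=1$, the definition of nucleus collapses to a very strong condition: every vertex of $A_i$ must be adjacent to the unique vertex of $S$, which means that \emph{any} two vertices of $A_i$ share a common neighbor in $S$.

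First I would write $S = \{s\}$ and unpack the definition of $A_i$. By Definition \ref{PIC_def1}, each vertex $v \in A_i$ lies in the connected component $C_i$ of $H[V \setminus S]$ and has at least one neighbor in $S$; since $S$ contains only the vertex $s$, this neighbor must be $s$ itself. Thus every vertex of $A_i$ is adjacent to $s$.

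Then, for any two vertices $v, w \in A_i$, both $v$ and $w$ are adjacent to the common vertex $s \in S$. Applying Proposition \ref{PIC_p3} directly with this common neighbour $s$, we conclude that $(v,w) \in E$. Since this holds for every pair of vertices in $A_i$, the set $A_i$ induces a clique in $H$, for each $i = 1, 2$.

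There is no real obstacle here: the corollary is essentially a re-reading of Proposition \ref{PIC_p3} in the degenerate case $|S|=1$, where the hypothesis of sharing a common neighbour in $S$ is automatically satisfied by every pair of vertices in the nucleus. The proof therefore reduces to two lines of text invoking the previous proposition.
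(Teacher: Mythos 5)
Your proof is correct and is exactly the argument the paper intends: the corollary is left as an immediate consequence of Proposition \ref{PIC_p3}, since with $S=\{s\}$ every vertex of $A_i$ is adjacent to $s$ and hence every pair in $A_i$ shares the common neighbour $s$. Nothing is missing.
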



\begin{prop} \label{PIC_p4}
	Let $G=(V,E)$ be an interval graph, $S$ a minimal separator of $G$ such that $|S|>1$, and let $A$ be a nucleus of the separator $S$. 
	
	If $s_1$ and $s_2$ in $S$, then $N_{A} (s_1) \cap N_{A} (s_2)$ is a nonempty set.
\end{prop}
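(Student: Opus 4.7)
The plan is to work with an interval representation $\{I_v\}_{v\in V}$ of $G$ and exploit the fact that $S$ is a clique to locate a common anchor point shared by all intervals of $S$. Since $G$ is chordal (interval graphs are chordal) and $S$ is a minimal separator, Lemma~\ref{lema_separador_clique} gives that $S$ is a clique; the intervals $\{I_s\}_{s\in S}$ are therefore pairwise intersecting, and by the Helly property for intervals on the line they share a common point $p$. I would then partition $V = L \cup T \cup R$ where $T = \{v : p \in I_v\}$, $L = \{v : R_v < p\}$, $R = \{v : L_v > p\}$. Clearly $S \subseteq T$, the set $T$ is a clique, and there are no edges between $L$ and $R$.

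First I would fix which of the two nuclei we are considering, say $A = A_1(S)$, contained in the component $C_1$ of $G[V\setminus S]$. The argument then splits into two cases. In the easy case, $C_1$ meets $T \setminus S$: any $t \in C_1 \cap (T \setminus S)$ lies in $A_1$ (it is adjacent to every element of $S$ because $T$ is a clique, in particular to $s_1$), and for the same reason $t \in N_A(s_1) \cap N_A(s_2)$, finishing the case.

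In the remaining case $C_1 \cap (T \setminus S) = \emptyset$, so $C_1 \subseteq L \cup R$. I would then argue that $C_1$ lies entirely in one of $L, R$: any path in $G[V \setminus S]$ joining an $L$-vertex to an $R$-vertex would need to pass through some vertex in the ``middle'' (no edges between $L$ and $R$), but the only such vertices available after removing $S$ lie in $T \setminus S$, which does not meet $C_1$. Assume WLOG $C_1 \subseteq L$, so $R_u < p$ for every $u \in A$. For any $s \in S$ we have $L_s \leq p \leq R_s$, hence $I_u \cap I_s \neq \emptyset$ if and only if $L_s \leq R_u$.

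Finally, order $s_1, s_2$ so that $L_{s_1} \leq L_{s_2}$. By Lemma~\ref{lema_separadores_2comp}, $s_2$ has some neighbor in $C_1$, and any such neighbor lies in $A$, so $N_A(s_2) \neq \emptyset$; pick $u \in N_A(s_2)$. Then $L_{s_2} \leq R_u$, and since $L_{s_1} \leq L_{s_2} \leq R_u$, the same $u$ is adjacent to $s_1$, giving $u \in N_A(s_1) \cap N_A(s_2)$. The symmetric argument handles $C_1 \subseteq R$. The main subtlety I expect is the clean justification that $C_1$ cannot straddle $p$ without meeting $T \setminus S$; once that is pinned down, the interval-endpoint comparison is immediate, and the hypothesis $|S|>1$ is only used to make the statement nontrivial (the case $s_1 = s_2$ being handled by the nucleus definition).
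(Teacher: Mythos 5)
Your proof is correct, but it takes a genuinely different route from the paper's. The paper argues combinatorially: assuming no common neighbour exists, it takes $v_1\in N_A(s_1)$ and $v_2\in N_A(s_2)$ and shows that if they are nonadjacent one obtains an asteroidal triple (using a second full component of $G[V\setminus S]$ to supply the third path), while if they are adjacent the cycle $v_1 s_1 s_2 v_2$ is an induced $C_4$; either way the chordal-and-AT-free characterization of interval graphs (Theorem~\ref{teo_caract_intAT}) is violated. You instead fix an interval model, use the Helly property of the clique $S$ to get a common point $p$, and split $V$ into the intervals containing $p$ and those lying entirely to its left or right; the component carrying $A$ either meets the middle class (immediate win) or sits wholly on one side, where adjacency to $S$ reduces to a single endpoint comparison and monotonicity of the left endpoints finishes the argument. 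Your version is more elementary and arguably more transparent, since it never needs the AT machinery; the paper's avoids committing to a particular representation. One caveat, shared by both arguments: you invoke Lemma~\ref{lema_separadores_2comp} to get a neighbour of $s_2$ in $C_1$, which requires $N(C_1)=S$, and that lemma only guarantees this for two of the components of $G[V\setminus S]$. So strictly both proofs tacitly assume $A$ is a nucleus of one of those full components --- the paper does the same when it asserts that $N_A(s_1)$ and $N_A(s_2)$ are nonempty and draws $x_1,x_2$ from $A_2$ --- and it would be worth flagging that hypothesis explicitly.
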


\begin{proof} 
	Let $s_1$ and $s_2$ in $S$. Suppose there are two nonadjacent vertices $v_1$ and $v_2$ in $A_1$ such that $s_1$ is adjacent to $v_1$ and nonadjacent to $v_2$, and $s_2$ is adjacent to $v_2$ and nonadjacent to $v_1$. 
	Since $C_1$ is connected, there is a simple path $\mathcal{P}$ in $C_1$ that joins $v_1$ and $v_2$.
If there is a vertex in $\mathcal{P}$ nonadjacent to either $s_1$ or $s_2$, then we find a cycle of length greater or equal than $4$. In particular, the same holds if $\mathcal{P} \cap (C_1 \setminus A_1)$ is nonempty because $s_1$ and $s_2$ are adjacent. 
	
	Hence, suppose that $\mathcal{P} \subseteq A_1$ and every vertex in $\mathcal{P}$ is adjacent to both $s_1$ and $s_2$.
Since $S$ is a minimal separator, there are vertices $x_1$ and $x_2$ in $A_2$ such that $x_i$ is adjacent to $s_i$ for $i=1,2$.
	In particular, since $x_1$ and $x_2$ are both in $C_2$ --which is a connected component of $G\left[ V\setminus S \right]$--, there is a path $\mathcal{P}^{'}$ joining $x_1$ and $x_2$ such that $\mathcal{P}^{'}$ is entirely contained in $C_2$.
	
	We claim that the set $\{x_1, v_1, v_2 \}$ induces an $AT$. It is clear that $x_1$, $v_1$ and $v_2$ are three independent vertices.
	If $x_1$ is also adjacent to $s_2$, then we have the path $\mathcal{P} \subseteq A_1$ connecting $v_1$ and $v_2$, and the following paths: 
	$$ \mathcal{P}_1 : x_1 \rightarrow s_1 \rightarrow v_1 $$ 
	$$\mathcal{P}_2 : x_1 \rightarrow s_2 \rightarrow v_2 $$ 
 
	The proof is analogous if $x_1 = x_2$. If instead $x_1$ is nonadjacent to $s_2$, then we have $\mathcal{P}$ joining $v_1$ and $v_2$, 
	$\mathcal{P}_1$ defined as above joining $x_1$ and $v_1$, and the path:
	$$ \mathcal{P}_2 : x_1 \xrightarrow[]{\mathcal{P}^{'}} x_2 \rightarrow s_2 \rightarrow v_2 $$ 	
 	
 	and thus $G$ is not an interval graph, which results in a contradiction. Hence, the vertices $v_1$ and $v_2$ are adjacent.
	However, since $v_1$ is adjacent to $s_1$, $v_2$ is adjacent to $s_2$, and $s_1$ is adjacent to $s_2$ for $S$ is a minimal separator of a chordal graph, either $v_1$ is adjacent to $s_2$, or $v_2$ is adjacent to $s_1$ and therefore $N_{A_1} (s_1) \cap N_{A_1} (s_2)$ is nonempty.
\end{proof}

\begin{cor}
	Under the hypothesis of Proposition \ref{PIC_p4}, if $N_{A_i} (s) = \{ v \}$, then $v$ is complete to $S$.
\end{cor}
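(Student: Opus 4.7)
The plan is to derive this corollary as a direct consequence of Proposition \ref{PIC_p4}, since the hypothesis already forces the pairwise intersections of neighbourhoods in $A_i$ to be nonempty.

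First, I would fix an arbitrary vertex $s' \in S$ with $s' \neq s$, aiming to show that $v$ is adjacent to $s'$. Applying Proposition \ref{PIC_p4} to the pair $s, s' \in S$, I obtain that $N_{A_i}(s) \cap N_{A_i}(s')$ is nonempty. Now invoking the hypothesis $N_{A_i}(s) = \{v\}$, the intersection $\{v\} \cap N_{A_i}(s')$ must itself be nonempty, which forces $v \in N_{A_i}(s')$. In other words, $v$ is adjacent to $s'$.

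Since $s'$ was an arbitrary vertex of $S \setminus \{s\}$ and $v$ is adjacent to $s$ by the definition of $N_{A_i}(s)$, I conclude that $v$ is adjacent to every vertex of $S$, i.e., $v$ is complete to $S$. There is no real obstacle here: the corollary is essentially a pointwise specialization of Proposition \ref{PIC_p4}, and the only thing to check is that the hypothesis $|S|>1$ from the proposition is inherited, which it is by assumption. The proof should occupy only a few lines in the final write-up.
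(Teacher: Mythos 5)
Your proof is correct and is exactly the immediate specialization of Proposition \ref{PIC_p4} that the paper intends (the corollary is stated without proof precisely because of this). Nothing to add.
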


\begin{prop} \label{PIC_p5}
	Let $G = (V, E)$ be a connected proper interval graph. 
	If $S$ is a minimal separator of $G$ such that $|S|>1$, then, for $i= 1,2$, every nucleus $A_i$ is a clique.
\end{prop}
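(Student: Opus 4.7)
The plan is to argue by contradiction: I would suppose that some two distinct vertices $v, w \in A_1$ are nonadjacent and aim to exhibit an asteroidal triple in $G$, which would contradict the fact that $G$ is interval, hence AT-free by Theorem \ref{teo_caract_intAT}. The argument for $A_2$ is symmetric.

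First, since $v, w \in A_1$, pick $s_v, s_w \in S$ with $v \sim s_v$ and $w \sim s_w$. Proposition \ref{PIC_p3} handles the easy cases: if $s_v = s_w$, or $v \sim s_w$, or $w \sim s_v$, then $v$ and $w$ have a common neighbor in $S$, forcing $v \sim w$, a contradiction. So I may assume $s_v \neq s_w$, $v \nsim s_w$, and $w \nsim s_v$. This is precisely the setting in which I can invoke Proposition \ref{PIC_p4} (using the hypothesis $|S|>1$) to obtain a vertex $z \in N_{A_1}(s_v) \cap N_{A_1}(s_w)$. The cases $z=v$ and $z=w$ would each place us back in the common-neighbor situation, so $z \neq v, w$; then Proposition \ref{PIC_p3} applied twice gives $z \sim v$ and $z \sim w$.

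Next, I would pick a witness $x_v \in A_2$ adjacent to $s_v$, which exists because $S$ is a minimal separator and therefore $s_v$ has neighbors in $C_2$ (all such neighbors belong to $A_2$ by definition). I claim $\{v, w, x_v\}$ is an asteroidal triple. Pairwise non-adjacency holds: $v \nsim w$ by assumption, while $v \nsim x_v$ and $w \nsim x_v$ because they lie in different components of $G[V \setminus S]$. For the avoidance paths, the two short ones are immediate: $v - z - w$ avoids $N[x_v]$ because $z \in A_1 \subseteq C_1$ so $z \nsim x_v$, and $v - s_v - x_v$ avoids $N[w]$ since $s_v \nsim w$ by our reduction (and $v, x_v \nsim w$).

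The main obstacle is the third path, from $w$ to $x_v$ avoiding $N[v]$, because we cannot always use the naive two-step $w - s_w - x_v$ (it requires $s_w \sim x_v$, which need not hold). I would resolve this by picking $x_w \in A_2$ adjacent to $s_w$, and exploiting connectivity of the component $C_2$ to find a path in $C_2$ from $x_w$ to $x_v$, yielding the concatenated path $w - s_w - x_w - \cdots - x_v$. This path avoids $N[v]$ because $s_w \nsim v$ by our reduction, and every intermediate vertex lies in $C_2$, whose vertices are neither equal nor adjacent to $v \in C_1$. With the asteroidal triple constructed, Theorem \ref{teo_caract_intAT} delivers the contradiction and the proof is complete.
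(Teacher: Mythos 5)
Your proof is correct. The setup coincides with the paper's: both arguments take two nonadjacent vertices of $A_1$, use Proposition~\ref{PIC_p3} to reduce to the case where their separator-neighbours $s_v\neq s_w$ are not cross-adjacent to $w$ and $v$ respectively, and then invoke Proposition~\ref{PIC_p4}. Where you diverge is in the closing move. The paper applies Proposition~\ref{PIC_p4} on \emph{both} sides of the separator, obtaining $w_1\in A_1$ and $w_2\in A_2$ adjacent to both $s_v$ and $s_w$, and observes that $\{v,w_1,w,s_v,s_w,w_2\}$ induces a tent (the triangle being $\{s_v,s_w,w_1\}$, since $S$ is a clique by Lemma~\ref{lema_separador_clique}), contradicting Theorem~\ref{teo_caract_int_prop}; no connectivity argument is needed. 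You instead apply Proposition~\ref{PIC_p4} only inside $A_1$ and build an asteroidal triple $\{v,w,x_v\}$, contradicting Theorem~\ref{teo_caract_intAT}; the price is the detour through $C_2$ from $x_w$ to $x_v$, which you justify correctly by connectedness of the component, and all three avoidance paths check out. Both closings are valid; the paper's is slightly more economical, while yours has the mildly interesting feature that the final contradiction consumes only interval-ness (AT-freeness), the proper-interval hypothesis entering solely through Proposition~\ref{PIC_p3}.
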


\begin{proof}
	We will prove this result for $A_1$. If $|A_1| = 1$, then the proposition holds. 
	
	If $|A_1| = 2$, then by Propositions \ref{PIC_p3} and \ref{PIC_p4}, both vertices are adjacent.
	
	Suppose that $|A_1| \geq 3$, and let $v_1$ and $v_2$ in $A_1$ be two nonadjacent vertices. 
	
	By definition of nucleus, there are vertices $s_1$ and $s_2$ in $S$ such that $s_i$ is adjacent to $v_i$, for each $i=1,2$. 
	Since $v_1$ and $v_2$ are nonadjacent, by Proposition \ref{PIC_p3}, $s_1 \neq s_2$, $v_1$ is nonadjacent to $s_2$ and $v_2$ is nonadjacent to $s_1$.
	
	By Proposition \ref{PIC_p4}, there are vertices $w_1$ in $A_1$ and $w_2$ in $A_2$ such that $w_1$ and $w_2$ are adjacent to both $s_1$ and $s_2$. It is clear that $w_1 \neq v_1$ and $w_1 \neq v_2$.
	
	Since $v_1$ and $w_1$ are adjacent to $s_1$, by Proposition \ref{PIC_p3}, $v_1$ is adjacent to $w_1$, and the same holds for $v_2$ and $w_1$. 
	Therefore, the set $\{ v_1, z_1, v_2, s_1, s_2, z_2 \}$ induces a tent and this results in a contradiction, for the tent is a forbidden subgraph for proper interval graphs.
\end{proof}



\begin{defn} \label{PIC_def2}
	Let $G$ be a graph, $S$ a minimal separator of $G$ and $A$ a nucleus of $S$. 
	\emph{A nuclear ordering for $A$} is an ordering $v_1, \ldots, v_k$ of the vertices of $A$ such that for every pair of vertices $v_i$ and $v_j$,
	if $i<j$, then $N_{S}(v_i) \subseteq N_{S}(v_j)$.
	
	\textit{Notation:} If $\sigma$ is a nuclear ordering for the nucleus $A = \{ v_1, \ldots, v_k \}$, 
	we denote $v_1 <_{\sigma} v_2 <_{\sigma} \ldots <_{\sigma} v_k$.
\end{defn}

\begin{prop} \label{PIC_p6} 
	Let $H$ be a connected proper interval graph, $S$ a minimal separator of $H$ and $A$ a nucleus of $S$. 
	If $v_1$ and $v_2$ in $A$, then $N_{S}(v_1) \cap N_{S}(v_2)$ is nonempty. Moreover, there is a nuclear ordering $\sigma$ for $A$.
\end{prop}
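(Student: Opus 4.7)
The plan is to reduce both assertions to a single structural claim: the family $\{N_S(v) : v \in A\}$ is totally ordered by inclusion. Once this is established, part~(i) follows because each $N_S(v)$ is nonempty by definition of the nucleus, and in any chain of nonempty sets every pairwise intersection equals the smaller member and hence is nonempty; part~(ii) follows by ordering the vertices of $A$ according to that inclusion chain (breaking ties arbitrarily when two vertices have the same neighborhood in $S$), which directly yields the nuclear ordering~$\sigma$ required by Definition~\ref{PIC_def2}.

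To prove the chain claim, I would argue by contradiction: suppose $v_1,v_2\in A$ are such that neither $N_S(v_1)\subseteq N_S(v_2)$ nor $N_S(v_2)\subseteq N_S(v_1)$. Then there exist $s_1\in N_S(v_1)\setminus N_S(v_2)$ and $s_2\in N_S(v_2)\setminus N_S(v_1)$. Two ingredients now combine. First, since $H$ is a proper interval graph, Proposition~\ref{PIC_p5} (when $|S|>1$) or Corollary~\ref{PIC_cor1} (when $|S|=1$) guarantees that the nucleus $A$ is a clique, so $v_1v_2\in E$. Second, since $H$ is chordal (as proper interval graphs are chordal), Lemma~\ref{lema_separador_clique} ensures that the minimal separator $S$ is a clique, so $s_1s_2\in E$. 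Putting these together, the four vertices induce the cycle $v_1\,{-}\,s_1\,{-}\,s_2\,{-}\,v_2\,{-}\,v_1$ of length~$4$, whose only potential chords are $v_1s_2$ and $v_2s_1$; both are excluded by the choice of $s_1,s_2$. This produces a chordless $4$-cycle, contradicting the chordality of $H$, and the chain claim is proved.

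The case $|S|=1$ is entirely trivial and can be dispatched at the outset: writing $S=\{s\}$, every vertex of $A$ is adjacent to $s$ by the definition of nucleus, so $N_S(v)=\{s\}$ for all $v\in A$, both conclusions of the proposition are immediate, and any linear ordering of $A$ is a nuclear ordering. The interesting case is $|S|\ge 2$, where the above $4$-cycle argument applies.

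The main obstacle I anticipate is not in the logical chain itself but in checking that each hypothesis used is actually in force: one needs $A$ to be a clique (invoking the proper interval structure via Proposition~\ref{PIC_p5}, whose proof in turn relied on avoiding induced tents and claws) and $S$ to be a clique (invoking chordality via Lemma~\ref{lema_separador_clique}). Once both cliqueness statements are available, the contradiction through a chordless square is short; what does the work is that the two cliqueness properties conspire to forbid incomparable neighborhoods in $S$, and this is precisely what elevates the weak non-empty intersection statement~(i) to the stronger total-order statement underlying~(ii).
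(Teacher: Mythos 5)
Your proposal is correct and follows essentially the same route as the paper: both reduce the statement to showing the neighborhoods $N_S(v)$ form a chain under inclusion, and both derive a contradiction from an induced $C_4$ on $\{v_1,s_1,s_2,v_2\}$ using that $A$ and $S$ are cliques. Your explicit dispatch of the $|S|=1$ case and your citation of Lemma~\ref{lema_separador_clique} for the cliqueness of $S$ are minor (and if anything slightly cleaner) variations on the paper's argument.
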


\begin{proof}
	Let $v_1$ and $v_2$ in $A$.  Let us see that either $N_{S}(v_1) \subseteq N_{S}(v_2)$ or $N_{S}(v_2) \subseteq N_{S}(v_1)$.
	
	Toward a contradiction, suppose there is a vertex $s_1$ in $N_{S}(v_1)$ such that $s_1 \notin N_{S}(v_2)$, and a vertex
	$s_2$ in $N_{S}(v_2)$ such that $s_2 \notin N_{S}(v_1)$. 
			
	Since $S$ and $A$ are cliques --by Propositions \ref{PIC_p3} and \ref{PIC_p5}-- and $H$ is chordal,
	$v_1$ and $v_2$ are adjacent and also $s_1$ is adjacent to $s_2$. Thus, the set $\{ v_1,v_2,s_1,s_2 \}$ induces a $C_4$ and this results in a contradiction.	
	
	Therefore, either $N_{S}(v_1) \subseteq N_{S}(v_2)$ or $N_{S}(v_2) \subseteq N_{S}(v_1)$, and since any two vertices in $A$ are comparable, this induces a nuclear ordering in $A$.
\end{proof}

\begin{cor}
	For each nucleus $A$, there is a vertex $v \in A$ such that $v$ is complete to $S$.
\end{cor}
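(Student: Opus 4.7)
The plan is to leverage the nuclear ordering provided by Proposition \ref{PIC_p6} and combine it with the defining property of a minimal separator, namely $N(C_i) = S$ from Lemma \ref{lema_separadores_2comp}.

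First I would handle the trivial case $|S| = 1$ separately: writing $S = \{s\}$, the very definition of the nucleus forces every vertex of $A$ to be adjacent to $s$, so any $v \in A$ is complete to $S$.

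Next, assuming $|S| > 1$, I would invoke Proposition \ref{PIC_p6} to obtain a nuclear ordering $v_1 <_\sigma v_2 <_\sigma \cdots <_\sigma v_k$ for $A$, which by definition satisfies $N_S(v_i) \subseteq N_S(v_j)$ whenever $i < j$. I claim that the last vertex $v_k$ is complete to $S$. To see this, pick any $s \in S$. Since $S$ is a minimal separator of the connected graph $H$, Lemma \ref{lema_separadores_2comp} gives $N(C_i) = S$, so $s$ must have at least one neighbor in the connected component $C_i$ containing $A$; such a neighbor automatically lies in the nucleus $A$ by its very definition. Thus there exists some index $j$ with $s \in N_S(v_j)$, and by the nuclear ordering $N_S(v_j) \subseteq N_S(v_k)$, so $s \in N_S(v_k)$. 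Since $s \in S$ was arbitrary, we conclude $S \subseteq N_S(v_k)$, i.e.\ $v_k$ is complete to $S$.

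I do not anticipate a real obstacle here: the whole content of the statement is already packed into the existence of the nuclear ordering, and the only step that is not purely formal is noting that every $s \in S$ must have at least one neighbor in $A$, which is immediate from the characterization of minimal separators in Lemma \ref{lema_separadores_2comp} together with Definition \ref{PIC_def1}. The proof is essentially two lines once the right lemma is cited.
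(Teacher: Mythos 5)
Your proof is correct and follows exactly the route the paper intends: the corollary is stated as an immediate consequence of Proposition \ref{PIC_p6}, and taking the maximum vertex of the nuclear ordering, together with the fact that every $s\in S$ has a neighbour in the nucleus (since both components of $H[V\setminus S]$ satisfy $N(C_i)=S$ by Lemma \ref{lema_separadores_2comp} and Proposition \ref{PIC_p2}), is precisely the implicit argument. The separate treatment of $|S|=1$ is harmless but unnecessary, as the same ordering argument covers that case.
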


\begin{prop} \label{PIC_p7}
	Let $H$ be a proper interval graph and $S$ a minimal separator of $H$. 
	Then, there is a vertex ordering $s_1, s_2, \ldots, s_m$ for $S$ such that
	$$ N_{A_1}(s_1) \supseteq \ldots \supseteq N_{A_1}(s_m) \mbox{, and }$$
	$$ N_{A_2}(s_1) \subseteq \ldots \subseteq  N_{A_2}(s_m) $$
	
	We call this a \emph{bi-ordering for $S$}, and we denote it regarding the nucleus corresponding each direction. 
	For example, the previous would be denoted as
	$s_1 \geq_{A_1} \ldots \geq_{A_1} s_m$ and $s_1 \leq_{A_2} \ldots \leq_{A_2} s_m$.
\end{prop}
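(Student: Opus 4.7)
The plan is to establish two comparability facts and then glue them into a single ordering. First I would show that for any two vertices $s, s' \in S$ the sets $N_{A_1}(s)$ and $N_{A_1}(s')$ are comparable by inclusion (and symmetrically for $A_2$). Suppose not, so that there exist $v \in N_{A_1}(s) \setminus N_{A_1}(s')$ and $v' \in N_{A_1}(s') \setminus N_{A_1}(s)$. By Proposition \ref{PIC_p5} the nucleus $A_1$ is a clique, so $v$ and $v'$ are adjacent; by Lemma \ref{lema_separador_clique} the minimal separator $S$ is a clique, so $s$ and $s'$ are adjacent. The four vertices $v, s, s', v'$ then form an induced $C_4$ (with chords $vv'$ and $ss'$ present but $vs'$ and $v's$ absent), contradicting chordality of $H$. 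Hence the family $\{N_{A_1}(s) : s \in S\}$ is totally preordered by inclusion, and the same argument applies to $A_2$.

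The crucial step is showing that these two preorders are \emph{opposite}: if $N_{A_1}(s) \supsetneq N_{A_1}(s')$, then $N_{A_2}(s) \subseteq N_{A_2}(s')$. Assume otherwise; by the comparability established above we would then have $N_{A_2}(s) \supsetneq N_{A_2}(s')$. Pick $v \in N_{A_1}(s) \setminus N_{A_1}(s')$ and $w \in N_{A_2}(s) \setminus N_{A_2}(s')$. Then $v$ and $w$ are nonadjacent because they lie in distinct connected components of $H[V \setminus S]$; by construction both are nonadjacent to $s'$; and $s$ is adjacent to each of $v$, $w$, and $s'$ (the last using that $S$ is a clique). Therefore $\{s\,;\,v, w, s'\}$ induces a claw in $H$, contradicting Theorem~\ref{teo_caract_int_prop}.

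With these two ingredients in hand, I would define a total order $s_1, \ldots, s_m$ on $S$ as any linear extension of the following strict partial order: declare $s \prec s'$ if either $N_{A_1}(s) \supsetneq N_{A_1}(s')$, or $N_{A_1}(s) = N_{A_1}(s')$ and $N_{A_2}(s) \subsetneq N_{A_2}(s')$; remaining ties (both neighborhoods equal) are broken arbitrarily. By construction $N_{A_1}(s_1) \supseteq \cdots \supseteq N_{A_1}(s_m)$. The chain $N_{A_2}(s_1) \subseteq \cdots \subseteq N_{A_2}(s_m)$ then follows from the second step applied to each consecutive pair (with equality preserved whenever the tiebreaker was used).

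The routine obstacle is the bookkeeping around equal neighborhoods when defining the total order, which is why the lexicographic construction above is natural. The one genuine content of the argument is the claw-finding step in the second paragraph: forbidding $K_{1,3}$ in the proper interval graph $H$ is precisely what forces the two nested families to be oppositely oriented, which is the whole point of a bi-ordering.
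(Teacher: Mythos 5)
Your proposal is correct and rests on the same key step as the paper's proof: producing a claw centered at a separator vertex with one leaf in each nucleus and one in $S$ whenever the two inclusion chains fail to run in opposite directions. You are somewhat more explicit than the paper in first establishing (via the induced $C_4$/chordality argument, dual to Proposition \ref{PIC_p6}) that the sets $N_{A_i}(s)$ are pairwise comparable and in constructing the final ordering lexicographically, but the substance of the argument is the same.
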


\begin{proof}
	Suppose to the contrary that there is a minimal separator $S$ of $H$ such that every decreasing ordering of its vertices regarding $A_1$ is not an increasing ordering regarding $A_2$.
	
	Let $s_1 \geq_{A_1} \ldots \geq_{A_1} s_m$ be a decreasing ordering of $S$ regarding $A_1$.
	Suppose without loss of generality $s_1 \not\leq_{A_2} s_2$, and $s_2 <_{A_2} s_1 \leq_{A_2} s_3 \leq_{A_2} \ldots \leq_{A_2} s_m$.

	Notice that, if $s_2 =_{A_2} s_1$, then the given ordering regarding $A_1$ holds for $A_2$, thus
	since the ordering is total between vertices in $S$, we may assume a strict ordering for $A_2$.
	
	Moreover, if $s_1 =_{A_1} s_2$, then we can swap $s_1$ and $s_2$ in the ordering regarding $A_1$ and thus this new ordering results in a bi-ordering for $S$.
	
	\vspace{0.5mm}
	Suppose $s_1 >_{A_1} s_2$. Hence, there is a vertex $x_1$ in $A_1$ such that $s_1$ is adjacent to $x_1$ and $s_2$ is nonadjacent to $x_1$.
	Let $x_2$ in $A_2$ such that $s_1$ is adjacent to $x_2$ and $s_2$ is nonadjacent to $x_2$. 
	We can find such a vertex for we are assuming $s_2 <_{A_2} s_1$.
	These four vertices induce a claw, and therefore this results in a contradiction since $H$ is proper interval.
	
	\vspace{0.5mm}
	This argument holds for every pair of vertices in $S$ for which the position given by the order in the other nucleus cannot be inverted.	
\end{proof}

\section{A necessary condition} \label{section:teorema_completaciones}

In this section, we will use the properties and definitions given in the previous section to define all the types of edges that may arise in a completion of an interval graph, and we will state and prove a necessary condition for any minimal completion to proper interval graphs when the input graphs is an interval graph, which is the main result of this chapter.

\begin{defn} 
Let $G$ be an interval graph, $H$ a completion of $G$ to proper interval, and let $e=(v,w)$ in $F$ be a fill edge.
	\begin{enumerate}
		\item We say \emph{$e$ is type I}, if there is a minimal separator $S$ of $H$ and a nucleus $A$ such that $v$ and $w$ are both vertices in $A$. \label{edge_typeI}
		\item We say \emph{$e$ is type II}, if $e$ \textit{is not type I} and there is at least one minimal separator $S$ of $H$ and a nucleus $A$ for which $v$ in $S$, $w$ in $A$, such that if $e$ is deleted, then there is no nuclear ordering in $A$. \label{edge_typeII}
		\item We say \emph{$e$ is type III} if $e$ is not type I, there is at least one minimal separator $S$ of $H$ and nucleus $A$ for which $v$ in $S$ and $w$ in $A$, and for each such minimal separator $S$ and nucleus $A$, if $e$ is deleted, then there is still a nuclear ordering in $A$. \label{edge_typeIII} 
		\item We say \emph{$e$ is type IV}, if $e$ is not type I and, for every minimal separator $S$, either both $v,w \in S$ or both $v,w \not\in S$ \label{edge_typeIV}
	\end{enumerate}
\end{defn}

Notice that this definition induces a partition of the edges in $F$.
Moreover, the definition of type IV edge can be restated as follows: \textit{$e$ is type IV if for every minimal separator $S$ such that $e$ and $S$ intersect, then $v$ and $w$ are both vertices in $S$.}

\begin{teo} \label{teo:caract_minimal_ifcase}
	Let $G = (V,E)$ be a connected interval graph and let $H = (V, E \cup F)$ be a completion of $G$ to proper interval. If $H$ is minimal, then every edge $e$ in $F$ is either type I or type II.
\end{teo}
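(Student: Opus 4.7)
The plan is to prove the contrapositive: assuming some fill edge $e = vw \in F$ is neither type I nor type II (so it is type III or type IV), I will show that $H' := H - e$ is still a proper interval completion of $G$, contradicting the minimality of $H$. By Theorem~\ref{teo_caract_int_prop}, it suffices to verify that $H'$ is chordal and contains no induced claw, net, or tent. Since all four properties already hold in $H$, any new forbidden subgraph or chordless cycle appearing in $H'$ must use both endpoints $v$ and $w$, which localizes the whole analysis to the star of $e$.

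For a type IV fill edge, the definition says that every minimal separator $S$ of $H$ either contains both endpoints of $e$ or misses both. I would treat the two sub-cases separately. When $\{v,w\} \subseteq S$ for some $S$, removing $e$ weakens $S$ internally, but the bi-ordering of $S$ provided by Proposition~\ref{PIC_p7}, combined with the clique structure of the nuclei (Propositions~\ref{PIC_p3} and~\ref{PIC_p5}), forces that any hypothetical chordless $C_k$ or claw, net, or tent through $v,w$ in $H'$ would require two vertices of some nucleus with incomparable $S$-neighborhoods, contradicting Proposition~\ref{PIC_p6}. When $v$ and $w$ lie outside every minimal separator yet are adjacent, the structure of proper interval graphs forces $v$ and $w$ to share a common maximal clique of $H$ meeting no separator through them; using that $e$ is not type I (so $v$ and $w$ never occupy a common nucleus) and Corollary~\ref{PIC_cor1}, a direct inspection of the local structure shows that $H - e$ remains chordal, claw-free, net-free, and tent-free.

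For a type III fill edge, the defining hypothesis is that for every witnessing pair $(S, A)$ with $v \in S$ and $w \in A$, the nucleus $A$ still admits a nuclear ordering after $e$ is removed. By Proposition~\ref{PIC_p6}, this is equivalent to the $S$-neighborhoods of vertices of $A$ in $H'$ remaining totally ordered by inclusion, which, combined with the bi-ordering of $S$ from Proposition~\ref{PIC_p7}, tightly controls the $S$-$A$ interface. I would then show that any induced $C_k$ with $k \geq 4$, claw, net, or tent in $H'$ that uses $v$ and $w$ either produces two $S$-incomparable vertices of $A$, contradicting the preserved nuclear ordering, or gives rise to an asteroidal triple straddling $S$, contradicting the fact that $G \subseteq H'$ is an interval graph (Theorem~\ref{teo_caract_intAT}). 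The case split is over where the remaining vertices of the putative forbidden configuration lie: in $S$, in $A$, in the opposite nucleus $A'$, or deeper inside a connected component of $H \setminus S$.

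Combining both cases shows $H'$ is proper interval by Theorem~\ref{teo_caract_int_prop}, and since $G \subseteq H' \subsetneq H$ this contradicts the minimality of $H$, completing the proof. The main obstacle will be the type III case, specifically ruling out chordless cycles of length $k \geq 4$ through $v$ and $w$ in $H'$: a triangle $vwu$ in $H$ with $u$ outside $S \cup A$ could a priori close a longer induced cycle once $e$ is removed, and reconstructing that cycle and reducing it to a violation of either the nuclear ordering on $A$, the bi-ordering of $S$, or the interval structure of $G$ will require delicate bookkeeping and the simultaneous use of all three tools.
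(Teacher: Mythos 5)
Your high-level strategy for the type III case matches the paper's: delete $e$ alone, observe that any new obstruction must use both endpoints of $e$, and derive a contradiction from the nuclear ordering of the nucleus, the bi-ordering of the separator, or the interval structure. However, there are two genuine problems.

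The decisive gap is in the type IV case. You assert that removing a type IV edge $e=(s_1,s_2)$ leaves $H-e$ chordal because any chordless cycle through $s_1,s_2$ would force two nucleus vertices with incomparable $S$-neighborhoods. This is false: since $S$ is a minimal separator, both nuclei $A_1$ and $A_2$ are nonempty, so picking $a_1\in A_1$ adjacent to both $s_1,s_2$ and $a_2\in A_2$ adjacent to both (such vertices exist; in fact the argument shows $s_1,s_2$ must be universal), the set $\{a_1,s_1,a_2,s_2\}$ induces a $C_4$ in $H-e$, because $a_1a_2\notin E\cup F$ (they lie in different components of $H\setminus S$) and the only other potential chord is $e$ itself. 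So $H-e$ is \emph{never} a proper interval completion when $e$ is type IV, and the contradiction with minimality cannot be obtained by deleting $e$ alone. The correct argument must exhibit a \emph{larger} set $J\subseteq F$ of fill edges containing $e$ whose simultaneous removal still yields a proper interval completion of $G$; identifying $J$ requires showing (using chordality of $G$ and the universality of $s_1,s_2$) that certain families of separator--nucleus edges and intra-nucleus edges are forced to be fill edges, and this is where the real work of the type IV case lies. Your proposal contains no mechanism for this.

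A secondary flaw is in your type III argument: when an asteroidal triple appears in $H'=H-e$, you claim a contradiction with ``the fact that $G\subseteq H'$ is an interval graph.'' A supergraph of an interval graph need not be interval, so this gives nothing. The contradiction has to be pulled back to $H$ itself (or to $G$ via a carefully reconstructed AT that avoids all fill edges), by showing that the hypothetical AT of $H'$ forces either an AT or a claw already present in $H$, or a minimal separator witnessing that $e$ is type I or type II; this pullback is the delicate part of the case analysis and is not supplied by the ingredients you list.
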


\begin{proof}
	
	\vspace{1mm}
	Suppose $H$ is minimal. We will see that every edge is either type I or type II.
	Toward a contradiction, suppose there is an edge $e$ in $F$ such that $e$ is either a type III or type IV edge. If $e$ is removed, then we will find a subset $F'$ of $F$ for which $H' = (V, E \cup F')$ is a completion of $G$ to proper interval.
	
	\vspace{2mm}
	\begin{mycases}
	\case Suppose the edge $e$ is type III.

	\vspace{1mm}
	 Since $e$ is type III, there is a minimal separator $S$ and a nucleus $A_1$ such that $e = (s,v)$, with $s$ in $S$, $v$ in $A_1$. We denote $F' = F \setminus \{e\}$.
	
	If $H$ is minimal and $e$ in $F$ is deleted, then the resulting graph $H' = H \setminus \{ e \}$ is either not an interval graph, or $H'$ contains an induced claw. 
	Hence, by Theorems \ref{teo_caract_intAT} and \ref{teo_caract_int_prop}, we have three possible subcases:

	\begin{enumerate}[1)]
		\item The resulting subgraph $H'$ contains an induced cycle $C_n$, with $n \geq 4$ (thus, $H'$ is not a chordal graph), or \label{teo_casoa}
		\item $H'$ contains an AT (in this case, $H'$ is chordal but $H'$ is not an interval graph), or \label{teo_casob}
		\item $H'$ is an interval graph but contains an induced claw (thus, $H'$ is an interval graph and $H'$ is not a proper interval graph). \label{teo_casoc}
	\end{enumerate}

		
	\vspace{1mm}
	Let $W \subset V$ a vertex subset, and $F \subset E$ an edge subset. We denote by $N_{W,F}(v)$ to those neighbours of the vertex $v$ in $W$ that are connected to $v$ by edges in $F$. 
		
	\vspace{1mm}
	\begin{remark} \label{obs:nuclearordering_t3}
	Let $\sigma_1$ be a nuclear ordering for $A_1$ in $H$ given by \\*
	$ v_1 \leq_{\sigma_1} v_2 \leq_{\sigma_1} \ldots \leq_{\sigma_1} v_t $, such that $v_j = v$ for some $j$ in $\{ 1, \ldots, t \}$.
	
	Let $\sigma_2$ be the -partial- ordering induced by $\sigma_1$ in the nucleus $A_1$ once the edge $e$ is deleted, which we will refer to simply as the induced ordering and which we denote by $\leq_{\sigma_2}$.
	
	Since $e$ is type III, if $e$ is deleted, then we can find a nuclear ordering for $A_1$. However, we cannot assert that the induced ordering is indeed a nuclear ordering.
	
	A few observations:
	  
	\begin{itemize}\itemsep1pt
		\item The inclusion $N_{S,F'}(v_j) \subseteq N_{S,F'}(v_{j+i})$ holds for every $i$ in $\{ 1, \ldots, t-j \}$, thus, con\-sid\-er\-ing the edge set $E \cup F'$ we see that $v=v_j \leq_{\sigma_2} v_{j+1} \leq_{\sigma_2} \ldots \leq_{\sigma_2} v_t$ holds as for $\sigma_1$.
		\item Suppose $s \in N_{S, F}(v_j)$ and $s \not\in N_{S,F}(v_i)$ for every $v_i \leq_{\sigma_1} v_j$. Then, the induced ordering $\sigma_2$ does not change for $v_1, \ldots, v_j$.
		\item Suppose instead that $s \in N_{S, F}(v_j) \cap N_{S,F}(v_i)$ for some $v_i <_{\sigma_1} v_j$, then we set $k$ to be $\min \{ i : s \in N_{S,F}(v_i) \}$. Notice that $k < j$. 
		  If $e$ is deleted, then $s \notin N_{S, F'} (v_j)$. However, since $N_{S,F}(v_k) \subseteq N_{S,F}(v_j)$ and $s \in N_{S, F'} (v_k)$, then $N_{S,F'}(v_j) \subset N_{S,F'}(v_k)$ and hence we have that $N_{S,F}(v_k) = \ldots = N_{S,F}(v_j)$, since $s$ is the only element removed from the neighbourhood of $v_j$. 
			
			Therefore, the induced ordering $\sigma_2$ must necessarily be $$ v_1 \leq_{\sigma_2} \ldots \leq_{\sigma_2} v_{k-1} \leq_{\sigma_2} v_j \leq_{\sigma_2} v_k \leq_{\sigma_2} \ldots \leq_{\sigma_2} v_t $$
	\end{itemize}
	\end{remark}	
	
	\vspace{1mm}	
	\subcase Suppose that if $e = (s, v)$ is deleted, then we find a cycle. 
	Since $S$ and $A_1$ are cliques and $H$ is chordal, this cycle must have length 4 at the most. Moreover, it is induced by a set $\{v, s, w_1, s_1\}$ for some vertices $w$ in $A_1$ and $s_1$ in $S$ such that $v$ is adjacent to $w$ and $s_1$, and $w$ is adjacent to $s$. 
	
	Since $s_1 \in N_{S,F}(v)$ and $s_1 \notin N_{S,F}(w)$, thus $w <_{\sigma_1} v$ and the inequality is \emph{strict}.
	By Remark \ref{obs:nuclearordering_t3}, if $e$ is deleted, then the induced ordering $\sigma_2$ satisfies 
	$N_{S, F'}(w) = N_{S,F'}(v)$. However, $s \in N_{S,F'}(w)$ which results in a contradiction.

\begin{remark} \label{obs:5partitiones_sep}
	For each minimal separator $S$, we can partition the vertices of the graph into 5 disjoint sets: $C_1\setminus A_1, A_1, S, A_2$ and $C_2\setminus A_2$ (see Figure \ref{fig_separadores}).
	
	Since $S$, $A_1$ and $A_2$ are cliques, the only way two independent vertices may belong to the same set is if they both lie in either $C_1\setminus A_1$ or $C_2\setminus A_2$.
	
	
	\begin{center}
	  \includegraphics[scale=.3]{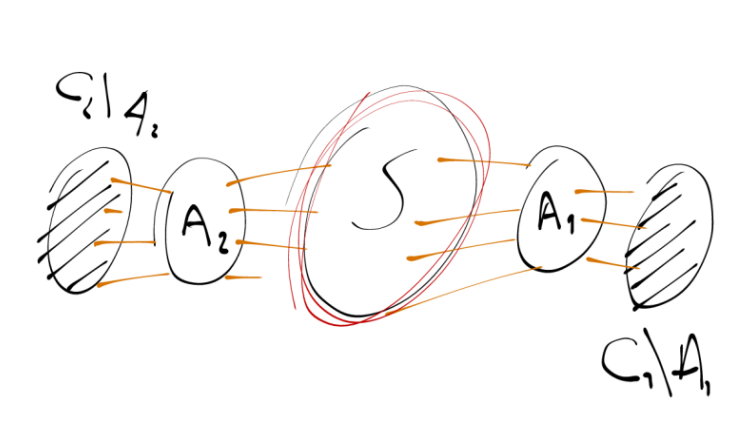}
	  \captionof{figure}{Scheme of the partition of the graph $H$}
		    \label{fig_separadores}
	\end{center}

\end{remark}

	\vspace{2mm}
	
	\subcase Suppose now that if $e =(s,v)$ is deleted, then there is an $AT$ in the subgraph $H' = (V, E \cup F') = H \setminus \{e\}$ induced by some independent vertices $w_1, w_2$ and $w_3$.
	 
	Since there are no $AT$'s in $H$ (for $H$ is an interval graph), there is a path $P_{1,2}$ in $H^{'}$ joining $w_1$ and $w_2$, such that 
	there is a vertex $w$ in $P_{1,2}$ adjacent to $w_3$ through the edge $e$. 
	Hence, $w$ is nonadjacent to $w_3$ in $H^{'}$. Thus, either $w=v$ and $w_3=s$, or $w=s$ and $w_3=v$.
	
	\vspace{1mm}
	Let us suppose first that $w=v$ and $w_3 = s$.
	
	\begin{claim} \label{claim:PIC_1}
		Under the previous hypothesis, $w_1$ and $w_2$ are both in $C_1 \setminus A_1$.
	\end{claim}
	
	To prove this, we divide in cases according to the 5 partitions described in Remark \ref{obs:5partitiones_sep}.	
	
	First of all, since $S$ is a clique and $w_3=s$ lies in $S$, then $w_1 \notin S$ and $w_2 \notin S$. Furthermore, since $A_1$ and $A_2$ are cliques, the vertices $w_1$ and $w_2$ cannot belong to the same nucleus.
	
	On one hand, we may assert that $w_1 \notin C_2$, for if this is the case, since $w$ lies in $C_1$ and $w$ is a vertex in $P_{1,2}$, then the path $P_{1,2}$ goes through the set $S$ and thus, the path contains at least one neighbour of $w_3$ in $S$, which results in a contradiction for $w$ is, by hypothesis, the only vertex adjacent to $w_3$ in the path $P_{1,2}$.

	In an analogous way, we may assert that it is not possible to have $w_1$ in $A_1$ and $w_2$ in $C_1 \setminus A_1$, for we cannot find a path joining $s$ and $w_2$ without going through neighbours of $w_1$ in $A_1$.
	
	Therefore, the only remaining possibility is $w_1$ and $w_2$ in $C_1 \setminus A_1$. \QED
	
	\vspace{3mm}
	Let us study now the relationship between $w$ and $w_1$, $w_2$. A couple of observations:
	\begin{enumerate}[(1)]
		\item There is no path joining $w_1$ and $w_2$ entirely contained in $C_1\setminus A_1$, for if this was the case, then we can find an $AT$ in $H$, which results in a contradiction since $H$ is an interval graph. \label{item:enumerate_case1_2_1}
	
	\item Since the set $\{ w_1, w_2, w_3 \}$ induces an $AT$ in $H^{'}$ and $w_3$ is adjacent to $w$ through $e$, the vertex $w$ is nonadjacent to either $w_1$ or $w_2$ for if not, then we find a claw in $H$ induced by $\{w_1, w_2, w, w_3 \}$.
	Notice that this implies that the set $N_{A_1}(w_1) \cap N_{A_1}(w_2)$ is empty, since by definition every vertex in a nucleus is adjacent to at least one vertex in the separator, and thus the same argument holds. \label{item:enumerate_case1_2_2}
	\end{enumerate}
	
	Summing up the results in (\ref{item:enumerate_case1_2_1}), (\ref{item:enumerate_case1_2_2}) and Claim \ref{claim:PIC_1}, $w$ is nonadjacent to either $w_1$ or $w_2$, and thus there are vertices $v_1$ and $v_2$ in $A_1$ such that $v_1$ is adjacent to $w_1$ and is nonadjacent to $w_2$, and analogously $v_2$ is adjacent to $w_2$ and is nonadjacent to $w_1$. 
	Notice that $v_1$ is adjacent to $v_2$ since they both lie in the same nucleus.
	
	Suppose first that $w \neq v_1$ and $w \neq v_2$. Hence, the path $w_1 \rightarrow v_1 \rightarrow v_2 \rightarrow w_2$ joins $w_1$ and $w_2$ in $H$ and contains no neighbour of $w_3$, therefore $\{w_1, w_2, w_3 \}$ is an $AT$ in $H$, which results in a contradiction.

	\vspace{0.5mm}
	Suppose now that $w \neq v_1$ and $w = v_2$.
	First of all, if $N_{A_1}(w_1) \cap N_{A_1}(s)$ is nonempty, then we can find a $w_1,w_2$-minimal separator such that $e$ belongs to one of the nucleus as follows: 
	Let $S' = N_{A_1}(w_1)$. Since there is no path connecting $w_1$ and $w_2$ entirely included in $C_1\setminus A_1$, $S'$ results in a minimal separator such that $e$ lies in one of the nucleus, which is not possible since $e$ is type III.
	
	Hence, $N_{A_1}(w_1) \cap N_{A_1}(s)$ is empty. Let $x$ be a vertex in $N_{A_1} (w_1)$ such that $x$ is nonadjacent to $s$. 
	Since $x$ in $A_1$ and using the definition of nucleus, there is a vertex $s_1$ in $S$ such that $s_1$ is adjacent to $x$ and $s_1 \neq w_3$. 
	Since $w = v$ is adjacent to $w_3 = s$ in $H$ and $x$ is nonadjacent to $w_3$, thus $w >_{\sigma_1} x$ and therefore $w$ is adjacent to $z$ for every $z$ in $N_S(x)$. 
	In particular, $w$ is a neighbour of $s_1$ (see Figure \ref{fig:C1-A1}).

\begin{center}
	  \includegraphics[scale=.6]{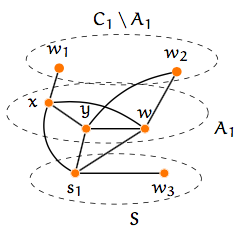}
	  \captionof{figure}{$w_1, w_2 \in C_1 \setminus A_1$ nonadjacent; $w_3 = s$ and $w = v$.}
		    \label{fig:C1-A1}
\end{center}

	Let $w'$ in $A_2$ adjacent to $s_1$. We have the following paths:
	$$ P_1: w' \rightarrow s_1 \rightarrow x \rightarrow w_1 $$
	$$ P_2: w' \rightarrow s_1 \rightarrow w \rightarrow w_2 $$
	$$ P_3: w_1 \rightarrow x \rightarrow w \rightarrow w_2 $$
	
	None of these paths goes through neighbours of the excluded vertex in each case, and $e \not\in P_i$ for each $i=1,2,3$.
	Therefore, $\{ w', w_1, w_2 \}$ induces an AT in $H$ and this contradicts the hypothesis of completion.

	\vspace{4mm}
	Conversely, suppose that $w=s$ and $w_3 = v$.			
	It is straightforward that $w_1$ and $w_2$ do not belong to $A_1$, for $w_3 \in A_1$ and $A_1$ is a clique. 
	Moreover, if $w_1$ lies in $C_1 \setminus A_1$, then every path joining $w_1$ and $w_2$ goes through neighbours of $w_3$ in $A_1$, unless such a path is entirely contained in $C_1 \setminus A_1$, including both vertices $w_1$ and $w_2$.
	Moreover, notice that if there is a path joining $w_1$ and $w_2$ entirely contained in $C_1 \setminus A_1$, then we find an $AT$ in $H$ given by $\{w_1, w_2, w_3\}$, for we have a path joining $w_1$ and $w_2$ that does not contain the edge $e$ and the paths in $H'$ joining every other pair of vertices in the $AT$, which results in a contradiction.	
		
	Hence, if there is a path joining $w_1$ and $w_2$ that goes through $s$ to avoid every other neighbour of $w_3$, then $w_1$ must lie in $S$ and $w_2$ in $C_2$, for they do not belong to the clique $A_1$ and also they do not lie in $C_1 \setminus A_1$. 
	Furthermore, $w_2 \notin C_2$ since any path joining $w_2$ and $w_3$ goes through neighbours of $w_1$ in $S$, therefore this case is not possible either.

	\vspace{2mm}
	
	\subcase Suppose that we delete $e$ and find an induced claw. Such a claw is induced by $v$, $s$ and two more vertices $w_1$ and $w_2$.
	
	Since $v$ and $s$ are nonadjacent in $H'$, $w_1$ is nonadjacent to $s$ and $v$, and $w_2$ is adjacent to $v$, $s$ and $w_1$. 
	If $w_1$ in $C_1\setminus A_1$, then we can find a subset $T$ of $N_{A_1}(w_1)$ such that $T$ is a $w_1,v$-minimal separator. Since $w_2$ is adjacent to $v$, $w_1$ and $s$, then $e$ is contained in one of the nucleus of $T$, which results in a contradiction since $e$ is not type I.
	
	\vspace{0.5mm}
	The other possibility, is having a vertex $w_2$ in $S$ adjacent to $w_1$, $v$ and $s$, and $w_1$ in $A_2$ nonadjacent to $s$. 
	
	By Lemma \ref{lema_separadores_KK}, there is exactly one $w_1,s-$minimal separator $T$ such that $T \subset N(w_1)$. 
	Applying the definition of $w_1,s-$minimal separator and since $N_S(w_1) \subseteq N_S(s)$, then $w_1$ lies in one of the nucleus $A_1(T)$ and $s \in A_2(T)$. 
	Furthermore, $v \notin T$ and $w_2$ in $T$, thus $e$ is contained in the nucleus $A_1(T)$, for $w_2$ is adjacent to both $v$ and $s$, and this contradicts the hypothesis of $e$ not being a type I edge.
	
	Therefore, since for every subcase \ref{teo_casoa}, \ref{teo_casob} and \ref{teo_casoc} the hypothesis of minimality does not hold, then the edge $e$ is not type III.

	\vspace{4mm}
	\case Suppose that the edge $e$ is type IV.
	\vspace{1mm}
			
	Let $S$ be a minimal separator such that $e = (s_1, s_2)$ for $s_1$ and $s_2$ in $S$. Suppose first that $s_1$ is not universal in $H$, thus there is a vertex $v$ in $V$ nonadjacent to $s_1$. 
	By Lemma \ref{lema_separadores_KK}, there is exactly one $v,s_1$-minimal separator $S'$ contained in $N(s_1)$. Suppose without loss of generality that $v$ in $A_1(S')$ and $s_1$ in $A_2(S')$. Since $s_2$ in $N(s_1)$, hence $s_2$ in $A_2(S')$ or $s_2$ in $S'$, which results in a contradiction since $e$ is type IV. Therefore, $s_1$ is a universal vertex and the proof is analogous by symmetry for $s_2$.
	
	Notice that, since $s_1$ and $s_2$ are universal vertices in $H$, for each minimal separator $S$, the sets $C_i(S) \setminus A_i(S)$ are empty for $i=1,2$.
	
	If the edge $e$ is deleted, then the resulting graph $H^{'}$ is not chordal and has two kinds of cycles: the ones induced by the vertices $s_1$, $s_2$, any vertex $v_1$ in $A_1$ and any vertex $v_2$ in $A_2$, and, if $|S|>2$, the cycles induced by the vertices $s_1$, $s_2$, any vertex $v$ in a nucleus $A_i$ and some other vertex $s_3$ in $S$. 
	
	In the sequel, we will find a subset $J$ of fill edges such that the proper subset $F \setminus (J \cup \{e\})$ of $F$ results a completion of the original graph $G$ to a proper interval graph, and thus contradicting the minimality of $H$.	
	
	\subcase  We will suppose first that $S$ has exactly three elements $s_1$, $s_2$ and $s_3$, and once this is proved we will see the case $|S|=2$.

	Let $B_i, B_j$ be a partition of the nucleus $A_1$. Thus, $|B_i|=i, |B_j|=j$ for some $i,j=0,\ldots,|A_1|$ and $i+j=|A_1|$.	
	
	For each partition $B_i$, $B_j$ of the vertices in the nucleus $A_1$, we denote $F_{i,j}$ to the edge subset $\{ (s_1, b ) : b \in B_j \} \cup \{ (s_2, b ) : b \in B_i \}$. Analogously, we define $F'_{i,j}$ for every partition $D_i$, $D_j$ of the vertices in the nucleus $A_2$. 
	
	Let $a_1$ in $A_1$ and $a_2$ in $A_2$. Both vertices are adjacent to $s_1$ and $s_2$. When $e$ is deleted, there is a $C_4$ in $H'$ induced by the set $\{ a_1, s_1, a_2, s_2 \}$.	 
	Thus, there is either a partition $B_i$, $B_j$ of $A_1$ for some $i, j = 0, \ldots, |A_1|$, $i + j = |A_1|$, such that $F_{i,j}$ is a subset of $F$, or there is a partition $D_i$, $D_j$ of $A_2$ for some $i, j = 0, \ldots, |A_2|$, $i + j = |A_2|$, such that $F'_{i,j}$ is a subset of $F$. This follows, for if not, $G$ would not be not chordal since $s_1$ and $s_2$ are universal vertices and thus, in particular, $s_1$ and $s_2$ are adjacent to every vertex in $A_1$ and $A_2$.
	
	Suppose without loss of generality that there is a partition $B_i$, $B_j$ of $A_1$ such that $F_{i,j}$ is a subset of $F$ and $F_{i,j} \neq \emptyset$.
	
	Furthermore, let $a_2$ in $A_2$, $b_1$ in $B_i$ and $b_2$ in $B_j$. 
	Since $A_1$ is a clique, the subset $\{ b_1, b_2, s_2, a_2, s_1 \}$ induces a cycle in $H \setminus (F_{i,j} \cup \{e\})$.
	Hence, the edge subset $F_1 = \{ (b_1, b_2) : b_1 \in B_i, b_2 \in B_j \}$ is a subset of $F$.		
	
	Let $B_i, B_j$ be a partition of the nucleus $A_1$ as stated above.
	For each partition $B_i$,$B_j$, we denote $X_{i,j}(A_1)$ to the subgraph of $H$ resulting of deleting the edge $e$, every edge in $F_1$, and every edge in $F_{i,j}$. We denote $X_{i,j}(A_2)$ to the subgraph of $H$ defined analogously by a partition $D_i$, $D_j$ of the nucleus $A_2$.
	
	For a graphic idea of this definition see Figure \ref{fig:Xij}.
	
	\begin{center}
	    \includegraphics[scale=.4]{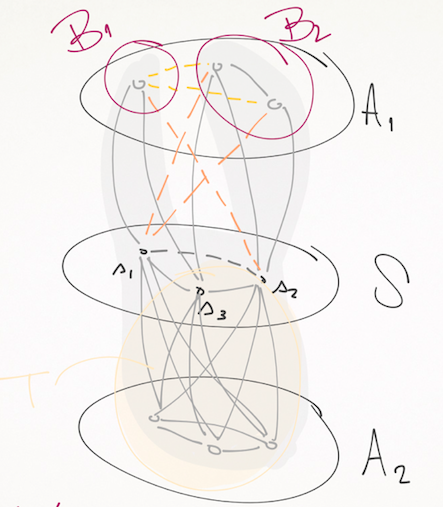}
	    \captionof{figure}{An example of a subgraph $X_{i,j}(A_1)$.}
	    \label{fig:Xij}
	\end{center}	
	
	As a consequence of the previous paragraphs, we have the following claim.
	
	\begin{claim}
		Under the previous hypothesis, there is either a partition $B_i$, $B_j$ of the nucleus $A_1$ or a partition $D_l$, $D_k$ of the nucleus $A_2$ such that $G$ is a subgraph of $X_{i,j}(A_1)$ or $X_{l,k}(A_2)$.
	\end{claim}
		
	Suppose without loss of generality that $B_i, B_j$ is a partition of $A_1$ such that $G$ is a subgraph of $X_{i,j}(A_1)$, and let $J = F_1 \cup F_{i,j} \cup \{e\}$ be the subset of every fill edge in $H$ that was deleted to obtain $X_{i,j}(A_1)$. 
	
	\begin{remark} \label{obs:indep_Xij}
		There is no independent set of size 3 or more in $X_{i,j}(A_1)$.

		Toward a contradiction, suppose there are independent vertices. Hence, the only possibility is $v$ in $A_1$, $w$ in $A_2$ and $s$ in $S$.
		Remember that $s_1$ and $s_2$ are universal vertices, $s$ is nonadjacent to both $v$ and $w$. Thus, since the vertices $s_1$ and $s_2$ are complete to $A_2$ in the subgraph $X_{i,j}(A_1)$, then $s \neq s_2$ and $s \neq s_1$. 
		On the other hand, let $s$ in $S$ such that $s \neq s_1$ and $s \neq s_2$. If there are vertices $v_1$ in $A_1$ and $v_2$ in $A_2$ such that $s$ is nonadjacent to both $v_1$ and $v_2$, then we find a claw in $H$ induced by $\{s_1, s, v_1, v_2 \}$. 
		Hence, $s$ is complete in $H$ to either $A_1$ or $A_2$. Since $J$ does not contain any edges for which $s$ is an endpoint, then $s$ is complete in $X_{i,j}(A_1)$ to either $A_1$ or $A_2$. Therefore, it is not possible to find three independent vertices in $X_{i,j}(A_1)$. 
		Moreover, this also proves that there are no $AT$'s in $X_{i,j}(A_1)$. 
	\end{remark}
	
	If $i=0$, then $j = |A_1|$ and it is easy to see by the previous remark that $X_{i,j}(A_1)$ is chordal, $AT$-free and claw-free.  
	Since $\varnothing \neq J \subseteq F$, then $X_{i,j}(A_1)$ is a completion of $G$ to proper interval graphs and this contradicts the hypothesis of $H$ being minimal.
	
	\vspace{1mm}
	Suppose that $i >0$ and $j>0$. By hypothesis, there are three vertices $s_1$, $s_2$ and $s_3$ in $S$.	
	If $N_{A_1}(s_3) \subseteq B_i$, since $\varnothing \neq B_i \neq A_1$, then we define the subset of fill edges 
	$$J_1 = F \setminus \{ (s_1, v) \in F : v \in B_j \}$$
	
	Notice that $e$ in $J_1$. Let $H_1 = (V, E \cup J_1)$.
	By Remark \ref{obs:indep_Xij}, it is clear that the subgraph $H_1$ is $AT$-free and claw-free. 
	Moreover, $H_1$ is chordal, for it is easy to see that either $N_{A_1}(s_1) \subseteq N_{A_1}(s_3) \subseteq N_{A_1}(s_2)$, or $N_{A_1}(s_3) \subseteq N_{A_1}(s_1) \subseteq N_{A_1}(s_2)$. 
	Since $J_1$ is a proper subset of $F$, $H$ is not a minimal completion of $G$ and this results in a contradiction.
	
	Analogously, if neither $B_i \not\subseteq N_{A_1}(s_3)$ and $B_j \not\subseteq N_{A_1}(s_3)$, then we define the subset of edges
	 
	$$J_2 = F \setminus \{ (s_1, v) \in F : v \in B_j \setminus N_{A_1}(s_3) \}$$
	
	We define the subgraph $H_2 = (V, E \cup J_2)$, and thus the same argument used for $H_1$ holds for $H_2$.
	
	\vspace{1mm}
	\subcase If $|S|=2$, then we claim that any graph $X_{i, j} (A_1)$ is a proper interval graph since it suffices to see that it is chordal and $AT$-free, thus we contradict the minimality.

	\vspace{0.5mm}	
	\subcase Finally, suppose that $|S|>3$. If $i=0$ and $j = |A_1|$, then we use the same argument as if $|S| = 3$. 
	Suppose that $i>0$ and $j>0$.
	
	Let $X$ be the subset of $S$ defined as 
	$$\{x \in S : x \neq s_1, x \neq s_2 \mbox{ and } x \mbox{ is not complete to } A_1 \}$$ 
	
	If $X = \varnothing$, then we define the subset of edges $J$ as in the previous case.
	
	\vspace{2mm}
	Suppose that $X$ is nonempty. 
	Let $s_3$ in $X$ be a vertex such that $N_{A_1}(s_3) \supseteq N_{A_1} (x)$, for every vertex $x$ in $X$.
	
	If $B_i \not\subseteq N_A(s_3)$ and $B_j \not\subseteq N_A(s_3)$, then we define the subgraph $H_2 = (V, E \cup J_2)$ as in the previous case with the subset of edges $J_2$. 
	
	If instead either $B_i \subseteq N_A(s_3)$ or $B_j \subseteq N_A(s_3)$, then we define the subgraph $H_1 = (V, E \cup J_1)$ as in the previous case with the subset of edges $J_1$.
	
	In both cases, we find a proper subgraph $H_i$ of $H$ such that $H_i$ is a completion of $G$ to proper interval, and this results in a contradiction of the minimality.
	
	\vspace{1mm}
	Therefore, if the completion is minimal, then there are no type III or type IV edges.
	
\end{mycases}	
\end{proof}	


\selectlanguage{spanish}%
\chapter*{Conclusiones y trabajo futuro}

Los resultados más importantes en esta tesis son el Teorema \ref{teo:circle_split_caract} en el Capítulo \ref{chapter:split_circle_graphs} de la Parte I, y el Teorema \ref{teo:caract_minimal_ifcase} en el Capítulo \ref{chapter:completions} de la Parte II. En el Teorema \ref{teo:circle_split_caract}, damos una caracterización por subgrafos miniamles prohibidos para aquellos grafos split que son circle, y en el Teorema \ref{teo:caract_minimal_ifcase} enunciamos y demostramos una condición necesaria para que la completación de un grafo de intervalos a un grafo de intervalos propios sea minimal. 

\section*{Parte I}

Los capítulos \ref{chapter:partitions} y \ref{chapter:2nested_matrices} están dedicados a construir las bases y herramientas necesarias para demostrar el Teorema \ref{teo:circle_split_caract}.
Más precisamente, definimos las matrices $2$-nested, y enunciamos y demostramos una caracterización de estas matrices por subconfiguraciones prohibidas que nos permite representar y caracterizar las matrices de adyacencia de aquellos grafos split estudiados en el Capítulo \ref{chapter:split_circle_graphs}.
Algunos de los resultados dados en el Capítulo 3 han sido publicados en \cite{PDGS19}, y el resto de los resultados del capítulo están siendo preparados en un manuscrito para ser enviados para su publicación. 
En el Capítulo \ref{chapter:split_circle_graphs} nos referimos al problema de caracterizar los grafos círculo al restringirlos a la clase de grafos split. A su vez, este capítulo se divide en 5 secciones: una introducción a los resultados estructurales conocidos para la clase de grafos círculo, y una sección para cada caso de la prueba del Teorema \ref{teo:circle_split_caract}. Este resultado devino en una caracterización por subgrafos inducidos prohibidos para aquellos grafos split que además son círculo. Por su parte, este resultado será enviado pronto para su publicación. 

Dejamos algunas posibles continuaciones de este trabajo.

\begin{itemize}
\item Hallamos una lista de subgrafos inducidos prohibidos para aquellos grafos split que además son círculo. ¿Es cierto que esta lista de subgrafos es además minimal?

\item Recordemos que los grafos split son aquellos grafos cordales para los cuales su complemento también es cordal, y que el grafo $A''_n$ con $n = 3$ que aparece re\-pre\-sen\-ta\-do en la Figura \ref{fig:local_complement_A''3} es un ejemplo de grafo cordal que no es ni círculo ni split. Se desprende de este ejemplo que el Teorema \ref{teo:circle_split_caract} no vale si consideramos un grafo cordal en vez de un grafo split, dado que hay más subgrafos prohibidos que no son considerados en la lista de subgrafos prohibidos dada. Sin embargo, el Teorema \ref{teo:circle_split_caract} es de hecho un buen primer paso para caracterizar a los grafos círculo por subgrafos inducidos prohibidos dentro de la clase de los grafos cordales, el cual aún permanece como problema abierto. 

\item Los grafos split pueden ser reconocidos en tiempo lineal. ¿Cuál es la complejidad de reconocimiento de los grafos círculo dentri de la clase de los grafos split?

\item Otra continuación posible de este trabajo podría ser estudiar la caracterización de aquellos grafos círculo cuyo complemento también es un grafo círculo. 

\item Caracterizar a los grafos Helly circle por subgrafos inducidos prohibidos. La clase de los grafos Helly circle fue caracterizada por subgrafos inducidos prohibidos dentro de la clase de los grafos círculo \cite{DGR10}.
Más aún, sería interesante hallar una descomposición análoga a la descomposición split para los grafos círculo, es decir, de forma tal que los grafos Helly circle sean cerrados bajo esta descomposición.
\end{itemize}

\section*{Parte II}

En el Capítulo 2, damos algunas propiedades referidas al ordenamiento de los vértices de un grafo de intervalos basado en los separadores minimales del grafo que vale tanto para grafos de intervalos como para grafos de intervalos propios, y definimos una par\-ti\-ción de las aristas de relleno de acuerdo a su relación con los separadores minimales del grafo. En la última parte de este capítulo, dada una completación $H$ a intervalos propios de un grafo de intervalos $G$, enunciamos y demostramos una condición necesaria para que $H$ sea minimal.

Con respecto al problema de la completación minimal estudiado en el Capítulo \ref{chapter:completions} de la Parte II, tenemos las siguientes conjeturas:

\begin{conj}
	Conjeturamos que la condición dada en el Teorema \ref{teo:caract_minimal_ifcase} también es una condición suficiente. Además, en ese caso la complejidad de completar de forma minimal a intervalos propios cuando el grafo de input es de intervalos resulta polinomial. 
\end{conj}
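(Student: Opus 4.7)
The strategy is to reduce minimality to an edge-by-edge property using sandwich monotonicity, and then to show that removing any type I or type II fill edge destroys the proper interval property. Once this equivalence is established, a polynomial-time recognition/completion algorithm follows.

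First I would establish the following single-edge lemma: if $e \in F$ is type I or type II, then $H \setminus \{e\}$ is not a proper interval graph. For a type I edge $e=(v,w)$ with $v,w$ in a common nucleus $A$ of some minimal separator $S$ of $H$, I would invoke Proposition \ref{PIC_p6} to produce a common $S$-neighbour $s$ of $v$ and $w$ (the nuclear ordering forces $N_S(v)$ and $N_S(w)$ to be non-empty and comparable, so the smaller is contained in the larger), and then pick any vertex $u$ in the opposite nucleus adjacent to $s$. In $H\setminus\{e\}$ the set $\{v,w,u\}$ is independent and each vertex is adjacent to $s$, giving an induced claw that forbids membership in proper interval by Theorem \ref{teo_caract_int_prop}. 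For a type II edge $e=(s,v)$, the defining property is precisely that no nuclear ordering of $A$ survives the deletion. Since $A$ remains a clique (its internal edges are untouched) and $S$ remains a clique that separates the two nuclei, the obstruction to a nuclear ordering is a pair $v_1,v_2\in A$ with incomparable $S$-neighbourhoods in $H\setminus\{e\}$; the four vertices $\{v_1,v_2,s_1,s_2\}$ with $s_i\in N_S(v_i)\setminus N_S(v_{3-i})$ then induce a $C_4$ exactly as in the proof of Proposition \ref{PIC_p6}, so $H\setminus\{e\}$ is not even chordal.

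Second, I would combine this lemma with the sandwich monotonicity of proper interval graphs: whenever $G\subseteq H''\subseteq H$ with both $H''$ and $H$ proper interval, there is a sequence $H=H_k\supsetneq H_{k-1}\supsetneq\cdots\supsetneq H_0=H''$ of proper interval graphs differing by a single edge at each step, a property proved by Heggernes and Mancini in their work on single-edge monotone sequences. Assuming every fill edge of $H$ is type I or II, no such descending sequence can begin, because its first step would delete a type I or II fill edge while preserving the proper interval property, contradicting the lemma. Hence $H$ is a minimal $\Pi$-completion.

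The main obstacle I foresee is the type II case. The definition references a minimal separator of $H$, and after removing $e$ one must verify that the same vertex set $S$ remains a \emph{minimal} separator of $H\setminus\{e\}$ so that Proposition \ref{PIC_p6} applies verbatim; otherwise new minimal separators could arise and accommodate a nuclear ordering that was absent before. A second subtlety is that the witnesses $v_1,v_2$ producing incomparable neighbourhoods might themselves be affected by the deletion, requiring a careful case analysis (distinguishing whether $v$ equals $v_1$ or $v_2$ and whether $s=s_1$ or $s_2$). If invoking Proposition \ref{PIC_p6} turns out not to be clean, the fall-back is to construct the forbidden subgraph of $H\setminus\{e\}$ directly: a $C_4$, a claw, a net, or a tent produced by combining the pair $v_1,v_2$ with vertices of the opposite nucleus. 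I expect the direct construction to go through but to require a delicate enumeration similar to the one in the necessity direction of Theorem \ref{teo:caract_minimal_ifcase}.

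For the polynomial-time algorithm, I would start from any proper interval completion $H_0$ of $G$ (for instance the trivial completion to a clique on each connected component, or a more economical completion obtained by a standard linear-time heuristic). Iteratively, enumerate the minimal separators of the current completion, classify each fill edge according to the type scheme, and if some edge is type III or type IV remove it and restart. By the contrapositive of the theorem just proven together with the necessity direction already available in Theorem \ref{teo:caract_minimal_ifcase}, the algorithm terminates exactly when every remaining fill edge is type I or II, at which point the output is minimal. Minimal separators of a chordal graph can be listed in polynomial time, and each outer iteration strictly decreases $|F|$, so there are $O(n^2)$ iterations and the overall cost is polynomial; refining the data structures (for example maintaining a PQ-tree of consecutive-ones orderings of the cliques, along with the nuclear orderings of each separator) should bring the complexity down further, yielding an efficient minimal completion algorithm.
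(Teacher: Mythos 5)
First, note that the statement you are proving is left as a \emph{conjecture} in the thesis: no proof is given there, so your attempt can only be judged on its own merits. Your single-edge lemma is sound: for a type I edge the common $S$-neighbour supplied by Proposition \ref{PIC_p6}, together with a vertex of the opposite nucleus, yields an induced claw in $H\setminus\{e\}$, and for a type II edge the failure of the nuclear ordering produces two vertices of $A$ with incomparable $S$-neighbourhoods and hence an induced $C_4$; in both cases $H\setminus\{e\}$ violates Theorem \ref{teo_caract_int_prop}. This correctly shows that no \emph{single} fill edge can be deleted.

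The gap is the passage from ``no single fill edge is removable'' to minimality. You invoke sandwich monotonicity of proper interval graphs and attribute it to Heggernes and Mancini, but their result concerns split graphs; the single-edge-monotone-sequence results of Heggernes and Papadopoulos cover threshold and chain graphs, and the later work with Mancini and Sritharan adds strongly chordal and chordal bipartite graphs. To the best of current knowledge, sandwich monotonicity of (proper) interval graphs is open, and it is essentially the entire content of the conjecture: if it held, sufficiency would follow at once from your lemma, which is presumably why the thesis states a conjecture rather than a theorem. Without it, a completion in which every individual fill edge is indispensable may still admit the simultaneous removal of a larger set of fill edges; this is exactly the phenomenon exploited in the necessity proof of Theorem \ref{teo:caract_minimal_ifcase}, where type III and type IV edges are eliminated by deleting a whole set $J$ of fill edges, not just $e$. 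Your algorithm inherits the same problem: deleting a single type III or type IV edge generally destroys the proper interval property (removing a type IV edge inside a separator leaves an induced $C_4$), so after one such removal the minimal separators, nuclei and edge types of the ``current completion'' are no longer defined and the iteration cannot restart as described. To repair the algorithm you would have to make the set $J$ from the necessity proof constructive and prove that each removal step returns a proper interval completion while strictly decreasing $|F|$.
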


\begin{conj}
La completación mínima a intervalos propios cuando el grafo de input es de intervalos es un problema NP-completo.
\end{conj}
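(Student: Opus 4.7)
The plan is to establish NP-completeness of the natural decision version of this problem: given an interval graph $G=(V,E)$ and a positive integer $k$, does $G$ admit a proper interval completion with at most $k$ fill edges? First, I would show the problem lies in NP, which is routine. A certificate consists of the fill set $F$; using the characterization in Theorem \ref{teo_caract_int_prop} one verifies in linear time that $G+F$ is chordal and claw-free, hence proper interval, and then checks $|F|\le k$.

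For NP-hardness, I would try a polynomial-time reduction from a problem already known to be NP-hard. The classical source is the minimum proper interval completion problem on general graphs (shown NP-hard by Kashiwabara--Fujisawa and Yannakakis), but the standard hardness gadgets used there are typically not interval graphs (they contain asteroidal triples or long induced cycles by design), so those reductions do not transfer verbatim. A more promising route is to reduce from a suitably constrained version of the Bandwidth problem, exploiting the well-known equivalence between Bandwidth and minimum proper interval completion: the bandwidth of a graph $H$ equals the minimum, over proper interval supergraphs $H'$ of $H$, of $\omega(H')-1$. Since Bandwidth is NP-hard even on caterpillars, and caterpillars are interval graphs, the approach is to transform a Bandwidth instance on a caterpillar $T$ into an interval graph $G(T)$ whose minimum proper interval completion size encodes the bandwidth of $T$.

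The concrete plan for the reduction is: starting from a caterpillar $T$ with spine $s_1,\ldots,s_n$ and attached leaves, I would attach interval gadgets to the spine that (i) remain interval so the overall input stays in the class, and (ii) force any proper interval completion to essentially ``linearize'' the leaves in a way that corresponds to choosing a vertex ordering of $T$ of low width. The fill edges needed are then in bijection with the extra adjacencies that a bandwidth-$b$ layout of $T$ would incur, giving a cost-preserving reduction. A parallel or backup plan is to reduce from Minimum Interval Completion on some subclass where the input graph can be taken interval after a small pre-processing, or to reduce from the NP-hard problem Profile Minimization on interval graphs, which has an even closer cost structure.

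The main obstacle will be gadget design under the interval constraint. Every local subgraph must admit an interval representation, yet the completion must be globally forced to make non-trivial discrete choices; standard ``claw-forcing'' gadgets are ruled out. I expect to need long chain-like gadgets that propagate a single binary choice along the interval ordering and pay a cost proportional to that choice, together with a careful accounting argument (probably by explicitly exhibiting, for each feasible layout of the source instance, a completion of $G(T)$ of the matching cost, and conversely extracting a layout from any completion using the nuclear orderings of Proposition \ref{PIC_p6} and Theorem \ref{teo:caract_minimal_ifcase}, which restricts the structure of optimal solutions to types I and II). Should this reduction approach fail, the alternative plan is to attempt to refute the conjecture by designing a polynomial-time algorithm, using the strong structural restrictions imposed by Theorem \ref{teo:caract_minimal_ifcase} together with dynamic programming over a PQ-tree or clique path of the input interval graph.
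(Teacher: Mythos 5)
This statement is left as an open conjecture in the thesis: the author explicitly provides no proof and states that work on it is ongoing. So there is no ``paper proof'' to compare against, and the question is whether your proposal itself constitutes a proof. It does not: what you have written is a research plan, not an argument. The membership-in-NP part is fine (recognition of proper interval graphs is linear-time), but the NP-hardness half consists of a reduction whose central gadget is never constructed. You say you ``would attach interval gadgets to the spine that force any proper interval completion to essentially linearize the leaves'' and that ``the fill edges needed are then in bijection with the extra adjacencies of a bandwidth-$b$ layout'' --- but this forcing construction and the two directions of the cost-preservation argument are precisely the entire content of a hardness proof, and they are absent. A proposal that ends by conceding the approach may fail and offering to refute the conjecture instead cannot be accepted as a proof of it.

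There is also a substantive mathematical gap in the bridge you rely on. The equivalence you invoke states that the bandwidth of $H$ equals the minimum of $\omega(H')-1$ over proper interval supergraphs $H'$ of $H$; this is an identity about the \emph{clique number} of the completion, not about the \emph{number of fill edges} $|F|$, which is the objective of the conjecture. These two objectives are genuinely different: a completion minimizing $\omega$ may be far from minimizing $|F|$ and conversely, so NP-hardness of Bandwidth on caterpillars does not transfer to minimum fill without an explicit gadget that converts layout width into edge count in a controlled, additive way. Until such a gadget is exhibited and both directions of the reduction are proved (including the converse direction, where one must extract a low-bandwidth layout from an arbitrary near-optimal completion --- Theorem~\ref{teo:caract_minimal_ifcase} constrains \emph{minimal} completions, which is not the same as \emph{minimum} ones, so invoking it here needs justification), the conjecture remains exactly as open as the thesis leaves it.
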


Queremos continuar trabajando en estas conjeturas para obtener resultados más fuertes para publicar en un artículo.  

Siguiendo una línea similar que la que nos llevó a estudiar el problema del Capítulo \ref{chapter:completions}, permanece abierto el problema de caracterizar y determinar la complejidad de la completación mínima y minimal de grafos arco-circulares a grafos arco-circulares propios.


\selectlanguage{english}%
\chapter*{Final remarks and future work}
\addcontentsline{toc}{chapter}{Final remarks and future work}

The main results in this thesis are Theorem \ref{teo:circle_split_caract} in Chapter \ref{chapter:split_circle_graphs} of Part I, and Theorem \ref{teo:caract_minimal_ifcase} in Chapter \ref{chapter:completions} of Part II. In Theorem \ref{teo:circle_split_caract}, we give a characterization by minimal forbidden subgraphs for those split graphs that are circle, and in Theorem \ref{teo:caract_minimal_ifcase} we state and prove a necessary condition for a completion to proper interval graphs to be minimal when the input graph is an interval graph.

\section*{Part I}

Chapters \ref{chapter:partitions} and \ref{chapter:2nested_matrices}, were devoted to build the foundations and necessary tools to prove Theorem \ref{teo:circle_split_caract}.
More precisely, we define $2$-nested matrices and then state and prove a characterization of these matrices by forbidden subconfigurations that allows us to represent and characterize the adjacency matrices of those split graphs studied in Chapter \ref{chapter:split_circle_graphs}.
Some of the results given in Chapter $3$ have been published in \cite{PDGS19}, and the remaining results are being prepared in a manuscript to be submitted for publication.
In Chapter \ref{chapter:split_circle_graphs} we address the problem of characterizing circle graphs when restricted to split graphs. In turn, this chapter is divided into 5 sections: an introduction to the known structural characterizations of circle graphs, and one section for each case of Theorem \ref{teo:circle_split_caract}. This work resulted in a characterization by forbidden induced subgraphs for those split graphs that are also circle. For its part, this result will be shortly submitted for publication.

We leave some possible continuations of this work.

\begin{itemize}
\item We have found a characterization by forbidden induced subgraphs for those split graphs that are also circle. Are all the subgraphs given in Theorem \ref{teo:circle_split_caract} also minimally non-circle?

\item Recall that split graphs are those chordal graphs for which its complement is also a chordal graph, and that the graph $A''_n$ with $n = 3$ depicted in Figure \ref{fig:local_complement_A''3} is a chordal graph that is neither circle nor a split graph. It follows from this example that Theorem \ref{teo:circle_split_caract} does not hold if we consider chordal graphs instead of split graphs, for there are more forbidden subgraphs that are not considered in the given list. However, Theorem \ref{teo:circle_split_caract} is indeed a good first step to characterize circle graphs by forbidden induced subgraphs within the class of chordal graphs, which remains as an open problem.

\item Given that split graphs can be recognized in linear-time: is it possible to recognize a split circle graph in linear-time?

\item Another possible continuation of this work would be studying the characterization of those circle graphs whose complement is also a circle graph.  

\item Characterize Helly circle graphs by forbidden induced subgraphs. The class of Helly circle graphs was characterized by forbidden induced subgraphs within circle graphs in \cite{DGR10}.
Moreover, it would be interesting to find a decomposition analogous as the split decomposition is for circle graphs, this is, such that Helly circle graphs are closed under this decomposition.
\end{itemize}

\section*{Part II}

In Chapter $2$, we give some properties regarding the ordering of the vertices of an interval graph using minimal separators which hold both for interval and proper interval graphs, and we define a partition of the fill edges according to their relationship with the minimal separators of the graph. In the last part of this chapter, given a completion $H$ to proper interval graphs of an interval graph $G$, we state and prove a necessary condition for $H$ to be minimal.

With regard to the minimal completion problem studied in Chapter \ref{chapter:completions} of Part II, we have the following conjectures:

\begin{conj}
	We conjecture that the only if case of Theorem \ref{teo:caract_minimal_ifcase} holds. Furthermore, in that case the complexity of completing minimally to proper interval graphs when the input is an interval graph is polynomial.
\end{conj}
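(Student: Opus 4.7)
\section*{Proof proposal}

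My plan is to argue by contrapositive: assume there exists an edge $e\in F$ that is neither type I nor type II, hence either type III or type IV, and produce a proper subset $F'\subsetneq F$ such that $H'=(V,E\cup F')$ is still a proper interval completion of $G$, contradicting the minimality of $H$. I will handle the two cases separately, because the reason why $e$ can be safely removed is structurally different in each.

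For the type III case, the natural candidate is simply $F'=F\setminus\{e\}$, so I need to show that $H\setminus\{e\}$ remains chordal, AT-free and claw-free, invoking Theorems \ref{teo_caract_intAT} and \ref{teo_caract_int_prop}. The three failures to rule out are (a) a new induced cycle $C_n$ with $n\geq 4$, (b) a new AT, and (c) a new induced claw. For (a), since $S$ and the nucleus $A$ are cliques (Lemma \ref{lema_separador_clique} and Proposition \ref{PIC_p5}), any such cycle must have length $4$, which I can translate into the existence of two vertices $w<_{\sigma_1}v$ in $A$ whose $S$-neighborhoods force $N_{S,F'}(w)=N_{S,F'}(v)$ by Remark \ref{obs:nuclearordering_t3}; combining this with the defining property of type III (a nuclear ordering still exists after removal) yields a direct contradiction. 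For (b), I would consider an AT $\{w_1,w_2,w_3\}$ in $H'$ and observe that because $H$ was interval, the vertex $w$ of the bridging path that saves the AT must be incident to $e$; a careful case analysis on which of $w_3=s$, $w_3=v$ occurs, plus the five-part partition described in Remark \ref{obs:5partitiones_sep}, pins $w_1$ and $w_2$ down to $C_1\setminus A_1$ and produces either an AT in $H$ or a minimal separator making $e$ type I, both contradictions. For (c), a new claw containing the non-edge $\{s,v\}$ forces a configuration where Lemma \ref{lema_separadores_KK} gives an $\{s,v\}$-minimal separator whose nucleus contains $e$, again making $e$ type I, a contradiction.

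For the type IV case, both endpoints $s_1,s_2$ of $e$ lie in every minimal separator containing either of them; I will first show this forces $s_1$ and $s_2$ to be \emph{universal} in $H$, because otherwise Lemma \ref{lema_separadores_KK} produces a minimal separator containing one endpoint but placing the other in a nucleus. Universality collapses the five-part partition: $C_i\setminus A_i=\emptyset$ for each minimal separator. Now removing $e$ alone may create $4$-cycles through $A_1\cup A_2$ and longer induced cycles through a third separator vertex $s_3$. My plan is to take a suitable partition $B_i\sqcup B_j$ of $A_1$ (or symmetrically of $A_2$) reflecting how the fill edges between $\{s_1,s_2\}$ and $A_1$ split, derive the presence of a companion fill-edge set $F_1\subseteq F$ between $B_i$ and $B_j$ (forced by $C_4$'s through $A_2$), and consider the auxiliary graph $X_{i,j}(A_1)$ obtained from $H$ by removing $e$, $F_1$, and the $F_{i,j}$ edges. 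The key lemma to prove is that $X_{i,j}(A_1)$ contains no independent triple (hence no AT and no claw), by exploiting universality of $s_1,s_2$ and the fact that every other $s\in S$ must be complete to at least one of $A_1,A_2$. Chordality is then handled by choosing, if $|S|\geq 3$, a vertex $s_3\in S$ with maximal $N_{A_1}(s_3)$ among the non-complete separator vertices, and selecting the deletion set $J_1$ or $J_2$ accordingly so that the resulting neighborhoods remain linearly ordered.

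The hard part will be the type IV case, and in particular verifying chordality of the candidate completion after removing the packet of edges $J\cup\{e\}$: unlike the type III case, here one cannot simply drop $e$, so one must simultaneously coordinate the removal of a non-trivial family of fill edges and still show the result lies in $\Pi$. The delicate bookkeeping is in matching the partition $B_i,B_j$ of $A_1$ to whatever structure the remaining separator vertices $s_3,\ldots,s_m$ impose, and using the bi-ordering of Proposition \ref{PIC_p7} to ensure the $N_{A_1}(\cdot)$ neighborhoods of the surviving separator vertices remain nested along $B_i$ and $B_j$. By contrast, the type III argument is mostly a bounded case analysis driven by Remark \ref{obs:nuclearordering_t3}, which already encodes the contradiction one needs.
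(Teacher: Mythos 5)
Your proposal does not address the statement you were asked to prove. The statement is a \emph{conjecture} asserting the converse (sufficiency) direction of Theorem \ref{teo:caract_minimal_ifcase}: namely, that if every fill edge of a proper interval completion $H=(V,E\cup F)$ of an interval graph $G$ is of type I or type II, then $H$ is minimal (together with a claim that this would yield a polynomial-time algorithm). What you prove instead is the necessity direction --- ``if $H$ is minimal then every fill edge is type I or type II'' --- by assuming a type III or type IV edge exists and exhibiting a smaller completion. That is precisely the content and the proof of Theorem \ref{teo:caract_minimal_ifcase} itself, which the paper already establishes; your case analysis (removing a single type III edge and checking chordality, AT-freeness and claw-freeness; making the endpoints of a type IV edge universal and deleting the packet $J\cup\{e\}$ built from the partitions $B_i,B_j$ of a nucleus) reproduces that argument essentially verbatim.

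To attack the conjecture you would have to start from the opposite hypothesis: every edge of $F$ is type I or type II, and some proper subset $F'\subsetneq F$ still gives a proper interval completion $H'=(V,E\cup F')$; you would then need to derive a contradiction, e.g.\ by showing that a deleted type I edge destroys the clique structure of some nucleus of $H'$, or that a deleted type II edge destroys every nuclear ordering, in a way that cannot be repaired by the remaining edges of $F'$. The subtlety --- and the reason the paper leaves this as a conjecture --- is that the minimal separators and nuclei of $H'$ need not coincide with those of $H$, so the types of the edges (which are defined relative to the separators of $H$) do not transfer directly to $H'$. None of this is touched by your argument, and the complexity claim in the second sentence of the conjecture would additionally require exhibiting an algorithm, which your proposal does not attempt.
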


\begin{conj}
The minimum completion to proper interval graphs when the input graph is interval is NP-complete.
\end{conj}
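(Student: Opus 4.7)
The plan is to establish NP-completeness by verifying membership in NP and then proving NP-hardness via a reduction from a suitable known NP-hard problem. Membership is essentially routine: given an interval graph $G=(V,E)$ and an integer $k$, nondeterministically guess a set $F$ of at most $k$ candidate edges and verify in polynomial time that $H=(V,E\cup F)$ is proper interval. The verification can use either the linear-time recognition algorithms for proper interval graphs or the forbidden subgraph characterization (Theorem \ref{teo_caract_int_prop}), which requires only polynomial-time detection of induced claws, nets, tents, and chordless cycles; the checker need not invoke the more delicate machinery from Chapter \ref{chapter:completions}.

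For NP-hardness, I would look for a reduction from a problem whose instances are naturally encodable as interval graphs. The two most promising sources are (i) the minimum bandwidth problem on caterpillars, which is known to be NP-hard and whose instances are interval graphs already, and (ii) the minimum proper interval completion problem on arbitrary graphs, which is NP-hard by Kashiwabara--Fujisawa and Yannakakis. The first route exploits the tight link between proper interval completions and bandwidth-like layouts: a proper interval model of $H\supseteq G$ yields a linear ordering where neighbourhoods are intervals of bounded shift, and the minimum number of added edges translates into a bandwidth-like objective up to controllable slack. The second route would require a gadget construction that, starting from an arbitrary graph $G_0$, builds an interval graph $G$ such that any minimum proper interval completion of $G$ is forced (via rigidity imposed by the interval gadgets) to restrict to a specific minimum proper interval completion of an embedded copy of $G_0$ plus a fixed cost. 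In either case, I would first fix a target form for the completion $H$ using the typing of fill edges from Chapter \ref{chapter:completions}: Theorem \ref{teo:caract_minimal_ifcase} lets us restrict attention to completions whose fill edges are of type I or II, which dramatically narrows the search space and should simplify the forward correctness argument of the reduction.

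The main obstacle, as with most completion NP-hardness proofs, will be designing a gadget that is simultaneously (a) an interval graph, (b) ``rigid'' enough that the cheapest way to complete it to proper interval is to make a specific combinatorial choice encoding the source problem, and (c) additive, so that the cost of completing the gadget decouples cleanly from the cost of completing the copy of the source instance embedded in it. Rigidity is delicate because interval graphs already have a rich supply of asteroidal-free orderings, so many would-be rigid gadgets turn out to admit a cheap ``shortcut'' completion that bypasses the intended encoding. I would attack this by stacking copies of small forbidden subgraphs from Theorem \ref{teo_caract_int_prop} (claws, nets, tents) in a pattern that forces any proper interval completion to commit to one of polynomially many canonical linear orderings of a distinguished vertex set, and then embed the source NP-hard instance on those distinguished vertices so that the remaining completion cost equals (up to an additive constant) the source optimum. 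Finally, I would verify the reduction using the minimal-completion structure theorem: any minimum completion is in particular minimal, so its fill edges must be of type I or II, and the case analysis of these types should match the gadget-imposed rigidity.
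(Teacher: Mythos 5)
This statement is left in the paper as an open conjecture; no proof is given there, so there is nothing to compare your argument against except its own internal completeness. Judged on that basis, what you have written is a research plan rather than a proof, and the gap is exactly where the difficulty lies. The NP-membership half is fine and routine. But for hardness you offer two candidate reductions and complete neither. Route (i) rests on the claim that ``the minimum number of added edges translates into a bandwidth-like objective up to controllable slack''; this is the step that needs proof and it is not true as stated. The classical correspondence between proper interval supergraphs and bandwidth identifies the bandwidth of $G$ with the minimum, over proper interval completions $H$, of $\omega(H)-1$ --- a min-max (clique-size) objective --- whereas the completion problem here minimizes the total number of fill edges, a sum-type objective closer to profile minimization. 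These two objectives are not interchangeable up to additive slack, and no argument is given for how the slack would be controlled. Route (ii) is stated only as a specification of what a gadget would have to achieve (interval, rigid, additive), together with an acknowledgement that rigidity is the hard part; the gadget itself is never constructed, so there is no reduction to verify.

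A secondary concern: you lean on Theorem \ref{teo:caract_minimal_ifcase} to ``dramatically narrow the search space,'' but that theorem is only a necessary condition on minimal completions (every fill edge is of type I or II), proved here for minimal rather than minimum completions. It tells you something about the structure of an optimal solution after you have one; it does not by itself constrain the adversarial ``shortcut'' completions that could defeat your gadget, which is precisely the failure mode you identify in your own obstacle paragraph. Until either reduction is carried out and its forward and backward correctness proved, the statement remains a conjecture, as the paper itself presents it.
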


We would like to continue working on these conjectures in order to obtain a stronger result for an article. 

Following a similar line as the one that led to the problem studied in Chapter \ref{chapter:completions}, it remains as an open problem the characterization and complexity of minimum and minimal completions to proper circular-arc graphs, when the input graph is circular-arc.

\chapter*{Appendix}
\addcontentsline{toc}{chapter}{Appendix}

\begin{lema}\label{lema:no_son_circle}
The graphs depicted in Figures~\ref{fig:forb_F_graphs} and~\ref{fig:forb_T_graphs} are non-circle graphs.
\end{lema}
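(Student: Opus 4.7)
The plan is to verify non-circleness for each member of the families $\mathcal{T}$ and $\mathcal{F}$ separately, relying on three tools: (i) Bouchet's local-complementation characterization (Theorem \ref{teo:bouchet}), which says that a graph is circle iff no graph locally equivalent to it contains $W_5$, $W_7$, or $BW_3$; (ii) the fact that circle graphs are closed under the split decomposition \cite{B87}, so the prime factors of a circle graph are circle; and (iii) the uniqueness of the circle model for various prime base graphs (tent, $4$-tent, co-$4$-tent, odd suns). The proof naturally splits into four parts corresponding to the four kinds of graphs that appear in the lists.

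First I would dispatch the finitely many small exceptional graphs: $\text{tent}\vee K_1$, $4$-$\text{tent}\vee K_1$, co-$4$-$\text{tent}\vee K_1$, $\text{net}\vee K_1$, $F_0$, $3$-sun with center, and the split graphs whose $A(S,K)$ is $M_{IV}$ or $M_V$. For each of these I would apply a short, explicit sequence of local complementations (following exactly the kind of computation illustrated in Figure~\ref{fig:local_complement_A''3} for $A''_3$) and exhibit an induced $W_5$, $W_7$, or $BW_3$; by Theorem~\ref{teo:bouchet} the original graph is non-circle. The choices of pivots are short enough to be verified by hand.

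Second, I would handle the infinite family of odd $n$-suns with center (for $n\geq 3$ odd). The $n$-sun itself is a prime split graph and, being prime, admits an essentially unique circle model; moreover in that model the $n$ chords of the clique form a ``fan'' whose geometry forces any chord adjacent to the entire clique to cross itself, which is impossible. Hence adding a universal vertex (the ``center'') yields a non-circle graph. A cleaner way to say the same thing is via split decomposition: attaching the center creates a split at the clique, and the factor containing the center plus the marker is the $(n{+}1)$-vertex graph obtained from the odd-sun by replacing the clique by a single marker vertex; a direct Bouchet check on this small prime factor gives a forbidden local-equivalence obstruction.

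Third, for the remaining Tucker-matrix-based members of $\mathcal{T}$, i.e.\ graphs whose bipartite adjacency matrix $A(S,K)$ is a Tucker matrix other than $M_I(k)$ with $k$ odd or $M_{III}(k)$ with $k$ odd, I would argue as follows. Such a graph $G$ is split, and the bipartite complement of $A(S,K)$ fails the consecutive-ones property in an essential, circularly-asymmetric way; using Bouchet's bipartite result (if $G$ is bipartite and $\overline{G}$ is circle, then $G$ is circle \cite{B99}), together with the closure of circle graphs under split decomposition, one reduces each Tucker case to a constant-size prime factor, and checks each prime factor by Bouchet's theorem. The excluded cases $M_I(k)$ and $M_{III}(k)$ for odd $k$ correspond exactly to odd suns and odd suns with the extra independent vertex of a co-configuration, which are known to be circle, so the distinction is consistent.

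Finally, for the family $\mathcal{F}$, I would show that the graphs corresponding to $F_1(k)$ and $F_2(k)$ (odd $k\geq 5$) are non-circle by reducing them, via split decomposition, to $F_0$ or to an odd sun with center. The idea is that $F_1(k)$ and $F_2(k)$ contain a ``ladder'' of nested $0$-gems around the cyclic structure, and contracting the chain of degree-two pieces through split decomposition leaves a prime factor of size five or six isomorphic (up to local equivalence) to $F_0$ or to the $3$-sun with center, both already known to be non-circle by the first two parts. This inductive collapse is what I expect to be the main obstacle, because one must verify that the split decomposition at each step is genuinely a split and that the resulting marker vertex carries the right neighborhood to preserve the forbidden obstruction; I would set this up by induction on $k$, the base case $k=5$ being verified directly by Bouchet's theorem.
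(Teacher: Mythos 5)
Your first part (the finitely many small graphs, handled by explicit local complementations and Theorem~\ref{teo:bouchet}) is exactly what the paper does. But your treatment of the infinite families has a genuine gap: all of your reductions for them rest on split decomposition, and these graphs do not split. The odd $n$-sun with center is prime --- adding a universal vertex to the clique of the sun does \emph{not} ``create a split at the clique,'' since the petals and the center have pairwise distinct, non-nested neighbourhoods in the clique, so no bipartition of the vertex set induces a complete-bipartite join. The same objection applies to your third and fourth parts: the graphs realizing $M_{II}(k)$, $F_1(k)$ and $F_2(k)$ are prime (indeed they must be, being candidates for \emph{minimal} forbidden subgraphs, since every factor of a split decomposition is isomorphic to a proper induced subgraph and circle graphs are closed under split composition), so there is no ``inductive collapse'' of the ladder of $0$-gems onto a constant-size factor. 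In addition, your appeal to Bouchet's bipartite-complement theorem is inapplicable here: the graphs in $\mathcal{T}$ whose $A(S,K)$ is a Tucker matrix are split graphs, not bipartite graphs, so that result says nothing about them.

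The paper's proof avoids all of this by staying entirely inside local complementation: for each infinite family it exhibits an explicit pivot sequence (typically complementing at the center and then at each petal in turn) that transforms the $k$-th member into a wheel $W_{k}$ or a complement of a cycle $\overline{C_{k+1}}$, and then gives a further explicit sequence reducing $W_k$ to $W_{k-3}$ (and $\overline{C_6}$ to $W_5$), iterating until one of Bouchet's three obstructions $W_5$, $W_7$, $BW_3$ appears. If you want to salvage your outline, replace the split-decomposition steps by such explicit pivot sequences, or at minimum prove a reduction lemma of the form ``$X_k$ is locally equivalent to a graph containing $X_{k-c}$ as an induced subgraph'' and induct on $k$ with the small cases checked by hand.
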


\begin{proof}

Let us first consider an odd $k$-sun with center, where $v_1, \ldots, v_k$ are the vertices of the clique of size $k$, $w_1, \ldots, w_k$ are the \emph{petals} (the vertices of degree $2$) and $x$ is the \emph{center} (the vertex adjacent to every vertex of the clique of size $k$). If $k=3$, we consider the local complement with respect to the center and we obtain $BW_3$. If $k=5$ or $k=7$, we apply local complementation with respect to $x$, $w_1$, $\ldots$ and $w_k$, obtaining $W_5$ and $W_7$, respectively.
If instead $k=9$, we first apply local complementation with respect to $x$, $w_1$, $\ldots$ and $w_k$ and we obtain $W_9$. Once we have this wheel, we apply local complement with respect to $v_1$, $v_2$ and $v_9$ and obtain $W_6$ induced by $\{v_3,\ldots,v_8\}$. If we now consider the local complement with respect to $v_3$, $v_6$ and $x$ (in that order), we find $\overline{C_6}$, which is not a circle graph because it is locally equivalent to $W_5$.
More in general, for every $k \geq 11$ we can obtain a $W_k$ considering the sequence described at the beginning of the paragraph. Once we have a wheel, if we apply local complementation by $v_1$, $v_2$ and $v_k$, then we obtain $W_{k-3}$. We can repeat this until $k-3 < 8$, in which case either $k=5$, $k=6$ or $k=7$ and we reduce this to one of the previous cases.



Let us consider now an even $k$-sun, where $v_1, \ldots, v_k$ are the vertices of the clique of size $k$ and $w_1, \ldots, w_k$ are the \emph{petals} (the vertices of degree $2$), where $w_i$ is adjacent to $v_i$ and $v_{i+1}$. If $k=4$, then we apply local complementation with respect to the sequence $w_1$, $w_2$, $w_3$, $w_4$, $v_1$, $w_4$, $w_1$, $w_3$, $v_3$, $w_2$ and $v_1$, and we obtain $\overline{C_6}$, which is locally equivalent to $W_5$.
If $k=6$ and we apply local complementation to the sequence $w_1$, $w2, \ldots, w_k$, then we obtain $\overline{C_6}$.
Let us consider an even $k\geq 8$, thus $k=2j$ for some $j\geq 4$ and let $l=\frac{k-8}{2}$. We apply local complementation with respect to the sequence $v_1$, $v_{j+1}$, $v_2$, $v_k$, $\ldots$, $v_{2+l}$, $v_{k-l}$ and we find $W_5$ or $W_7$ induced by $\{v_1$, $v_{j+1}$, $v_{j+1-2}$, $v_{j+1-1}$, $v_{j+1+1}$, $v_{j+1+2} \}$ or by $\{v_1$, $v_{j+1}$, $v_{j+1-3}$, $v_{j+1-2}$, $v_{j+1-1}$, $v_{j+1+1}$, $v_{j+1+2}$, $v_{j+1+3} \}$, depending on whether $k \equiv 2 \mod 4$ or $k \equiv 0 \mod 4$, respectively.



Let us consider $M_{II}(k)$ for some even $k \geq 4$, where $v_1, \ldots, v_k$ are the vertices of the complete induced subgraph of size $k$, $w_1, \ldots, w_{k-2}$ are the petals, where $w_i$ is adjacent to $v_i$ and $v_{i+1}$ for each $i=1, \ldots, k-2$, $x_1$ is adjacent to $v_2, \ldots, v_k$ and $x_2$ is adjacent to $v_1, \ldots, v_{k-2}, v_k$.
If $k=4$, then we apply local complementation by the sequence $x_1$, $x_2$, $w_1$ and $w_2$ and we obtain $W_5$ induced by $\{ v_1$, $v_2$, $v_3$, $v_4$, $x_1$, $x_2 \}$.
If instead $k \geq 6$, then we apply local complementation by the sequence $v_k$, $w_1$, $\ldots, w_{k-2}$ and we obtain $W_{k+1}$ induced by $\{ v_1, \ldots, v_{k-1}$, $x_1$, $x_2$, $v_k\}$. The proof follows analogously from here as the one given for the the odd $k$-sun with center case.


Let us consider $F_{1}(k)$ for some odd $k \geq 5$, where $v_1, \ldots, v_{k-1}$ are the vertices of the complete induced subgraph of size $k-1$, $w_1, \ldots, w_{k-2}$ are the petals, where $w_i$ is adjacent to $v_i$ and $v_{i+1}$ for each $i=1, \ldots, k-1$, $x_1$ is adjacent to $v_2, \ldots, v_{k-1}$ and $x_2$ is adjacent to $v_1, \ldots, v_{k-2}$.
If $k=5$, then we apply local complementation by the sequence $v_4$, $x_1$, $w_2$, $w_3$ and $w_1$ and we obtain $W_5$ induced by $\{ v_1$, $v_2$, $v_3$, $v_4$, $x_1$, $x_2 \}$.
If instead $k \geq 7$, then we apply local complementation by the sequence $w_1$, $\ldots, w_{k-2}$ and we obtain $\overline{C_{k+1}}$ induced by $\{ v_1, \ldots, v_{k-1}$, $x_1$, $x_2 \}$. The proof follows analogously as the one given for the the even $k$-sun case.


Let us consider a tent${}\vee{}K_1$ given by $\{ v_1$, $v_2$, $v_3$, $s_{12}$, $s_{23}$, $s_{31}$, $x\}$, where $v_1$, $v_2$ and $v_3$ are all adjacent, $x$ is the universal vertex and $s_{i(i+1)}$ is adjacent to $v_i$ and $v_{i+1}$ (subindexes are modulo $3$).
If we apply local complementation by the sequence $s_{12}$, $s_{23}$, $v_2$ and $x$, then we obtain $\overline{C_6}$, which is locally equivalent to $W_5$.


Let us consider $F_0$ given by $\{ v_1$, $v_2$, $v_3$, $v_4$, $v_5$, $s_{13}$, $s_{24}$, $s_{35} \}$, where $v_1$, $\ldots$, $v_5$ are the vertices in the complete induced subgraph of size $5$ and $s_{i(i+2)}$ is adjacent to $v_i$, $v_{i+1}$ and $v_{i+2}$ for each $i=1,2,3$.
If we apply local complementation by $v_3$, then we obtain an induced subgraph isomorphic to $M_{II}(4)$, which contains an induced subgraph locally equivalent to $W_5$.


Let us consider $M_{II}(k)$ for $k \geq 3$, where $v_1, \ldots, v_{k+1}$ are the vertices of the complete induced subgrpah of size $k+1$, $w_1, \ldots, w_{k-2}$ are the petals, where $w_i$ is adjacent to $v_i$ and $v_{i+1}$ for each $i=1, \ldots, k-2$ and $x$ is adjacent to $v_2, \ldots, v_{k-2}, v_k$.
If $k=3$, then we apply local complementation by $v_2$ and obtain a tent${}\vee{}K_1$.
If instead we consider any even $k \geq 4$ and we apply local complementation by the sequence $v_{k+1}$, $w_1$, $\ldots, w_{k-1}$ and we obtain $W_{k+1}$, thus the proof from now on is analogous as the one given for the the odd $k$-sun with center case.


Let us consider $F_{2}(k)$ for some odd $k \geq 5$, where $v_1, \ldots, v_{k}$ are the vertices of the complete induced subgraph of size $k$, $w_1, \ldots, w_{k-1}$ are the petals, where $w_i$ is adjacent to $v_i$ and $v_{i+1}$ for each $i=1, \ldots, k-1$ and $x$ is adjacent to $v_2, \ldots, v_{k-1}$.
If we apply local complementation by the sequence $w_1$, $\ldots, w_{k-1}$ and we obtain $\overline{C_{k+1}}$ and thus the proof follows analogously from here as the one given for the the even $k$-sun case.


Let us consider $M_V$ given by $\{ v_1$, $v_2$, $v_3$, $v_4$, $v_5$, $w_{1}$, $w_{2}$, $w_{3}$, $w_4 \}$, where $v_1$, $\ldots$, $v_5$ are the vertices in the complete induced subgraph of size $5$, $w_1$ is adjacent to $v_1$ and $v_2$, $w_2$ is adjacent to $v_3$ and $v_4$, $w_3$ is adjacent to $v_1, \ldots, v_4$ and $w_4$ is adjacent to $v_1$, $v_4$ and $v_5$.
If we apply local complementation by the sequence $w_3$, $v_3$, $v_2$, $w_2$ and $w_1$, then we obtain an induced subgraph isomorphic to $BW_3$ induced by $\{ w_1$, $w_3$, $w_2$, $v_4$, $w_4$, $v_1$, $v_5 \}$.


Finally, let us consider $M_{IV}$ given by $\{ v_1$, $v_2$, $v_3$, $v_4$, $v_5$, $v_6$, $w_{1}$, $w_{2}$, $w_{3}$, $w_4 \}$, where $v_1$, $\ldots$, $v_6$ are the vertices in the complete induced subgraph of size $6$, $w_1$ is adjacent to $v_1$ and $v_2$, $w_2$ is adjacent to $v_3$ and $v_4$, $w_3$ is adjacent to $v_5$ and $v_6$ and $w_4$ is adjacent to $v_2$, $v_4$ and $v_6$.
If we apply local complementation by the sequence $v_6$, $w_4$, $v_3$, $v_4$, $v_6$, $w_3$ and $w_1$, then we obtain an induced subgraph isomorphic to $W_5$ induced by $\{ v_6$, $v_5$, $v_3$, $v_1$, $v_2$, $w_2 \}$.


\end{proof}

\end{document}